\providecommand{\tabularnewline}{\\}
\newcommand{\lyxdot}{.}
\theoremstyle{plain}
\newtheorem{thm}{\protect\theoremname}[section]
\theoremstyle{definition}
\newtheorem{rem}{\protect\remarkname}[section]
\theoremstyle{plain}
\newtheorem{assumption}{\protect\assumptionname}
\theoremstyle{plain}
\newtheorem{cor}{\protect\corollaryname}[section]
\theoremstyle{plain}
\newtheorem{lem}{\protect\lemmaname}[section]
\theoremstyle{definition}
\newtheorem*{defn*}{\protect\definitionname}
\definecolor{pAlgae}{RGB}{0, 0, 102}
\title{Inference in a Stationary/Nonstationary Autoregressive Time-Varying-Parameter
Model}
\author{Donald W. K. Andrews and Ming Li\thanks{Andrews: Department of Economics, Yale University. Email: \protect\href{mailto:donald.andrews@yale.edu}{donald.andrews@yale.edu}. Li: Department of Economics, National University of Singapore. Email:
\protect\href{mailto:mli@nus.edu.sg}{mli@nus.edu.sg}. The authors thank the participants of the Yale econometrics seminar, the SMU econometrics seminar, the Stock and Watson conference, and the CUHK Econometrics Workshop for helpful comments.}}
\date{\today}
\providecommand{\assumptionname}{Assumption}
\providecommand{\corollaryname}{Corollary}
\providecommand{\definitionname}{Definition}
\providecommand{\lemmaname}{Lemma}
\providecommand{\remarkname}{Remark}
\providecommand{\theoremname}{Theorem}
\begin{document}
\global\long\def\labelenumi{\normalfont(\alph{enumi})}%
 
\global\long\def\qed{\ \qedsymbol}%
 \makeatletter \renewenvironment{proof}[1][\proofname]{\par\pushQED{\qed}\normalfont\topsep6\p@\@plus6\p@\relax\trivlist\item[\hskip\labelsep\bfseries#1\@addpunct{.}]\ignorespaces}{\popQED\endtrivlist\@endpefalse}
\makeatother

\global\long\def\a{\alpha}%
\global\long\def\b{\beta}%
\global\long\def\g{\gamma}%
\global\long\def\d{\delta}%
\global\long\def\e{\epsilon}%
\global\long\def\l{\lambda}%
\global\long\def\t{\theta}%
\global\long\def\o{\omega}%
\global\long\def\s{\sigma}%
\global\long\def\G{\Gamma}%
\global\long\def\D{\Delta}%
\global\long\def\L{\Lambda}%
\global\long\def\T{\Theta}%
\global\long\def\O{\Omega}%
\global\long\def\R{\mathbb{R}}%
\global\long\def\N{\mathbb{N}}%
\global\long\def\Q{\mathbb{Q}}%
\global\long\def\I{\mathbb{I}}%
\global\long\def\P{P}%
\global\long\def\E{E}%
\global\long\def\B{\mathbb{\mathbb{B}}}%
\global\long\def\S{\mathbb{\mathbb{S}}}%
\global\long\def\V{\mathbb{\mathbb{V}}\text{ar}}%
\global\long\def\X{{\bf X}}%
\global\long\def\cX{\mathscr{X}}%
\global\long\def\cY{\mathscr{Y}}%
\global\long\def\cA{\mathscr{A}}%
\global\long\def\cB{\mathscr{B}}%
\global\long\def\cM{\mathscr{M}}%
\global\long\def\cN{\mathcal{N}}%
\global\long\def\cG{\mathcal{G}}%
\global\long\def\cC{\mathcal{C}}%
\global\long\def\sp{\,}%
\global\long\def\es{\emptyset}%
\global\long\def\mc#1{\mathscr{#1}}%
\global\long\def\ind{\mathbf{\mathbbm1}}%
\global\long\def\indep{\perp}%
\global\long\def\any{\forall}%
\global\long\def\ex{\exists}%
\global\long\def\p{\partial}%
\global\long\def\cd{\cdot}%
\global\long\def\Dif{\nabla}%
\global\long\def\imp{\Rightarrow}%
\global\long\def\iff{\Leftrightarrow}%
\global\long\def\up{\uparrow}%
\global\long\def\down{\downarrow}%
\global\long\def\arrow{\rightarrow}%
\global\long\def\rlarrow{\leftrightarrow}%
\global\long\def\lrarrow{\leftrightarrow}%
\global\long\def\abs#1{\left|#1\right|}%
\global\long\def\norm#1{\left\Vert #1\right\Vert }%
\global\long\def\rest#1{\left.#1\right|}%
\global\long\def\bracket#1#2{\left\langle #1\middle\vert#2\right\rangle }%
\global\long\def\sandvich#1#2#3{\left\langle #1\middle\vert#2\middle\vert#3\right\rangle }%
\global\long\def\turd#1{\frac{#1}{3}}%
\global\long\def\ellipsis{\textellipsis}%
\global\long\def\sand#1{\left\lceil #1\right\vert }%
\global\long\def\wich#1{\left\vert #1\right\rfloor }%
\global\long\def\sandwich#1#2#3{\left\lceil #1\middle\vert#2\middle\vert#3\right\rfloor }%
\global\long\def\abs#1{\left|#1\right|}%
\global\long\def\norm#1{\left\Vert #1\right\Vert }%
\global\long\def\rest#1{\left.#1\right|}%
\global\long\def\inprod#1{\left\langle #1\right\rangle }%
\global\long\def\ol#1{\overline{#1}}%
\global\long\def\ul#1{\underline{#1}}%
\global\long\def\td#1{\tilde{#1}}%
\global\long\def\bs#1{\boldsymbol{#1}}%
\global\long\def\upto{\nearrow}%
\global\long\def\downto{\searrow}%
\global\long\def\pto{\rightarrow_{p}}%
\global\long\def\dto{\rightarrow_{d}}%
\global\long\def\asto{\rightarrow_{a.s.}}%
\global\long\def\gto{\rightarrow}%
\global\long\def\fto{\Rightarrow}%
\begin{titlepage}

\clearpage\maketitle \thispagestyle{empty}
\begin{abstract}
This paper considers nonparametric estimation and inference in first-order
autoregressive (AR(1)) models with deterministically time-varying
parameters. A key feature of the proposed approach is to allow for
time-varying stationarity in some time periods, time-varying nonstationarity
(i.e., unit root or local-to-unit root behavior) in other periods,
and smooth transitions between the two. The estimation of the AR parameter
at any time point is based on a local least squares regression method,
where the relevant initial condition is endogenous. We obtain limit
distributions for the AR parameter estimator and t-statistic at a
given point $\tau$ in time when the parameter exhibits unit root,
local-to-unity, or stationary/stationary-like behavior at time $\tau.$
These results are used to construct confidence intervals and median-unbiased
interval estimators for the AR parameter at any specified point in
time. The confidence intervals have correct asymptotic coverage probabilities
with the coverage holding uniformly over stationary and nonstationary
behavior of the observations.

\bigskip{}

\noindent\textbf{Keywords:} Autoregressive time-varying-parameter
model, endogenous initial condition, nonparametric estimation, confidence
interval. 
\end{abstract}
\end{titlepage}

\section{\protect\label{sec:MT-Intro}Introduction}

Autoregressive models --- stationary or nonstationary --- are workhorse
models in econometric time series. In this paper, we consider a deterministically
time-varying parameter (TVP) autoregressive model that allows for
stationary and non-stationary behavior at different points in the
time period of interest. Thus, the level of persistence of the time
series can change over the time period. The motivation for considering
such a model is that the economy is in continual transition due to
technological, institutional, political, and demographic changes.
The model considered allows for the intercept and error variance of
the model also to be time-varying, not just the AR parameter.

The estimation method employed is local least squares, which depends
on a tuning parameter $h$ that determines the local neighborhood
that is considered. We construct a confidence interval (CI) for the
value of the AR parameter at a time point $\tau$ via the inversion
of tests, as is common in the literature for constant parameter AR
models, e.g., see \citet*{stock1991confidence,andrews1993exactly,hansen1999grid,mikusheva2007uniform},
and \citet{andrews2014conditional}. We show that the CI's have correct
uniform asymptotic coverage probability for a parameter space that
allows the time series to be stationary in parts of the time period
and nonstationary in other parts, using the approach in \citet*{andrews2020generic}.
We also construct asymptotically median-unbiased interval estimators
(MUE's) in an analogous fashion.

In the TVP case, the initial condition is endogenous due to the choice
of the local neighborhood and depends on the potentially different
behavior of the time series prior to the local neighborhood. For a
given time point $\tau$ of interest, we find that the asymptotic
distributions of the LS estimator and t-statistic depend on the endogenous
initial condition in the local-to-unity case. This is analogous to
the asymptotic effect of the initial condition--under certain assumptions
on the initial condition--in constant parameter AR models, see \citet*{elliott1999efficient,elliott2001confidence},
and \citet{muller2003tests}. On the other hand, the endogenous initial
condition does not affect the asymptotic distributions when the AR
coefficient at time $\tau$ is more distant from one than local-to-unity.

We note that whether a time series is local-to-unity at time $\tau,$
or not, depends on $\tau$ and the chosen bandwidth $h.$ It is important
to provide asymptotic results that hold uniformly over a parameter
space that does not depend on $h,$ which we do.

We introduce a method for determining the tuning parameter $h$ based
on a forecast-error criterion. We provide conditions under which this
data-dependent choice of $h$ is asymptotically equivalent to an infeasible
choice that minimizes the unobserved ``empirical loss.'' These results
are similar to results for i.i.d. models given in \citet{li1987asymptotic}
and \citet{andrews1991asymptotic}.

We provide Monte Carlo simulation results for the methods introduced
in the paper. We consider true autoregressive functions whose shapes
are sinusoidal, linear, partly flat/partly linear, flat, and kinked
linear. We consider cases where the functions are close, or equal,
to one in some regions, but different from one by varying amounts
in other regions. We find that the proposed CI has reasonably good
coverage probabilities and short average lengths for most of the data-generating
processes considered. For example, nominal 95\% CI's are found to
have finite-sample coverage probabilities ranging from 92.5\% to 96\%
in 88.8\% of the cases, across 205 cases. The lowest coverage probabilities,
in the range from 87.5\% to 90\%, occur only in 2.0\% of the cases.
The MUE's are found to have very small finite-sample median bias across
the different cases considered. The magnitude of the data-dependent
choice of $h$ varies widely depending upon the shape of the autoregressive
function, as desired.

We provide some empirical applications of the methods to monthly inflation
and real exchange rates in several countries using data from the IMF
International Financial Statistics database. We find that the inflation
series exhibit noticeable time variation of the AR parameter across
the time period considered. We discuss how this relates to the literature
on the persistence properties of inflation. On the other hand, we
find that the real exchange rate series have nearly constant AR values
across time that are equal to, or close to, one. Hence, the TVP methods
are capable of producing constant AR values when it is appropriate
to do so. We discuss how these results relate to the literature on
the persistence of exchange rates. In Section \ref{sec:addl-Empirical-Results}
of the Supplemental Material, we also report results for interest
rates for several countries and results for a number of US macroeconomic
time series using the Federal Reserve Economic Database (FRED).

The results of the paper apply to a TVP-AR(1) model. For some time
series, a TVP-AR(p) model with $p>1$ may be more appropriate than
a TVP-AR(1) model. The methods introduced in the paper can be extended
to a TVP-AR(p) model, see Section \ref{sec:Extension-to-TVP-AR(p)}
of the Supplemental Material for details.

Relative to the literature, the contribution of this paper is to develop
methods for a deterministically time-varying parameter (TVP) AR(1)
model that allows time-varying stationarity in some time periods and
time-varying nonstationarity in other periods. The resulting model
is much more flexible than a constant parameter model. No paper in
the existing literature does this. The methods we employ are quite
similar to those employed in constant parameter AR(1) models that
impose stationarity or nonstationarity across the whole time period,
such as those referenced above. In particular, we invert tests of
null hypotheses concerning the AR coefficient (at a particular point
in time) and utilize the nonstandard asymptotic distributions of the
t-statistics for such null hypotheses to obtain critical values. Our
results differ from those obtained for constant parameter AR(1) models
in that (i) we consider a local neighborhood of the time point of
interest $\tau,$ indexed by a bandwidth parameter $h,$ and within
that time period the AR coefficient changes with time, which causes
biases that have to be accounted for, (ii) the initial condition of
the local neighborhood depends on the choice of the local neighborhood
and on the past behavior of the TVP-AR(1) process, which may differ
from its current behavior, which needs to be accounted for, and (iii)
we use the data to select a suitable local neighborhood using a forecast-error
criterion function. None of these features arise in a constant parameter
model.

The literature contains numerous papers that consider time-varying
parameter AR models. Some of these papers consider deterministic TVP's,
as in this paper, but they do not allow for stationarity in some time
periods and nonstationarity in others. References for stationary (i.e.,
short-range dependent) TVP AR models include \citet*{Rao1970,Grenier1983,dahlhaus1996kullback,Dahlhaus1997},
\citet{Dahlhaus1998}, \citet*{dahlhaus1999nonlinear}, \citet*{Moulines2005AoS},
\citet*{PhillipsXu2008Adaptive}, \citet*{ding2017ejs}, \citet{vanDelft2018EJS},
and \citet*{karmakar2022simultaneous}. References for nonstationary
(i.e., long-range dependent) TVP AR models include \citet*{bykhovskaya2018boundary,bykhovskaya2020point},
which focus on tests of a unit root null hypothesis against functional
local-to-unity alternatives, as opposed to CI's for an AR parameter,
which is the focus of this paper. No papers in the literature consider
CI's for an AR parameter in nonstationary TVP AR models.

The literature also includes papers on random coefficient (RC) AR
models and functional coefficient (FC) AR models (in which the AR
coefficient depends on observable variables). References for stationary
RC AR models includes \citet{NichollsQuinn1980}, \citet{NichollsQuinn1981},
\citet*{doan1984forecasting}, and \citet{cogley2005drifts}. References
for papers on RC AR models with random coefficients that follow a
nonstationary process include \citet{follmer1993microeconomic}, \citet*{GIRAITISKapetaniosYates2014rckernel,GiraitisKapetaniosYates2018},
and \citet*{tao2019random}. A reference for stationary FC AR models
is \citet*{CaiFanYao2000}. References for nonstationary FC AR models
includes \citet{Juhl2005}, \citet{Lieberman2012}, and \citet*{lieberman2014norming,lieberman2017multivariate,lieberman2018iv}.

This paper is organized as follows. Section \ref{sec:MT-Model=000020Setup}
introduces the TVP-AR(1) model. Section \ref{sec:MT-CI=000020AR=000020Par}
introduces the CI and MUE for the AR parameter at time $\tau.$ Section
\ref{sec:MT-Choice_of_h} introduces the data-dependent method for
choosing the bandwidth parameter $h$ based on a forecast-error criterion.
Section \ref{sec:MT-Monte-Carlo-Simulations} presents the Monte Carlo
simulation results. Section \ref{sec:MT-Empirical-Applications} presents
the empirical results. Section \ref{sec:MT-Asymptotics} shows that
the CI has correct uniform asymptotic coverage probability, the MUE
is asymptotically median unbiased, and the data-dependent method for
choosing the bandwidth parameter $h$ has some some desirable asymptotic
properties. It also provides the asymptotic behavior of the local
least squares estimator and t-statistic under a variety of drifting
sequences of distributions, which are used in the proof of the uniform
asymptotic coverage probability results and the asymptotic median
unbiasedness results. The Supplemental Material includes the critical
values of the limiting distribution of our t-statistic in Section
\ref{sec:SM-Critical-Values-J_psi}, the proofs of the results of
the paper in Section \ref{sec:Theory}, additional simulation results
in Section \ref{sec:Additional-Simulation-Results}, a description
of how to extend the methods to TVP-AR(p) models for $p>1$ in Section
\ref{sec:Extension-to-TVP-AR(p)}, and information about the empirical
applications and additional empirical results in Section \ref{sec:addl-Empirical-Results}.

All limits in this paper are as $n\rightarrow\infty.$ For notational
simplicity, but with some abuse of notation, we let $\cdot/nh$ denote
$\cdot/(nh)$ throughout the paper.

\section{\protect\label{sec:MT-Model=000020Setup}Model}

\numberwithin{equation}{section}

The TVP-AR(1) model we consider is 
\begin{align}
Y_{t} & =\mu_{t}+Y_{t}^{*}\ \text{and}\nonumber \\
Y_{t}^{*} & =\rho_{t}Y_{t-1}^{*}+\sigma_{t}U_{t},\ \text{for\ }t=1,...,n,\label{eq:MT-tvp-model}
\end{align}
where $\rho_{t}\in\left[-1+\varepsilon_{1},1\right]$ for some $0<\varepsilon_{1}<2$.
The autoregressive parameter $\rho_{t}$ is allowed to vary with time
$t.$ A key feature of the model is that it allows for stationary,
unit root, or local-to-unity behavior at different points in time.
The errors $\left\{ U_{t}:t=0,1,...,n\right\} $ are a stationary
martingale difference sequence under $F$ with $\E_{F}\left(\rest{U_{t}}\mathscr{G}_{t-1}\right)=0\ \text{a.s.}$,
$\E_{F}\left(\rest{U_{t}^{2}}\mathscr{G}_{t-1}\right)=1\ \text{a.s.}$,
and $\E_{F}\left(\rest{U_{t}^{4}}\mathscr{G}_{t-1}\right)<M\ \text{a.s.}$
for some $M\in\left(0,\infty\right)$, where $\mathscr{G}_{t}$ is
some non-decreasing sequence of $\sigma$-fields for which $\sigma\left(U_{0},...,U_{t},Y_{0}^{*}\right)\subseteq\mathscr{G}_{t}$
for $t=1,...,n$.

We assume $\rho_{t},$ $\mu_{t},$ and $\sigma_{t}^{2}$ satisfy 
\begin{equation}
\rho_{t}:=\rho\left(t/n\right),\ \mu_{t}:=\mu\left(t/n\right),\ \text{and }\sigma_{t}^{2}:=\sigma^{2}\left(t/n\right),\label{eq:MT-rhomusigma}
\end{equation}
respectively, where $\rho\left(\cdot\right)\ $is a twice continuously
differentiable function on $\left[0,1\right]$ and $\mu\left(\cdot\right)\ $and
$\sigma^{2}\left(\cdot\right)$ are Lipschitz functions on $\left[0,1\right]$.
Given \eqref{eq:MT-rhomusigma}, $Y_{t}$, $Y_{t}^{*}$, $\rho_{t}$,
$\mu_{t}$, and $\sigma_{t}$ depend implicitly on $n$.

Let $\tau\in\left(0,1\right)$. We consider estimation and inference
concerning 
\begin{equation}
\rho(\tau),\label{eq:MT-paramofinterest}
\end{equation}
which is the value of the autoregressive function $\rho(\cdot)$ at
the $\tau$ fraction of the way through the sample.

For ease of reading, the definition of the parameter space of functions
$\rho\left(\cdot\right)$, $\mu\left(\cdot\right),\ $and $\sigma^{2}\left(\cdot\right)$
that is considered, which includes some structure on the $\rho\left(\cdot\right)$
and $\mu\left(\cdot\right)\ $functions, is given in Section \ref{sec:MT-Parameter=000020Space}
below. 

\section{Confidence Interval for the Autoregressive \protect \protect \\
 Parameter \texorpdfstring{$\boldsymbol{\rho(\tau)}$}{rho(tau)}
\protect\label{sec:MT-CI=000020AR=000020Par}}

\subsection{Local Least Squares Estimator of \texorpdfstring{$\boldsymbol{\rho(\tau)}$}{rho(tau)}}

We employ a LS estimator of $\rho(\tau)$ based on the time periods
$t=T_{1},...,T_{2}$, where 
\begin{equation}
T_{1}=\lfloor n\tau\rfloor-\lfloor nh/2\rfloor\ \text{and\ }T_{2}=\lfloor n\tau\rfloor+\lfloor nh/2\rfloor,\label{eq:MT-timeperiod}
\end{equation}
for a bandwidth parameter $h$. The total number of periods in $\left[T_{1},T_{2}\right]$
is within one of $nh$. For this range of time periods, the initial
condition time period is $T_{0}:=T_{1}-1.$

For the asymptotic results given below the bandwidth $h$ satisfies
$h\gto0$ and $nh\gto\infty$ as $n\gto\infty$. In Section \ref{sec:MT-Choice_of_h}
below, we introduce a data-dependent bandwidth that is smaller or
larger depending on how wiggly or flat the true $\rho\left(\cdot\right)\ $
and $\mu\left(\cdot\right)\ $ functions are. 

Define 
\begin{equation}
\ol Y_{nh}:=\dfrac{1}{nh}\sum_{t=T_{1}}^{T_{2}}Y_{t}\ \text{and}\ \ol Y_{nh,-1}:=\dfrac{1}{nh}\sum_{t=T_{1}}^{T_{2}}Y_{t-1}.\label{eq:MT-Y_dagger}
\end{equation}

To estimate $\rho\left(\tau\right)$, we regress $Y_{t}$ on a constant
and $Y_{t-1}.$ The resulting local LS estimator $\widehat{\rho}_{n\tau}$
is 
\begin{equation}
\widehat{\rho}_{n\tau}=\dfrac{\sum_{t=T_{1}}^{T_{2}}\left(Y_{t-1}-\ol Y_{nh,-1}\right)\left(Y_{t}-\ol Y_{nh}\right)}{\sum_{t=T_{1}}^{T_{2}}\left(Y_{t-1}-\ol Y_{nh,-1}\right)^{2}}.\label{eq:MT-rho_hat}
\end{equation}

\subsection{Confidence Interval for \texorpdfstring{$\boldsymbol{\rho(\tau)}$}{rho(tau)}
\protect\label{subsec:MT-CI=000020AR=000020Par} }

The CI for $\rho\left(\tau\right)$ that we consider is obtained by
inverting tests of null hypotheses of the form $H_{0}:\rho\left(\tau\right)=\rho_{0}$
for different values $\rho_{0}\in\left[-1+\varepsilon_{1},1\right]$.

The estimator of the time-varying variance $\sigma^{2}\left(\cdot\right)$
at $t/n=\tau$ is defined to be 
\begin{equation}
\widehat{\sigma}_{n\tau}^{2}\coloneqq\left(nh\right)^{-1}\sum_{t=T_{1}}^{T_{2}}\left[Y_{t}-\ol Y_{nh}-\widehat{\rho}_{n\tau}\left(Y_{t-1}-\ol Y_{nh,-1}\right)\right]^{2}.\label{eq:MT-sigma=000020ntau=000020def}
\end{equation}

For arbitrary $\rho_{0}\in(-1,1],$ the t-statistic that is used to
construct the CI for $\rho(\tau)$ is 
\begin{equation}
T_{n}\left(\rho_{0}\right)\coloneqq\dfrac{\left(nh\right)^{1/2}\left(\widehat{\rho}_{n\tau}-\rho_{0}\right)}{\widehat{s}_{n\tau}},\ \text{where }\widehat{s}_{n\tau}^{2}\coloneqq\widehat{\sigma}_{n\tau}^{2}/\left(nh\right)^{-1}\sum_{t=T_{1}}^{T_{2}}\left(Y_{t-1}-\ol Y_{nh,-1}\right)^{2}.\label{eq:MT-t-stat=000020def}
\end{equation}

Let $B(\cd)$ denote a standard Brownian motion on $[0,1].$ Let $Z_{1}$
be a standard normal random variable that is independent of $B\left(\cd\right)$.
Define 
\begin{align}
I_{\psi}\left(s\right):= & \int_{0}^{s}\exp\left\{ -\left(s-r\right)\psi\right\} dB\left(r\right),\nonumber \\
I_{\psi}^{*}\left(s\right)\coloneqq & \begin{cases}
I_{\psi}\left(s\right)+\dfrac{1}{\sqrt{2\psi}}\exp\left(-\psi s\right)Z_{1} & \text{for }\psi>0\\
B\left(s\right) & \text{for }\psi=0,\text{ and}
\end{cases}\nonumber \\
I_{D,\psi}^{*}\left(s\right):= & I_{\psi}^{*}\left(s\right)-\int_{0}^{1}I_{\psi}^{*}\left(r\right)dr.\label{eq:MT-I_D,tau(r)=000020def}
\end{align}
The stochastic process $I_{\psi}\left(s\right)$ is an Ornstein-Uhlenbeck
process on $\left[0,1\right]$ with parameter $\psi.$

Below we consider sequences of functions $\left\{ \rho_{n}\left(\cdot\right),\mu_{n}\left(\cdot\right),\sigma_{n}\left(\cdot\right)\right\} _{n\geq1}$
and null hypotheses $H_{0}:\rho_{n}\left(\tau\right)=\rho_{0,n},$
where the null hypotheses values $\rho_{0,n}$ depend on $n$ for
$n\geq1.$ For suitable sequences $\left\{ \rho_{0,n}\right\} _{n\geq1}$,
we show that under $H_{0},$ 
\begin{equation}
T_{n}\left(\rho_{0,n}\right)\dto J_{\psi}\ \text{for}\ \psi\in\left[0,\infty\right],\label{eq:MT-Asy=000020distn=000020T=000020stat}
\end{equation}
where $T_{n}\left(\rho_{0,n}\right)$ is defined with $\rho_{0,n}$
in place of $\rho_{0}$ in \eqref{eq:MT-t-stat=000020def}, $\psi$
depends on the sequence $\left\{ \rho_{0,n}\right\} _{n\geq1},$ and
$J_{\psi}$ is defined as follows. For $\psi=\infty$, which corresponds
to a ``stationary'' sequence $\left\{ \rho_{0,n}\right\} _{n\geq1}$,
$J_{\psi}$ has a $N(0,1)$ distribution. For $\psi\in\left[0,\infty\right)$,
which corresponds to a local-to-unity or unit root sequence $\left\{ \rho_{0,n}\right\} _{n\geq1}$,
\begin{equation}
J_{\psi}:=\left(\int_{0}^{1}I_{D,\psi}^{*2}\left(s\right)ds\right)^{-1/2}\int_{0}^{1}I_{D,\psi}^{*}\left(s\right)dB\left(s\right).\label{eq:MT-Distn=000020J_psi}
\end{equation}

For $\alpha\in(0,1),$ let $c_{\psi}\left(\alpha\right)$ denote the
$\alpha$ quantile of the distribution of $J_{\psi}$. For given $\a$,
we compute $c_{\psi}\left(\alpha\right)$, the $\alpha$-quantile
of $J_{\psi}$ in \eqref{eq:MT-Distn=000020J_psi}, by simulating
the asymptotic distribution $J_{\psi}$. To do so, $B=300,000$ independent
constant coefficient AR(1) sequences are generated with innovations
$U_{t}\sim_{iid}N\left(0,1\right)$, stationary start-up, $n=25,000$,
and $\rho=\exp\left(-\psi/n\right)$. For each sequence, the test
statistic $T_{n}\left(\rho\right)$ defined in \eqref{eq:MT-t-stat=000020def}
is calculated. Then the simulated estimate of $c_{\psi}\left(\a\right)$
is the $\a$-quantile of the empirical distribution of the $B=300,000$
realizations of the test statistic $T_{n}\left(\rho\right)$.

The nominal $1-\alpha$ equal-tailed two-sided CI for $\rho(\tau)$
is 
\begin{align}
CI_{n,\tau} & :=\left\{ \rho_{0}\in\left[-1+\varepsilon_{1},1\right]:c_{\psi_{nh,\rho_{0}}}\left(\alpha/2\right)\leq T_{n}(\rho_{0})\leq c_{\psi_{nh,\rho_{0}}}\left(1-\alpha/2\right)\right\} ,\ \text{where}\ \nonumber \\
\psi_{nh,\rho_{0}} & :=-nh\ln\left(\rho_{0}\right)\ \text{for }\rho_{0}>0\ \text{and }\psi_{nh,\rho_{0}}:=\infty\ \text{for }\rho_{0}\leq0.\label{eq:MT-CI=000020defn}
\end{align}

The CI $CI_{n,\tau}$ can be computed by taking a fine grid of values
$\rho_{0}\in\left[-1+\varepsilon_{1},1\right]$ and comparing $T_{n}\left(\rho_{0}\right)$
to $c_{\psi_{nh,\rho_{0}}}\left(\alpha/2\right)$ and $c_{\psi_{nh,\rho_{0}}}\left(1-\alpha/2\right).$
Table \ref{tab:SM-Critical-Values_psi} provides the critical values
$c_{\psi_{nh,\rho_{0}}}\left(\alpha/2\right)$ and $c_{\psi_{nh,\rho_{0}}}\left(1-\alpha/2\right)$
for $\a=.05$, and .1. Given these critical values, computation of
equal-tailed two-sided 90\% and 95\% CI's for $\rho(\tau)$ is fast.

The correct asymptotic size and asymptotic similarity of the CI $CI_{n,\tau}$
are established in Theorem \ref{thm:MT-Cor=000020Asy=000020Size}
in Section \ref{sec:MT-Asymptotics} below.

\subsection{Median-Unbiased Interval Estimator of \texorpdfstring{$\boldsymbol{\rho(\tau)}$}{rho(tau)}\protect\label{subsec:MT-Median-Unbiased-Interval-Est}}

By definition, an estimator $\widehat{\theta}_{n}$ of a parameter
$\theta$ is median unbiased if $P\left(\widehat{\theta}_{n}\geq\theta\right)\geq1/2$
and $P\left(\widehat{\theta}_{n}\leq\theta\right)\geq1/2.$ Here we
introduce a median-unbiased interval estimator of $\rho\left(\tau\right)$
that satisfies an analogous condition. Also, with probability close
to one, the estimator is a single point.\footnote{If the .5 quantile of the asymptotic null distribution of the t-statistic
was strictly decreasing in $\rho_{0},$ or equivalently, $c_{\psi}\left(.5\right)$
was strictly increasing in $\psi$, then the proposed interval estimator
would be a point estimator with probability one. Because this condition
fails to hold exactly, but almost holds, there is a very small probability
that the estimator is a short interval, rather than a point.} Let $CI_{n,\tau}^{up}\left(.5\right)$ and $CI_{n,\tau}^{low}\left(.5\right)$
denote level $.5$ one-sided upper-bound and lower-bound CIs for $\rho\left(\tau\right)$,
respectively. By definition, 
\begin{align}
CI_{n,\tau}^{up}\left(.5\right) & :=\left\{ \rho_{0}\in\left[-1+\varepsilon_{1},1\right]:c_{\psi_{nh,\rho_{0}}}\left(.5\right)\leq T_{n}\left(\rho_{0}\right)\right\} \ \text{and}\nonumber \\
CI_{n,\tau}^{low}\left(.5\right) & :=\left\{ \rho_{0}\in\left[-1+\varepsilon_{1},1\right]:T_{n}\left(\rho_{0}\right)\leq c_{\psi_{nh,\rho_{0}}}\left(.5\right)\right\} \ \text{for}\ \psi_{nh,\rho_{0}}\ \text{as\ in}\ \text{(\ref{eq:MT-CI=000020defn})}.\label{eq:MT-1-sided=000020CIs}
\end{align}

The median-unbiased interval estimator $\widetilde{\rho}_{n\tau}$
of $\rho\left(\tau\right)$ is defined by 
\begin{align}
\widetilde{\rho}_{n\tau} & =\left[\widetilde{\rho}_{n\tau,low},\widetilde{\rho}_{n\tau,up}\right],\ \text{where}\nonumber \\
\widetilde{\rho}{}_{n\tau,up} & =\max\left\{ \rho_{0}:\rho_{0}\in CI_{n,\tau}^{up}\left(.5\right)\right\} \ \text{and}\nonumber \\
\widetilde{\rho}{}_{n\tau,low} & =\min\left\{ \rho_{0}:\rho_{0}\in CI_{n,\tau}^{low}\left(.5\right)\right\} .\label{eq:MT-MUE_Def}
\end{align}
By construction, $\widetilde{\rho}_{n\tau,low}\leq\widetilde{\rho}_{n\tau,up}.$\footnote{This holds because $\widetilde{\rho}{}_{n\tau,up}\geq\sup\left\{ \rho_{0}\in\left[-1+\varepsilon_{1},1\right]:c_{\psi_{nh,\rho_{0}}}\left(.5\right)=T_{n}\left(\rho_{0}\right)\right\} $
and $\widetilde{\rho}{}_{n\tau,low}$ is less than or equal to the
infimum of the values in the same set. } In addition, $\widetilde{\rho}_{n\tau}$ is a singleton whenever
the set $\left\{ \rho_{0}\in\left[-1+\varepsilon_{1},1\right]:T_{n}\left(\rho_{0}\right)=c_{\psi_{nh,\rho_{0}}}\left(.5\right)\right\} $
contains a single point, in which case $\widetilde{\rho}_{n\tau}$
equals this point. Simulations show that $\widetilde{\rho}_{n\tau}$
is a singleton with probability close to one and a very short interval
when it is not a singleton. Table \ref{tab:SM-Critical-Values_psi}
provides the critical values $c_{\psi_{nh,\rho_{0}}}\left(.5\right)$
for a wide range of $\psi$. Given these critical values, computation
of $\widetilde{\rho}_{n\tau}$ is fast.

The estimator $\widetilde{\rho}_{n\tau}$ has the following median-unbiasedness
property 
\begin{align}
\liminf\limits_{n\rightarrow\infty}P\left(\widetilde{\rho}_{n\tau,up}\geq\rho\left(\tau\right)\right) & \geq1/2\ \textup{and}\nonumber \\
\liminf\limits_{n\rightarrow\infty}P\left(\widetilde{\rho}_{n\tau,low}\leq\rho\left(\tau\right)\right) & \geq1/2.\label{eq:MT-1-sided=000020CIs-1-1}
\end{align}
Furthermore, as shown in Corollary \ref{cor:MT-Med-Unbiased} in Section
\ref{sec:MT-Asymptotics} below, these asymptotic properties hold
in a uniform sense over the parameter space. This property is important
for ensuring that the asymptotic properties of $\widetilde{\rho}_{n\tau}$
reflect its finite-sample properties.

\section{Data-dependent Bandwidth Parameter \texorpdfstring{$\boldsymbol{h}$}{h}\protect\label{sec:MT-Choice_of_h}}

The CI for $\rho\left(\tau\right)$ proposed in Section \ref{subsec:MT-CI=000020AR=000020Par}
requires a tuning parameter $h$ that determines the interval of observations
used to construct the CI. This section introduces a data-dependent
method of selecting $h$ that minimizes a forecast-error criterion.
We present its asymptotic properties in Section \ref{subsec:MT-Asymptotic-Results-for-hhat}
under somewhat high-level conditions.

The forecast-error criterion is the average over $t=1,...,n$ of the
squared errors from forecasting $Y_{t}$ using the predicted value
$\widehat{Y}_{t}$. Specifically, for $t>\lfloor nh\rfloor,$ let
$(\widehat{\mu}_{t-1}(h),\widehat{\rho}_{t-1}(h))$ denote the LS
estimator of $(\mu_{t},\rho_{t})$ from the regression of $Y_{s}$
on a constant and $Y_{s-1}$ using the $nh$ observations $s=t-\lfloor nh\rfloor,...,t-1.$
For $t\leq\lfloor nh\rfloor,$ since there are fewer than $\lfloor nh\rfloor$
observations, $(\mu_{t},\rho_{t})$ is estimated using the LS estimator
$(\widehat{\mu}_{t-1}(h),\widehat{\rho}_{t-1}(h))$ based on the $\lfloor nh\rfloor$
observations $s=1,...,t-1,t+1,...,\lfloor nh\rfloor+1.$ The observation
$Y_{t}$ is predicted by $\widehat{\mu}_{t-1}(h)+Y_{t-1}\widehat{\rho}_{t-1}(h)$
for $t=1,...,n$. The forecast-error criterion is 
\begin{equation}
FE_{n}(h):=n^{-1}\sum_{t=1}^{n}(Y_{t}-\widehat{\mu}_{t-1}(h)-Y_{t-1}\widehat{\rho}_{t-1}(h))^{2}.\label{eq:MT-Forecast=000020error=000020defn}
\end{equation}

We assume that $h$ is chosen from a finite set $\mathcal{H}_{n},$
which depends on $n$ and whose cardinality may depend on $n.$ By
definition, the data-dependent choice of $h,$ $\widehat{h},$ minimizes
$FE_{n}(h)$ over $\mathcal{H}_{n}$: \footnote{If the argmin is not unique, for specificity $\widehat{h},$ is taken
to be the largest argmin. But, non-uniqueness occurs with probability
zero provided the innovations have a continuous component.} 
\begin{equation}
\widehat{h}:=\arg\min_{\mathcal{H}_{n}}FE_{n}\left(h\right).\label{eq:MT-h^hat=000020defn}
\end{equation}

\begin{rem}
The criterion $FE_{n}(h)$ has the desirable property of being an
average of unbiased risk estimators for $t>\lfloor nh\rfloor.$ This
holds because 
\begin{eqnarray}
 &  & \hspace{-0.08in}\E(Y_{t}-\widehat{\mu}_{t-1}(h)-Y_{t-1}\widehat{\rho}_{t-1}(h))^{2}\nonumber \\
 & = & \hspace{-0.08in}\E(U_{t}-(\widehat{\mu}_{t-1}(h)-\mu_{t})+Y_{t-1}(\widehat{\rho}_{t-1}(h)-\rho_{t})))^{2}\nonumber \\
 & = & \hspace{-0.08in}\E U_{t}^{2}-2\E U_{t}(\widehat{\mu}_{t-1}(h)-\mu_{t}+Y_{t-1}(\widehat{\rho}_{t-1}(h)-\rho_{t}))+\E(\widehat{\mu}_{t-1}(h)-\mu_{t}+Y_{t-1}(\widehat{\rho}_{t-1}(h)-\rho_{t}))^{2}\nonumber \\
 & = & \hspace{-0.08in}\E U_{t}^{2}+\E(\widehat{\mu}_{t-1}(h)-\mu_{t}+Y_{t-1}(\widehat{\rho}_{t-1}(h)-\rho_{t}))^{2},
\end{eqnarray}
where the third equality holds for $t>\lfloor nh\rfloor$ since $(\widehat{\mu}_{t-1}(h),\widehat{\rho}_{t-1}(h))$
is a function of the innovations indexed by $s\leq t-1$ and $\E(U_{t}|U_{t-1},...,U_{1},Y_{0}^{\ast})=0.$
For the relatively small number of initial observations with $t\leq\lfloor nh\rfloor,$
the third equality does not hold because $\widehat{\rho}_{t-1}(h)$
depends on observations $s$ with $s\geq t.$

In terms of computation, the criterion $FE_{n}(h)$ has the advantage
that the same $\widehat{h}$ is employed for multiple $\lfloor nh\rfloor$
values of interest. It also avoids introducing additional tuning parameters,
such as the length of a forecasting period. 
\end{rem}
\begin{rem}
In Section \ref{subsec:MT-Asymptotic-Results-for-hhat}, we show that
the data-dependent $\widehat{h}$ value achieves the asymptotically
optimal trade-off between bias and variance obtained by the infeasible
$\widehat{h}_{opt}$ which minimizes the ``empirical loss'' as a
function of $h$ defined in \eqref{eq:MT-Empirical=000020loss=000020defn}
below. In consequence, undersmoothing $\widehat{h}$ should yield
a value of $h$ for which the bias is dominated by the variance asymptotically,
the CI's defined above have correct asymptotic size, and the median-unbiased
interval estimator is asymptotically median unbiased. We employ a
relatively sophisticated undersmoothing method. The objective of undersmoothing
(i.e., making $\widehat{h}_{us}$ smaller than $\widehat{h}$) is
to reduce the bias. The cost of undersmoothing is an increase in variance.
We want to undersmooth more when the cost of doing so in terms of
increasing the variance is relatively small, and we want to undersmooth
less when the cost is larger. This means that we need to take account
of the shape of the $FE_{n}(h)$ function. When it is flatter at its
minimum, we want to undersmooth more. This leads to the following
definition: 
\begin{equation}
\widehat{h}_{us0}=\min\left\{ h\mathcal{\in H}_{n}:FE_{n}(h)\leq Q_{c_{1}}(FE_{n})\right\} ,\label{eq:MT-hhat_us0-def}
\end{equation}
 where $Q_{c_{1}}(FE_{n})$ is the $c_{1}$ quantile of $FE_{n}$
among $h\mathcal{\in H}_{n}$ and we take $c_{1}=.2$ in the simulations
and applications. This definition does not guarantee that $\widehat{h}_{us0}$
is of a smaller order than $\widehat{h}$, which is necessary to ensure
proper asymptotics. In consequence, we modify the definition as follows:
\begin{equation}
\widehat{h}_{us}=\min\left\{ \widehat{h}_{us0},\widehat{h}_{us1}\right\} ,\label{eq:MT-hhat_us-def}
\end{equation}
where $\widehat{h}_{us1}=c_{2}n^{-a}\widehat{h}$ for $c_{2},a>0.$
In the simulations and applications, we use $c_{2}=1.5$ and $a=1/10.$
\end{rem}

\section{Monte Carlo Simulations\protect\label{sec:MT-Monte-Carlo-Simulations}}

In this section, we analyze the finite-sample performance of the methods
introduced above using Monte Carlo simulations. First, we describe
the data generating processes (DGP's) considered and how the CI and
MUE are implemented. Then, we report the results. We find that the
proposed CI has reasonably good coverage probabilities and short average
lengths for most of the DGP's considered. We also find that the MUE
performs quite well both when the true DGP of $\rho_{t}$ is time-varying
and flat.

\subsection{Simulation Setup and Methodology}

We consider 21 DGP's for $\rho_{t}$. The graphs of the $\rho_{t}$
functions are given in Figures \ref{fig:MT-Sim_CP_AL_MAD_1}--\ref{fig:MT-Sim_CP_AL_MAD_3}
and Figures \ref{fig:Sim_CP_AL_MAD_SM1}--\ref{fig:Sim_CP_AL_MAD_SM4},
which appear in the Supplemental Material. There are five categories
of $\rho_{t}$ functions: sinusoidal, linear, flat linear, flat, and
kinked linear. For the sinusoidal functions, we consider sinusoidal
functions for $\rho_{t}$ with 6 different ranges and shapes. For
example, ``sin 1.00-0.90-1.00'' corresponds to the sinusoidal function
where $\rho_{t}$ first achieves its maximum of 1.00 at $t/n=.20$,
then drops to 0.90 at $t/n=.60$ and finally increases to 1.00 at
$t/n=1$, with a frequency of 2.5$\pi$. For linear functions, we
consider 4 different linear DGP's for $\rho_{t}$ corresponding to
different values of the slopes and intercepts. For flat linear functions,
the first half of $\rho_{t}$ is flat and the second half is linear,
and 4 different cases are considered. For example, ``flat-lin 0.90-0.99''
means 
\[
\rho_{t}=\rho(t/n)=\begin{cases}
.90 & \text{for }t/n\in\left[0,1/2\right],\\
.18t/n+.81 & \text{for }t/n\in(1/2,1].
\end{cases}
\]
For the flat functions, we consider 3 values .75, .90, and .99. In
the Supplemental Material, we also report results for kinked linear
functions, where the first and second halves of $\rho_{t}$ are linear,
but with slopes of different signs. For example, ``kinked 1.00-0.80-1.00''
is a DGP for $\rho_{t}$ where the first half of $\rho_{t}$ decreases
linearly from 1 to .80 while the second half increases linearly from
.80 to 1. We consider 4 different cases in this category.

For each of the 21 DGP's for $\rho_{t}$, we consider both constant
and time-varying DGP's for $\mu_{t}$ and $\sigma_{t}$ in \eqref{eq:MT-tvp-model}
except for that of ``sin 1.00-0.90-1.00'' with constant $\mu_{t}$
and $\s_{t}$ to present the legend for all the figures. Define $\mu_{t}^{*}=(1-\rho_{t})\mu_{t}$,
which allows one to rewrite \eqref{eq:MT-tvp-model} as 
\begin{equation}
Y_{t}=\mu_{t}^{*}+\rho_{t}Y_{t-1}+\sigma_{t}U_{t},\text{ for }t=1,2,...,n.\label{eq:MT-Sim_Yt_DGP}
\end{equation}
When $\mu_{t}$ and $\sigma_{t}$ are constant, we take $\mu_{t}=0$
and $\sigma_{t}=1$, which implies $\mu_{t}^{*}=0$. When $\mu_{t}$
and $\sigma_{t}$ are time-varying, we generate $\mu_{t}$ such that
$\mu_{t}^{*}$ increases linearly from -.1 to .1, and $\sigma_{t}$
increases linearly from .95 to 1.05 as $t/n$ goes from 0 to 1. In
consequence, there are a total of 41 ($=21\times2-1$) DGP's for $\left(\rho_{t},\mu_{t}^{*},\s_{t}\right)$.
We take $U_{t}$ to be i.i.d. $N\left(0,1\right)$, initialize $Y_{0}$
by drawing it from a $N\left(0,\ol{\sigma}_{n}^{2}\right)$ distribution,
where $\ol{\sigma}_{n}=1/\left(1-\ol{\rho}_{n}^{2}\right)$ and $\ol{\rho}_{n}=n^{-1}\sum_{t=1}^{n}\rho_{t}$,
and form the $Y_{t}$ sequence as in \eqref{eq:MT-Sim_Yt_DGP}.

We compute a nominal .95 two-sided CI and MUE for $\rho(\tau)$ for
each of 5 time points of interest, indexed by $\tau\in\left\{ .2,.4,.6,.8,1\right\} $.
The CI and MUE are implemented as follows. First, we compute $\widehat{h}$
based on the method described in Section \ref{sec:MT-Choice_of_h}
and take the undersmoothed $\widehat{h}_{us}$ to be as defined in
(\ref{eq:MT-hhat_us-def}) with $c_{1}=.2,$ $c_{2}=1.5,$ and $a=1/10.$
In this step, the set of $nh$ values based on $\mathcal{H}_{n}$
is $\left\{ 140,155,...,500,650,...,1500\right\} $. Next, for each
candidate value $\rho_{0}\in\left\{ -1,-.995,...,.945,.95,.951,.952,...,1\right\} $,
we calculate the t-statistics $T_{n}\left(\rho_{0}\right)$ defined
in \eqref{eq:MT-t-stat=000020def} from the regression of $Y_{t}$
on a constant and $Y_{t-1}$ using the $n\widehat{h}_{us}$ observations
centered around $n\tau$. When $n\tau$ is close to the boundary,
we use $n\widehat{h}_{us}/2$ observations to the side that has abundant
data, and as many observations as available to the other side until
the boundary is hit. In consequence, a different number of observations
for different $\tau$ may be used even though the $n\widehat{h}_{us}$
value is the same. Comparing the $T_{n}\left(\rho_{0}\right)$ values
with corresponding critical values $c_{\psi_{nh,\rho_{0}}}\left(\alpha/2\right)$
and $c_{\psi_{nh,\rho_{0}}}\left(1-\alpha/2\right)$ (for $\psi_{nh,\rho_{0}}$
defined in \eqref{eq:MT-CI=000020defn}) at each candidate $\rho_{0}$
gives the nominal $1-\alpha$ equal-tailed two-sided CI for $\rho(\tau)$
as defined in \eqref{eq:MT-CI=000020defn}. We compute coverage probabilities
and average lengths of the CI's across all simulations for each point
of interest for each DGP. 

To calculate the MUE for $\rho(\tau)$, we follow the procedure described
in Section \ref{subsec:MT-Median-Unbiased-Interval-Est} and use $\widetilde{\rho}_{n\tau}$
as the MUE of $\rho\left(\tau\right)$ when $\widetilde{\rho}_{n\tau}$
is a singleton. When $\widetilde{\rho}{}_{n\tau,up}\neq\widetilde{\rho}{}_{n\tau,low}$,
we take $\widetilde{\rho}{}_{n\tau,up}$ as the MUE for $\rho\left(\tau\right)$
based on two considerations. First, using the upper bound $\widetilde{\rho}{}_{n\tau,up}$
has the theoretical property that it is not median-biased towards
zero for positive $\rho\left(\tau\right)$ values because, by Corollary
\ref{cor:MT-Med-Unbiased}, $\liminf_{n\rightarrow\infty}\inf_{\lambda\in\Lambda_{n}}P_{\lambda}\left(\widetilde{\rho}_{n\tau,up}\geq\rho\left(\tau\right)\right)\geq1/2$.
Second, in the simulations we find that $\widetilde{\rho}_{n\tau}$
is much more likely to be an interval when the true $\rho\left(\tau\right)$
is close to one. We report the absolute median biases of the MUE and
the range of the MUE values across the simulations for each time point
of interest for each DGP in the figures.

For each of the 41 DGP's for $Y_{t}$, we run $M=5,000$ simulations.
The length of the $Y_{t}$ sequence is $n=1,500$. Given the 41 DGP's
and 5 time points of interest for each one, a total of 205 cases are
considered.

\subsection{Discussion of Results}

First, Table \ref{tab:MT-Range_of_CP} summarizes the CP results by
reporting the number and percentage of times that the CI CP's lie
in different ranges across the 205 cases considered. The nominal CP
is .95. Table \ref{tab:MT-Range_of_CP} shows that only in a small
fraction of cases, 2.0\%, is the CP quite low, i.e., in $[.875,.90).$
In the majority of cases, 92.2\%, the CP is $.925$ or larger, which
shows that the proposed method has reasonably good CP regardless of
whether the underlying DGP is curvy or flat and whether the point
of interest is in the middle or at the boundary of the time period.
\begin{table}
\begin{centering}
\caption{\protect\label{tab:MT-Range_of_CP}Distribution of CP's Across 205
Cases}
\par\end{centering}
\centering{}%
\begin{tabular}{cccccc}
\toprule 
CP Range & {[}.875,.90) & {[}.90,.925) & {[}.925,.94) & {[}.94,.96) & {[}.96,.965{]}\tabularnewline
\midrule 
Number & 4 & 12 & 34 & 148 & 7\tabularnewline
Percent & 2.0\% & 5.9\% & 16.6\% & 72.2\% & 3.4\%\tabularnewline
\bottomrule
\end{tabular}
\end{table}

Second, we consider summary statistics for the absolute values of
the median biases (AMB) of the MUE, i.e., $\abs{\text{median}\left(\widetilde{\rho}_{n\tau,up}-\rho\left(\tau\right)\right)}$,
across the 205 cases considered. The mean and median of the absolute
median biases across all cases are .004 and .003, respectively. The
range is $[.000,.023]$. Thus, the magnitudes of the absolute median
biases of the MUE are generally small.

Third, we consider the values of $n\widehat{h}_{us}$ selected by
the data-dependent method. The mean, median, and range are 212, 207,
and {[}56, 433{]}, respectively. The $n\widehat{h}_{us}$ values vary
depending on the shape of the $\rho$ function considered. The flatter
is the true $\rho$ function, the larger are the $n\widehat{h}_{us}$
values.

Now, we describe the simulation results for the different $\rho$
functions considered. Figure \ref{fig:MT-Sim_CP_AL_MAD_1} provides
simulation results for five sinusoidal $\rho$ functions with constant
$\mu$ and $\sigma^{2}$ functions. The amplitudes of the sinusoidal
functions increase as one moves down the rows. In the first column
of graphs, the $\rho$ function's maximum value occurs at $\tau=1,$
which corresponds to $t=n\tau=n.$ In the second column of graphs,
the $\rho$ function's minimum value occurs at $\tau=1.$ Each graph
consists of an ``upper'' and a ``lower'' graph. The ``upper''
graph shows the true $\rho$ function, the average CI lower and upper
bounds at five points of interest $\tau=.2,$ $.4,$ $.6,$ $.8,$
$1.0,$ where $t=n\tau,$ and twice the average lower and upper absolute
deviations of the MUE (i.e., $2\E|\widehat{\rho}_{t}-\rho_{t}|\ind(\widehat{\rho}_{t}<\rho_{t})$
and $2\E|\widehat{\rho}_{t}-\rho_{t}|\ind(\widehat{\rho}_{t}>\rho_{t}),$
respectively, whose sum is somewhat analogous to two standard deviations)
at each of these points of interest. The difference between the average
upper and lower CI bounds gives the average lengths (AL's) of the
CI's. The ``lower'' graphs report the CI coverage probabilities
(CP's) at the five points of interest. The CI's all have nominal CP
.95.

The sinusoidal graphs in Figure \ref{fig:MT-Sim_CP_AL_MAD_1} show
the following: (i) All but three of the CP's range from .925 to .965,
which are close to the nominal CP of .95. (ii) The AL's are shortest
for $\rho(\tau)$ values closest to one, and longest for $\rho(\tau)$
values farthest from one. This reflects the $nh$-consistency of the
LS estimator in the (temporally local) unit root and local-to-unity
scenarios, and the $(nh)^{1/2}$-consistency of the LS estimator in
the (temporally local) stationary scenarios. (iii) The AL's are larger
for the $\tau=1$ than the $\tau<1$ cases because fewer observations
are used to construct the CI in the former cases, which reflects the
need to reduce the boundary bias in order to obtain a good CP. Combining
this fact with point (ii), we find the largest AL's occur when $\tau=1$
{\parfillskip=0pt\par}
\begin{figure}[H]
\captionsetup[subfigure]{position=top,font=scriptsize,singlelinecheck=off,justification=raggedright}\vspace{-2.5em}

\begin{centering}
\subfloat[sin 1.00-0.90-1.00, constant $\mu$ and $\sigma$]{\noindent\includegraphics[scale=0.36,viewport=0bp 0bp 576bp 576bp]{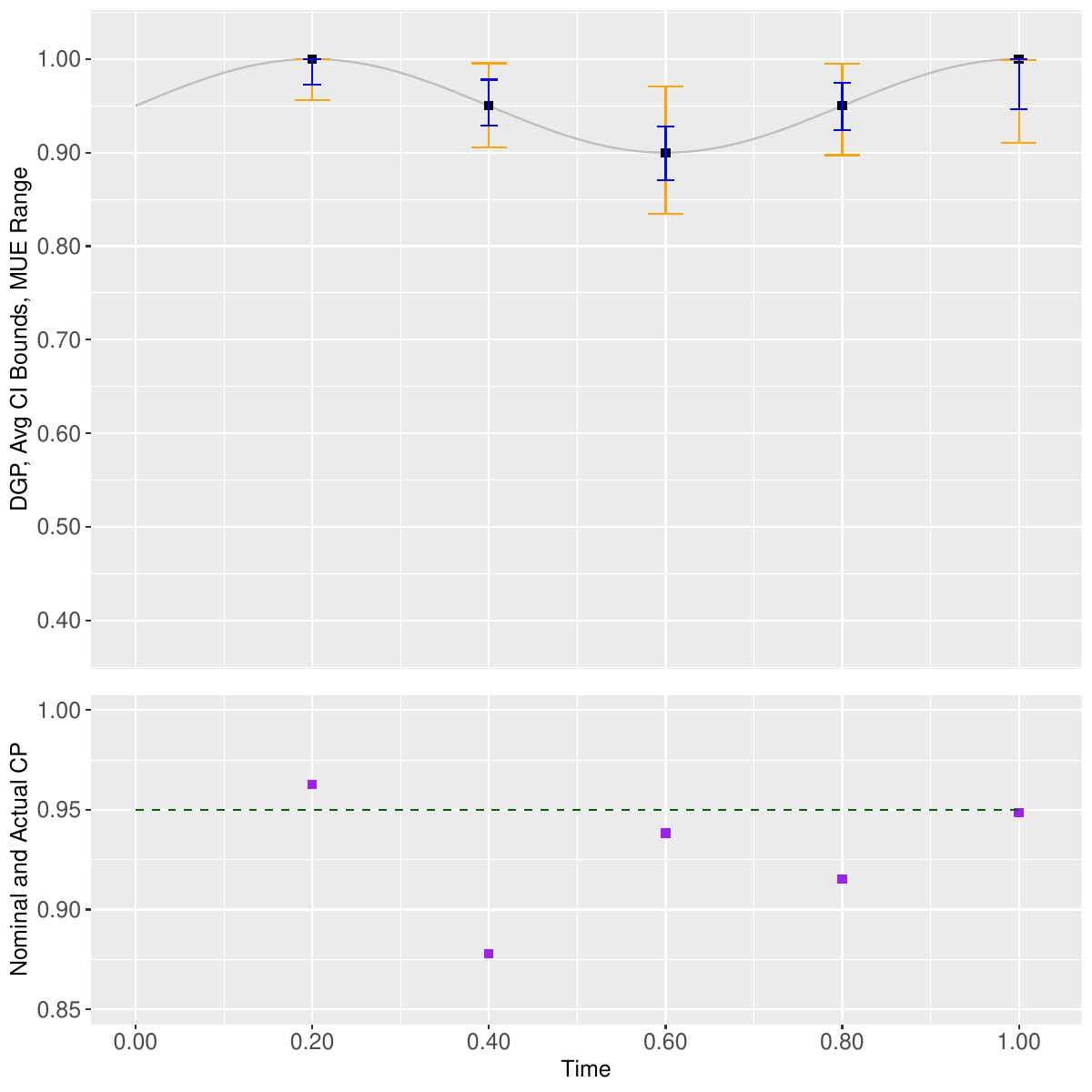}

}\quad{}\subfloat[sin 0.90-1.00-0.90, constant $\mu$ and $\sigma$]{\includegraphics[scale=0.36]{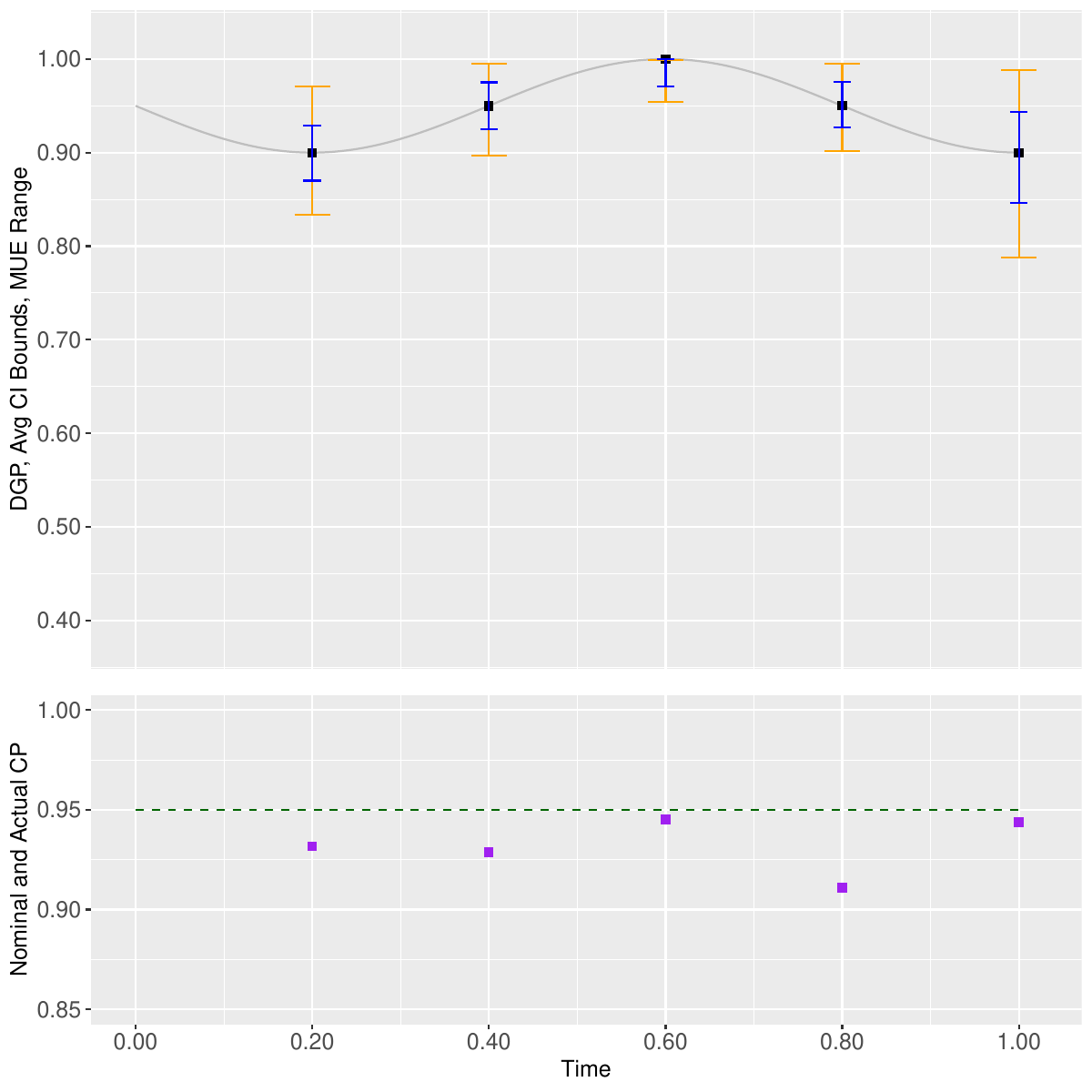}

}
\par\end{centering}
\vspace{-0.65em}

\begin{centering}
\subfloat[sin 1.00-0.80-1.00, constant $\mu$ and $\sigma$]{\includegraphics[scale=0.36]{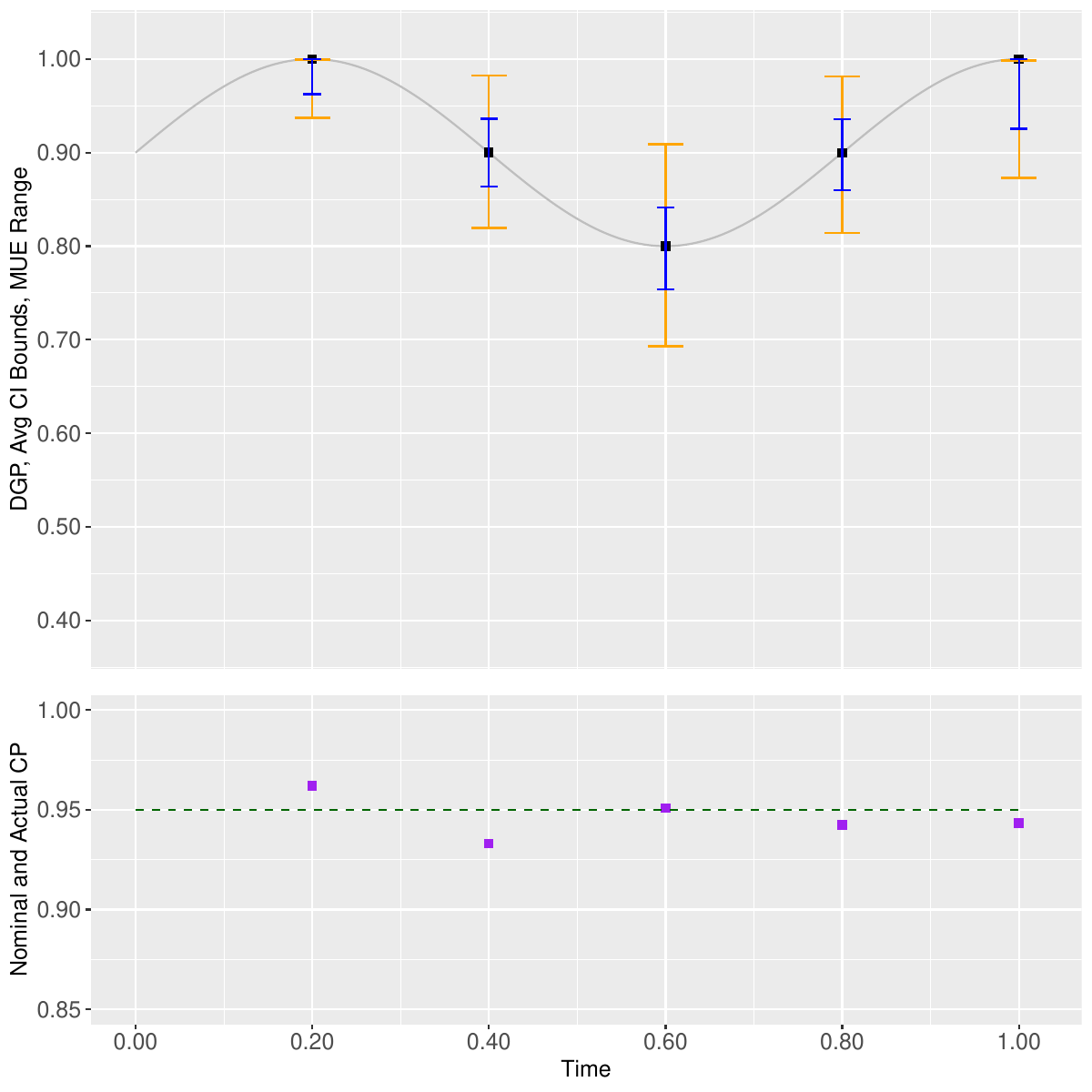}

}\quad{}\subfloat[sin 0.80-1.00-0.80, constant $\mu$ and $\sigma$]{\includegraphics[scale=0.36]{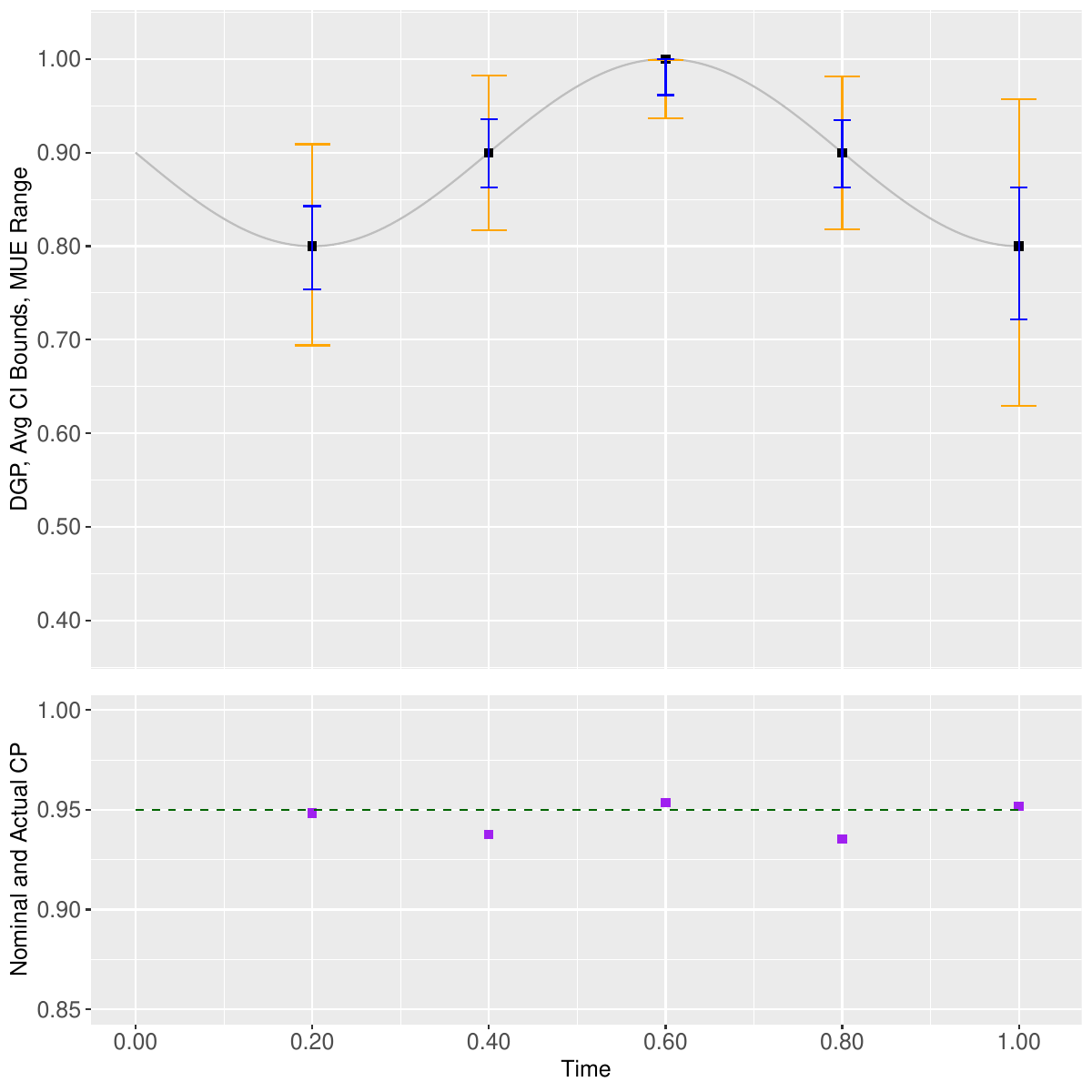}

}
\par\end{centering}
\vspace{-0.65em}

\begin{centering}
\subfloat[sin 1.00-0.60-1.00, constant $\mu$ and $\sigma$]{\includegraphics[scale=0.36]{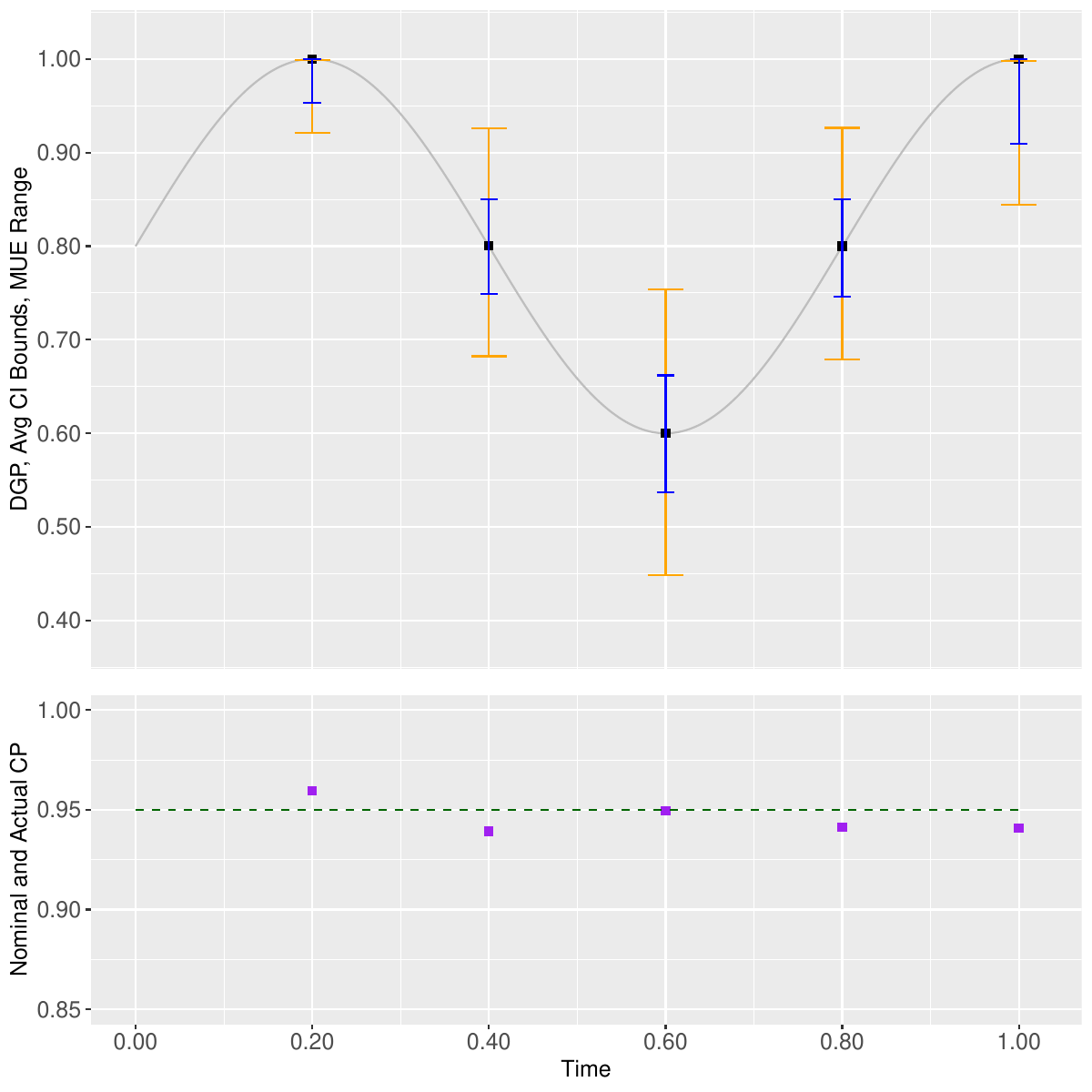}

}\quad{}\captionsetup[subfloat]{position=top,font=scriptsize,singlelinecheck=off,justification=raggedright,labelformat=empty}\subfloat[Legend for the Figures]{\includegraphics[scale=0.36]{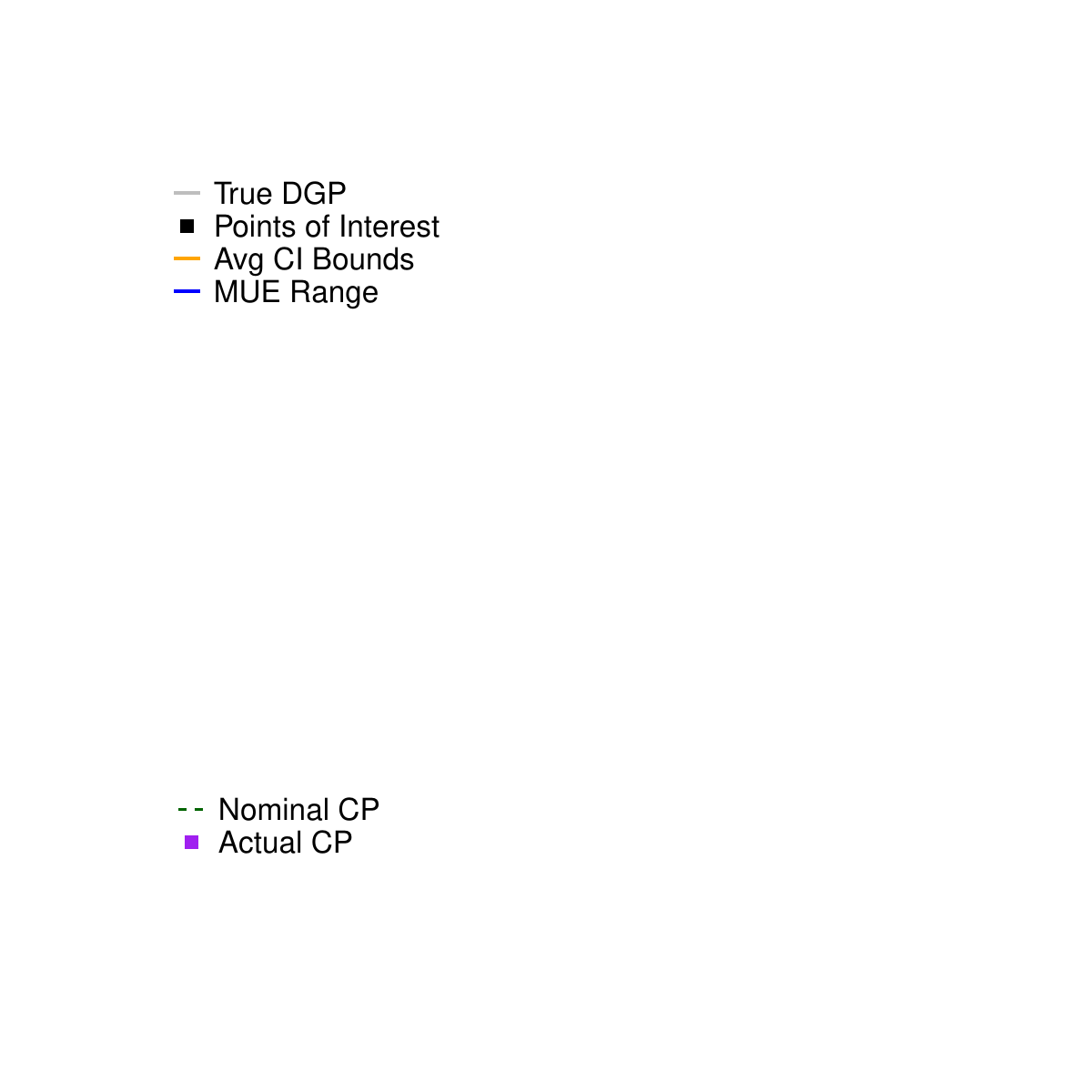}

}
\par\end{centering}
\vspace{-0.75em}

\caption{\protect\label{fig:MT-Sim_CP_AL_MAD_1}CP's and AL's of CI's for $\rho\left(\tau\right)$
and MAD's of the MUE of $\rho\left(\tau\right)$}
\end{figure}
{\noindent}and $\rho(\tau)$ is far from one (which occurs in the
two graphs in the second column). (iv) The magnitudes of the MUE average
absolute deviations are roughly proportional to the AL's of the CI's.
(v) The simulation results for the sinusoidal functions with time-varying
$\mu$ and $\sigma^{2}$ functions, which are given in Figure \ref{fig:Sim_CP_AL_MAD_SM1}
in the Supplemental Material, are quite similar to those in Figure
\ref{fig:MT-Sim_CP_AL_MAD_1}.

Figure \ref{fig:MT-Sim_CP_AL_MAD_2}(a)-(d) provides results for linear
$\rho$ functions with constant $\mu$ and $\sigma^{2}$ functions.
The results are similar to those for the sinusoidal functions, although
all of the CP's lie between .924 and .957. Points (ii)-(v) above also
apply to the linear functions in Figure \ref{fig:MT-Sim_CP_AL_MAD_2}.
Results for time-varying $\mu$ and $\sigma^{2}$ are provided in
Figure \ref{fig:Sim_CP_AL_MAD_SM2} in the Supplemental Material,
and are similar to those in Figure \ref{fig:MT-Sim_CP_AL_MAD_2}.

Figures \ref{fig:MT-Sim_CP_AL_MAD_2}(e)-(f) and \ref{fig:MT-Sim_CP_AL_MAD_3}(a)-(b)
report results for flat-linear $\rho$ functions with $\rho(t)$ varying
between $.90$ and $.99$ in Figure \ref{fig:MT-Sim_CP_AL_MAD_2}(e)-(f)
and between $.80$ and $.99$ in Figure \ref{fig:MT-Sim_CP_AL_MAD_3}(a)-(b).
Figures \ref{fig:MT-Sim_CP_AL_MAD_2}(e) and \ref{fig:MT-Sim_CP_AL_MAD_3}(a)
report results for constant $\mu$ and $\sigma^{2};$ while Figures
\ref{fig:MT-Sim_CP_AL_MAD_2}(f) and \ref{fig:MT-Sim_CP_AL_MAD_3}(b)
report results for time-varying $\mu$ and $\sigma^{2}.$ The CP results
in Figures \ref{fig:MT-Sim_CP_AL_MAD_2}(e)-(f) and \ref{fig:MT-Sim_CP_AL_MAD_3}(a)-(b)
lie between .932 and .956.

Figures \ref{fig:Sim_CP_AL_MAD_SM2}(e)-(f) and \ref{fig:Sim_CP_AL_MAD_SM3}(a)-(b)
provide analogous results to those just discussed for flat-linear
$\rho$ functions, but for the case where the flat part has value
$.99,$ rather than $.80$ or $.90,$ and the linear part has a negative
slope, rather than a positive slope. The CP results for these cases
are lower than those for the flat-linear $\rho$ functions in Figures
\ref{fig:MT-Sim_CP_AL_MAD_2} and \ref{fig:MT-Sim_CP_AL_MAD_3}. For
example, Figure \ref{fig:Sim_CP_AL_MAD_SM2}(e) shows a CP of .905
at $\tau=.60.$ For the other four cases, the CP's are in the range
of .917 to .957.

Figure \ref{fig:MT-Sim_CP_AL_MAD_3}(c)-(f) considers flat $\rho$
functions. In Figure \ref{fig:MT-Sim_CP_AL_MAD_3}(c)-(d), $\rho=.99$
and the $\mu$ and $\sigma^{2}$ functions are constant and time-varying,
respectively. Figure \ref{fig:MT-Sim_CP_AL_MAD_3}(e)-(f) is analogous,
but with $\rho=.90.$ Figure \ref{fig:Sim_CP_AL_MAD_SM3}(c)-(d) also
is analogous, but with $\rho=.75.$ The CP's in the flat $\rho$ function
cases are all close to .95 and lie between .932 and .960. This occurs
because there is no bias accruing due to a time-varying $\rho$ function.
However, in the cases of flat $\rho,$ $\mu,$ and $\sigma^{2}$ functions,
the CI's are not as short as the oracle CI's that rely on knowledge
that the $\rho,$ $\mu,$ and $\sigma^{2}$ functions are all flat
and take $nh=n.$ For the case of $\rho=.99$ and flat $\mu$ and
$\sigma^{2}$ functions, the AL's of the TVP CI and the oracle CI
are $.022$ and $.012,$ respectively, at $\tau=.2$. For $\rho=.90,$
they are $.085$ and $.039$ at the same $\tau$. For $\rho=.75$,
they are $.124$ and $.062$. There is a substantial increase in the
average lengths in the flat cases using the methods of this paper,
in order to ensure good CP's in the near flat cases.

Lastly, Figures \ref{fig:Sim_CP_AL_MAD_SM3}(e)-(f) and \ref{fig:Sim_CP_AL_MAD_SM4}(a)-(f)
in the Supplemental Material provide results for kinked $\rho$ functions
that either linearly increase until $\tau=.5$ and then linearly decrease,
or linearly decrease until $\tau=.5$ and then linearly increase.
Both constant and time-varying $\mu$ and $\sigma^{2}$ functions
are considered. In two cases, the CP's are .904 and .906. In all other
cases, {\parfillskip=0pt\par}
\begin{figure}[H]
\captionsetup[subfigure]{position=top,font=scriptsize,singlelinecheck=off,justification=raggedright}\vspace{-2.5em}

\begin{centering}
\subfloat[linear 0.90-1.00, constant $\mu$ and $\sigma$]{\includegraphics[scale=0.36]{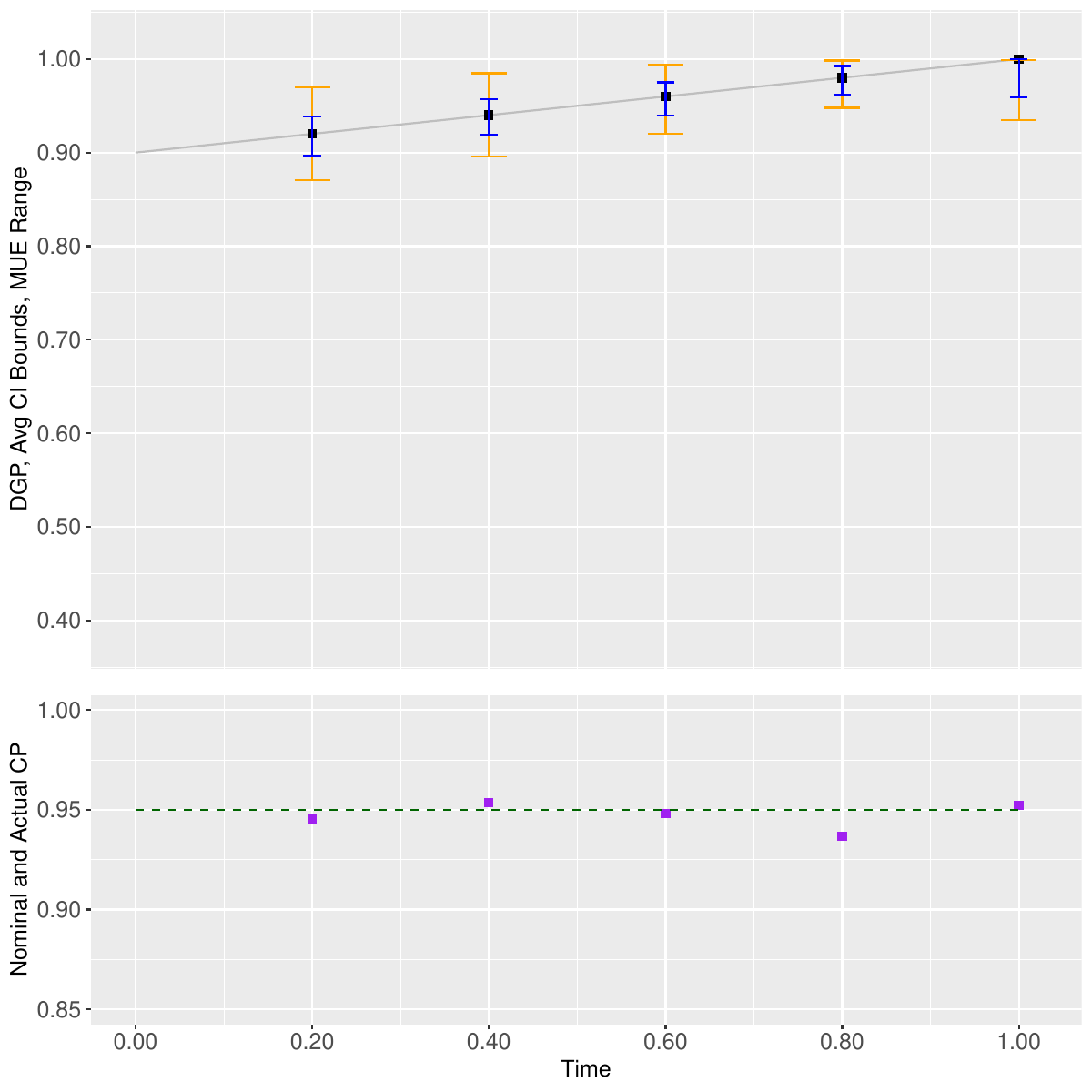}

}\quad{}\subfloat[linear 1.00-0.90, constant $\mu$ and $\sigma$]{\includegraphics[scale=0.36]{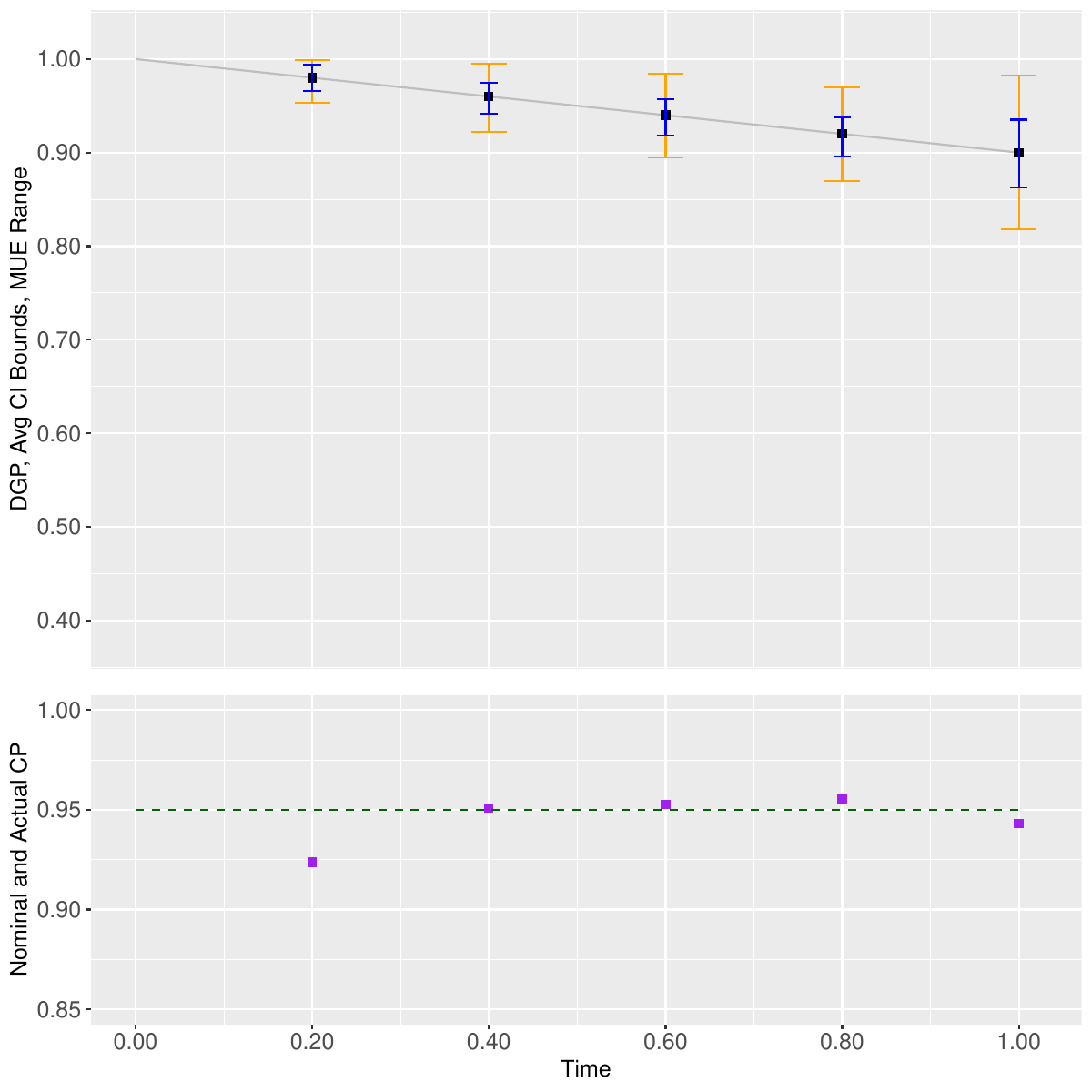}

}
\par\end{centering}
\vspace{-0.65em}

\begin{centering}
\subfloat[linear 0.60-0.90, constant $\mu$ and $\sigma$]{\includegraphics[scale=0.36]{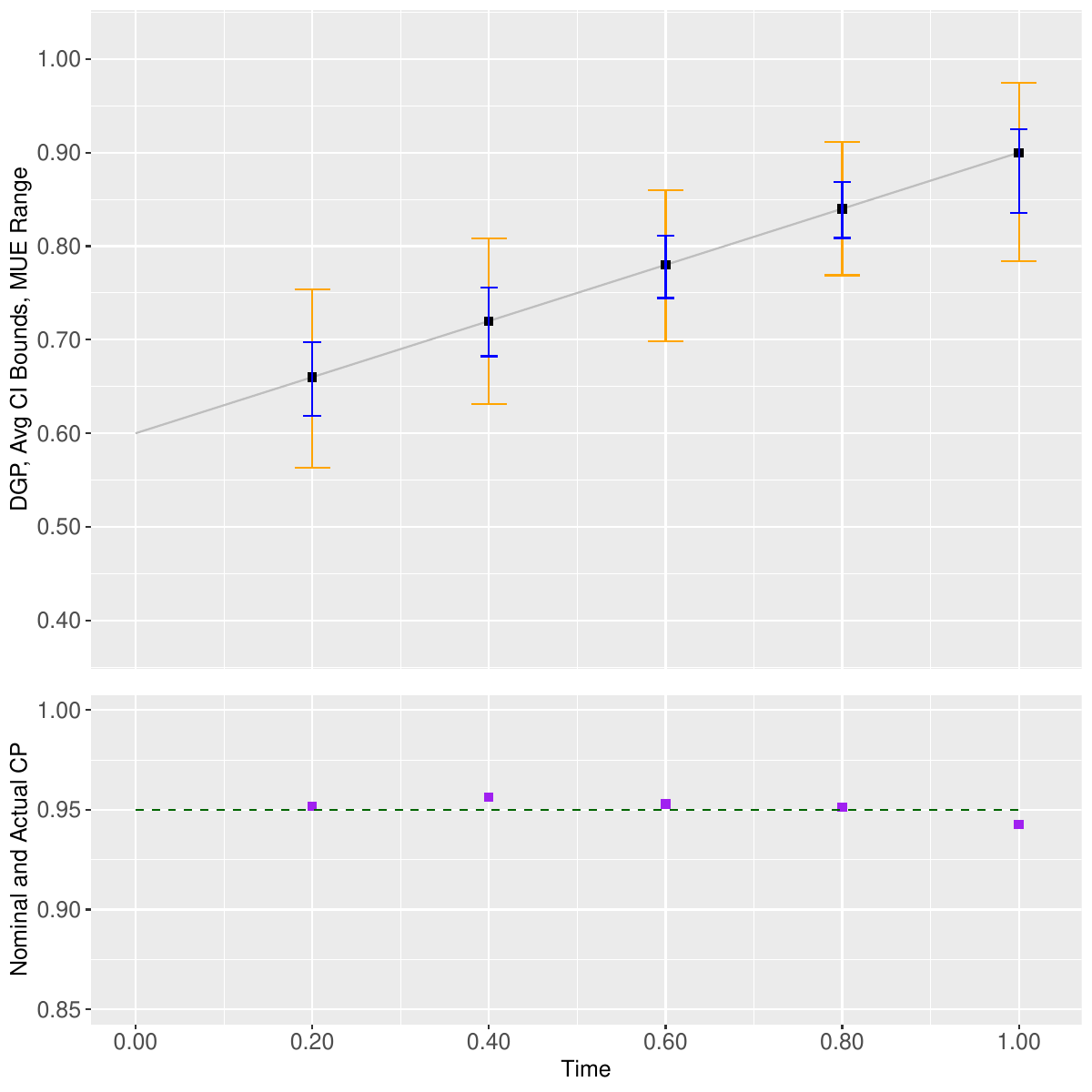}

}\quad{}\subfloat[linear 0.90-0.60, constant $\mu$ and $\sigma$]{\includegraphics[scale=0.36]{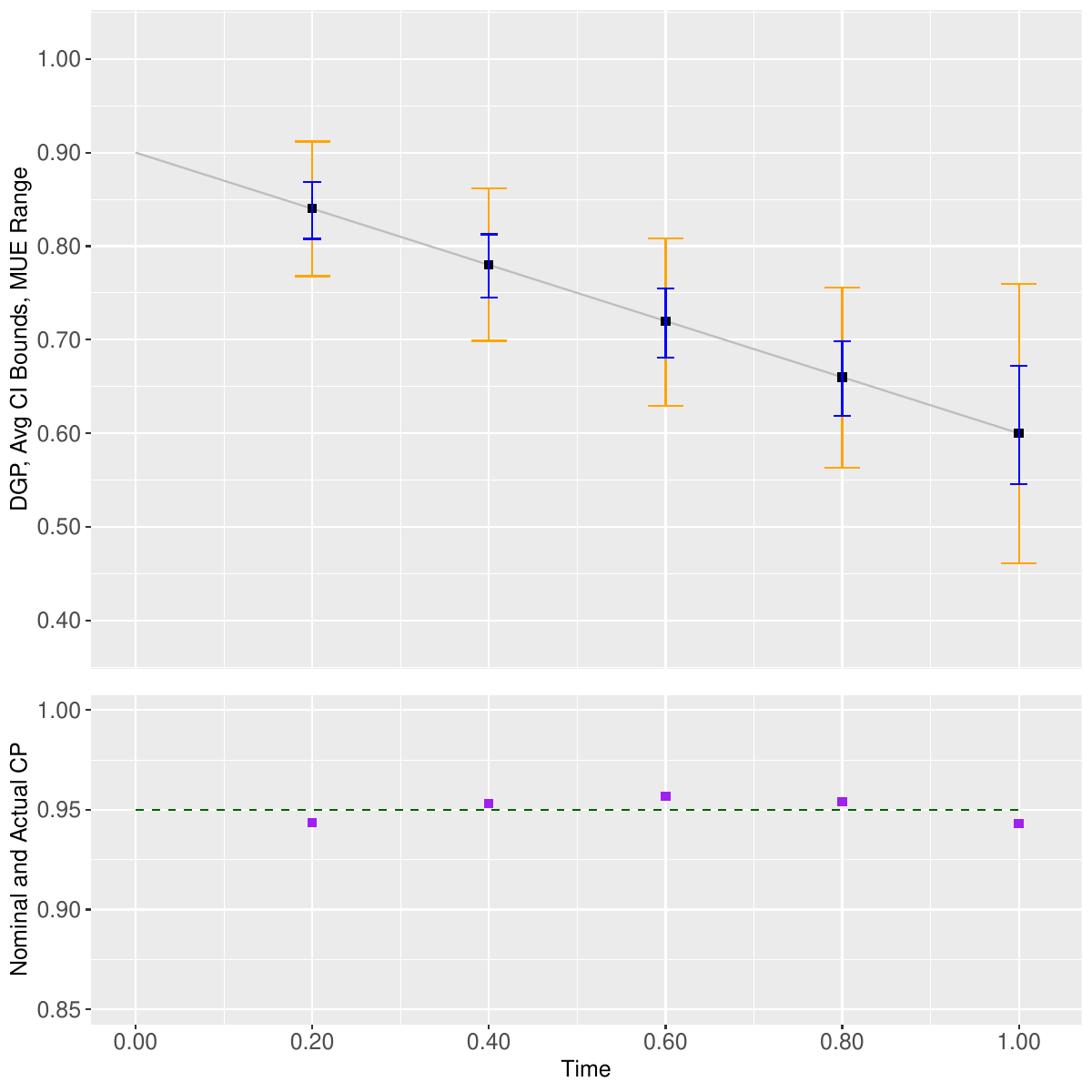}

}
\par\end{centering}
\vspace{-0.65em}

\begin{centering}
\subfloat[flat-lin 0.90-0.99, constant $\mu$ and $\sigma$]{\includegraphics[scale=0.36]{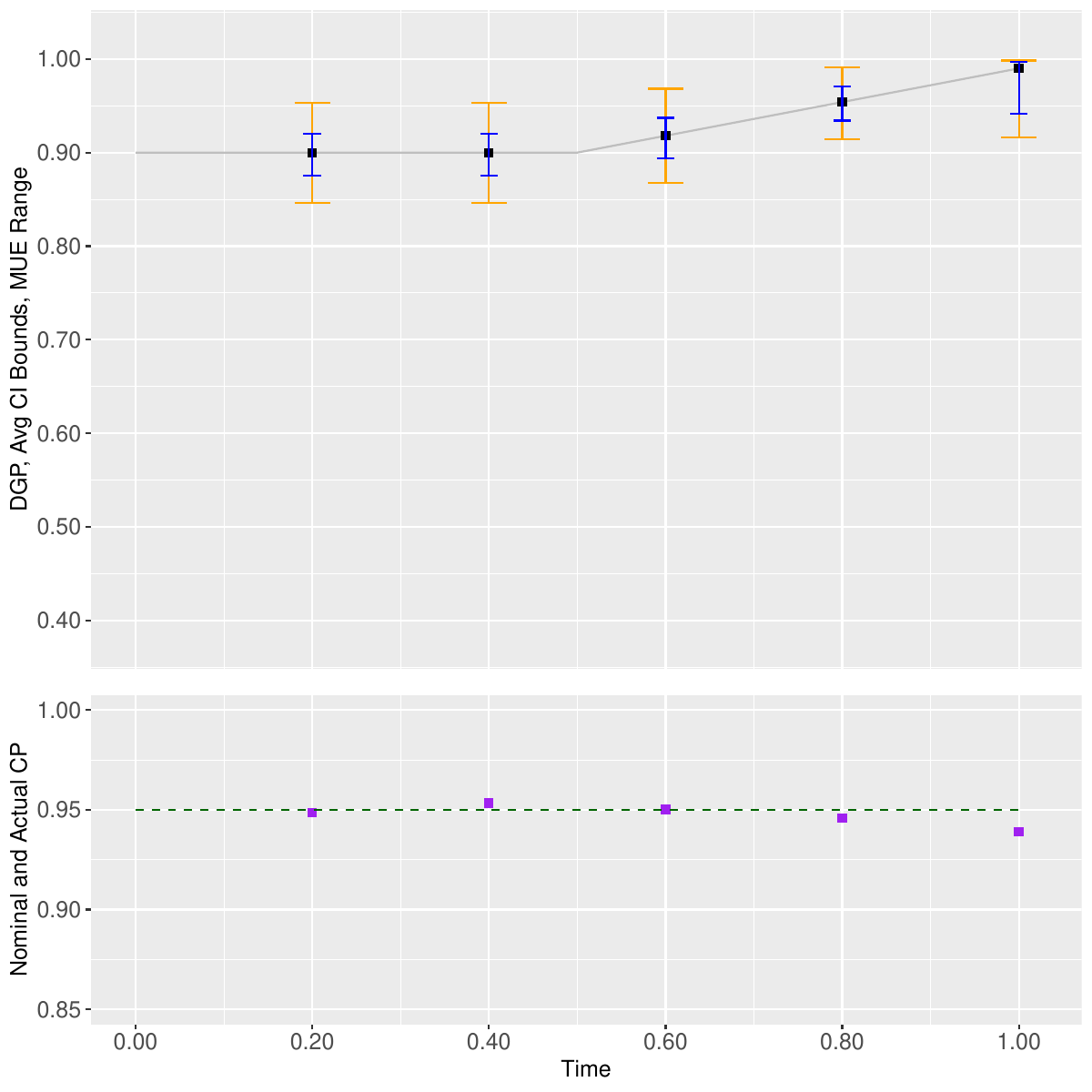}

}\quad{}\subfloat[flat-lin 0.90-0.99, time-varying $\mu$ and $\sigma$]{\includegraphics[scale=0.36]{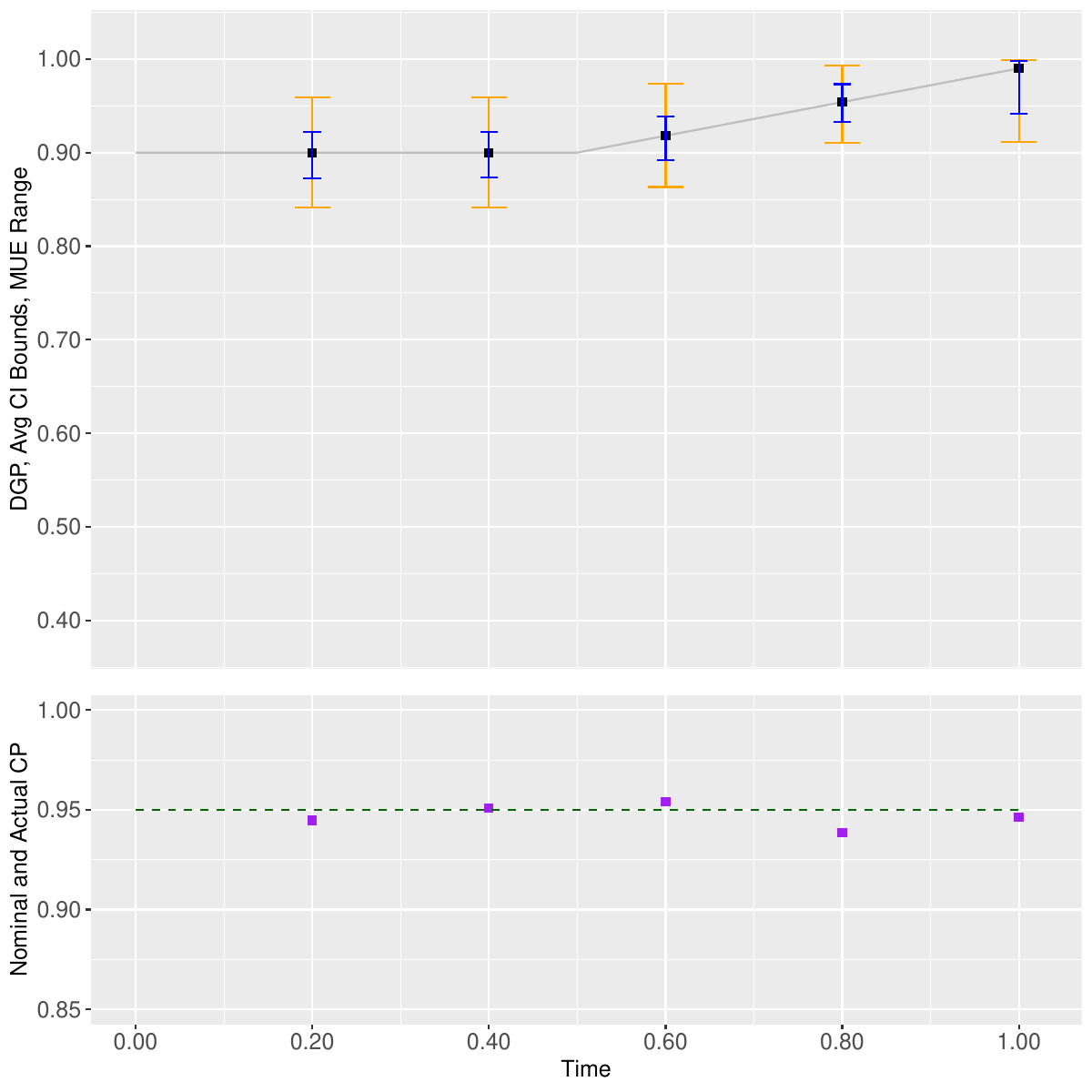}

}
\par\end{centering}
\vspace{-0.75em}

\caption{\protect\label{fig:MT-Sim_CP_AL_MAD_2}CP's and AL's of CI's for $\rho\left(\tau\right)$
and MAD's of the MUE of $\rho\left(\tau\right)$}
\end{figure}
\begin{figure}[H]
\captionsetup[subfigure]{position=top,font=scriptsize,singlelinecheck=off,justification=raggedright}\vspace{-2.5em}

\begin{centering}
\subfloat[flat-lin 0.80-0.99, constant $\mu$ and $\sigma$]{\includegraphics[scale=0.36]{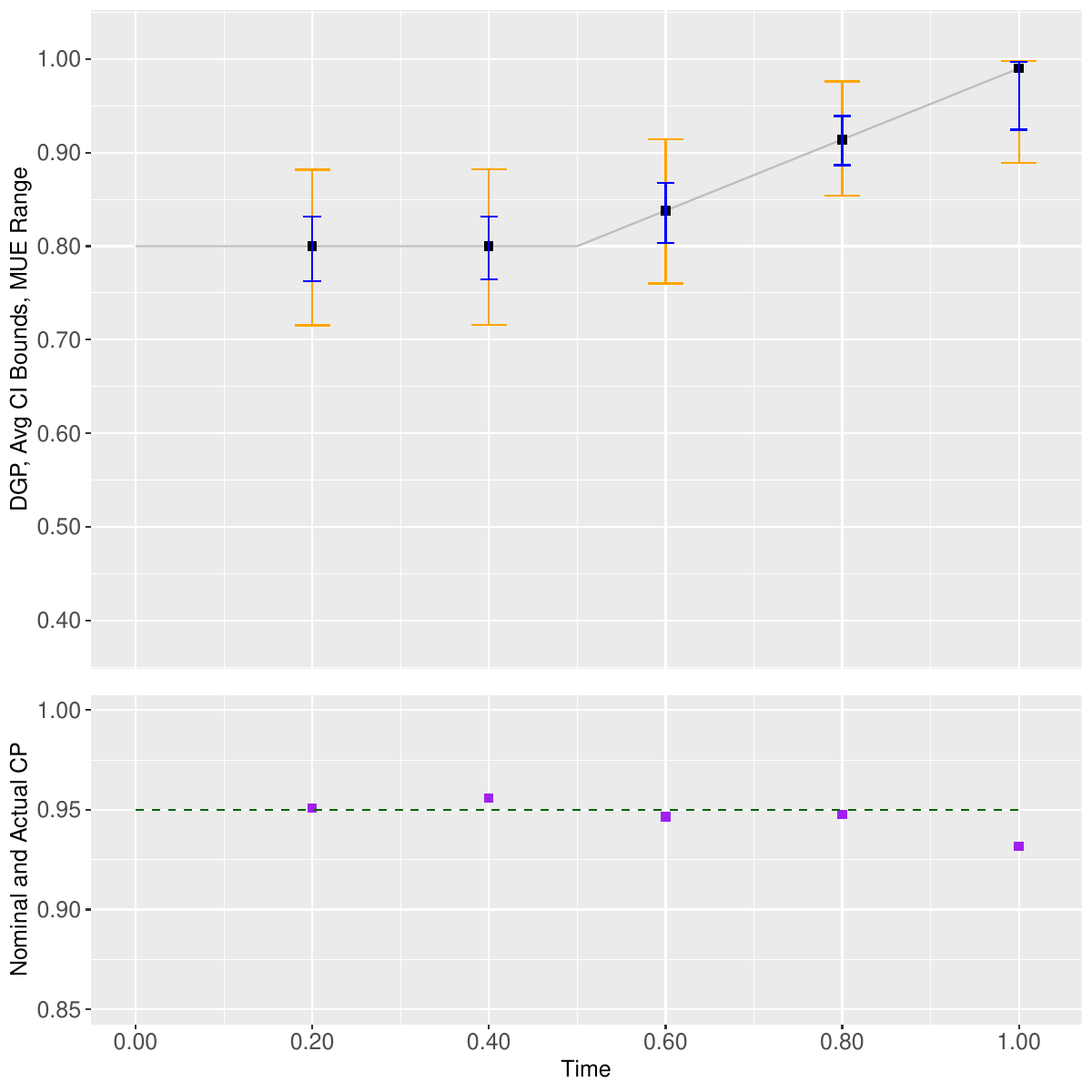}

}\quad{}\subfloat[flat-lin 0.80-0.99, time-varying $\mu$ and $\sigma$]{\includegraphics[scale=0.36]{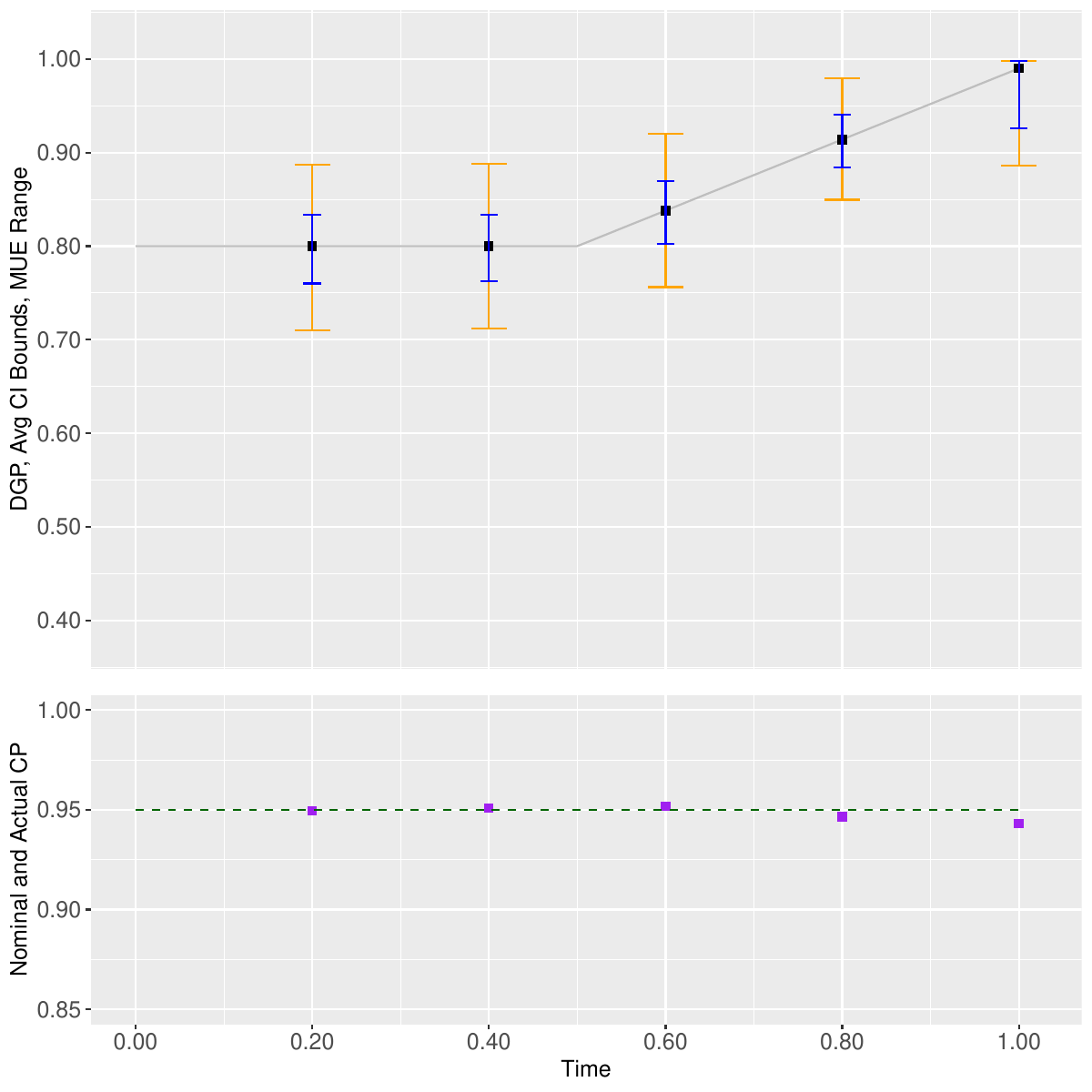}

}
\par\end{centering}
\vspace{-0.65em}

\begin{centering}
\subfloat[flat 0.99, constant $\mu$ and $\sigma$]{\includegraphics[scale=0.36]{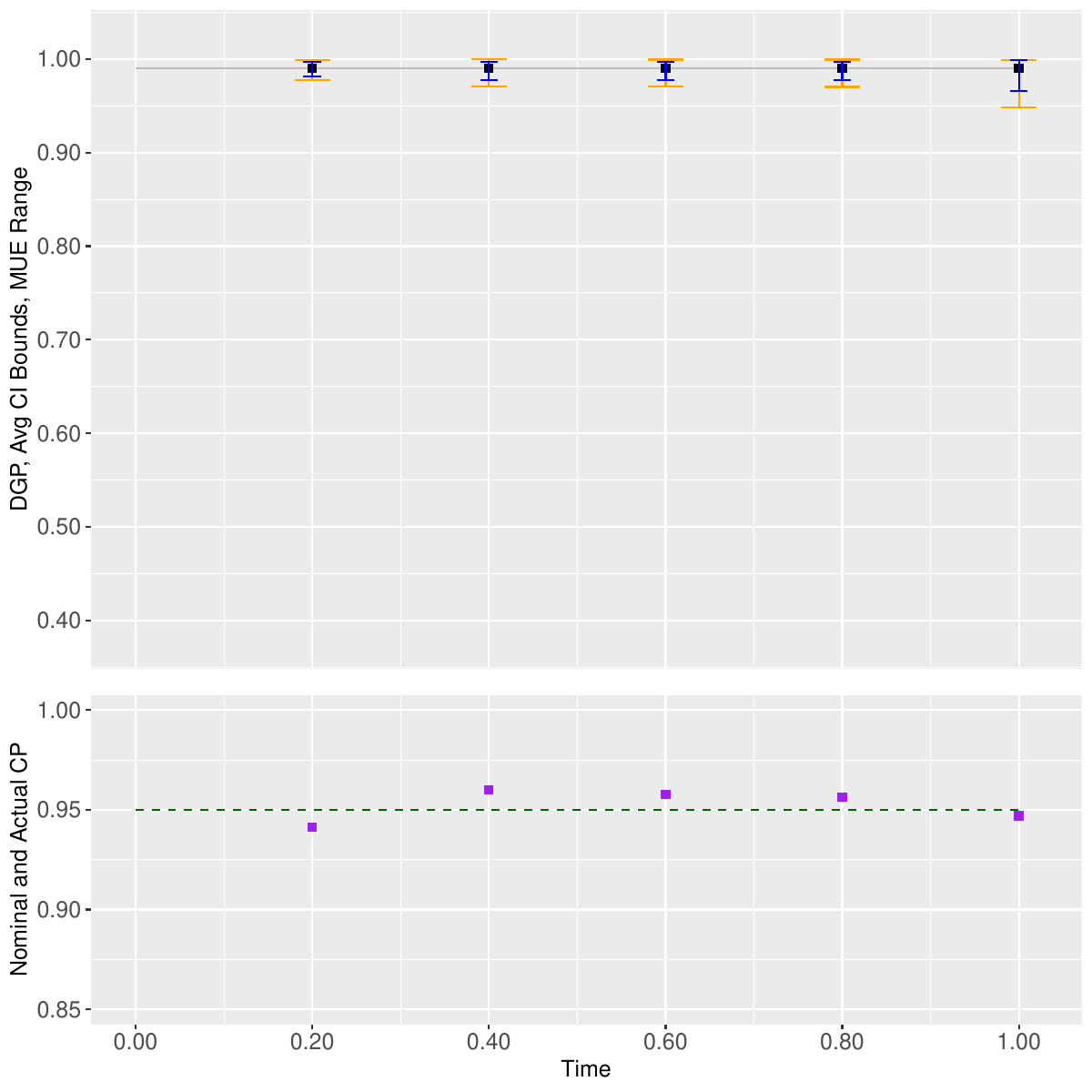}

}\quad{}\subfloat[flat 0.99, time-varying $\mu$ and $\sigma$]{\includegraphics[scale=0.36]{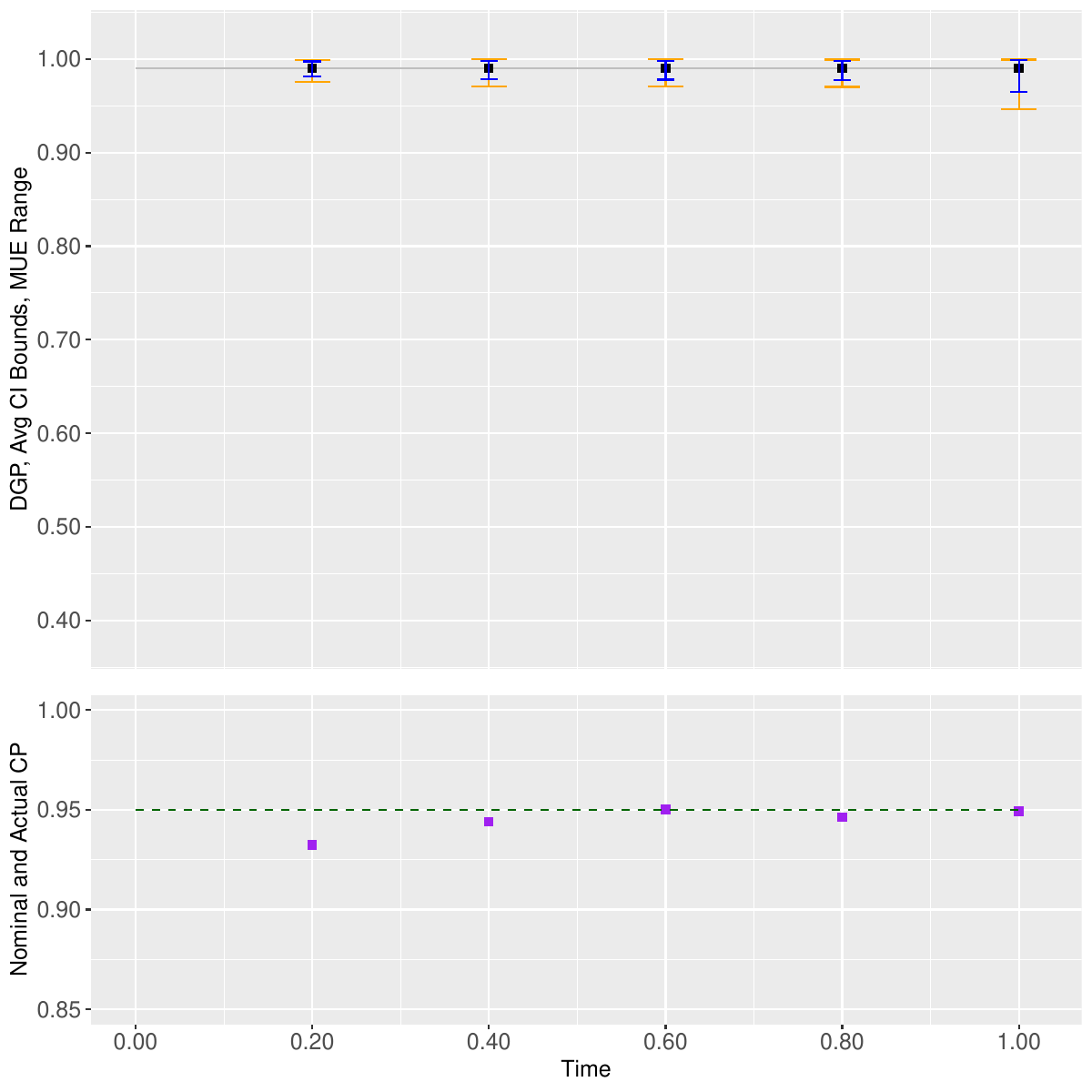}

}
\par\end{centering}
\vspace{-0.65em}

\begin{centering}
\subfloat[flat 0.90, constant $\mu$ and $\sigma$]{\includegraphics[scale=0.36]{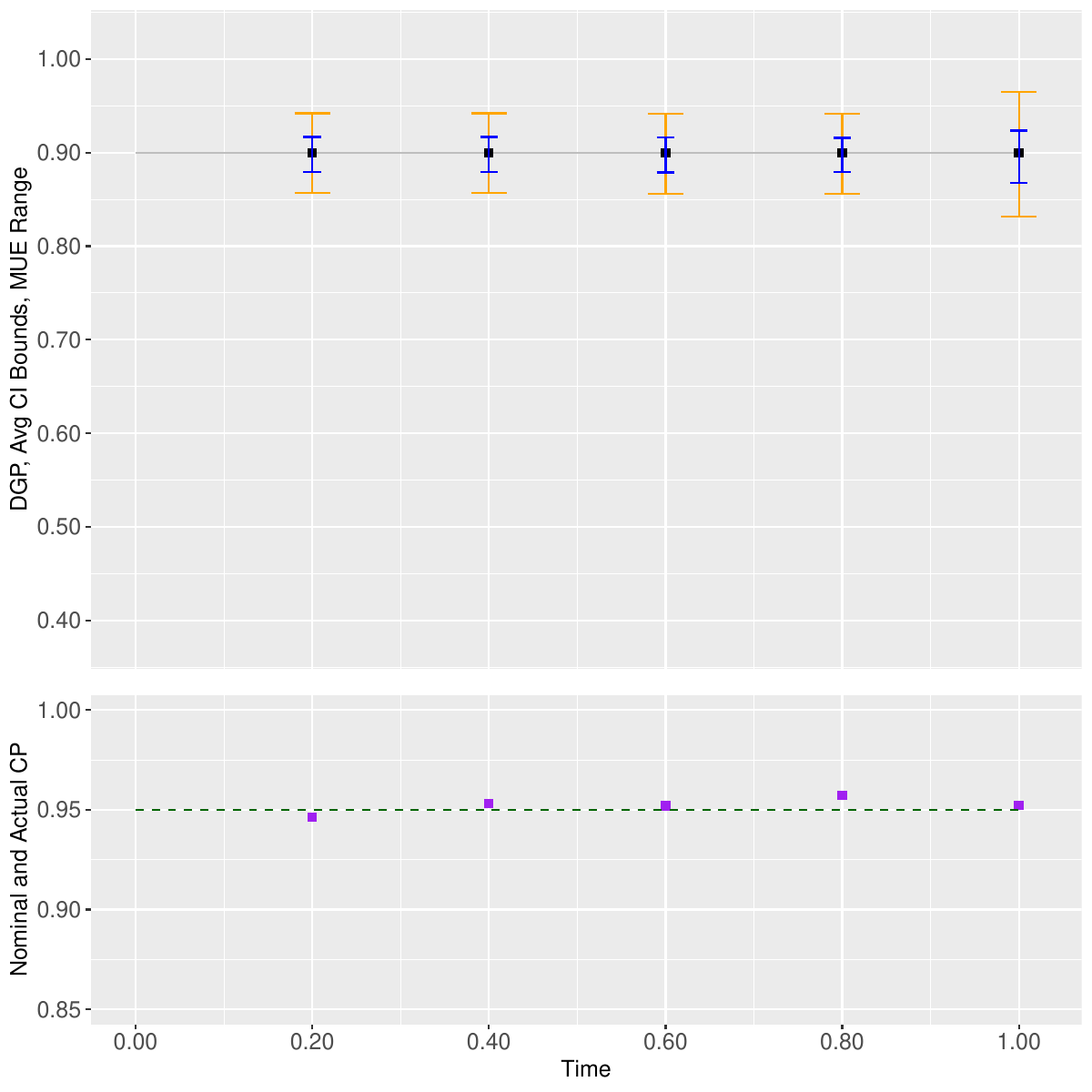}

}\quad{}\subfloat[flat 0.90, time-varying $\mu$ and $\sigma$]{\includegraphics[scale=0.36]{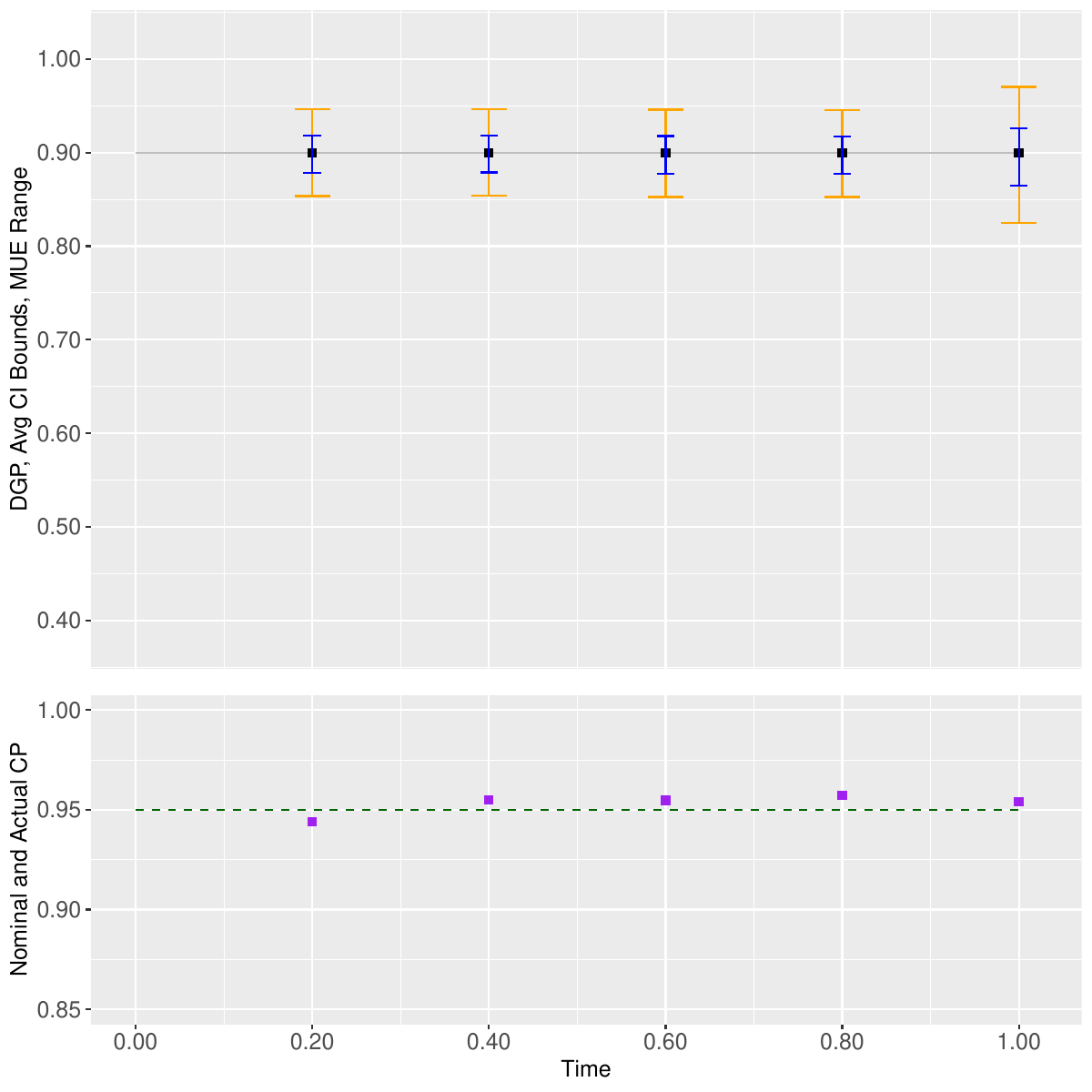}

}
\par\end{centering}
\vspace{-0.75em}

\caption{\protect\label{fig:MT-Sim_CP_AL_MAD_3}CP's and AL's of CI's for $\rho\left(\tau\right)$
and MAD's of the MUE of $\rho\left(\tau\right)$}
\end{figure}
{\noindent}the CP's are in the range of .930 to .953. When the difference
between the minimum and maximum $\rho$ value considered is large,
viz., $.4,$ the AL's of the CI's are large. This occurs because the
data-dependent choice of $h$ is small in order to avoid bias.

Overall, the CP's of the CI's are quite good with only a few being
as low as .88 or .89 out of the 205 cases considered. The AL's vary
substantially across scenarios depending on how close $\rho(\tau)$
is to one and how much the $\rho$ function varies with time, as is
to be expected.

\section{Empirical Applications\protect\label{sec:MT-Empirical-Applications}}

This section presents applications of the proposed methods to time
series of inflation and exchange rates in several countries. The data
comes from the IMF (International Financial Statistics (IFS)) database. 

In the Supplemental Material, results are provided for some additional
countries and for interest rates. The Supplemental Material also provides
applications to eight macroeconomic series for the US, using the Federal
Reserve Economic Data (FRED).

In terms of computation, we set $nh_{\min}=.2n$ and $nh_{\max}=2n$,
where $n$ is the sample size. For $nh$ values between $nh_{\min}$
and $nh_{\text{mid}}$, where $nh_{\text{mid}}\coloneqq.5n$, we use
a grid size of $.02n$, while for $nh$ values between $nh_{\text{mid}}$
and $nh_{\max}$, we use a grid size of $.05n$ because the range
of $nh$ values is wider than between $nh_{\min}$ and $nh_{\text{mid}}$.

\subsection{Inflation\protect\label{subsec:MT-EMP_Inflation}}

First, we apply our method to the monthly inflation data, defined
as the percentage change in CPI over the previous month, see Figure
\ref{fig:MT-EMP_AR1_1}. We consider three countries: the US, Canada,
and Germany. The data span is Feb 1955 to Oct 2022, which contains
$n=813$ observations for each country. We compute the $n\widehat{h}_{us}$
values based on (\ref{eq:MT-hhat_us-def}). Given these, we compute
the MUE's and 90\% CI's for $\rho\left(t\right)$ at each $t$ and
for each country. As a robustness check, we multiply the undersmoothed
$n\widehat{h}_{us}$ by 1.5 and present the results in the right column
of each figure. The final $n\widehat{h}_{us}$ values used for the
computations are listed in the titles of the figures. For all of the
inflation series, Ljung-Box tests based on six lags of the residuals
from the AR(1) model fail to reject the null hypothesis of no autocorrelation
at the 5\% significance level, see Table \ref{tab:ACF_Test_AR1}.

For comparative purposes, we also fit a constant autoregression coefficient
AR(1) model to the data and report the estimated $\widehat{\rho}$
and its 90\% CI in the same graph. To obtain the constant parameter
MUE's denoted by the flat red solid lines and constant parameter CI's
denoted by the flat red dotted lines, we fix $n\widehat{h}_{us}=2n$
and apply our method. Note that the 
\begin{figure}[H]
\captionsetup[subfigure]{position=top,font=scriptsize,singlelinecheck=off,justification=raggedright}\vspace{-1.2cm}

\begin{centering}
\subfloat[US Inflation, $n\widehat{h}_{us}=$ 125]{\includegraphics[scale=0.36]{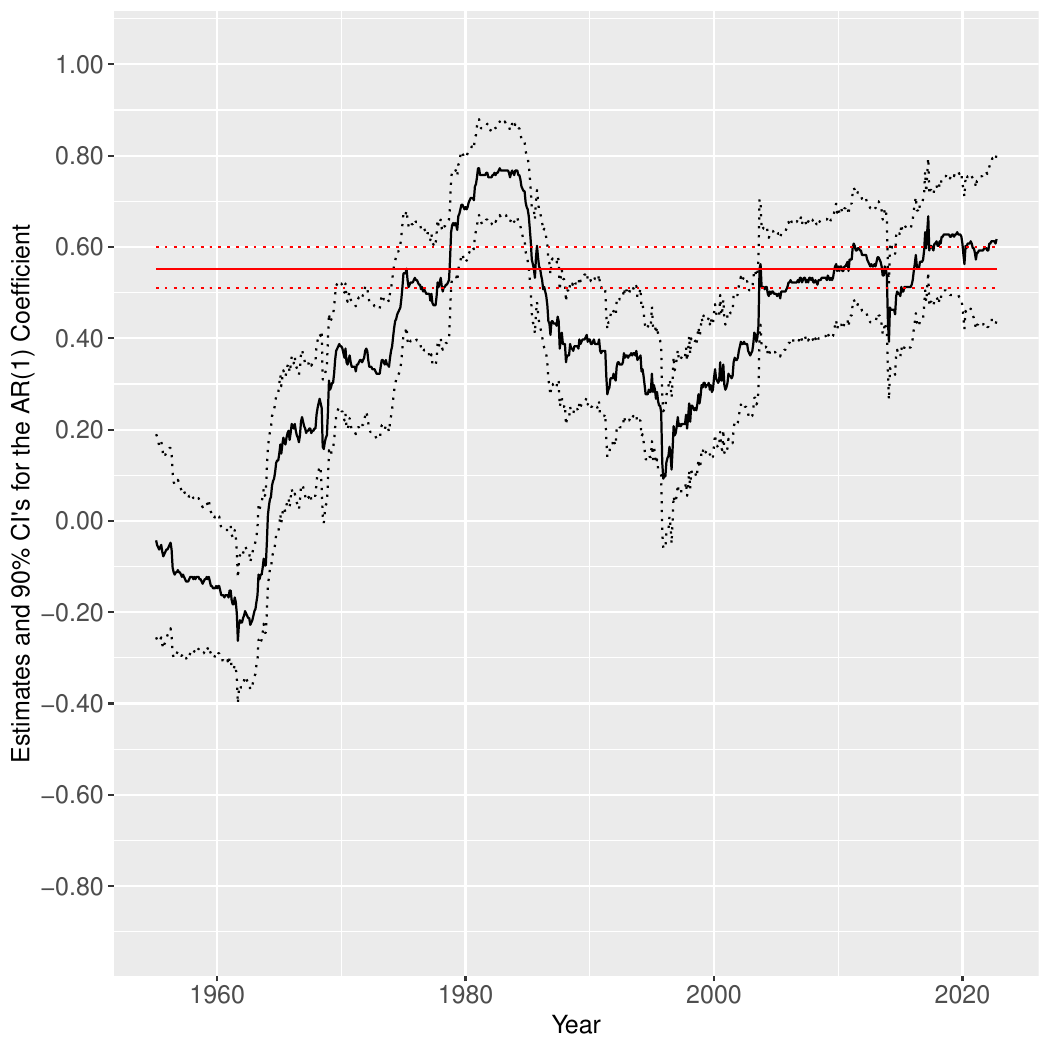}

}\quad{}\subfloat[US Inflation, $1.5n\widehat{h}_{us}=$ 188]{\includegraphics[scale=0.36]{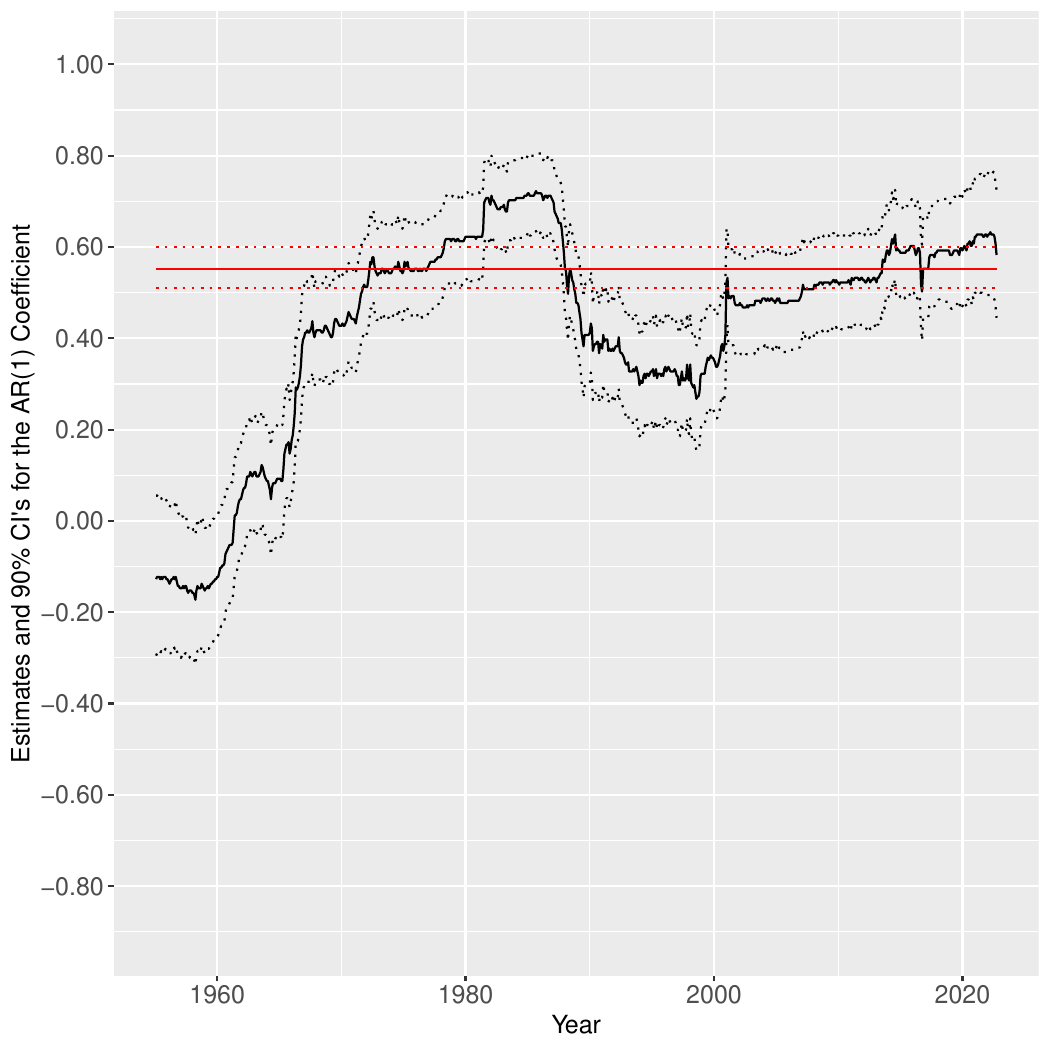}

}
\par\end{centering}
\begin{centering}
\subfloat[Canada Inflation, $n\widehat{h}_{us}=$ 125]{\noindent\includegraphics[scale=0.36]{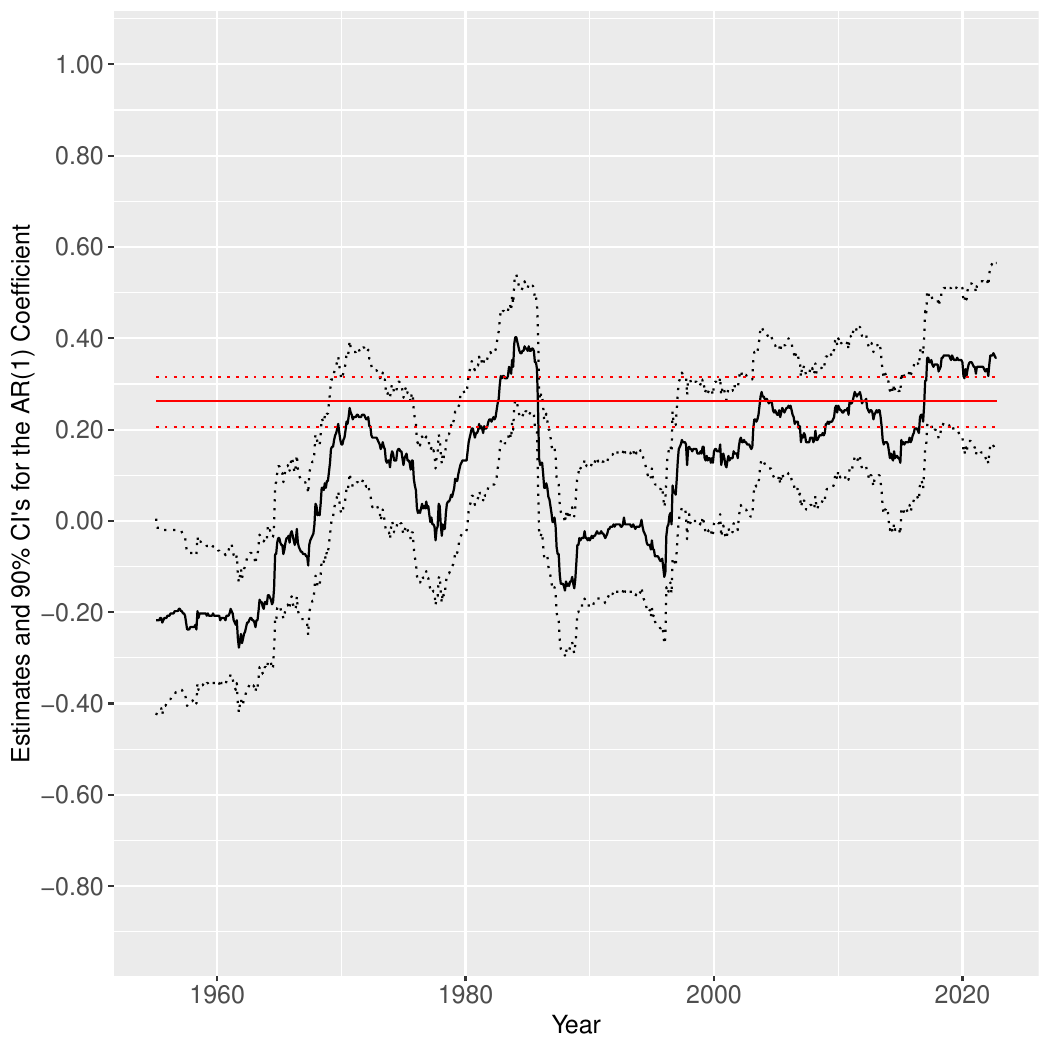}

}\quad{}\subfloat[Canada Inflation, $1.5n\widehat{h}_{us}=$ 188]{\includegraphics[scale=0.36]{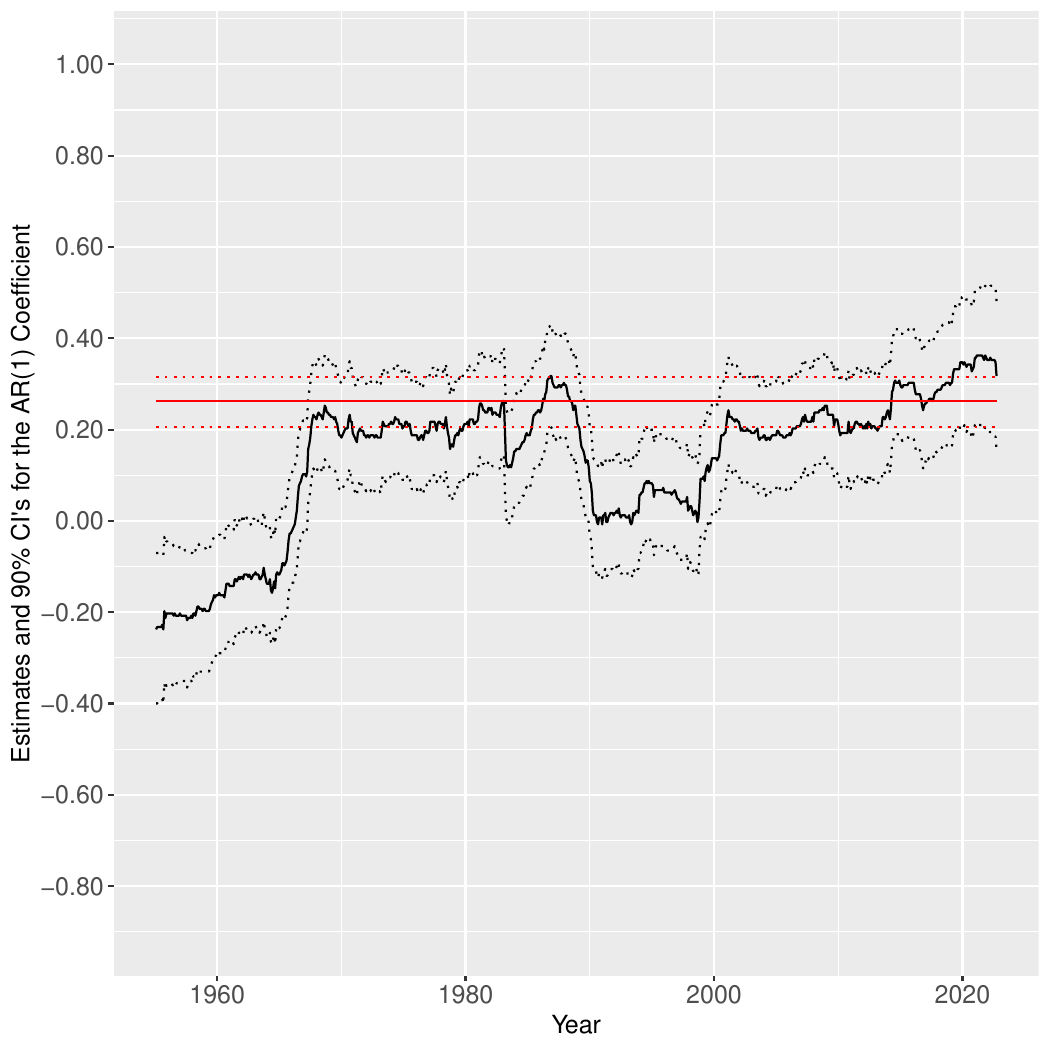}

}
\par\end{centering}
\begin{centering}
\subfloat[Germany Inflation, $n\widehat{h}_{us}=$ 125]{\includegraphics[scale=0.36]{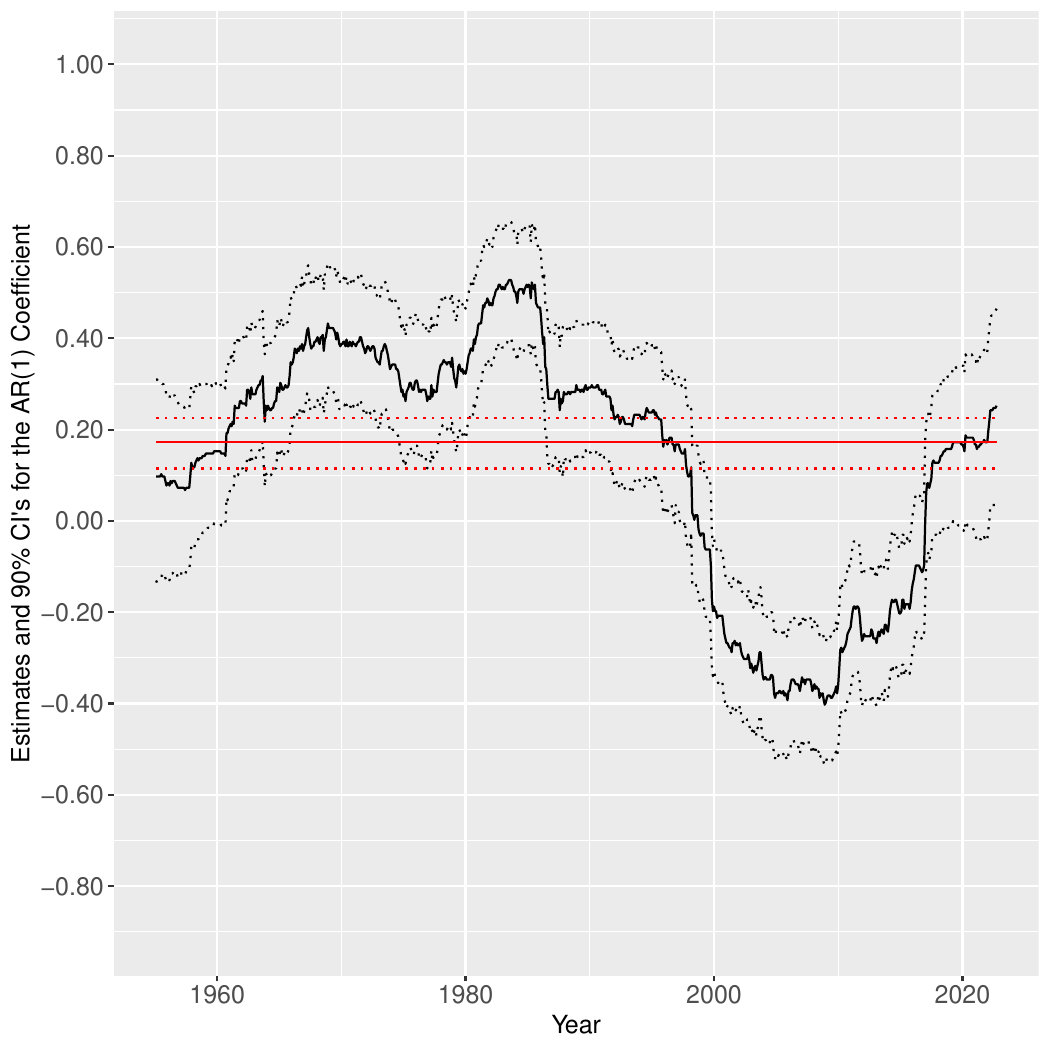}

}\quad{}\subfloat[Germany Inflation, $1.5n\widehat{h}_{us}=$ 188]{\includegraphics[scale=0.36]{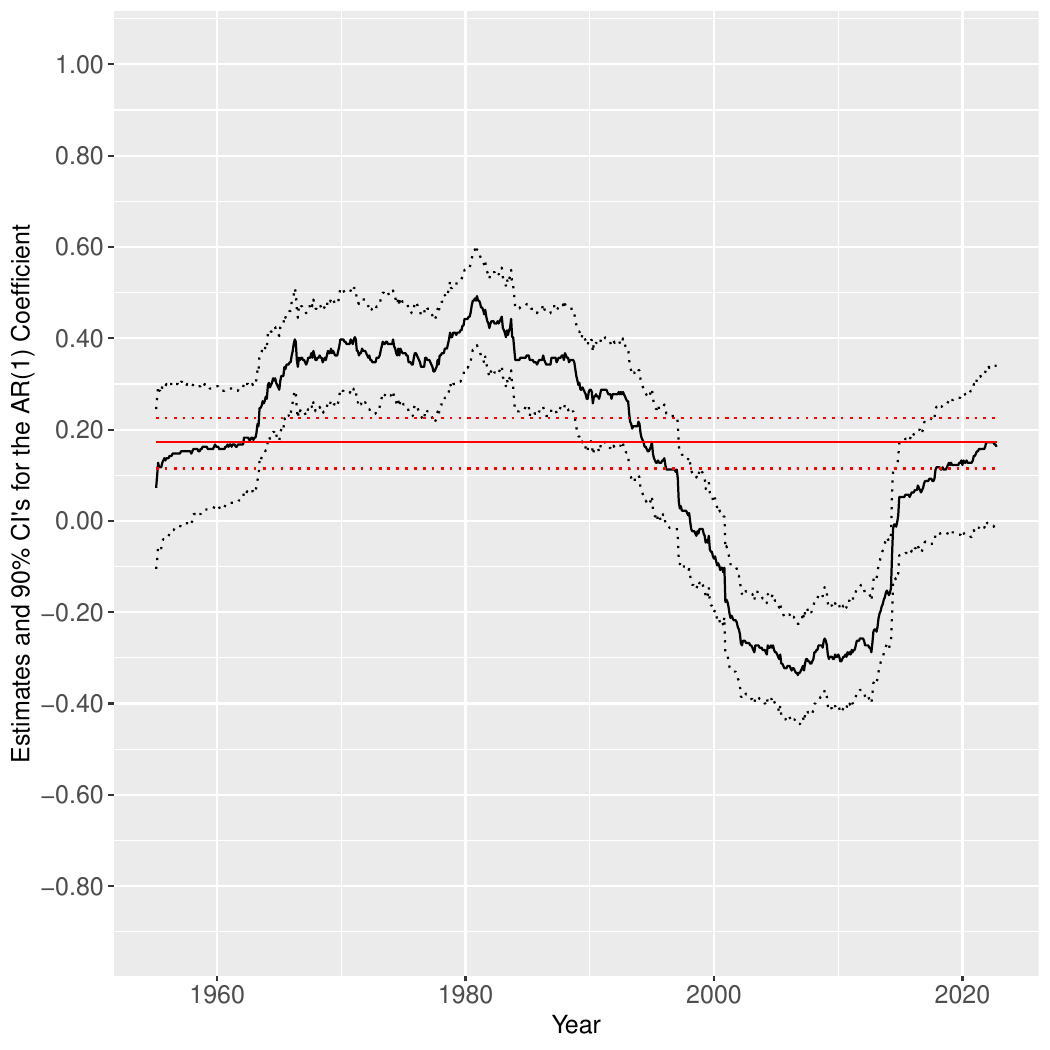}

}
\par\end{centering}
\caption{Estimates and 90\% CI's for the AR(1) Coefficient in TVP-AR(1) Models:
US, Canada, and Germany Inflation. The solid black line is the MUE's
of $\rho\left(t\right)$ in a TVP-AR(1) model, with the 90\% CI's
denoted by the dotted black line. The solid red line is the MUE's
of the AR coefficient in a constant parameter AR(1) model, with the
90\% CI's denoted by the dotted red line.\protect\label{fig:MT-EMP_AR1_1}}
\end{figure}
{\noindent}method to derive the constant parameter estimates is equivalent
to Mikusheva's \citeyearpar{mikusheva2007uniform} modification of
Stock's \citeyearpar{stock1991confidence} method. \citet{mikusheva2007uniform}
proves that these confidence sets are uniformly valid asymptotically
for non-time-varying $\rho\in(0,1]$.

Inflation persistence in the US has been extensively studied with
mixed results regarding its extent and causes. Studies like those
by \citet{cogley2001evolving} have noted a decrease in inflation
persistence following the early 1980s, attributing this change to
shifts in monetary policy, especially the Federal Reserve's (Fed)
increased focus on inflation targeting. This view is supported by
research that points to the Volcker disinflation period as a pivotal
time when the Fed's credibility was enhanced, leading to a stabilization
of inflation expectations, see \citet{sims1992interpreting}. We also
find a sharp decline from .78 to .12 in inflation persistence in the
US measured by the MUE of the TVP-AR coefficient between 1983 and
1995 in Figure \ref{fig:MT-EMP_AR1_1}(a), which is consistent with
the above empirical results.

Entering the era of 2000s, inflation persistence remains a central
topic in macroeconomic research. One line of research (e.g., \citet*{eggertsson2003zero,chung2012have})
concerns zero lower bound (ZLB) effects when the Fed set the interest
rates close to zero. The theory predicts that at ZLB, monetary policy
could be less effective in controlling inflation, thereby potentially
increasing inflation persistence if not coupled with assertive non-traditional
interventions. To overcome the challenges of the ZLB effects on the
effectiveness of its monetary policies, the Fed initiated a new practice
called ``forward guidance,'' where the future course of monetary
policy is communicated to the public by the central bank. Numerous
studies (e.g., \citet*{CAMPBELL2012BIP,negro2023JPE}) have suggested
that forward guidance is effective in enabling the Fed to better control
inflation, which could lead to lower inflation persistence. Along
this line, \citet{bernanke2020new} argues that forward guidance combined
with quantitative easing (QE) granted the Fed significantly more space
to provide accommodation when its standard policy rate was near zero.
More recently, \citet*{cole2023living} argue that despite the significant
efforts to make their policy credible, the credibility of most central
banks including the Fed has been generally declining, making monetary
policies aimed at controlling inflation less effective. A concrete
manifestation of their claim would be an increase in the inflation
persistence over a longer horizon.

We find an increase in inflation persistence in the US from .3 to
.6 measured by the MUE of TVP-AR coefficient between 2000 and 2008
in Figure \ref{fig:MT-EMP_AR1_1}(a), which seems to be mostly driven
by the ZLB effects despite the introduction of forward guidance during
this period. Later on, when more specific monetary policies (e.g.,
QE) were introduced to stimulate the economy following the 2008 financial
crisis, inflation persistence drops from .6 to .4 between 2010 and
2013, in line with the result of \citet{bernanke2020new}. With that
said, we observe in Figure \ref{fig:MT-EMP_AR1_1}(a) an upward trend
in inflation persistence in the US over a longer horizon between 2000
and 2022, which could be caused by a decline in the credibility of
the Fed as claimed by \citet*{cole2023living}. Overall, we conclude
that while the Fed's measures have been effective in controlling inflation
expectations and persistence, more nuanced policies are needed to
handle complications arising from the ZLB effects, heterogeneous beliefs
(\citet*{andrade2019forward}), structural changes (e.g., higher volatility
and shifts in consumer behavior and supply chains) caused by the Covid-19
and other factors.

We also apply our method to study inflation persistence in Canada
and Germany and obtain the following results. First, we find significant
time variation in inflation persistence as measured by the MUE's of
$\rho\left(t\right)$ for both countries since 1980s. Given the magnitude
of the time-variation, our findings show that inflation persistence
is more likely to be driven by changes in the regime of monetary policy
and credibility of central banks, rather than by nominal or real frictions
in the economy. Second, there seems to be a universal upward trend
since 2000 in inflation persistence for all countries under consideration,
echoing the findings of \citet*{cole2023living}, who argue that there
is a decline in credibility in central bank policies. Third, the introduction
of the Euro in 1999 marked a significant shift in inflation persistence
in Germany, with the European Central Bank (ECB) taking over monetary
policy. Inflation rates remained relatively stable but were influenced
by broader Eurozone policies and economic conditions. The early 2000s
saw moderate inflation, which aligned closely with the ECB\textquoteright s
target. Our method captures this regime change by showing drastically
different patterns of inflation persistence before and after 1999,
supporting the conclusion above that inflation persistence is more
likely to be driven by changes in the regime of monetary policy and
the credibility of central banks. 

\subsection{Real Exchange Rate\protect\label{subsec:MT-EMP_Real-Exchange-Rate}}

The second application concerns the real exchange rate. The results
are given in Figure \ref{fig:MT-EMP_AR1_2}. We use US dollars (USD)
as the benchmark currency, and calculate the bilateral real exchange
rate $re_{it}$ of country $i$ at time $t$ as $rex_{it}=nex_{it}\times\frac{CPI_{it}}{CPI_{0t}}$,
where $nex_{it}$ is the nominal exchange rate (USD per domestic currency)
at time $t$, $CPI_{it}$ is the price level of country $i$ at time
$t$, and $CPI_{0t}$ is the price level of the US at time $t$. Therefore,
an increase in $rex_{it}$ represents an appreciation of country $i$'s
currency against USD. We report results for the UK, Sweden, and Switzerland.
The data span for the monthly real exchange rate dataset is Jan 1957
to Aug 2022. Thus, $n=788$ for each country. Similar to the inflation
series, we compute the $n\widehat{h}_{us}$ values based on (\ref{eq:MT-hhat_us-def}).
Then, we compute the MUE's and 90\% CI's for the{\noindent}
\begin{figure}[H]
\captionsetup[subfigure]{position=top,font=scriptsize,singlelinecheck=off,justification=raggedright}\vspace{-1cm}

\begin{centering}
\subfloat[UK Real Exchange Rate, $n\widehat{h}_{us}=$ 823]{\noindent\includegraphics[scale=0.36]{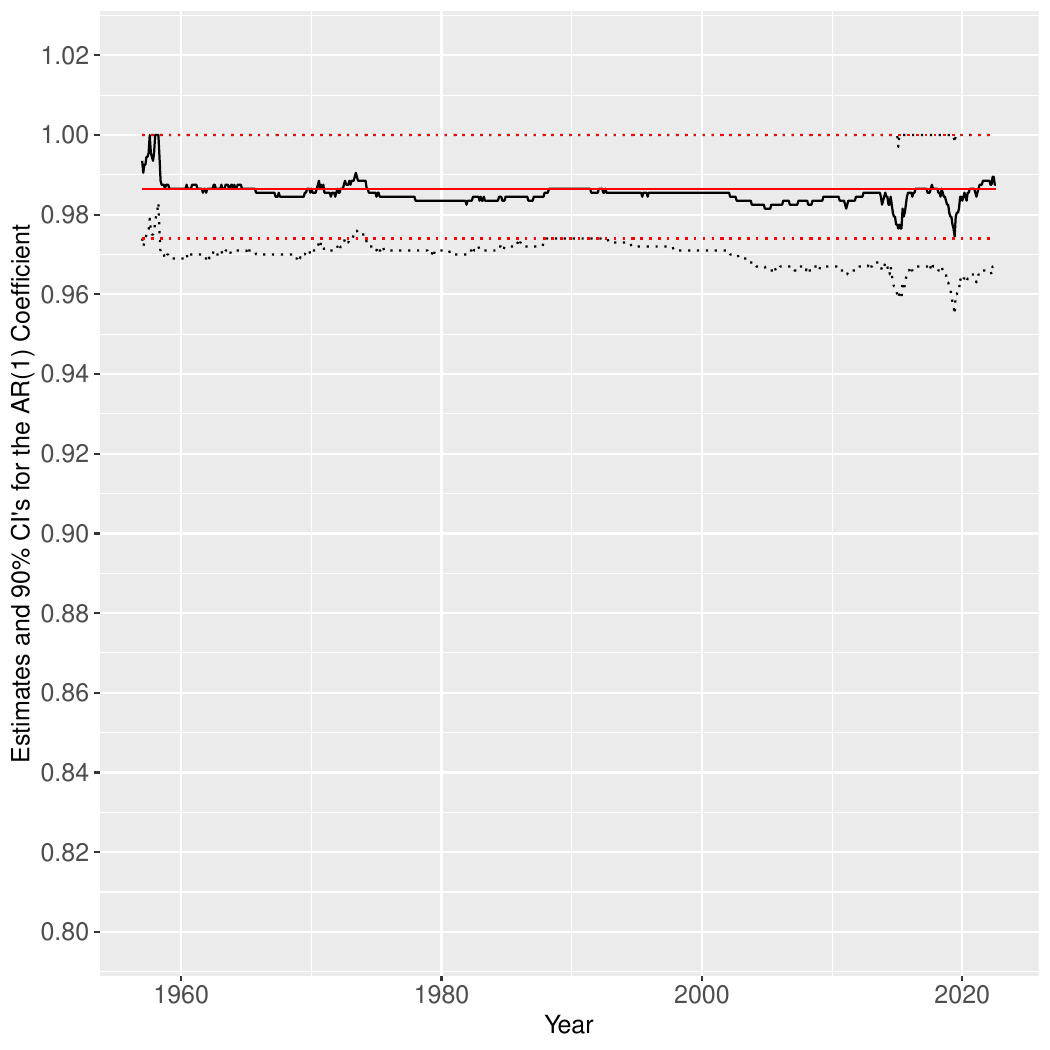}

}\quad{}\subfloat[UK Real Exchange Rate, $1.5n\widehat{h}_{us}=$ 1,234]{\includegraphics[scale=0.36]{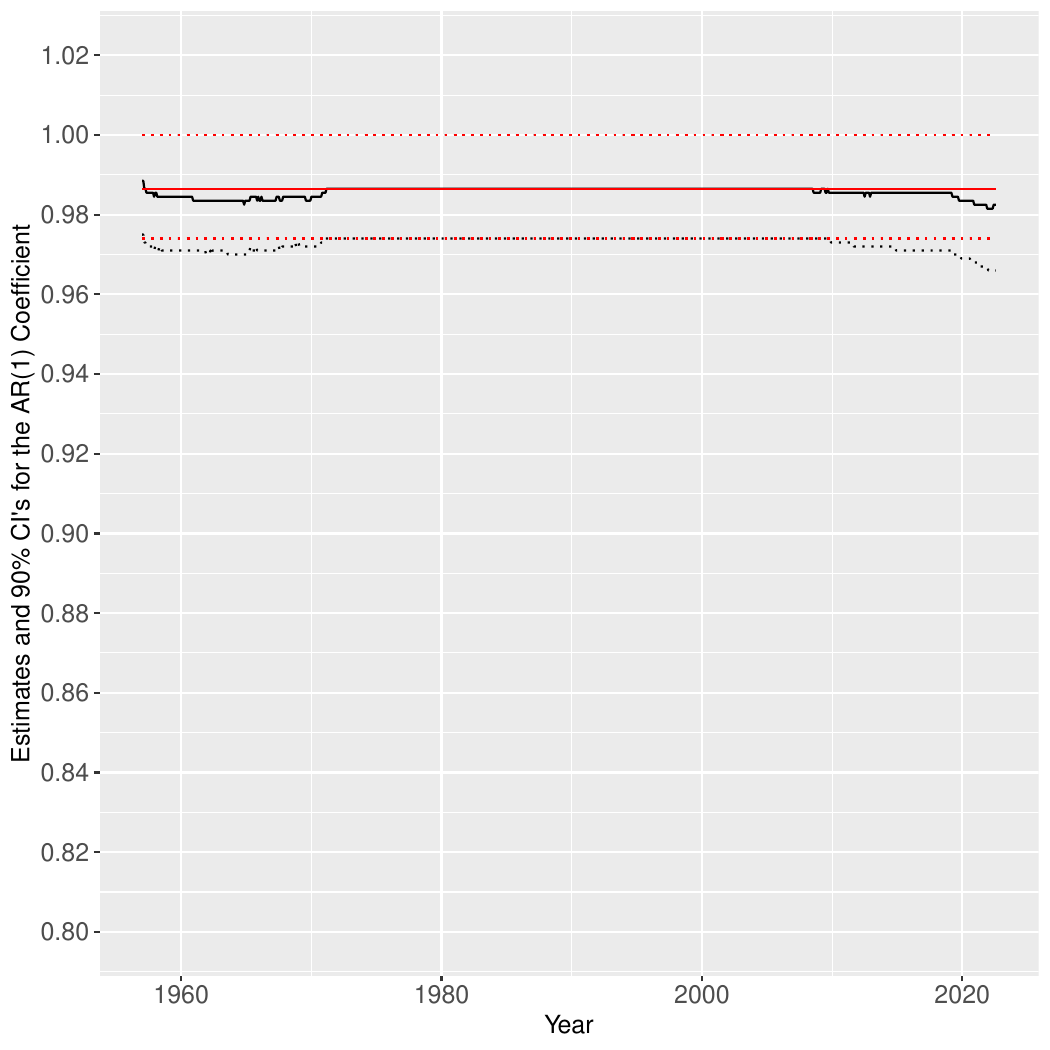}

}
\par\end{centering}
\begin{centering}
\subfloat[Sweden Real Exchange Rate, $n\widehat{h}_{us}=$ 823]{\includegraphics[scale=0.36]{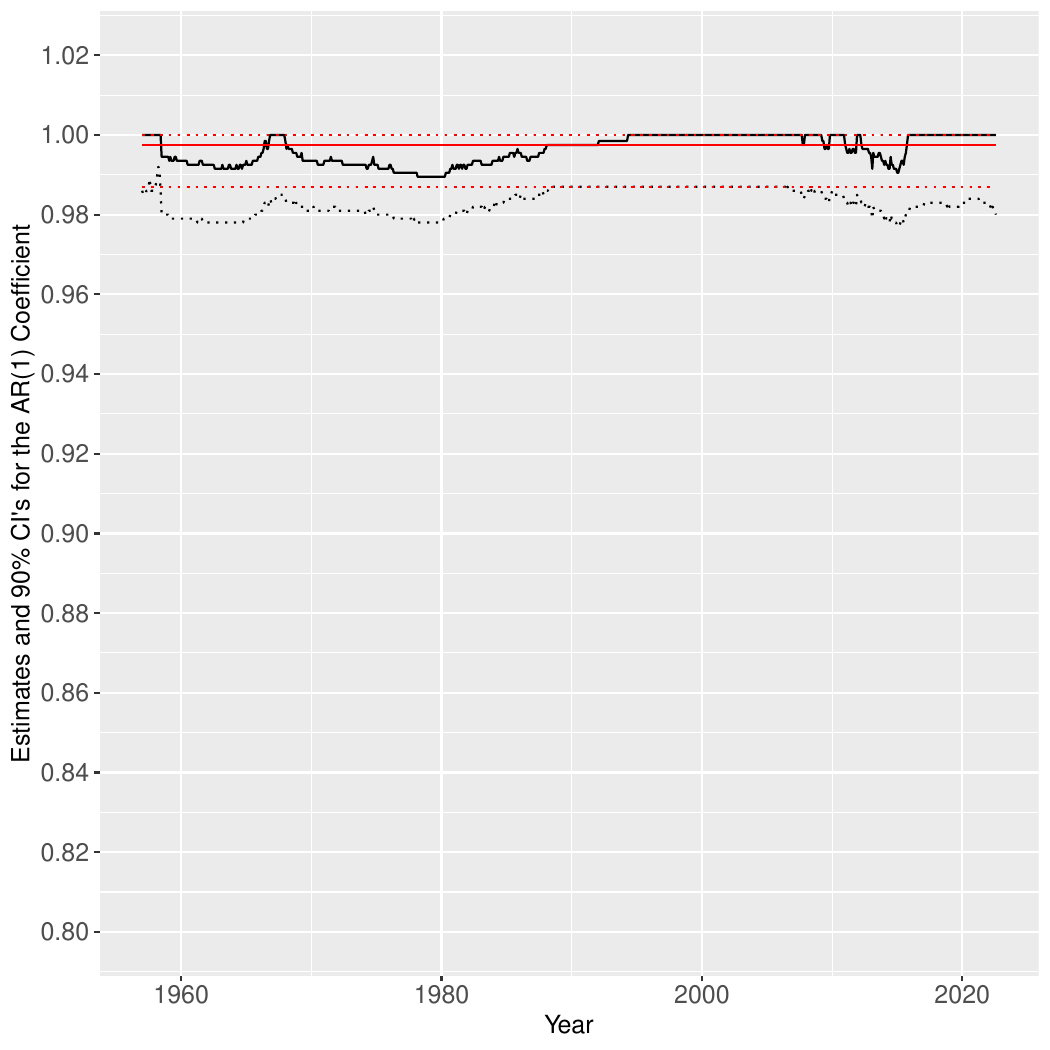}

}\quad{}\subfloat[Sweden Real Exchange Rate, $1.5n\widehat{h}_{us}=$ 1,234]{\includegraphics[scale=0.36]{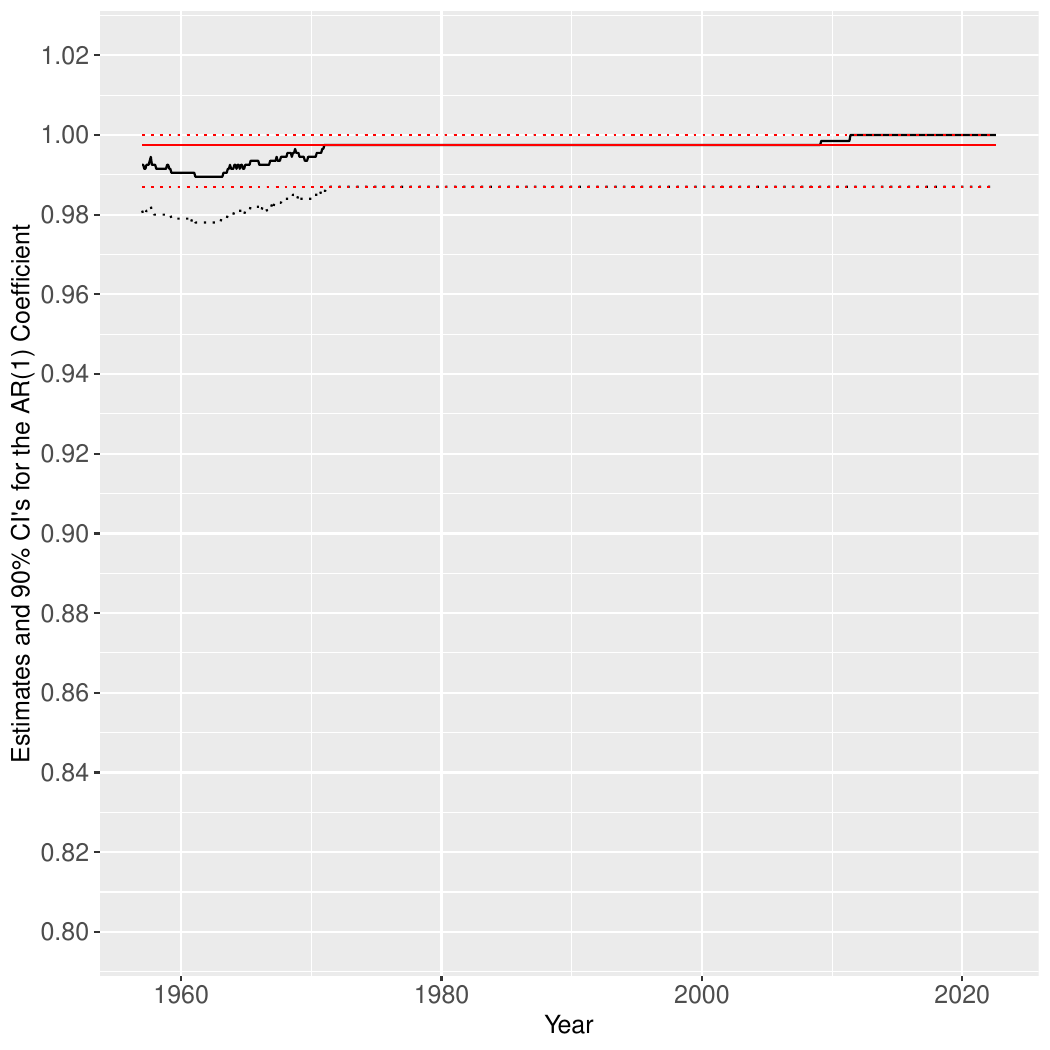}

}
\par\end{centering}
\begin{centering}
\subfloat[Switzerland Real Exchange Rate, $n\widehat{h}_{us}=$ 393]{\includegraphics[scale=0.36]{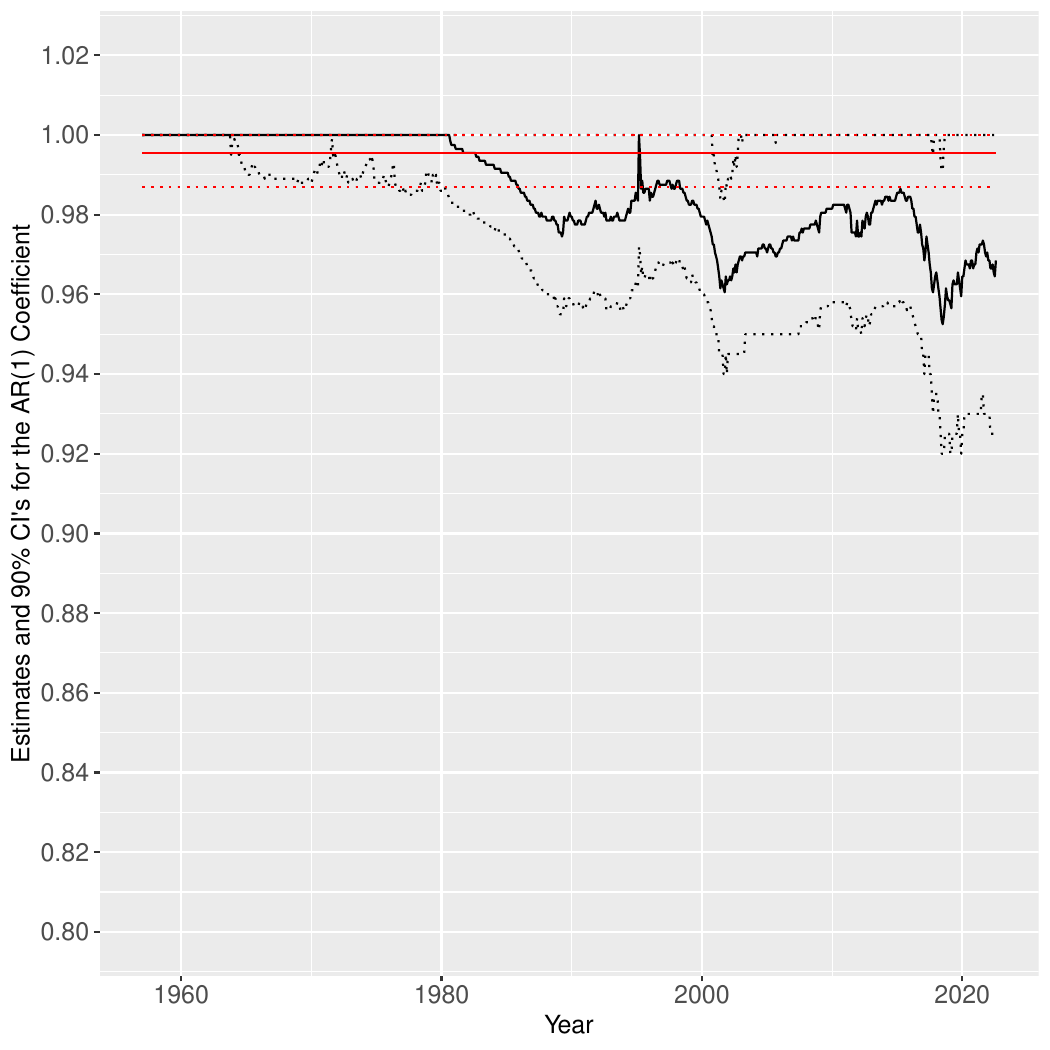}

}\quad{}\subfloat[Switzerland Real Exchange Rate, $1.5n\widehat{h}_{us}=$ 590]{\includegraphics[scale=0.36]{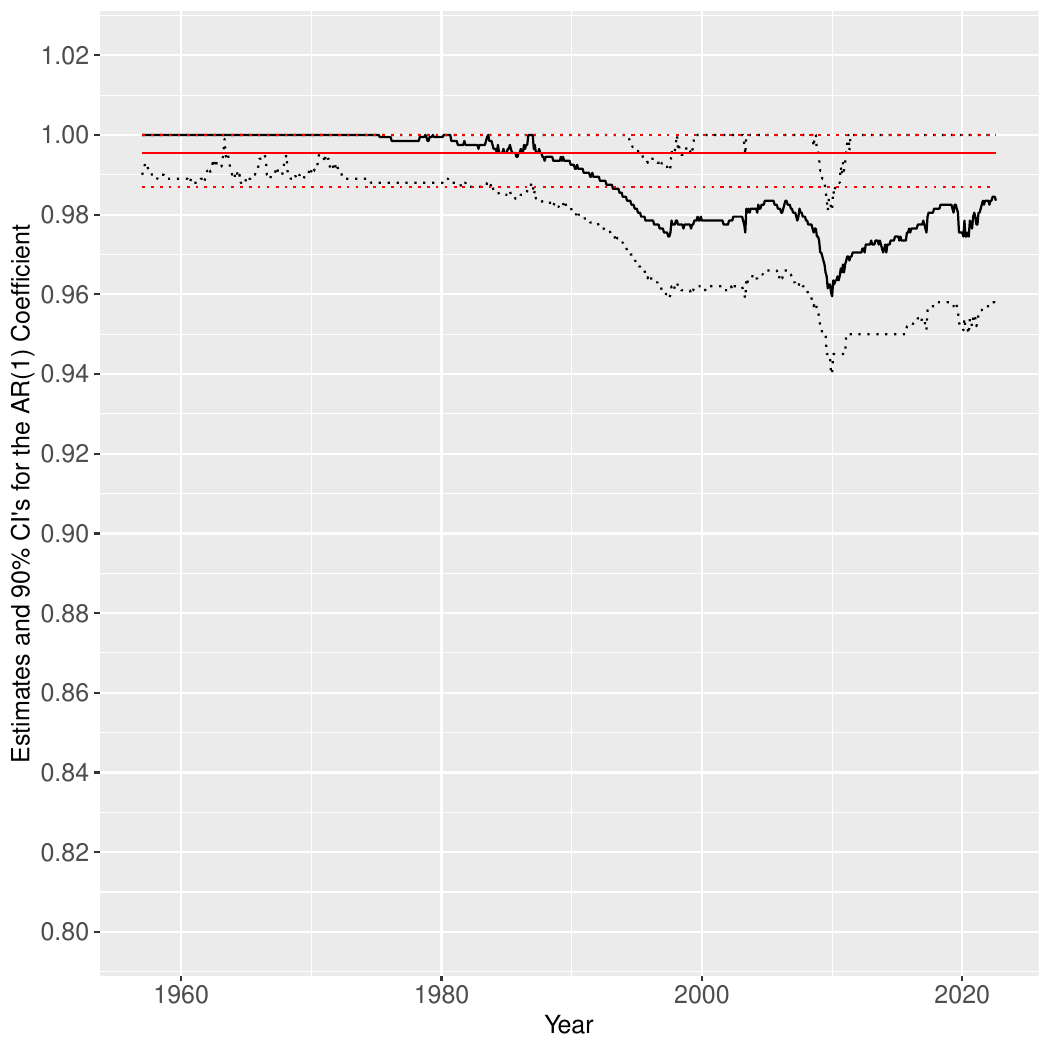}

}
\par\end{centering}
\caption{Estimates and 90\% CI's for the AR(1) Coefficient in TVP-AR(1) Models:
UK, Sweden, and Switzerland Real Exchange Rate\protect\label{fig:MT-EMP_AR1_2}}
\end{figure}
{\noindent}AR coefficient, $\rho\left(t\right)$, at each $t$ and
for each country. For robustness purposes, we multiply the undersmoothed
$n\widehat{h}_{us}$ by 1.5 and report the results in the right column
of each figure. For the real exchange rate series considered in this
section, Ljung-Box tests based on six lags of the residuals from the
AR(1) model fail to reject the null hypothesis of no autocorrelation
at the 5\% significance level, see Table \ref{tab:ACF_Test_AR1}.

It has been known in the literature that real exchange rates in developed
countries tend to be highly persistent, with deviations from the PPP
level taking a long time to disappear (\citet{rogoff1996purchasing,engel2014exchange}).
There are several explanations in the literature for the real exchange
rate persistence, including nominal price rigidities, interest rate
inertia in monetary policy (\citet{benigno2004real}), heterogeneous
dynamics in subcomponents (\citet*{imbs2005ppp}), Balassa--Samuelson
effects (\citet{balassa1964purchasing,samuelson1964theoretical}),
to name a few. Figure \ref{fig:MT-EMP_AR1_2} presents the results
on real exchange rate persistence measured by the MUE's of $\rho\left(t\right)$
for the UK, Sweden, and Switzerland. Across all three countries the
MUE's are close to 1, showing that our method yields near constant
graphs in scenarios where that seems to be suitable. The 90\% CI's
for $\rho\left(t\right)$ are also very short. \citet*{chari2002can}
report estimates from a constant parameter autoregressive process
to lie between .76 and .87 for US bilateral real exchange rates against
nine developed European countries using data between 1972 and 1994.
They developed a general equilibrium model where the firms can only
set price once per year, which implies a very high level of price
stickiness in the economy. Yet, their model cannot generate the level
of persistence observed in the data. Allowing for a time-varying parameter
autoregressive process and using data from a longer time horizon,
our MUE's of $\rho\left(t\right)$ for the three currencies are even
higher at close to one, supporting their claim that nominal price
rigidities are not enough to explain the high real exchange rate persistence.
Practitioners would need to seek other possible explanations such
as the Balassa--Samuelson effects or heterogeneous dynamics in subcomponents. 

While high real exchange rate persistence seems prevalent, there is
heterogeneity in the pattern of persistence across the countries we
consider. In particular, the MUE's of $\rho\left(t\right)$ for Switzerland
demonstrate a downward trend since early 1980s in Figure \ref{fig:MT-EMP_AR1_2}(e).
The Swiss National Bank (SNB) has adopted several significant monetary
policy changes since the 1990s, including the shift to a three-fold
target (price stability, 3-year inflation forecast, and a range for
the 3M Libor, see \citet*{jordan2010ten}) in the late 1990s and the
introduction of negative interest rates in 2015 when it was forced
to abandon a policy of defending the Swiss franc with a peg to the
euro. These policies aim to stabilize price levels and influence interest
rates, which can affect exchange rate dynamics and potentially reduce
persistence by promoting quicker adjustments to shocks. In Figure
\ref{fig:MT-EMP_AR1_2}(e) we observe a sharp decline in the MUE of
AR(1) coefficient between 1995 and 2002 and between 2015 and 2019,
consistent with the timing of SNB's monetary policy changes. Meanwhile,
the European debt crisis and subsequent economic turmoil in the Eurozone
led to significant safe-haven flows into the Swiss franc, prompting
the SNB to implement a cap on the franc\textquoteright s value against
the euro in 2011. This cap was removed in 2015. Such a cap on the
franc's value would increase its real exchange rate persistence since
the price of the franc could have fluctuated above the cap for the
duration of the policy. We also find the real exchange rate persistence
to be slightly increasing between 2011 and 2015 in Figure \ref{fig:MT-EMP_AR1_2}(e).
As shown by these results, our method can capture the major events
and policy changes that affect real exchange rate persistence reasonably
accurately. 

\section{Asymptotics\protect\label{sec:MT-Asymptotics}}

This section establishes the correct uniform asymptotic size and asymptotic
similarity of the confidence interval $CI_{n,\tau}$ for $\rho(\tau)$,
the median-unbiased property of $\widetilde{\rho}_{n\tau}$, and some
asymptotic properties of the data-dependent bandwidth parameter $\widehat{h}$.
To prove these results, this section provides asymptotic results for
the LS estimator $\widehat{\rho}_{n\tau}$ and t-statistic $T_{n}\left(\rho_{0,n}\right)$
under drifting sequences of parameter values. All proofs of the results
stated below are given in Section \ref{sec:Theory} of the Supplemental
Material.

\subsection{\protect\label{sec:MT-Parameter=000020Space}Parameter Space}

Let $I_{a,r}:=\left[a-r,a+r\right]$ for $a\in\R$ and $r>0$. Let
$\lfloor x\rfloor$ denote the integer part of $x$.

We impose the following structure on the $\rho$ function: for some
$\varepsilon_{2},\varepsilon_{3}>0$, 
\begin{equation}
\rho\left(s\right)=1-\kappa\left(s\right)/b\label{eq:MT-structure-of-rho}
\end{equation}
for $s\in I_{\tau,\varepsilon_{2}}$ and some $b\in\left[\varepsilon_{3},\infty\right]$,
where $\kappa\left(\cd\right)$ is a nonnegative twice continuously
differentiable function on $I_{\tau,\varepsilon_{2}}$.

The parameter space for $\left(\rho,\mu,\sigma^{2},\kappa,b,F\right)$
is given by

$\Lambda_{n}=\{\lambda=\left(\rho,\mu,\sigma^{2},\kappa,b,F\right)$:

(i) $\rho,\ \mu,$ and $\sigma^{2}$ are Lipschitz functions from
$\left[0,1\right]$ to $\left[-1+\varepsilon_{1},1\right],\ \left[C_{2,L},C_{2,U}\right]$,
and $\left[C_{3,L},C_{3,U}\right]$, respectively, with Lipschitz
constants bounded by $L_{1}$, $L_{2}$, and $L_{3}$, respectively;

(ii) $\rho\left(s\right)=1-\kappa\left(s\right)/b$ for $s\in I_{\tau,\varepsilon_{2}}$
and $b\in[\varepsilon_{3},\infty]$, where $\kappa\left(\cd\right)$
is a twice continuously differentiable function from $I_{\tau,\varepsilon_{2}}$
to $\left[0,C_{4}\right]$ with Lipschitz constant bounded by $L_{4}$
and $\kappa(\tau)\geq\varepsilon_{4}$;

(iii) $\mu\left(s\right)=C_{\mu1}\exp\left\{ -\eta\left(s\right)/b\right\} +C_{\mu2}$
for $s\in I_{\tau,\varepsilon_{2}}$ and $b\in[\varepsilon_{3},\infty]$,
where $\eta\left(\cd\right)$ is a nonnegative Lipschitz function
with Lipschitz constant bounded by $L_{5}$;

(iv) $\left\{ U_{t}:t=0,1,...,n\right\} $ is a stationary martingale
difference sequence under $F$ with $\E_{F}\left(\rest{U_{t}}\mathscr{G}_{t-1}\right)=0\ \text{a.s.}$,
$\E_{F}\left(\rest{U_{t}^{2}}\mathscr{G}_{t-1}\right)=1\ \text{a.s.}$,
and $\E_{F}\left[\rest{U_{t}^{4}}\mathscr{G}_{t-1}\right]<M\ \text{a.s.},$
where $\mathscr{G}_{t}$ is some non-decreasing sequence of $\sigma$-fields
for which $\sigma\left(U_{0},...,U_{t}\right)\subseteq\mathscr{G}_{t}$
and $\sigma\left(Y_{0}^{*}\right)\in\mathscr{G}_{t}$ for $t=1,...,n$;

(v) $\E_{F}\left(Y_{0}^{*}\right)^{2}\leq C_{5}n$;

for some constants $0<\varepsilon_{1}<2$, $\varepsilon_{2},\varepsilon_{3},\varepsilon_{4}>0$,
$-\infty<C_{j,L}\leq C_{j,U}<\infty$ for $j=2,3$, $C_{3,L}>0$,
$0\leq C_{4}<\infty$, $0\leq C_{5}<\infty$, $L_{j}<\infty\ \any j\leq5$,
$C_{\mu1},C_{\mu2}\in\left(-\infty,\infty\right),$ and $M\in\left(0,\infty\right)\}$.

$\ $

Note that the dependence on $n$ of $\Lambda_{n}$ is only through
part (v), which concerns the initial condition $Y_{0}^{*}$.\footnote{The parameter space $\Lambda_{n}$ could be made independent of $n$
by specifying that $\E\left(Y_{0}^{*}\right)^{2}\leq C_{5}$. But,
allowing the bound on $\E\left(Y_{0}^{*}\right)^{2}$ to be $C_{5}n$,
allows the parameter space to expand with $n$ and is in accord with
assumptions in the literature for AR models with a possible unit root
or near unit root.} Also, $\mu\left(s\right)$ is defined in such a way that it is allowed
to vary smoothly on $\R$. It is used in the proof of Lemma \ref{lem:MT-Max=000020Intertemporal=000020Difference}(d),
which is further used to bound the absolute difference between $\mu\left(t\right)$
and $\mu\left(s\right)$ for $t,s\in\left[T_{1},T_{2}\right]$ in
obtaining the asymptotic distribution of the t-statistic $T_{n}\left(\rho_{0,n}\right)$.

$\ $

To establish the correct asymptotic size of the CI for $\rho(\tau)$,
we need to consider sequences of parameters $\left\{ \lambda_{n}=\left(\rho_{n},\mu_{n},\sigma_{n}^{2},\kappa_{n},b_{n},F_{n}\right)\right\} _{n\geq1}$
in $\Lambda_{n}.$ The parameter space $\Lambda_{n}$ is defined to
allow the sequence $\left\{ \rho_{n}(\tau)\right\} _{n\geq1}$ to
equal one for all $n\geq1,$ converge to one at any rate, or be bounded
away from one. More specifically, by the definition of $\rho_{n}(\cd)$,
\begin{equation}
\rho_{n}\left(\tau\right)=1-\kappa_{n}\left(\tau\right)/b_{n}\ \text{for some }b_{n}\in\left[0,\infty\right].\label{eq:MT-rho_n(tau)}
\end{equation}
If $b_{n}\gto\infty$, then $\rho_{n}\left(\tau\right)$ is local
to one. If $\limsup_{n\gto\infty}b_{n}<\infty$, then $\rho_{n}\left(\tau\right)$
is bounded away from one. Furthermore, $\Lambda_{n}$ is defined so
that, for points $s\in\left[0,1\right]$ other than $\tau,$ $\rho_{n}\left(s\right)$
can converge to one at any rate or be bounded away from one, and the
distance-from-unity behavior may differ between different $s$ outside
of $I_{\tau,\varepsilon_{2}}.$

\subsection{Correct Asymptotic Size of the Confidence Interval for \texorpdfstring{$\boldsymbol{\rho(\tau)}$}{rho(tau)}}

The bandwidth $h$ is assumed to satisfy the following assumptions.
\begin{assumption}[\textbf{Bandwidth $\boldsymbol{h}$}]
 \label{assu:MT-h} $h\gto0$ and $nh\gto\infty$ as $n\gto\infty$. 
\end{assumption}
\begin{assumption}[\textbf{Order of $\boldsymbol{h}$}]
 \label{assu:MT-stationary-nh^5=000020->=0000200} $nh^{5}\rightarrow0$.\footnote{Assumption \ref{assu:MT-stationary-nh^5=000020->=0000200} is used
in the proofs of Lemmas \ref{lem:MT-Order=000020of=000020Y_T0}(b),
\ref{lem:MT-Order=000020of=000020Y_T0}(c), \ref{lem:MT-asympdist_dist_stationary-denom},
and \ref{lem:MT-asympdist_dist_stationary-numerator=000020}(a) below.} 
\end{assumption}
Let $P_{\lambda}\left(\cdot\right)$ denote probability under $\lambda\in\Lambda_{n}.$
The correct asymptotic size and asymptotic similarity of the CI $CI_{n,\tau}$
are established in the following theorem. The theorem's proof relies
on the ``drifting pointwise'' asymptotic results given in Sections
\ref{sec:MT-Preliminaries}-\ref{sec:MT-Stationary=000020Case} below
and the generic asymptotic size results in \citet*{andrews2020generic}.
\begin{thm}[\textbf{Correct Asymptotic Size of $\boldsymbol{CI_{\boldsymbol{n,}\tau}}$}]
\label{thm:MT-Cor=000020Asy=000020Size}Under Assumptions \ref{assu:MT-h}
and \ref{assu:MT-stationary-nh^5=000020->=0000200}, 
\[
\liminf\limits_{n\rightarrow\infty}\inf_{\lambda\in\Lambda_{n}}P_{\lambda}\left(\rho\left(\tau\right)\in CI_{n,\tau}\right)=\limsup\limits_{n\rightarrow\infty}\sup_{\lambda\in\Lambda_{n}}P_{\lambda}\left(\rho\left(\tau\right)\in CI_{n,\tau}\right)=1-\alpha.
\]
\end{thm}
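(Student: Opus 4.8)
The plan is to invoke the generic asymptotic-size results of \citet{andrews2020generic}, which (once their regularity conditions are verified) reduce the uniform statement to a claim about drifting parameter sequences: it suffices to show that for every sequence $\{\lambda_n\}_{n\geq1}$ with $\lambda_n\in\Lambda_n$ one has $P_{\lambda_n}\left(\rho_n(\tau)\in CI_{n,\tau}\right)\gto1-\alpha$. Establishing this same limit along every such sequence forces both the $\liminf\inf$ and the $\limsup\sup$ in the theorem to equal $1-\alpha$, which is precisely correct asymptotic size together with asymptotic similarity. By \eqref{eq:MT-CI=000020defn}, the coverage event is $\{c_{\psi_{nh,\rho_n(\tau)}}(\alpha/2)\leq T_n(\rho_n(\tau))\leq c_{\psi_{nh,\rho_n(\tau)}}(1-\alpha/2)\}$, so the entire argument comes down to the joint limiting behavior of the t-statistic $T_n(\rho_n(\tau))$ and of the data-dependent critical-value endpoints along each sequence.

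First I would reduce to the case where the localization parameter converges. Writing $\psi_n:=\psi_{nh,\rho_n(\tau)}=-nh\ln(\rho_n(\tau))\in[0,\infty]$ (with $\psi_n=\infty$ for $\rho_n(\tau)\leq0$), I note that every subsequence admits a further subsequence along which $\psi_n\gto\psi$ for some $\psi\in[0,\infty]$ by compactness of $[0,\infty]$, so it suffices to establish the limit $1-\alpha$ along each such convergent subsequence and let the full sequence inherit it. The value $\psi\in[0,\infty)$ corresponds to unit-root/local-to-unity behavior at $\tau$ and $\psi=\infty$ to stationary-like behavior. Along the chosen subsequence, and treating $\rho_{0,n}=\rho_n(\tau)$ as the null value, I would apply the drifting-sequence distributional results of Sections \ref{sec:MT-Preliminaries}--\ref{sec:MT-Stationary=000020Case} (valid under Assumptions \ref{assu:MT-h} and \ref{assu:MT-stationary-nh^5=000020->=0000200}) to obtain $T_n(\rho_n(\tau))\dto J_\psi$ as in \eqref{eq:MT-Asy=000020distn=000020T=000020stat}, where $J_\psi$ is given by \eqref{eq:MT-Distn=000020J_psi} for $\psi<\infty$ and $J_\infty\sim N(0,1)$. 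The condition $nh^5\gto0$ is exactly what renders the within-window bias arising from the time variation of $\rho(\cdot)$ asymptotically negligible, so that the limit is the clean centered $J_\psi$ rather than a noncentral distribution.

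Next I would control the nonrandom critical-value endpoints. Since $\psi_n\gto\psi$ and, for each fixed $\beta\in(0,1)$, the map $\psi\mapsto c_\psi(\beta)$ is continuous on $[0,\infty]$ --- continuity at $\psi=\infty$ holding because $J_\psi$ converges in distribution to $N(0,1)$ as $\psi\to\infty$, so that $c_{\psi_n}(\beta)$ approaches the relevant $N(0,1)$ quantile --- one has $c_{\psi_n}(\alpha/2)\gto c_\psi(\alpha/2)$ and $c_{\psi_n}(1-\alpha/2)\gto c_\psi(1-\alpha/2)$. Combining $T_n(\rho_n(\tau))\dto J_\psi$ with the convergence of these deterministic endpoints, and using that the distribution function of $J_\psi$ is continuous at its own quantiles $c_\psi(\alpha/2)$ and $c_\psi(1-\alpha/2)$, I obtain $P_{\lambda_n}\left(\rho_n(\tau)\in CI_{n,\tau}\right)\gto(1-\alpha/2)-(\alpha/2)=1-\alpha$, which closes the reduction.

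The hard part will be the distributional step rather than the coverage bookkeeping: showing that the single limit $J_\psi$ holds along every admissible drifting sequence in $\Lambda_n$ simultaneously. This requires handling, in one stroke, the endogenous initial condition (the $I_\psi^*$ term carrying the extra $Z_1$ contribution for $\psi>0$), the possibly different persistence of $\rho_n(\cdot)$ outside $I_{\tau,\varepsilon_2}$, the time-varying intercept and variance, and the bias control supplied by $nh^5\gto0$; these are exactly the ingredients assembled in Sections \ref{sec:MT-Preliminaries}--\ref{sec:MT-Stationary=000020Case}. A secondary subtlety is the continuity of $\psi\mapsto c_\psi(\beta)$ at $\psi=\infty$ and the matching of the local-to-unity and stationary limits when $\psi_n\gto\infty$ at varying rates, which must be argued rather than taken for granted.
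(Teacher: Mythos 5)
Your architecture is the paper's own: reduce the uniform claim to drifting subsequences via the generic asymptotic-size results of \citet*{andrews2020generic}, apply the local-to-unity and stationary limit theorems to obtain $T_{n}\left(\rho_{n}\left(\tau\right)\right)\dto J_{\psi}$, and combine with continuity of $\psi\mapsto c_{\psi}\left(\beta\right)$ on $\left[0,\infty\right]$. But there are two concrete gaps. First, your subsequence extraction is too thin. You compactify only the scalar $\psi_{n}$ and then assert that the drifting-sequence distributional results are ``valid under Assumptions \ref{assu:MT-h} and \ref{assu:MT-stationary-nh^5=000020->=0000200}.'' They are not: Theorems \ref{thm:MT-asym_rho_hat_local=000020to=000020unity=000020case} and \ref{thm:MT-asympdist_dist_rho^hat-stationary} additionally require Assumption \ref{assu:MT-kappa_n_sigma_n=000020mu_n=000020convergence} (uniform convergence of $\kappa_{n},\mu_{n},\sigma_{n}^{2}$ on $I_{\tau,\varepsilon_{2}}$) and, when $r_{0}=0$, Assumption \ref{assu:MT-bn=000020div=000020nh=000020converges} (convergence of $b_{n}/(nh^{1/2})$, which feeds the initial-condition bound of Lemma \ref{lem:MT-Order=000020of=000020Y_T0}(b)), and they are indexed by $nh/b_{n}\rightarrow r_{0}$, not by $\psi_{n}$. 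So along your subsequence you must also extract convergence of $nh/b_{n}$ and $b_{n}/(nh^{1/2})$ in $\left[0,\infty\right]$ and of the function triple $\left(\kappa_{n},\mu_{n},\sigma_{n}^{2}\right)$ under the sup norm, the latter using compactness of the relevant Lipschitz ball (Arzel\`a--Ascoli). This is exactly what the paper's application of ACG accomplishes by taking $h_{n}^{*}\left(\lambda\right)$ to have the four components in (\ref{eq:h^asterisk=000020defns}), with $\left(\kappa,\mu,\sigma^{2}\right)$ ranging over a compact metric space $\mathscr{T}$.

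Second, even after the richer extraction, you need the identification $\lim_{n}\psi_{n}=r_{0}\kappa_{0}\left(\tau\right)$, i.e., that the index at which the critical values are evaluated converges to the \emph{same} $\psi$ that indexes the limit law of the t-statistic. You flag this matching only at $\psi=\infty$, but it is nontrivial at finite $\psi$ as well: $\psi_{n}:=-nh\ln\left(\rho_{n}\left(\tau\right)\right)$ while the theorems deliver $J_{r_{0}\kappa_{0}\left(\tau\right)}$, and reconciling the two requires $-b_{n}\ln\left(1-\kappa_{n}\left(\tau\right)/b_{n}\right)=\kappa_{n}\left(\tau\right)\left(1+o\left(1\right)\right)$ when $b_{n}\rightarrow\infty$, which is the content of Lemma \ref{lem:A.1=000020Kappa_n*=000020go=000020to=000020Kappa_n} and the three-case computation in (\ref{eq:Thm7.1-case1})--(\ref{eq:Thm7.1-case2-2}) (including the case where $\rho_{n}\left(\tau\right)\leq0$ infinitely often, where $\psi_{n}=\infty$ by definition). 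Relatedly, the continuity of $\psi\mapsto c_{\psi}\left(\beta\right)$ on $\left[0,\infty\right]$, which you support with a heuristic, is established in the paper only by adapting the proof of Lemma A.7 of ACG (with $Z$ replaced by $0$). None of these gaps is fatal --- each is repaired by precisely the machinery the paper deploys --- but as written the proposal elides the steps where the actual work of the theorem lies.
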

The median-unbiased interval estimator $\widetilde{\rho}_{n\tau}$
has the following median-unbiasedness property. This result is a corollary
to one-sided versions of Theorem \ref{thm:MT-Cor=000020Asy=000020Size}.\footnote{Corollary \ref{cor:MT-Med-Unbiased} holds because $CI_{n,\tau}^{up}\left(.5\right)$
and $CI_{n,\tau}^{low}\left(.5\right)$ both have coverage probabilities
of $1/2$ or greater by the proof of Theorem \ref{thm:MT-Cor=000020Asy=000020Size}
applied to these one-sided CIs and $\left(\widetilde{\rho}_{n\tau}\geq\rho\left(\tau\right)\right)$$\supset$
$CI_{n,\tau}^{up}\left(.5\right)$ and $\left(\widetilde{\rho}_{n\tau}\leq\rho\left(\tau\right)\right)\supset CI_{n,\tau}^{low}\left(.5\right).$} 
\begin{cor}[\textbf{Asymptotic Median-Unbiasedness of $\boldsymbol{\boldsymbol{\widetilde{\rho}}_{\boldsymbol{n\tau}}}$}]
\label{cor:MT-Med-Unbiased}Under Assumptions \ref{assu:MT-h} and
\ref{assu:MT-stationary-nh^5=000020->=0000200}, 
\begin{align*}
\liminf\limits_{n\rightarrow\infty}\inf_{\lambda\in\Lambda_{n}}P_{\lambda}\left(\widetilde{\rho}_{n\tau,up}\geq\rho\left(\tau\right)\right) & \geq1/2\ \textup{and}\\
\liminf\limits_{n\rightarrow\infty}\inf_{\lambda\in\Lambda_{n}}P_{\lambda}\left(\widetilde{\rho}_{n\tau,low}\leq\rho\left(\tau\right)\right) & \geq1/2.
\end{align*}
\end{cor}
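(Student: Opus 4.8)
The plan is to derive the Corollary from one-sided analogs of Theorem~\ref{thm:MT-Cor=000020Asy=000020Size}, reducing the two median-unbiasedness statements to uniform lower bounds on the coverage probabilities of the level-$.5$ one-sided CIs $CI_{n,\tau}^{up}(.5)$ and $CI_{n,\tau}^{low}(.5)$ defined in \eqref{eq:MT-1-sided=000020CIs}. First I would record the two deterministic set inclusions
\[
\left\{\rho(\tau)\in CI_{n,\tau}^{up}(.5)\right\}\subseteq\left\{\widetilde{\rho}_{n\tau,up}\geq\rho(\tau)\right\}\ \text{and}\ \left\{\rho(\tau)\in CI_{n,\tau}^{low}(.5)\right\}\subseteq\left\{\widetilde{\rho}_{n\tau,low}\leq\rho(\tau)\right\}.
\]
These are immediate from \eqref{eq:MT-MUE_Def}: since $\widetilde{\rho}_{n\tau,up}$ is the maximum of $CI_{n,\tau}^{up}(.5)$, membership of $\rho(\tau)$ in that set forces $\widetilde{\rho}_{n\tau,up}\geq\rho(\tau)$, and symmetrically $\widetilde{\rho}_{n\tau,low}$ is the minimum of $CI_{n,\tau}^{low}(.5)$. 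Applying $P_\lambda$, then $\inf_{\lambda\in\Lambda_n}$ and $\liminf_n$, it therefore suffices to prove
\[
\liminf_{n\to\infty}\inf_{\lambda\in\Lambda_n}P_\lambda\left(\rho(\tau)\in CI_{n,\tau}^{up}(.5)\right)\geq 1/2\ \text{and}\ \liminf_{n\to\infty}\inf_{\lambda\in\Lambda_n}P_\lambda\left(\rho(\tau)\in CI_{n,\tau}^{low}(.5)\right)\geq 1/2.
\]

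For these two inequalities I would rerun the argument behind Theorem~\ref{thm:MT-Cor=000020Asy=000020Size}, now applied to the one-sided acceptance regions $\{T_n(\rho_0)\geq c_{\psi_{nh,\rho_0}}(.5)\}$ and $\{T_n(\rho_0)\leq c_{\psi_{nh,\rho_0}}(.5)\}$ in place of the two-sided region. The generic uniform-size machinery of \citet*{andrews2020generic} reduces each uniform statement to the analysis of a drifting sequence $\{\lambda_n\}$ that achieves the relevant $\liminf$/$\limsup$ along a subsequence. Along any such sequence, the drifting-pointwise asymptotics established in Sections~\ref{sec:MT-Preliminaries}--\ref{sec:MT-Stationary=000020Case} give, under the true null value $\rho_n(\tau)$, the convergence $T_n(\rho_n(\tau))\dto J_\psi$ as in \eqref{eq:MT-Asy=000020distn=000020T=000020stat} for the limiting index $\psi\in[0,\infty]$, together with the convergence of critical values $c_{\psi_{nh,\rho_n(\tau)}}(.5)\gto c_\psi(.5)$. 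Hence the limiting non-coverage probability of $CI_{n,\tau}^{up}(.5)$ along the worst-case sequence is $P(J_\psi<c_\psi(.5))$, and that of $CI_{n,\tau}^{low}(.5)$ is $P(J_\psi>c_\psi(.5))$.

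The point that closes the argument is that $c_\psi(.5)$ is by definition the $.5$-quantile (median) of $J_\psi$, so $P(J_\psi<c_\psi(.5))\leq 1/2$ and $P(J_\psi>c_\psi(.5))\leq 1/2$ for every $\psi\in[0,\infty]$; taking the supremum over the admissible limit indices $\psi$ preserves these bounds, which yields exactly the two displayed coverage inequalities and hence the Corollary. The hard part is not in this Corollary itself, since the heavy asymptotic theory is inherited from the proof of Theorem~\ref{thm:MT-Cor=000020Asy=000020Size}; rather, the steps to be careful about are (i) verifying that the regularity conditions of \citet*{andrews2020generic} hold for the one-sided statistics exactly as for the two-sided ones, in particular the joint convergence of $T_n(\rho_n(\tau))$ and of the data-dependent index $\psi_{nh,\rho_n(\tau)}$, and the continuity of $\psi\mapsto c_\psi(.5)$ that underlies the critical-value convergence, and (ii) noting that the distribution of $J_\psi$ is continuous at its median, so that $P(J_\psi<c_\psi(.5))$ and $P(J_\psi>c_\psi(.5))$ in fact equal $1/2$; only the weak inequality $\geq 1/2$ asserted by the Corollary is needed, and that follows at once from the quantile bounds.
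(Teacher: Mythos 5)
Your proposal is correct and follows essentially the same route as the paper, whose proof (given in a footnote to the Corollary) consists precisely of the two set inclusions $\left\{\rho(\tau)\in CI_{n,\tau}^{up}(.5)\right\}\subseteq\left\{\widetilde{\rho}_{n\tau,up}\geq\rho(\tau)\right\}$, $\left\{\rho(\tau)\in CI_{n,\tau}^{low}(.5)\right\}\subseteq\left\{\widetilde{\rho}_{n\tau,low}\leq\rho(\tau)\right\}$ plus an application of the proof of Theorem \ref{thm:MT-Cor=000020Asy=000020Size} to the one-sided CIs at level $.5$. Your additional observations---continuity of $\psi\mapsto c_{\psi}(.5)$, continuity of the distribution of $J_{\psi}$ at its median, and that only the weak quantile bounds are needed---are accurate elaborations of what the paper's compressed footnote argument relies on.
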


\subsection{\protect\label{sec:MT-Preliminaries}Preliminaries}

Define 
\begin{equation}
c_{t,j}:=\prod_{k=0}^{j-1}\rho_{t-k}\ \text{for}\ 1\leq j\leq t,\ 1\leq t\leq n,\ \ \text{and}\ c_{t,0}:=1.\label{eq:MT-def_c_t,j}
\end{equation}

We consider the interval $I_{n\tau,nh/2}$ and decompose $Y_{t}^{*}$
into the sum of two parts 
\begin{equation}
Y_{t}^{*}=Y_{t}^{0}+c_{t,t-T_{0}}Y_{T_{0}}^{*},\ \text{for }t\in I_{n\tau,nh/2},\label{eq:MT-decompY*}
\end{equation}
where $Y_{t}^{0}$ is an AR(1) process with the same time-varying
parameters $\rho_{t}$ as $Y_{t}^{*}$, but with a zero initial condition
at $T_{0}:=T_{1}-1$, and $c_{t,t-T_{0}}$ is defined in \eqref{eq:MT-def_c_t,j}.
By recursive substitution in \eqref{eq:MT-tvp-model}, we have 
\begin{equation}
Y_{t}^{0}=\sum_{j=0}^{t-T_{1}}c_{t,j}\sigma_{t-j}U_{t-j}\ \text{and }Y_{t}^{*}=\sum_{j=0}^{t-T_{1}}c_{t,j}\sigma_{t-j}U_{t-j}+c_{t,t-T_{0}}Y_{T_{0}}^{*}\ \text{for }t=T_{1},...,T_{2}.\label{eq:MT-Y_t^0=000020and=000020Y_t^star}
\end{equation}
Then, from \eqref{eq:MT-tvp-model} and \eqref{eq:MT-decompY*}, we
have 
\begin{align}
Y_{t} & =\mu_{t}+\sum_{j=0}^{t-T_{1}}c_{t,j}\sigma_{t-j}U_{t-j}+c_{t,t-T_{0}}Y_{T_{0}}^{*}\ \text{for }t=T_{1},...,T_{2}.\label{eq:MT-YtandYt0}
\end{align}

We consider sequences $\left\{ \lambda_{n}=\left(\rho_{n},\mu_{n},\sigma_{n}^{2},\kappa_{n},b_{n},F_{n}\right)\in\Lambda_{n}\right\} _{n\geq1}$$.$
The estimand of interest is $\rho_{n}\left(\tau\right)$ for $n\geq1.$

Let 
\begin{align}
\rho_{n,t}:=\rho_{n}\left(t/n\right),\  & \mu_{n,t}:=\mu_{n}\left(t/n\right),\ \sigma_{n,t}^{2}:=\sigma_{n}^{2}\left(t/n\right),\nonumber \\
\rho_{n\tau}:=\rho_{n}\left(\tau\right),\  & \mu_{n\tau}:=\mu_{n}\left(\tau\right),\ \text{and }\sigma_{n\tau}^{2}:=\sigma_{n}^{2}\left(\tau\right).\label{eq:MT-rho_n,t=000020defn=0000202nd=000020time}
\end{align}
By part (ii) of the definition of $\Lambda_{n},$ for $t\in I_{n\tau,nh/2}$,
\begin{equation}
\rho_{n,t}:=\rho_{n}\left(t/n\right)=1-\kappa_{n}\left(t/n\right)/b_{n}\label{eq:MT-rho_n}
\end{equation}
for $n$ sufficiently large that $h/2+1/n\leq\varepsilon_{2}.$ For
notational simplicity, we drop the $n$ subscripts on $\rho_{n,t},$
$\mu_{n,t},$ and $\sigma_{n,t}^{2}$ below.

We consider sequences of null hypotheses $H_{0}:\rho_{n\tau}=\rho_{0,n}$
for $n\geq1.$ As noted above, the CI $CI_{n,\tau}$ is defined by
the inversion of tests of such null hypotheses.

$\ $

For some results, we assume the functions $\kappa_{n}\left(\cd\right)$,
$\mu_{n}\left(\cd\right)$, and $\sigma_{n}^{2}\left(\cd\right)$
converge. 
\begin{assumption}[\textbf{Limits of $\boldsymbol{\kappa_{n}\left(\cd\right)}$, $\boldsymbol{\mu_{n}\left(\cd\right)}$,
and $\boldsymbol{\sigma_{n}^{2}\left(\cd\right)}$}]
\textbf{} \label{assu:MT-kappa_n_sigma_n=000020mu_n=000020convergence}
$\kappa_{n}\left(\cd\right)$, $\mu_{n}\left(\cd\right),$ and $\sigma_{n}^{2}\left(\cd\right)$
restricted to $I_{\tau,\varepsilon_{2}}$ converge uniformly to some
functions $\kappa_{0}\left(\cd\right)$, $\mu_{0}\left(\cd\right),$
and $\sigma_{0}^{2}\left(\cd\right)$ on $I_{\tau,\varepsilon_{2}}$,
respectively. 
\end{assumption}
$\ $

For some results, we assume that $\left\{ b_{n}/\left(nh^{1/2}\right)\right\} _{n\geq1}$
converges. 
\begin{assumption}[\textbf{Convergence of $\boldsymbol{b_{n}/\left(nh^{1/2}\right)}$}]
\textbf{} \label{assu:MT-bn=000020div=000020nh=000020converges}
$b_{n}/\left(nh^{1/2}\right)\to w_{0}$ for some $w_{0}\in\left[0,\infty\right]$. 
\end{assumption}
$\ $

Assumptions \ref{assu:MT-kappa_n_sigma_n=000020mu_n=000020convergence}
and \ref{assu:MT-bn=000020div=000020nh=000020converges} are innocuous
because, to establish the uniform inference results in Theorem \ref{thm:MT-Cor=000020Asy=000020Size},
we show that it suffices to consider sequences $\left\{ \l_{n}\in\L_{n}\right\} _{n\geq1}$
that satisfy these assumptions. Note that Theorem \ref{thm:MT-Cor=000020Asy=000020Size}
and Corollary \ref{cor:MT-Med-Unbiased} do not impose Assumption
\ref{assu:MT-kappa_n_sigma_n=000020mu_n=000020convergence} or \ref{assu:MT-bn=000020div=000020nh=000020converges}.

The following lemma bounds the maximum intertemporal difference between
the TVP $\rho_{t}$ and $\rho_{n\tau}$ for $t\in\left[T_{1},T_{2}\right]$. 
\begin{lem}[{{\textbf{Maximum Intertemporal Differences on $\boldsymbol{\left[T_{1},T_{2}\right]}$}}}]
\label{lem:MT-Max=000020Intertemporal=000020Difference} Under Assumptions
\ref{assu:MT-h} and \ref{assu:MT-kappa_n_sigma_n=000020mu_n=000020convergence},
for a sequence $\left\{ \lambda_{n}=\left(\rho_{n},\mu_{n},\sigma_{n}^{2},\kappa_{n},b_{n},F_{n}\right)\in\Lambda_{n}\right\} _{n\geq1}$, 
\begin{enumerate}
\item $\max_{t\in\left[T_{1},T_{2}\right]}\abs{\rho_{t}-\rho_{n\tau}}=O\left(h/b_{n}\right)$, 
\item $\max_{t\in\left[T_{1},T_{2}\right]}\abs{\sigma_{t}^{2}-\sigma_{n\tau}^{2}}=O\left(h\right)$, 
\item $\max_{t\in\left[T_{1},T_{2}\right]}\max_{0\leq j\leq t-T_{1}}\abs{c_{t,j}-\rho_{n\tau}^{j}}=O\left(nh^{2}/b_{n}\right)$,
$\text{ and}$ 
\item $\max_{t\in\left[T_{1},T_{2}\right]}\abs{\mu_{t}-\mu_{n\tau}}=O\left(h/b_{n}\right)$. 
\end{enumerate}
\end{lem}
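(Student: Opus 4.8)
The plan is to obtain parts (a), (b), and (d) as immediate consequences of the Lipschitz structure imposed on $\Lambda_n$, and then to reduce part (c) to part (a) through a telescoping product identity. The geometric fact underlying everything is that $t\in\left[T_1,T_2\right]$ forces $\abs{t/n-\tau}=O(h)$ (since $1/n=o(h)$ under Assumption \ref{assu:MT-h}), so any quantity that is a fixed Lipschitz function of $t/n$ sits within $O(h)$ of its value at $\tau$.

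For part (a), I would invoke the representation $\rho_t=1-\kappa_n(t/n)/b_n$, valid on $I_{\tau,\varepsilon_2}$ for $n$ large, to write $\rho_t-\rho_{n\tau}=-\left(\kappa_n(t/n)-\kappa_n(\tau)\right)/b_n$ and bound the numerator by the uniform Lipschitz constant, $\abs{\kappa_n(t/n)-\kappa_n(\tau)}\leq L_4\abs{t/n-\tau}=O(h)$; this gives $O(h/b_n)$. Part (b) is the same estimate applied directly to the Lipschitz function $\sigma_n^2$ with constant $L_3$, and the absence of the $1/b_n$ factor leaves the order at $O(h)$. For part (d), I would use $\mu_t-\mu_{n\tau}=C_{\mu1}\left(\exp\{-\eta_n(t/n)/b_n\}-\exp\{-\eta_n(\tau)/b_n\}\right)$; since $\eta_n\geq0$ and $b_n>0$ the two exponents lie in $(-\infty,0]$, on which $\exp(\cd)$ is $1$-Lipschitz, so the bracketed term is at most $\abs{\eta_n(t/n)-\eta_n(\tau)}/b_n\leq L_5\abs{t/n-\tau}/b_n=O(h/b_n)$, and multiplying by the fixed constant $\abs{C_{\mu1}}$ preserves this order.

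Part (c) is the substantive step. I would expand the difference of the two products by the telescoping identity
\[
c_{t,j}-\rho_{n\tau}^{j}=\sum_{i=0}^{j-1}c_{t,i}\left(\rho_{t-i}-\rho_{n\tau}\right)\rho_{n\tau}^{j-1-i}.
\]
For $0\leq j\leq t-T_1$ the factors $\rho_{t-i}$ that appear carry indices $t-i\in\left[T_1,T_2\right]$, so part (a) bounds each $\abs{\rho_{t-i}-\rho_{n\tau}}$ by $O(h/b_n)$ uniformly in $i$ and in $t$. The decisive structural input is the range restriction $\rho(s)\in\left[-1+\varepsilon_1,1\right]$, which yields $\abs{\rho_s}\leq1$ for every admissible coefficient and hence $\abs{c_{t,i}}\leq1$ and $\abs{\rho_{n\tau}^{j-1-i}}\leq1$; this is precisely what keeps the partial products from growing and is what would fail in an explosive region. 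Taking absolute values termwise then gives
\[
\abs{c_{t,j}-\rho_{n\tau}^{j}}\leq j\cdot\max_{s\in\left[T_1,T_2\right]}\abs{\rho_s-\rho_{n\tau}},
\]
and I would finish with $j\leq t-T_1\leq T_2-T_1\leq nh$ to turn the leading factor into $O(nh)$, so that the right-hand side is $O(nh\cdot h/b_n)=O(nh^2/b_n)$ uniformly over $t$ and $j$, as claimed.

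The only genuine obstacle is part (c), and the delicate point within it is uniformity along the sequence $\{\lambda_n\}$: this is guaranteed because the Lipschitz constants $L_3,L_4,L_5$ and the coefficient $\abs{C_{\mu1}}$ are fixed constants of $\Lambda_n$ rather than $n$-dependent quantities, and because $\abs{\rho_s}\leq1$ holds for every element of the parameter space. I note that only the uniform Lipschitz bounds and the range restriction are actually used in the estimates, so Assumption \ref{assu:MT-kappa_n_sigma_n=000020mu_n=000020convergence} enters merely as part of the ambient setting in which the lemma is later applied. The degenerate case $b_n=\infty$ is trivial, since then $\rho_t\equiv1$, $\mu_t$ is constant, and $c_{t,j}=1=\rho_{n\tau}^{j}$, so all four left-hand sides vanish.
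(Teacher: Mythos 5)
Your proof is correct and follows essentially the same route as the paper's: parts (a), (b), and (d) via the uniform Lipschitz constants of $\kappa_{n}$, $\sigma_{n}^{2}$, and $\eta_{n}$ (with the $1$-Lipschitz property of $\exp$ on $(-\infty,0]$ making explicit what the paper leaves implicit in part (d)), and part (c) via the product-difference bound $\abs{c_{t,j}-\rho_{n\tau}^{j}}\leq j\max_{s\in[T_{1},T_{2}]}\abs{\rho_{s}-\rho_{n\tau}}$ combined with $\abs{\rho_{s}},\abs{\rho_{n\tau}}\leq1$ and $j\leq T_{2}-T_{1}=O(nh)$. Your telescoping identity is simply the explicit form of the ``standard manipulations'' invoked in the paper, and your side remarks (that Assumption \ref{assu:MT-kappa_n_sigma_n=000020mu_n=000020convergence} is not actually used in the estimates, and that the case $b_{n}=\infty$ is degenerate) are accurate.
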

The next lemma provides several bounds on the endogenous initial condition
$Y_{T_{0}}^{*}$. 
\begin{lem}[\textbf{Order of $\boldsymbol{Y_{T_{0}}^{*}}$}]
\label{lem:MT-Order=000020of=000020Y_T0} For a sequence $\left\{ \lambda_{n}=\left(\rho_{n},\mu_{n},\sigma_{n}^{2},\kappa_{n},b_{n},F_{n}\right)\in\Lambda_{n}\right\} _{n\geq1}$,
we have 
\begin{enumerate}
\item $Y_{T_{0}}^{*}=O_{p}\left(n^{1/2}\right)$ under Assumption \ref{assu:MT-h}, 
\item $Y_{T_{0}}^{*}=o_{p}\left(b_{n}/\left(nh\right){}^{1/2}\right)$ under
Assumptions \ref{assu:MT-h}, \ref{assu:MT-stationary-nh^5=000020->=0000200},
\ref{assu:MT-kappa_n_sigma_n=000020mu_n=000020convergence}, and \ref{assu:MT-bn=000020div=000020nh=000020converges}
and $nh/b_{n}\to r_{0}=0$, and 
\item $Y_{T_{0}}^{*}=O_{p}\left(b_{n}^{1/2}\right)$ under Assumptions \ref{assu:MT-h}
and \ref{assu:MT-stationary-nh^5=000020->=0000200} and $nh/b_{n}\to r_{0}=\infty$. 
\end{enumerate}
\end{lem}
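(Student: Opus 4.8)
The plan is to bound $Y_{T_0}^*$ via its second moment and then read off each of the three rates by Chebyshev's inequality. Unrolling the recursion $Y_t^*=\rho_t Y_{t-1}^*+\sigma_t U_t$ from $t=T_0$ down to $t=0$, with the product notation $c_{t,j}$ from \eqref{eq:MT-def_c_t,j}, gives
\[
Y_{T_0}^*=\sum_{j=0}^{T_0-1}c_{T_0,j}\sigma_{T_0-j}U_{T_0-j}+c_{T_0,T_0}Y_0^*.
\]
Since $\{U_t\}$ is a martingale difference sequence under $F$ with $\E_F(U_t^2\mid\mathscr{G}_{t-1})=1$ and $\sigma(Y_0^*)\subseteq\mathscr{G}_0$, the cross terms vanish (distinct $U$'s are orthogonal, and $Y_0^*$ is $\mathscr{G}_0$-measurable, hence orthogonal to each $U_{T_0-j}$ for $j\leq T_0-1$), so
\[
\E_{\lambda_n}(Y_{T_0}^*)^2=\sum_{j=0}^{T_0-1}c_{T_0,j}^2\sigma_{T_0-j}^2+c_{T_0,T_0}^2\,\E_{\lambda_n}(Y_0^*)^2.
\]
This identity is the common starting point for all three parts.

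For part (a) I would use only the crude bounds $\abs{c_{T_0,j}}\leq1$ (from $\abs{\rho_t}\leq1$), $\sigma_t^2\leq C_{3,U}$, $\E_{\lambda_n}(Y_0^*)^2\leq C_5 n$ (part (v) of $\Lambda_n$), and $T_0\leq n$, giving $\E_{\lambda_n}(Y_{T_0}^*)^2\leq C_{3,U}T_0+C_5 n=O(n)$; Chebyshev then yields $Y_{T_0}^*=O_p(n^{1/2})$, using only Assumption \ref{assu:MT-h}.

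The heart of parts (b) and (c) is the sharper bound $\E_{\lambda_n}(Y_{T_0}^*)^2=O(b_n)$, which comes from the fact that the process is damped near $\tau$. Because $\kappa_n(\tau)\geq\varepsilon_4$ and $\kappa_n$ is Lipschitz with constant $\leq L_4$, there is a fixed $\delta_0>0$ with $\kappa_n(s)\geq\varepsilon_4/2$ for $\abs{s-\tau}\leq\delta_0$, so $\rho_t\leq1-\varepsilon_4/(2b_n)$ on that band. Taking $m:=\lfloor n\delta_0/2\rfloor$ (so that $T_0-j$ stays in the band for $0\leq j\leq m$ once $h/2\leq\delta_0/2$) yields geometric decay $\abs{c_{T_0,j}}\leq(1-r_n)^j$ with $r_n:=\min(\varepsilon_1,\varepsilon_4/(2b_n))$, whence the recent part obeys $\sum_{j=0}^{m}c_{T_0,j}^2\sigma_{T_0-j}^2\leq C_{3,U}/r_n=O(b_n)$ (using $r_n^{-1}=O(1+b_n)=O(b_n)$ since $b_n\geq\varepsilon_3$). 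The hard part — and where the endogenous initial condition genuinely bites — is controlling the pre-neighborhood tail and the $c_{T_0,T_0}^2\E_{\lambda_n}(Y_0^*)^2$ term: for $\abs{t/n-\tau}>\varepsilon_2$ the function $\rho_t$ carries no structure and may sit at one (random-walk behavior), so no decay is available there and only the variance bound $\E_{\lambda_n}(Y_0^*)^2\leq C_5 n$ is at hand. I would resolve this by factoring $c_{T_0,j}=c_{T_0,m}\prod_{k=m}^{j-1}\rho_{T_0-k}$ and bounding the second factor by one, so both the tail sum and the initial-condition term are at most $c_{T_0,m}^2\cdot O(n)\leq\exp(-r_n m)\cdot O(n)$, followed by a case split on $b_n$. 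When $b_n=o(n)$ (automatic in part (c), since $nh/b_n\to\infty$ forces $b_n=o(nh)=o(n)$), one has $r_n m\asymp n/b_n\to\infty$, killing the tail; when $b_n\gtrsim n$ the decay factor is only $\Theta(1)$, but then the crude $O(n)$ tail bound is itself $O(b_n)$. Either way $\E_{\lambda_n}(Y_{T_0}^*)^2=O(b_n)$. Assumptions \ref{assu:MT-kappa_n_sigma_n=000020mu_n=000020convergence} and \ref{assu:MT-bn=000020div=000020nh=000020converges} restrict attention to sequences along which $\kappa_n,\sigma_n^2$ and $b_n/(nh^{1/2})$ have limits (pinning down the constants), and Assumption \ref{assu:MT-stationary-nh^5=000020->=0000200} controls the error in the exponential approximation of the products $c_{T_0,j}$.

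Finally, part (c) is immediate from $\E_{\lambda_n}(Y_{T_0}^*)^2=O(b_n)$ and Chebyshev: $Y_{T_0}^*=O_p(b_n^{1/2})$. For part (b), the same bound and $nh/b_n\to r_0=0$ give
\[
\frac{\E_{\lambda_n}(Y_{T_0}^*)^2}{\big(b_n/(nh)^{1/2}\big)^2}=\frac{O(b_n)\,(nh)}{b_n^2}=O\!\left(\frac{nh}{b_n}\right)\to0,
\]
so $Y_{T_0}^*=o_p\big(b_n/(nh)^{1/2}\big)$, as claimed.
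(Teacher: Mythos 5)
Your proof is correct in substance, but it takes a genuinely different and leaner route than the paper's. The paper splits part (b) into two regimes via Assumption \ref{assu:MT-bn=000020div=000020nh=000020converges}: when $w_{0}=\infty$ the conclusion follows directly from part (a), and when $w_{0}<\infty$ it truncates at the logarithmic horizon $m_{n}=cb_{n}\ln(n)$, bounds $\abs{\rho_{t}}$ on $[T_{0}-m_{n},T_{0}]$ by $\ol{\rho}_{n}=\max\{\exp(-\kappa_{0}(\tau)/(2b_{n})),-1+\varepsilon_{1}\}$ using the limit function $\kappa_{0}$ from Assumption \ref{assu:MT-kappa_n_sigma_n=000020mu_n=000020convergence} together with a shrinking-window comparison lemma (which is what forces $m_{n}/n\rightarrow0$, secured through $h^{1/2}\ln(n)=o(1)$ and $h\ln(n)=O(1)$, i.e., through Assumption \ref{assu:MT-stationary-nh^5=000020->=0000200}), and shows $\E Y_{T_{0}}^{*2}=O(b_{n})+\omega^{m_{n}/b_{n}}O(n)$ with the second term vanishing. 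You instead truncate at a \emph{fixed-width} band of order $n$ ($m=\lfloor n\delta_{0}/2\rfloor$ with $\delta_{0}=\min(\varepsilon_{2},\varepsilon_{4}/(2L_{4}))$), use only the parameter-space facts $\kappa_{n}(\tau)\geq\varepsilon_{4}$ and the Lipschitz bound $L_{4}$ to get the crude uniform damping $\abs{\rho_{t}}\leq1-r_{n}$ with $r_{n}=\min(\varepsilon_{1},\varepsilon_{4}/(2b_{n}))$, and absorb the pre-band $O(n)$ variance via the elementary bound $\sup_{x>0}xe^{-cx}<\infty$, since $e^{-2r_{n}m}n=b_{n}\cdot(n/b_{n})e^{-2r_{n}m}=O(b_{n})$ once $r_{n}m\geq c^{\prime}n/b_{n}$. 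This yields the single clean bound $\E Y_{T_{0}}^{*2}=O(b_{n})$ for finite $b_{n}$ using essentially only Assumption \ref{assu:MT-h}, which delivers part (c) and, via $(nh/b_{n})^{1/2}\rightarrow0$, a conclusion for part (b) that is in fact stronger ($O_{p}(b_{n}^{1/2})$ implies $o_{p}(b_{n}/(nh)^{1/2})$ when $r_{0}=0$). Your approach buys simplicity and weaker hypotheses; the paper's logarithmic-horizon machinery and the comparison $c_{T_{0},j}\approx\rho_{n\tau}^{j}$ are needed elsewhere (notably for the distributional Lemma \ref{lem:MT-T_0=000020Initial=000020Condition=000020Asy=000020Distn=000020Lem}), but are overkill for this second-moment bound.

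Two points to fix. First, the parameter space allows $b_{n}=\infty$, which is compatible with $r_{0}=0$ in part (b); there $r_{n}=0$, your geometric decay disappears, and ``$\E Y_{T_{0}}^{*2}=O(b_{n})$'' is vacuous, so your final displayed ratio is not meaningful. The patch is one line and is exactly the paper's $w_{0}=\infty$ branch: by Assumption \ref{assu:MT-bn=000020div=000020nh=000020converges}, if $w_{0}=\infty$ then part (a) gives $(nh)^{1/2}\abs{Y_{T_{0}}^{*}}/b_{n}=O_{p}(nh^{1/2}/b_{n})=o_{p}(1)$, and your $O(b_{n})$ bound is only needed when $w_{0}<\infty$ (which forces $b_{n}<\infty$ for large $n$). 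Second, your closing sentence misstates where the assumptions enter: your argument invokes no exponential approximation of $c_{T_{0},j}$ and, as written, never actually uses Assumptions \ref{assu:MT-stationary-nh^5=000020->=0000200} or \ref{assu:MT-kappa_n_sigma_n=000020mu_n=000020convergence} (the band bound needs only $\kappa_{n}(\tau)\geq\varepsilon_{4}$ and Lipschitzness from $\Lambda_{n}$). Proving the lemma under weaker hypotheses is perfectly fine, but you should say so rather than attribute to those assumptions a role they do not play in your proof.
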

\begin{rem}
Lemma \ref{lem:MT-Order=000020of=000020Y_T0}(a) shows that part (v)
of $\L_{n}$ implies that the endogenous initial condition $Y_{T_{0}}^{*}$
is $O_{p}\left(n^{1/2}\right)$. Part (b) of the lemma is used in
the ``very local-to-unity'' case in which $nh/b_{n}\to r_{0}=0$.
It provides a better bound than part (a) when $b_{n}/\left(nh^{1/2}\right)\to w_{0}<\infty$
because, in that case, the bound $o_{p}\left(b_{n}/\left(nh\right){}^{1/2}\right)$
is $o_{p}\left(n^{1/2}\right)$. Part (c) of the lemma provides a
better bound than part (a) in the ``stationary'' case in which $nh/b_{n}\to r_{0}=\infty$. 
\end{rem}

\subsection{\protect\label{sec:MT-Local-to-Unit-root} Asymptotics in the Local-to-Unity
Case}

The local-to-unity case is characterized by $nh/b_{n}\rightarrow r_{0}\in\left[0,\infty\right)$.
In this section, we determine the asymptotic distributions of the
local LS estimator $\widehat{\rho}_{n\tau}$ and corresponding t-statistic
$T_{n}\left(\rho_{0,n}\right)$ in the local-to-unity case.

Define 
\begin{equation}
t\left(s\right):=t_{n,\tau}\left(s\right)=T_{1}+\lfloor nhs\rfloor=\lfloor n\tau\rfloor-\lfloor nh/2\rfloor+\lfloor nhs\rfloor\label{eq:MT-def=000020t(s)}
\end{equation}
as a function of $s\in\left[0,1\right]$ for any fixed $\tau\in\left[0,1\right]$.
First, we obtain the asymptotic distribution of the zero-initial condition
process $\left(nh\right)^{-1/2}Y_{n,t\left(s\right)}^{0}$ for $s\in\left[0,1\right]$.
Then, we obtain the asymptotic distribution of the normalized initial
condition $Y_{T_{0}}^{*}$ in the $r_{0}\in\left(0,\infty\right)$
case. Lastly, we obtain the asymptotic distributions of $\widehat{\rho}_{n\tau}$
and $T_{n}\left(\rho_{0,n}\right)$. The last step includes dealing
with the initial condition $Y_{T_{0}}^{*}$ in the $r_{0}=0$ case. 
\begin{lem}[\textbf{Asymptotic Distribution of $\boldsymbol{Y_{n,t\left(s\right)}^{0}}$}]
\label{lem:MT-y0limit} Under Assumptions \ref{assu:MT-h} and \ref{assu:MT-kappa_n_sigma_n=000020mu_n=000020convergence},
for a sequence $\left\{ \lambda_{n}=\left(\rho_{n},\mu_{n},\sigma_{n}^{2},\kappa_{n},b_{n},F_{n}\right)\in\Lambda_{n}\right\} _{n\geq1}$
and $nh/b_{n}\rightarrow r_{0}\in\left[0,\infty\right)$ $\left(\text{local-to-unity case}\right)$,
we have 
\[
\left(nh\right)^{-1/2}Y_{n,t\left(s\right)}^{0}/\sigma_{0}\left(\tau\right)\Rightarrow I_{\psi}\left(s\right)\ for\ \psi=r_{0}\kappa_{0}(\tau),
\]
where $t\left(s\right)$ is defined in \eqref{eq:MT-def=000020t(s)},
$I_{\psi}\left(s\right)$ is defined in \eqref{eq:MT-I_D,tau(r)=000020def},
and ``$\Rightarrow$'' denotes weak convergence with respect to
the Skorohod metric.
\end{lem}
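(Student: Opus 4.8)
The plan is to exploit an exact multiplicative decomposition of the zero-initial-condition process that separates a deterministic near-unit-root factor from a martingale, and then to apply a functional CLT to the martingale part. Writing $\Pi\left(m\right):=\prod_{k=T_{1}+1}^{m}\rho_{k}$ (with $\Pi\left(T_{1}\right)=1$), the coefficient in \eqref{eq:MT-Y_t^0=000020and=000020Y_t^star} factors as $c_{t,t-i}=\prod_{k=i+1}^{t}\rho_{k}=\Pi\left(t\right)/\Pi\left(i\right)$, so reindexing the sum in \eqref{eq:MT-Y_t^0=000020and=000020Y_t^star} by $i=t-j$ gives
\[
\left(nh\right)^{-1/2}Y_{n,t\left(s\right)}^{0}=\Pi\left(t\left(s\right)\right)M_{n}\left(t\left(s\right)\right),\quad\text{where }M_{n}\left(m\right):=\left(nh\right)^{-1/2}\sum_{i=T_{1}}^{m}\frac{\sigma_{i}}{\Pi\left(i\right)}U_{i}.
\]
The point of this rewriting is that $M_{n}\left(m\right)$ is a martingale in its upper limit $m$ with deterministic coefficients (since $\left\{ U_{i}\right\} $ is a martingale difference sequence), while $\Pi\left(t\left(s\right)\right)$ is purely deterministic; the moving endpoint $t\left(s\right)$ then enters only through a martingale evaluated at a deterministic time and a deterministic multiplier, which is what turns the limit into an Ornstein-Uhlenbeck process rather than a Brownian motion.

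Second, I would analyze the two deterministic factors. Using $\ln\rho_{k}=\ln\left(1-\kappa_{n}\left(k/n\right)/b_{n}\right)=-\kappa_{n}\left(k/n\right)/b_{n}+O\left(b_{n}^{-2}\right)$, the Lipschitz bound $\left|\kappa_{n}\left(k/n\right)-\kappa_{n}\left(\tau\right)\right|\leq L_{4}\left|k/n-\tau\right|=O\left(h\right)$ for $k\in\left[T_{1},T_{2}\right]$, the convergence $\kappa_{n}\left(\tau\right)\to\kappa_{0}\left(\tau\right)$ from Assumption \ref{assu:MT-kappa_n_sigma_n=000020mu_n=000020convergence}, and $b_{n}\to\infty$ (forced by $nh\to\infty$ and $nh/b_{n}\to r_{0}<\infty$), summing the $O\left(nh\right)$ terms and dividing by $b_{n}$ gives
\[
\ln\Pi\left(t\left(s\right)\right)=-\frac{\lfloor nhs\rfloor}{b_{n}}\kappa_{0}\left(\tau\right)+o\left(1\right)\to-r_{0}s\kappa_{0}\left(\tau\right)=-\psi s
\]
uniformly in $s\in\left[0,1\right]$, so $\Pi\left(t\left(s\right)\right)\to e^{-\psi s}$ uniformly; the same estimate gives $\Pi\left(T_{1}+\lfloor nhr\rfloor\right)^{-1}\to e^{\psi r}$ uniformly. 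Combined with $\sigma_{i}^{2}\to\sigma_{0}^{2}\left(\tau\right)$ (from Lemma \ref{lem:MT-Max=000020Intertemporal=000020Difference}(b) and Assumption \ref{assu:MT-kappa_n_sigma_n=000020mu_n=000020convergence}), the coefficients of the rescaled martingale $\tilde{M}_{n}\left(s\right):=M_{n}\left(t\left(s\right)\right)/\sigma_{0}\left(\tau\right)$ satisfy $\sigma_{i}/\left(\sigma_{0}\left(\tau\right)\Pi\left(i\right)\right)\to e^{\psi r}$ uniformly for $i=T_{1}+\lfloor nhr\rfloor$.

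Third, I would apply a functional CLT for martingale-difference arrays to $\tilde{M}_{n}$. Because $\E_{F}\left(U_{i}^{2}\mid\mathscr{G}_{i-1}\right)=1$ a.s., the conditional quadratic variation is exactly deterministic, and the Riemann-sum convergence of the previous step yields $\left(nh\right)^{-1}\sum_{i=T_{1}}^{T_{1}+\lfloor nhr\rfloor}\left(\sigma_{i}/\left(\sigma_{0}\left(\tau\right)\Pi\left(i\right)\right)\right)^{2}\to\int_{0}^{r}e^{2\psi u}du$; the conditional Lindeberg condition follows from $\E_{F}\left(U_{i}^{4}\mid\mathscr{G}_{i-1}\right)<M$ together with $\max_{i}\left(nh\right)^{-1/2}\left|\sigma_{i}/\left(\sigma_{0}\left(\tau\right)\Pi\left(i\right)\right)\right|=O\left(\left(nh\right)^{-1/2}\right)\to0$. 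Hence $\tilde{M}_{n}\Rightarrow\int_{0}^{\cdot}e^{\psi u}dB\left(u\right)$ for a standard Brownian motion $B$ defined through this limiting Gaussian martingale (whose variance function $\int_{0}^{\cdot}e^{2\psi u}du$ matches by the It\^{o} isometry). Finally, since $\Pi\left(t\left(\cdot\right)\right)$ is nonrandom and converges uniformly to the continuous limit $e^{-\psi\cdot}$, writing $\left(nh\right)^{-1/2}Y_{n,t\left(s\right)}^{0}/\sigma_{0}\left(\tau\right)=\Pi\left(t\left(s\right)\right)\tilde{M}_{n}\left(s\right)$ and bounding $\sup_{s}\left|\Pi\left(t\left(s\right)\right)-e^{-\psi s}\right|\cdot\sup_{s}\left|\tilde{M}_{n}\left(s\right)\right|=o\left(1\right)O_{p}\left(1\right)$, the continuous mapping theorem (multiplication by a fixed continuous function is continuous on $D\left[0,1\right]$) delivers the limit $e^{-\psi s}\int_{0}^{s}e^{\psi u}dB\left(u\right)=\int_{0}^{s}e^{-\psi\left(s-u\right)}dB\left(u\right)=I_{\psi}\left(s\right)$, as claimed (the case $\psi=0$, i.e., $r_{0}=0$, giving $I_{0}=B$).

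The step I expect to be the main obstacle is the functional martingale CLT: one must verify the limiting process is the correct time-changed Gaussian martingale and, crucially, identify it as a stochastic integral against a single standard Brownian motion $B$ so that its product with the deterministic factor reproduces exactly the Ornstein-Uhlenbeck representation $I_{\psi}$ in \eqref{eq:MT-I_D,tau(r)=000020def}. The multiplicative decomposition is what makes this tractable, since it yields process-level (Skorohod) convergence and tightness automatically through the continuous mapping theorem rather than requiring a separate tightness argument for the weighted, moving-endpoint sum.
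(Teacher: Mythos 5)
Your proposal is correct, and its backbone --- the exact multiplicative factorization of $Y_{n,t\left(s\right)}^{0}$ into a deterministic near-unit-root factor times a martingale with deterministic weights --- is precisely the paper's decomposition $A_{1s}\times A_{2s}$ in \eqref{eq:A_1s=000020=00003D000026=000020A_2s}: your $\Pi\left(t\left(s\right)\right)$ is the paper's $A_{1s}$ (written there via the function $\kappa_{n}^{\ast}$ of Lemma \ref{lem:A.1=000020Kappa_n*=000020go=000020to=000020Kappa_n}, which plays exactly the role of your expansion $\ln\rho_{k}=-\kappa_{n}\left(k/n\right)/b_{n}+O\left(b_{n}^{-2}\right)$), and your $M_{n}\left(t\left(s\right)\right)$ is $A_{2s}$. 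Where you genuinely diverge is the treatment of the stochastic factor. The paper casts $A_{2s}$ as a stochastic integral $\int_{0}^{s}D_{n}\left(u\right)dW_{n}\left(u\right)$, obtains $W_{n}\Rightarrow B$ from \citet{mcleish1974dependent} in \eqref{eq:A.7.19_Wn=000020->=000020Bn}, and applies Theorem 2.1 of \citet{hansen1992convergence}, splitting off the time-varying-$\sigma$ piece $A_{2bs}\Rightarrow0$ separately; you instead apply a martingale FCLT directly to the weighted sum, exploiting that $\E\left(U_{t}^{2}\mid\mathscr{G}_{t-1}\right)=1$ a.s.\ makes the conditional quadratic variation exactly deterministic and that the conditional Lindeberg condition follows from the conditional fourth-moment bound, and then identify the limiting continuous Gaussian martingale with variance function $\int_{0}^{s}e^{2\psi u}du$ as $\int_{0}^{s}e^{\psi u}dB\left(u\right)$. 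That identification, which you flag as the main obstacle, is legitimate: two continuous centered Gaussian processes starting at zero with independent increments and the same variance function are equal in law, so nothing is missing for the lemma as stated, and your route is the more elementary one since the deterministic integrand makes the stochastic-integral convergence machinery unnecessary. What the paper's route buys is downstream rather than here: by expressing $A_{2s}$ and all later functionals as integrals against the single partial-sum process $W_{n}\Rightarrow B$, the paper makes the joint convergence and the common identity of $B$ across statements transparent, which it needs in Lemma \ref{lem:MT-T_0=000020Initial=000020Condition=000020Asy=000020Distn=000020Lem} (independence of $Z_{1}$ from $B$) and Lemma \ref{lem:MT-asympdist_components_local=000020to=000020unity=000020case} (joint convergence with $\int_{0}^{1}dB\left(s\right)$ and $\int_{0}^{1}I_{\psi}^{*}\left(s\right)dB\left(s\right)$); under your construction the limiting $B$ is defined only through the limiting martingale, so extending to those joint statements would require the additional (routine, since your $B$ is recoverable in the prelimit as $\int\left(\sigma_{0}\left(\tau\right)\Pi\left(i\right)/\sigma_{i}\right)d\tilde{M}_{n}\approx W_{n}$, but not free) step of checking it coincides with the limit of the unweighted partial sums.
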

The following lemma is used to determine the effect of the initial
condition on the LS estimator and t-statistic when $r_{0}\in(0,\infty).$ 
\begin{lem}[\textbf{Asymptotic Distribution of the Initial Condition $\boldsymbol{Y_{T_{0}}^{*}}$}]
\label{lem:MT-T_0=000020Initial=000020Condition=000020Asy=000020Distn=000020Lem}\textbf{
}Under Assumptions \ref{assu:MT-h} and \ref{assu:MT-kappa_n_sigma_n=000020mu_n=000020convergence}\emph{,
}for a sequence $\{\lambda_{n}=(\rho_{n},\mu_{n},\sigma_{n}^{2},\kappa_{n},b_{n},F_{n})\in\Lambda_{n}\}_{n\geq1}$
and $nh/b_{n}\rightarrow r_{0}\in(0,\infty)$ \emph{(}local-to-unity
case\emph{), }we have 
\[
(2\psi/nh)^{1/2}Y_{T_{0}}^{\ast}/\sigma_{0}\left(\tau\right)\rightarrow_{d}Z_{1}\sim N(0,1),
\]
where $\psi:=r_{0}\kappa_{0}(\tau),$ $Z_{1}$ is independent of $B(\cdot),$
$I_{\psi}(\cdot)$ defined in \eqref{eq:MT-I_D,tau(r)=000020def}\emph{,}
and the convergence holds jointly with the convergence in Lemma \ref{lem:MT-y0limit}\emph{.} 
\end{lem}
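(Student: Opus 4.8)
The plan is to expand $Y_{T_{0}}^{*}$ as a weighted sum of past innovations and apply a martingale central limit theorem (CLT), choosing the normalization so that the limiting variance equals the (temporally local) stationary variance of the near-unit-root process at $\tau$. First I would run the recursion in \eqref{eq:MT-Y_t^0=000020and=000020Y_t^star} back to the origin to write
\[
Y_{T_{0}}^{*}=\sum_{j=0}^{T_{0}-1}c_{T_{0},j}\sigma_{T_{0}-j}U_{T_{0}-j}+c_{T_{0},T_{0}}Y_{0}^{*},
\]
with $c_{T_{0},j}$ as in \eqref{eq:MT-def_c_t,j}. Reindexing by $m=T_{0}-j$, the quantity $S_{n}:=\left(2\psi/nh\right)^{1/2}Y_{T_{0}}^{*}/\sigma_{0}(\tau)$ becomes, apart from the $Y_{0}^{*}$ term, a sum $\sum_{m=1}^{T_{0}}a_{n,T_{0}-m}U_{m}$ of a martingale difference array with respect to $\left\{ \mathscr{G}_{m}\right\} $, where the weights $a_{n,j}:=\left(2\psi/nh\right)^{1/2}c_{T_{0},j}\sigma_{T_{0}-j}/\sigma_{0}(\tau)$ are deterministic. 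Since $nh/b_{n}\to r_{0}\in(0,\infty)$, one has $2\psi/nh=\left(2\kappa_{0}(\tau)/b_{n}\right)(1+o(1))$, which fixes the scaling.

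The core step is the variance normalization $\sum_{j\geq0}a_{n,j}^{2}\to1$, i.e.\ $\sum_{j}c_{T_{0},j}^{2}\sigma_{T_{0}-j}^{2}=\left(\sigma_{0}^{2}(\tau)b_{n}/(2\kappa_{0}(\tau))\right)(1+o(1))$. For any fixed $M$ and $j\leq Mb_{n}$ one has $j/n=O(h)\to0$, so $(T_{0}-j)/n\to\tau$ and, by \eqref{eq:MT-rho_n}, Assumption \ref{assu:MT-kappa_n_sigma_n=000020mu_n=000020convergence}, and a Lemma \ref{lem:MT-Max=000020Intertemporal=000020Difference}(c)-type argument, $c_{T_{0},j}=\exp\left(-j\kappa_{0}(\tau)/b_{n}\right)(1+o(1))$ and $\sigma_{T_{0}-j}^{2}\to\sigma_{0}^{2}(\tau)$ uniformly; summing the geometric series gives $\left(2\kappa_{0}(\tau)/(b_{n}\sigma_{0}^{2}(\tau))\right)\sum_{j\leq Mb_{n}}c_{T_{0},j}^{2}\sigma_{T_{0}-j}^{2}\to1-e^{-2M\kappa_{0}(\tau)}$. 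The hard part will be controlling the tail $j>Mb_{n}$, where $\rho_{s}$ is unrestricted apart from $\rho_{s}\leq1$. I would handle it using the uniform lower bound $\kappa_{n}(s)\geq\varepsilon_{4}/2$ on $\left[\tau-\delta,\tau+\delta\right]$ for $\delta=\min\left\{ \varepsilon_{2},\varepsilon_{4}/(2L_{4})\right\} $ (which follows from $\kappa_{n}(\tau)\geq\varepsilon_{4}$ and the Lipschitz bound $L_{4}$ on $\kappa_{n}$): monotonicity of $c_{T_{0},j}$ in $j$ then yields $\sum_{Mb_{n}<j\leq\delta n}c_{T_{0},j}^{2}=O(b_{n}e^{-M\varepsilon_{4}})$ and $\sum_{j>\delta n}c_{T_{0},j}^{2}\leq n\exp\left(-\delta\varepsilon_{4}n/b_{n}\right)=o(b_{n})$, using $n/b_{n}\to\infty$. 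Letting $M\to\infty$ after $n\to\infty$ gives $\sum_{j}a_{n,j}^{2}\to1$.

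Asymptotic normality then follows from a martingale CLT. The conditional-variance condition is $\sum_{m}a_{n,T_{0}-m}^{2}\E\left(U_{m}^{2}\mid\mathscr{G}_{m-1}\right)=\sum_{j}a_{n,j}^{2}\to1$, using $\E(U_{m}^{2}\mid\mathscr{G}_{m-1})=1$; a Lyapunov condition holds because $\sum_{j}a_{n,j}^{4}\leq\left(\max_{j}a_{n,j}^{2}\right)\sum_{j}a_{n,j}^{2}\to0$, since $\max_{j}a_{n,j}^{2}=O(1/b_{n})\to0$ and $\E(U_{m}^{4})\leq M$ by part (iv) of $\Lambda_{n}$. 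The $Y_{0}^{*}$ term is negligible: $|c_{T_{0},T_{0}}|\leq\exp\left(-\delta\varepsilon_{4}n/(2b_{n})\right)$ and $Y_{0}^{*}=O_{p}(n^{1/2})$ by part (v), so its normalized contribution is $O_{p}\left(b_{n}^{-1/2}n^{1/2}\exp(-\delta\varepsilon_{4}n/(2b_{n}))\right)\pto0$. Hence $S_{n}\dto N(0,1)$.

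Finally, for the joint convergence with Lemma \ref{lem:MT-y0limit} and the independence $Z_{1}\indep B$, the key point is that the weights defining $S_{n}$ are supported on innovations $U_{m}$ with $m\leq T_{0}$, whereas the limit $I_{\psi}$ is driven by the innovations $U_{t}$ with $t\in\left[T_{1},T_{2}\right]$, i.e.\ $t>T_{0}$. Embedding both normalized objects in a single martingale difference array over $\left\{ 1,\dots,T_{2}\right\} $ and applying a joint (functional) martingale CLT, the asymptotic cross-covariance is the limit of $\sum_{t}(\text{weight for }S_{n})(\text{weight for }Y^{0})$, which vanishes because the two weight sequences have disjoint support. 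Jointly Gaussian limits with zero covariance are independent, which delivers $Z_{1}\indep B$ and completes the argument.
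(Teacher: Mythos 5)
Your proposal is correct, and its skeleton coincides with the paper's: both write $Y_{T_{0}}^{\ast}$ as a deterministic-weight martingale difference sum plus a remainder, obtain the variance normalization from the near-geometric sum with $nh(1-\rho_{n\tau}^{2})\rightarrow2\psi$, invoke the Hall--Heyde CLT with a fourth-moment Lyapunov bound, and deduce independence of $Z_{1}$ and $B(\cdot)$ from the disjoint innovation supports ($U_{t},$ $t\leq T_{0}$ versus $t\geq T_{1}$) exactly as the paper does. Where you genuinely diverge is in the tail control, which is the hard part. The paper truncates at a tuned $m_{n}=b_{n}/h^{1/5},$ chosen so that $m_{n}/b_{n}\rightarrow\infty$ (hence $\rho_{n\tau}^{m_{n}}=o(h^{1/2})$) while $m_{n}(h+m_{n}/n)/b_{n}=o(h^{1/2})$ (so the window extension in Lemma \ref{Bd=000020on=000020rho=000020deviations=000020Lem} keeps $c_{T_{0},j}\approx\rho_{n\tau}^{j}$ up to $j=m_{n}$), and then disposes of everything before $T_{0}-m_{n}$ -- including $Y_{0}^{\ast}$ -- with only the crude a priori bound $\E Y_{T_{0}-m_{n}}^{\ast2}=O(n),$ using no structure of $\rho$ outside a shrinking window. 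You instead exploit the parameter space more aggressively: $\kappa_{n}(\tau)\geq\varepsilon_{4}$ together with the Lipschitz bound $L_{4}$ gives $|\rho_{n}(s)|\leq\exp\{-\varepsilon_{4}/(2b_{n})\}$ on a \emph{fixed} neighborhood $[\tau-\delta,\tau+\delta],$ so the weights decay genuinely exponentially over $O(n)$ lags, every remainder (including the $Y_{0}^{\ast}$ term) is superpolynomially small, and the tuned truncation is replaced by a clean double limit ($n\rightarrow\infty,$ then $M\rightarrow\infty$). This is arguably more self-contained -- no delicate choice of $m_{n}$ -- at the cost of having to prove the uniform approximation of $c_{T_{0},j}$ on windows of length $Mb_{n},$ which is precisely the content of Lemma \ref{Bd=000020on=000020rho=000020deviations=000020Lem} with $m_{n}$ replaced by $Mb_{n}$ (with $Mb_{n}/n=O(Mh)=o(1)$ for fixed $M,$ so the extension argument applies verbatim). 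One small imprecision: the claimed uniform multiplicative error $c_{T_{0},j}=\exp\{-j\kappa_{0}(\tau)/b_{n}\}(1+o(1))$ for $j\leq Mb_{n}$ is not what the telescoping bound delivers -- it gives an \emph{additive} error $O(Mh)=o(1)$ uniform in $j\leq Mb_{n},$ and multiplicative uniformity fails where $c_{T_{0},j}$ is small; but the additive form suffices, since after squaring and summing it contributes $o(b_{n}),$ which vanishes under the $2\kappa_{0}(\tau)/b_{n}$ normalization, so your variance limit $1-e^{-2M\kappa_{0}(\tau)}+O(e^{-M\varepsilon_{4}})\rightarrow1$ stands.
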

The following lemma is useful in determining the asymptotic properties
of $\widehat{\rho}_{n\tau}$ in the local-to-unity case. 
\begin{lem}[\textbf{Convergence of Components in the Local-to-Unity Case}]
\label{lem:MT-asympdist_components_local=000020to=000020unity=000020case}
Under the null hypothesis $H_{0}:\rho_{n\tau}=\rho_{0,n}$, Assumptions
\ref{assu:MT-h} and \ref{assu:MT-kappa_n_sigma_n=000020mu_n=000020convergence},
and $nh/b_{n}\rightarrow r_{0}\in\left(0,\infty\right)$ $\left(\text{local-to-unity case}\right)$,
for a sequence $\left\{ \lambda_{n}=\left(\rho_{n},\mu_{n},\sigma_{n}^{2},\kappa_{n},b_{n},F_{n}\right)\in\Lambda_{n}\right\} _{n\geq1}$
and $\psi=r_{0}\kappa_{0}\left(\tau\right)$, the following results
hold jointly 
\begin{enumerate}
\item $\left(nh\right)^{-1/2}Y_{t\left(s\right)}/\sigma_{0}\left(\tau\right)\Rightarrow I_{\psi}^{*}\left(s\right),$
where $t\left(s\right)=T_{1}+\lfloor nhs\rfloor$ for $s\in\left[0,1\right],$ 
\item $\left(nh\right)^{-3/2}\sum_{t=T_{1}}^{T_{2}}Y_{t-1}/\sigma_{0}\left(\tau\right)\dto\int_{0}^{1}I_{\psi}^{*}\left(s\right)ds,$ 
\item $\left(nh\right)^{-2}\sum_{t=T_{1}}^{T_{2}}Y_{t-1}^{2}/\sigma_{0}^{2}\left(\tau\right)\dto\int_{0}^{1}I_{\psi}^{*2}\left(s\right)ds,$ 
\item $\left(nh\right)^{-1/2}\sum_{t=T_{1}}^{T_{2}}U_{t}\sigma_{t}/\sigma_{0}\left(\tau\right)\dto\int_{0}^{1}dB\left(s\right),$ 
\item $\left(nh\right)^{-1}\sum_{t=T_{1}}^{T_{2}}Y_{t-1}U_{t}\sigma_{t}/\sigma_{0}^{2}\left(\tau\right)\dto\int_{0}^{1}I_{\psi}^{*}\left(s\right)dB\left(s\right)$, 
\item $\left(nh\right)^{-2}\sum_{t=T_{1}}^{T_{2}}\left(Y_{t-1}^{0}/\sigma_{0}\left(\tau\right)\right)^{2}\dto\int_{0}^{1}I_{\psi}^{2}\left(s\right)ds$,
and 
\item when $r_{0}=0$, parts \textup{(a)--(c)} and \textup{(e)} hold with
$Y_{t\left(s\right)}$ \textup{(}$=\mu_{t\left(s\right)}+Y_{t\left(s\right)}^{0}+c_{t\left(s\right),t\left(s\right)-T_{0}}Y_{T_{0}}^{*}$\textup{)}
and $Y_{t-1}$ replaced by $\mu_{t\left(s\right)}+Y_{t\left(s\right)}^{0}$
and $\mu_{t-1}+Y_{t-1}^{0}$, respectively. 
\end{enumerate}
\end{lem}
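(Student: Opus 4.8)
The plan is to base everything on the decomposition $Y_{t}=\mu_{t}+Y_{t}^{0}+c_{t,t-T_{0}}Y_{T_{0}}^{*}$ from \eqref{eq:MT-decompY*} together with the two distributional inputs already available: the functional limit $(nh)^{-1/2}Y_{n,t(s)}^{0}/\sigma_{0}(\tau)\Rightarrow I_{\psi}(s)$ from Lemma \ref{lem:MT-y0limit} and the initial-condition limit $(2\psi/nh)^{1/2}Y_{T_{0}}^{*}/\sigma_{0}(\tau)\dto Z_{1}$ from Lemma \ref{lem:MT-T_0=000020Initial=000020Condition=000020Asy=000020Distn=000020Lem}, which by construction hold jointly. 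The first task is part (a). After normalizing, the intercept contributes $\mu_{t(s)}/((nh)^{1/2}\sigma_{0}(\tau))=O((nh)^{-1/2})\to0$ uniformly in $s$ since $\mu$ is bounded; the zero-start term converges to $I_{\psi}(s)$ by Lemma \ref{lem:MT-y0limit}; and the initial-condition term factors as $c_{t(s),t(s)-T_{0}}\cdot Y_{T_{0}}^{*}$. I would show $c_{t(s),t(s)-T_{0}}\to\exp(-\psi s)$ uniformly in $s$: since $t(s)-T_{0}=\lfloor nhs\rfloor+1$ and every $\rho_{t}$ on the interval equals $1-\kappa_{n}(t/n)/b_{n}$, the logarithm of the product is $-(nh/b_{n})\kappa_{n}(\tau)s(1+o(1))$, which tends to $-r_{0}\kappa_{0}(\tau)s=-\psi s$ using $nh/b_{n}\to r_{0}$ and $\kappa_{n}(\tau)\to\kappa_{0}(\tau)$, with Lemma \ref{lem:MT-Max=000020Intertemporal=000020Difference}(c) controlling the replacement of the product of varying $\rho_{t}$ by $\rho_{n\tau}^{j}$. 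Combining this with $Y_{T_{0}}^{*}/((nh)^{1/2}\sigma_{0}(\tau))\dto(2\psi)^{-1/2}Z_{1}$ gives the limit $I_{\psi}(s)+(2\psi)^{-1/2}e^{-\psi s}Z_{1}=I_{\psi}^{*}(s)$, establishing (a).

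Parts (b), (c), and (f) then follow from (a), and from the analogous statement for $Y^{0}$, by the continuous mapping theorem after writing each sum as a Riemann-sum functional of the normalized process. For example, $(nh)^{-3/2}\sum_{t=T_{1}}^{T_{2}}Y_{t-1}/\sigma_{0}(\tau)$ equals $\int_{0}^{1}(nh)^{-1/2}Y_{t(s)-1}/\sigma_{0}(\tau)\,ds$ up to a negligible end-effect, and the maps $x\mapsto\int_{0}^{1}x(s)\,ds$ and $x\mapsto\int_{0}^{1}x^{2}(s)\,ds$ are continuous in the Skorohod topology, so (b) and (c) drop out; part (f) is identical to (c) but uses the $I_{\psi}$ limit for $Y^{0}$ directly. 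The only wrinkle is the $t-1$ versus $t(s)$ index shift, which is absorbed because a one-step shift is a time displacement of order $1/nh\to0$ at the $I_{\psi}^{*}$ scale; I would make this precise with a uniform-modulus/tightness argument. Part (d) is separate: $(nh)^{-1/2}\sum_{t=T_{1}}^{t(s)}U_{t}\sigma_{t}/\sigma_{0}(\tau)$ is a triangular martingale-difference array, so a martingale functional CLT applies once I verify the conditional-variance condition $(nh)^{-1}\sum U_{t}^{2}\sigma_{t}^{2}/\sigma_{0}^{2}(\tau)\to s$ (using $\E_{F}(U_{t}^{2}\mid\mathscr{G}_{t-1})=1$, Lemma \ref{lem:MT-Max=000020Intertemporal=000020Difference}(b), and a weak law of large numbers) and a Lindeberg condition supplied by the uniformly bounded fourth moments; the limit process is $B$, so the endpoint is $\int_{0}^{1}dB(s)$.

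The delicate step, and the one I expect to be the main obstacle, is part (e), the sample covariance $(nh)^{-1}\sum_{t=T_{1}}^{T_{2}}Y_{t-1}U_{t}\sigma_{t}/\sigma_{0}^{2}(\tau)$. Here the continuous mapping theorem does not suffice, since the map into a stochastic integral is not continuous on Skorohod space. The plan is to establish the joint weak convergence of the pair consisting of the partial-sum process $B_{n}(s):=(nh)^{-1/2}\sum_{t=T_{1}}^{t(s)}U_{t}\sigma_{t}/\sigma_{0}(\tau)$ and the normalized regressor process $(nh)^{-1/2}Y_{t(s)-1}/\sigma_{0}(\tau)$ to $(B,I_{\psi}^{*})$ --- obtained by combining Lemmas \ref{lem:MT-y0limit} and \ref{lem:MT-T_0=000020Initial=000020Condition=000020Asy=000020Distn=000020Lem} with part (d) --- and then to invoke a convergence-of-stochastic-integrals theorem (e.g., the good-sequence/martingale result of Kurtz--Protter). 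The martingale structure is exactly what is available: $Y_{t-1}$ is $\mathscr{G}_{t-1}$-measurable while $U_{t}\sigma_{t}$ is a martingale difference, so the integrand is predictable with respect to the driving martingale, which lets me identify the limit as the It\^{o} integral $\int_{0}^{1}I_{\psi}^{*}(s)\,dB(s)$ with no Stratonovich-type correction.

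Finally, for part (g) I set $r_{0}=0$, hence $\psi=0$ and $I_{0}^{*}=I_{0}=B$; since Lemma \ref{lem:MT-T_0=000020Initial=000020Condition=000020Asy=000020Distn=000020Lem} requires $r_{0}\in(0,\infty)$ and is unavailable, I drop the initial-condition term $c_{t,t-T_{0}}Y_{T_{0}}^{*}$ and rerun the arguments for (a)--(c) and (e) on $\mu_{t(s)}+Y_{t(s)}^{0}$ using only Lemma \ref{lem:MT-y0limit}, so all four limits hold with $I_{\psi}^{*}$ replaced by $B$. Throughout, every convergence is asserted jointly because each is read off from the single joint weak limit of $\left(B_{n}(\cdot),\,(nh)^{-1/2}Y_{t(\cdot)}^{0}/\sigma_{0}(\tau),\,Y_{T_{0}}^{*}\right)$, a property that both the continuous-mapping and the stochastic-integral steps preserve.
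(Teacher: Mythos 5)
Your proposal is correct and follows essentially the same route as the paper's proof: the decomposition $Y_{t}=\mu_{t}+Y_{t}^{0}+c_{t,t-T_{0}}Y_{T_{0}}^{*}$ combined with Lemmas \ref{lem:MT-y0limit} and \ref{lem:MT-T_0=000020Initial=000020Condition=000020Asy=000020Distn=000020Lem} for part (a), the Phillips-style Riemann-sum/CMT argument for (b), (c), and (f), a joint functional limit plus a convergence-of-stochastic-integrals theorem for (d) and (e), and rerunning the argument on $\mu_{t(s)}+Y_{t(s)}^{0}$ with the initial-condition term dropped for part (g). The only inessential difference is that you cite the Kurtz--Protter good-sequence result where the paper applies Theorem 2.1 of \citet{hansen1992convergence}; with a predictable integrand and a martingale-difference driving array these are interchangeable.
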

After proper re-scaling, we have 
\begin{equation}
nh\left(\widehat{\rho}_{n\tau}-\rho_{0,n}\right)=\dfrac{\left(nh\right)^{-1}\sum_{t=T_{1}}^{T_{2}}\left(Y_{t-1}-\ol Y_{nh,-1}\right)\left(Y_{t}-\rho_{0,n}Y_{t-1}\right)}{\left(nh\right)^{-2}\sum_{t=T_{1}}^{T_{2}}\left(Y_{t-1}-\ol Y_{nh,-1}\right)^{2}}.\label{eq:MT-normalized=000020rho}
\end{equation}

The next theorem gives the asymptotic distribution of $nh\left(\widehat{\rho}_{n\tau}-\rho_{0,n}\right)$
and the t-statistic $T_{n}\left(\rho_{0,n}\right)$ in the local-to-unity
case. 
\begin{thm}[\textbf{Asymptotic Distribution of Normalized $\boldsymbol{\widehat{\rho}_{n\tau}}$
and $\boldsymbol{t}$-Statistic in the Local-to-Unity Case}]
\label{thm:MT-asym_rho_hat_local=000020to=000020unity=000020case}Under
the null hypothesis $H_{0}:\rho_{n\tau}=\rho_{0,n}$, Assumptions
\ref{assu:MT-h}, \ref{assu:MT-stationary-nh^5=000020->=0000200},
and \ref{assu:MT-kappa_n_sigma_n=000020mu_n=000020convergence}, $nh/b_{n}\rightarrow r_{0}\in\left[0,\infty\right)$
$\left(\text{local-to-unity case}\right)$, and Assumption \ref{assu:MT-bn=000020div=000020nh=000020converges}
if $r_{0}=0$, for a sequence $\left\{ \lambda_{n}=\left(\rho_{n},\mu_{n},\sigma_{n}^{2},\kappa_{n},b_{n},F_{n}\right)\in\Lambda_{n}\right\} _{n\geq1}$
and $\psi=r_{0}\kappa_{0}\left(\tau\right)$, we have 
\[
nh\left(\widehat{\rho}_{n\tau}-\rho_{0,n}\right)\dto\left(\int_{0}^{1}I_{D,\psi}^{*2}\left(s\right)ds\right)^{-1}\int_{0}^{1}I_{D,\psi}^{*}\left(s\right)dB\left(s\right)
\]
and 
\[
T_{n}\left(\rho_{0,n}\right)\dto\left(\int_{0}^{1}I_{D,\psi}^{*2}\left(s\right)ds\right)^{-1/2}\int_{0}^{1}I_{D,\psi}^{*}\left(s\right)dB\left(s\right).
\]
\end{thm}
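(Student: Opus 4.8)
The plan is to start from the ratio representation in \eqref{eq:MT-normalized=000020rho} and reduce both the numerator and the denominator to continuous functionals of the quantities whose \emph{joint} limits are recorded in Lemma~\ref{lem:MT-asympdist_components_local=000020to=000020unity=000020case}. For the denominator I would write $(nh)^{-2}\sum_{t=T_{1}}^{T_{2}}(Y_{t-1}-\ol Y_{nh,-1})^{2}=(nh)^{-2}\sum_{t=T_{1}}^{T_{2}}Y_{t-1}^{2}-\big((nh)^{-1/2}\ol Y_{nh,-1}\big)^{2}$, so that parts (b) and (c) of the lemma together with the continuous mapping theorem give the limit $\sigma_{0}^{2}(\tau)\big[\int_{0}^{1}I_{\psi}^{*2}(s)\,ds-(\int_{0}^{1}I_{\psi}^{*}(s)\,ds)^{2}\big]=\sigma_{0}^{2}(\tau)\int_{0}^{1}I_{D,\psi}^{*2}(s)\,ds$, which is exactly where the demeaned process $I_{D,\psi}^{*}$ enters.

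For the numerator I would substitute the model equation \eqref{eq:MT-tvp-model} to obtain $Y_{t}-\rho_{0,n}Y_{t-1}=(\mu_{t}-\rho_{t}\mu_{t-1})+(\rho_{t}-\rho_{0,n})Y_{t-1}+\sigma_{t}U_{t}$, splitting the numerator into an intercept term (A), a time-variation bias term (B), and the main martingale term (C). Since $\sum_{t=T_{1}}^{T_{2}}(Y_{t-1}-\ol Y_{nh,-1})=0$, any constant is annihilated in (A), so I would subtract the reference value $\mu_{n\tau}(1-\rho_{n\tau})$ and bound the remainder via Lemma~\ref{lem:MT-Max=000020Intertemporal=000020Difference}(a),(d) (which make the relevant increments $O(h/b_{n})$), Cauchy--Schwarz, and the order $O_{p}(nh)$ of $\sum(Y_{t-1}-\ol Y_{nh,-1})^{2}$; this yields (A)$=O_{p}(h(nh)^{1/2}/b_{n})=o_{p}(1)$. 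For (B), under $H_{0}$ one has $\rho_{t}-\rho_{0,n}=\rho_{t}-\rho_{n\tau}=O(h/b_{n})$ by Lemma~\ref{lem:MT-Max=000020Intertemporal=000020Difference}(a), and Cauchy--Schwarz gives (B)$=O_{p}(nh^{2}/b_{n})$, which is $o_{p}(1)$ because $nh^{2}/b_{n}=h\cdot(nh/b_{n})\to 0\cdot r_{0}=0$.

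The main term (C) equals $(nh)^{-1}\sum_{t=T_{1}}^{T_{2}}Y_{t-1}\sigma_{t}U_{t}-\big((nh)^{-1/2}\ol Y_{nh,-1}\big)\big((nh)^{-1/2}\sum_{t=T_{1}}^{T_{2}}\sigma_{t}U_{t}\big)$, and parts (b), (d), (e) of Lemma~\ref{lem:MT-asympdist_components_local=000020to=000020unity=000020case} give its limit $\sigma_{0}^{2}(\tau)\big[\int_{0}^{1}I_{\psi}^{*}\,dB-\int_{0}^{1}I_{\psi}^{*}\,ds\int_{0}^{1}dB\big]=\sigma_{0}^{2}(\tau)\int_{0}^{1}I_{D,\psi}^{*}(s)\,dB(s)$. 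Because all of these convergences hold jointly, the continuous mapping theorem applied to the ratio delivers the stated limit for $nh(\widehat{\rho}_{n\tau}-\rho_{0,n})$, with $\sigma_{0}^{2}(\tau)$ cancelling. In the $r_{0}=0$ case I would instead invoke Lemma~\ref{lem:MT-asympdist_components_local=000020to=000020unity=000020case}(g), Assumptions~\ref{assu:MT-stationary-nh^5=000020->=0000200} and \ref{assu:MT-bn=000020div=000020nh=000020converges}, and Lemma~\ref{lem:MT-Order=000020of=000020Y_T0}(b), to drop the asymptotically negligible initial condition $Y_{T_{0}}^{*}$ and set $\psi=0$, so that $I_{D,0}^{*}$ is a demeaned Brownian motion.

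For the $t$-statistic I would rewrite, using \eqref{eq:MT-t-stat=000020def}, $T_{n}(\rho_{0,n})=nh(\widehat{\rho}_{n\tau}-\rho_{0,n})\cdot\big[(nh)^{-2}\sum_{t=T_{1}}^{T_{2}}(Y_{t-1}-\ol Y_{nh,-1})^{2}\big]^{1/2}/\widehat{\sigma}_{n\tau}$. The first and bracketed factors were already identified, so the only remaining step is the consistency $\widehat{\sigma}_{n\tau}^{2}\pto\sigma_{0}^{2}(\tau)$, which I would obtain from \eqref{eq:MT-sigma=000020ntau=000020def} by expanding the residual as $\sigma_{t}U_{t}$ minus its sample mean plus terms controlled exactly as (A) and (B), together with the estimation-error contribution $(\widehat{\rho}_{n\tau}-\rho_{0,n})(Y_{t-1}-\ol Y_{nh,-1})$ whose averaged square is $O_{p}((nh)^{-1})$, and a law of large numbers giving $(nh)^{-1}\sum_{t=T_{1}}^{T_{2}}\sigma_{t}^{2}U_{t}^{2}\to\sigma_{0}^{2}(\tau)$. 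Substituting the three limits and cancelling $\sigma_{0}(\tau)$ and one factor of $\int_{0}^{1}I_{D,\psi}^{*2}$ yields $T_{n}(\rho_{0,n})\dto(\int_{0}^{1}I_{D,\psi}^{*2})^{-1/2}\int_{0}^{1}I_{D,\psi}^{*}\,dB$. I expect the main obstacle to be the uniform, simultaneous control of the two bias terms and the variance-estimator cross terms, since these are precisely where the time variation of $\rho$, $\mu$, and $\sigma^{2}$ and the endogenous initial condition enter; here the order bounds of Lemma~\ref{lem:MT-Max=000020Intertemporal=000020Difference} and Assumption~\ref{assu:MT-stationary-nh^5=000020->=0000200} must be combined so that every remainder is $o_{p}(1)$ jointly with the leading convergences, allowing a single application of the continuous mapping theorem.
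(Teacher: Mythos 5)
For the case $r_{0}\in(0,\infty)$ and for the $t$-statistic, your argument is correct and follows essentially the paper's own route: the same denominator computation via Lemma \ref{lem:MT-asympdist_components_local=000020to=000020unity=000020case}(b),(c) and the CMT, an algebraically equivalent numerator decomposition (you keep $(\rho_{t}-\rho_{0,n})Y_{t-1}$ intact and center the intercept term at $\mu_{n\tau}(1-\rho_{n\tau})$, whereas the paper splits $Y_{t-1}^{*}$ into $Y_{t-1}^{0}$ plus the initial-condition component $c_{t-1,t-1-T_{0}}Y_{T_{0}}^{*}$ and bounds $\mu_{t}-\rho_{0,n}\mu_{t-1}$ directly from the Lipschitz property of $\mu$ and $1-\rho_{0,n}=O((nh)^{-1})$; both yield the same $o_{p}(1)$ rates), and the same consistency argument for $\widehat{\sigma}_{n\tau}^{2}$.

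The genuine gap is your one-sentence treatment of $r_{0}=0$. Your claim that one can ``drop the asymptotically negligible initial condition'' is false at the relevant scale: Lemma \ref{lem:MT-Order=000020of=000020Y_T0}(b) gives only $Y_{T_{0}}^{*}=o_{p}\bigl(b_{n}/(nh)^{1/2}\bigr)$, and since $b_{n}/(nh)\to\infty$ when $r_{0}=0$, the quantity $(nh)^{-1/2}Y_{T_{0}}^{*}$ can diverge (e.g., a near-stationary start with $b_{n}\asymp n$ gives $Y_{T_{0}}^{*}=O_{p}(b_{n}^{1/2})$, so $(nh)^{-1/2}Y_{T_{0}}^{*}\asymp(b_{n}/nh)^{1/2}\to\infty$). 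Consequently your displayed numerator and denominator computations with the full $Y_{t-1}$, and Lemma \ref{lem:MT-asympdist_components_local=000020to=000020unity=000020case}(a)--(e), which are stated only for $r_{0}\in(0,\infty)$, do not apply, and no appeal to Lemma \ref{lem:MT-Order=000020of=000020Y_T0}(b) alone can rescue them. The correct mechanism, which the paper's proof makes precise, is near-constancy plus demeaning: $\max_{t\in[T_{1},T_{2}]}|c_{t-1,t-1-T_{0}}-\ol{c}_{nh}|=O(nh/b_{n})$ because $c_{t-1,t-1-T_{0}}=1+O(nh/b_{n})$ uniformly (via $\rho_{n\tau}^{nh}=\exp\{-\kappa_{n}^{*}(\tau)nh/b_{n}\}$ and Lemma \ref{lem:MT-Max=000020Intertemporal=000020Difference}(c)), so the initial-condition component of $Y_{t-1}$ is nearly constant over the estimation window and is annihilated by the intercept in the local regression; only the residual $(c_{t-1,t-1-T_{0}}-\ol{c}_{nh})Y_{T_{0}}^{*}=O(nh/b_{n})\,o_{p}\bigl(b_{n}/(nh)^{1/2}\bigr)=o_{p}\bigl((nh)^{1/2}\bigr)$ survives, which is negligible at the $(nh)^{1/2}$ scale of $Y_{t}^{0}$, and then Lemma \ref{lem:MT-asympdist_components_local=000020to=000020unity=000020case}(g) applied to $\mu_{t-1}+Y_{t-1}^{0}$ delivers the limits with $\psi=0$. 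You already use exactly this annihilation trick for the intercept term (A); the $r_{0}=0$ case requires deploying it for the initial condition as well, and this is where Assumption \ref{assu:MT-bn=000020div=000020nh=000020converges} actually enters. Citing the right lemmas without this step leaves the argument incomplete, and executed literally as stated it would fail.
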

\begin{rem}
\label{rem:MT-subseq=000020loc=000020unity=000020thm}For any subsequence
$\{p_{n}\}_{n\geq1}$ of $\{n\}_{n\geq1}$, Lemmas \ref{lem:MT-Max=000020Intertemporal=000020Difference}--\ref{lem:MT-asympdist_components_local=000020to=000020unity=000020case}
and Theorem \ref{thm:MT-asym_rho_hat_local=000020to=000020unity=000020case}
hold with $p_{n}$ in place of $n$ throughout and $h_{p_{n}}$ in
place of $h=h_{n}.$ 
\end{rem}

\subsection{\protect\label{sec:MT-Stationary=000020Case}Asymptotics in the Stationary
Case}

The ``stationary'' case is characterized by $b_{n}=o\left(nh\right)$,
or equivalently, $nh/b_{n}\rightarrow r_{0}=\infty.$ The stationary
case is defined such that the t-statistic has a standard normal distribution
under $H_{0}:\rho_{n\tau}=\rho_{0,n}$ in the stationary case. If
$b_{n}$ is bounded, then $\rho_{n\tau}\leq C_{\rho}$ for some constant
$C_{\rho}<1$ for all $n\geq1$. If $b_{n}$ diverges to infinity,
then $\rho_{n\tau}$ goes to one at a rate slower than $1/nh$. Thus,
the stationary case includes some scenarios where $\rho_{n\tau}$
goes to one. 

Define 
\begin{equation}
\ol{\rho}_{n}:=\max\left\{ \exp\left\{ -\kappa_{0}\left(\tau\right)/\left(2b_{n}\right)\right\} ,-1+\varepsilon_{1}\right\} ,\label{eq:MT-def=000020rho^bar=000020ol}
\end{equation}
where $-1+\varepsilon_{1}$ is a lower bound on $\rho_{t}$ by the
definition of $\Lambda_{n}.$ We can bound $|\rho_{t}|$ for $t\in\left[T_{1},T_{2}\right]$
using $\ol{\rho}_{n}$. First, suppose $\rho_{t}\leq0$, then $-1+\varepsilon_{1}\leq\rho_{t}\leq0$,
which implies that $|\rho_{t}|\leq$$\ol{\rho}_{n}$, as desired.
Given this, in the following calculations we suppose $\rho_{t}\geq0$
for all $t\in[0,1]$ without loss of generality. Then, for $n$ sufficiently
large, we have
\begin{align}
\max_{t\in\left[T_{1},T_{2}\right]}\abs{\rho_{t}} & \leq\max_{t\in\left[T_{1},T_{2}\right]}\abs{\rho_{t}-\rho_{n\tau}}+\abs{\rho_{n\tau}-\exp\left\{ -\kappa_{0}\left(\tau\right)/b_{n}\right\} }+\exp\left\{ -\kappa_{0}\left(\tau\right)/b_{n}\right\} ,\nonumber \\
 & \leq O\left(h/b_{n}\right)+o\left(1\right)/b_{n}+\exp\left\{ -\kappa_{0}\left(\tau\right)/b_{n}\right\} \leq\ol{\rho}_{n},\label{eq:MT-bound=000020max=000020rho_t=000020by=000020rho=000020bar}
\end{align}
where $b_{n}\geq\varepsilon_{3}>0$, the second inequality uses Lemma
\ref{lem:MT-Max=000020Intertemporal=000020Difference}(a), \eqref{eq:MT-rho_n},
Assumption \ref{assu:MT-kappa_n_sigma_n=000020mu_n=000020convergence},
and a mean value expansion of the $\exp\left\{ \cd\right\} $ function,
and the last inequality holds using $\kappa_{0}(\tau)>0$ and the
fact that when $b_{n}\rightarrow\infty$, $\overline{\rho}_{n}-\exp\{-\kappa_{0}(\tau)/b_{n}\}\geq K/b_{n}$
for some constant $K>0.$

Equation \eqref{eq:MT-bound=000020max=000020rho_t=000020by=000020rho=000020bar}
implies
\begin{equation}
\abs{c_{t,j}}=\abs{\prod_{k=0}^{j-1}\rho_{t-k}}\leq\overline{\rho}_{n}^{j}\ \text{for }j=1,...,t-T_{1}+1\ \text{and }t=T_{1},...,T_{2}.\label{eq:MT-c_tj=000020bounded-gto_1}
\end{equation}
Furthermore, using Lemma A.1 of \citet*{GIRAITISKapetaniosYates2014rckernel}
and the definition of the parameter space $\Lambda_{n}$, under $H_{0}:\rho_{n\tau}=\rho_{0,n}$,
we have 
\begin{equation}
\abs{c_{t,j}-\rho_{0,n}^{j}}\leq j\ol{\rho}_{n}^{j-1}\max_{k\in\left[T_{1},T_{2}\right]}\abs{\rho_{k}-\rho_{0,n}}=j\ol{\rho}_{n}^{j-1}O\left(h/b_{n}\right)\ \text{for }j=0,...,t-T_{0}\ \text{and }t=T_{1},...,T_{2},\label{eq:MT-ctj=000020-=000020rho^j=000020bound-gto_1}
\end{equation}
where the equality holds by Lemma \ref{lem:MT-Max=000020Intertemporal=000020Difference}(a). 

The following results are used in the analysis: 
\begin{align}
\sum_{t=0}^{\infty}\ol{\rho}_{n}^{t} & =\left(1-\ol{\rho}_{n}\right)^{-1}=O\left(b_{n}\right),\label{eq:MT-ps=000020rho^t}\\
\sum_{t=0}^{\infty}\ol{\rho}_{n}^{2t} & =\left(1-\ol{\rho}_{n}^{2}\right)^{-1}=O\left(b_{n}\right),\label{eq:MT-ps=000020rho^2t}\\
\sum_{t=0}^{\infty}t\ol{\rho}_{n}^{t} & =\ol{\rho}_{n}\left(1-\ol{\rho}_{n}\right)^{-2}=O\left(b_{n}^{2}\right),\label{eq:MT-ps=000020t*rho^t}\\
\sum_{t=0}^{\infty}t^{2}\ol{\rho}_{n}^{2t} & =\ol{\rho}_{n}^{2}\left(1+\ol{\rho}_{n}^{2}\right)\left(1-\ol{\rho}_{n}^{2}\right)^{-3}=O\left(b_{n}^{3}\right),\text{ and}\label{eq:MT-ps=000020t^2*rho^2t}\\
\sum_{t>s=0}^{nh}\ol{\rho}_{n}^{t-s} & =\sum_{t=1}^{nh}\sum_{s=0}^{t-1}\ol{\rho}_{n}^{t-s}=O\left(nhb_{n}\right).\label{eq:MT-ps=000020rho^(t-s)}
\end{align}
When $\rho_{0,n}=1-\kappa_{n}\left(\tau\right)/b_{n}$ with $b_{n}=o\left(nh\right)$
, one can replace $\ol{\rho}_{n}$ with $\rho_{0,n}$ in \eqref{eq:MT-ps=000020rho^t}--\eqref{eq:MT-ps=000020rho^(t-s)}
and the results still hold.

$\ $

We now establish the $N\left(0,1\right)$ asymptotic distribution
of $\widehat{\rho}_{n\tau}$ in the stationary case with the following
normalization: 
\begin{align}
 & \ \left(1-\rho_{0,n}^{2}\right)^{-1/2}\left(nh\right)^{1/2}\left(\widehat{\rho}_{n\tau}-\rho_{0,n}\right)\nonumber \\
= & \ \dfrac{\left(1-\rho_{0,n}^{2}\right)^{1/2}\left(nh\right)^{-1/2}\sum_{t=T_{1}}^{T_{2}}\left(Y_{t-1}-\ol Y_{nh,-1}\right)\left(Y_{t}-\rho_{0,n}Y_{t-1}\right)}{\left(1-\rho_{0,n}^{2}\right)\left(nh\right)^{-1}\sum_{t=T_{1}}^{T_{2}}\left(Y_{t-1}-\ol Y_{nh,-1}\right)^{2}}.\label{eq:MT-stationary=000020-=000020rho=000020hat-1}
\end{align}

Next, we prove a lemma on the asymptotic properties of the zero-initial
condition process $Y_{t}^{0}$ in the stationary case. 
\begin{lem}[\textbf{Asymptotics in the Stationary Case}]
\label{lem:MT-Asymptotic=000020Properties=000020of=000020the=000020Components=000020Stationary}
Under the null hypothesis $H_{0}:\rho_{n\tau}=\rho_{0,n}$ and Assumptions
\ref{assu:MT-h} and \ref{assu:MT-kappa_n_sigma_n=000020mu_n=000020convergence},
for a sequence $\left\{ \lambda_{n}=\left(\rho_{n},\mu_{n},\sigma_{n}^{2},\kappa_{n},b_{n},F_{n}\right)\in\Lambda_{n}\right\} _{n\geq1}$
where $nh/b_{n}\gto r_{0}=\infty$ and $\kappa_{0}\left(\tau\right)>0$
$\left(\text{stationary case}\right)$, the following results hold
jointly 
\begin{enumerate}
\item $\left(1-\rho_{0,n}^{2}\right)^{1/2}\left(nh\right)^{-1}\sum_{t=T_{1}}^{T_{2}}Y_{t-1}^{0}/\sigma_{0}\left(\tau\right)\pto0,$ 
\item $\left(1-\rho_{0,n}^{2}\right)\left(nh\right)^{-1}\sum_{t=T_{1}}^{T_{2}}\left(Y_{t-1}^{0}/\sigma_{0}\left(\tau\right)\right)^{2}\pto1,$
and 
\item $\left(1-\rho_{0,n}^{2}\right)^{1/2}\left(nh\right)^{-1/2}\sum_{t=T_{1}}^{T_{2}}Y_{t-1}^{0}\sigma_{t}U_{t}/\sigma_{0}^{2}\left(\tau\right)\dto N\left(0,1\right),$ 
\end{enumerate}
where $Y_{t}^{0}$ is defined in \eqref{eq:MT-Y_t^0=000020and=000020Y_t^star}. 
\end{lem}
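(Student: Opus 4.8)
The common thread in all three parts is that the normalization $1-\rho_{0,n}^{2}$ has exact order $1/b_{n}$ (since $\rho_{0,n}=1-\kappa_{n}(\tau)/b_{n}$ gives $1-\rho_{0,n}^{2}=(1-\rho_{0,n})(1+\rho_{0,n})\asymp\kappa_{0}(\tau)/b_{n}$), while every moment of $Y_{t-1}^{0}$ is governed by the geometric-sum bounds \eqref{eq:MT-ps=000020rho^t}--\eqref{eq:MT-ps=000020rho^(t-s)}. Throughout I would start from the innovation representation $Y_{t-1}^{0}=\sum_{j=0}^{t-1-T_{1}}c_{t-1,j}\sigma_{t-1-j}U_{t-1-j}$ in \eqref{eq:MT-Y_t^0=000020and=000020Y_t^star}, noting that all innovations entering $Y_{t-1}^{0}$ are indexed by times inside $[T_{1},T_{2}]$. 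I would then replace the random products $c_{t,j}$ by the deterministic powers $\rho_{0,n}^{j}$ via \eqref{eq:MT-ctj=000020-=000020rho^j=000020bound-gto_1} and replace $\sigma_{t}^{2}$ by $\sigma_{0}^{2}(\tau)$ via Lemma \ref{lem:MT-Max=000020Intertemporal=000020Difference}(b) and Assumption \ref{assu:MT-kappa_n_sigma_n=000020mu_n=000020convergence}; because each replacement carries a factor $O(h/b_{n})$ or $O(h)$, the induced errors will be shown to vanish after multiplying by the normalizing constants.

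For parts (a) and (b) I would use the mean-and-variance (Chebyshev) route. For (a), $\E Y_{t-1}^{0}=0$, so it suffices to bound $\V(\sum_{t}Y_{t-1}^{0})$. Using the orthogonality $\E(U_{a}U_{b})=\ind(a=b)$ together with \eqref{eq:MT-c_tj=000020bounded-gto_1} and \eqref{eq:MT-ps=000020rho^2t}, one gets $\abs{\E(Y_{t-1}^{0}Y_{s-1}^{0})}=O(b_{n})\,\ol{\rho}_{n}^{\abs{t-s}}$; summing over $t,s$ via \eqref{eq:MT-ps=000020rho^(t-s)} yields $\V(\sum_{t}Y_{t-1}^{0})=O(nh\,b_{n}^{2})$, and multiplying by $(1-\rho_{0,n}^{2})(nh)^{-2}=O(1/(b_{n}(nh)^{2}))$ gives $O(b_{n}/nh)\to0$ since $b_{n}=o(nh)$. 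For (b), the mean equals $(1-\rho_{0,n}^{2})(nh)^{-1}\sum_{t}\E(Y_{t-1}^{0})^{2}$; replacing $c_{t-1,j}^{2}$ by $\rho_{0,n}^{2j}$ and $\sigma^{2}$ by $\sigma_{0}^{2}(\tau)$ turns $\E(Y_{t-1}^{0})^{2}$ into $\sigma_{0}^{2}(\tau)/(1-\rho_{0,n}^{2})$ up to a geometric tail $\rho_{0,n}^{2(t-T_{1})}/(1-\rho_{0,n}^{2})$, so the mean converges to $\sigma_{0}^{2}(\tau)$ once the truncation and replacement errors (each $O(b_{n}/nh)$ or $O(h)$ after normalization) are discarded. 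Its variance is handled by the bound $\mathrm{Cov}((Y_{t-1}^{0})^{2},(Y_{s-1}^{0})^{2})=O(b_{n}^{2})\,\ol{\rho}_{n}^{2\abs{t-s}}$, which sums to $O(nh\,b_{n}^{3})$ and, after multiplication by $(1-\rho_{0,n}^{2})^{2}(nh)^{-2}$, again gives $O(b_{n}/nh)\to0$. Both conclusions then follow from Chebyshev's inequality.

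Part (c) I would obtain from a standard martingale central limit theorem for triangular arrays. Writing $\xi_{n,t}:=(1-\rho_{0,n}^{2})^{1/2}(nh)^{-1/2}Y_{t-1}^{0}\sigma_{t}U_{t}/\sigma_{0}^{2}(\tau)$, the sequence $\{\xi_{n,t},\mathscr{G}_{t}\}$ is a martingale difference array, because $Y_{t-1}^{0}$ is $\mathscr{G}_{t-1}$-measurable and $\E(U_{t}\mid\mathscr{G}_{t-1})=0$. The conditional-variance condition $\sum_{t}\E(\xi_{n,t}^{2}\mid\mathscr{G}_{t-1})=(1-\rho_{0,n}^{2})(nh)^{-1}\sigma_{0}^{-4}(\tau)\sum_{t}(Y_{t-1}^{0})^{2}\sigma_{t}^{2}\pto1$ follows from part (b) after replacing $\sigma_{t}^{2}$ by $\sigma_{0}^{2}(\tau)$. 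The conditional Lyapunov condition uses the fourth moments: $\E\xi_{n,t}^{4}\leq M\sigma_{t}^{4}\sigma_{0}^{-8}(\tau)(1-\rho_{0,n}^{2})^{2}(nh)^{-2}\E(Y_{t-1}^{0})^{4}$, and since $\E(Y_{t-1}^{0})^{4}=O(b_{n}^{2})$ (the fourth moment of a near-unit-root stationary-like AR, obtained again from \eqref{eq:MT-ps=000020rho^t}--\eqref{eq:MT-ps=000020t^2*rho^2t}), summing gives $\sum_{t}\E\xi_{n,t}^{4}=O(1/nh)\to0$. The CLT then delivers the $N(0,1)$ limit.

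The step I expect to be the main obstacle is the fourth-moment covariance bound $\mathrm{Cov}((Y_{t-1}^{0})^{2},(Y_{s-1}^{0})^{2})=O(b_{n}^{2})\,\ol{\rho}_{n}^{2\abs{t-s}}$ in part (b) and the companion bound $\E(Y_{t-1}^{0})^{4}=O(b_{n}^{2})$ in part (c): without Gaussianity these require expanding $Y_{t-1}^{0}$ into its innovation representation and classifying the index coincidences that make $\E(U_{i}U_{j}U_{k}U_{l})$ nonzero, controlling the additional fourth-cumulant contributions using only the conditional bound $\E(U_{t}^{4}\mid\mathscr{G}_{t-1})<M$. Equally delicate is propagating the $c_{t,j}\to\rho_{0,n}^{j}$ replacement error \eqref{eq:MT-ctj=000020-=000020rho^j=000020bound-gto_1} through these squared and quartic products, where it is multiplied by sums such as \eqref{eq:MT-ps=000020t*rho^t} and \eqref{eq:MT-ps=000020t^2*rho^2t} that grow in $b_{n}$; this bookkeeping must be done with care so that the errors still vanish after the normalization by powers of $(1-\rho_{0,n}^{2})(nh)^{-1}$.
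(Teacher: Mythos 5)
Your proposal is correct, and for parts (a) and (c) it is essentially the paper's own proof: part (a) is the same Chebyshev computation, with the cross-moment bound $\abs{\E(Y_{t-1}^{0}Y_{s-1}^{0})}=O(b_{n})\ol{\rho}_{n}^{\abs{t-s}}$ summed via \eqref{eq:MT-ps=000020rho^(t-s)}, and part (c) is the same martingale CLT (Corollary 3.1 of Hall and Heyde) with the Lindeberg condition verified through fourth moments and $\E(Y_{t-1}^{0})^{4}=O(b_{n}^{2})$, exactly as in \eqref{eq:lindeberg=000020cond=000020i-1}. The one genuinely different piece is the organization of the second-moment calculation in part (b): the paper computes $\E[(1-\rho_{0,n}^{2})(nh)^{-1}\sum_{t}(Y_{t-1}^{0})^{2}]^{2}$ directly, expanding into the quadruple sum \eqref{eq:a.11.10-Ab3} and classifying index coincidences into cases (i), (ii1)--(ii3), and (iii1)--(iii4), whereas you package it as the covariance-decay bound $\mathrm{Cov}((Y_{t-1}^{0})^{2},(Y_{s-1}^{0})^{2})=O(b_{n}^{2})\ol{\rho}_{n}^{2\abs{t-s}}$. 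That bound does hold under the mds assumptions, and can be established more cleanly than your last paragraph fears: for $t>s$, split $Y_{t-1}^{0}=A+B$, where $A$ collects innovations at times $\geq s$ and $B$ (the remainder, carrying coefficients bounded by $\ol{\rho}_{n}^{t-s}\ol{\rho}_{n}^{j-(t-s)}$) is $\mathscr{G}_{s-1}$-measurable; the tower property with $\E(\rest{U_{a}}\mathscr{G}_{a-1})=0$ and $\E(\rest{U_{a}^{2}}\mathscr{G}_{a-1})=1$ makes $\E(\rest{A^{2}}\mathscr{G}_{s-1})$ nonrandom and $\E(\rest A{\mathscr{G}_{s-1}})=0$, so $\mathrm{Cov}(A^{2}+2AB,(Y_{s-1}^{0})^{2})=0$ exactly, and Cauchy--Schwarz with $\E B^{4}\leq\ol{\rho}_{n}^{4(t-s)}O(b_{n}^{2})$ and $\E(Y_{s-1}^{0})^{4}=O(b_{n}^{2})$ handles the $B^{2}$ term, giving your $O(nhb_{n}^{3})$ total and hence $O(b_{n}/nh)=o(1)$ after normalization. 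Note that the fourth-moment bound $\E(Y_{t-1}^{0})^{4}=O(b_{n}^{2})$ itself still requires the coincidence classification you anticipated, including the three-index case where $\E U_{a}^{3}U_{b}$ need not vanish under an mds and is controlled only via H\"older's inequality and $\E(\rest{U_{t}^{4}}\mathscr{G}_{t-1})<M$ --- this is precisely the paper's case (iii), handled in \eqref{eq:case(iii)-1}, so your flagged ``main obstacle'' is real but resolvable by exactly the paper's computation. One small bookkeeping correction: in part (b) the replacement error from $c_{t,j}^{2}-\rho_{0,n}^{2j}$, fed through \eqref{eq:MT-ps=000020t*rho^t}, comes out as $O(h)$ (the paper's $A_{b2}$), not $O(b_{n}/nh)$; your ``each $O(b_{n}/nh)$ or $O(h)$'' covers this, and both vanish as required.
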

Lemma \ref{lem:MT-Asymptotic=000020Properties=000020of=000020the=000020Components=000020Stationary}
concerns the zero-initial condition process $Y_{t}^{0}$ with time-varying
autoregressive parameter $\rho_{t}$, which is the foundation of our
asymptotic analysis of $\widehat{\rho}_{n\tau}$. Essentially Lemma
\ref{lem:MT-Asymptotic=000020Properties=000020of=000020the=000020Components=000020Stationary}
says that when $nh/b_{n}\gto r_{0}=\infty$, the asymptotic distributions
of normalized sums of $Y_{t}^{0}$ behave in the same way as in an
AR process with constant $\rho<1$. The proof involves applications
of appropriate weak laws of large numbers and central limit theorems
for martingale difference triangular arrays and approximations of
TVPs. One can extend the results to allow the $U_{t}$ process to
be conditionally heteroskedastic, but this requires a different definition
of $\widehat{\sigma}_{n\tau}^{2}$ and complicates the analysis, see
\citet{andrews2014conditional}. For simplicity, we do not consider
this extension here.

$\ $

Define 
\begin{equation}
\ol{\mu}_{nh}=\left(nh\right)^{-1}\sum_{t=T_{1}}^{T_{2}}\mu_{t}\ \text{and }\ol{\mu}_{nh,-1}=\left(nh\right)^{-1}\sum_{t=T_{1}}^{T_{2}}\mu_{t-1}.\label{eq:MT-def=000020-=000020olmu_nh}
\end{equation}

\begin{lem}[\textbf{Asymptotics of the Denominator in the Stationary Case}]
\label{lem:MT-asympdist_dist_stationary-denom} Under the null hypothesis
$H_{0}:\rho_{n\tau}=\rho_{0,n}$ and Assumptions \ref{assu:MT-h},
\ref{assu:MT-stationary-nh^5=000020->=0000200}, and \ref{assu:MT-kappa_n_sigma_n=000020mu_n=000020convergence},
for a sequence $\left\{ \lambda_{n}=\left(\rho_{n},\mu_{n},\sigma_{n}^{2},\kappa_{n},b_{n},F_{n}\right)\in\Lambda_{n}\right\} _{n\geq1}$
where $nh/b_{n}\gto r_{0}=\infty$ and $\kappa_{0}\left(\tau\right)>0$
$\left(\text{stationary case}\right)$, the following results hold 
\begin{enumerate}
\item $\left(1-\rho_{0,n}^{2}\right)\left(nh\right)^{-1}\sum_{t=T_{1}}^{T_{2}}\left(Y_{t-1}-\ol{\mu}_{nh,-1}\right)^{2}/\sigma_{0}^{2}\left(\tau\right)\pto1$,
and 
\item $\left(1-\rho_{0,n}^{2}\right)\left(\ol Y_{nh,-1}-\ol{\mu}_{nh,-1}\right)^{2}/\sigma_{0}^{2}\left(\tau\right)\pto0$. 
\end{enumerate}
\end{lem}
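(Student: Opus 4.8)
The plan is to reduce both claims to the zero-initial-condition process $Y_{t}^{0}$, whose stationary-case behavior is supplied by Lemma~\ref{lem:MT-Asymptotic=000020Properties=000020of=000020the=000020Components=000020Stationary}, and then to show that the two ``extra'' pieces that distinguish $Y_{t-1}$ from $Y_{t-1}^{0}$---the deterministic drift and the endogenous initial condition---are asymptotically negligible after the $\left(1-\rho_{0,n}^{2}\right)(nh)^{-1}$ normalization. Using \eqref{eq:MT-YtandYt0} together with $T_{0}=T_{1}-1$, I would write, for $t=T_{1},\dots,T_{2}$,
\begin{equation*}
Y_{t-1}-\ol{\mu}_{nh,-1}=A_{t}+B_{t}+C_{t},\quad A_{t}:=\mu_{t-1}-\ol{\mu}_{nh,-1},\ \ B_{t}:=Y_{t-1}^{0},\ \ C_{t}:=c_{t-1,t-T_{1}}Y_{T_{0}}^{*}.
\end{equation*}
Throughout I would use that $\sigma_{0}^{2}(\tau)$ is a positive constant (bounded away from $0$ and $\infty$), that $1-\rho_{0,n}^{2}=O(1/b_{n})$ (since $\rho_{0,n}=1-\kappa_{n}(\tau)/b_{n}$, $\kappa_{n}(\tau)\leq C_{4}$, $b_{n}\geq\varepsilon_{3}$), the geometric-sum bounds \eqref{eq:MT-ps=000020rho^t}--\eqref{eq:MT-ps=000020rho^2t}, and the coefficient bound $|c_{t,j}|\leq\ol{\rho}_{n}^{j}$ from \eqref{eq:MT-c_tj=000020bounded-gto_1}.

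For part (a) I would expand $\sum_{t}(A_{t}+B_{t}+C_{t})^{2}$ into three squared sums and three cross sums. The leading term is the $B_{t}$ sum: by Lemma~\ref{lem:MT-Asymptotic=000020Properties=000020of=000020the=000020Components=000020Stationary}(b), $\left(1-\rho_{0,n}^{2}\right)(nh)^{-1}\sum_{t}B_{t}^{2}/\sigma_{0}^{2}(\tau)\pto1$. For the drift sum, Lemma~\ref{lem:MT-Max=000020Intertemporal=000020Difference}(d) gives $\max_{t}|A_{t}|=O(h/b_{n})$, so $\sum_{t}A_{t}^{2}=O(nh^{3}/b_{n}^{2})$ and the normalized term is $O(h^{2}/b_{n}^{3})=o(1)$. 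For the initial-condition sum, $\sum_{t}C_{t}^{2}=(Y_{T_{0}}^{*})^{2}\sum_{j=0}^{T_{2}-T_{1}}\ol{\rho}_{n}^{2j}=O_{p}(b_{n})\cdot O(b_{n})=O_{p}(b_{n}^{2})$ by Lemma~\ref{lem:MT-Order=000020of=000020Y_T0}(c) and \eqref{eq:MT-ps=000020rho^2t}, so the normalized term is $O_{p}(b_{n}/(nh))=o_{p}(1)$ since $nh/b_{n}\to\infty$. Each cross sum is then handled by Cauchy--Schwarz: the normalized $|\sum_{t}X_{t}Z_{t}|$ is bounded by the geometric mean of the two normalized squared sums, and since at least one factor tends to zero, every cross term is $o_{p}(1)$. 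Adding the contributions gives part (a).

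For part (b) I would write $\ol Y_{nh,-1}-\ol{\mu}_{nh,-1}=(nh)^{-1}\sum_{t}(B_{t}+C_{t})$ and bound the two averages separately via $(\alpha+\beta)^{2}\leq2\alpha^{2}+2\beta^{2}$. The $B_{t}$ average is controlled directly by Lemma~\ref{lem:MT-Asymptotic=000020Properties=000020of=000020the=000020Components=000020Stationary}(a), which states exactly that $\left(1-\rho_{0,n}^{2}\right)^{1/2}(nh)^{-1}\sum_{t}B_{t}/\sigma_{0}(\tau)\pto0$, so its squared normalized version vanishes. For the $C_{t}$ average, $(nh)^{-1}\sum_{t}C_{t}=(nh)^{-1}Y_{T_{0}}^{*}\sum_{t}c_{t-1,t-T_{1}}$ with $|\sum_{t}c_{t-1,t-T_{1}}|\leq\sum_{j}\ol{\rho}_{n}^{j}=O(b_{n})$ by \eqref{eq:MT-ps=000020rho^t} and $Y_{T_{0}}^{*}=O_{p}(b_{n}^{1/2})$, so this average is $O_{p}(b_{n}^{3/2}/(nh))$; squaring and multiplying by $1-\rho_{0,n}^{2}=O(1/b_{n})$ yields $O_{p}((b_{n}/(nh))^{2})=o_{p}(1)$.

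The main obstacle is the endogenous initial condition $C_{t}$. The crucial input is the sharp order $Y_{T_{0}}^{*}=O_{p}(b_{n}^{1/2})$ from Lemma~\ref{lem:MT-Order=000020of=000020Y_T0}(c), which is precisely where Assumption~\ref{assu:MT-stationary-nh^5=000020->=0000200} ($nh^{5}\to0$) enters this lemma. The weaker bound $Y_{T_{0}}^{*}=O_{p}(n^{1/2})$ of Lemma~\ref{lem:MT-Order=000020of=000020Y_T0}(a) would not suffice: it would give a normalized initial-condition contribution of order $O_{p}(1/h)$ in part (a), which diverges. Thus the delicate point is not the algebra of the decomposition but the quantitative statement that, in the stationary regime $nh/b_{n}\to\infty$, the geometric decay of $c_{t-1,t-T_{1}}$ on a scale $\asymp b_{n}$ over a window of length $\asymp nh\gg b_{n}$ combines with the sharp $O_{p}(b_{n}^{1/2})$ initial-condition bound to render the initial condition asymptotically irrelevant for the (normalized) regressor second moment.
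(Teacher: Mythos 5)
Your proposal is correct and follows essentially the same route as the paper's proof: the identical decomposition $Y_{t-1}-\ol{\mu}_{nh,-1}=Y_{t-1}^{0}+(\mu_{t-1}-\ol{\mu}_{nh,-1})+c_{t-1,t-1-T_{0}}Y_{T_{0}}^{*}$, with Lemma \ref{lem:MT-Asymptotic=000020Properties=000020of=000020the=000020Components=000020Stationary}(a)--(b) supplying the leading terms, Lemmas \ref{lem:MT-Max=000020Intertemporal=000020Difference}(d) and \ref{lem:MT-Order=000020of=000020Y_T0}(c) plus the geometric-sum bounds killing the drift and initial-condition pieces, and Cauchy--Schwarz disposing of the cross terms. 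Your closing observation about where Assumption \ref{assu:MT-stationary-nh^5=000020->=0000200} enters (only through the sharp bound $Y_{T_{0}}^{*}=O_{p}(b_{n}^{1/2})$ of Lemma \ref{lem:MT-Order=000020of=000020Y_T0}(c), without which the initial-condition contribution would be $O_{p}(1/h)$) matches exactly how the dependence is tracked in the paper's proof of Theorem \ref{thm:ThmA.1_As2star}.
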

Lemma \ref{lem:MT-asympdist_dist_stationary-denom} concerns the asymptotic
distribution of the denominator of the normalized $\widehat{\rho}_{n\tau}$
in \eqref{eq:MT-stationary=000020-=000020rho=000020hat-1}. To bound
the difference between $Y_{t}^{0}$ and $Y_{t}$ and control for TVPs
asymptotically, we use various inequalities and approximations in
the proof of Lemma \ref{lem:MT-asympdist_dist_stationary-denom}. 
\begin{lem}[\textbf{Asymptotics of the Numerator in the Stationary Case}]
\label{lem:MT-asympdist_dist_stationary-numerator=000020} Under
the null hypothesis $H_{0}:\rho_{n\tau}=\rho_{0,n}$ and Assumptions
\ref{assu:MT-h}, \ref{assu:MT-stationary-nh^5=000020->=0000200},
and \ref{assu:MT-kappa_n_sigma_n=000020mu_n=000020convergence}, for
a sequence $\left\{ \lambda_{n}=\left(\rho_{n},\mu_{n},\sigma_{n}^{2},\kappa_{n},b_{n},F_{n}\right)\in\Lambda_{n}\right\} _{n\geq1}$
where $nh/b_{n}\gto r_{0}=\infty$ and $\kappa_{0}\left(\tau\right)>0$
$\left(\text{stationary case}\right)$, the following results hold 
\begin{enumerate}
\item $\left(1-\rho_{0,n}^{2}\right)^{1/2}\left(nh\right)^{-1/2}\sum_{t=T_{1}}^{T_{2}}\left(Y_{t-1}-\ol{\mu}_{nh,-1}\right)\left[\begin{array}{c}
Y_{t}-\ol{\mu}_{nh}-\rho_{0,n}\left(Y_{t-1}-\ol{\mu}_{nh,-1}\right)\end{array}\right]/\sigma_{0}^{2}\left(\tau\right)$

$\dto N\left(0,1\right)$, and 
\item $\left(1-\rho_{0,n}^{2}\right)^{1/2}\left(nh\right)^{-1/2}\sum_{t=T_{1}}^{T_{2}}\left(\ol Y_{nh,-1}-\ol{\mu}_{nh,-1}\right)\left[\begin{array}{c}
Y_{t}-\ol{\mu}_{nh}-\rho_{0,n}\left(Y_{t-1}-\ol{\mu}_{nh,-1}\right)\end{array}\right]/\sigma_{0}^{2}\left(\tau\right)$

$\pto0$. 
\end{enumerate}
\end{lem}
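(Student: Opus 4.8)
The plan is to expand the summand in part~(a) into one martingale term that carries the $N(0,1)$ limit and a collection of remainders, each of which I would show is $o_p(1)$ after multiplication by $(1-\rho_{0,n}^2)^{1/2}(nh)^{-1/2}$. First I would use $Y_t=\mu_t+Y_t^*$, the identity $\sum_{t=T_1}^{T_2}(\mu_t-\ol{\mu}_{nh})=0$ from the definition of $\ol{\mu}_{nh}$, and the one-step relation $Y_t^*-\rho_{0,n}Y_{t-1}^*=(\rho_t-\rho_{0,n})Y_{t-1}^*+\sigma_t U_t$ to write
\begin{align*}
Y_t-\ol{\mu}_{nh}-\rho_{0,n}(Y_{t-1}-\ol{\mu}_{nh,-1})&=\sigma_t U_t+(\rho_t-\rho_{0,n})Y_{t-1}^*\\
&\quad+(\mu_t-\ol{\mu}_{nh})-\rho_{0,n}(\mu_{t-1}-\ol{\mu}_{nh,-1}),
\end{align*}
together with $Y_{t-1}-\ol{\mu}_{nh,-1}=Y_{t-1}^*+(\mu_{t-1}-\ol{\mu}_{nh,-1})$. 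Multiplying these out produces a main term $Y_{t-1}^*\sigma_t U_t$, a $\rho$-bias term $(\rho_t-\rho_{0,n})(Y_{t-1}^*)^2$, cross terms involving the deterministic mean deviation $\mu_{t-1}-\ol{\mu}_{nh,-1}$, and lower-order products.

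For the main term I would replace $Y_{t-1}^*$ by the zero-initial-condition process $Y_{t-1}^0$; the difference $c_{t-1,t-1-T_0}Y_{T_0}^*$ contributes a martingale sum whose conditional variance is $O_p(b_n^2)$ by Lemma~\ref{lem:MT-Order=000020of=000020Y_T0}(c), the bound $|c_{t-1,t-1-T_0}|\le\ol{\rho}_n^{\,t-1-T_0}$ in \eqref{eq:MT-c_tj=000020bounded-gto_1}, and the geometric sum \eqref{eq:MT-ps=000020rho^2t}; after normalization this is $O_p(b_n^{1/2}(nh)^{-1/2})=o_p(1)$ since $b_n=o(nh)$. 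The surviving piece $(1-\rho_{0,n}^2)^{1/2}(nh)^{-1/2}\sum_{t=T_1}^{T_2}Y_{t-1}^0\sigma_t U_t/\sigma_0^2(\tau)$ is exactly the object shown to converge to $N(0,1)$ in Lemma~\ref{lem:MT-Asymptotic=000020Properties=000020of=000020the=000020Components=000020Stationary}(c), so part~(a) will follow by Slutsky once every remainder is negligible.

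The hard part will be the $\rho$-bias term $\sum_{t=T_1}^{T_2}(\rho_t-\rho_{0,n})(Y_{t-1}^*)^2$. Splitting $Y_{t-1}^*=Y_{t-1}^0+c_{t-1,t-1-T_0}Y_{T_0}^*$, the initial-condition pieces are controlled by Lemma~\ref{lem:MT-Order=000020of=000020Y_T0}(c) with the geometric bounds, and the leading piece carries the second moments $E(Y_{t-1}^0)^2=O(b_n)$, a consequence of \eqref{eq:MT-c_tj=000020bounded-gto_1}--\eqref{eq:MT-ps=000020rho^2t}. Using $(1-\rho_{0,n}^2)^{1/2}\asymp b_n^{-1/2}$, a crude $L^1$ bound is not sharp enough, so I would split this term into mean and centered fluctuation. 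For the mean I would Taylor-expand $\rho_t-\rho_{0,n}=-(\kappa_n(t/n)-\kappa_n(\tau))/b_n$ about $\tau$, invoking the twice-continuous differentiability of $\kappa$ in part~(ii) of $\Lambda_n$; the linear term nearly cancels because $[T_1,T_2]$ is essentially symmetric about $\lfloor n\tau\rfloor$, leaving a quadratic contribution of order $nh^3$ whose normalized size is $O(b_n^{-1/2}(nh^5)^{1/2})\to0$ precisely by Assumption~\ref{assu:MT-stationary-nh^5=000020->=0000200}. For the centered fluctuation I would bound its variance through the geometrically decaying autocovariances of $Y_t^0$ (again via \eqref{eq:MT-ps=000020rho^t}--\eqref{eq:MT-ps=000020rho^(t-s)}), which yields a normalized order $O_p(h)=o_p(1)$. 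The remaining cross terms are handled directly: $\sum(\mu_{t-1}-\ol{\mu}_{nh,-1})\sigma_tU_t$ is a martingale sum with variance $O(nh^3/b_n^2)$ by Lemma~\ref{lem:MT-Max=000020Intertemporal=000020Difference}(d), and the purely deterministic and mixed products are dominated by the same Lipschitz and geometric-sum bounds, all $o_p(1)$ after normalization.

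For part~(b) I would first note the exact simplification, valid because $\sum_{t=T_1}^{T_2}(\mu_t-\ol{\mu}_{nh})=\sum_{t=T_1}^{T_2}(\mu_{t-1}-\ol{\mu}_{nh,-1})=0$,
\[
\sum_{t=T_1}^{T_2}\big[Y_t-\ol{\mu}_{nh}-\rho_{0,n}(Y_{t-1}-\ol{\mu}_{nh,-1})\big]=\sum_{t=T_1}^{T_2}\sigma_tU_t+\sum_{t=T_1}^{T_2}(\rho_t-\rho_{0,n})Y_{t-1}^*,
\]
so the whole expression in~(b) factors as the scalar $(\ol Y_{nh,-1}-\ol{\mu}_{nh,-1})$ times $(1-\rho_{0,n}^2)^{1/2}(nh)^{-1/2}/\sigma_0^2(\tau)$ times this sum. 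The martingale part is $O_p((nh)^{1/2})$ and the bias part is of smaller order $O_p((nh^3)^{1/2})$ by the variance computation above, so the second factor is $O_p(b_n^{-1/2})$; combined with $\ol Y_{nh,-1}-\ol{\mu}_{nh,-1}=o_p(b_n^{1/2})$ from Lemma~\ref{lem:MT-asympdist_dist_stationary-denom}(b), the product is $o_p(1)$, which gives part~(b).
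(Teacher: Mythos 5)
Your proposal is correct and takes essentially the same route as the paper's proof: the same expansion isolating the martingale term $\left(1-\rho_{0,n}^{2}\right)^{1/2}\left(nh\right)^{-1/2}\sum_{t=T_{1}}^{T_{2}}Y_{t-1}^{0}\sigma_{t}U_{t}$, whose $N(0,1)$ limit is Lemma \ref{lem:MT-Asymptotic=000020Properties=000020of=000020the=000020Components=000020Stationary}(c); the same mean/centered-fluctuation split of the bias term $\sum_{t}\left(\rho_{t}-\rho_{0,n}\right)\left(Y_{t-1}^{0}\right)^{2}$ with the Taylor-plus-symmetry cancellation that invokes Assumption \ref{assu:MT-stationary-nh^5=000020->=0000200}; the same use of Lemma \ref{lem:MT-Order=000020of=000020Y_T0}(c) and the geometric sums for the initial-condition and cross terms; and the same factorization of part (b) via Lemma \ref{lem:MT-asympdist_dist_stationary-denom}(b). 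The only difference is bookkeeping: the paper makes the mean term exactly constant in $t$ by comparing $Y_{t-1}^{0}$ with the constant-parameter surrogate $Y_{t-1}^{c}$ (its terms $A_{c1,1}$--$A_{c1,3}$), whereas you expand directly against $\E\left(Y_{t-1}^{0}\right)^{2}$, which additionally requires controlling the (geometrically decaying, hence harmless) $t$-dependence of that variance near $T_{1}$.
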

Lemma \ref{lem:MT-asympdist_dist_stationary-numerator=000020} concerns
the asymptotic distribution of the numerator of the normalized $\widehat{\rho}_{n\tau}$
in \eqref{eq:MT-stationary=000020-=000020rho=000020hat-1}. The proof
of this lemma uses the results in Lemmas \ref{lem:MT-Order=000020of=000020Y_T0}(c)
and \ref{lem:MT-Asymptotic=000020Properties=000020of=000020the=000020Components=000020Stationary}.

$\ $

The next theorem provides the limit distributions of $\left(1-\rho_{0,n}^{2}\right)^{-1/2}\left(nh\right)^{1/2}\left(\widehat{\rho}_{n\tau}-\rho_{0,n}\right)$
and $T_{n}\left(\rho_{0,n}\right)$ in the stationary $\rho_{n\tau}$
case. 
\begin{thm}[\textbf{Asymptotic Distribution of Normalized $\boldsymbol{\widehat{\rho}_{n\tau}}$
and $\boldsymbol{t}$-Statistic in the Stationary Case}]
\label{thm:MT-asympdist_dist_rho^hat-stationary} Under the null
hypothesis $H_{0}:\rho_{n\tau}=\rho_{0,n}$ and Assumptions \ref{assu:MT-h},
\ref{assu:MT-stationary-nh^5=000020->=0000200}, and \ref{assu:MT-kappa_n_sigma_n=000020mu_n=000020convergence},
for a sequence $\left\{ \lambda_{n}=\left(\rho_{n},\mu_{n},\sigma_{n}^{2},\kappa_{n},b_{n},F_{n}\right)\in\Lambda_{n}\right\} _{n\geq1}$
where $nh/b_{n}\gto r_{0}=\infty$ and $\kappa_{0}\left(\tau\right)>0$
$\left(\text{stationary case}\right)$, we have 
\[
\left(1-\rho_{0,n}^{2}\right)^{-1/2}\left(nh\right)^{1/2}\left(\widehat{\rho}_{n\tau}-\rho_{0,n}\right)\dto N\left(0,1\right)
\]
and 
\[
T_{n}\left(\rho_{0,n}\right)\dto N\left(0,1\right).
\]
\end{thm}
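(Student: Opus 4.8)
The plan is to evaluate the ratio in \eqref{eq:MT-stationary=000020-=000020rho=000020hat-1} by replacing the sample-mean centerings $\ol Y_{nh,-1}$ and $\ol Y_{nh}$ with the deterministic centerings $\ol\mu_{nh,-1}$ and $\ol\mu_{nh}$, which is exactly the form in which Lemmas \ref{lem:MT-asympdist_dist_stationary-denom} and \ref{lem:MT-asympdist_dist_stationary-numerator=000020} are stated, and then to apply Slutsky's theorem. The first conclusion then follows from algebra plus those two lemmas; the only genuinely new ingredient, needed only for the $t$-statistic, is the consistency of $\widehat\sigma_{n\tau}^2$.

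For the numerator of \eqref{eq:MT-stationary=000020-=000020rho=000020hat-1}, note that $\sum_{t=T_1}^{T_2}(Y_{t-1}-\ol Y_{nh,-1})=0$, so I may subtract the constant $\ol\mu_{nh}-\rho_{0,n}\ol\mu_{nh,-1}$ inside the second factor without changing the sum, turning $Y_t-\rho_{0,n}Y_{t-1}$ into $Y_t-\ol\mu_{nh}-\rho_{0,n}(Y_{t-1}-\ol\mu_{nh,-1})$. Writing the first factor as $(Y_{t-1}-\ol\mu_{nh,-1})-(\ol Y_{nh,-1}-\ol\mu_{nh,-1})$ splits the numerator into precisely the two sums in Lemma \ref{lem:MT-asympdist_dist_stationary-numerator=000020}(a) and (b); after the normalization $(1-\rho_{0,n}^2)^{1/2}(nh)^{-1/2}/\sigma_0^2(\tau)$, part (a) gives $\dto N(0,1)$ and part (b) gives $\pto 0$. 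For the denominator, the variance decomposition $\sum(Y_{t-1}-\ol Y_{nh,-1})^2=\sum(Y_{t-1}-\ol\mu_{nh,-1})^2-nh(\ol Y_{nh,-1}-\ol\mu_{nh,-1})^2$ splits it into the two quantities in Lemma \ref{lem:MT-asympdist_dist_stationary-denom}, which after division by $\sigma_0^2(\tau)$ give $\pto 1$ and $\pto 0$, respectively. Slutsky's theorem then yields $(1-\rho_{0,n}^2)^{-1/2}(nh)^{1/2}(\widehat\rho_{n\tau}-\rho_{0,n})\dto N(0,1)$.

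For the $t$-statistic, from \eqref{eq:MT-t-stat=000020def} I write $T_n(\rho_{0,n})=Z_n\,D_n^{1/2}/\widehat\sigma_{n\tau}$, where $Z_n:=(1-\rho_{0,n}^2)^{-1/2}(nh)^{1/2}(\widehat\rho_{n\tau}-\rho_{0,n})\dto N(0,1)$ and $D_n:=(1-\rho_{0,n}^2)(nh)^{-1}\sum(Y_{t-1}-\ol Y_{nh,-1})^2\pto\sigma_0^2(\tau)$ by the preceding paragraph. It thus suffices to show $\widehat\sigma_{n\tau}^2\pto\sigma_0^2(\tau)$, after which Slutsky's theorem (using $\sigma_0(\tau)>0$) delivers the claim. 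Setting $e_t:=Y_t-\ol Y_{nh}-\rho_{0,n}(Y_{t-1}-\ol Y_{nh,-1})$ in \eqref{eq:MT-sigma=000020ntau=000020def}, the LS normal equations give $\sum e_t(Y_{t-1}-\ol Y_{nh,-1})=(\widehat\rho_{n\tau}-\rho_{0,n})\sum(Y_{t-1}-\ol Y_{nh,-1})^2$, so the cross and quadratic terms collapse to $-(\widehat\rho_{n\tau}-\rho_{0,n})^2(nh)^{-1}\sum(Y_{t-1}-\ol Y_{nh,-1})^2=O_p((nh)^{-1})=o_p(1)$. Hence $\widehat\sigma_{n\tau}^2=(nh)^{-1}\sum e_t^2+o_p(1)$, and since $e_t$ is the deterministically demeaned version of $v_t:=(\mu_t-\rho_{0,n}\mu_{t-1})+(\rho_t-\rho_{0,n})Y_{t-1}^*+\sigma_t U_t$, a weak law of large numbers for the martingale-difference array $\sigma_t^2U_t^2$ (with $\sigma_t^2\to\sigma_0^2(\tau)$) gives $(nh)^{-1}\sum\sigma_t^2U_t^2\pto\sigma_0^2(\tau)$.

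The main obstacle is verifying that the deterministic and stochastic bias parts of $v_t$ are asymptotically negligible, uniformly over $[T_1,T_2]$, while the centering is by sample means. The $\mu$-contribution is handled by $\mu_t-\rho_{0,n}\mu_{t-1}=(1-\rho_{0,n})\mu_t+\rho_{0,n}(\mu_t-\mu_{t-1})=O(1/b_n)+O(1/n)$, so its averaged square is $o(1)$. For the stochastic bias, Lemma \ref{lem:MT-Max=000020Intertemporal=000020Difference}(a) gives $|\rho_t-\rho_{0,n}|=O(h/b_n)$ and the stationary-case bound $E(Y_{t-1}^*)^2=O(b_n)$ (as in Lemma \ref{lem:MT-Order=000020of=000020Y_T0}(c)) yields $E[(\rho_t-\rho_{0,n})^2(Y_{t-1}^*)^2]=O(h^2/b_n)=o(1)$; the cross terms vanish by Cauchy--Schwarz, and the demeaning correction $\bar v^2=o_p(1)$ similarly. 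Apart from this bookkeeping, everything reduces to Slutsky's theorem and the cited lemmas.
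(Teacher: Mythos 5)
Your overall architecture matches the paper's proof: the same $\ol{\mu}$-recentering reduces the numerator and denominator of \eqref{eq:MT-stationary=000020-=000020rho=000020hat-1} exactly to Lemmas \ref{lem:MT-asympdist_dist_stationary-numerator=000020} and \ref{lem:MT-asympdist_dist_stationary-denom} plus Slutsky, and your normal-equations identity, which collapses the $\widehat{\rho}_{n\tau}$-versus-$\rho_{0,n}$ correction in $\widehat{\sigma}_{n\tau}^{2}$ to $-(\widehat{\rho}_{n\tau}-\rho_{0,n})^{2}(nh)^{-1}\sum_{t=T_{1}}^{T_{2}}(Y_{t-1}-\ol Y_{nh,-1})^{2}=O_{p}((nh)^{-1})$, is a correct and genuinely cleaner shortcut than the paper's handling of its terms $A_{f6}$ and $A_{f7}$, where the cross term is disposed of by Cauchy--Schwarz rather than made to vanish exactly.

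There is, however, a genuine gap in your treatment of $(nh)^{-1}\sum e_{t}^{2}$: the two intermediate claims that the averaged square of $d_{t}:=\mu_{t}-\rho_{0,n}\mu_{t-1}$ is $o(1)$ and that $\bar{v}^{2}=o_{p}(1)$ are both false on part of the parameter space the theorem covers. The stationary case requires only $nh/b_{n}\to\infty$ and includes bounded $b_{n}$ (e.g., $b_{n}\equiv\varepsilon_{3}$, a fixed $\rho(\tau)<1$); if in addition $\mu(\tau)\neq0$, which is allowed since $C_{\mu2}$ is arbitrary, then $d_{t}=(1-\rho_{0,n})\mu_{t}+O(n^{-1})$ is bounded away from zero, so $(nh)^{-1}\sum d_{t}^{2}$ does not vanish and $\bar{v}$ converges to a nonzero limit, so $\bar{v}^{2}$ does not converge to zero in probability; your Cauchy--Schwarz disposal of the $\sigma_{t}U_{t}\times d_{t}$ cross term fails for the same reason. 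The two errors happen to cancel in the combination $(nh)^{-1}\sum v_{t}^{2}-\bar{v}^{2}=(nh)^{-1}\sum(v_{t}-\bar{v})^{2}$, because the deterministic component enters $e_{t}=v_{t}-\bar{v}$ only through the deviations $d_{t}-\bar{d}$, and these \emph{are} uniformly $O(h/b_{n})=o(1)$ --- but seeing this requires Lemma \ref{lem:MT-Max=000020Intertemporal=000020Difference}(d), i.e., working throughout with the demeaned terms $(\mu_{t}-\ol{\mu}_{nh})-\rho_{0,n}(\mu_{t-1}-\ol{\mu}_{nh,-1})$, which is exactly what the paper's expansion of its term $A_{f51}$ does. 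As written, your argument establishes the claim only when $b_{n}\to\infty$ or $\mu\equiv0$; the fix is local (replace the raw $d_{t}$ by $d_{t}-\bar{d}$ everywhere, handling the $U_{t}$ cross terms by the martingale variance bound or by Cauchy--Schwarz after demeaning). Your remaining ingredients --- $\E(Y_{t-1}^{\ast})^{2}=O(b_{n})$ uniformly on $[T_{1},T_{2}]$ by the argument of Lemma \ref{lem:MT-Order=000020of=000020Y_T0}(c), hence the averaged $(\rho_{t}-\rho_{0,n})^{2}Y_{t-1}^{\ast2}$ being $O_{p}(h^{2}/b_{n})$, and the martingale WLLN for $\sigma_{t}^{2}U_{t}^{2}$ --- are sound and coincide with the paper's.
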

\begin{rem}
\label{rem:MT-subseq=000020staty=000020case}For any subsequence $\{p_{n}\}_{n\geq1}$
of $\{n\}_{n\geq1}$, Lemmas \ref{lem:MT-Asymptotic=000020Properties=000020of=000020the=000020Components=000020Stationary}--\ref{lem:MT-asympdist_dist_stationary-numerator=000020}
and Theorem \ref{thm:MT-asympdist_dist_rho^hat-stationary} hold with
$p_{n}$ in place of $n$ throughout and $h_{p_{n}}$ in place of
$h=h_{n}.$ 
\end{rem}

\subsection{Asymptotic Results for \texorpdfstring{$\boldsymbol{\widehat{h}}$}{h}
\protect\label{subsec:MT-Asymptotic-Results-for-hhat}}

Here we give conditions under which $\widehat{h}$ defined in \eqref{eq:MT-h^hat=000020defn}
is asymptotically equivalent to the value $\widehat{h}_{opt}$ that
minimizes the ``empirical loss,'' which is unobserved. See \citet{li1987asymptotic}
and \citet{andrews1991asymptotic} for analogous results in i.i.d.
models. The empirical loss, $L_{n}(h),$ is 
\begin{equation}
L_{n}(h):=n^{-1}\sum_{t=1}^{n}(\widehat{\mu}_{t-1}(h)-\mu_{t}+Y_{t-1}(\widehat{\rho}_{t-1}(h)-\rho_{t}))^{2}.\label{eq:MT-Empirical=000020loss=000020defn}
\end{equation}
We give a simple high-level condition under which 
\begin{equation}
\frac{L_{n}(\widehat{h})}{L_{n}(\widehat{h}_{opt})}=\frac{L_{n}(\widehat{h})}{\inf_{h\in\mathcal{H}_{n}}L_{n}(h)}\rightarrow_{p}1.\label{eq:MT-Optimality=000020criterion}
\end{equation}

The $FE_{n}(h)$ criterion can be written as follows: 
\begin{align}
FE_{n}(h)=\  & L_{n}(h)-2C_{n}(h)+E_{n},\text{ where }E_{n}:=n^{-1}\sum_{t=1}^{n}\sigma_{t}^{2}U_{t}^{2}\text{ and}\nonumber \\
C_{n}(h):=\  & n^{-1}\sum_{t=1}^{n}\sigma_{t}U_{t}(\widehat{\mu}_{t-1}(h)-\mu_{t}+Y_{t-1}(\widehat{\rho}_{t-1}(h)-\rho_{t})).\label{eq:MT-Decomp=000020os=000020FE=000020criterion}
\end{align}
The empirical loss $L_{n}(h)$ and the cross-product term $C_{n}(h)$
depend on $h,$ but the average squared error $E_{n}$ does not. Since
$E_{n}$ does not depend on $h,$ $\widehat{h}$ minimizes $L_{n}(h)-2C_{n}(h)$
over $\mathcal{H}_{n}.$

Under the following condition, $\widehat{h}$ is asymptotically equivalent
to the infeasible value $\widehat{h}_{opt}$ that minimizes $L_{n}(h).$ 
\begin{assumption}
\label{assu:MT-Asymp_h1}$\sup_{h\in\mathcal{H}_{n}}\frac{|C_{n}(h)|}{L_{n}(h)}\rightarrow_{p}0.$ 
\end{assumption}
\begin{lem}
\label{lem:MT-First=000020Data_Dependent=000020Bandwidth=000020Lem}Under
\emph{Assumption }\ref{assu:MT-Asymp_h1}, $\frac{L_{n}(\widehat{h})}{L_{n}(\widehat{h}_{opt})}\rightarrow_{p}1.$ 
\end{lem}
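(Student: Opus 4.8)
The plan is to exploit the decomposition in \eqref{eq:MT-Decomp=000020os=000020FE=000020criterion}: because $E_{n}$ does not depend on $h$, the minimizer $\widehat{h}$ of $FE_{n}(h)$ over $\mathcal{H}_{n}$ is exactly the minimizer of $L_{n}(h)-2C_{n}(h)$, whereas $\widehat{h}_{opt}$ minimizes $L_{n}(h)$. Assumption \ref{assu:MT-Asymp_h1} asserts that the cross term $C_{n}(h)$ is uniformly negligible relative to $L_{n}(h)$, so the two minimizers must produce asymptotically equal empirical losses. The argument is the classical one of \citet{li1987asymptotic} and \citet{andrews1991asymptotic}.

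First I would record the trivial lower bound. Since $\widehat{h}_{opt}$ minimizes $L_{n}$ over $\mathcal{H}_{n}$ and $\widehat{h}\in\mathcal{H}_{n}$, we have $L_{n}(\widehat{h}_{opt})=\inf_{h\in\mathcal{H}_{n}}L_{n}(h)\leq L_{n}(\widehat{h})$, so the ratio $R_{n}:=L_{n}(\widehat{h})/L_{n}(\widehat{h}_{opt})$ satisfies $R_{n}\geq1$. The task is therefore reduced to producing a matching upper bound $R_{n}\leq1+o_{p}(1)$. To that end I would use the defining inequality of $\widehat{h}$, namely $L_{n}(\widehat{h})-2C_{n}(\widehat{h})\leq L_{n}(\widehat{h}_{opt})-2C_{n}(\widehat{h}_{opt})$, which rearranges to
\[
R_{n}=\frac{L_{n}(\widehat{h})}{L_{n}(\widehat{h}_{opt})}\leq1+\frac{2C_{n}(\widehat{h})-2C_{n}(\widehat{h}_{opt})}{L_{n}(\widehat{h}_{opt})}.
\]

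The crux is then to turn the right-hand side into a self-referential bound in $R_{n}$ via Assumption \ref{assu:MT-Asymp_h1}. Writing $\delta_{n}:=\sup_{h\in\mathcal{H}_{n}}|C_{n}(h)|/L_{n}(h)=o_{p}(1)$, I would control the two cross terms by inserting the appropriate denominators,
\[
\frac{|C_{n}(\widehat{h})|}{L_{n}(\widehat{h}_{opt})}=\frac{|C_{n}(\widehat{h})|}{L_{n}(\widehat{h})}\,R_{n}\leq\delta_{n}R_{n},\qquad\frac{|C_{n}(\widehat{h}_{opt})|}{L_{n}(\widehat{h}_{opt})}\leq\delta_{n},
\]
so that $R_{n}\leq1+2\delta_{n}R_{n}+2\delta_{n}$, i.e. $R_{n}(1-2\delta_{n})\leq1+2\delta_{n}$. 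On the event $\{\delta_{n}<1/2\}$, whose probability tends to one, this gives $R_{n}\leq(1+2\delta_{n})/(1-2\delta_{n})\to_{p}1$. Combined with $R_{n}\geq1$, the squeeze yields $R_{n}\pto1$, which is the claim.

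The main obstacle — indeed essentially the only nonroutine point — is this self-referential step: the quantity $C_{n}(\widehat{h})$ is naturally bounded relative to $L_{n}(\widehat{h})$, not relative to the benchmark $L_{n}(\widehat{h}_{opt})$, so one must reintroduce the unknown ratio $R_{n}$ and then solve the resulting inequality for $R_{n}$, taking care that the coefficient $1-2\delta_{n}$ is eventually positive with probability approaching one. Everything else is elementary bookkeeping.
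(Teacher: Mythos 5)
Your proof is correct and takes essentially the same approach as the paper: both start from the decomposition $FE_{n}(h)=L_{n}(h)-2C_{n}(h)+E_{n}$, apply the defining inequality of $\widehat{h}$ against $\widehat{h}_{opt}$, and eliminate the cross terms via Assumption \ref{assu:MT-Asymp_h1}. The only difference is bookkeeping: the paper bounds the reciprocal ratio $L_{n}(\widehat{h}_{opt})/L_{n}(\widehat{h})$ from below by $1+o_{p}(1)$, using $|C_{n}(\widehat{h}_{opt})|/L_{n}(\widehat{h})\leq|C_{n}(\widehat{h}_{opt})|/L_{n}(\widehat{h}_{opt})$ (valid since $L_{n}(\widehat{h}_{opt})\leq L_{n}(\widehat{h})$), which sidesteps your self-referential inequality in $R_{n}$, though your fixed-point argument on the event $\{\delta_{n}<1/2\}$ is equally valid.
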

Now, we give a set of sufficient conditions for Assumption \ref{assu:MT-Asymp_h1}.
Let $h_{\min}=\min\{h:h\in\mathcal{H}_{n}\}$ and $h_{\max}=\max\{h:h\in\mathcal{H}_{n}\}.$
Typically, $h_{\min}$ and $h_{\max}$ depend on $n$ and decrease
to $0$ as $n\rightarrow\infty.$ Let $\xi_{n}$ denote the cardinality
of $\mathcal{H}_{n}.$ We decompose $L_{n}(h)$ and $C_{n}(h)$ into
the main components $L_{2n}(h)$ and $C_{2n}(h),$ respectively, which
depend on $t=nh_{\max}+1,...,n$ and for which $(\widehat{\mu}_{t-1}(h),\widehat{\rho}_{t-1}(h))$
depend only on random variables in $\mathcal{G}_{t-1},$ and the ``small
$t$'' boundary components $L_{1n}(h)$ and $C_{1n}(h),$ respectively,
which depend on $t=1,...,nh_{\max}$ for which $(\widehat{\mu}_{t-1}(h),\widehat{\rho}_{t-1}(h))$
depend on some random variables that are not in $\mathcal{G}_{t-1}.$
Define 
\begin{align}
L_{1n}(h):=\  & n^{-1}\sum_{t=1}^{nh_{\max}}(\widehat{\mu}_{t-1}(h)-\mu_{t}+Y_{t-1}(\widehat{\rho}_{t-1}(h)-\rho_{t}))^{2},\text{ }\nonumber \\
L_{2n}(h):=\  & n^{-1}\sum_{t=nh_{\max}+1}^{n}(\widehat{\mu}_{t-1}(h)-\mu_{t}+Y_{t-1}(\widehat{\rho}_{t-1}(h)-\rho_{t}))^{2},\nonumber \\
C_{1n}(h):=\  & n^{-1}\sum_{t=1}^{nh_{\max}}\sigma_{t}U_{t}(\widehat{\mu}_{t-1}(h)-\mu_{t}+Y_{t-1}(\widehat{\rho}_{t-1}(h)-\rho_{t})),\text{ and}\nonumber \\
C_{2n}(h):=\  & n^{-1}\sum_{t=nh_{\max}+1}^{n}\sigma_{t}U_{t}(\widehat{\mu}_{t-1}(h)-\mu_{t}+Y_{t-1}(\widehat{\rho}_{t-1}(h)-\rho_{t})).\label{eq:MT-Partition=000020of=000020L=000020and=000020C}
\end{align}

The risk as a function of $h$ is $R_{n}(h)=\E L_{n}(h).$ Let $R_{2n}(h)=\E L_{2n}(h).$

The following assumption is sufficient for Assumption \ref{assu:MT-Asymp_h1}. 
\begin{assumption}[\textbf{Sufficient Conditions for Assumption \ref{assu:MT-Asymp_h1}}]
 \label{assu:MT-Assu_h2} 
\end{assumption}
\begin{enumerate}
\item The true sequence of distributions is from the parameter spaces $\{\Lambda_{n}:n\geq1\}.$ 
\item $\sup_{h\in\mathcal{H}_{n}}\frac{|C_{1n}(h)|}{L_{n}(h)}\rightarrow_{p}0.$ 
\item $\sup_{h\in\mathcal{H}_{n}}|\frac{L_{2n}(h)}{R_{2n}(h)}-1|\rightarrow_{p}0.$ 
\item $h_{\max}\leq1-\varepsilon$ for $n$ large for some $\varepsilon>0.$ 
\item $\frac{n\inf_{h\in\mathcal{H}_{n}}R_{2n}(h)}{\xi_{n}}\rightarrow\infty.$ 
\end{enumerate}
Assumption \ref{assu:MT-Assu_h2}(b) implies that the ``small $t$''
boundary component of $C_{n}(h),$ which only depends on $t\leq nh_{\max},$
is asymptotically dominated by $L_{n}(h),$ which is based on all
$t\leq n.$ Assumption \ref{assu:MT-Assu_h2}(c) requires that the
variability of $L_{2n}(h)$ is small relative to its mean. In particular,
Assumption \ref{assu:MT-Assu_h2}(c) holds if $StdDev(L_{2n}(h))/\E L_{2n}(h)=o(1).$
Assumption \ref{assu:MT-Assu_h2}(d) implies that the elements of
$\mathcal{H}_{n}$ are bounded away from one, which implies that $nh=n$
is not a feasible choice.

To interpret Assumption \ref{assu:MT-Assu_h2}(e), we give an intuitive
discussion of the magnitudes of $\inf_{h\in\mathcal{H}_{n}}L_{2n}(h)$
and $\inf_{h\in\mathcal{H}_{n}}R_{2n}(h).$ First, consider the case
where $\{Y_{t}\}_{t\leq n}$ displays unit root or local-to-unity
behavior across the whole time series. Then, $\widehat{\mu}_{t-1}(h)-\mu_{t}\thickapprox(nh)^{-1/2}$
and $Y_{t-1}(\widehat{\rho}_{t-1}(h)-\rho_{t})\thickapprox n^{1/2}(nh)^{-1}=(nh^{2})^{-1/2},$
$L_{2n}(h)\thickapprox(nh^{2})^{-1},$ and under suitable conditions,
$R_{2n}(h)\thickapprox(nh^{2})^{-1}.$ Second, in the case where $\{Y_{t}\}_{t\leq n}$
displays stationary behavior across the whole time series, $\widehat{\mu}_{t-1}(h)-\mu_{t}\thickapprox(nh)^{-1/2},$
$Y_{t-1}(\widehat{\rho}_{t-1}(h)-\rho_{t})\thickapprox(nh)^{-1/2},$
$L_{2n}(h)\thickapprox(nh)^{-1},$ and under suitable conditions $R_{2n}(h)\thickapprox(nh)^{-1}.$
Third, in the case where $\{Y_{t}\}_{t\leq n}$ displays behavior
that varies between unit root and stationarity across the time series,
the order of magnitude of $R_{2n}(h)$ is between $(nh^{2})^{-1}$
and $(nh)^{-1},$ which is bounded below by $(nh_{\max})^{-1}.$ Hence,
Assumption \ref{assu:MT-Assu_h2}(e) requires $n\cdot(nh_{\max})^{-1}/\xi_{n}\rightarrow\infty$
or $\xi_{n}=o(h_{\max}^{-1}).$ That is, the number $\xi_{n}$ of
values $h$ in $\mathcal{H}_{n}$ needs to be of smaller order than
the reciprocal of the maximum value $h_{\max}$ in $\mathcal{H}_{n}.$ 
\begin{lem}
\label{lem:MT-Second=000020Data_Dependent=000020Bandwidth=000020Lem}\emph{Assumption
}\ref{assu:MT-Assu_h2}\emph{ }implies \emph{Assumption }\ref{assu:MT-Asymp_h1}\emph{.} 
\end{lem}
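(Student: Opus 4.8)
The plan is to bound $\sup_{h\in\mathcal{H}_{n}}|C_{n}(h)|/L_{n}(h)$ by splitting the cross-product into its boundary and main blocks. Writing $C_{n}(h)=C_{1n}(h)+C_{2n}(h)$ as in \eqref{eq:MT-Partition=000020of=000020L=000020and=000020C}, the triangle inequality gives $|C_{n}(h)|/L_{n}(h)\le|C_{1n}(h)|/L_{n}(h)+|C_{2n}(h)|/L_{n}(h)$. The first term vanishes uniformly by Assumption \ref{assu:MT-Assu_h2}(b), so the task reduces to showing $\sup_{h}|C_{2n}(h)|/L_{n}(h)\pto0$. Since $L_{1n}(h)\ge0$, we have $L_{n}(h)\ge L_{2n}(h)$, hence $|C_{2n}(h)|/L_{n}(h)\le|C_{2n}(h)|/L_{2n}(h)$. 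By Assumption \ref{assu:MT-Assu_h2}(c), with probability approaching one $L_{2n}(h)\ge R_{2n}(h)/2$ for all $h\in\mathcal{H}_{n}$ simultaneously, so on that event $|C_{2n}(h)|/L_{2n}(h)\le2|C_{2n}(h)|/R_{2n}(h)$. It therefore suffices to prove $\sup_{h}|C_{2n}(h)|/R_{2n}(h)\pto0$.

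For the latter, I would exploit the martingale structure of $C_{2n}(h)$. Write $W_{t-1}(h):=\widehat{\mu}_{t-1}(h)-\mu_{t}+Y_{t-1}(\widehat{\rho}_{t-1}(h)-\rho_{t})$, so that $C_{2n}(h)=n^{-1}\sum_{t=nh_{\max}+1}^{n}\sigma_{t}U_{t}W_{t-1}(h)$. For $t>nh_{\max}$ the estimator $(\widehat{\mu}_{t-1}(h),\widehat{\rho}_{t-1}(h))$ is computed from observations indexed by $s\le t-1$, so $W_{t-1}(h)$ is $\mathcal{G}_{t-1}$-measurable; together with $\E_{F}(U_{t}|\mathcal{G}_{t-1})=0$ this makes $\{\sigma_{t}U_{t}W_{t-1}(h)\}$ a martingale difference sequence, and the cross terms in $\E C_{2n}(h)^{2}$ drop out. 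Using $\E_{F}(U_{t}^{2}|\mathcal{G}_{t-1})=1$, the boundedness $\sigma_{t}^{2}\le C_{3,U}$ from part (i) of $\Lambda_{n}$ (invoked via Assumption \ref{assu:MT-Assu_h2}(a)), and $R_{2n}(h)=n^{-1}\sum_{t=nh_{\max}+1}^{n}\E W_{t-1}(h)^{2}$, I obtain $\E C_{2n}(h)^{2}=n^{-2}\sum_{t=nh_{\max}+1}^{n}\sigma_{t}^{2}\E W_{t-1}(h)^{2}\le C_{3,U}n^{-1}R_{2n}(h)$.

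The final step is a union bound over the finitely many candidate bandwidths. By Chebyshev's inequality and the variance bound just derived, for any $\e>0$ one has $P(|C_{2n}(h)|/R_{2n}(h)>\e)\le\E C_{2n}(h)^{2}/(\e^{2}R_{2n}(h)^{2})\le C_{3,U}/(\e^{2}nR_{2n}(h))$. Summing over the $\xi_{n}=|\mathcal{H}_{n}|$ elements of $\mathcal{H}_{n}$ gives $P(\sup_{h}|C_{2n}(h)|/R_{2n}(h)>\e)\le\xi_{n}C_{3,U}/(\e^{2}n\inf_{h}R_{2n}(h))$, which converges to zero by Assumption \ref{assu:MT-Assu_h2}(e). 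Hence $\sup_{h}|C_{2n}(h)|/R_{2n}(h)\pto0$, and combining this with the event from Assumption \ref{assu:MT-Assu_h2}(c) and with Assumption \ref{assu:MT-Assu_h2}(b) yields $\sup_{h}|C_{n}(h)|/L_{n}(h)\pto0$, which is Assumption \ref{assu:MT-Asymp_h1}. Assumption \ref{assu:MT-Assu_h2}(d) enters to guarantee that the main block $\{nh_{\max}+1,\dots,n\}$ contains a positive fraction of the sample, so that $R_{2n}(h)$ is well behaved and positive.

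The main obstacle is controlling the supremum over $h$ rather than a single bandwidth: the martingale second-moment bound is immediate at a fixed $h$, but passing to the uniform statement requires that the union-bound cost $\xi_{n}$ not overwhelm the denominator, which is exactly the content of Assumption \ref{assu:MT-Assu_h2}(e), namely $\xi_{n}=o(n\inf_{h}R_{2n}(h))$. The decomposition into $C_{1n}$ and $C_{2n}$ is essential because the martingale-difference property---and hence the clean variance calculation---holds only for $t>nh_{\max}$; the boundary block, where $\widehat{\rho}_{t-1}(h)$ uses observations with index $s\ge t$ and so is not $\mathcal{G}_{t-1}$-measurable, cannot be handled this way and is instead disposed of by the high-level Assumption \ref{assu:MT-Assu_h2}(b).
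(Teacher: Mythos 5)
Your proof is correct and follows essentially the same route as the paper's: both rest on the $C_{1n}/C_{2n}$ split with Assumption \ref{assu:MT-Assu_h2}(b) disposing of the boundary block, $L_{n}(h)\geq L_{2n}(h)$, the martingale-difference variance bound $\E C_{2n}(h)^{2}\leq C_{3,U}\,n^{-1}R_{2n}(h)$ (valid because $(\widehat{\mu}_{t-1}(h),\widehat{\rho}_{t-1}(h))$ is $\mathcal{G}_{t-1}$-measurable for $t>nh_{\max}$), Assumption \ref{assu:MT-Assu_h2}(c) to trade $L_{2n}$ for $R_{2n}$, and a union bound over the $\xi_{n}$ bandwidths controlled by Assumption \ref{assu:MT-Assu_h2}(e). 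The only cosmetic difference is that you apply Chebyshev at each $h$ and sum, whereas the paper bounds $\sup_{h\in\mathcal{H}_{n}}\bigl|C_{2n}(h)/(\xi_{n}^{1/2}\mathrm{Var}^{1/2}(C_{2n}(h)))\bigr|=O_{p}(1)$ and then uses Assumption \ref{assu:MT-Assu_h2}(d) to convert its $n-nh$ normalization into $n$ --- the same calculation in different clothing.
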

\begin{rem}
\noindent The proof of Lemma \ref{lem:MT-Second=000020Data_Dependent=000020Bandwidth=000020Lem}
shows that $\E C_{2n}(h)=0$ and $Var(C_{2n}(h))\leq C_{3U}\times(n-nh)^{-1}\E L_{2n}(h)$
$\forall h\in\mathcal{H}_{n},n\geq1,$ where $C_{3U}<\infty$ is the
bound on the variance function $\sigma^{2}$ in the definition of
$\Lambda_{n}.$
\end{rem}
\begin{rem}
\noindent Let $R_{1n}(h)=\E L_{1n}(h).$ Under Assumption \ref{assu:MT-Assu_h2}
and $\sup_{h\in\mathcal{H}_{n}}\frac{R_{1n}(h)+\E|C_{1n}(h)|}{R_{n}(h)}\rightarrow0,$
we also have: $\frac{R_{n}(\widehat{h})}{R_{n}(\widehat{h}_{opt})}\rightarrow_{p}1.$ 
\end{rem}
\newpage{}


\appendix

\part*{\centering Supplemental Material}

\global\long\def\thesection{\Alph{section}}%

Section \ref{sec:SM-Critical-Values-J_psi} provides the critical
values $c_{\psi}\left(\a\right)$, the $\a^{\text{th}}$ Quantile
of $J_{\psi}$, for constructing CI's and MUE's for $\rho\left(\tau\right)$.
Section \ref{sec:Theory} provides asymptotic theory. Section \ref{sec:Additional-Simulation-Results}
is concerned with the simulation results. Section \ref{sec:Extension-to-TVP-AR(p)}
extends the methods of the paper for TVP-AR(1) models to TVP-AR(p)
models for $p>1.$ Section \ref{sec:addl-Empirical-Results} provides
additional empirical results.

\numberwithin{equation}{subsection}

\section{Critical Values \texorpdfstring{$\boldsymbol{c_{\psi}\left(\alpha\right)}$}{c}\protect\label{sec:SM-Critical-Values-J_psi}}

Table \ref{tab:SM-Critical-Values_psi} provides the critical values
$c_{\psi}\left(\alpha\right)$ for $\a=.025$, $.05$, $.5$, .95,
and .975 and for $\psi$ between 0 and 500. Given these critical values,
one can compute equal-tailed two-sided CI's and MUE's for $\rho\left(\tau\right)$
based on (\ref{eq:MT-CI=000020defn}) and (\ref{eq:MT-MUE_Def}),
respectively.

$\ $

\global\long\def\thetable{SM.\arabic{table}}%
 \setcounter{table}{0}

\begin{table}[H]
\begin{centering}
\caption{\protect\label{tab:SM-Critical-Values_psi}Values of Relevant Quantiles
of $J_{\psi}$ for Use with 90\% and 95\% Equal-Tailed Two-Sided CI's
and MUE's}
\par\end{centering}
\centering%
\begin{tabular}{>{\centering}m{1.1cm}>{\centering}m{0.7cm}>{\centering}m{0.7cm}>{\centering}m{0.7cm}>{\centering}m{0.7cm}>{\centering}m{0.7cm}>{\centering}m{0.7cm}>{\centering}m{0.7cm}>{\centering}m{0.7cm}>{\centering}m{0.7cm}>{\centering}m{0.7cm}>{\centering}m{0.7cm}>{\centering}m{0.7cm}>{\centering}m{0.7cm}}
\toprule 
\multicolumn{14}{c}{{\scriptsize Values of $c_{\psi}\left(\alpha\right)$, the $\a^{\text{th}}$
Quantile of $J_{\psi}$, for Use with 90\% and 95\% Equal-Tailed Two-Sided
CI's and MUE's}}\tabularnewline
\midrule 
{\scriptsize$\psi$} & {\scriptsize 0} & {\scriptsize 0.2} & {\scriptsize 0.4} & {\scriptsize 0.6} & {\scriptsize 0.8} & {\scriptsize 1} & {\scriptsize 1.4} & {\scriptsize 1.8} & {\scriptsize 2.2} & {\scriptsize 2.6} & {\scriptsize 3} & {\scriptsize 3.4} & {\scriptsize 3.8}\tabularnewline
\midrule
{\scriptsize$c_{\psi}\left(.025\right)$} & {\scriptsize -3.12} & {\scriptsize -3.09} & {\scriptsize -3.05} & {\scriptsize -3.03} & {\scriptsize -2.99} & {\scriptsize -2.98} & {\scriptsize -2.93} & {\scriptsize -2.89} & {\scriptsize -2.85} & {\scriptsize -2.82} & {\scriptsize -2.79} & {\scriptsize -2.77} & {\scriptsize -2.74}\tabularnewline
{\scriptsize$c_{\psi}\left(.05\right)$} & {\scriptsize -2.86} & {\scriptsize -2.83} & {\scriptsize -2.79} & {\scriptsize -2.76} & {\scriptsize -2.72} & {\scriptsize -2.70} & {\scriptsize -2.65} & {\scriptsize -2.61} & {\scriptsize -2.57} & {\scriptsize -2.53} & {\scriptsize -2.51} & {\scriptsize -2.48} & {\scriptsize -2.46}\tabularnewline
{\scriptsize$c_{\psi}\left(.5\right)$} & {\scriptsize -1.57} & {\scriptsize -1.51} & {\scriptsize -1.47} & {\scriptsize -1.42} & {\scriptsize -1.37} & {\scriptsize -1.34} & {\scriptsize -1.26} & {\scriptsize -1.20} & {\scriptsize -1.14} & {\scriptsize -1.08} & {\scriptsize -1.03} & {\scriptsize -1.00} & {\scriptsize -0.96}\tabularnewline
{\scriptsize$c_{\psi}\left(.95\right)$} & {\scriptsize -0.09} & {\scriptsize -0.02} & {\scriptsize 0.03} & {\scriptsize 0.08} & {\scriptsize 0.13} & {\scriptsize 0.17} & {\scriptsize 0.24} & {\scriptsize 0.31} & {\scriptsize 0.37} & {\scriptsize 0.42} & {\scriptsize 0.48} & {\scriptsize 0.53} & {\scriptsize 0.56}\tabularnewline
{\scriptsize$c_{\psi}\left(.975\right)$} & {\scriptsize 0.23} & {\scriptsize 0.30} & {\scriptsize 0.36} & {\scriptsize 0.40} & {\scriptsize 0.45} & {\scriptsize 0.49} & {\scriptsize 0.55} & {\scriptsize 0.63} & {\scriptsize 0.69} & {\scriptsize 0.74} & {\scriptsize 0.79} & {\scriptsize 0.84} & {\scriptsize 0.87}\tabularnewline
\midrule
{\scriptsize$\psi$} & {\scriptsize 4.2} & {\scriptsize 4.6} & {\scriptsize 5} & {\scriptsize 6} & {\scriptsize 7} & {\scriptsize 8} & {\scriptsize 9} & {\scriptsize 10} & {\scriptsize 11} & {\scriptsize 12} & {\scriptsize 13} & {\scriptsize 14} & {\scriptsize 15}\tabularnewline
\midrule
{\scriptsize$c_{\psi}\left(.025\right)$} & {\scriptsize -2.72} & {\scriptsize -2.70} & {\scriptsize -2.68} & {\scriptsize -2.65} & {\scriptsize -2.60} & {\scriptsize -2.58} & {\scriptsize -2.56} & {\scriptsize -2.54} & {\scriptsize -2.51} & {\scriptsize -2.50} & {\scriptsize -2.48} & {\scriptsize -2.46} & {\scriptsize -2.45}\tabularnewline
{\scriptsize$c_{\psi}\left(.05\right)$} & {\scriptsize -2.44} & {\scriptsize -2.41} & {\scriptsize -2.39} & {\scriptsize -2.35} & {\scriptsize -2.31} & {\scriptsize -2.28} & {\scriptsize -2.26} & {\scriptsize -2.23} & {\scriptsize -2.21} & {\scriptsize -2.19} & {\scriptsize -2.18} & {\scriptsize -2.15} & {\scriptsize -2.14}\tabularnewline
{\scriptsize$c_{\psi}\left(.5\right)$} & {\scriptsize -0.92} & {\scriptsize -0.90} & {\scriptsize -0.86} & {\scriptsize -0.81} & {\scriptsize -0.75} & {\scriptsize -0.71} & {\scriptsize -0.68} & {\scriptsize -0.65} & {\scriptsize -0.62} & {\scriptsize -0.59} & {\scriptsize -0.58} & {\scriptsize -0.55} & {\scriptsize -0.54}\tabularnewline
{\scriptsize$c_{\psi}\left(.95\right)$} & {\scriptsize 0.60} & {\scriptsize 0.64} & {\scriptsize 0.68} & {\scriptsize 0.75} & {\scriptsize 0.82} & {\scriptsize 0.86} & {\scriptsize 0.91} & {\scriptsize 0.95} & {\scriptsize 0.98} & {\scriptsize 1.02} & {\scriptsize 1.04} & {\scriptsize 1.05} & {\scriptsize 1.08}\tabularnewline
{\scriptsize$c_{\psi}\left(.975\right)$} & {\scriptsize 0.91} & {\scriptsize 0.94} & {\scriptsize 0.99} & {\scriptsize 1.05} & {\scriptsize 1.12} & {\scriptsize 1.17} & {\scriptsize 1.22} & {\scriptsize 1.25} & {\scriptsize 1.29} & {\scriptsize 1.32} & {\scriptsize 1.34} & {\scriptsize 1.37} & {\scriptsize 1.39}\tabularnewline
\midrule 
{\scriptsize$\psi$} & {\scriptsize 20} & {\scriptsize 25} & {\scriptsize 30} & {\scriptsize 40} & {\scriptsize 50} & {\scriptsize 60} & {\scriptsize 70} & {\scriptsize 80} & {\scriptsize 90} & {\scriptsize 100} & {\scriptsize 200} & {\scriptsize 300} & {\scriptsize 500}\tabularnewline
\midrule
{\scriptsize$c_{\psi}\left(.025\right)$} & {\scriptsize -2.39} & {\scriptsize -2.35} & {\scriptsize -2.32} & {\scriptsize -2.28} & {\scriptsize -2.25} & {\scriptsize -2.23} & {\scriptsize -2.20} & {\scriptsize -2.19} & {\scriptsize -2.17} & {\scriptsize -2.17} & {\scriptsize -2.11} & {\scriptsize -2.08} & {\scriptsize -2.05}\tabularnewline
{\scriptsize$c_{\psi}\left(.05\right)$} & {\scriptsize -2.08} & {\scriptsize -2.05} & {\scriptsize -2.02} & {\scriptsize -1.96} & {\scriptsize -1.94} & {\scriptsize -1.91} & {\scriptsize -1.89} & {\scriptsize -1.88} & {\scriptsize -1.86} & {\scriptsize -1.85} & {\scriptsize -1.80} & {\scriptsize -1.76} & {\scriptsize -1.74}\tabularnewline
{\scriptsize$c_{\psi}\left(.5\right)$} & {\scriptsize -0.47} & {\scriptsize -0.42} & {\scriptsize -0.39} & {\scriptsize -0.33} & {\scriptsize -0.30} & {\scriptsize -0.27} & {\scriptsize -0.25} & {\scriptsize -0.23} & {\scriptsize -0.23} & {\scriptsize -0.21} & {\scriptsize -0.15} & {\scriptsize -0.12} & {\scriptsize -0.09}\tabularnewline
{\scriptsize$c_{\psi}\left(.95\right)$} & {\scriptsize 1.15} & {\scriptsize 1.20} & {\scriptsize 1.24} & {\scriptsize 1.30} & {\scriptsize 1.33} & {\scriptsize 1.37} & {\scriptsize 1.39} & {\scriptsize 1.40} & {\scriptsize 1.41} & {\scriptsize 1.43} & {\scriptsize 1.49} & {\scriptsize 1.52} & {\scriptsize 1.55}\tabularnewline
{\scriptsize$c_{\psi}\left(.975\right)$} & {\scriptsize 1.46} & {\scriptsize 1.51} & {\scriptsize 1.56} & {\scriptsize 1.61} & {\scriptsize 1.65} & {\scriptsize 1.67} & {\scriptsize 1.71} & {\scriptsize 1.72} & {\scriptsize 1.72} & {\scriptsize 1.74} & {\scriptsize 1.80} & {\scriptsize 1.83} & {\scriptsize 1.87}\tabularnewline
\bottomrule
\end{tabular}
\end{table}

\section{Theory\protect\label{sec:Theory}}

\subsection{Weaker Assumptions on \texorpdfstring{$\boldsymbol{h}$}{h} for
the Case where \texorpdfstring{$\boldsymbol{\rho_{n}}$}{rho},
\texorpdfstring{$\boldsymbol{\mu_{n}}$}{mu}, and \texorpdfstring{$\boldsymbol{\sigma_{n}^{2}}$}{sigma}
Are Asymptotically Locally Constant}

The asymptotic results in Section \ref{sec:MT-Asymptotics} rely on
Assumption \ref{assu:MT-stationary-nh^5=000020->=0000200}, which
requires that the bandwidth $h$ is small enough that $nh^{5}\rightarrow0$
as $n\rightarrow\infty.$ This assumption is suitable when the functions
$\rho_{n},$ $\mu_{n},$ and $\sigma_{n}^{2}$ are asymptotically
non-constant in a neighborhood of the time point of interest $\tau.$
However, this condition can be relaxed if the functions $\rho_{n},$
$\mu_{n},$ and $\sigma_{n}^{2}$ are constant or asymptotically constant
in a neighborhood of $\tau.$ In this section, we state an alternative
to Assumption \ref{assu:MT-stationary-nh^5=000020->=0000200} that
imposes weaker conditions on $h$ that depend on how close the functions
$\rho_{n},$ $\mu_{n},$ and $\sigma_{n}^{2}$ are to being asymptotically
constant in a neighborhood of $\tau,$ but still allows us to establish
the results of Theorems \ref{thm:MT-asym_rho_hat_local=000020to=000020unity=000020case}
and \ref{thm:MT-asympdist_dist_rho^hat-stationary}. In the case of
locally constant functions, the condition on $h$ is just $h=o(\ln^{-2}(n)).$

The functions $\rho_{n}(\cdot)$ and $\mu_{n}(\cdot)$ depend on $\kappa_{n}(\cdot)$
and $\eta_{n}(\cdot)$ by the definition of the parameter space $\Lambda_{n}$
following (\ref{eq:MT-structure-of-rho}).
\begin{defn*}
\noindent Let $\ell_{n}$ be the supremum of the Lipschitz constants
of $\kappa_{n}(\cdot),$ $\eta_{n}(\cdot),$ and $\sigma_{n}^{2}(\cdot)$
and the absolute value of the second derivative of $\kappa_{n}(\cdot)$
over the interval $I_{\tau,\varepsilon_{2}}$ that contains $\tau.$
\end{defn*}
\noindent In this definition, $I_{\tau,\varepsilon_{2}}$ is as defined
in the parameter space $\Lambda_{n}$ in Section \ref{sec:MT-Parameter=000020Space}.

As defined, $\ell_{n}$ is a measure of the non-constancy of the functions
$\rho_{n}(\cdot),$ $\mu_{n}(\cdot),$ and $\sigma_{n}^{2}(\cdot).$
When $\ell_{n}\rightarrow0$ as $n\rightarrow\infty,$ these functions
become closer to being constant functions in a neighborhood of $\tau$
as the sample size increases.

\renewcommand{\theassumption}{2*}
\begin{assumption}
\noindent\label{assu:assumption2*}$\textup{(i)}$ $n\ell_{n}^{2}h^{5}\rightarrow0$
and $\textup{(ii)}$ $h=o(\ln^{-2}(n)).$
\end{assumption}
\noindent Assumption \ref{assu:assumption2*}(i) allows $h$ to converge
to zero slower than the condition $nh^{5}\rightarrow0$ when $\ell_{n}\rightarrow0.$
If $n\ell_{n}^{2}\rightarrow0,$ then $h$ can converge to zero as
slowly as $o(\ln^{-2}(n)).$

The following result shows that Assumption \ref{assu:MT-stationary-nh^5=000020->=0000200}
can be replaced by the weaker condition Assumption \ref{assu:assumption2*}
and Theorems \ref{thm:MT-asym_rho_hat_local=000020to=000020unity=000020case}
and \ref{thm:MT-asympdist_dist_rho^hat-stationary} still hold.
\begin{thm}
\noindent\label{thm:ThmA.1_As2star}Theorems \ref{thm:MT-asym_rho_hat_local=000020to=000020unity=000020case}
and \ref{thm:MT-asympdist_dist_rho^hat-stationary} hold with Assumption
\ref{assu:MT-stationary-nh^5=000020->=0000200} replaced by Assumption
\ref{assu:assumption2*}.
\end{thm}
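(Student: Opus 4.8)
The plan is to localize the role of Assumption \ref{assu:MT-stationary-nh^5=000020->=0000200} and then re-establish only the handful of intermediate results in which it is actually invoked. By the footnote to Assumption \ref{assu:MT-stationary-nh^5=000020->=0000200}, the condition $nh^5\to0$ enters the proofs of Theorems \ref{thm:MT-asym_rho_hat_local=000020to=000020unity=000020case} and \ref{thm:MT-asympdist_dist_rho^hat-stationary} only through Lemmas \ref{lem:MT-Order=000020of=000020Y_T0}(b), \ref{lem:MT-Order=000020of=000020Y_T0}(c), \ref{lem:MT-asympdist_dist_stationary-denom}, and \ref{lem:MT-asympdist_dist_stationary-numerator=000020}(a). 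Hence it suffices to show that each of these four statements continues to hold with Assumption \ref{assu:MT-stationary-nh^5=000020->=0000200} replaced by Assumption \ref{assu:assumption2*}; the proofs of the two theorems then go through verbatim, since they cite $nh^5\to0$ only via these lemmas.

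The engine of the argument is a refinement of Lemma \ref{lem:MT-Max=000020Intertemporal=000020Difference} that makes the dependence on the local non-constancy $\ell_n$ explicit. Because $\rho_n(s)=1-\kappa_n(s)/b_n$ and $\mu_n(s)=C_{\mu1}\exp\{-\eta_n(s)/b_n\}+C_{\mu2}$ on $I_{\tau,\varepsilon_2}$, and because $\ell_n$ bounds the Lipschitz constants of $\kappa_n$, $\eta_n$, and $\sigma_n^2$ as well as $|\kappa_n''|$, the bounds of Lemma \ref{lem:MT-Max=000020Intertemporal=000020Difference} sharpen to $\max_t|\rho_t-\rho_{n\tau}|=O(\ell_n h/b_n)$, $\max_t|\sigma_t^2-\sigma_{n\tau}^2|=O(\ell_n h)$, $\max_t\max_j|c_{t,j}-\rho_{n\tau}^j|=O(\ell_n nh^2/b_n)$, and $\max_t|\mu_t-\mu_{n\tau}|=O(\ell_n h/b_n)$, simply by carrying the factor $\ell_n$ through the mean-value expansions instead of bounding the Lipschitz constants by the fixed $L_j$. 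Feeding these into the four target lemmas, every place where the old proof bounded a bias or approximation term by a multiple of $\sqrt{nh^5}$ now yields a multiple of $\ell_n\sqrt{nh^5}=\sqrt{n\ell_n^2h^5}$, which tends to zero under Assumption \ref{assu:assumption2*}(i). Concretely, in the stationary case the systematic part of $\widehat{\rho}_{n\tau}-\rho_{0,n}$ is a near-symmetric weighted average of $\rho_t-\rho_{n\tau}$ over $[T_1,T_2]$; the linear-in-$t$ contribution cancels by the symmetry of the window about $\lfloor n\tau\rfloor$, leaving a curvature term of order $\ell_n h^2/b_n$, which after the normalization $(1-\rho_{0,n}^2)^{-1/2}(nh)^{1/2}\asymp b_n^{1/2}(nh)^{1/2}$ is $O(\ell_n\sqrt{nh^5})$ using $b_n\geq\varepsilon_3$. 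The same substitution handles the initial-condition bounds in Lemma \ref{lem:MT-Order=000020of=000020Y_T0}(b),(c).

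The remaining role of Assumption \ref{assu:assumption2*}(ii), $h=o(\ln^{-2}(n))$, is to control the error terms that are not of bias type and therefore do not acquire a factor of $\ell_n$. These arise from the uniform (maximal-inequality) control of fluctuations over the $O(nh)$-length window and from the accumulation of the multiplicative errors in the products $c_{t,j}$, both of which carry logarithmic-in-$n$ factors; the condition $h\ln^2 n\to0$ (equivalently $h^{1/2}\ln n\to0$) is exactly what is needed to absorb them. Note that when $\ell_n\to0$ fast enough that $n\ell_n^2\to0$, part (i) holds for every admissible bandwidth and part (ii) becomes the only binding restriction, which recovers the claim in the text that locally constant functions require only $h=o(\ln^{-2}(n))$.

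The main obstacle is the sharpened bias computation of the second paragraph: one must verify that the factor $\ell_n$ survives intact through the stochastic sums in the numerator and denominator of $\widehat{\rho}_{n\tau}$ — in particular, that the symmetry-driven cancellation of the first-order terms is not destroyed by the random weights $Y_{t-1}$, so that the surviving bias is genuinely governed by the curvature $|\kappa_n''|\leq\ell_n$ rather than by the first-order variation. Keeping careful track of which bounds are max-type (first order, contributing $\ell_n h$) and which are symmetric-sum-type (second order, contributing $\ell_n h^2$), and confirming that the log-factor terms in Lemmas \ref{lem:MT-Order=000020of=000020Y_T0}(b),(c) and \ref{lem:MT-asympdist_dist_stationary-denom} are the only ones requiring part (ii), is the delicate bookkeeping that the formal proof must carry out.
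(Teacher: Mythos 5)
Your proposal is correct and takes essentially the same route as the paper: the paper likewise localizes Assumption \ref{assu:MT-stationary-nh^5=000020->=0000200} to Lemmas \ref{lem:MT-Order=000020of=000020Y_T0}(b)--(c), \ref{lem:MT-asympdist_dist_stationary-denom}, and \ref{lem:MT-asympdist_dist_stationary-numerator=000020}(a), proves exactly your refined intertemporal-difference bounds with the explicit factor $\ell_{n}$ (its Lemma \ref{lem:A.3_New_7.1}), and substitutes them so that every $(nh^{5})^{1/2}$ bias bound becomes $(n\ell_{n}^{2}h^{5})^{1/2}$, with Assumption \ref{assu:assumption2*}(ii) supplying $h^{1/2}\ln(n)=o(1)$ and $h\ln(n)=O(1)$ at the only places where Lemma \ref{lem:MT-Order=000020of=000020Y_T0}(b)--(c) previously invoked $nh^{5}\rightarrow0$. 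Two minor notes: the logarithmic conditions come not from maximal-inequality control of fluctuations but from the truncation length $m_{n}=b_{n}c\ln(n)$ needed to geometrically kill the $O_{p}(n^{1/2})$ initial condition while keeping $m_{n}/n\rightarrow0$, and the symmetry cancellation you flag as the main obstacle is already insulated from the random weights by the existing decomposition of $A_{c1}$ into a deterministic term (in which $\E(Y_{t-1}^{c})^{2}$ is constant in $t$, so the cancellation applies only to a nonrandom sum) plus fluctuation terms handled by variance calculations, so no new argument is required there.
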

The proof of Theorem \ref{thm:ThmA.1_As2star} is given in Section
\ref{subsec:Pf=000020NEW=000020Thm=000020A.1} below.

\subsection{\protect\label{subsec:Proof-of-Theorem=000020Cor=000020Asy=000020Size}Proof
of Theorem \ref{thm:MT-Cor=000020Asy=000020Size}}

The proof of Theorem \ref{thm:MT-Cor=000020Asy=000020Size} uses the
following lemma (which is also used elsewhere below). 

By definition, $\rho_{n}\left(s\right)=1-\kappa_{n}\left(s\right)/b_{n},$
see \eqref{eq:MT-rho_n}. When $b_{n}\rightarrow\infty,$ for $s\in I_{\tau,\varepsilon_{2}},$
define $\kappa_{n}^{\ast}\left(s\right)$ by 
\begin{equation}
\rho_{n}\left(s\right)=1-\kappa_{n}\left(s\right)/b_{n}=\exp\{-\kappa_{n}^{\ast}\left(s\right)/b_{n}\}.\label{Defn=000020of=000020cappa*_n}
\end{equation}
The function $\kappa_{n}^{\ast}\left(s\right)$ on $I_{\tau,\varepsilon_{2}}$
has the following property when $b_{n}\rightarrow\infty$. 
\begin{lem}
\label{lem:A.1=000020Kappa_n*=000020go=000020to=000020Kappa_n}If
$b_{n}\rightarrow\infty$ and $\kappa_{n}\left(s\right)$ satisfies
part \textup{(ii)} of the parameter space $\Lambda_{n}$ defined following
\eqref{eq:MT-structure-of-rho}, then 
\[
\sup_{s\in I_{\tau,\varepsilon_{2}}}\left\vert \frac{\kappa_{n}^{\ast}\left(s\right)}{\kappa_{n}\left(s\right)}-1\right\vert \rightarrow0.
\]
\end{lem}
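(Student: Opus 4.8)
The plan is to solve the defining relation \eqref{Defn=000020of=000020cappa*_n} for $\kappa_{n}^{\ast}$ explicitly and then recognize the ratio $\kappa_{n}^{\ast}(s)/\kappa_{n}(s)$ as one fixed smooth scalar function evaluated at a vanishing argument. Taking logarithms in $\rho_{n}(s)=1-\kappa_{n}(s)/b_{n}=\exp\{-\kappa_{n}^{\ast}(s)/b_{n}\}$ gives $\kappa_{n}^{\ast}(s)=-b_{n}\ln\left(1-\kappa_{n}(s)/b_{n}\right)$, which is well defined once $b_{n}$ is large enough that $C_{4}/b_{n}<1$, because part (ii) of $\Lambda_{n}$ bounds $\kappa_{n}(s)\in[0,C_{4}]$ uniformly over $s\in I_{\tau,\varepsilon_{2}}$.

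Next I would introduce the single-variable function $g(x):=-\ln(1-x)/x$ for $x\in(0,1)$, extended by $g(0):=1$. Its power series $g(x)=\sum_{k=0}^{\infty}x^{k}/(k+1)$ shows that $g$ is continuous (indeed real-analytic) on $[0,1)$ with $g(0)=1$ and $g\geq 1$. The key observation is that $\kappa_{n}^{\ast}(s)=\kappa_{n}(s)\,g\!\left(\kappa_{n}(s)/b_{n}\right)$, so that $\kappa_{n}^{\ast}(s)/\kappa_{n}(s)=g\!\left(\kappa_{n}(s)/b_{n}\right)$ whenever $\kappa_{n}(s)>0$. This reduces the whole problem to controlling $g-1$ near the origin, uniformly in $s$.

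The conclusion then follows from a one-line uniform bound. Since $0\leq\kappa_{n}(s)/b_{n}\leq C_{4}/b_{n}$ uniformly in $s$ and $b_{n}\to\infty$, for $n$ large the argument lies in $[0,1/2]$, on which the power series gives $0\leq g(x)-1=\sum_{k\geq1}x^{k}/(k+1)\leq x/\bigl(2(1-x)\bigr)\leq x$. Therefore
\[
\sup_{s\in I_{\tau,\varepsilon_{2}}}\left\vert \frac{\kappa_{n}^{\ast}(s)}{\kappa_{n}(s)}-1\right\vert=\sup_{s\in I_{\tau,\varepsilon_{2}}}\bigl\vert g\!\left(\kappa_{n}(s)/b_{n}\right)-1\bigr\vert\leq\sup_{s\in I_{\tau,\varepsilon_{2}}}\frac{\kappa_{n}(s)}{b_{n}}\leq\frac{C_{4}}{b_{n}}\to0,
\]
which is exactly the claim.

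The only point worth stating carefully—and the one I would flag as the main subtlety rather than a genuine obstacle—is the treatment of the points $s$ where $\kappa_{n}(s)=0$. Part (ii) of $\Lambda_{n}$ guarantees $\kappa_{n}(\tau)\geq\varepsilon_{4}>0$ but permits $\kappa_{n}(s)=0$ for other $s\in I_{\tau,\varepsilon_{2}}$ (i.e. $\rho_{n}(s)=1$); there $\kappa_{n}^{\ast}(s)=0$ as well, so the ratio is of the form $0/0$ and must be read through the continuous extension $g(0)=1$ rather than literally. Everything else is routine and driven solely by the uniform bound $\kappa_{n}\leq C_{4}$ from the parameter space together with $b_{n}\to\infty$; in particular the Lipschitz and second-derivative bounds on $\kappa_{n}$ play no role in this lemma.
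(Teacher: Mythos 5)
Your proof is correct, and it takes a genuinely different route from the paper's. You solve the defining relation $\rho_{n}\left(s\right)=1-\kappa_{n}\left(s\right)/b_{n}=\exp\{-\kappa_{n}^{\ast}\left(s\right)/b_{n}\}$ in closed form, obtaining $\kappa_{n}^{\ast}\left(s\right)=-b_{n}\ln\left(1-\kappa_{n}\left(s\right)/b_{n}\right)$, and then reduce the lemma to elementary control of the single function $g\left(x\right)=-\ln\left(1-x\right)/x$ near zero via its power series. The paper never writes down this explicit formula; instead it argues indirectly, first establishing $\sup_{s\in I_{\tau,\varepsilon_{2}}}\kappa_{n}^{\ast}\left(s\right)/b_{n}\rightarrow0$ by contradiction (a large value of $\kappa_{n}^{\ast}\left(s_{n}\right)/b_{n}$ along a subsequence would force $\rho_{n}\left(s_{n}\right)\leq\exp\{-\delta/2\}<1$, contradicting $\sup_{s}\left\vert \rho_{n}\left(s\right)-1\right\vert =\sup_{s}\kappa_{n}\left(s\right)/b_{n}\rightarrow0$), and then applying two mean value expansions of $\exp\{\cdot\}$ to conclude $\kappa_{n}^{\ast}\left(s\right)/\kappa_{n}\left(s\right)=\exp\{\kappa_{n}^{\ast\ast}\left(s\right)/b_{n}\}$ with $\sup_{s}\kappa_{n}^{\ast\ast}\left(s\right)/b_{n}=o\left(1\right)$. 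Your argument buys two things the paper's does not: an explicit quantitative rate, namely $\sup_{s\in I_{\tau,\varepsilon_{2}}}\left\vert \kappa_{n}^{\ast}\left(s\right)/\kappa_{n}\left(s\right)-1\right\vert \leq C_{4}/b_{n}$, where the paper only delivers $o\left(1\right)$; and an honest treatment of the points where $\kappa_{n}\left(s\right)=0$, at which the ratio is $0/0$ and must be read through the continuous extension $g\left(0\right)=1$ --- a point the paper's proof glosses over, since it too divides by $\kappa_{n}\left(s\right)$ without comment. You are also right that the Lipschitz and second-derivative bounds on $\kappa_{n}$ are irrelevant here; both proofs use only $0\leq\kappa_{n}\leq C_{4}$ and $b_{n}\rightarrow\infty$.
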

The proof of Lemma \ref{lem:A.1=000020Kappa_n*=000020go=000020to=000020Kappa_n}
follows that of Theorem \ref{thm:MT-Cor=000020Asy=000020Size}.
\begin{proof}[\textbf{Proof of Theorem \ref{thm:MT-Cor=000020Asy=000020Size}}]
Let 
\begin{equation}
CP_{n}\left(\lambda\right):=P_{\lambda}\left(\rho\left(\tau\right)\in CI_{n,\tau}\right),\label{eq:CP_lambda=000020defn}
\end{equation}
where $P_{\lambda}\left(\cdot\right)$ denotes probability under $\lambda\in\Lambda_{n}.$
The results of Theorem \ref{thm:MT-Cor=000020Asy=000020Size} hold
by Theorem 2.1 of \citet*{andrews2020generic} (ACG) provided Assumptions
A1 and S of ACG hold with $CP$ in Assumption S equal to $1-\alpha$.
In applying Theorem 2.1 of ACG, we let the parameter space $\Lambda$
in that paper depend on $n$, as it does in the present paper, which
does not cause any complications for the results of that paper. Sufficient
conditions for these assumptions are Assumptions B and S of ACG by
Theorem 2.2 of ACG. Assumptions B and S of ACG combine to require:
For any subsequence $\left\{ p_{n}\right\} _{n\geq1}$ of $\left\{ n\right\} _{n\geq1}$
and any sequence $\left\{ \lambda_{p_{n}}\in\Lambda_{p_{n}}\right\} _{n\geq1}$
\begin{equation}
\text{for which}\ h_{p_{n}}^{*}\left(\lambda_{p_{n}}\right)\rightarrow h^{*}\in H^{*},\ \text{we have}\ CP_{p_{n}}\left(\lambda_{p_{n}}\right)\rightarrow1-\alpha,\label{eq:ACG=000020requirement}
\end{equation}
where $\left\{ h_{n}^{*}\left(\lambda\right)\right\} _{n\geq1}$ is
a suitably chosen sequence of functions.\footnote{The asterisks on $h_{n}^{*}(\lambda_{p_{n}}),h^{*},$ and $H^{*}$
do not appear in ACG. They are added here to avoid confusion with
the smoothing parameter $h$ used in this paper.} In the present case, we take 
\begin{align}
h_{n}^{*}\left(\lambda\right) & :=\left(h_{n,1}^{*}\left(\lambda\right),h_{n,2}^{*}\left(\lambda\right),h_{n,3}^{*}\left(\l\right),h_{n,4}^{*}\left(\lambda\right)\right)',\ \text{where}\nonumber \\
h_{n,1}^{*}\left(\lambda\right) & :=nh\left(1-\rho\left(\tau\right)\right)=nh\left(1-\left(1-\frac{\kappa\left(\tau\right)}{b}\right)\right)=\frac{nh}{b}\kappa\left(\tau\right),\nonumber \\
h_{n,2}^{*}\left(\lambda\right) & :=\frac{nh}{b},\nonumber \\
h_{n,3}^{\ast}\left(\lambda\right) & :=b/\left(nh^{1/2}\right),\ \text{and}\nonumber \\
h_{n,4}^{*}\left(\lambda\right) & :=\left(\kappa,\mu,\sigma^{2}\right),\label{eq:h^asterisk=000020defns}
\end{align}
where $\left(\kappa,\mu,\sigma^{2}\right)$ are viewed as functions
on $I_{\tau,\varepsilon_{2}}$, rather than on $\left[0,1\right]$.
As required by ACG, the functions $\left(\kappa,\mu,\sigma^{2}\right)$
in $h_{n,4}^{*}\left(\lambda\right)$ lie in a compact metric space
$\mathscr{T}$ (under the sup norm) by the definition of $\Lambda_{n}.$
We can write the smoothing parameter $h$ as $h=h_{n}.$ This implies
that for the subsequence $\left\{ p_{n}\right\} _{n\geq1}$, $nh$
becomes $p_{n}h_{p_{n}}.$

The condition $h_{p_{n}}^{*}\left(\lambda_{p_{n}}\right)\rightarrow h^{*}\in H^{*}$
in (\ref{eq:ACG=000020requirement}) implies (i) $\left(\kappa_{p_{n}},\mu_{p_{n}},\sigma_{p_{n}}^{2}\right)\rightarrow\left(\kappa_{0},\mu_{0},\sigma_{0}^{2}\right)\in\mathscr{T}$
under the sup norm, which implies Assumption \ref{assu:MT-kappa_n_sigma_n=000020mu_n=000020convergence},
(ii) $p_{n}h_{p_{n}}/b_{p_{n}}\rightarrow r_{0}\in\left[0,\infty\right],$
which is imposed in the subsequence versions of Theorems \ref{thm:MT-asym_rho_hat_local=000020to=000020unity=000020case}
and \ref{thm:MT-asympdist_dist_rho^hat-stationary} with $r_{0}\in\left[0,\infty\right)$
and $r_{0}=\infty$, respectively, (iii) $b_{p_{n}}/\left(p_{n}h_{p_{n}}^{1/2}\right)\to w_{0}\in\left[0,\infty\right]$,
which implies Assumption \ref{assu:MT-bn=000020div=000020nh=000020converges}
and is imposed in the subsequence versions of Lemma \ref{lem:MT-Order=000020of=000020Y_T0}(b)
and Theorem \ref{thm:MT-asym_rho_hat_local=000020to=000020unity=000020case}
when $r_{0}=0$, and (iv) $p_{n}h_{p_{n}}\left(1-\rho_{p_{n}}\left(\tau\right)\right)=\frac{p_{n}h_{p_{n}}}{b_{p_{n}}}\kappa_{p_{n}}\left(\tau\right)\rightarrow r_{0}\kappa_{0}\left(\tau\right)\in\left[0,\infty\right]$,
which holds by (i) and (ii). In the present case, $h^{*}=\left(r_{0}\kappa_{0}\left(\tau\right),r_{0},w_{0},\left(\kappa_{0},\mu_{0},\sigma_{0}^{2}\right)\right)'$
and $H^{*}=\left[0,\infty\right]\times\left[0,\infty\right]\times\left[0,\infty\right]\times\mathscr{T}.$

By Theorem \ref{thm:MT-asym_rho_hat_local=000020to=000020unity=000020case},
Assumptions \ref{assu:MT-h}, \ref{assu:MT-stationary-nh^5=000020->=0000200},
and \ref{assu:MT-kappa_n_sigma_n=000020mu_n=000020convergence}, and
$p_{n}h_{p_{n}}/b_{p_{n}}\rightarrow r_{0}\in\left[0,\infty\right),$
which are implied by $h_{p_{n}}^{*}\left(\lambda_{p_{n}}\right)\rightarrow h^{*}\in H^{*}$
with $r_{0}\in\left[0,\infty\right),$ we have: $T_{p_{n}}\left(\rho_{p_{n}}\left(\tau\right)\right)\rightarrow_{d}J_{\psi}$
with $\psi=r_{0}\kappa_{0}\left(\tau\right)$ for $J_{\psi}$ defined
in (\ref{eq:MT-Distn=000020J_psi}). By Theorem \ref{thm:MT-asympdist_dist_rho^hat-stationary},
Assumptions \ref{assu:MT-h}, \ref{assu:MT-stationary-nh^5=000020->=0000200},
and \ref{assu:MT-kappa_n_sigma_n=000020mu_n=000020convergence}, and
$p_{n}h_{p_{n}}/b_{p_{n}}\rightarrow r_{0}=\infty,$ which are implied
by $h_{p_{n}}^{*}\left(\lambda_{p_{n}}\right)\rightarrow h^{*}\in H^{*}$
with $r_{0}=\infty,$ we have $T_{p_{n}}\left(\rho_{p_{n}}\left(\tau\right)\right)\rightarrow_{d}J_{\psi}$
for $\psi=\infty$ and $\ J_{\infty}\sim N(0,1).$

For $\psi\in\left[0,\infty\right]$, the quantiles $c_{\psi}\left(\alpha/2\right)$
and $c_{\psi}\left(1-\alpha/2\right)$ of the distribution of $J_{\psi}$,
which appear in the definition of $CI_{n,\tau},$ are continuous at
all $\psi\in\left[0,\infty\right].$ The proof of this is given in
the proof of Lemma A.7 of ACG with $Z\sim N\left(0,1\right)$ replaced
by $Z=0$ (which simplifies the proof because some terms that need
to be shown to be $o_{p}\left(1\right)$ are immediately $0$ and
the case $\psi=0$ is trivial when $Z=0$). In consequence, under
$\left\{ \lambda_{p_{n}}\in\Lambda_{n}\right\} _{n\geq1}$,
\begin{align}
\psi_{p_{n}h_{p_{n}},\rho_{p_{n}}\left(\tau\right)} & \rightarrow r_{0}\kappa_{0}\left(\tau\right)\ \text{implies\ that}\nonumber \\
c_{\psi_{p_{n}h_{p_{n}},\rho_{p_{n}}\left(\tau\right)}}\left(\alpha/2\right) & \rightarrow c_{r_{0}\kappa_{0}\left(\tau\right)}\left(\alpha/2\right)\ \text{and}\ c_{\psi_{p_{n}h_{p_{n}}\rho_{p_{n}}\left(\tau\right)}}\left(1-\alpha/2\right)\rightarrow c_{r_{0}\kappa_{0}\left(\tau\right)}\left(1-\alpha/2\right).\label{eq:quant=000020conv}
\end{align}

Now, we show that the convergence in the first line of (\ref{eq:quant=000020conv})
holds. For notational simplicity, we replace $p_{n}$ by $n$ in the
proof of this convergence. We consider three cases. Case 1: $r_{0}\in\left[0,\infty\right)$.
Case 2: (i) $r_{0}=\infty$ and (ii) $\rho_{n}\left(\tau\right)>0$
for $n$ sufficiently large. Case 3: (i) $r_{0}=\infty$ and (ii)
$\rho_{n}\left(\tau\right)\leq0$ infinitely often as $n\rightarrow\infty.$
In case 1, $nh/b_{n}\rightarrow r_{0}<\infty,$ and so, $b_{n}\rightarrow\infty$,
Lemma \ref{lem:A.1=000020Kappa_n*=000020go=000020to=000020Kappa_n}
applies, and $\rho_{n}\left(\tau\right)>0$ for $n$ sufficiently
large. Thus, we have
\begin{align}
\psi_{p_{n}h_{p_{n}},\rho_{p_{n}}\left(\tau\right)} & =-nh\ln\left(\rho_{n}\left(\tau\right)\right)=-nh\ln\left(\exp\{-\kappa_{n}^{\ast}\left(\tau\right)/b_{n}\}\right)\nonumber \\
 & =nh\kappa_{n}\left(\tau\right)\left(1+o\left(1\right)\right)/b_{n}\rightarrow r_{0}\kappa_{0}\left(\tau\right),\label{eq:Thm7.1-case1}
\end{align}
where the second equality holds by (\ref{Defn=000020of=000020cappa*_n}),
the third equality holds by Lemma \ref{lem:A.1=000020Kappa_n*=000020go=000020to=000020Kappa_n},
and the convergence holds by $h_{n}^{*}\left(\lambda_{n}\right)\rightarrow h^{*}\in H^{*}$.
Thus, case 1 is proved.

For case 2, by implication (iv) listed above, we have $d_{n}:=nh\left(1-\rho_{n}\left(\tau\right)\right)\rightarrow r_{0}\kappa_{0}\left(\tau\right)=\infty.$
We have $\rho_{n}\left(\tau\right)=1-d_{n}/nh$ and $\psi_{p_{n}h_{p_{n}},\rho_{p_{n}}\left(\tau\right)}=-nh\ln\left(1-d_{n}/nh\right)$
for $n$ large by the definition of $\psi_{p_{n}h_{p_{n}},\rho_{p_{n}}\left(\tau\right)}$
and the assumption that $\rho_{n}\left(\tau\right)>0$ for $n$ sufficiently
large by condition (ii) of case 2. For all $d\in(0,\infty)$, we have
\begin{equation}
\liminf_{n\rightarrow\infty}\psi_{p_{n}h_{p_{n}},\rho_{p_{n}}\left(\tau\right)}=\liminf_{n\to\infty}\left[-nh\ln\left(1-d_{n}/nh\right)\right]\geq\liminf_{n\to\infty}\left[-nh\ln\left(1-d/nh\right)\right]\label{eq:Thm7.1-case2}
\end{equation}
because the right-hand side quantity is increasing in $d$ and $d_{n}\rightarrow\infty.$
By a mean value expansion around $1,$ $\ln\left(1-d/nh\right)=(1/d_{n*})(-d/nh)$,
where $d_{n*}\in[1-d/nh,1]$ and $d_{n*}\rightarrow1.$ Hence,
\begin{equation}
\liminf_{n\to\infty}\left[-nh\ln\left(1-d/nh\right)\right]\rightarrow d.\label{eq:Thm7.1-case2-2}
\end{equation}

Since this holds for all $d\in(0,\infty)$, using (\ref{eq:Thm7.1-case2}),
we get $\liminf_{n\rightarrow\infty}\psi_{p_{n}h_{p_{n}},\rho_{p_{n}}\left(\tau\right)}=\infty=r_{0}\kappa_{0}\left(\tau\right),$
as desired and case 2 is proved.

For case 3, for the subsequence of indices for which $\rho_{n}\left(\tau\right)\leq0$,
we have $\psi_{p_{n}h_{p_{n}},\rho_{p_{n}}\left(\tau\right)}=\infty$
for all indices, and so, its limit equals $\infty=r_{0}\kappa_{0}\left(\tau\right),$
as desired. For the subsequence of indices for which $\rho_{n}\left(\tau\right)>0$,
the argument used to prove case 2 applies and the limit is $\infty=r_{0}\kappa_{0}\left(\tau\right).$
Thus, case 3 is proved.

Now, given the convergence in the second line of (\ref{eq:quant=000020conv}),
we have 
\begin{align}
 & \hspace{6mm}CP_{p_{n}}\left(\lambda_{p_{n}}\right)\nonumber \\
 & =P_{\lambda_{p_{n}}}\left(\rho_{p_{n}}\left(\tau\right)\in CI_{p_{n},\tau}\right)\nonumber \\
 & =P_{\lambda_{p_{n}}}\left(c_{\psi_{p_{n}h_{p_{n}},\rho_{p_{n}}\left(\tau\right)}}\left(\alpha/2\right)\leq T_{p_{n}}\left(\rho_{p_{n}}\left(\tau\right)\right)\leq c_{\psi_{p_{n}h_{p_{n}},\rho_{p_{n}}\left(\tau\right)}}\left(1-\alpha/2\right)\right)\nonumber \\
 & \rightarrow P\left(c_{r_{0}\kappa_{0}\left(\tau\right)}\left(\alpha/2\right)\leq J_{r_{0}\kappa_{0}\left(\tau\right)}\leq c_{r_{0}\kappa_{0}\left(\tau\right)}\left(1-\alpha/2\right)\right)\nonumber \\
 & =1-\text{\ensuremath{\a}},\label{eq:CP=000020conv}
\end{align}
where the second equality holds by (\ref{eq:MT-CI=000020defn}), the
convergence holds by $T_{p_{n}}\left(\rho_{p_{n}}\left(\tau\right)\right)\rightarrow_{d}J_{\psi}$
with $\psi=r_{0}\kappa_{0}\left(\tau\right)$ and (\ref{eq:quant=000020conv}),
and the last equality holds by the definition of the quantile $c_{\psi}\left(\alpha\right)$
following (\ref{eq:MT-Distn=000020J_psi}). This verifies (\ref{eq:ACG=000020requirement})
and completes the verification of Assumptions B and S of ACG with
$CP=1-\alpha$ in Assumption S, which completes the proof. 
\end{proof}
\begin{proof}[\textbf{Proof of Lemma \ref{lem:A.1=000020Kappa_n*=000020go=000020to=000020Kappa_n}}]
Because $b_{n}\rightarrow\infty$ and $\kappa_{n}\left(\cd\right)$
is nonnegative and bounded on $I_{\tau,\varepsilon_{2}}$ uniformly
over $n$ by part (ii) of $\Lambda_{n},$ 
\begin{equation}
\sup_{s\in I_{\tau,\varepsilon_{2}}}|\rho_{n}\left(s\right)-1|=\sup_{s\in I_{\tau,\varepsilon_{2}}}\frac{\kappa_{n}\left(s\right)}{b_{n}}\rightarrow0.\label{1st=000020eqn}
\end{equation}

\noindent We prove by contradiction that 
\begin{equation}
\sup_{s\in I_{\tau,\varepsilon_{2}}}\frac{\kappa_{n}^{\ast}\left(s\right)}{b_{n}}\rightarrow0.\label{2nd=000020eqn}
\end{equation}
Let $\{s_{n}\}_{n\geq1}$ be a sequence in $I_{\tau,\varepsilon_{2}}$
such that $\sup_{s\in I_{\tau,\varepsilon_{2}}}\kappa_{n}^{\ast}\left(s\right)/b_{n}-\kappa_{n}^{\ast}\left(s_{n}\right)/b_{n}\rightarrow0.$
Suppose the claim does not hold. Then, there exists a subsequence
$\{n_{k}\}_{k\geq1}$ of $\{n\}_{n\geq1}$ and a constant $\delta>0$
such that $\sup_{s\in I_{\tau,\varepsilon_{2}}}\kappa_{n_{k}}^{\ast}\left(s_{n_{k}}\right)/b_{n_{k}}\geq\delta$
$\forall k\geq1,$ and hence, $\kappa_{n_{k}}^{\ast}\left(s_{n_{k}}\right)/b_{n_{k}}\geq\delta/2$
for all $k$ large. In consequence, 
\begin{equation}
\rho_{n_{k}}\left(s_{n_{k}}\right):=\exp\{-\kappa_{n_{k}}^{\ast}\left(s_{n_{k}}\right)/b_{n_{k}}\}\leq\exp\{-\delta/2\}<1\label{3rd=000020eqn}
\end{equation}
for all $k$ large, where the equality holds by the definition of
$\kappa_{n}^{\ast}\left(s\right)$ in (\ref{Defn=000020of=000020cappa*_n}),
which contradicts (\ref{1st=000020eqn}) and establishes (\ref{2nd=000020eqn}).

For each $s\in I_{\tau,\varepsilon_{2}},$ a mean value expansion
gives 
\begin{equation}
\exp\{-\kappa_{n}^{\ast}\left(s\right)/b_{n}\}=1-\exp\{-\kappa_{n}^{\ast\ast}\left(s\right)/b_{n}\}\frac{\kappa_{n}^{\ast}\left(s\right)}{b_{n}},\label{4th=000020eqn}
\end{equation}
where $\kappa_{n}^{\ast\ast}\left(s\right)$ lies between $\kappa_{n}^{\ast}\left(s\right)$
and $0.$ The latter and (\ref{2nd=000020eqn}) give: $\sup_{s\in I_{\tau,\varepsilon_{2}}}\kappa_{n}^{\ast\ast}\left(s\right)/b_{n}=o(1).$

We have 
\begin{eqnarray}
0\overset{}{=}\rho_{n}\left(s\right)-\rho_{n}\left(s\right) & = & \exp\{-\kappa_{n}^{\ast}\left(s\right)/b_{n}\}-(1-\kappa_{n}\left(s\right)/b_{n})\nonumber \\
 & = & -\exp\{-\kappa_{n}^{\ast\ast}\left(s\right)/b_{n}\}\frac{\kappa_{n}^{\ast}\left(s\right)}{b_{n}}+\frac{\kappa_{n}\left(s\right)}{b_{n}},\label{5th=000020eqn}
\end{eqnarray}
where the third equality uses (\ref{4th=000020eqn}). Equation (\ref{5th=000020eqn})
gives 
\begin{eqnarray}
 &  & \frac{\kappa_{n}^{\ast}\left(s\right)}{\kappa_{n}\left(s\right)}\overset{}{=}\exp\{\kappa_{n}^{\ast\ast}\left(s\right)/b_{n}\},\text{ and so,}\nonumber \\
 &  & \sup_{s\in I_{\tau,\varepsilon_{2}}}\left\vert \frac{\kappa_{n}^{\ast}\left(s\right)}{\kappa_{n}\left(s\right)}-1\right\vert \overset{}{=}\sup_{s\in I_{\tau,\varepsilon_{2}}}\left\vert \exp\{\kappa_{n}^{\ast\ast}\left(s\right)/b_{n}\}-1\right\vert \overset{}{\rightarrow}0,
\end{eqnarray}
where the convergence holds by a mean value expansion using $\sup_{s\in I_{\tau,\varepsilon_{2}}}\kappa_{n}^{\ast\ast}\left(s\right)/b_{n}=o(1).$
\end{proof}

\subsection{\protect\label{subsec:Proof-of-Lemma-1.1=000020Intertemporal=000020Difference}
Proof of Lemma \ref{lem:MT-Max=000020Intertemporal=000020Difference}}
\begin{proof}[\textbf{Proof of Lemma \ref{lem:MT-Max=000020Intertemporal=000020Difference}}]
First, we prove part (a). By definition, $\left[T_{1},T_{2}\right]=I_{n\tau,nh/2}$.
Because $\kappa_{n}\left(\cd\right)$ is Lipschitz on $I_{\tau,\varepsilon_{2}}$,
we have 
\begin{equation}
\max_{t\in I_{n\tau,nh/2}}\abs{\kappa_{n}\left(t/n\right)-\kappa_{n}\left(\tau\right)}\leq L_{4}\max_{t\in I_{n\tau,nh/2}}\abs{t/n-\tau}=O\left(h\right),\label{eq:kappanapprox_a.5.8}
\end{equation}
where the inequality holds for $n$ sufficiently large such that $h/2+1/n\leq\varepsilon_{2}$
(because then $t\in I_{n\tau,nh/2}$ implies that $t/n\in I_{\tau,\varepsilon_{2}}$).

Next, we have 
\begin{align}
\max_{t\in I_{n\tau,nh/2}}\abs{\rho_{t}-\rho_{n\tau}} & =\max_{t\in I_{n\tau,nh/2}}\abs{\kappa_{n}\left(t/n\right)-\kappa_{n}\left(\tau\right)}/b_{n}\nonumber \\
 & =O\left(h/b_{n}\right),\label{eq:rho_t-rho_ntau_A.5.9}
\end{align}
where the first equality holds because $\rho_{t}=\rho_{n}(t/n)=1-\kappa_{n}(t/n)/b_{n}$
and $\rho_{n\tau}=\rho_{n}(\tau)=1-\kappa_{n}(\tau)/b_{n}$ by \eqref{eq:MT-structure-of-rho}
and the second equality uses \eqref{eq:kappanapprox_a.5.8}. This
establishes part (a).

Part (b) holds by (\ref{eq:kappanapprox_a.5.8}) with $\sigma_{n}^{2}\left(\cd\right)$
in place of $\kappa_{n}\left(\cd\right)$.

Using \eqref{eq:MT-def_c_t,j}, we have 
\begin{align}
\max_{t\in I_{n\tau,nh/2}}\max_{0\leq j\leq t-T_{1}}\abs{c_{t,j}-\rho_{n\tau}^{j}} & :=\max_{t\in I_{n\tau,nh/2}}\max_{0\leq j\leq t-T_{1}}\abs{\prod_{k=1}^{j}\rho_{t-k}-\rho_{n\tau}^{j}}\nonumber \\
 & \leq\max_{t\in I_{n\tau,nh/2}}\max_{0\leq j\leq t-T_{1}}\left(j+1\right)\max_{1\leq k\leq j}\abs{\rho_{t-k}-\rho_{n\tau}}\nonumber \\
 & =O\left(nh\cd h/b_{n}\right),\label{eq:c_t,j=000020-=000020pho_ntauj-A.5.10}
\end{align}
where the first equality holds by \eqref{eq:MT-def_c_t,j}, the inequality
uses standard manipulations and $\max\left(\abs{\rho_{t-k}},\abs{\rho_{n\tau}}\right)\leq1$,
and the second equality uses part (a) and $j+1\leq T_{2}-T_{1}=O\left(nh\right)$.
Hence, part (c) holds.

Part (d) holds by \eqref{eq:rho_t-rho_ntau_A.5.9} with $\eta_{n}\left(\cd\right)$
in place of $\kappa_{n}\left(\cd\right)$. 
\end{proof}

\subsection{\protect\label{subsec:Proof_of_Lemma_1.2=000020order=000020of=000020Y_T0}Proof
of Lemma \ref{lem:MT-Order=000020of=000020Y_T0}}

The proofs of Lemma \ref{lem:MT-Order=000020of=000020Y_T0}(b) and
Lemma \ref{lem:MT-T_0=000020Initial=000020Condition=000020Asy=000020Distn=000020Lem}
use the following lemma, which is an extension of Lemma \ref{lem:MT-Max=000020Intertemporal=000020Difference}(a)--(c). 
\begin{lem}
\label{Bd=000020on=000020rho=000020deviations=000020Lem}Under Assumptions
\emph{\ref{assu:MT-h} }and \emph{\ref{assu:MT-kappa_n_sigma_n=000020mu_n=000020convergence},
}for a sequence $\{\lambda_{n}=(\rho_{n},\mu_{n},\sigma_{n}^{2},\kappa_{n},b_{n},F_{n})\in\Lambda_{n}\}_{n\geq1}$
and a sequence of integer constants $\{m_{n}\}_{n\geq1}$\emph{\ }for
which $m_{n}\rightarrow\infty$ and $m_{n}/n\rightarrow0,$ 
\begin{enumerate}
\item $\max_{t\in I_{n\tau,nh/2+2m_{n}}}|\rho_{t}-\rho_{n\tau}|=O((h+\frac{m_{n}}{n})/b_{n}),$ 
\item $\max_{t\in I_{n\tau,nh/2+2m_{n}}}|\sigma_{t}^{2}-\sigma_{n\tau}^{2}|=O(h+\frac{m_{n}}{n}),$
and 
\item $\max_{t\in\{T_{0}-m_{n},T_{0}\}}\max_{0\leq j\leq m_{n}}|c_{t,j}-\rho_{n\tau}^{j}|=O(m_{n}(h+\frac{m_{n}}{n})/b_{n}).$ 
\end{enumerate}
\end{lem}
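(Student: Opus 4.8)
The plan is to re-run the argument behind Lemma~\ref{lem:MT-Max=000020Intertemporal=000020Difference}(a)--(c) on the slightly larger index set $I_{n\tau,nh/2+2m_{n}}$. The only genuinely new point is that this enlarged set, once divided by $n$, still lands inside the interval $I_{\tau,\varepsilon_{2}}$ on which part~(ii) of $\Lambda_{n}$ supplies the Lipschitz structure of $\kappa_{n}(\cd)$ and $\sigma_{n}^{2}(\cd)$. Concretely, for $t\in I_{n\tau,nh/2+2m_{n}}$ one has $\abs{t/n-\tau}\leq h/2+2m_{n}/n+1/n$, and since $h\to0$ and $m_{n}/n\to0$, the right-hand side is below $\varepsilon_{2}$ for $n$ large; hence $t/n\in I_{\tau,\varepsilon_{2}}$ and the required Lipschitz bounds are available on the whole enlarged set.

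Given this containment, parts~(a) and~(b) are immediate analogues of the original proof. For part~(a), Lipschitz continuity of $\kappa_{n}$ with constant $L_{4}$ gives $\abs{\kappa_{n}(t/n)-\kappa_{n}(\tau)}\leq L_{4}(h/2+2m_{n}/n)=O(h+m_{n}/n)$; dividing by $b_{n}$ via $\rho_{t}=1-\kappa_{n}(t/n)/b_{n}$ yields the claimed $O((h+m_{n}/n)/b_{n})$ bound. Part~(b) is the identical computation with $\sigma_{n}^{2}$ and $L_{3}$ replacing $\kappa_{n}$ and $L_{4}$, and without the factor $b_{n}^{-1}$.

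Part~(c) is where I would be most careful, and it reduces to part~(a) through the standard product-difference inequality (Lemma~A.1 of \citet*{GIRAITISKapetaniosYates2014rckernel}, already used in \eqref{eq:MT-ctj=000020-=000020rho^j=000020bound-gto_1}). Since $\abs{\rho_{t-k}}\leq1$ and $\abs{\rho_{n\tau}}\leq1$ by the definition of $\Lambda_{n}$, that inequality gives
\[
\abs{c_{t,j}-\rho_{n\tau}^{j}}\leq j\max_{0\leq k\leq j-1}\abs{\rho_{t-k}-\rho_{n\tau}}.
\]
For $t$ ranging over the integers in $[T_{0}-m_{n},T_{0}]$ and $0\leq k\leq j-1\leq m_{n}-1$, the shifted index $t-k$ lies in $[T_{0}-2m_{n},T_{0}]$, which is contained in $I_{n\tau,nh/2+2m_{n}}$ (up to the harmless integer-rounding in $T_{0}=T_{1}-1=\lfloor n\tau\rfloor-\lfloor nh/2\rfloor-1$). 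Thus each term $\abs{\rho_{t-k}-\rho_{n\tau}}$ is controlled by part~(a), i.e., is $O((h+m_{n}/n)/b_{n})$, and combining with $j\leq m_{n}$ delivers the asserted $O(m_{n}(h+m_{n}/n)/b_{n})$ bound.

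The main obstacle is precisely the index bookkeeping in part~(c): one must confirm that subtracting up to $m_{n}$ (through $t$) and then up to a further $m_{n}$ (through $k$) from the neighborhood of $T_{0}$ still lands inside the enlarged window, which is exactly why the enlargement is taken to be $2m_{n}$ rather than $m_{n}$. Once that containment is checked, everything is a routine re-run of Lemma~\ref{lem:MT-Max=000020Intertemporal=000020Difference} on a marginally larger index set, with the only substantive change being the replacement of $h/2$ by $h/2+2m_{n}/n$ in the distance-to-$\tau$ bounds.
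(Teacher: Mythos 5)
Your proposal is correct and matches the paper's own proof essentially line for line: both re-run Lemma \ref{lem:MT-Max=000020Intertemporal=000020Difference}(a)--(c) on the enlarged window $I_{n\tau,nh/2+2m_{n}}$, using $h\to0$ and $m_{n}/n\to0$ to keep $t/n\in I_{\tau,\varepsilon_{2}}$ for $n$ large, and both obtain part (c) from the product-difference inequality with at most $O(m_{n})$ factors so that the $O(nh)$ multiplier in \eqref{eq:c_t,j=000020-=000020pho_ntauj-A.5.10} becomes $O(m_{n})$. Your index bookkeeping for part (c) (shifts of up to $2m_{n}$ below $T_{0}$, which is exactly why the window is enlarged by $2m_{n}$) is the same point the paper handles by taking $\max_{t\in\{T_{0},T_{0}-m_{n}\}}\max_{0\leq j\leq2m_{n}}$, so there is no gap.
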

\begin{proof}[\textbf{Proof of Lemma \ref{Bd=000020on=000020rho=000020deviations=000020Lem}}]
 The proof of Lemma \ref{Bd=000020on=000020rho=000020deviations=000020Lem}(a)
is the same as that of Lemma \ref{lem:MT-Max=000020Intertemporal=000020Difference}(a)
with the following changes. In \eqref{eq:kappanapprox_a.5.8}, $I_{n\tau,nh/2}$
is replaced by $I_{n\tau,nh/2+2m_{n}}$ and $L_{4}\max_{t\in I_{n\tau,nh/2}}|t/n-\tau|=O(h)$
is replaced by $L_{4}\max_{t\in I_{n\tau,nh/2+2m_{n}}}|t/n-\tau|=O(h+\frac{m_{n}}{n})$
and the inequality in \eqref{eq:kappanapprox_a.5.8} holds for $n$
sufficiently large that $h/2+2m_{n}/n+1/n\leq\varepsilon_{2},$ which
uses the assumption that $m_{n}/n\rightarrow0$ (because then $t\in I_{n\tau,nh/2+2m_{n}}$
implies that $t/n\in I_{\tau,\varepsilon_{2}}).$ In \eqref{eq:rho_t-rho_ntau_A.5.9},
$I_{n\tau,nh/2}$ is replaced by $I_{n\tau,nh/2+2m_{n}}$ and $\max_{t\in I_{n\tau,nh/2}}\exp\{x_{nt}\}|\kappa_{n}^{\ast}(t/n)-\kappa_{n}^{\ast}(\tau)|/b_{n}=O(h/b_{n})$
is replaced by $\max_{t\in I_{n\tau,nh/2+2m_{n}}}\exp\{x_{nt}\}|\kappa_{n}^{\ast}(t/n)-\kappa_{n}^{\ast}(\tau)|/b_{n}=O((h+\frac{m_{n}}{n})/b_{n})$
using the revised version of \eqref{eq:kappanapprox_a.5.8}. This
establishes Lemma \ref{Bd=000020on=000020rho=000020deviations=000020Lem}(a).

The proof of Lemma \ref{Bd=000020on=000020rho=000020deviations=000020Lem}(b)
is the same as that of part (a) with $|\kappa_{n}(t/n)-\kappa_{n}(\tau)|/b_{n}$
replaced by $|\sigma_{t}^{2}-\sigma_{n\tau}^{2}|.$

The proof of Lemma \ref{Bd=000020on=000020rho=000020deviations=000020Lem}(c)
is the same as that of Lemma \ref{lem:MT-Max=000020Intertemporal=000020Difference}(c)
with $\max_{t\in I_{n\tau,nh/2}}\allowbreak\max_{0\leq j\leq t-T_{1}}$
replaced by $\max_{t\in\{T_{0},T_{0}-m_{n}\}}\max_{0\leq j\leq2m_{n}},$
which implies that the largest value of $j$ considered is bounded
by $2m_{n},$ rather than $nh.$ This implies that the rhs of \eqref{eq:c_t,j=000020-=000020pho_ntauj-A.5.10}
is changed from $O(nh\cdot h/b_{n})$ to $O(m_{n}\cdot(h+\frac{m_{n}}{n})/b_{n}),$
which establishes Lemma \ref{Bd=000020on=000020rho=000020deviations=000020Lem}(c). 
\end{proof}
\begin{proof}[\textbf{Proof of Lemma \ref{lem:MT-Order=000020of=000020Y_T0}}]
First, we prove part (a). By recursive substitution of \eqref{eq:MT-tvp-model},
we have 
\begin{align}
Y_{T_{0}}^{*} & =\sum_{j=0}^{T_{0}-1}c_{T_{0},j}\sigma_{T_{0}-j}U_{T_{0}-j}+c_{T_{0},T_{0}}Y_{0}^{*}.\label{eq:A.6.1-recursive}
\end{align}

By Markov's inequality, we only need to show $\E Y_{T_{0}}^{*2}/n=O\left(1\right)$,
which is true because 
\begin{align}
\E Y_{T_{0}}^{*2}/n= & \ n^{-1}\E\left(\sum_{j=0}^{T_{0}-1}c_{T_{0},j}\sigma_{T_{0}-j}U_{T_{0}-j}+c_{T_{0},T_{0}}Y_{0}^{*}\right)^{2}\nonumber \\
\leq & \ 2n^{-1}\E\left(\sum_{j=0}^{T_{0}-1}c_{T_{0},j}\sigma_{T_{0}-j}U_{T_{0}-j}\right)^{2}+2n^{-1}\E\left(c_{T_{0},T_{0}}Y_{0}^{*}\right)^{2}\nonumber \\
= & \ 2\sum_{j=0}^{T_{0}-1}c_{T_{0},j}^{2}\sigma_{T_{0}-j}^{2}/n+2c_{T_{0},T_{0}}^{2}\E Y_{0}^{*}{}^{2}/n\nonumber \\
\leq & \ 2C_{3,U}\left(T_{0}/n\right)+O(1)=O\left(1\right),\label{eq:=000020bound=000020order=000020of=000020E(Y_T0*)^2=000020O(n)=000020A.6.2}
\end{align}
where the first inequality holds by $\left(a+b\right)^{2}\leq2a^{2}+2b^{2}$,
the second equality uses the fact that $\left\{ U_{t}\right\} _{t=1}^{n}$
is a martingale difference sequence and $\E U_{t}^{2}=1$ by the definition
of the parameter space $\Lambda_{n}$, and the last inequality holds
by $\max_{t\in\left[1,n\right]}\sigma_{t}^{2}\leq C_{3,U}$, $\max_{j\in\left[0,T_{0}\right]}\abs{c_{T_{0},j}}\leq1$,
$T_{0}=\lfloor n\tau\rfloor-\lfloor nh/2\rfloor-1=O\left(n\right)$,
and part (v) of $\Lambda_{n}$.

In consequence, by Markov's inequality, we have 
\begin{equation}
Y_{T_{0}}^{*}=O_{p}\left(n^{1/2}\right),\label{eq:=000020bound=000020order=000020of=000020Y_T0*=000020Op(n^1/2)}
\end{equation}
which proves part (a).

Next, we prove part (b). By part (a), $Y_{T_{0}}^{\ast}=O_{p}(n^{1/2}).$
Hence, if $w_{0}=\infty$ (in Assumption \ref{assu:MT-bn=000020div=000020nh=000020converges}),
then 
\begin{equation}
\frac{(nh)^{1/2}}{b_{n}}Y_{T_{0}}^{\ast}=\frac{nh^{1/2}}{b_{n}}O_{p}(1)=o_{p}(1)\label{r_0=000020=00003D00003D=0000200=000020eqn=0000201}
\end{equation}
and the result of part (b) of the lemma is proved.

Hence, to prove part (b), it remains to consider the case where $w_{0}<\infty.$
For notational simplicity, we suppose $\sigma_{0}(\tau)=1.$ By recursive
substitution, as in \eqref{eq:A.6.1-recursive}, for a sequence of
integer constants $\{m_{n}\}_{n\geq1}$\emph{\ }for which $m_{n}\rightarrow\infty$
and $m_{n}/n\rightarrow0,$ we have 
\begin{equation}
Y_{T_{0}}^{\ast}=\sum_{j=0}^{m_{n}-1}c_{T_{0},j}\sigma_{T_{0}-j}U_{T_{0}-j}+c_{T_{0},m_{n}}Y_{T_{0}-m_{n}}^{\ast}.\label{r_0=000020=00003D00003D=0000200=000020eqn=0000202}
\end{equation}
Similarly to \eqref{eq:MT-bound=000020max=000020rho_t=000020by=000020rho=000020bar},
we bound $|\rho_{t}|$ for $t\in[T_{0}-m_{n},T_{0}]$ by $\overline{\rho}_{n}:=\max\{\exp\{-\kappa_{0}(\tau)/(2b_{n})\},-1+\varepsilon_{1}\}$.
As in \eqref{eq:MT-bound=000020max=000020rho_t=000020by=000020rho=000020bar},
it suffices to consider the case where $\rho_{t}\geq0$ for all $t\in[0,1]$.
We have
\begin{eqnarray}
\max_{t\in[T_{0}-m_{n},T_{0}]}|\rho_{t}|\hspace{-0.08in} & \leq & \hspace{-0.08in}\max_{t\in[T_{0}-m_{n},T_{0}]}|\rho_{t}-\rho_{n\tau}|+|\rho_{n\tau}-\exp\{-\kappa_{0}(\tau)/b_{n}\}|+\exp\{-\kappa_{0}(\tau)/b_{n}\}\nonumber \\
 & \leq & \hspace{-0.08in}O\left((h+\frac{m_{n}}{n})/b_{n}\right)+o(1)/b_{n}+\exp\{-\kappa_{0}(\tau)/b_{n}\}\leq\overline{\rho}_{n},\label{r_0=000020=00003D00003D=0000200=000020eqn=0000203}
\end{eqnarray}
where $b_{n}\geq\varepsilon_{3}>0,$ the second inequality uses Lemma
\ref{Bd=000020on=000020rho=000020deviations=000020Lem}(a), \eqref{eq:MT-rho_n},
Assumption \ref{assu:MT-kappa_n_sigma_n=000020mu_n=000020convergence},
and a mean value expansion, and the last inequality holds using $\kappa_{0}(\tau)>0$
and the fact that, when $b_{n}\rightarrow\infty,$ $\overline{\rho}_{n}-\exp\{-\kappa_{0}(\tau)/b_{n}\}\geq K/b_{n}$
for some constant $K>0.$ Hence, for $j=0,...,m_{n},$ 
\begin{equation}
c_{T_{0},j}\leq\overline{\rho}_{n}^{j}.\label{r_0=000020=00003D00003D=0000200=000020eqn=0000204}
\end{equation}

We have 
\begin{eqnarray}
EY_{T_{0}}^{\ast2}\hspace{-0.08in} & = & \hspace{-0.08in}\sum_{j=0}^{m_{n}-1}c_{T_{0},j}^{2}\sigma_{T_{0}-j}^{2}+c_{T_{0},m_{n}}^{2}EY_{T_{0}-m_{n}}^{\ast2}\leq O(1)\sum_{j=0}^{m_{n}-1}\overline{\rho}_{n}^{2j}+\overline{\rho}_{n}^{2m_{n}}O(n)\nonumber \\
 & \leq & \hspace{-0.08in}O(1)\frac{1}{1-\overline{\rho}_{n}^{2}}+\ensuremath{\omega}{}^{m_{n}/b_{n}}O(n)=O(b_{n})+\ensuremath{\omega}^{m_{n}/b_{n}}O(n),\text{ where}\nonumber \\
\ensuremath{\omega}\hspace{-0.08in} & : & \hspace{-0.13in}=\text{ }\exp\{-\kappa_{0}(\tau)\},\label{r_0=000020=00003D00003D=0000200=000020eqn=0000205}
\end{eqnarray}
the first equality uses (\ref{r_0=000020=00003D00003D=0000200=000020eqn=0000202})
and the martingale difference property of $\{U_{t}\}_{t\geq1},$ the
first inequality uses (\ref{r_0=000020=00003D00003D=0000200=000020eqn=0000204}),
$\max_{t\leq n}\sigma_{t}^{2}\leq C_{3,U}<\infty$ (by part (i) of
$\Lambda_{n}),$ and $EY_{T_{0}-m_{n}}^{\ast2}=O(n)$ (which holds
by \eqref{eq:=000020bound=000020order=000020of=000020E(Y_T0*)^2=000020O(n)=000020A.6.2}
with $T_{0}-m_{n}$ in place of $T_{0}),$ the second inequality holds
by a bound on the geometric sum and uses the definition of $\ensuremath{\omega},$
and the last equality uses \eqref{eq:MT-ps=000020rho^2t}.

We are considering the case where $\frac{b_{n}}{nh^{1/2}}\rightarrow w_{0}<\infty.$
We take $m_{n}=b_{n}c\ln(n)$ for some finite positive constant $c\in(0,-1/\ln(\omega)).$
We have $m_{n}\rightarrow\infty,$ because $b_{n}$ is bounded away
from zero by condition (ii) of $\Lambda_{n}$ and $c$ is positive.
We have 
\begin{equation}
\frac{m_{n}}{n}=\frac{b_{n}}{nh^{1/2}}ch^{1/2}\ln(n)\rightarrow0\text{ provided }h^{1/2}\ln(n)=o(1),\label{Pf=000020Lem=0000207.2(b)=000020eqn=00002010}
\end{equation}
because $\frac{b_{n}}{nh^{1/2}}\rightarrow w_{0}<\infty$ in the case
that we are considering. By Assumption \ref{assu:MT-stationary-nh^5=000020->=0000200},
the condition $h^{1/2}\ln(n)=o(1)$ holds. Thus, $m_{n}$ satisfies
the required conditions that $m_{n}\rightarrow\infty$ and $\frac{m_{n}}{n}\rightarrow0.$

Next, we have 
\begin{equation}
\omega^{m_{n}/b_{n}}n\rightarrow0\text{ iff }\frac{m_{n}}{b_{n}}\ln(\omega)+\ln(n)=(c\ln(\omega)+1)\ln(n)\rightarrow-\infty\label{Pf=000020Lem=0000207.2(b)=000020eqn=00002011}
\end{equation}
and the latter holds because $c\ln(\omega)+1<0$ by the definition
of $c.$

By (\ref{r_0=000020=00003D00003D=0000200=000020eqn=0000205}) and
Markov's inequality, $Y_{T_{0}}^{\ast2}=O_{p}(b_{n})+O_{p}(\omega^{m_{n}/b_{n}}n).$
Using this, we have
\begin{equation}
\frac{nh}{b_{n}^{2}}Y_{T_{0}}^{\ast2}=O_{p}(\frac{nh}{b_{n}})+\frac{nh}{b_{n}^{2}}O_{p}(\omega^{m_{n}/b_{n}}n)=o_{p}(1),\label{Pf=000020Lem=0000207.2(b)=000020eqn=00002012}
\end{equation}
where the second equality uses $\frac{nh}{b_{n}}\rightarrow r_{0}=0$
in the present case, $\frac{nh}{b_{n}^{2}}=\frac{nh}{b_{n}}\frac{1}{b_{n}}\rightarrow0,$
and $\omega^{m_{n}/b_{n}}n\rightarrow0$ by (\ref{Pf=000020Lem=0000207.2(b)=000020eqn=00002011}).

Equation (\ref{Pf=000020Lem=0000207.2(b)=000020eqn=00002012}) establishes
the desired results for the case $w_{0}<\infty,$ which completes
the proof of part (b).

Next, we prove part (c), which considers the case $nh/b_{n}\rightarrow r_{0}=\infty.$
Let $\{m_{n}\}_{n\geq1}$ be an arbitrary sequence of positive integers
for which $m_{n}\rightarrow\infty$ and $m_{n}/n\rightarrow0.$ Then,
by (\ref{r_0=000020=00003D00003D=0000200=000020eqn=0000205}), $EY_{T_{0}}^{\ast2}=O(b_{n})+\omega^{m_{n}/b_{n}}O(n).$
Now, take $m_{n}$ as defined above, i.e., $m_{n}=b_{n}c\ln(n)$ for
$c\in(0,-1/\ln(\omega)).$ As above, $m_{n}\rightarrow\infty.$ In
addition, 
\begin{equation}
\frac{m_{n}}{n}=\frac{b_{n}}{nh}ch\ln(n)\rightarrow0\text{ provided }h\ln(n)=O(1),\label{Pf=000020Lem=0000207.2(c)=000020eqn=00002015}
\end{equation}
where the convergence holds because $\frac{nh}{b_{n}}\rightarrow r_{0}=\infty,$
which holds by assumption, iff $\frac{b_{n}}{nh}\rightarrow0.$ The
condition $h\ln(n)=O(1)$ holds by Assumption \ref{assu:MT-stationary-nh^5=000020->=0000200}.
Now, $\omega^{m_{n}/b_{n}}O(n)=o(1)$ by the argument in (\ref{Pf=000020Lem=0000207.2(b)=000020eqn=00002011})
above. Combining this and (\ref{r_0=000020=00003D00003D=0000200=000020eqn=0000205})
gives
\begin{equation}
EY_{T_{0}}^{\ast2}=O(b_{n}).
\end{equation}
This and Markov's inequality proves part (c) of the lemma.
\end{proof}

\subsection{\protect\label{subsec:Proof_of_Lemma_1}Proof of Lemma \ref{lem:MT-y0limit}}

When $b_{n}\rightarrow\infty$, Lemma \ref{lem:A.1=000020Kappa_n*=000020go=000020to=000020Kappa_n}
and $\kappa_{n}\left(\cd\right)\rightarrow\kappa_{0}\left(\cd\right)$
imply that $\kappa_{n}^{\ast}\left(\cd\right)\rightarrow\kappa_{0}\left(\cd\right)$
uniformly over $I_{\tau,\varepsilon_{2}}$ and $\kappa_{n}^{\ast}\left(\cd\right)$
is Lipschitz with Lipschitz constant less than $2L_{4}$ because $\kappa_{n}\left(\cd\right)$
is Lipschitz with Lipschitz constant $L_{4}.$ In the following proofs,
for notational simplicity, we let $L_{4}$ be the Lipschitz constant
$2L_{4}$ for $\kappa_{n}^{\ast}\left(\cd\right)$ since the factor
2 does not affect the asymptotic results. Now, we prove the local-to-unity
asymptotic results with $\rho_{n}\left(\cdot\right)$ expressed in
terms of $\kappa_{n}^{\ast}\left(\cd\right)$, rather than $\kappa_{n}\left(\cd\right).$

$\ $
\begin{proof}[\textbf{Proof of Lemma \ref{lem:MT-y0limit}}]
We suppress $n$ from $Y_{n,t\left(s\right)}^{0}$ and $U_{n,t}$
in the proof. Denote $\E\left(\rest X\mathscr{G}_{i}\right)$ by $\E_{i}X$.
By the definition of the parameter space $\Lambda_{n}$, we have $\left\{ U_{t}\right\} _{t=1}^{n}$
is a martingale difference sequence.

We adopt the following notational convention. When the lower index
of a sum exceeds the upper index, the sum is defined to equal zero.
In particular, $\sum_{j=T_{1}+\lfloor nhs\rfloor+1}^{T_{1}+\lfloor nhs\rfloor}\kappa_{n}^{*}\left(j/n\right)=0$.
Then, by the definitions of $Y_{t}^{0}$ in (\ref{eq:MT-Y_t^0=000020and=000020Y_t^star}),
$\kappa_{n}^{*}$ in (\ref{Defn=000020of=000020cappa*_n}), and $t(s)$
in (\ref{eq:MT-def=000020t(s)}), we have 
\begin{align}
 & \left(nh\right)^{-1/2}Y_{t\left(s\right)}^{0}\nonumber \\
= & \left(nh\right)^{-1/2}\sum_{k=T_{1}}^{T_{1}+\lfloor nhs\rfloor}\exp\left\{ -\dfrac{1}{b_{n}}\sum_{j=k+1}^{T_{1}+\lfloor nhs\rfloor}\kappa_{n}^{*}\left(\dfrac{j}{n}\right)\right\} \sigma_{n}\left(\dfrac{k}{n}\right)U_{k}\nonumber \\
= & \left(nh\right)^{-1/2}\sum_{k=T_{1}}^{T_{1}+\lfloor nhs\rfloor}\exp\left\{ -\dfrac{1}{b_{n}}\left(\sum_{j=T_{1}}^{T_{1}+\lfloor nhs\rfloor}\kappa_{n}^{*}\left(\dfrac{j}{n}\right)-\sum_{j=T_{1}}^{k}\kappa_{n}^{*}\left(\dfrac{j}{n}\right)\right)\right\} \sigma_{n}\left(\dfrac{k}{n}\right)U_{k}\nonumber \\
= & \exp\left\{ -\dfrac{1}{b_{n}}\sum_{j=T_{1}}^{T_{1}+\lfloor nhs\rfloor}\kappa_{n}^{*}\left(\dfrac{j}{n}\right)\right\} \nonumber \\
 & \times\left[\left(nh\right)^{-1/2}\sum_{k=T_{1}}^{T_{1}+\lfloor nhs\rfloor}\exp\left\{ \dfrac{1}{b_{n}}\sum_{j=T_{1}}^{k}\kappa_{n}^{*}\left(\dfrac{j}{n}\right)\right\} \sigma_{n}\left(\dfrac{k}{n}\right)U_{k}\right]\nonumber \\
=: & A_{1s}\times A_{2s}.\label{eq:A_1s=000020=00003D000026=000020A_2s}
\end{align}

Because $\kappa_{n}^{*}\left(\cd\right)$ is Lipschitz on $I_{\tau,\varepsilon_{2}}$,
we have 
\begin{equation}
\max_{t\in I_{n\tau,nh/2}}\abs{\kappa_{n}^{*}\left(t/n\right)-\kappa_{n}^{*}\left(\tau\right)}\leq L_{4}\max_{t\in I_{n\tau,nh/2}}\abs{t/n-\tau}=O\left(h\right),\label{eq:kappa_n}
\end{equation}
where the inequality holds for $n$ sufficiently large such that $h/2+1/n\leq\varepsilon_{2}$.
Equation \eqref{eq:kappa_n} implies that 
\begin{equation}
\max_{s\in\left[0,1\right]}\abs{\left[\dfrac{1}{\lfloor nhs\rfloor+1}\sum_{j=T_{1}}^{T_{1}+\lfloor nhs\rfloor}\kappa_{n}^{*}\left(\dfrac{j}{n}\right)\right]-\kappa_{n}^{*}\left(\tau\right)}\leq\max_{j\in\left[T_{1},T_{1}+\lfloor nh\rfloor\right]}\abs{\kappa_{n}^{*}\left(\dfrac{j}{n}\right)-\kappa_{n}^{*}\left(\tau\right)}=O\left(h\right)\label{eq:kappa_conv}
\end{equation}
because the range of the summation is $\left[T_{1},T_{1}+\lfloor nhs\rfloor\right]\subset\left[T_{1},T_{1}+\lfloor nh\rfloor\right]\subset I_{n\tau,nh/2}$
for $s\in\left[0,1\right]$ by construction. Therefore, 
\begin{align}
A_{1s}= & \exp\left\{ -\dfrac{\lfloor nhs\rfloor+1}{b_{n}}\left[\dfrac{1}{\lfloor nhs\rfloor+1}\sum_{j=T_{1}}^{T_{1}+\lfloor nhs\rfloor}\kappa_{n}^{*}\left(\dfrac{j}{n}\right)\right]\right\} \nonumber \\
= & \exp\left\{ -\dfrac{\lfloor nhs+1\rfloor}{nh}\frac{nh}{b_{n}}\left(\kappa_{n}^{*}\left(\tau\right)+O\left(h\right)\right)\right\} \nonumber \\
\Rightarrow & \exp\left\{ -sr_{0}\kappa_{0}\left(\tau\right)\right\} ,\label{eq:A11}
\end{align}
where the convergence holds by Lemma \ref{lem:A.1=000020Kappa_n*=000020go=000020to=000020Kappa_n},
Assumptions \ref{assu:MT-h} and \ref{assu:MT-kappa_n_sigma_n=000020mu_n=000020convergence},
$nh/b_{n}\rightarrow r_{0}$ as $n\gto\infty$, and the continuous
mapping theorem (CMT).

To derive the limit distribution of 
\begin{equation}
A_{2s}=\left(nh\right)^{-1/2}\sum_{k=T_{1}}^{T_{1}+\lfloor nhs\rfloor}\exp\left\{ \dfrac{1}{b_{n}}\sum_{j=T_{1}}^{k}\kappa_{n}^{*}\left(\dfrac{j}{n}\right)\right\} \sigma_{n}\left(\dfrac{k}{n}\right)U_{k},
\end{equation}
we use Theorem 2.1 of \citet{hansen1992convergence}. First, we present
a few definitions. For any random arrays $\left\{ D_{n,k},W_{n,k}:T_{1}\leq k\leq T_{2};n\geq1\right\} $,
we transform the arrays into random elements on $\left[0,1\right]$
by defining 
\begin{equation}
D_{n}\left(u\right):=D_{n,T_{1}+\lfloor nhu\rfloor}\ \text{and}\ W_{n}\left(u\right):=W_{n,T_{1}+\lfloor nhu\rfloor}.\label{eq:random_transform}
\end{equation}
for $u\in\left[0,1\right]$. Define the differences $\delta_{n,k}:=W_{n,k}-W_{n,k-1}$.
Then, we define the stochastic integral 
\begin{equation}
\int_{0}^{s}D_{n}\left(u\right)dW_{n}\left(u\right):=\sum_{k=T_{1}}^{T_{1}+\lfloor nhs\rfloor}D_{n,k}\delta_{n,k+1}\label{eq:stochastic_integral}
\end{equation}
for $s\in\left[0,1\right]$.

We let 
\begin{align}
D_{n,k}^{a}:= & \exp\left\{ \dfrac{k-T_{1}+1}{b_{n}}\left(\dfrac{1}{k-T_{1}+1}\sum_{j=T_{1}}^{k}\kappa_{n}^{*}\left(\dfrac{j}{n}\right)-\kappa_{n}^{*}\left(\tau\right)\right)\right\} \nonumber \\
 & \times\exp\left\{ \dfrac{k-T_{1}+1}{b_{n}}\kappa_{n}^{*}\left(\tau\right)\right\} \sigma_{n}\left(\tau\right),\\
D_{n,k}^{b}:= & \exp\left\{ \dfrac{k-T_{1}+1}{b_{n}}\left(\dfrac{1}{k-T_{1}+1}\sum_{j=T_{1}}^{k}\kappa_{n}^{*}\left(\dfrac{j}{n}\right)-\kappa_{n}^{*}\left(\tau\right)\right)\right\} \nonumber \\
 & \times\exp\left\{ \dfrac{k-T_{1}+1}{b_{n}}\kappa_{n}^{*}\left(\tau\right)\right\} \left(\sigma_{n}\left(\dfrac{k}{n}\right)-\sigma_{n}\left(\tau\right)\right),\ \text{and}\\
W_{n,k}:= & \left(nh\right)^{-1/2}\sum_{j=T_{1}}^{k-1}U_{j}
\end{align}
for $T_{1}\leq k\leq T_{2}$. Next, transform these quantities into
random elements on $\left[0,1\right]$ by defining 
\begin{align}
D_{n}^{a}\left(u\right):= & \exp\left\{ \left(u+\dfrac{1}{nh}\right)\left(\frac{nh}{b_{n}}\right)\left(\dfrac{1}{1+nhu}\sum_{j=T_{1}}^{T_{1}+\lfloor nhu\rfloor}\left[\kappa_{n}^{*}\left(\dfrac{j}{n}\right)-\kappa_{n}^{*}\left(\tau\right)\right]\right)\right\} \nonumber \\
 & \times\exp\left\{ \left(u+\dfrac{1}{nh}\right)\left(\frac{nh}{b_{n}}\right)\kappa_{n}^{*}\left(\tau\right)\right\} \sigma_{n}\left(\tau\right),\label{eq:D_a}\\
D_{n}^{b}\left(u\right):= & \exp\left\{ \left(u+\dfrac{1}{nh}\right)\left(\frac{nh}{b_{n}}\right)\left(\dfrac{1}{1+nhu}\sum_{j=T_{1}}^{T_{1}+\lfloor nhu\rfloor}\left[\kappa_{n}^{*}\left(\dfrac{j}{n}\right)-\kappa_{n}^{*}\left(\tau\right)\right]\right)\right\} \nonumber \\
 & \times\exp\left\{ \left(u+\dfrac{1}{nh}\right)\left(\frac{nh}{b_{n}}\right)\kappa_{n}^{*}\left(\tau\right)\right\} \left(\sigma_{n}\left(\dfrac{T_{1}+\lfloor nhu\rfloor}{n}\right)-\sigma_{n}\left(\tau\right)\right),\ \text{and}\label{eq:D_b}\\
W_{n}\left(u\right):= & \left(nh\right)^{-1/2}\sum_{j=T_{1}}^{T_{1}+\lfloor nhu\rfloor-1}U_{j}\label{eq:W_n}
\end{align}
for $u\in\left[0,1\right]$.

Then, we have 
\begin{align}
A_{2s}= & \left(nh\right)^{-1/2}\sum_{k=T_{1}}^{T_{1}+\lfloor nhs\rfloor}\exp\left\{ \dfrac{1}{b_{n}}\sum_{j=T_{1}}^{k}\kappa_{n}^{*}\left(\dfrac{j}{n}\right)\right\} \sigma_{n}\left(\dfrac{k}{n}\right)U_{k}\nonumber \\
= & \left(nh\right)^{-1/2}\sum_{k=T_{1}}^{T_{1}+\lfloor nhs\rfloor}\exp\left\{ \dfrac{k-T_{1}+1}{b_{n}}\left(\dfrac{1}{k-T_{1}+1}\sum_{j=T_{1}}^{k}\kappa_{n}^{*}\left(\dfrac{j}{n}\right)-\kappa_{n}^{*}\left(\tau\right)\right)\right\} \nonumber \\
 & \times\exp\left\{ \dfrac{k-T_{1}+1}{b_{n}}\kappa_{n}^{*}\left(\tau\right)\right\} \sigma_{n}\left(\tau\right)U_{k}\nonumber \\
 & +\left(nh\right)^{-1/2}\sum_{k=T_{1}}^{T_{1}+\lfloor nhs\rfloor}\exp\left\{ \dfrac{k-T_{1}+1}{b_{n}}\left(\dfrac{1}{k-T_{1}+1}\sum_{j=T_{1}}^{k}\kappa_{n}^{*}\left(\dfrac{j}{n}\right)-\kappa_{n}^{*}\left(\tau\right)\right)\right\} \nonumber \\
 & \times\exp\left\{ \dfrac{k-T_{1}+1}{b_{n}}\kappa_{n}^{*}\left(\tau\right)\right\} \left(\sigma_{n}\left(\dfrac{k}{n}\right)-\sigma_{n}\left(\tau\right)\right)U_{k}\nonumber \\
= & \sum_{k=T_{1}}^{T_{1}+\lfloor nhs\rfloor}D_{n,k}^{a}\left(W_{n,k+1}-W_{n,k}\right)+\sum_{k=T_{1}}^{T_{1}+\lfloor nhs\rfloor}D_{n,k}^{b}\left(W_{n,k+1}-W_{n,k}\right)\nonumber \\
= & \int_{0}^{s}D_{n}^{a}\left(u\right)dW_{n}\left(u\right)+\int_{0}^{s}D_{n}^{b}\left(u\right)dW_{n}\left(u\right)\nonumber \\
=: & A_{2as}+A_{2bs}.\label{eq:A12ab}
\end{align}

We obtain 
\begin{equation}
\max_{u\in\left[0,1\right]}\abs{\dfrac{1}{1+nhu}\sum_{j=T_{1}}^{T_{1}+\lfloor nhu\rfloor}\left[\kappa_{n}\left(\dfrac{j}{n}\right)-\kappa_{n}\left(\tau\right)\right]}\leq O\left(h\right)\gto0,\label{eq:kappa_n=000020convergence=000020result.}
\end{equation}
by \eqref{eq:kappa_n} with $\kappa_{n}\left(\cd\right)$ in place
of $\kappa_{n}^{*}\left(\cd\right)$, $\left[T_{1},T_{1}+\lfloor nhu\rfloor\right]\subset\left[T_{1},T_{1}+\lfloor nh\rfloor\right]\subset I_{n\tau,nh/2}$,
and Assumption \ref{assu:MT-h}. In addition, Lemma \ref{lem:A.1=000020Kappa_n*=000020go=000020to=000020Kappa_n}
and Assumption \ref{assu:MT-kappa_n_sigma_n=000020mu_n=000020convergence}
imply that 
\begin{equation}
\sup_{s\in I_{\tau,\varepsilon_{2}}}|\kappa_{n}^{\ast}\left(s\right)-\kappa_{0}\left(s\right)|\rightarrow0.\label{eq:kappa_n*=000020converge=000020to=000020kappa0}
\end{equation}
Combining these two results, we have 
\begin{equation}
\max_{u\in\left[0,1\right]}\abs{\dfrac{1}{1+nhu}\sum_{j=T_{1}}^{T_{1}+\lfloor nhu\rfloor}\left[\kappa_{n}^{*}\left(\dfrac{j}{n}\right)-\kappa_{n}^{*}\left(\tau\right)\right]}\leq O\left(h\right)\gto0.\label{eq:A12bkappa}
\end{equation}

Then, we have 
\begin{equation}
D_{n}^{a}\left(u\right)\Rightarrow D^{a}\left(u\right):=\exp\left\{ ur_{0}\kappa_{0}\left(\tau\right)\right\} \sigma_{0}\left(\tau\right)\label{eq:Ualambda}
\end{equation}
by \eqref{eq:kappa_n*=000020converge=000020to=000020kappa0}, \eqref{eq:A12bkappa},
and the CMT. We also have 
\begin{equation}
W_{n}\left(\cdot\right)\Rightarrow B\left(\cdot\right),\label{eq:A.7.19_Wn=000020->=000020Bn}
\end{equation}
where $B\left(\cdot\right)$ is a standard Brownian motion on $\left[0,1\right]$,
by Theorem 2.3 of \citet{mcleish1974dependent}.

Therefore, 
\begin{align}
A_{2as}=\int_{0}^{s}D_{n}^{a}\left(u\right)dW_{n}\left(u\right)\Rightarrow & \int_{0}^{s}\exp\left\{ ur_{0}\kappa_{0}\left(\tau\right)\right\} \sigma_{0}\left(\tau\right)dB\left(u\right)\label{eq:A12a}
\end{align}
by \eqref{eq:Ualambda}, \eqref{eq:A.7.19_Wn=000020->=000020Bn},
the definition of the parameter space $\Lambda_{n}$, and Theorem
2.1 of \citet{hansen1992convergence}, with $D_{n}^{a}\left(\cd\right)$
in \eqref{eq:D_a} and $D^{a}\left(\cd\right)$ in \eqref{eq:Ualambda}
in the roles of $U_{n}\left(\cd\right)$ and $U\left(\cd\right)$,
respectively, and $W_{n}\left(\cd\right)$ in \eqref{eq:W_n} and
$B\left(\cd\right)$ in the roles of $Y_{n}\left(\cd\right)$ and
$Y\left(\cd\right)$, respectively, in Theorem 2.1 of \citet{hansen1992convergence}.

For $A_{2bs}$, because $\sigma_{n}^{2}\left(\cd\right)$ is a bounded
Lipschitz function on $\left[0,1\right]$ and $I_{\tau,h/2}\subset\left[0,1\right]$,
we have 
\begin{equation}
\max_{t\in I_{n\tau,nh/2}}\abs{\sigma_{n}\left(t/n\right)-\sigma_{n}\left(\tau\right)}\leq C_{\sigma}\max_{t\in I_{n\tau,nh/2}}\abs{t/n-\tau}=O\left(h\right),\label{eq:sigma_conv}
\end{equation}
where $C_{\sigma}:=L_{3}/\left(2C_{3,L}\right)$. Equation \eqref{eq:sigma_conv}
implies that 
\begin{equation}
\max_{u\in\left[0,1\right]}\abs{\sigma_{n}\left(\dfrac{T_{1}+\lfloor nhu\rfloor}{n}\right)-\sigma_{n}\left(\tau\right)}\leq O\left(h\right)\gto0\label{eq:A12bsigma}
\end{equation}
by $\left[T_{1},T_{1}+\lfloor nhu\rfloor\right]\subset\left[T_{1},T_{1}+\lfloor nh\rfloor\right]\subset I_{n\tau,nh/2}$
for $u\in\left[0,1\right]$, and Assumption \ref{assu:MT-h}. Then,
\begin{equation}
D_{n,\tau}^{b}\left(u\right)\Rightarrow0\label{eq:Ublambda}
\end{equation}
by \eqref{eq:A12bkappa}, \eqref{eq:A12bsigma}, Assumption \ref{assu:MT-kappa_n_sigma_n=000020mu_n=000020convergence},
$nh/b_{n}\rightarrow r_{0}\in\left[0,\infty\right)$ as $n\to\infty$,
and the CMT. Therefore,

\begin{align}
A_{2bs}=\int_{0}^{s}D_{n}^{b}\left(u\right)dW_{n}\left(u\right)\Rightarrow & \ 0\label{eq:A12b}
\end{align}
by the convergence result \eqref{eq:Ublambda} and Theorem 2.1 of
\citet{hansen1992convergence}, with $D_{n}^{b}\left(\cd\right)$
in \eqref{eq:D_b} and the zero function on $\left[0,1\right]$ in
the roles of $U_{n}\left(\cd\right)$ and $U\left(\cd\right)$, respectively,
and $W_{n}\left(\cd\right)$ in \eqref{eq:W_n} and $B\left(\cd\right)$
in the roles of $Y_{n}\left(\cd\right)$ and $Y\left(\cd\right)$,
respectively, in Theorem 2.1 of \citet{hansen1992convergence}.

Combining \eqref{eq:A12a} and \eqref{eq:A12b}, we obtain

\begin{equation}
A_{2s}\Rightarrow\int_{0}^{s}\exp\left(ur_{0}\kappa_{0}\left(\tau\right)\right)\sigma_{0}\left(\tau\right)dB\left(u\right),\label{eq:A12}
\end{equation}
where $B\left(u\right)$ is standard Brownian motion.

Therefore, by \eqref{eq:A_1s=000020=00003D000026=000020A_2s}, \eqref{eq:A11},
and \eqref{eq:A12}, we have 
\begin{equation}
\left(nh\right)^{-1/2}Y_{n,t\left(s\right)}^{0}/\sigma_{0}\left(\tau\right)=A_{1s}+A_{2s}\Rightarrow\int_{0}^{s}\exp\left\{ -\left(s-u\right)r_{0}\kappa_{0}\left(\tau\right)\right\} dB\left(u\right)\label{eq:A1limit}
\end{equation}
using the CMT and the assumption that $nh/b_{n}\rightarrow r_{0}$
as $n\to\infty$.

The subsequence version of Lemma \ref{lem:MT-y0limit}, see Remark
\ref{rem:MT-subseq=000020loc=000020unity=000020thm}, which has $\{p_{n}\}_{n\geq1}$
in place of $\{n\}_{n\geq1},$ is proved by replacing $n$ by $p_{n}$
and $h=h_{n}$ by $h_{p_{n}}$ throughout the proof above. 
\end{proof}

\subsection{\protect\label{subsec:Proof-of-Lemma=000020initial=000020condition=000020T0*=000020dist}Proof
of Lemma \ref{lem:MT-T_0=000020Initial=000020Condition=000020Asy=000020Distn=000020Lem}}
\begin{proof}[\textbf{Proof of Lemma }\foreignlanguage{american}{\ref{lem:MT-T_0=000020Initial=000020Condition=000020Asy=000020Distn=000020Lem}}]
\textbf{ }By recursive substitution, for a sequence of integers $\{m_{n}\}_{n\geq1}$
such that $m_{n}\rightarrow\infty,$ we write 
\begin{eqnarray}
 &  & (2\psi/nh)^{1/2}Y_{T_{0}}^{\ast}/\sigma_{0}\left(\tau\right)\overset{}{=}D_{1n}+D_{2n},\text{ where}\nonumber \\
 &  & D_{1n}\overset{}{:=}(2\psi/nh)^{1/2}\sum_{j=0}^{m_{n}-1}c_{T_{0},j}\sigma_{T_{0}-j}U_{T_{0}-j}/\sigma_{0}\left(\tau\right)\text{ and }\nonumber \\
 &  & D_{2n}\overset{}{:=}(2\psi/nh)^{1/2}c_{T_{0},m_{n}}Y_{T_{0}-m_{n}}^{\ast}/\sigma_{0}\left(\tau\right).\label{A_1=000020and=000020A_2=000020defns}
\end{eqnarray}
We show that $D_{1n}\rightarrow_{d}Z_{1}\sim N(0,1).$ We choose $m_{n}$
such that $D_{2n}=o_{p}(1).$ This requires that $m_{n}$ is large
enough that $c_{T_{0},m_{n}}$ is sufficiently small, but small enough
that $\rho_{t}$ is close to $\rho_{n\tau}$ for all $t\in[T_{0}-2m_{n},T_{0}].$

Define 
\begin{equation}
m_{n}=b_{n}/h^{1/5}.\label{defn=000020of=000020m_n}
\end{equation}
For this choice of $m_{n},$ we have 
\begin{equation}
\begin{array}{cl}
\text{(i)} & \frac{m_{n}}{n}=\left(\frac{b_{n}}{nh}\right)h^{4/5}=o(1),\\
\text{(ii)} & \frac{m_{n}h}{b_{n}}=h^{4/5}=o(h^{1/2}),\text{ and}\\
\text{(iii)} & \frac{m_{n}^{2}}{nb_{n}}=\left(\frac{b_{n}}{nh}\right)h^{3/5}=o(h^{1/2}),
\end{array}\label{properties=000020of=000020m_n}
\end{equation}
where (i) and (iii) use $\frac{b_{n}}{nh}\rightarrow\frac{1}{r_{0}}\in(0,\infty)$
and (i)--(iii) use $h=o(1)$ by Assumption \ref{assu:MT-h}. Given
(\ref{properties=000020of=000020m_n}), Lemma \ref{Bd=000020on=000020rho=000020deviations=000020Lem}
applies and the error on the rhs of its part (c) is $O(m_{n}(h+\frac{m_{n}}{n})/b_{n})=o(h^{1/2}).$
Thus, 
\begin{equation}
c_{T_{0},m_{n}}=\rho_{n\tau}^{m_{n}}+o(h^{1/2})\text{ and }c_{T_{0}-m_{n},m_{n}}=\rho_{n\tau}^{m_{n}}+o(h^{1/2}).\label{bounds=000020on=000020c}
\end{equation}

In addition, we have 
\begin{eqnarray}
\rho_{n\tau}^{m_{n}}\hspace{-0.08in} & = & \hspace{-0.08in}(1-\kappa_{n}(\tau)/b_{n})^{m_{n}}=\exp\{-\kappa_{n}^{\ast}(\tau)m_{n}/b_{n}\}=\exp\{-\kappa_{n}^{\ast}(\tau)h^{-1/5}\}\nonumber \\
 & = & \hspace{-0.08in}\exp\{-\kappa_{0}(\tau)\}^{h^{-1/5}\kappa_{n}^{\ast}(\tau)/\kappa_{0}(\tau)}=(\ensuremath{\omega}^{\gamma_{n}}\gamma_{n}^{5/2})\gamma_{n}^{-5/2}=o(h^{1/2}),\text{ where}\nonumber \\
\ensuremath{\omega}\hspace{-0.08in} & : & \hspace{-0.13in}=\text{ }\exp\{-\kappa_{0}(\tau)\},\text{ }\gamma_{n}:=h^{-1/5}\kappa_{n}^{\ast}(\tau)/\kappa_{0}(\tau),\label{bd=000020on=000020rho^m_n}
\end{eqnarray}
the second equality uses the definition of $\kappa_{n}^{\ast}(\cdot)$
in (\ref{Defn=000020of=000020cappa*_n}), the third equality uses
(\ref{defn=000020of=000020m_n}), the fifth equality uses the definitions
of $\ensuremath{\omega}$ and $\gamma_{n},$ and the last equality
on the second line holds because $\gamma_{n}\rightarrow\infty$ (using
$\kappa_{n}^{\ast}(\tau)/\kappa_{0}(\tau)\rightarrow1$ by Lemma \ref{lem:A.1=000020Kappa_n*=000020go=000020to=000020Kappa_n}),
$\ensuremath{\omega}\in(0,1)$ (since $\kappa(\tau)\geq\varepsilon_{4}>0$
by part (ii) of $\Lambda_{n}),$ $\ensuremath{\omega}^{x}x^{5/2}=o(1)$
as $x=\gamma_{n}\rightarrow\infty$ (since $\ln(\ensuremath{\omega}^{x}x^{5/2})=x\ln\ensuremath{\omega}+(5/2)\ln x\rightarrow-\infty$
as $x\rightarrow\infty),$ and $\gamma_{n}^{-5/2}=h^{1/2}(\kappa_{n}^{\ast}(\tau)/\kappa_{0}(\tau))^{-5/2}=O(h^{1/2}).$

Equations (\ref{bounds=000020on=000020c}) and (\ref{bd=000020on=000020rho^m_n})
combine to give 
\begin{equation}
c_{T_{0},m_{n}}=o(h^{1/2})\text{ and }c_{T_{0}-m_{n},m_{n}}=o(h^{1/2}).\label{bounds=000020on=000020c=000020eqn=0000202}
\end{equation}

Next, by recursive substitution, we write the multiplicand $Y_{T_{0}-m_{n}}^{\ast}$
in $D_{2n}$ as the sum of two quantities, as in (\ref{A_1=000020and=000020A_2=000020defns}),
as follows: 
\begin{eqnarray}
 &  & Y_{T_{0}-m_{n}}^{\ast}\overset{}{:=}\sum_{j=0}^{T_{0}-m_{n}-1}c_{T_{0}-m_{n},j}\sigma_{T_{0}-m_{n}-j}U_{T_{0}-m_{n}-j}+c_{T_{0}-m_{n},T_{0}-m_{n}}Y_{0}^{\ast}\text{ and}\nonumber \\
 &  & D_{2n}\overset{}{=}D_{21n}+D_{22n},\text{ where}\nonumber \\
 &  & D_{21n}\overset{}{:=}(2\psi/nh)^{1/2}c_{T_{0},m_{n}}\sum_{j=0}^{T_{0}-m_{n}-1}c_{T_{0}-m_{n},j}\sigma_{T_{0}-m_{n}-j}U_{T_{0}-m_{n}-j}/\sigma_{0}\left(\tau\right)\text{ and}\nonumber \\
 &  & D_{22n}\overset{}{:=}(2\psi/nh)^{1/2}c_{T_{0},m_{n}}c_{T_{0}-m_{n},T_{0}-m_{n}}Y_{0}^{\ast}/\sigma_{0}\left(\tau\right).\label{defns=000020of=000020A_21n=000020and=000020A_22n}
\end{eqnarray}

By part (v) of $\Lambda_{n},$ $E_{F_{n}}(Y_{0}^{\ast})^{2}\leq C_{5}n.$
Hence, by Markov's inequality, $Y_{0}^{\ast}=O_{p}(n^{1/2})$ and
\begin{equation}
D_{22n}=h^{-1/2}c_{T_{0},m_{n}}c_{T_{0}-m_{n},T_{0}-m_{n}}O_{p}(1)/\sigma_{0}\left(\tau\right)=o_{p}(h^{1/2})=o_{p}(1),\label{bd=000020on=000020A_22n}
\end{equation}
where the second last equality holds by (\ref{bounds=000020on=000020c=000020eqn=0000202})
and $\sigma_{0}\left(\tau\right)>0$ (because $\sigma_{0}\left(\tau\right)$
is bounded below by $C_{3,L}$ by part (i) of $\Lambda_{n}$ and $C_{3,L}>0$
by assumption).

To show that $D_{2n}=o_{p}(1),$ it remains to show that $D_{21n}=o_{p}(1).$
By Markov's inequality, it suffices to show that $E_{F_{n}}D_{21n}^{2}\rightarrow0.$
Since $\{U_{t}:t=1,...,n\}$ is a stationary martingale difference
sequence by part (iv) of $\Lambda_{n},$ its elements are uncorrelated.
Thus, we have 
\begin{eqnarray}
ED_{21n}^{2}\hspace{-0.08in} & = & \hspace{-0.08in}(2\psi/nh)c_{T_{0},m_{n}}^{2}\sum_{j=0}^{T_{0}-m_{n}-1}c_{T_{0}-m_{n},j}^{2}\sigma_{T_{0}-m_{n}-j}^{2}EU_{T_{0}-m_{n}-j}^{2}/\sigma_{0}\left(\tau\right)\nonumber \\
 & = & \hspace{-0.08in}(2\psi/nh)o(h)\sum_{j=0}^{T_{0}-m_{n}-1}\sigma_{T_{0}-m_{n}-j}^{2}/\sigma_{0}\left(\tau\right)=o(1),\label{bd=000020on=000020E(A_21n)^2}
\end{eqnarray}
where the second equality holds by the first result in (\ref{bounds=000020on=000020c=000020eqn=0000202}),
$c_{T_{0}-m_{n},j}^{2}\leq1$ (since $|\rho_{t}|\leq1$ for all $t\leq n$
by part (i) of $\Lambda_{n}),$ and $EU_{t}^{2}=1$ for all $t\leq n$
(by part (iv) of $\Lambda_{n})$ and the third equality holds because
$\sigma_{T_{0}-m_{n}-j}^{2}$ is bounded by $C_{3,U}<\infty$ (by
part (i) of $\Lambda_{n}),$ $\sigma_{0}\left(\tau\right)>0$ (as
noted above), and $T_{0}-m_{n}\leq n.$ This completes the proof that
$D_{21n}=o_{p}(1)$ and $D_{2n}=o_{p}(1).$

Next, we consider $D_{1n}.$ By change of variables with $i=T_{0}-j,$
we have 
\begin{equation}
D_{1n}=(2\psi/nh)^{1/2}\sum_{i=T_{0}-m_{n}+1}^{T_{0}}c_{T_{0},T_{0}-i}\sigma_{i}U_{i}/\sigma_{0}\left(\tau\right),\label{A_1n=000020ch=000020of=000020varis}
\end{equation}
where $\{U_{t}:t=0,...,n\}$ is a stationary martingale difference
sequence under $F_{n}.$ We apply the CLT in \citet{hall1980martingale}
with $X_{ni}=(2\psi/nh)^{1/2}c_{T_{0},T_{0}-i}\sigma_{i}U_{i}/\sigma_{0}\left(\tau\right),$
with the number of summands being $m_{n}-1,$ rather than $n,$ and
with the $\sigma$-fields $\mathcal{F}_{ni}$ being the $\sigma$-fields
$\mathcal{G}_{i}$ in part (iv) of $\Lambda_{n}.$ We need to verify
a Lindeberg condition and a conditional variance condition. To verify
the former, for any $\varepsilon,\delta>0,$ we have 
\begin{eqnarray}
 &  & \hspace{-0.08in}P\left(\sum_{i=T_{0}-m_{n}+1}^{T_{0}}E(X_{n,i}^{2}1(|X_{n,i}|>\varepsilon)|\mathcal{F}_{n,i-1})>\delta\right)\nonumber \\
 & \leq & \hspace{-0.08in}\delta^{-1}\sum_{i=T_{0}-m_{n}+1}^{T_{0}}EX_{n,i}^{2}1(X_{n,i}^{2}>\varepsilon^{2})\nonumber \\
 & = & \hspace{-0.08in}\delta^{-1}(2\psi/nh)\sum_{i=T_{0}-m_{n}+1}^{T_{0}}c_{T_{0},T_{0}-i}^{2}(\sigma_{i}^{2}/\sigma_{0}^{2}\left(\tau\right))EU_{i}^{2}1((2\psi/nh)c_{T_{0},T_{0}-i}^{2}(\sigma_{i}^{2}/\sigma_{0}^{2}\left(\tau\right))U_{i}^{2}>\varepsilon^{2})\nonumber \\
 & \leq & \hspace{-0.08in}\delta^{-1}\left[(2\psi/nh)\sum_{i=T_{0}-m_{n}+1}^{T_{0}}c_{T_{0},T_{0}-i}^{2}\right](C_{3,U}/\sigma_{0}^{2}\left(\tau\right))EU_{1}^{2}1(2\psi(C_{3,U}/\sigma_{0}^{2}\left(\tau\right))U_{1}^{2}>nh\varepsilon^{2})\nonumber \\
 & = & \hspace{-0.08in}O(1)EU_{1}^{2}1(2\psi(C_{3,U}/\sigma_{0}^{2}\left(\tau\right))U_{1}^{2}>nh\varepsilon^{2})\nonumber \\
 & = & \hspace{-0.08in}o(1),\label{Lindeberg=000020condition}
\end{eqnarray}
where the first inequality holds by Markov's inequality, the first
equality holds by the definition of $X_{ni},$ and the second inequality
holds because $\sigma_{i}^{2}\leq C_{3,U}$ by part (i) of $\Lambda_{n},$
$c_{T_{0},T_{0}-i}^{2}\leq1$ (since $|\rho_{t}|\leq1),$ and $\{U_{i}\}$
are identically distributed by part (iv) of $\Lambda_{n}.$ The second
last equality in (\ref{Lindeberg=000020condition}) holds because
\begin{equation}
(2\psi/nh)\sum_{i=T_{0}-m_{n}+1}^{T_{0}}c_{T_{0},T_{0}-i}^{2}=1+o(1),\label{Lindeberg=000020pf=0000201}
\end{equation}
as shown below. The last equality in (\ref{Lindeberg=000020condition})
holds because, for $\xi:=\varepsilon^{2}/(2\psi C_{3,U}/\sigma_{0}^{2}\left(\tau\right))>0,$
\begin{equation}
EU_{1}^{2}1(U_{1}^{2}>nh\xi)=EU_{1}^{2}1([U_{1}^{2}/(nh\xi)]>1)\leq EU_{1}^{4}(nh\xi)^{-1}\rightarrow0\label{Lindeberg=000020pf=0000202}
\end{equation}
using $EU_{1}^{4}\leq M<\infty$ by part (iv) of $\Lambda_{n}$ and
$nh\rightarrow\infty$ by Assumption \ref{assu:MT-h}.

To show (\ref{Lindeberg=000020pf=0000201}), we have 
\begin{equation}
(nh)^{-1}\sum_{i=T_{0}-m_{n}+1}^{T_{0}}(c_{T_{0},T_{0}-i}^{2}-\rho_{n\tau}^{2(T_{0}-i)})=(nh)^{-1}m_{n}o(h^{1/2})=o(1),\label{Lindeberg=000020pf=0000203}
\end{equation}
where the first equality uses $|c_{T_{0},T_{0}-i}^{2}-\rho_{n\tau}^{2(T_{0}-i)}|=|c_{T_{0},T_{0}-i}-\rho_{n\tau}^{T_{0}-i}|\cdot|c_{T_{0},T_{0}-i}+\rho_{n\tau}^{T_{0}-i}|\leq2|c_{T_{0},T_{0}-i}-\rho_{n\tau}^{T_{0}-i}|$
(since $|\rho_{n\tau}^{T_{0}-i}|,|c_{T_{0},T_{0}-i}|\leq1)$ and Lemma
\ref{Bd=000020on=000020rho=000020deviations=000020Lem}(c) with an
error $O(m_{n}(h+\frac{m_{n}}{n})/b_{n})$ that is shown above to
be $o(h^{1/2})$ and the last equality holds because $m_{n}=b_{n}/h^{1/5}$
by (\ref{defn=000020of=000020m_n}), and so, $(nh)^{-1}m_{n}h^{1/2}=b_{n}/(nh^{7/10})=(b_{n}/nh)h^{3/10}=(1/r_{0}+o(1))h^{3/10}=o(1)$
since $r_{0}>0.$

Next, we show that 
\begin{equation}
(2\psi/nh)\sum_{i=T_{0}-m_{n}+1}^{T_{0}}\rho_{n\tau}^{2(T_{0}-i)}=1+o(1).\label{Lindeberg=000020pf=0000204}
\end{equation}
This holds because (i) $\sum_{i=T_{0}-m_{n}+1}^{T_{0}}\rho_{n\tau}^{2(T_{0}-i)}=\sum_{j=0}^{m_{n}-1}\rho_{n\tau}^{2j}=(1-\rho_{n\tau}^{2(m_{n}+1)})/(1-\rho_{n\tau}^{2})$
using a change of variables, (ii) $\rho_{n\tau}^{2(m_{n}+1)}=\exp\{-2\kappa_{n}^{\ast}(\tau)(m_{n}+1)/b_{n}\}\rightarrow0,$
for $\kappa_{n}^{\ast}(\cdot)$ defined in (\ref{Defn=000020of=000020cappa*_n}),
using $m_{n}/b_{n}=1/h^{1/5}\rightarrow\infty$ and $\kappa_{n}^{\ast}(\tau)\rightarrow\kappa_{0}(\tau)>0$
(by Lemma \ref{lem:A.1=000020Kappa_n*=000020go=000020to=000020Kappa_n},
Assumption \ref{assu:MT-kappa_n_sigma_n=000020mu_n=000020convergence},
and part (ii) of $\Lambda_{n},$ which guarantees that $\kappa_{0}(\tau)>0),$
and 
\begin{equation}
\text{(iii) }nh(1-\rho_{n\tau}^{2})=nh(1-\rho_{n\tau})(1+\rho_{n\tau})=(1+\rho_{n\tau})nh\kappa_{n}(\tau)/b_{n}\rightarrow2\kappa_{0}(\tau)r_{0}=2\psi.\label{Lindeberg=000020pf=0000205}
\end{equation}
Equations (\ref{Lindeberg=000020pf=0000203}) and (\ref{Lindeberg=000020pf=0000204})
combine to establish (\ref{Lindeberg=000020pf=0000201}). This completes
the verification of the Lindeberg condition in (\ref{Lindeberg=000020condition}).

Now, we prove the conditional variance condition. We have 
\begin{eqnarray}
\sum_{i=T_{0}-m_{n}+1}^{T_{0}}E(X_{n,i}^{2}|\mathcal{F}_{n,i-1})-1\hspace{-0.08in} & = & \hspace{-0.08in}(2\psi/nh)\sum_{i=T_{0}-m_{n}+1}^{T_{0}}c_{T_{0},T_{0}-i}^{2}\sigma_{i}^{2}/\sigma_{0}^{2}\left(\tau\right)-1\nonumber \\
 & = & \hspace{-0.08in}(2\psi/nh)\sum_{i=T_{0}-m_{n}+1}^{T_{0}}\rho_{n\tau}^{2(T_{0}-i)}\sigma_{i}^{2}/\sigma_{0}^{2}\left(\tau\right)-1+o(1)\nonumber \\
 & = & \hspace{-0.08in}(2\psi/nh)\sum_{i=T_{0}-m_{n}+1}^{T_{0}}\rho_{n\tau}^{2(T_{0}-i)}-1+o(1)\nonumber \\
 & = & \hspace{-0.08in}o(1),\label{Condl=000020var=000020pf=0000201}
\end{eqnarray}
where the first equality uses $E(U_{i}^{2}|\mathcal{G}_{i-1})=1$
a.s. by part (iv) of $\Lambda_{n}$ and the second equality holds
by the same argument as used to show (\ref{Lindeberg=000020pf=0000203})
since $\sigma_{i}^{2}/\sigma_{0}^{2}\left(\tau\right)$ is uniformly
bounded. The third equality in (\ref{Condl=000020var=000020pf=0000201})
holds because 
\begin{eqnarray}
 &  & \hspace{-0.08in}(2\psi/nh)\sum_{i=T_{0}-m_{n}+1}^{T_{0}}\rho_{n\tau}^{2(T_{0}-i)}|\sigma_{i}^{2}/\sigma_{0}^{2}\left(\tau\right)-1|\nonumber \\
 & \leq & \hspace{-0.08in}(2\psi/nh)\sum_{i=T_{0}-m_{n}+1}^{T_{0}}\rho_{n\tau}^{2(T_{0}-i)}\cdot\max_{j\in[T_{0}-m_{n},T_{0}]}|\sigma_{j}^{2}-\sigma_{0}^{2}\left(\tau\right)|/\sigma_{0}^{2}\left(\tau\right)\nonumber \\
 & = & \hspace{-0.08in}(1+o(1))\max_{j\in[T_{0}-m_{n},T_{0}]}|\sigma_{j}^{2}-\sigma_{0}^{2}\left(\tau\right)|/\sigma_{0}^{2}\left(\tau\right)\nonumber \\
 & = & \hspace{-0.08in}o(1),\label{Condl=000020variance=000020pf=0000202}
\end{eqnarray}
where the second last equality in (\ref{Condl=000020variance=000020pf=0000202})
holds by (\ref{Lindeberg=000020pf=0000204}) and the last equality
in (\ref{Condl=000020variance=000020pf=0000202}) holds by Lemma \ref{Bd=000020on=000020rho=000020deviations=000020Lem}(b),
$O(h+\frac{m_{n}}{n})=o(1)$ (since $h\rightarrow0$ and $m_{n}/n\rightarrow0),$
and $\sigma_{n\tau}^{2}=\sigma_{n}^{2}(\tau)\rightarrow\sigma_{0}^{2}\left(\tau\right)$
(by Assumption \ref{assu:MT-kappa_n_sigma_n=000020mu_n=000020convergence}).
Hence, the conditional variance condition in (\ref{Condl=000020var=000020pf=0000201})
holds. By the CLT of \citet{hall1980martingale}, we have 
\begin{equation}
D_{1n}\rightarrow_{d}Z_{1}\sim N(0,1),\label{S_1,n=000020asy=000020normal}
\end{equation}
as desired.

By (\ref{eq:A.7.19_Wn=000020->=000020Bn}), $W_{n}(\cdot)\Rightarrow B(\cdot)$
and by (\ref{S_1,n=000020asy=000020normal}), $D_{1n}\rightarrow_{d}Z_{1}.$
These results can be shown to hold jointly because $W_{n}(\cdot)$
and $D_{1n}$ depend on the same random variables $\{U_{t}\}_{t\leq n}.$
By (\ref{eq:W_n}), $W_{n}(u)$ is a linear function of $\{U_{t}:t\geq T_{1}\}$
for all $u\in[0,1].$ By (\ref{A_1n=000020ch=000020of=000020varis}),
$D_{1n}$ is a linear function of $\{U_{t}:t<T_{1}\}.$ Since $\{U_{t}\}_{t\leq n}$
is a martingale difference sequence, these properties imply that $Cov(W_{n}(u),D_{1n})=0$
for all $u\in[0,1].$ In consequence, $W_{n}(u)$ and $D_{1n}$ are
asymptotically independent, i.e., $B(\cdot)$ and $Z_{1}$ are independent. 
\end{proof}

\subsection{\protect\label{subsec:Proof-of-Lemma-3}Proof of Lemma \ref{lem:MT-asympdist_components_local=000020to=000020unity=000020case}}

In this section, for notational simplicity in the proof, we assume
that $\sigma_{0}^{2}\left(\tau\right)=1.$ If this is not assumed,
numerous quantities in the proof needed to be rescaled by $1/\sigma_{0}\left(\tau\right).$ 
\begin{proof}[\textbf{Proof of Lemma \ref{lem:MT-asympdist_components_local=000020to=000020unity=000020case}}]
First, we prove part (a). We have 
\begin{eqnarray}
(nh)^{-1/2}Y_{t(s)}\hspace{-0.08in} & = & \hspace{-0.08in}(nh)^{-1/2}(\mu_{t(s)}+Y_{t(s)}^{0}+c_{t(s),t(s)-T_{0}}Y_{T_{0}}^{\ast})\nonumber \\
 & = & \hspace{-0.08in}o(1)+(nh)^{-1/2}Y_{t(s)}^{0}+(2\psi)^{-1/2}c_{t(s),t(s)-T_{0}}(2\psi/nh)^{1/2}Y_{T_{0}}^{\ast}\nonumber \\
 & \Rightarrow & \hspace{-0.08in}I_{\psi}(s)+(2\psi)^{-1/2}\exp\{-\psi s\}Z_{1}=:I_{\psi}^{\ast}(s),
\end{eqnarray}
where the first equality holds by (\ref{eq:MT-tvp-model}) and (\ref{eq:MT-decompY*}),
the second equality holds by part (i) of $\Lambda_{n}$ and Assumption
\ref{assu:MT-h}, the convergence holds by Lemma \ref{lem:MT-y0limit},
Lemma \ref{lem:MT-T_0=000020Initial=000020Condition=000020Asy=000020Distn=000020Lem},
which uses the assumption that $r_{0}\in(0,\infty),$ and $c_{t(s),t(s)-T_{0}}\Rightarrow\exp\{-\psi s\},$
which we now establish.

By Lemma \ref{lem:MT-Max=000020Intertemporal=000020Difference}(c),
\begin{eqnarray}
c_{t(s),t(s)-T_{0}}\hspace{-0.08in} & = & \hspace{-0.08in}\rho_{n\tau}^{t(s)-T_{0}}+O(nh^{2}/b_{n})=\exp\{-\kappa_{n}^{\ast}(\tau)(t(s)-T_{0})/b_{n}\}+o(1)\\
 & = & \hspace{-0.08in}\exp\{-\kappa_{n}^{\ast}(\tau)(\lfloor nhs\rfloor+1)/b_{n}\}+o(1)\rightarrow\exp\{-\kappa_{0}(\tau)r_{0}s\}=\exp\{-\psi s\},\nonumber 
\end{eqnarray}
where the $O(nh^{2}/b_{n})$ term holds uniformly over $s\in[0,1],$
$\kappa_{n}^{\ast}(\tau)$ is defined in (\ref{Defn=000020of=000020cappa*_n}),
the second equality uses $nh/b_{n}\rightarrow r_{0}<\infty$ and $h\rightarrow0$
by Assumption \ref{assu:MT-h}, the third equality holds by the definition
of $t(s)$ in (\ref{eq:MT-def=000020t(s)}) and $T_{0}=T_{1}-1,$
the convergence uses Lemma \ref{lem:A.1=000020Kappa_n*=000020go=000020to=000020Kappa_n},
Assumption \ref{assu:MT-kappa_n_sigma_n=000020mu_n=000020convergence},
and $nh/b_{n}\rightarrow r_{0},$ and the final equality uses the
definition of $\psi=r_{0}\kappa_{0}(\tau).$ This completes the proof
of part (a).

The proofs of parts (b) and (c) use the technique developed in \citet{phillips1987towards}.
For example, for part (b), we have 
\begin{equation}
\left(nh\right)^{-3/2}\sum_{t=T_{1}}^{T_{2}}Y_{t-1}=\int_{0}^{1}\left[\left(nh\right)^{-1/2}Y_{t\left(s\right)}\right]ds\dto\int_{0}^{1}I_{\psi}^{*}\left(s\right)ds,\label{eq:lemma3.2}
\end{equation}
where the convergence holds by the CMT and part (a) and $\psi=r_{0}\kappa_{0}\left(\tau\right)$.

For part (c), we have 
\begin{equation}
\left(nh\right)^{-2}\sum_{t=T_{1}}^{T_{2}}Y_{t-1}^{2}=\int_{0}^{1}\left[\left(nh\right)^{-1/2}Y_{t\left(s\right)}\right]^{2}ds\dto\int_{0}^{1}I_{\psi}^{*2}\left(s\right)ds,\label{eq:lemma3.3}
\end{equation}
where the convergence holds by the CMT and part (a).

To prove part (d), define $W_{n}\left(s\right):=\left(nh\right)^{-1/2}\sum_{t=T_{1}}^{T_{1}+\lfloor nhs\rfloor-1}U_{t}$
for $s\in\left[0,1\right]$ and define the stochastic integral 
\begin{equation}
\int_{0}^{1}\left[\sigma_{t\left(s\right)}\right]dW_{n}\left(s\right):=\left(nh\right)^{-1/2}\sum_{t=T_{1}}^{T_{2}}U_{t}\sigma_{t}.\label{eq:partg}
\end{equation}
By Assumption \ref{assu:MT-kappa_n_sigma_n=000020mu_n=000020convergence},
Lemma \ref{lem:MT-Max=000020Intertemporal=000020Difference}(b), and
the triangle inequality, we have 
\begin{equation}
\max_{s\in\left[0,1\right]}\abs{\sigma_{t\left(s\right)}^{2}-1}\pto0,\label{eq:sigma_t^2=000020converge=000020to=000020sigma_0}
\end{equation}
which implies 
\begin{equation}
\max_{s\in\left[0,1\right]}\abs{\sigma_{t\left(s\right)}-1}\pto0\label{eq:sigma_t=000020converge=000020to=000020sigma_0}
\end{equation}
since $\s_{t}$ is nonnegative. By the functional central limit theorem
for martingale difference sequences, we have 
\begin{equation}
W_{n}\left(s\right)\Rightarrow B\left(s\right).\label{eq:partg2}
\end{equation}
Therefore, we use Theorem 2.1 of \citet{hansen1992convergence} and
obtain 
\begin{equation}
\left(nh\right)^{-1/2}\sum_{t=T_{1}}^{T_{2}}U_{t}\sigma_{t}=\int_{0}^{1}\left[\sigma_{t\left(s\right)}\right]dW_{n}\left(s\right)\dto\int_{0}^{1}dB\left(s\right).\label{eq:partg3}
\end{equation}

To prove part (e), we define the stochastic integral 
\begin{equation}
\int_{0}^{1}\left[\left(nh\right)^{-1/2}Y_{t\left(s\right)}\s_{t\left(s\right)}\right]dW_{n}\left(s\right):=\left(nh\right)^{-1}\sum_{t=T_{1}}^{T_{2}}Y_{t-1}U_{t}\sigma_{t}.\label{eq:=000020lem3=000020part(d)}
\end{equation}
By part (a), we have 
\begin{equation}
\left(nh\right)^{-1/2}Y_{t\left(s\right)}\Rightarrow I_{\psi}^{*}\left(s\right).\label{eq:Yt(s)=000020conv}
\end{equation}
We also know that $\max_{s\in\left[0,1\right]}\abs{\sigma_{t\left(s\right)}-1}\pto0$
by \eqref{eq:sigma_t=000020converge=000020to=000020sigma_0}. Thus,
by the CMT we have 
\begin{equation}
\left(nh\right)^{-1/2}Y_{t\left(s\right)}\s_{t\left(s\right)}=\left(nh\right)^{-1/2}Y_{t\left(s\right)}+\left(nh\right)^{-1/2}Y_{t\left(s\right)}\left(\s_{t\left(s\right)}-1\right)\Rightarrow I_{\psi}^{*}\left(s\right).\label{eq:Lem7.4=000020(e)=000020pf=000020integrand}
\end{equation}

Equations (\ref{eq:partg2}) and \eqref{eq:Lem7.4=000020(e)=000020pf=000020integrand}
hold jointly due to the common underlying martingale difference process
$\left\{ U_{t}\right\} _{t=T_{1},...,T_{2}}$. Therefore, we have
\begin{equation}
\left(nh\right)^{-1}\sum_{t=T_{1}}^{T_{2}}Y_{t-1}U_{t}\sigma_{t}=\int_{0}^{1}\left[\left(nh\right)^{-1/2}Y_{t\left(s\right)}\s_{t\left(s\right)}\right]dW_{n}\left(s\right)\dto\int_{0}^{1}I_{\psi}^{*}\left(s\right)dB\left(s\right),\label{eq:lemma3.4}
\end{equation}
where the convergence holds by Theorem 2.1 of \citet{hansen1992convergence},
with $\left(nh\right)^{-1/2}Y_{t\left(\cd\right)}\s_{t\left(\cd\right)}$
and $I_{\psi}^{*}\left(\cd\right)$ in the roles of $U_{n}\left(\cd\right)$
and $U\left(\cd\right)$, respectively, and $W_{n}\left(\cd\right)$
and $B\left(\cd\right)$ in the roles of $Y_{n}\left(\cd\right)$
and $Y\left(\cd\right)$, respectively, in Theorem 2.1 of \citet{hansen1992convergence}.

The proof of part (f) is analogous to that of part (c), and thus,
is omitted.

Part (g) holds by the same argument as given above for parts (a)--(c)
and (e), but with $Y_{t\left(s\right)}=\mu_{t\left(s\right)}+Y_{t\left(s\right)}^{0}+c_{t\left(s\right),t(s)-T_{0}}Y_{T_{0}}^{*}$
replaced by $\mu_{t\left(s\right)}+Y_{t\left(s\right)}^{0}$ in part
(a), which simplifies the proof because the initial condition $Y_{T_{0}}^{*}$
does not appear.

The subsequence version of Lemma \ref{lem:MT-asympdist_components_local=000020to=000020unity=000020case},
see Remark \ref{rem:MT-subseq=000020loc=000020unity=000020thm}, which
has $\{p_{n}\}_{n\geq1}$ in place of $\{n\}_{n\geq1},$ is proved
by replacing $n$ by $p_{n}$ and $h=h_{n}$ by $h_{p_{n}}$ throughout
the proof above. 
\end{proof}

\subsection{\protect\label{subsec:Proof-of-Thm-Asymptotic=000020Rho=000020Hat=000020Dist=000020Local=000020to=000020unity}Proof
of Theorem \ref{thm:MT-asym_rho_hat_local=000020to=000020unity=000020case}}

In this section, for notational simplicity in the proof, we assume
that $\sigma_{0}^{2}\left(\tau\right)=1.$ 
\begin{proof}[\textbf{Proof of Theorem \ref{thm:MT-asym_rho_hat_local=000020to=000020unity=000020case}}]
First, we prove Theorem \ref{thm:MT-asym_rho_hat_local=000020to=000020unity=000020case}
for the case where $r_{0}\in\left(0,\infty\right)$. To start, we
show that the denominator of \eqref{eq:MT-normalized=000020rho} divided
by $\sigma_{0}\left(\tau\right)$ converges in distribution to $\int_{0}^{1}I_{D,\psi}^{*2}\left(s\right)ds,$
where $\psi=r_{0}\kappa_{0}\left(\tau\right).$ 
\begin{align}
 & \left(nh\right)^{-2}\sum_{t=T_{1}}^{T_{2}}\left(Y_{t-1}-\ol Y_{nh,-1}\right)^{2}\nonumber \\
= & \left(nh\right)^{-2}\sum_{t=T_{1}}^{T_{2}}\left(Y_{t-1}\right)^{2}-\left(nh\right)^{-1}\left(\ol Y_{nh,-1}\right)^{2}\nonumber \\
= & \left(nh\right)^{-2}\sum_{t=T_{1}}^{T_{2}}\left(Y_{t-1}\right)^{2}-\left(\left(nh\right)^{-3/2}\sum_{t=T_{1}}^{T_{2}}Y_{t-1}\right)^{2}\nonumber \\
\dto & \int_{0}^{1}I_{\psi}^{*2}\left(s\right)ds-\left(\int_{0}^{1}I_{\psi}^{*}\left(s\right)ds\right)^{2}=\int_{0}^{1}I_{D,\psi}^{*2}\left(s\right)ds,\label{eq:denominator=000020of=000020rho_hat}
\end{align}
where the convergence holds by Lemma \ref{lem:MT-asympdist_components_local=000020to=000020unity=000020case}(b)
and (c) and the CMT.

Next, we show the numerator of \eqref{eq:MT-normalized=000020rho}
converges in distribution to $\int_{0}^{1}I_{D,\psi}^{*}\left(s\right)dB\left(s\right)$.
For any $t=T_{1},...,T_{2}$, we have 
\begin{align}
Y_{t}-\rho_{0,n}Y_{t-1} & =\left(\mu_{t}-\rho_{0,n}\mu_{t-1}\right)+\left(\rho_{t}Y_{t-1}^{*}+\sigma_{t}U_{t}-\rho_{0,n}Y_{t-1}^{*}\right)\nonumber \\
 & =\sigma_{t}U_{t}+\left(\mu_{t}-\rho_{0,n}\mu_{t-1}\right)+\left(\rho_{t}-\rho_{0,n}\right)Y_{t-1}^{0}+\left(\rho_{t}-\rho_{0,n}\right)c_{t-1,t-1-T_{0}}Y_{T_{0}}^{*},\label{eq:yt_dagger_decomp}
\end{align}
where the first equality holds by \eqref{eq:MT-tvp-model} and the
last equality holds by \eqref{eq:MT-Y_t^0=000020and=000020Y_t^star}.
Substituting \eqref{eq:yt_dagger_decomp} into the numerator of \eqref{eq:MT-normalized=000020rho}
gives 
\begin{align}
 & \left(nh\right)^{-1}\sum_{t=T_{1}}^{T_{2}}\left(Y_{t-1}-\ol Y_{nh,-1}\right)\left(Y_{t}-\rho_{0,n}Y_{t-1}\right)\nonumber \\
= & \left(nh\right)^{-1}\sum_{t=T_{1}}^{T_{2}}\left(Y_{t-1}-\ol Y_{nh,-1}\right)\sigma_{t}U_{t}\nonumber \\
 & +\left(nh\right)^{-1}\sum_{t=T_{1}}^{T_{2}}\left(Y_{t-1}-\ol Y_{nh,-1}\right)\left(\mu_{t}-\rho_{0,n}\mu_{t-1}\right)\nonumber \\
 & +\left(nh\right)^{-1}\sum_{t=T_{1}}^{T_{2}}\left(Y_{t-1}-\ol Y_{nh,-1}\right)\left(\rho_{t}-\rho_{0,n}\right)Y_{t-1}^{0}\nonumber \\
 & +\left(nh\right)^{-1}\sum_{t=T_{1}}^{T_{2}}\left(Y_{t-1}-\ol Y_{nh,-1}\right)\left(\rho_{t}-\rho_{0,n}\right)c_{t-1,t-1-T_{0}}Y_{T_{0}}^{*}\nonumber \\
=: & \ A_{1}+A_{2}+A_{3}+A_{4}.\label{eq:decompB}
\end{align}

For $A_{1}$, we have 
\begin{align}
A_{1} & =\left(nh\right)^{-1}\sum_{t=T_{1}}^{T_{2}}\left(Y_{t-1}-\ol Y_{nh,-1}\right)\sigma_{t}U_{t}\nonumber \\
 & =\left(nh\right)^{-1}\sum_{t=T_{1}}^{T_{2}}Y_{t-1}\sigma_{t}U_{t}-\left[\left(nh\right)^{-1/2}\ol Y_{nh,-1}\right]\left(nh\right)^{-1/2}\sum_{t=T_{1}}^{T_{2}}\sigma_{t}U_{t}\nonumber \\
 & \dto\int_{0}^{1}I_{\psi}^{*}\left(s\right)dB\left(s\right)-\int_{0}^{1}I_{\psi}^{*}\left(s\right)ds\int_{0}^{1}dB\left(s\right)=\int_{0}^{1}I_{D,\psi}^{*}\left(s\right)dB\left(s\right),\label{eq:B1_analysis}
\end{align}
where the convergence holds by Lemma \ref{lem:MT-asympdist_components_local=000020to=000020unity=000020case}(b),
(d), and (e).

For $A_{2}$, we have 
\begin{align}
\abs{A_{2}}^{2}= & \abs{\left(nh\right)^{-1}\sum_{t=T_{1}}^{T_{2}}\left(Y_{t-1}-\ol Y_{nh,-1}\right)\left(\mu_{t}-\rho_{0,n}\mu_{t-1}\right)}^{2}\nonumber \\
\leq & \left[\left(nh\right)^{-2}\sum_{t=T_{1}}^{T_{2}}\left(Y_{t-1}-\ol Y_{nh,-1}\right)^{2}\right]\left[\sum_{t=T_{1}}^{T_{2}}\left(\mu_{t}-\rho_{0,n}\mu_{t-1}\right)^{2}\right]\nonumber \\
\leq & \left[\left(nh\right)^{-2}\sum_{t=T_{1}}^{T_{2}}\left(Y_{t-1}-\ol Y_{nh,-1}\right)^{2}\right]\nonumber \\
 & \times\left[2\sum_{t=T_{1}}^{T_{2}}\left(\mu_{t}-\mu_{t-1}\right)^{2}+2\sum_{t=T_{1}}^{T_{2}}\left(\mu_{t-1}-\rho_{0,n}\mu_{t-1}\right)^{2}\right]\nonumber \\
\leq & \left[\left(nh\right)^{-2}\sum_{t=T_{1}}^{T_{2}}\left(Y_{t-1}-\ol Y_{nh,-1}\right)^{2}\right]\nonumber \\
 & \times2\left[nh\max_{t\in\left[T_{1},T_{2}\right]}\left(\mu_{t}-\mu_{t-1}\right)^{2}+nh\left(1-\rho_{0,n}\right)^{2}\max_{t\in\left[T_{1},T_{2}\right]}\mu_{t-1}^{2}\right]\nonumber \\
= & \ O_{p}\left(1\right)\left[nhO\left(n^{-2}\right)+nhO\left(\left(nh\right)^{-2}\right)\right]=o_{p}\left(1\right),\label{eq:B2_analysis}
\end{align}
where the first inequality holds by the Cauchy-Schwarz (CS) inequality,
the second inequality uses the fact that $\left(a+b\right)^{2}\leq2\left(a^{2}+b^{2}\right)$,
and the second last equality holds by \eqref{eq:denominator=000020of=000020rho_hat},
$\max_{t\in\left[T_{1},T_{2}\right]}\abs{\mu_{t}-\mu_{t-1}}\leq L_{2}/n$
by the Lipschitz condition on $\mu\left(\cd\right)$, $\abs{1-\rho_{0,n}}=O\left(\left(nh\right)^{-1}\right)$,
and $\max_{t\in\left[T_{1},T_{2}\right]}\mu_{t}^{2}$ is $O\left(1\right)$.

For $A_{3}$, we have 
\begin{align}
\abs{A_{3}}^{2}= & \abs{\left(nh\right)^{-1}\sum_{t=T_{1}}^{T_{2}}\left(Y_{t-1}-\ol Y_{nh,-1}\right)\left(\rho_{t}-\rho_{0,n}\right)Y_{t-1}^{0}}^{2}\nonumber \\
\leq & \left[\left(nh\right)^{-2}\sum_{t=T_{1}}^{T_{2}}\left(Y_{t-1}-\ol Y_{nh,-1}\right)^{2}\right]\left[\sum_{t=T_{1}}^{T_{2}}\left(\rho_{t}-\rho_{0,n}\right)^{2}\left(Y_{t-1}^{0}\right)^{2}\right]\nonumber \\
\leq & \left[\left(nh\right)^{-2}\sum_{t=T_{1}}^{T_{2}}\left(Y_{t-1}-\ol Y_{nh,-1}\right)^{2}\right]\left[\left(nh\right)^{2}\max_{t\in\left[T_{1},T_{2}\right]}\left(\rho_{t}-\rho_{0,n}\right)^{2}\right]\nonumber \\
 & \times\left[\left(nh\right)^{-2}\sum_{t=T_{1}}^{T_{2}}\left(Y_{t-1}^{0}\right)^{2}\right]\nonumber \\
= & \ O_{p}\left(1\right)O\left(\left(nh\right)^{2}n^{-2}\right)O_{p}\left(1\right)=o_{p}\left(1\right),\label{eq:B3_analysis}
\end{align}
where the first inequality holds by the CS inequality and the second
last equality holds by \eqref{eq:denominator=000020of=000020rho_hat},
Lemma \ref{lem:MT-Max=000020Intertemporal=000020Difference}(a), and
Lemma \ref{lem:MT-asympdist_components_local=000020to=000020unity=000020case}(f).
Note that Lemma \ref{lem:MT-Max=000020Intertemporal=000020Difference}(a)
implies $\max_{t\in\left[T_{1},T_{2}\right]}\left(\rho_{t}-\rho_{0,n}\right)^{2}=O\left(n^{-2}\right)$
because under $H_{0}$ and when $nh/b_{n}=O\left(1\right)$, 
\begin{equation}
\max_{t\in\left[T_{1},T_{2}\right]}\left(\rho_{t}-\rho_{0,n}\right)^{2}=\left(\max_{t\in\left[T_{1},T_{2}\right]}\abs{\rho_{t}-\rho_{n\tau}}\right)^{2}=\left(O\left(h/b_{n}\right)\left(b_{n}/nh\right)nh/b_{n}\right)^{2}=O\left(n^{-2}\right).\label{eq:=000020(rho_t-rho_0)^2=000020bound}
\end{equation}

For $A_{4}$, we have 
\begin{align}
\abs{A_{4}}^{2} & =\abs{\left(nh\right)^{-1}\sum_{t=T_{1}}^{T_{2}}\left(Y_{t-1}-\ol Y_{nh,-1}\right)\left(\rho_{t}-\rho_{0,n}\right)c_{t-1,t-1-T_{0}}Y_{T_{0}}^{*}}^{2}\nonumber \\
 & \leq\left[\left(nh\right)^{-2}\sum_{t=T_{1}}^{T_{2}}\left(Y_{t-1}-\ol Y_{nh,-1}\right)^{2}\right]\left[\sum_{t=T_{1}}^{T_{2}}\left(\rho_{t}-\rho_{0,n}\right)^{2}c_{t-1,t-1-T_{0}}^{2}Y_{T_{0}}^{*2}\right]\nonumber \\
 & \leq\left[\left(nh\right)^{-2}\sum_{t=T_{1}}^{T_{2}}\left(Y_{t-1}-\ol Y_{nh,-1}\right)^{2}\right]\left[nh\max_{t\in\left[T_{1},T_{2}\right]}\left(\rho_{t}-\rho_{0,n}\right)^{2}\right]Y_{T_{0}}^{*2}\nonumber \\
 & =O_{p}\left(1\right)O\left(nh\cd n^{-2}\right)O_{p}\left(n\right)=o_{p}\left(1\right),\label{eq:B4_analysis}
\end{align}
where the first inequality holds by the CS inequality, the second
inequality uses $\max_{t\in\left[T_{0},T_{2}\right]}c_{t,t-T_{0}}^{2}\leq1$,
and the second last equality holds by \eqref{eq:=000020bound=000020order=000020of=000020Y_T0*=000020Op(n^1/2)},
\eqref{eq:denominator=000020of=000020rho_hat}, and \eqref{eq:=000020(rho_t-rho_0)^2=000020bound}.

Therefore, we obtain 
\begin{align}
\left(nh\right)^{-1}\sum_{t=T_{1}}^{T_{2}}\left(Y_{t-1}-\ol Y_{nh,-1}\right)\left(Y_{t}-\rho_{0,n}Y_{t-1}\right) & =A_{1}+o_{p}\left(1\right)\dto\int_{0}^{1}I_{D,\psi}^{*}\left(s\right)dB\left(s\right)\label{eq:numerator_asym}
\end{align}
from \eqref{eq:B1_analysis}, \eqref{eq:B2_analysis}, \eqref{eq:B3_analysis},
and \eqref{eq:B4_analysis}.

Combining \eqref{eq:denominator=000020of=000020rho_hat} and \eqref{eq:numerator_asym},
we have 
\begin{equation}
nh\left(\widehat{\rho}_{n\tau}-\rho_{0,n}\right)\dto\left(\int_{0}^{1}I_{D,\psi}^{*2}\left(s\right)ds\right)^{-1}\int_{0}^{1}I_{D,\psi}^{*}\left(s\right)dB\left(s\right)\label{eq:norm_rho_hat_result}
\end{equation}
by the CMT.

$\ $

Next, for the t-statistic $T_{n}\left(\rho_{0,n}\right)$, we have
\begin{equation}
T_{n}\left(\rho_{0,n}\right)=\dfrac{\left(nh\right)^{1/2}\left(\widehat{\rho}_{n\tau}-\rho_{0,n}\right)}{\widehat{s}_{n\tau}}=\dfrac{nh\left(\widehat{\rho}_{n\tau}-\rho\right)}{\left(nh\widehat{s}_{n\tau}^{2}\right)^{1/2}}.\label{eq:T_n=000020def}
\end{equation}
Thus, by \eqref{eq:MT-sigma=000020ntau=000020def}, \eqref{eq:MT-t-stat=000020def},
\eqref{eq:denominator=000020of=000020rho_hat}, \eqref{eq:norm_rho_hat_result},
and the CMT, we only need to show 
\begin{equation}
\widehat{\s}_{n\tau}^{2}\coloneqq\left(nh\right)^{-1}\sum_{t=T_{1}}^{T_{2}}\left[Y_{t}-\ol Y_{nh}-\widehat{\rho}_{n\tau}\left(Y_{t-1}-\ol Y_{nh,-1}\right)\right]^{2}\pto1.\label{eq:sigma^2=000020numberator}
\end{equation}

First, we replace $\widehat{\rho}_{n\tau}$ with $\rho_{0,n}$ in
\eqref{eq:sigma^2=000020numberator}: 
\begin{align}
\widehat{\s}_{n\tau}^{2}= & \left(nh\right)^{-1}\sum_{t=T_{1}}^{T_{2}}\left[Y_{t}-\ol Y_{nh}-\widehat{\rho}_{n\tau}\left(Y_{t-1}-\ol Y_{nh,-1}\right)\right]^{2}\nonumber \\
= & \left(nh\right)^{-1}\sum_{t=T_{1}}^{T_{2}}\left[Y_{t}-\ol Y_{nh}-\rho_{0,n}\left(Y_{t-1}-\ol Y_{nh,-1}\right)\right]^{2}\nonumber \\
 & +\left(nh\right)^{-1}\sum_{t=T_{1}}^{T_{2}}\left[\left(\rho_{0,n}-\widehat{\rho}_{n\tau}\right)\left(Y_{t-1}-\ol Y_{nh,-1}\right)\right]^{2}\nonumber \\
 & +2\left(nh\right)^{-1}\sum_{t=T_{1}}^{T_{2}}\left[Y_{t}-\ol Y_{nh}-\rho_{0,n}\left(Y_{t-1}-\ol Y_{nh,-1}\right)\right]\left[\left(\rho_{0,n}-\widehat{\rho}_{n\tau}\right)\left(Y_{t-1}-\ol Y_{nh,-1}\right)\right]\nonumber \\
\eqqcolon & \ A_{5}+A_{6}+A_{7}.\label{eq:decomp=000020num=000020of=000020sigma=0000202}
\end{align}
We show that $A_{5}\pto1$ and $A_{6}\pto0$, which together imply
$A_{7}\pto0$ by the CS inequality.

For $A_{5}$, we have 
\begin{align}
 & \left(nh\right)^{-1}\sum_{t=T_{1}}^{T_{2}}\left[Y_{t}-\ol Y_{nh}-\rho_{0,n}\left(Y_{t-1}-\ol Y_{nh,-1}\right)\right]^{2}\nonumber \\
= & \left(nh\right)^{-1}\sum_{t=T_{1}}^{T_{2}}\left(Y_{t}-\rho_{0,n}Y_{t-1}\right)^{2}+\left(\ol Y_{nh}-\rho_{0,n}\ol Y_{nh,-1}\right)^{2}\nonumber \\
 & +2\left(nh\right)^{-1}\sum_{t=T_{1}}^{T_{2}}\left(\ol Y_{nh}-\rho_{0,n}\ol Y_{nh,-1}\right)\left(Y_{t}-\rho_{0,n}Y_{t-1}\right)\nonumber \\
\eqqcolon & \ A_{51}+A_{52}+A_{53}.\label{eq:A5=000020expand}
\end{align}
First, we show that $A_{51}$ converges in probability to 1. By \eqref{eq:yt_dagger_decomp},
we have 
\begin{align}
A_{51}= & \left(nh\right)^{-1}\sum_{t=T_{1}}^{T_{2}}\left[\begin{array}{c}
\sigma_{t}U_{t}+\left(\mu_{t}-\rho_{0,n}\mu_{t-1}\right)+\left(\rho_{t}-\rho_{0,n}\right)Y_{t-1}^{0}\\
+\left(\rho_{t}-\rho_{0,n}\right)c_{t-1,t-1-T_{0}}Y_{T_{0}}^{*}
\end{array}\right]^{2}\nonumber \\
= & \left(nh\right)^{-1}\sum_{t=T_{1}}^{T_{2}}\left(\sigma_{t}U_{t}\right)^{2}\nonumber \\
 & +\left(nh\right)^{-1}\sum_{t=T_{1}}^{T_{2}}\left[\left(\mu_{t}-\rho_{0,n}\mu_{t-1}\right)+\left(\rho_{t}-\rho_{0,n}\right)Y_{t-1}^{0}+\left(\rho_{t}-\rho_{0,n}\right)c_{t-1,t-1-T_{0}}Y_{T_{0}}^{*}\right]^{2}\nonumber \\
 & +2\left(nh\right)^{-1}\sum_{t=T_{1}}^{T_{2}}\sigma_{t}U_{t}\left[\left(\mu_{t}-\rho_{0,n}\mu_{t-1}\right)+\left(\rho_{t}-\rho_{0,n}\right)Y_{t-1}^{0}+\left(\rho_{t}-\rho_{0,n}\right)c_{t-1,t-1-T_{0}}Y_{T_{0}}^{*}\right]\nonumber \\
\eqqcolon & \ A_{511}+A_{512}+A_{513}.\label{eq:A_51}
\end{align}
For $A_{511}$, we have 
\begin{equation}
\left(nh\right)^{-1}\sum_{t=T_{1}}^{T_{2}}\left(\sigma_{t}U_{t}\right)^{2}=\left(nh\right)^{-1}\sum_{t=T_{1}}^{T_{2}}U_{t}^{2}+\left(nh\right)^{-1}\sum_{t=T_{1}}^{T_{2}}\left(\s_{t}^{2}-1\right)U_{t}^{2}.\label{eq:A511}
\end{equation}
By the weak law of large numbers, $\left(nh\right)^{-1}\sum_{t=T_{1}}^{T_{2}}U_{t}^{2}\pto1$
as $n\to\infty$. We also have 
\begin{equation}
\abs{\left(nh\right)^{-1}\sum_{t=T_{1}}^{T_{2}}\left(\s_{t}^{2}-1\right)U_{t}^{2}}\leq\max_{t\in\left[T_{1},T_{2}\right]}\abs{\s_{t}^{2}-1}\left(nh\right)^{-1}\sum_{t=T_{1}}^{T_{2}}U_{t}^{2}=o_{p}\left(1\right),\label{eq:A511=000020part=0000202}
\end{equation}
where the equality holds by \eqref{eq:sigma_t=000020converge=000020to=000020sigma_0}.
Therefore, 
\begin{equation}
A_{511}=\left(nh\right)^{-1}\sum_{t=T_{1}}^{T_{2}}\left(\sigma_{t}U_{t}\right)^{2}\pto1.\label{eq:a511=000020result}
\end{equation}
Next, we have $\sum_{t=T_{1}}^{T_{2}}\left(\mu_{t}-\rho_{0,n}\mu_{t-1}\right)^{2}=o\left(1\right)$,
$\sum_{t=T_{1}}^{T_{2}}\left(\rho_{t}-\rho_{0,n}\right)^{2}\left(Y_{t-1}^{0}\right)^{2}=o_{p}\left(1\right)$,
and $\sum_{t=T_{1}}^{T_{2}}\left(\rho_{t}-\rho_{0,n}\right)^{2}c_{t-1,t-1-T_{0}}^{2}Y_{T_{0}}^{*2}=o_{p}\left(1\right)$,
by \eqref{eq:B2_analysis}, \eqref{eq:B3_analysis}, and \eqref{eq:B4_analysis},
respectively. These results and the CS inequality yield $A_{512}\pto0$.
Finally, using the CS inequality again gives $A_{513}\pto0$. Therefore,
by the CMT we have $A_{51}\pto1$.

For $A_{52}$, by \eqref{eq:yt_dagger_decomp} we have 
\begin{align}
\left(A_{52}\right)^{1/2} & =\abs{\left(nh\right)^{-1}\sum_{t=T_{1}}^{T_{2}}\left(Y_{t}-\rho_{0,n}Y_{t-1}\right)}\nonumber \\
 & =\abs{\left(nh\right)^{-1}\sum_{t=T_{1}}^{T_{2}}\left[\begin{array}{c}
\sigma_{t}U_{t}+\left(\mu_{t}-\rho_{0,n}\mu_{t-1}\right)+\left(\rho_{t}-\rho_{0,n}\right)Y_{t-1}^{0}\\
+\left(\rho_{t}-\rho_{0,n}\right)c_{t-1,t-1-T_{0}}Y_{T_{0}}^{*}
\end{array}\right]}\pto0,\label{eq:A52}
\end{align}
where the convergence holds by Lemma \ref{lem:MT-asympdist_components_local=000020to=000020unity=000020case}(d),
the results stated after (\ref{eq:a511=000020result}), and the CS
inequality.

The convergence of $A_{53}\pto0$ follows from $A_{51}\pto1$, $A_{52}\pto0$,
and the CS inequality. Combining the results gives $A_{5}\pto1$.

For $A_{6}$, we have 
\begin{align}
A_{6}= & \left(nh\right)^{-1}\sum_{t=T_{1}}^{T_{2}}\left[\left(\rho_{0,n}-\widehat{\rho}_{n\tau}\right)\left(Y_{t-1}-\ol Y_{nh,-1}\right)\right]^{2}\nonumber \\
= & \left(\rho_{0,n}-\widehat{\rho}_{n\tau}\right)^{2}\times\left(nh\right)^{-1}\sum_{t=T_{1}}^{T_{2}}\left(Y_{t-1}-\ol Y_{nh,-1}\right)^{2}\nonumber \\
= & \ O_{p}\left(\left(nh\right)^{-2}\right)\times O_{p}\left(nh\right)=o_{p}\left(1\right),\label{eq:A6=000020local=000020to=000020unity}
\end{align}
where the third equality holds by \eqref{eq:denominator=000020of=000020rho_hat}
and \eqref{eq:norm_rho_hat_result}.

In conclusion, we have proved \eqref{eq:sigma^2=000020numberator},
which leads to 
\begin{equation}
T_{n}\left(\rho_{0,n}\right)\dto\left(\int_{0}^{1}I_{D,\psi}^{*2}\left(s\right)ds\right)^{-1/2}\int_{0}^{1}I_{D,\psi}^{*}\left(s\right)dB\left(s\right)\label{eq:proof=000020Tn=000020ltu=000020case}
\end{equation}
for the case where $r_{0}\in\left(0,\infty\right)$.

Now, we prove the results of Theorem \ref{thm:MT-asym_rho_hat_local=000020to=000020unity=000020case}
for the case where $r_{0}=0.$ The idea is to use the same proof as
given above for $r_{0}>0,$ except with $Y_{t-1}$ $(=\mu_{t-1}+Y_{t-1}^{0}+c_{t-1,t-1-T_{0}}Y_{T_{0}}^{\ast})$
split into the two pieces $\mu_{t-1}+Y_{t-1}^{0}$ and $c_{t-1,t-1-T_{0}}Y_{T_{0}}^{\ast}.$
To deal with the first component $\mu_{t-1}+Y_{t-1}^{0},$ we use
the argument given above for $r_{0}>0$ using the results of Lemma
\ref{lem:MT-asympdist_components_local=000020to=000020unity=000020case}(g),
which holds even when $r_{0}=0.$ Then, we show that the second component
$c_{t-1,t-1-T_{0}}Y_{T_{0}}^{\ast}$ has a negligible asymptotic effect
because $c_{t-1,t-1-T_{0}}$ is quite close to the constant $1$ since
$r_{0}=0.$ The reason it has a negligible asymptotic effect is that
the LS regression includes a constant term, and hence, $c_{t-1,t-1-T_{0}}Y_{T_{0}}^{\ast}$
only enters the LS estimator $\widehat{\rho}_{n\tau}$ through $(c_{t-1,t-1-T_{0}}-(nh)^{-1}\sum_{s=T_{1}}^{T_{2}}c_{s-1,s-1-T_{0}})Y_{T_{0}}^{\ast},$
which is close to $(1-1)Y_{T_{0}}^{\ast}.$ Combining this with Lemma
\ref{lem:MT-Order=000020of=000020Y_T0}(b), we show that its impact
is asymptotically negligible.

We have 
\begin{eqnarray}
\max_{t\in[T_{1},T_{2}]}|1-c_{t-1,t-1-T_{0}}|\hspace{-0.08in} & \leq & \hspace{-0.08in}\max_{t\in[T_{1},T_{2}]}|1-\rho_{n\tau}^{t-1-T_{0}}|+O(nh/b_{n})\nonumber \\
 & = & \hspace{-0.08in}1-\rho_{n\tau}^{nh}+O(nh/b_{n})\nonumber \\
 & = & \hspace{-0.08in}1-\exp\{-\kappa_{n}^{\ast}(\tau)nh/b_{n}\}+O(nh/b_{n})\nonumber \\
 & = & \hspace{-0.08in}1-(1-\kappa_{n}^{\ast}(\tau)(nh/b_{n})\exp\{\zeta_{n}\})+O(nh/b_{n})\nonumber \\
 & = & \hspace{-0.08in}O(nh/b_{n}),\label{r_0=000020=00003D00003D=0000200=000020pf=00002010}
\end{eqnarray}
where the inequality holds by Lemma \ref{lem:MT-Max=000020Intertemporal=000020Difference}(c)
and $O(nh^{2}/b_{n})=O(nh/b_{n}),$ the first equality uses $T_{2}-1-T_{0}=2\lfloor nh/2\rfloor,$
which we denote by $nh$ for simplicity, the second equality holds
by the definition of $\kappa_{n}^{\ast}(\tau)$ given in (\ref{Defn=000020of=000020cappa*_n}),
the third equality holds by a mean value expansion with $\zeta_{n}$
lying between $0$ and $-\kappa_{n}^{\ast}(\tau)nh/b_{n},$ and hence,
$|\zeta_{n}|\leq\kappa_{n}^{\ast}(\tau)nh/b_{n}=O(nh/b_{n})=o(1)$
using $\kappa_{n}^{\ast}(\tau)=O(1)$ (by $\kappa_{n}(\tau)\leq C_{4}<\infty$
by part (ii) of $\Lambda_{n}$ and Lemma \ref{lem:A.1=000020Kappa_n*=000020go=000020to=000020Kappa_n})
and $nh/b_{n}\rightarrow r_{0}=0,$ and the last equality holds by
$\kappa_{n}^{\ast}(\tau)=O(1)$ and $\zeta_{n}\rightarrow0.$

Equation (\ref{r_0=000020=00003D00003D=0000200=000020pf=00002010})
yields 
\begin{equation}
\max_{t\in[T_{1},T_{2}]}\left\vert c_{t-1,t-1-T_{0}}-\overline{c}_{nh}\right\vert =O(nh/b_{n}),\text{ where }\overline{c}_{nh}:=(nh)^{-1}\sum_{s=T_{1}}^{T_{2}}c_{s-1,s-1-T_{0}}.\label{r_0=000020=00003D00003D=0000200=000020pf=00002011}
\end{equation}

Now, consider the denominator of the normalized LS estimator given
in (\ref{eq:MT-normalized=000020rho}) and (\ref{eq:denominator=000020of=000020rho_hat}).
We have 
\begin{eqnarray}
 &  & \hspace{-0.08in}(nh)^{-2}\sum_{t=T_{1}}^{T_{2}}(Y_{t-1}-\overline{Y}_{nh,-1})^{2}\nonumber \\
 & = & \hspace{-0.08in}(nh)^{-2}\sum_{t=T_{1}}^{T_{2}}\left(\mu_{t-1}+Y_{t-1}^{0}-(\overline{\mu}_{nh,-1}+\overline{Y}_{nh,-1}^{0})+\left(c_{t-1,t-1-T_{0}}-\overline{c}_{nh}\right)Y_{T_{0}}^{\ast}\right)^{2}\nonumber \\
 & = & \hspace{-0.08in}(nh)^{-2}\sum_{t=T_{1}}^{T_{2}}\left(\mu_{t-1}+Y_{t-1}^{0}-(\overline{\mu}_{nh,-1}+\overline{Y}_{nh,-1}^{0})\right)^{2}+(nh)^{-1}\max_{t\in[T_{1},T_{2}]}\left(c_{t-1,t-1-T_{0}}-\overline{c}_{nh}\right)^{2}Y_{T_{0}}^{\ast2}\nonumber \\
 &  & +2O_{p}(1)(nh)^{-1/2}\max_{t\in[T_{1},T_{2}]}\left\vert c_{t-1,t-1-T_{0}}-\overline{c}_{nh}\right\vert \cdot|Y_{T_{0}}^{\ast}|\nonumber \\
 & = & \hspace{-0.08in}(nh)^{-2}\sum_{t=T_{1}}^{T_{2}}\left(\mu_{t-1}+Y_{t-1}^{0}-(\overline{\mu}_{nh,-1}+\overline{Y}_{nh,-1}^{0})\right)^{2}\nonumber \\
 &  & +(nh)^{-1}O((nh/b_{n})^{2})o_{p}\left(b_{n}^{2}/nh\right)+O_{p}(1)(nh)^{-1/2}O(nh/b_{n})o_{p}\left(b_{n}/(nh)^{1/2}\right)\nonumber \\
 & \rightarrow & \hspace{-0.13in}_{d}\text{ }\int_{0}^{1}I_{D,\psi}^{\ast2}(s)ds,\label{r_0=000020=00003D00003D=0000200=000020pf=00002012}
\end{eqnarray}
where $\overline{\mu}_{nh,-1}=(nh)^{-1}\sum_{t=T_{1}}^{T_{2}}\mu_{t-1},$
$\overline{Y}_{nh,-1}^{0}=(nh)^{-1}\sum_{t=T_{1}}^{T_{2}}Y_{t-1}^{0},$
the first equality uses (\ref{eq:MT-tvp-model}) and (\ref{eq:MT-Y_t^0=000020and=000020Y_t^star}),
the second equality uses $(nh)^{-3/2}\sum_{t=T_{1}}^{T_{2}}\left\vert \mu_{t-1}+Y_{t-1}^{0}-(\overline{\mu}_{nh,-1}+\overline{Y}_{nh,-1}^{0})\right\vert =O_{p}(1)$
by Lemma \ref{lem:MT-asympdist_components_local=000020to=000020unity=000020case}(g)(b)
(which refers to Lemma \ref{lem:MT-asympdist_components_local=000020to=000020unity=000020case}(b)
adjusted according to Lemma \ref{lem:MT-asympdist_components_local=000020to=000020unity=000020case}(g),
i.e., with $Y_{t-1}$ replaced by $\mu_{t-1}+Y_{t-1}^{0})$ and with
absolute values added to $\mu_{t-1}+Y_{t-1}^{0}$ (which does not
affect the argument), the third equality holds using (\ref{r_0=000020=00003D00003D=0000200=000020pf=00002011})
and Lemma \ref{lem:MT-Order=000020of=000020Y_T0}(b), and the convergence
holds because $(nh)^{-1}O((nh/b_{n})^{2})o_{p}\left(b_{n}^{2}/nh\right)=o_{p}(1),$
$O_{p}(1)(nh)^{-1/2}O(nh/b_{n})\times\allowbreak o_{p}\left(b_{n}/(nh)^{1/2}\right)=o_{p}(1),$
and the first summand on the lhs converges in distribution to $\int_{0}^{1}I_{D,\psi}^{\ast2}(s)ds$
by the same argument as in (\ref{eq:denominator=000020of=000020rho_hat})
but with $\mu_{t-1}+Y_{t-1}^{0}$ in place of $Y_{t-1}$ and using
Lemma \ref{lem:MT-asympdist_components_local=000020to=000020unity=000020case}(g),
which uses $\mu_{t-1}+Y_{t-1}^{0}$ and applies when $r_{0}=0.$

Next, for the numerator of the normalized LS estimator $\widehat{\rho}_{n\tau}$
given in (\ref{eq:MT-normalized=000020rho}) and (\ref{eq:decompB}),
we decompose $Y_{t-1}-\overline{Y}_{nh,-1}$ into $\mu_{t-1}+Y_{t-1}^{0}-(\overline{\mu}_{nh,-1}+\overline{Y}_{nh,-1}^{0})$
and $(c_{t-1,t-1-T_{0}}-\overline{c}_{nh})Y_{T_{0}}^{\ast}$ in each
of the summands $A_{1},...,A_{4}$ in (\ref{eq:decompB}). Thus, we
write 
\begin{eqnarray}
 &  & A_{1}\overset{}{=}(nh)^{-1}\sum_{t=T_{1}}^{T_{2}}(Y_{t-1}-\overline{Y}_{nh,-1})\sigma_{t}U_{t}\overset{}{=}A_{11}+A_{12},\text{ where}\nonumber \\
 &  & A_{11}\overset{}{:=}(nh)^{-1}\sum_{t=T_{1}}^{T_{2}}(\mu_{t-1}+Y_{t-1}^{0})\sigma_{t}U_{t}-(nh)^{-1/2}(\overline{\mu}_{nh,-1}+\overline{Y}_{nh,-1}^{0}))(nh)^{-1/2}\sum_{t=T_{1}}^{T_{2}}\sigma_{t}U_{t}\text{ and}\nonumber \\
 &  & A_{12}\overset{}{:=}(nh)^{-1}\sum_{t=T_{1}}^{T_{2}}(c_{t-1,t-1-T_{0}}-\overline{c}_{nh})\sigma_{t}U_{t}Y_{T_{0}}^{\ast}.\label{r_0=000020=00003D00003D=0000200=000020pf=00002013}
\end{eqnarray}
We have 
\begin{equation}
A_{11}\rightarrow_{d}\int_{0}^{1}I_{\psi}^{\ast}(s)dB(s)-\int_{0}^{1}I_{\psi}^{\ast}(s)ds\int_{0}^{1}dB(s)=\int_{0}^{1}I_{D,\psi}^{\ast}(s)dB(s),\label{r_0=000020=00003D00003D=0000200=000020pf=00002014}
\end{equation}
where the convergence holds by Lemma \ref{lem:MT-asympdist_components_local=000020to=000020unity=000020case}(g)(b)
(which refers to Lemma \ref{lem:MT-asympdist_components_local=000020to=000020unity=000020case}(b)
adjusted according to Lemma \ref{lem:MT-asympdist_components_local=000020to=000020unity=000020case}(g)),
Lemma \ref{lem:MT-asympdist_components_local=000020to=000020unity=000020case}(g)(e),
and Lemma \ref{lem:MT-asympdist_components_local=000020to=000020unity=000020case}(d).

For $A_{12},$ we have 
\begin{eqnarray}
 &  & \hspace{-0.08in}Var\left((nh)^{-1}\sum_{t=T_{1}}^{T_{2}}(c_{t-1,t-1-T_{0}}-\overline{c}_{nh})\sigma_{t}U_{t}\right)\overset{}{=}(nh)^{-2}\sum_{t=T_{1}}^{T_{2}}(c_{t-1,t-1-T_{0}}-\overline{c}_{nh})^{2}\sigma_{t}^{2}\nonumber \\
 & = & \hspace{-0.08in}O((nh)^{-1})\max_{t\in[T_{1},T_{2}]}(c_{t-1,t-1-T_{0}}-\overline{c}_{nh})^{2}=O((nh)^{-1})O((nh/b_{n})^{2})=O(nh/b_{n}^{2}),\nonumber \\
\label{r_0=000020=00003D00003D=0000200=000020pf=00002015}
\end{eqnarray}
where the third equality uses (\ref{r_0=000020=00003D00003D=0000200=000020pf=00002011}).
Hence, 
\begin{equation}
A_{12}=O_{p}\left(\frac{(nh)^{1/2}}{b_{n}}\right)Y_{T_{0}}^{\ast}=O_{p}\left(\frac{(nh)^{1/2}}{b_{n}}\frac{b_{n}}{nh^{1/2}}\right)=O_{p}\left(\frac{1}{n^{1/2}}\right)=o_{p}(1),\label{r_0=000020=00003D00003D=0000200=000020pf=00002016}
\end{equation}
where the second equality uses Lemma \ref{lem:MT-Order=000020of=000020Y_T0}(b).
Combining (\ref{r_0=000020=00003D00003D=0000200=000020pf=00002013})--(\ref{r_0=000020=00003D00003D=0000200=000020pf=00002016})
gives: $A_{1}\rightarrow_{d}\int_{0}^{1}I_{D,\psi}^{\ast}(s)dB(s)$
in the $r_{0}=0$ case, just as in the $r_{0}>0$ case considered
in (\ref{eq:B1_analysis}).

When $r_{0}=0,$ we have $A_{2}=o_{p}(1)$ and $A_{3}=o_{p}(1)$ (where
$A_{2}$ and $A_{3}$ are defined in (\ref{eq:decompB})) by the same
arguments as in (\ref{eq:B2_analysis}) and (\ref{eq:B3_analysis})
using $(nh)^{-2}\sum_{t=T_{1}}^{T_{2}}(Y_{t-1}-\overline{Y}_{nh,-1})^{2}=O_{p}(1)$
by (\ref{r_0=000020=00003D00003D=0000200=000020pf=00002012}).

When $r_{0}=0,$ for $A_{4}$ (defined in (\ref{eq:decompB})), we
have 
\begin{eqnarray}
|A_{4}|^{2}\hspace{-0.08in} & \leq & \hspace{-0.08in}\left[(nh)^{-2}\sum_{t=T_{1}}^{T_{2}}(Y_{t-1}-\overline{Y}_{nh,-1})^{2}\right]\left[nh\max_{t\in[T_{1},T_{2}]}(\rho_{t}-\rho_{0,n})^{2}\right]Y_{T_{0}}^{\ast2}\nonumber \\
 & = & \hspace{-0.08in}O_{p}(1)nhO(n^{-2})O_{p}(n)=O_{p}(h)=o_{p}(1),\label{r_0=000020=00003D00003D=0000200=000020pf=00002017}
\end{eqnarray}
where the inequality holds by the first three lines of (\ref{eq:B4_analysis})
and the first equality uses (\ref{eq:=000020(rho_t-rho_0)^2=000020bound}),
(\ref{r_0=000020=00003D00003D=0000200=000020pf=00002012}), and Lemma
\ref{lem:MT-Order=000020of=000020Y_T0}(a).

This completes the proof of (\ref{eq:numerator_asym}) concerning
the numerator of the normalized LS estimator in the $r_{0}=0$ case.
Combined with the result for the denominator in (\ref{r_0=000020=00003D00003D=0000200=000020pf=00002012}),
this establishes the result of (\ref{eq:norm_rho_hat_result}) for
the normalized LS estimator in the $r_{0}=0$ case.

For the t-statistic, as in (\ref{eq:T_n=000020def}) and (\ref{eq:sigma^2=000020numberator}),
it remains to show that $\widehat{\sigma}_{n\tau}^{2}\rightarrow_{p}1.$
For $r_{0}=0,$ this holds by the same argument as given in (\ref{eq:decomp=000020num=000020of=000020sigma=0000202})--(\ref{eq:A6=000020local=000020to=000020unity})
for the $r_{0}>0$ with the only change needed being that $(nh)^{-1}\sum_{t=T_{1}}^{T_{2}}(Y_{t-1}-\overline{Y}_{nh,-1})^{2}=O_{p}(nh)$
in the third equality of (\ref{eq:A6=000020local=000020to=000020unity})
by (\ref{r_0=000020=00003D00003D=0000200=000020pf=00002012}) when
$r_{0}=0,$ rather than by (\ref{eq:denominator=000020of=000020rho_hat}).

The subsequence versions of Lemma \ref{lem:MT-y0limit}, Lemma \ref{lem:MT-asympdist_components_local=000020to=000020unity=000020case},
and Theorem \ref{thm:MT-asym_rho_hat_local=000020to=000020unity=000020case},
see Remark \ref{rem:MT-subseq=000020loc=000020unity=000020thm}, which
have $\{p_{n}\}_{n\geq1}$ in place of $\{n\}_{n\geq1},$ are proved
by replacing $n$ by $p_{n}$ and $h=h_{n}$ by $h_{p_{n}}$ throughout
the proofs above. 
\end{proof}

\subsection{\protect\label{subsec:Proof-of-Lemma-Stationary-Components}Proof
of Lemma \ref{lem:MT-Asymptotic=000020Properties=000020of=000020the=000020Components=000020Stationary}}

In this section, for notational simplicity in the proof, we assume
that $\sigma_{0}^{2}\left(\tau\right)=1.$ 
\begin{proof}[\textbf{Proof of Lemma \ref{lem:MT-Asymptotic=000020Properties=000020of=000020the=000020Components=000020Stationary}$\left(\text{a}\right)$}]
To prove part (a), we let $\sum_{j=0}^{-1}c_{t,j}\sigma_{t-j}U_{t-j}=0$.
Then, by \eqref{eq:MT-decompY*}, we have 
\begin{align}
 & \ \E\left[\left(1-\rho_{0,n}^{2}\right)^{1/2}\left(nh\right)^{-1}\sum_{t=T_{1}}^{T_{2}}Y_{t-1}^{0}\right]^{2}\nonumber \\
= & \ \E\left[\left(1-\rho_{0,n}^{2}\right)^{1/2}\left(nh\right)^{-1}\sum_{t=T_{1}-1}^{T_{2}-1}\left(\sum_{j=0}^{t-T_{1}}c_{t,j}\sigma_{t-j}U_{t-j}\right)\right]^{2}\nonumber \\
= & \left(1-\rho_{0,n}^{2}\right)\left(nh\right)^{-2}\sum_{t=T_{1}}^{T_{2}-1}\E\left(\sum_{j=0}^{t-T_{1}}c_{t,j}\sigma_{t-j}U_{t-j}\right)^{2}\nonumber \\
 & +\left(1-\rho_{0,n}^{2}\right)\left(nh\right)^{-2}\sum_{t,s=T_{1}}^{T_{2}-1}\E\left(\sum_{i=0}^{t-T_{1}}c_{t,i}\sigma_{t-i}U_{t-i}\right)\left(\sum_{j=0}^{s-T_{1}}c_{s,j}\sigma_{s-j}U_{s-j}\right)\ind\left\{ t\neq s\right\} \nonumber \\
=: & \ A_{e1}+A_{e2},\label{eq:lemma5(a)_decomp-1}
\end{align}
where the first equality holds by \eqref{eq:MT-Y_t^0=000020and=000020Y_t^star}.
We show $\abs{A_{e1}}=o\left(1\right)$ and $\abs{A_{e2}}=o\left(1\right)$,
which establish part (a) by Markov's inequality.

For $\abs{A_{e1}}$, we have 
\begin{align}
\abs{A_{e1}} & =\left(1-\rho_{0,n}^{2}\right)\left(nh\right)^{-2}\sum_{t=T_{1}}^{T_{2}-1}\E\left(\sum_{j=0}^{t-T_{1}}c_{t,j}\sigma_{t-j}U_{t-j}\right)^{2}\nonumber \\
 & =\left(1-\rho_{0,n}^{2}\right)\left(nh\right)^{-2}\sum_{t=T_{1}}^{T_{2}-1}\sum_{j=0}^{t-T_{1}}\E U_{t-j}^{2}c_{t,j}^{2}\sigma_{t-j}^{2}\nonumber \\
 & \leq\left(1-\rho_{0,n}^{2}\right)\left(nh\right)^{-2}\sum_{t=T_{1}}^{T_{2}-1}\sum_{j=0}^{t-T_{1}}\ol{\rho}_{n}^{2j}\sigma_{t-j}^{2}\nonumber \\
 & \leq\max_{t\in\left[T_{1},T_{2}\right]}\sigma_{t}^{2}\left(1-\rho_{0,n}^{2}\right)\left(nh\right)^{-2}\sum_{t=T_{1}}^{T_{2}-1}\sum_{j=0}^{\infty}\ol{\rho}_{n}^{2j}\nonumber \\
 & =\ O\left(1\right)O\left(b_{n}^{-1}\right)\left(nh\right)^{-2}O\left(nhb_{n}\right)=o\left(1\right),\label{eq:R1=000020conv=000020to=0000200-1}
\end{align}
where the second equality uses the fact that $\left\{ U_{t}\right\} _{t=1}^{n}$
is a martingale difference sequence, the first inequality holds by
\eqref{eq:MT-c_tj=000020bounded-gto_1}, and the second last equality
holds by \eqref{eq:sigma_t=000020converge=000020to=000020sigma_0}
and \eqref{eq:MT-ps=000020rho^2t}.

For $\abs{A_{e2}}$, we have 
\begin{align}
\abs{A_{e2}}= & \abs{\begin{array}{c}
\left(1-\rho_{0,n}^{2}\right)\left(nh\right)^{-2}\sum_{t,s=T_{1}}^{T_{2}-1}\E\left(\sum_{i=0}^{t-T_{1}}c_{t,i}\sigma_{t-i}U_{t-i}\right)\\
\times\left(\sum_{j=0}^{s-T_{1}}c_{s,j}\sigma_{s-j}U_{s-j}\right)\ind\left\{ t\neq s\right\} 
\end{array}}\nonumber \\
= & \ 2\abs{\left(1-\rho_{0,n}^{2}\right)\left(nh\right)^{-2}\sum_{t=T_{1}+1}^{T_{2}-1}\sum_{s=T_{1}}^{t-1}\E\sum_{i=0}^{t-T_{1}}c_{t,i}\sigma_{t-i}U_{t-i}\sum_{j=0}^{s-T_{1}}c_{s,j}\sigma_{s-j}U_{s-j}}\nonumber \\
= & \ 2\abs{\left(1-\rho_{0,n}^{2}\right)\left(nh\right)^{-2}\sum_{t=T_{1}+1}^{T_{2}-1}\sum_{s=T_{1}}^{t-1}\sum_{j=0}^{s-T_{1}}\E U_{s-j}^{2}c_{t,j+\left(t-s\right)}c_{s,j}\sigma_{s-j}^{2}}\nonumber \\
\leq & \ 2\max_{t\in\left[T_{1},T_{2}\right]}\sigma_{t}^{2}\left(1-\rho_{0,n}^{2}\right)\left(nh\right)^{-2}\abs{\sum_{t=T_{1}+1}^{T_{2}-1}\sum_{s=T_{1}}^{t-1}\ol{\rho}_{n}^{t-s}\sum_{j=0}^{s-T_{1}}\ol{\rho}_{n}^{2j}}\nonumber \\
\leq & \ O\left(1\right)O\left(b_{n}^{-1}\right)\left(nh\right)^{-2}O\left(nhb_{n}^{2}\right)=O\left(b_{n}/nh\right)=o\left(1\right),\label{eq:Aa2=000020conv=000020to=0000200-1}
\end{align}
where the first equality uses the fact that $t$ and $s$ are symmetric,
the last inequality holds by (\ref{eq:sigma_t=000020converge=000020to=000020sigma_0}),
\eqref{eq:MT-ps=000020rho^2t}, and \eqref{eq:MT-ps=000020rho^(t-s)},
and the last equality holds by $b_{n}=o\left(nh\right)$.

Therefore, by Markov's inequality, we have 
\begin{equation}
\left(1-\rho_{0,n}^{2}\right)^{1/2}\left(nh\right)^{-1}\sum_{t=T_{1}}^{T_{2}}Y_{t-1}^{0}\pto0.\label{eq:lem=0000208=000020Y_t^0=000020part=000020a}
\end{equation}
\end{proof}
\begin{proof}[\textbf{Proof of Lemma \ref{lem:MT-Asymptotic=000020Properties=000020of=000020the=000020Components=000020Stationary}$\left(\text{b}\right)$}]
To prove part (b), we have 
\begin{align}
 & \E\left[\left(1-\rho_{0,n}^{2}\right)\left(nh\right)^{-1}\sum_{t=T_{1}}^{T_{2}}\left(Y_{t-1}^{0}\right)^{2}\right]\nonumber \\
= & \left(1-\rho_{0,n}^{2}\right)\left(nh\right)^{-1}\sum_{t=T_{1}-1}^{T_{2}-1}\E\left(\sum_{j=0}^{t-T_{1}}c_{t,j}\sigma_{t-j}U_{t-j}\right)^{2}\nonumber \\
= & \left(1-\rho_{0,n}^{2}\right)\left(nh\right)^{-1}\sum_{t=T_{1}}^{T_{2}-1}\sum_{j=0}^{t-T_{1}}c_{t,j}^{2}\sigma_{t-j}^{2}\nonumber \\
= & \left(1-\rho_{0,n}^{2}\right)\left(nh\right)^{-1}\sum_{t=T_{1}}^{T_{2}-1}\sum_{j=0}^{t-T_{1}}\rho_{0,n}^{2j}+\left(1-\rho_{0,n}^{2}\right)\left(nh\right)^{-1}\sum_{t=T_{1}}^{T_{2}-1}\sum_{j=0}^{t-T_{1}}\left(c_{t,j}^{2}-\rho_{0,n}^{2j}\right)\sigma_{t-j}^{2}\nonumber \\
 & +\left(1-\rho_{0,n}^{2}\right)\left(nh\right)^{-1}\sum_{t=T_{1}}^{T_{2}-1}\sum_{j=0}^{t-T_{1}}\rho_{0,n}^{2j}\left(\sigma_{t-j}^{2}-\sigma_{0}^{2}\left(\tau\right)\right)\nonumber \\
=: & \ A_{b1}+A_{b2}+A_{b3}.
\end{align}
We show $A_{b1}=1+o\left(1\right)$, $A_{b2}=o\left(1\right)$, and
$A_{b3}=o\left(1\right)$.

For $A_{b1}$, we have 
\begin{align}
A_{b1} & =\left(1-\rho_{0,n}^{2}\right)\left(nh\right)^{-1}\sum_{t=T_{1}}^{T_{2}-1}\sum_{j=0}^{t-T_{1}}\rho_{0,n}^{2j}=\left(1-\rho_{0,n}^{2}\right)\left(nh\right)^{-1}\sum_{t=T_{1}}^{T_{2}-1}\left(1-\rho_{0,n}^{2\left(t-T_{1}+1\right)}\right)\left(1-\rho_{0,n}^{2}\right)^{-1}\nonumber \\
 & =1-\left(nh\right)^{-1}\sum_{k=1}^{\lfloor nh\rfloor}\rho_{0,n}^{2k}=1+O\left(b_{n}/nh\right)=1+o\left(1\right),\label{eq:A_b1-1}
\end{align}
where the third equality uses the change of coordinates $k=t-T_{1}+1$
and the second last equality uses \eqref{eq:MT-ps=000020rho^2t}.

For $A_{b2}$, we have 
\begin{align}
\abs{A_{b2}}= & \left(1-\rho_{0,n}^{2}\right)\left(nh\right)^{-1}\abs{\sum_{t=T_{1}}^{T_{2}-1}\sum_{j=0}^{t-T_{1}}\left(c_{t,j}^{2}-\rho_{0,n}^{2j}\right)\sigma_{t-j}^{2}}\nonumber \\
\leq & \max_{t\in\left[T_{1},T_{2}\right]}\sigma_{t}^{2}\left(1-\rho_{0,n}^{2}\right)\left(nh\right)^{-1}\sum_{t=T_{1}}^{T_{2}-1}\sum_{j=0}^{t-T_{1}}\abs{c_{t,j}^{2}-\rho_{0,n}^{2j}}\nonumber \\
\leq & \ O\left(1\right)\left(1-\rho_{0,n}^{2}\right)\left(nh\right)^{-1}\sum_{t=T_{1}}^{T_{2}-1}\sum_{j=0}^{t-T_{1}}\abs{c_{t,j}-\rho_{0,n}^{j}}\nonumber \\
\leq & \ O\left(1\right)O\left(b_{n}^{-1}\right)\left(nh\right)^{-1}\sum_{t=T_{1}}^{T_{2}-1}\sum_{j=0}^{t-T_{1}}j\ol{\rho}_{n}^{j-1}L_{1}h/b_{n}=O\left(h\right)=o\left(1\right),\label{eq:A_b2-1}
\end{align}
where the second inequality holds by \eqref{eq:sigma_t^2=000020converge=000020to=000020sigma_0},
the last inequality uses \eqref{eq:MT-ctj=000020-=000020rho^j=000020bound-gto_1},
and the second last equality holds by \eqref{eq:MT-ps=000020t*rho^t}.

For $A_{b3}$, we have 
\begin{align}
\abs{A_{b3}} & =\left(1-\rho_{0,n}^{2}\right)\left(nh\right)^{-1}\abs{\sum_{t=T_{1}}^{T_{2}-1}\sum_{j=0}^{t-T_{1}}\rho_{0,n}^{2j}\left(\sigma_{t-j}^{2}-\sigma_{0}^{2}\left(\tau\right)\right)}\nonumber \\
 & \leq\max_{t\in\left[T_{1},T_{2}\right]}\abs{\sigma_{t}^{2}-\sigma_{0}^{2}\left(\tau\right)}\left(1-\rho_{0,n}^{2}\right)\left(nh\right)^{-1}\sum_{t=T_{1}}^{T_{2}-1}\sum_{j=0}^{t-T_{1}}\rho_{0,n}^{2j}\nonumber \\
 & =\ o\left(1\right)O\left(b_{n}^{-1}\right)\left(nh\right)^{-1}O\left(nhb_{n}\right)=o\left(1\right),\label{eq:A_b3-1}
\end{align}
where the second last equality uses \eqref{eq:sigma_t^2=000020converge=000020to=000020sigma_0}
and \eqref{eq:MT-ps=000020rho^2t}.

Combining \eqref{eq:A_b1-1}, \eqref{eq:A_b2-1}, and \eqref{eq:A_b3-1},
we obtain 
\begin{equation}
\E\left[\left(1-\rho_{0,n}^{2}\right)\left(nh\right)^{-1}\sum_{t=T_{1}}^{T_{2}}\left(Y_{t-1}^{0}\right)^{2}\right]=1+o\left(1\right).\label{eq:=000020expectation=000020of=000020sum=000020of=000020Yt-1=000020squared-1}
\end{equation}

Next, we show $\E\left[\left(1-\rho_{0,n}^{2}\right)\left(nh\right)^{-1}\sum_{t=T_{1}}^{T_{2}}\left(Y_{t-1}^{0}\right)^{2}\right]^{2}=1+o\left(1\right)$
by observing 
\begin{align}
 & \E\left[\left(1-\rho_{0,n}^{2}\right)\left(nh\right)^{-1}\sum_{t=T_{1}}^{T_{2}}\left(Y_{t-1}^{0}\right)^{2}\right]^{2}\nonumber \\
= & \left(1-\rho_{0,n}^{2}\right)^{2}\left(nh\right)^{-2}\E\left[\sum_{t_{1},t_{2}=T_{1}}^{T_{2}}\left(Y_{t_{1}-1}^{0}\right)^{2}\left(Y_{t_{2}-1}^{0}\right)^{2}\right]\nonumber \\
= & \left(1-\rho_{0,n}^{2}\right)^{2}\left(nh\right)^{-2}\E\left[\sum_{t_{1},t_{2}=T_{1}-1}^{T_{2}-1}\left(Y_{t_{1}}^{0}\right)^{2}\left(Y_{t_{2}}^{0}\right)^{2}\right]\nonumber \\
= & \left(1-\rho_{0,n}^{2}\right)^{2}\left(nh\right)^{-2}\sum_{t_{1},t_{2}=T_{1}}^{T_{2}-1}\E\left[\begin{array}{c}
\left(\sum_{i=0}^{t_{1}-T_{1}}c_{t_{1},i}\sigma_{t_{1}-i}U_{t_{1}-i}\right)^{2}\\
\times\left(\sum_{j=0}^{t_{2}-T_{1}}c_{t_{2},j}\sigma_{t_{2}-j}U_{t_{2}-j}\right)^{2}
\end{array}\right]\nonumber \\
= & \left(1-\rho_{0,n}^{2}\right)^{2}\left(nh\right)^{-2}\sum_{t_{1},t_{2}=T_{1}}^{T_{2}-1}\sum_{i_{1},i_{2}=0}^{t_{1}-T_{1}}\sum_{j_{1},j_{2}=0}^{t_{2}-T_{1}}\left[\begin{array}{c}
c_{t_{1},i_{1}}c_{t_{1},i_{2}}c_{t_{2},j_{1}}c_{t_{2},j_{2}}\sigma_{t_{1}-i_{1}}\sigma_{t_{1}-i_{2}}\sigma_{t_{2}-j_{1}}\sigma_{t_{2}-j_{2}}\\
\times\E\left(U_{t_{1}-i_{1}}U_{t_{1}-i_{2}}U_{t_{2}-j_{1}}U_{t_{2}-j_{2}}\right)
\end{array}\right],\label{eq:a.11.10-Ab3}
\end{align}
where the third equality holds by $Y_{T_{0}}^{0}=0$ and \eqref{eq:MT-Y_t^0=000020and=000020Y_t^star}.
All expectations in the last line are zero unless (i) all the indices
on the four innovation terms coincide or (ii) there are two groups
of two indices that each coincide or (iii) three larger indices coincide.

In case (i), we must have $i_{1}=i_{2}=i$, $j_{1}=j_{2}=j$ and $t_{1}-i_{1}=t_{2}-j_{1}$,
which implies 
\begin{align}
 & \ c_{t_{1},i_{1}}c_{t_{1},i_{2}}c_{t_{2},j_{1}}c_{t_{2},j_{2}}\E U_{t_{1}-i_{1}}U_{t_{1}-i_{2}}U_{t_{2}-j_{1}}U_{t_{2}-j_{2}}\sigma_{t_{1}-i_{1}}\sigma_{t_{1}-i_{2}}\sigma_{t_{2}-j_{1}}\sigma_{t_{2}-j_{2}}\nonumber \\
= & \ c_{t_{1},i}^{2}c_{t_{2},j}^{2}\E U_{t-i}^{4}\ind\left\{ t_{1}-i=t_{2}-j\right\} \sigma_{t_{1}-i}^{4}.\label{eq:case(i)-2}
\end{align}
Substituting \eqref{eq:case(i)-2} into the right-hand side of \eqref{eq:a.11.10-Ab3},
we have 
\begin{align}
 & \left(1-\rho_{0,n}^{2}\right)^{2}\left(nh\right)^{-2}\sum_{t_{1},t_{2}=T_{1}}^{T_{2}-1}\sum_{i=0}^{t_{1}-T_{1}}\sum_{j=0}^{t_{2}-T_{1}}c_{t_{1},i}^{2}c_{t_{2},j}^{2}\E U_{t-i}^{4}\ind\left\{ t_{1}-i=t_{2}-j\right\} \sigma_{t_{1}-i}^{4}\nonumber \\
\leq & \left(1-\rho_{0,n}^{2}\right)^{2}\max_{t\in\left[T_{1},T_{2}\right]}\sigma_{t}^{4}M\left(nh\right)^{-2}\sum_{t_{1},t_{2}=T_{1}}^{T_{2}-1}\sum_{i=0}^{t_{1}-T_{1}}c_{t_{1},i}^{2}\sum_{j=0}^{t_{2}-T_{1}}c_{t_{2},j}^{2}\ind\left\{ t_{1}-i=t_{2}-j\right\} \nonumber \\
\leq & \ O\left(1\right)\left(1-\rho_{0,n}^{2}\right)^{2}\left\{ 2\left(nh\right)^{-2}\sum_{t_{1}>t_{2}=T_{1}}^{T_{2}-1}\sum_{i=t_{1}-t_{2}}^{t_{1}-T_{1}}\ol{\rho}_{n}^{4i-2\left(t_{1}-t_{2}\right)}+\left(nh\right)^{-2}\sum_{t=T_{1}}^{T_{2}-1}\sum_{i=0}^{t-T_{1}}\ol{\rho}_{n}^{4i}\right\} \nonumber \\
= & \ O\left(1\right)\left(1-\rho_{0,n}^{2}\right)^{2}\left(nh\right)^{-2}\sum_{t_{1}>t_{2}=T_{1}}^{T_{2}-1}\ol{\rho}_{n}^{2\left(t_{1}-t_{2}\right)}\sum_{k=0}^{t_{2}-T_{1}}\ol{\rho}_{n}^{4k}+O\left(1\right)\left(1-\rho_{0,n}^{2}\right)^{2}\left(nh\right)^{-2}\sum_{t=T_{1}}^{T_{2}-1}\sum_{i=0}^{t-T_{1}}\ol{\rho}_{n}^{4i}\nonumber \\
\leq & \ O\left(1\right)\left(1-\rho_{0,n}^{2}\right)^{2}\left(nh\right)^{-2}\sum_{t_{1}=T_{1}+1}^{T_{2}-1}\sum_{l=1}^{t_{1}-T_{1}}\ol{\rho}_{n}^{2l}\sum_{k=0}^{\infty}\ol{\rho}_{n}^{4k}+O\left(1\right)\left(1-\rho_{0,n}^{2}\right)^{2}\left(nh\right)^{-2}\sum_{t=T_{1}}^{T_{2}-1}\sum_{i=0}^{\infty}\ol{\rho}_{n}^{4i}\nonumber \\
= & \ O\left(\left(nh\right)^{-1}\right)=o\left(1\right),\label{eq:case(i)-1}
\end{align}
where the first inequality holds by $\E U_{t}^{4}<M$, the second
inequality uses \eqref{eq:sigma_t=000020converge=000020to=000020sigma_0}
and \eqref{eq:MT-c_tj=000020bounded-gto_1}, the first equality uses
the change of coordinates $k=i-\left(t_{1}-t_{2}\right)$, and the
second last equality holds by $\sum_{k=0}^{\infty}\ol{\rho}_{n}^{4k}=\left(1-\ol{\rho}_{n}^{4}\right)^{-1}=O\left(b_{n}\right)$
and \eqref{eq:MT-ps=000020rho^2t}.

In case (ii), we must have (ii1) ($i_{1}=i_{2}=i$ and $j_{1}=j_{2}=j$
and $t_{1}-i\neq t_{2}-j$) or (ii2) ($t_{1}-i_{1}=t_{2}-j_{1}$ and
$t_{1}-i_{2}=t_{2}-j_{2}$ and $i_{1}\neq i_{2}$) or (ii3) ($t_{1}-i_{1}=t_{2}-j_{2}$
and $t_{1}-i_{2}=t_{2}-j_{1}$ and $i_{1}\neq i_{2}$).

In case (ii1), we have 
\begin{align}
 & \ c_{t_{1},i_{1}}c_{t_{1},i_{2}}c_{t_{2},j_{1}}c_{t_{2},j_{2}}\E U_{t_{1}-i_{1}}U_{t_{1}-i_{2}}U_{t_{2}-j_{1}}U_{t_{2}-j_{2}}\sigma_{t_{1}-i_{1}}\sigma_{t_{1}-i_{2}}\sigma_{t_{2}-j_{1}}\sigma_{t_{2}-j_{2}}\nonumber \\
= & \ c_{t_{1},i}^{2}c_{t_{2},j}^{2}\E U_{t_{1}-i}^{2}U_{t_{2}-j}^{2}\ind\left\{ t_{1}-i\neq t_{2}-j\right\} \sigma_{t_{1}-i}^{2}\sigma_{t_{2}-j}^{2}.\label{eq:case(ii)-1-A.11.13}
\end{align}
Substituting \eqref{eq:case(ii)-1-A.11.13} into the right-hand side
of \eqref{eq:a.11.10-Ab3}, we have 
\begin{align}
 & \left(1-\rho_{0,n}^{2}\right)^{2}\left(nh\right)^{-2}\sum_{t_{1},t_{2}=T_{1}}^{T_{2}-1}\sum_{i=0}^{t_{1}-T_{1}}\sum_{j=0}^{t_{2}-T_{1}}c_{t_{1},i}^{2}c_{t_{2},j}^{2}\sigma_{t_{1}-i}^{2}\sigma_{t_{2}-j}^{2}\E\left(U_{t_{1}-i}^{2}U_{t_{2}-j}^{2}\right)\ind\left\{ t_{1}-i\neq t_{2}-j\right\} \nonumber \\
= & \left(1-\rho_{0,n}^{2}\right)^{2}\left(nh\right)^{-2}\sum_{t_{1},t_{2}=T_{1}}^{T_{2}-1}\sum_{i=0}^{t_{1}-T_{1}}\sum_{j=0}^{t_{2}-T_{1}}c_{t_{1},i}^{2}c_{t_{2},j}^{2}\sigma_{t_{1}-i}^{2}\sigma_{t_{2}-j}^{2}\ind\left\{ t_{1}-i\neq t_{2}-j\right\} \nonumber \\
= & \left(1-\rho_{0,n}^{2}\right)^{2}\left(nh\right)^{-2}\sum_{t_{1},t_{2}=T_{1}}^{T_{2}-1}\sum_{i=0}^{t_{1}-T_{1}}\sum_{j=0}^{t_{2}-T_{1}}c_{t_{1},i}^{2}c_{t_{2},j}^{2}\sigma_{t_{1}-i}^{2}\sigma_{t_{2}-j}^{2}\nonumber \\
 & -\left(1-\rho_{0,n}^{2}\right)^{2}\left(nh\right)^{-2}\sum_{t_{1},t_{2}=T_{1}}^{T_{2}-1}\sum_{i=0}^{t_{1}-T_{1}}\sum_{j=0}^{t_{2}-T_{1}}c_{t_{1},i}^{2}c_{t_{2},j}^{2}\sigma_{t_{1}-i}^{2}\sigma_{t_{2}-j}^{2}\ind\left\{ t_{1}-i=t_{2}-j\right\} \nonumber \\
= & \left(1-\rho_{0,n}^{2}\right)^{2}\left(nh\right)^{-2}\sum_{t_{1},t_{2}=T_{1}}^{T_{2}-1}\sum_{i=0}^{t_{1}-T_{1}}\sum_{j=0}^{t_{2}-T_{1}}\rho_{0,n}^{2\left(i+j\right)}\nonumber \\
 & +\left(1-\rho_{0,n}^{2}\right)^{2}\left(nh\right)^{-2}\sum_{t_{1},t_{2}=T_{1}}^{T_{2}-1}\sum_{i=0}^{t_{1}-T_{1}}\sum_{j=0}^{t_{2}-T_{1}}\left(c_{t_{1},i}^{2}c_{t_{2},j}^{2}-\rho_{0,n}^{2\left(i+j\right)}\right)\nonumber \\
 & +\left(1-\rho_{0,n}^{2}\right)^{2}\left(nh\right)^{-2}\sum_{t_{1},t_{2}=T_{1}}^{T_{2}-1}\sum_{i=0}^{t_{1}-T_{1}}\sum_{j=0}^{t_{2}-T_{1}}c_{t_{1},i}^{2}c_{t_{2},j}^{2}\left(\sigma_{t_{1}-i}^{2}\sigma_{t_{2}-j}^{2}-\sigma_{0}^{4}\left(\tau\right)\right)-o(1)\nonumber \\
=: & A_{b4}+A_{b5}+A_{b6}+o\left(1\right),\label{eq:A_b=000020decomp-1}
\end{align}
where the first equality holds by $\E\left[\rest{U_{t}^{2}}\mathscr{G}_{t-1}\right]=1$
a.s. and the third equality holds by \eqref{eq:case(i)-1}.

Because $A_{b4}=A_{b1}^{2}$, we obtain $A_{b4}=1+o\left(1\right)$
from \eqref{eq:A_b1-1}. Thus, for case (ii1), we only need to show
$A_{b5}=o\left(1\right)$ and $A_{b6}=o\left(1\right)$.

For $A_{b5}$, we have 
\begin{align}
\abs{A_{b5}}\leq & \left(1-\rho_{0,n}^{2}\right)^{2}\left(nh\right)^{-2}\sum_{t_{1},t_{2}=T_{1}}^{T_{2}-1}\sum_{i=0}^{t_{1}-T_{1}}\sum_{j=0}^{t_{2}-T_{1}}\left[c_{t_{2},j}^{2}\abs{c_{t_{1},i}^{2}-\rho_{0,n}^{2i}}+\rho_{0,n}^{2i}\abs{c_{t_{2},j}^{2}-\rho_{0,n}^{2j}}\right]\nonumber \\
\leq & \left(1-\rho_{0,n}^{2}\right)^{2}\left(nh\right)^{-2}\sum_{t_{1},t_{2}=T_{1}}^{T_{2}-1}\sum_{i=0}^{t_{1}-T_{1}}\sum_{j=0}^{t_{2}-T_{1}}\left(\ol{\rho}_{n}^{2j}i\ol{\rho}_{n}^{i-1}+\ol{\rho}_{n}^{2i}j\ol{\rho}_{n}^{j-1}\right)2L_{1}h/b_{n}\nonumber \\
= & \left(1-\rho_{0,n}^{2}\right)^{2}\left(4L_{1}h/b_{n}\right)\left(nh\right)^{-2}\sum_{t_{1},t_{2}=T_{1}}^{T_{2}-1}\sum_{i=0}^{t_{1}-T_{1}}i\ol{\rho}_{n}^{i-1}\sum_{j=0}^{t_{2}-T_{1}}\ol{\rho}_{n}^{2j}\nonumber \\
= & \ O\left(b_{n}^{-2}\left(h/b_{n}\right)\left(nh\right)^{-2}\left(nh\right)^{2}b_{n}^{2}b_{n}\right)=O\left(h\right)=o\left(1\right),\label{eq:A_b5-1}
\end{align}
where the first inequality uses the triangle inequality, the second
inequality holds by \eqref{eq:MT-c_tj=000020bounded-gto_1} and \eqref{eq:MT-ctj=000020-=000020rho^j=000020bound-gto_1},
the first equality uses the fact that $t_{1}$ and $t_{2}$ are symmetric,
and the second equality holds by \eqref{eq:MT-ps=000020rho^2t} and
\eqref{eq:MT-ps=000020t*rho^t}.

For $A_{b6}$, we have 
\begin{align}
\abs{A_{b6}} & \leq\left(1-\rho_{0,n}^{2}\right)^{2}\left(nh\right)^{-2}\sum_{t_{1},t_{2}=T_{1}}^{T_{2}-1}\sum_{i=0}^{t_{1}-T_{1}}\sum_{j=0}^{t_{2}-T_{1}}c_{t_{1},i}^{2}c_{t_{2},j}^{2}\abs{\sigma_{t_{1}-i}^{2}\sigma_{t_{2}-j}^{2}-\sigma_{0}^{4}\left(\tau\right)}\nonumber \\
 & \leq O\left(b_{n}^{-2}\right)\max_{t_{1},t_{2}\in\left[T_{1},T_{2}\right]}\abs{\sigma_{t_{1}}^{2}\sigma_{t_{2}}^{2}-\sigma_{0}^{4}\left(\tau\right)}\left(nh\right)^{-2}\sum_{t_{1},t_{2}=T_{1}}^{T_{2}-1}\sum_{i=0}^{t_{1}-T_{1}}\sum_{j=0}^{t_{2}-T_{1}}\ol{\rho}_{n}^{2\left(i+j\right)}=O\left(h\right),\label{eq:A_b6-1}
\end{align}
where the second inequality uses \eqref{eq:MT-c_tj=000020bounded-gto_1}
and the equality holds by 
\begin{equation}
\sum_{t_{1},t_{2}=T_{1}}^{T_{2}-1}\sum_{i=0}^{t_{1}-T_{1}}\sum_{j=0}^{t_{2}-T_{1}}\ol{\rho}_{n}^{2\left(i+j\right)}\leq\left(nh\right)^{2}\left(1-\ol{\rho}_{n}^{2}\right)^{-2}=O\left(\left(nhb_{n}\right)^{2}\right)
\end{equation}
and 
\begin{align}
 & \max_{t,s\in\left[T_{1},T_{2}\right]}\abs{\sigma_{t}^{2}\sigma_{s}^{2}-\sigma_{0}^{4}\left(\tau\right)}\nonumber \\
\leq & \max_{s\in\left[T_{1},T_{2}\right]}\abs{\sigma_{s}^{2}-\sigma_{0}^{2}\left(\tau\right)}\max_{t\in\left[T_{1},T_{2}\right]}\sigma_{t}^{2}+\max_{t\in\left[T_{1},T_{2}\right]}\abs{\sigma_{t}^{2}-\sigma_{0}^{2}\left(\tau\right)}=O\left(h\right).
\end{align}
This completes case (ii1).

Since case (ii2) and (ii3) are symmetric, we only prove the result
for case (ii2) and show it is $o\left(1\right)$. Observe that when
$t_{1}-i_{1}=t_{2}-j_{1}$, $t_{1}-i_{2}=t_{2}-j_{2}$, and $i_{1}\neq i_{2}$,
\begin{align}
 & \ c_{t_{1},i_{1}}c_{t_{1},i_{2}}c_{t_{2},j_{1}}c_{t_{2},j_{2}}\E U_{t_{1}-i_{1}}U_{t_{1}-i_{2}}U_{t_{2}-j_{1}}U_{t_{2}-j_{2}}\sigma_{t_{1}-i_{1}}\sigma_{t_{1}-i_{2}}\sigma_{t_{2}-j_{1}}\sigma_{t_{2}-j_{2}}\nonumber \\
= & \ c_{t_{1},i_{1}}c_{t_{1},i_{2}}c_{t_{2},i_{1}-\left(t_{1}-t_{2}\right)}c_{t_{2},i_{2}-\left(t_{1}-t_{2}\right)}\E U_{t_{1}-i_{1}}^{2}U_{t_{1}-i_{2}}^{2}\ind\left\{ i_{1}\neq i_{2}\right\} \sigma_{t_{1}-i_{1}}^{2}\sigma_{t_{1}-i_{2}}^{2}.\label{eq:case(ii)-2}
\end{align}
Substituting \eqref{eq:case(ii)-2} into the right-hand side of \eqref{eq:a.11.10-Ab3},
we have

\begin{align}
 & \left(1-\rho_{0,n}^{2}\right)^{2}\left(nh\right)^{-2}\sum_{t_{1},t_{2}=T_{1}}^{T_{2}-1}\sum_{i_{1},i_{2}=0}^{t_{1}-T_{1}}\sum_{j_{1},j_{2}=0}^{t_{2}-T_{1}}\left[\begin{array}{c}
c_{t_{1},i_{1}}c_{t_{1},i_{2}}c_{t_{2},i_{1}-\left(t_{1}-t_{2}\right)}c_{t_{2},i_{2}-\left(t_{1}-t_{2}\right)}\ind\left\{ i_{1}\neq i_{2}\right\} \\
\times\E U_{t_{1}-i_{1}}^{2}U_{t_{1}-i_{2}}^{2}\sigma_{t_{1}-i_{1}}^{2}\sigma_{t_{1}-i_{2}}^{2}
\end{array}\right]\nonumber \\
\leq & \ 2\left(1-\rho_{0,n}^{2}\right)^{2}\left(nh\right)^{-2}\sum_{t_{1}>t_{2}=T_{1}}^{T_{2}-1}\sum_{i_{1},i_{2}=t_{1}-t_{2}}^{t_{1}-T_{1}}c_{t_{1},i_{1}}c_{t_{1},i_{2}}c_{t_{2},t_{2}-\left(t_{1}-i_{1}\right)}c_{t_{2},t_{2}-\left(t_{1}-i_{2}\right)}\sigma_{t_{1}-i_{1}}^{2}\sigma_{t_{1}-i_{2}}^{2}\nonumber \\
 & +\left(1-\rho_{0,n}^{2}\right)^{2}\left(nh\right)^{-2}\sum_{t=T_{1}}^{T_{2}-1}\sum_{i_{1},i_{2}=0}^{t-T_{1}}c_{t,i_{1}}^{2}c_{t,i_{2}}^{2}\sigma_{t-i_{1}}^{2}\sigma_{t-i_{2}}^{2}\nonumber \\
\leq & \left(\max_{t\in\left[T_{1},T_{2}\right]}\sigma_{t}\right)^{4}2\left(1-\rho_{0,n}^{2}\right)^{2}\left(nh\right)^{-2}\sum_{t_{1}>t_{2}=T_{1}}^{T_{2}-1}\sum_{i_{1},i_{2}=t_{1}-t_{2}}^{t_{1}-T_{1}}\ol{\rho}_{n}^{i_{1}+i_{2}+t_{2}-\left(t_{1}-i_{1}\right)+t_{2}-\left(t_{1}-i_{2}\right)}\nonumber \\
 & +\left(\max_{t\in\left[T_{1},T_{2}\right]}\sigma_{t}\right)^{4}\left(1-\rho_{0,n}^{2}\right)^{2}\left(nh\right)^{-2}\sum_{t=T_{1}}^{T_{2}-1}\sum_{i_{1},i_{2}=0}^{\infty}\ol{\rho}_{n}^{2\left(i_{1}+i_{2}\right)}\nonumber \\
= & \ O\left(1\right)\left(1-\rho_{0,n}^{2}\right)^{2}\left(nh\right)^{-2}\sum_{t_{1}>t_{2}=T_{1}}^{T_{2}-1}\ol{\rho}_{n}^{2\left(t_{2}-t_{1}\right)}\sum_{i_{1},i_{2}=t_{1}-t_{2}}^{t_{1}-T_{1}}\ol{\rho}_{n}^{2\left(i_{1}+i_{2}\right)}+O\left(\left(nh\right)^{-1}\right)\nonumber \\
= & \ O\left(1\right)\left(1-\rho_{0,n}^{2}\right)^{2}\left(nh\right)^{-2}\sum_{t_{1}>t_{2}=T_{1}}^{T_{2}-1}\ol{\rho}_{n}^{2\left(t_{1}-t_{2}\right)}\sum_{l_{1},l_{2}=0}^{t_{2}-T_{1}}\ol{\rho}_{n}^{2\left(l_{1}+l_{2}\right)}+O\left(\left(nh\right)^{-1}\right)\nonumber \\
\leq & \ O\left(1\right)\left(1-\rho_{0,n}^{2}\right)^{2}\left(nh\right)^{-2}\sum_{t_{1}>t_{2}=T_{1}}^{T_{2}-1}\ol{\rho}_{n}^{2\left(t_{1}-t_{2}\right)}\left(1-\ol{\rho}_{n}^{2}\right)^{-2}+O\left(\left(nh\right)^{-1}\right)\nonumber \\
= & \ O\left(1\right)O\left(b_{n}^{-2}\right)\left(nh\right)^{-2}O\left(nhb_{n}\right)O\left(b_{n}^{2}\right)+O\left(\left(nh\right)^{-1}\right)=O\left(\left(nh/b_{n}\right)^{-1}\right)=o\left(1\right),\label{eq:case=000020(ii2)-1}
\end{align}
where the second equality uses the changes of coordinates $l_{1}=i_{1}-\left(t_{1}-t_{2}\right)$
and $l_{2}=i_{2}-\left(t_{1}-t_{2}\right)$, the last inequality holds
by \eqref{eq:MT-ps=000020rho^2t}, the second last equality holds
by \eqref{eq:MT-ps=000020rho^(t-s)}, and the last equality uses $b_{n}/nh=o\left(1\right)$.
This completes case (ii2).

In case (iii) we must have (iii1) ($t_{1}-i_{1}=t_{1}-i_{2}=t_{2}-j_{1}>t_{2}-j_{2}$)
or (iii2) ($t_{1}-i_{1}=t_{1}-i_{2}=t_{2}-j_{2}>t_{2}-j_{1}$) or
(iii3) ($t_{1}-i_{1}=t_{2}-j_{2}=t_{2}-j_{1}>t_{1}-i_{2}$) or (iii4)
($t_{1}-i_{2}=t_{2}-j_{2}=t_{2}-j_{1}>t_{1}-i_{1}$).

Now, we prove the desired result for case (iii1). Note that in this
case, it must be true that $i_{1}=i_{2}=i$ and $j_{2}>j_{1}=i-\left(t_{1}-t_{2}\right)$,
which implies 
\begin{align}
 & \ c_{t_{1},i_{1}}c_{t_{1},i_{2}}c_{t_{2},j_{1}}c_{t_{2},j_{2}}\E U_{t_{1}-i_{1}}U_{t_{1}-i_{2}}U_{t_{2}-j_{1}}U_{t_{2}-j_{2}}\sigma_{t_{1}-i_{1}}\sigma_{t_{1}-i_{2}}\sigma_{t_{2}-j_{1}}\sigma_{t_{2}-j_{2}}\nonumber \\
= & \ c_{t_{1},i}^{2}c_{t_{2},i-\left(t_{1}-t_{2}\right)}c_{t_{2},j}\E U_{t_{1}-i}^{3}U_{t_{2}-j}\ind\left\{ j>i-\left(t_{1}-t_{2}\right)\right\} \sigma_{t_{1}-i}^{3}\sigma_{t_{2}-j}.\label{eq:case(iii)-1-RHS}
\end{align}
Substituting \eqref{eq:case(iii)-1-RHS} into the right-hand side
of \eqref{eq:a.11.10-Ab3}, we have

\begin{align}
 & \abs{\left(1-\rho_{0,n}^{2}\right)^{2}\left(nh\right)^{-2}\sum_{t_{1},t_{2}=T_{1}}^{T_{2}-1}\sum_{i_{1},i_{2}=0}^{t_{1}-T_{1}}\sum_{j_{1},j_{2}=0}^{t_{2}-T_{1}}\left[\begin{array}{c}
c_{t_{1},i}^{2}c_{t_{2},i-\left(t_{1}-t_{2}\right)}c_{t_{2},j}\E U_{t_{1}-i}^{3}U_{t_{2}-j}\\
\times\ind\left\{ j>i-\left(t_{1}-t_{2}\right)\right\} \sigma_{t_{1}-i}^{3}\sigma_{t_{2}-j}
\end{array}\right]}\nonumber \\
\leq & \left(\max_{t\in\left[T_{1},T_{2}\right]}\sigma_{t}\right)^{4}2\left(1-\rho_{0,n}^{2}\right)^{2}\left(nh\right)^{-2}\sum_{t_{1}>t_{2}=T_{1}}^{T_{2}-1}\sum_{i=t_{1}-t_{2}}^{t_{1}-T_{1}}\sum_{j=i-\left(t_{1}-t_{2}\right)+1}^{t_{2}-T_{1}}\begin{array}{c}
\ol{\rho}_{n}^{3i+j-\left(t_{1}-t_{2}\right)}\end{array}\abs{\E U_{t_{1}-i}^{3}U_{t_{2}-j}}\nonumber \\
 & +\left(\max_{t\in\left[T_{1},T_{2}\right]}\sigma_{t}\right)^{4}\left(1-\rho_{0,n}^{2}\right)^{2}\left(nh\right)^{-2}\sum_{t=T_{1}}^{T_{2}-1}\sum_{i=0}^{t-T_{1}}\sum_{j=i+1}^{t-T_{1}}\begin{array}{c}
\ol{\rho}_{n}^{3i+j}\end{array}\abs{\E U_{t-i}^{3}U_{t-j}}\nonumber \\
\leq & \ O\left(b_{n}^{-2}\right)\left[\left(nh\right)^{-2}\sum_{t_{1}>t_{2}=T_{1}}^{T_{2}-1}\sum_{i=t_{1}-t_{2}}^{t_{1}-T_{1}}\sum_{j=i-\left(t_{1}-t_{2}\right)+1}^{t_{2}-T_{1}}\ol{\rho}_{n}^{3i+j-\left(t_{1}-t_{2}\right)}+\left(nh\right)^{-2}\sum_{t=T_{1}}^{T_{2}-1}\sum_{i=0}^{\infty}\sum_{j=0}^{\infty}\begin{array}{c}
\ol{\rho}_{n}^{3i+j}\end{array}\right]\nonumber \\
= & \ O\left(b_{n}^{-2}\right)\left[\left(nh\right)^{-2}\sum_{t_{1}>t_{2}=T_{1}}^{T_{2}-1}\sum_{i=t_{1}-t_{2}}^{t_{1}-T_{1}}\ol{\rho}_{n}^{4i-2\left(t_{1}-t_{2}\right)}\sum_{l=1}^{t_{1}-T_{1}-i}\ol{\rho}_{n}^{l}+O\left(b_{n}^{2}/nh\right)\right]\nonumber \\
= & \ O\left(b_{n}^{-2}\right)\left[\left(nh\right)^{-2}\sum_{t_{1}>t_{2}=T_{1}}^{T_{2}-1}\ol{\rho}_{n}^{2\left(t_{1}-t_{2}\right)}O\left(b_{n}^{2}\right)+O\left(b_{n}^{2}/nh\right)\right]=O\left(b_{n}/nh\right)=o\left(1\right),\label{eq:case(iii)-1}
\end{align}
where the second inequality holds by H�lder's inequality and part
(iv) of $\Lambda_{n}$: $\abs{\E U_{t}^{3}U_{t-j}}\leq\E\abs{U_{t}^{3}U_{t-j}}\leq\left(\E U_{t}^{4}\right)^{3/4}\left(\E U_{t-j}^{4}\right)^{1/4}<M$
for $j>0$, and \eqref{eq:sigma_t=000020converge=000020to=000020sigma_0},
the first equality holds by the change of coordinates $l=j-i+\left(t_{1}-t_{2}\right)-1$
and \eqref{eq:MT-ps=000020rho^t}, the second equality uses the change
of variables $k=i-\left(t_{1}-t_{2}\right)$, \eqref{eq:MT-ps=000020rho^t},
and \eqref{eq:MT-ps=000020rho^2t}, and the second last equality holds
by \eqref{eq:MT-ps=000020rho^(t-s)}.

The proofs for cases (iii2)-(iii4) are analogous to case (iii1) and
thus are omitted.

Combining cases (i)--(iii), we have 
\begin{equation}
\E\left[\left(1-\rho_{0,n}^{2}\right)\left(nh\right)^{-1}\sum_{t=T_{1}}^{T_{2}}\left(Y_{t-1}^{0}\right)^{2}\right]^{2}=1+o\left(1\right).\label{eq:case=000020i-ii-1}
\end{equation}
Note that by Markov's inequality, for any random variable $X_{n}$
\begin{equation}
\P\left(\abs{X_{n}-1}>\varepsilon\right)\leq\dfrac{\E\left(X_{n}-1\right)^{2}}{\varepsilon^{2}}=\dfrac{\E X_{n}^{2}-2\E X_{n}+1}{\varepsilon^{2}}.\label{eq:markov=000020argument}
\end{equation}
Let $X_{n}:=\left(1-\rho_{0,n}^{2}\right)\left(nh\right)^{-1}\sum_{t=T_{1}}^{T_{2}}\left(Y_{t-1}^{0}\right)^{2}$.
Then, substituting \eqref{eq:=000020expectation=000020of=000020sum=000020of=000020Yt-1=000020squared-1}
and \eqref{eq:case=000020i-ii-1} into \eqref{eq:markov=000020argument},
we have

\begin{equation}
\left(1-\rho_{0,n}^{2}\right)\left(nh\right)^{-1}\sum_{t=T_{1}}^{T_{2}}\left(Y_{t-1}^{0}\right)^{2}\pto1.\label{eq:=000020lem=0000205=000020part=000020b-1}
\end{equation}
\end{proof}
\begin{proof}[\textbf{Proof of Lemma \ref{lem:MT-Asymptotic=000020Properties=000020of=000020the=000020Components=000020Stationary}$\left(\text{c}\right)$}]
To prove part (c), by a central limit theorem for a triangular array
of martingale differences as in Corollary 3.1 in \citet*{hall1980martingale},
it is sufficient to establish the Lindeberg condition (i) $\sum_{t=T_{1}}^{T_{2}}\E\left[\rest{\zeta_{t}^{2}\ind\left\{ \abs{\zeta_{t}}>\delta\right\} }\mathscr{G}_{t-1}\right]\pto0$
for any $\d>0$ and (ii) $\sum_{t=T_{1}}^{T_{2}}\E\left(\rest{\zeta_{t}^{2}}\mathscr{G}_{t-1}\right)\pto1$,
where $\zeta_{t}:=\left(nh\right)^{-1/2}\left(1-\rho_{0,n}^{2}\right)^{1/2}Y_{t-1}^{0}\sigma_{t}U_{t}$
for $t=T_{1},...,T_{2}$. To prove (i), by Markov's inequality, it
is enough to show that $\sum_{t=T_{1}}^{T_{2}}\E\left[\zeta_{t}^{2}\ind\left\{ \abs{\zeta_{t}}>\delta\right\} \right]\to0$
for any $\d>0$. By $\sum_{t=T_{1}}^{T_{2}}\E\left[\zeta_{t}^{2}\ind\left\{ \abs{\zeta_{t}}>\delta\right\} \right]\leq\sum_{t=T_{1}}^{T_{2}}\E\left[\zeta_{t}^{4}\right]/\d^{2}$,
it is then sufficient to show $\sum_{t=T_{1}}^{T_{2}}\E\left[\zeta_{t}^{4}\right]=o\left(1\right)$,
which is true because 
\begin{align}
\sum_{t=T_{1}}^{T_{2}}\E\left[\zeta_{t}^{4}\right]= & \left(1-\rho_{0,n}^{2}\right)^{2}\left(nh\right)^{-2}\sum_{t=T_{1}}^{T_{2}}\E\left(Y_{t-1}^{0}U_{t}\sigma_{t}\right)^{4}\nonumber \\
= & \left(1-\rho_{0,n}^{2}\right)^{2}\left(nh\right)^{-2}\sum_{t=T_{1}}^{T_{2}}\E\left[\left(Y_{t-1}^{0}\right)^{4}\E\left(\rest{U_{t}^{4}}\mathscr{G}_{t-1}\right)\right]\sigma_{t}^{4}\nonumber \\
\leq & \ O\left(b_{n}^{-2}\right)M\max_{t\in\left[T_{1},T_{2}\right]}\sigma_{t}^{4}\left(nh\right)^{-2}\sum_{t=T_{1}}^{T_{2}-1}\E\left(Y_{t}^{0}\right)^{4}\nonumber \\
= & \ O\left(b_{n}^{-2}\right)\left(nh\right)^{-2}\sum_{t=T_{1}}^{T_{2}-1}\E\left[\sum_{j=0}^{t-T_{1}}c_{t,j}\sigma_{t-j}U_{t-j}\right]^{4}\nonumber \\
\leq & \ O\left(b_{n}^{-2}\right)\left(\max_{t\in\left[T_{1},T_{2}\right]}\sigma_{t}\right)^{4}\left(nh\right)^{-2}\sum_{t=T_{1}}^{T_{2}-1}\sum_{j_{1},j_{2},j_{3},j_{4}=0}^{t-T_{1}}\abs{\begin{array}{c}
c_{t,j_{1}}c_{t,j_{2}}c_{t,j_{3}}c_{t,j_{4}}\\
\times\E\left(U_{t-j_{1}}U_{t-j_{2}}U_{t-j_{3}}U_{t-j_{4}}\right)
\end{array}}\nonumber \\
\leq & \ O\left(b_{n}^{-2}\right)\left(nh\right)^{-2}\sum_{t=T_{1}}^{T_{2}-1}\left[\sum_{j=0}^{t-T_{1}}\ol{\rho}_{n}^{4j}M+3\sum_{i,j=0}^{t-T_{1}}\ol{\rho}_{n}^{2\left(i+j\right)}\ind\left\{ i\neq j\right\} +4\sum_{i=0}^{t-T_{1}}\sum_{j=i+1}^{t-T_{1}}\ol{\rho}_{n}^{3i+j}M^{1/2}\right]\nonumber \\
= & \ O\left(b_{n}^{-2}\right)\left(nh\right)^{-2}nh\left(O\left(b_{n}\right)+O\left(b_{n}^{2}\right)+O\left(b_{n}^{2}\right)\right)=o\left(1\right),\label{eq:lindeberg=000020cond=000020i-1}
\end{align}
where the second equality uses the law of iterated expectations, the
first inequality uses $\E\left[\rest{U_{t}^{4}}\mathscr{G}_{t-1}\right]<M$
a.s. and \eqref{eq:MT-ps=000020rho^2t}, the third equality uses \eqref{eq:sigma_t=000020converge=000020to=000020sigma_0},
and the last inequality holds by dividing the sum into three cases
of (i) all four indices on the four innovation terms coincide, (ii)
two pairs of two indices coincide, and (iii) three larger indices
coincide and \eqref{eq:MT-c_tj=000020bounded-gto_1}.

To prove (ii), by part (b) we have $\left(1-\rho_{0,n}^{2}\right)\left(nh\right)^{-1}\sum_{t=T_{1}}^{T_{2}}\left(Y_{t-1}^{0}\right)^{2}=1+o_{p}\left(1\right)$.
Thus, 
\begin{align}
\sum_{t=T_{1}}^{T_{2}}\E\left(\rest{\zeta_{t}^{2}}\mathscr{G}_{t-1}\right)= & \left(1-\rho_{0,n}^{2}\right)\left(nh\right)^{-1}\sum_{t=T_{1}}^{T_{2}}\left(Y_{t-1}^{0}\right)^{2}\E\left[\rest{U_{t}^{2}}\mathscr{G}_{t-1}\right]\sigma_{t}^{2}\nonumber \\
= & \left(1-\rho_{0,n}^{2}\right)\left(nh\right)^{-1}\sum_{t=T_{1}}^{T_{2}}\left(Y_{t-1}^{0}\right)^{2}\nonumber \\
 & +\left(1-\rho_{0,n}^{2}\right)\left(nh\right)^{-1}\sum_{t=T_{1}}^{T_{2}}\left(Y_{t-1}^{0}\right)^{2}\left(\sigma_{t}^{2}-\sigma_{0}^{2}\left(\tau\right)\right)\nonumber \\
= & 1+o_{p}\left(1\right)+o_{p}\left(1\right)\pto1,\label{eq:=000020clt=000020cond=000020ii-1}
\end{align}
where the second equality holds by $\E\left[\rest{U_{t}^{2}}\mathscr{G}_{t-1}\right]=1\ \text{a.s.}$
and the last equality holds by part (b) and 
\begin{align}
 & \abs{\left(1-\rho_{0,n}^{2}\right)\left(nh\right)^{-1}\sum_{t=T_{1}}^{T_{2}}\left(Y_{t-1}^{0}\right)^{2}\left(\sigma_{t}^{2}-\sigma_{0}^{2}\left(\tau\right)\right)}\nonumber \\
 & \leq\max_{t\in\left[T_{1},T_{2}\right]}\abs{\sigma_{t}^{2}-\sigma_{0}^{2}\left(\tau\right)}\abs{\left(1-\rho_{0,n}^{2}\right)\left(nh\right)^{-1}\sum_{t=T_{1}}^{T_{2}}\left(Y_{t-1}^{0}\right)^{2}}\nonumber \\
 & =\ o\left(1\right)O_{p}\left(1\right)=o_{p}\left(1\right).\label{eq:part(ii)-1}
\end{align}
\end{proof}

\subsection{\protect\label{subsec:Proof-of-Lemma-denom-stationary}Proof of Lemma
\ref{lem:MT-asympdist_dist_stationary-denom}}

In this section, for notational simplicity in the proof, we assume
that $\sigma_{0}^{2}\left(\tau\right)=1.$ 
\begin{proof}[\textbf{Proof of Lemma \ref{lem:MT-asympdist_dist_stationary-denom}$\left(\text{a}\right)$}]
First, we prove part (a). We have 
\begin{align}
 & \left(1-\rho_{0,n}^{2}\right)\left(nh\right)^{-1}\sum_{t=T_{1}}^{T_{2}}\left(Y_{t-1}-\ol{\mu}_{nh,-1}\right)^{2}\nonumber \\
= & \left(1-\rho_{0,n}^{2}\right)\left(nh\right)^{-1}\sum_{t=T_{1}}^{T_{2}}\left[Y_{t-1}^{0}+\left(\mu_{t-1}-\ol{\mu}_{nh,-1}\right)+c_{t-1,t-1-T_{0}}Y_{T_{0}}^{*}\right]^{2}\nonumber \\
= & \left(1-\rho_{0,n}^{2}\right)\left(nh\right)^{-1}\sum_{t=T_{1}}^{T_{2}}\left(Y_{t-1}^{0}\right)^{2}\nonumber \\
 & +\left(1-\rho_{0,n}^{2}\right)\nonumber \\
 & \times\left(nh\right)^{-1}\sum_{t=T_{1}}^{T_{2}}\left[\begin{array}{c}
\left(\mu_{t-1}-\ol{\mu}_{nh,-1}\right)^{2}+c_{t-1,t-1-T_{0}}^{2}Y_{T_{0}}^{*2}+2\left(\mu_{t-1}-\ol{\mu}_{nh,-1}\right)Y_{t-1}^{0}\\
+2\left(\mu_{t-1}-\ol{\mu}_{nh,-1}\right)c_{t-1,t-1-T_{0}}Y_{T_{0}}^{*}+2Y_{t-1}^{0}c_{t-1,t-1-T_{0}}Y_{T_{0}}^{*}
\end{array}\right]\nonumber \\
=: & \left(1-\rho_{0,n}^{2}\right)\left(nh\right)^{-1}\sum_{t=T_{1}}^{T_{2}}\left(Y_{t-1}^{0}\right)^{2}+\sum_{i=1}^{5}A_{ai}.\label{eq:denominator_rest-1}
\end{align}
By Lemma \ref{lem:MT-Asymptotic=000020Properties=000020of=000020the=000020Components=000020Stationary}(b),
$\left(1-\rho_{0,n}^{2}\right)\left(nh\right)^{-1}\sum_{t=T_{1}}^{T_{2}}\left(Y_{t-1}^{0}\right)^{2}\pto1$.
Hence, we need to show $\sum_{i=1}^{5}A_{ai}$ in \eqref{eq:denominator_rest-1}
converges in probability to 0. By the CS inequality, we only need
to show $A_{a1}\gto0$ and $A_{a2}\pto0$.

For $A_{a1}$, we have 
\begin{align}
A_{a1}=\left(1-\rho_{0,n}^{2}\right)\left(nh\right)^{-1}\sum_{t=T_{1}}^{T_{2}}\left(\mu_{t-1}-\ol{\mu}_{nh,-1}\right)^{2} & \leq O\left(b_{n}^{-1}\right)\max_{t,s\in\left[T_{0},T_{2}\right]}\left(\mu_{t}-\mu_{s}\right)^{2}=o\left(1\right),\label{eq:den_A_11-1}
\end{align}
where the last equality holds by Lemma \ref{lem:MT-Max=000020Intertemporal=000020Difference}(d)
and the triangle inequality.

For $A_{a2},$ we have 
\begin{eqnarray}
A_{a2}\hspace{-0.08in} & = & \hspace{-0.08in}(1-\rho_{0,n}^{2})(nh)^{-1}\sum_{t=T_{1}}^{T_{2}}c_{t-1,t-1-T_{0}}^{2}Y_{T_{0}}^{\ast2}=O(b_{n}^{-1})(nh)^{-1}\sum_{t=T_{1}}^{T_{2}}\overline{\rho}_{n}^{2(t-1-T_{0})}O_{p}(b_{n})\nonumber \\
 & = & \hspace{-0.08in}O(b_{n}^{-1})(nh)^{-1}O(b_{n})O_{p}(b_{n})=O_{p}(b_{n}/nh)=o_{p}(1),
\end{eqnarray}
where the second equality uses $1-\rho_{0,n}=-\kappa_{n}(\tau)/b_{n},$
$\kappa_{n}(\tau)=O(1)$ (by part (ii) of $\Lambda_{n}),$ (\ref{eq:MT-c_tj=000020bounded-gto_1}),
and Lemma \ref{lem:MT-Order=000020of=000020Y_T0}(c), which applies
because the lemma assumes that $nh/b_{n}\rightarrow r_{0}=\infty,$
the third equality uses (\ref{eq:MT-ps=000020rho^2t}), and the last
equality holds because $nh/b_{n}\rightarrow r_{0}=\infty.$ This completes
the proof of part (a). 
\end{proof}
\begin{proof}[\textbf{Proof of Lemma \ref{lem:MT-asympdist_dist_stationary-denom}$\left(\text{b}\right)$}]
To prove part (b), it suffices to show that 
\begin{equation}
A_{b}:=(1-\rho_{0,n}^{2})^{1/2}(\overline{Y}_{nh,-1}-\overline{\mu}_{nh,-1})=o_{p}(1).
\end{equation}
We have 
\begin{eqnarray}
A_{b}\hspace{-0.08in} & = & \hspace{-0.08in}A_{b1}+A_{b2},\text{ where}\\
A_{b1}\hspace{-0.1in} & : & \hspace{-0.14in}=\text{ }(1-\rho_{0,n}^{2})^{1/2}(nh)^{-1}\sum_{t=T_{1}}^{T_{2}}Y_{t-1}^{0}\text{ and }A_{b2}:=(1-\rho_{0,n}^{2})^{1/2}(nh)^{-1}\sum_{t=T_{1}}^{T_{2}}c_{t-1,t-1-T_{0}}Y_{T_{0}}^{\ast}.\nonumber 
\end{eqnarray}
By Lemma \ref{lem:MT-Asymptotic=000020Properties=000020of=000020the=000020Components=000020Stationary}(a),
$A_{b1}=o_{p}(1).$ In addition, we have 
\begin{equation}
|A_{b2}|\leq(1-\rho_{0,n}^{2})^{1/2}(nh)^{-1}\sum_{t=T_{1}}^{T_{2}}\overline{\rho}_{n}^{t-1-T_{0}}|Y_{T_{0}}^{\ast}|\leq O(b_{n}^{-1/2})(nh)^{-1}O(b_{n})O_{p}(b_{n}^{1/2})=o_{p}(1),
\end{equation}
where the first inequality uses (\ref{eq:MT-c_tj=000020bounded-gto_1}),
the second inequality uses $\sum_{t=T_{1}}^{T_{2}}\overline{\rho}_{n}^{t-1-T_{0}}\leq(1-\overline{\rho}_{n})^{-1}=O(b_{n})$
(by (\ref{eq:MT-ps=000020rho^t})) and Lemma \ref{lem:MT-Order=000020of=000020Y_T0}(c),
which applies because $nh/b_{n}\rightarrow r_{0}=\infty,$ which is
an assumption of the lemma, and the equality holds because $b_{n}/nh\rightarrow0.$
Hence, $A_{b}=o_{p}(1)$ and part (b) is established. 
\end{proof}

\subsection{\protect\label{subsec:Proof-of-Lemma-numerator-stationary}Proof
of Lemma \ref{lem:MT-asympdist_dist_stationary-numerator=000020}}

In this section, for notational simplicity in the proof, we assume
that $\sigma_{0}^{2}\left(\tau\right)=1.$ 
\begin{proof}[\textbf{Proof of Lemma \ref{lem:MT-asympdist_dist_stationary-numerator=000020}$\left(\text{a}\right)$}]
To prove part (a), we express 
\begin{equation}
Y_{t-1}-\ol{\mu}_{nh,-1}=Y_{t-1}^{0}+\left(\mu_{t-1}-\ol{\mu}_{nh,-1}\right)+c_{t-1,t-1-T_{0}}Y_{T_{0}}^{*}\label{eq:decomp=000020first=000020part-1}
\end{equation}
and 
\begin{align}
 & \ Y_{t}-\ol{\mu}_{nh}-\rho_{0,n}\left(Y_{t-1}-\ol{\mu}_{nh,-1}\right)\nonumber \\
= & \ \sigma_{t}U_{t}+\left(\rho_{t}-\rho_{0,n}\right)Y_{t-1}^{0}+\left(\mu_{t}-\ol{\mu}_{nh}\right)-\rho_{0,n}\left(\mu_{t-1}-\ol{\mu}_{nh,-1}\right)+\left(\rho_{t}-\rho_{0,n}\right)c_{t-1,t-1-T_{0}}Y_{T_{0}}^{*}.\label{eq:5.10.2}
\end{align}

Substituting \eqref{eq:decomp=000020first=000020part-1} and \eqref{eq:5.10.2}
into the left-hand side of part (a), we have 
\begin{align}
 & \left(1-\rho_{0,n}^{2}\right)^{1/2}\left(nh\right)^{-1/2}\sum_{t=T_{1}}^{T_{2}}\left(Y_{t-1}-\ol{\mu}_{nh,-1}\right)\left[Y_{t}-\ol{\mu}_{nh}-\rho_{0,n}\left(Y_{t-1}-\ol{\mu}_{nh,-1}\right)\right]\nonumber \\
= & \left(1-\rho_{0,n}^{2}\right)^{1/2}\left(nh\right)^{-1/2}\sum_{t=T_{1}}^{T_{2}}Y_{t-1}^{0}\sigma_{t}U_{t}\nonumber \\
 & +\left(1-\rho_{0,n}^{2}\right)^{1/2}\left(nh\right)^{-1/2}\nonumber \\
 & \times\sum_{t=T_{1}}^{T_{2}}\left[\begin{array}{c}
\left(\rho_{t}-\rho_{0,n}\right)\left(Y_{t-1}^{0}\right)^{2}+Y_{t-1}^{0}\left(\mu_{t}-\ol{\mu}_{nh}\right)-Y_{t-1}^{0}\rho_{0,n}\left(\mu_{t-1}-\ol{\mu}_{nh,-1}\right)\\
+Y_{t-1}^{0}\left(\rho_{t}-\rho_{0,n}\right)c_{t-1,t-1-T_{0}}Y_{T_{0}}^{*}+\left(\mu_{t-1}-\ol{\mu}_{nh,-1}\right)\sigma_{t}U_{t}\\
+\left(\mu_{t-1}-\ol{\mu}_{nh,-1}\right)\left(\rho_{t}-\rho_{0,n}\right)Y_{t-1}^{0}+\left(\mu_{t-1}-\ol{\mu}_{nh,-1}\right)\left(\mu_{t}-\ol{\mu}_{nh}\right)\\
-\rho_{0,n}\left(\mu_{t-1}-\ol{\mu}_{nh,-1}\right)^{2}+\left(\mu_{t-1}-\ol{\mu}_{nh,-1}\right)\left(\rho_{t}-\rho_{0,n}\right)c_{t-1,t-1-T_{0}}Y_{T_{0}}^{*}\\
+c_{t-1,t-1-T_{0}}Y_{T_{0}}^{*}\sigma_{t}U_{t}+c_{t-1,t-1-T_{0}}Y_{T_{0}}^{*}\left(\rho_{t}-\rho_{0,n}\right)Y_{t-1}^{0}\\
+c_{t-1,t-1-T_{0}}Y_{T_{0}}^{*}\left(\mu_{t}-\ol{\mu}_{nh}\right)-c_{t-1,t-1-T_{0}}Y_{T_{0}}^{*}\rho_{0,n}\left(\mu_{t-1}-\ol{\mu}_{nh,-1}\right)\\
+c_{t-1,t-1-T_{0}}^{2}\left(\rho_{t}-\rho_{0,n}\right)Y_{T_{0}}^{*2}
\end{array}\right]\nonumber \\
=: & \left(1-\rho_{0,n}^{2}\right)^{1/2}\left(nh\right)^{-1/2}\sum_{t=T_{1}}^{T_{2}}Y_{t-1}^{0}\sigma_{t}U_{t}+\sum_{i=1}^{14}A_{ci},\label{eq:decomp=000020A3=000020stationary=000020numerator-1}
\end{align}
where $A_{ci}$ is the $i^{th}$ term in the second last line of \eqref{eq:decomp=000020A3=000020stationary=000020numerator-1}.
For example, 
\begin{equation}
A_{c1}:=\left(1-\rho_{0,n}^{2}\right)^{1/2}\left(nh\right)^{-1/2}\sum_{t=T_{1}}^{T_{2}}\left(\rho_{t}-\rho_{0,n}\right)\left(Y_{t-1}^{0}\right)^{2}
\end{equation}
and 
\begin{equation}
A_{c2}:=\left(1-\rho_{0,n}^{2}\right)^{1/2}\left(nh\right)^{-1/2}\sum_{t=T_{1}}^{T_{2}}Y_{t-1}^{0}\left(\mu_{t}-\ol{\mu}_{nh}\right).
\end{equation}

By Lemma \ref{lem:MT-Asymptotic=000020Properties=000020of=000020the=000020Components=000020Stationary}(c),
we have 
\begin{equation}
\left(1-\rho_{0,n}^{2}\right)^{1/2}\left(nh\right)^{-1/2}\sum_{t=T_{1}}^{T_{2}}Y_{t-1}^{0}\sigma_{t}U_{t}\dto N\left(0,1\right).\label{eq:=000020dominate=000020term=000020numerator-1}
\end{equation}
Therefore, we need to show $\sum_{i=1}^{14}A_{ci}$ converges in probability
to 0. We examine each of its components one by one.

First, we consider $A_{c1}.$ Let $Y_{t}^{c}$ be the constant parameter
version of $Y_{t}^{0}$ based on $\rho_{0,n}$ and $\sigma_{n\tau}$
and with $T_{2}-T_{1}$ lagged innovations (which does not depend
on $t),$ rather than $t-T_{1}$ lags. That is, 
\begin{equation}
Y_{t}^{c}:=\sum_{j=0}^{T_{2}-T_{1}}\rho_{0,n}^{j}\sigma_{n\tau}U_{t-j}\text{ for }t\in\left[T_{1},T_{2}\right].\label{Defn=000020of=000020Y^c_t}
\end{equation}
Here, the superscript $c$ stands for ``constant parameter.''

We decompose $A_{c1}$ into three terms: 
\begin{eqnarray}
A_{c1}\hspace{-0.08in} & = & \hspace{-0.08in}A_{c1,1}+A_{c1,2}+A_{c1,3},\text{ where}\nonumber \\
A_{c1,1}\hspace{-0.08in} & : & \hspace{-0.12in}=\text{ }(1-\rho_{0,n}^{2})^{1/2}(nh)^{-1/2}\sum_{t=T_{1}}^{T_{2}}(\rho_{t}-\rho_{0,n})\E(Y_{t-1}^{c})^{2},\nonumber \\
A_{c1,2}\hspace{-0.08in} & : & \hspace{-0.12in}=\text{ }(1-\rho_{0,n}^{2})^{1/2}(nh)^{-1/2}\sum_{t=T_{1}}^{T_{2}}(\rho_{t}-\rho_{0,n})\left((Y_{t-1}^{0})^{2}-(Y_{t-1}^{c})^{2}\right),\text{ and}\nonumber \\
A_{c1,3}\hspace{-0.08in} & : & \hspace{-0.12in}=\text{ }(1-\rho_{0,n}^{2})^{1/2}(nh)^{-1/2}\sum_{t=T_{1}}^{T_{2}}(\rho_{t}-\rho_{0,n})\left((Y_{t-1}^{c})^{2}-\E(Y_{t-1}^{c})^{2}\right).\label{Decomposition=000020of=000020A_c1}
\end{eqnarray}
We show that $A_{c1,b}=o_{p}(1)$ for $b=1,2,3$, provided Assumption
\ref{assu:MT-stationary-nh^5=000020->=0000200} holds, which yields
$A_{c1}=o_{p}(1).$

Now, we consider $A_{c1,1}.$ We have 
\begin{equation}
\E(Y_{t-1}^{c})^{2}=\sigma_{n\tau}^{2}\sum_{j=0}^{T_{2}-T_{1}}\sum_{k=0}^{T_{2}-T_{1}}\rho_{0,n}^{j}\rho_{0,n}^{k}\E U_{t-j-1}U_{t-k-1}=\sigma_{n\tau}^{2}\sum_{j=0}^{T_{2}-T_{1}}\rho_{0,n}^{2j},\label{Pf=000020re=000020A_c1,1=000020eqn=0000201}
\end{equation}
which does not depend on $t,$ and hence, can be taken out of the
sum over $t$ in the definition of $A_{c1,1}.$

By definition, 
\begin{equation}
\rho_{t}-\rho_{0,n}:=\rho_{0,n}(t/n)-\rho_{0,n}(\tau)=\kappa_{0,n}(\tau)/b_{n}-\kappa_{0,n}(t/n)/b_{n}\label{Pf=000020re=000020A_c1,1=000020eqn=0000202}
\end{equation}
using part (ii) in the definition of $\Lambda_{n}.$ Using the condition
in $\Lambda_{n}$ that $\kappa\left(\cd\right)$ is a twice continuously
differentiable function , by a two-term Taylor expansion of $\kappa_{0,n}(t/n)$
around $\tau,$ we obtain 
\begin{equation}
\kappa_{0,n}(t/n)-\kappa_{0,n}(\tau)=\kappa_{0,n}^{\prime}(\tau)(t/n-\tau)+\kappa_{0,n}^{\prime\prime}(\widetilde{\tau}_{n,t})(t/n-\tau)^{2},\label{Pf=000020re=000020A_c1,1=000020eqn=0000203}
\end{equation}
where $\widetilde{\tau}_{n,t}$ lies between $t/n$ and $\tau,$ and
hence, lies between $T_{1}/n$ and $T_{2}/n$ for $t\in\left[T_{1},T_{2}\right].$

Let $\alpha_{n}:=\lfloor nh/2\rfloor.$ Using $T_{1}:=\lfloor n\tau\rfloor-\lfloor nh/2\rfloor$
and $T_{2}:=\lfloor n\tau\rfloor+\lfloor nh/2\rfloor$ by \eqref{eq:MT-timeperiod},
we have 
\begin{equation}
\sum_{t=T_{1}}^{T_{2}}(t-n\tau)=\sum_{t=\lfloor n\tau\rfloor+1}^{\lfloor n\tau\rfloor+\alpha_{n}}(t-n\tau)+\sum_{t=\lfloor n\tau\rfloor-\alpha_{n}}^{\lfloor n\tau\rfloor-1}(t-n\tau)+\lfloor n\tau\rfloor-n\tau.\label{Pf=000020re=000020A_c1,1=000020eqn=0000203a}
\end{equation}
In addition, we have 
\begin{eqnarray}
\sum_{t=\lfloor n\tau\rfloor+1}^{\lfloor n\tau\rfloor+\alpha_{n}}(t-n\tau)\hspace{-0.08in} & = & \hspace{-0.08in}\sum_{s=1}^{\alpha_{n}}(s+\lfloor n\tau\rfloor-n\tau)\text{ and}\nonumber \\
\sum_{t=\lfloor n\tau\rfloor-\alpha_{n}}^{\lfloor n\tau\rfloor-1}(t-n\tau)\hspace{-0.08in} & = & \hspace{-0.08in}\sum_{s=1}^{\alpha_{n}}(-s+\lfloor n\tau\rfloor-n\tau),\label{Pf=000020re=000020A_c1,1=000020eqn=0000203b}
\end{eqnarray}
where the first line uses a change of variables with $s=t-\lfloor n\tau\rfloor$
and the second line uses a change of variables with $s=-t+\lfloor n\tau\rfloor.$
Combining \eqref{Pf=000020re=000020A_c1,1=000020eqn=0000203a} and
\eqref{Pf=000020re=000020A_c1,1=000020eqn=0000203b} gives 
\begin{equation}
\sum_{t=T_{1}}^{T_{2}}(t/n-\tau)=2n^{-1}\sum_{s=1}^{\alpha_{n}}(\lfloor n\tau\rfloor-n\tau)+n^{-1}(\lfloor n\tau\rfloor-n\tau)=O(n^{-1}\alpha_{n})=O(h).\label{Pf=000020re=000020A_c1,1=000020eqn=0000203c}
\end{equation}

Using (\ref{Decomposition=000020of=000020A_c1})--(\ref{Pf=000020re=000020A_c1,1=000020eqn=0000203})
and (\ref{Pf=000020re=000020A_c1,1=000020eqn=0000203c}), we get 
\begin{eqnarray}
\abs{A_{c1,1}}\hspace{-0.08in} & = & \hspace{-0.08in}\left\vert \E(Y_{t-1}^{c})^{2}b_{n}^{-1}(1-\rho_{0,n}^{2})^{1/2}(nh)^{-1/2}\kappa_{0,n}^{\prime}(\tau)\sum_{t=T_{1}}^{T_{2}}(t/n-\tau)\right.\nonumber \\
 &  & \left.+\E(Y_{t-1}^{c})^{2}b_{n}^{-1}(1-\rho_{0,n}^{2})^{1/2}(nh)^{-1/2}\sum_{t=T_{1}}^{T_{2}}\kappa_{0,n}^{\prime\prime}(\widetilde{\tau}_{n,t})(t/n-\tau)^{2}\right\vert \nonumber \\
 & \leq & \hspace{-0.08in}C\E(Y_{t-1}^{c})^{2}b_{n}^{-1}(1-\rho_{0,n}^{2})^{1/2}(nh)^{-1/2}\left(O(h)+\sum_{t=T_{1}}^{T_{2}}(t/n-\tau)^{2}\right)\nonumber \\
 & = & \hspace{-0.08in}O(b_{n}^{-1}(1-\rho_{0,n}^{2})^{-1/2}(nh)^{-1/2}\left(h+nh\cdot h^{2}\right))\label{Pf=000020re=000020A_c1,1=000020eqn=0000204}\\
 & = & \hspace{-0.08in}O(b_{n}^{-1}b_{n}^{1/2}\left((h/n)^{1/2}+(nh^{5})^{1/2}\right))=O((h/n)^{1/2}+(nh^{5})^{1/2})=o(1),\nonumber 
\end{eqnarray}
for some finite constant $C$ for which $\abs{\kappa_{0,n}^{\prime}(\tau)}\leq C$
and $\sup_{t,n}\abs{\kappa_{0,n}^{\prime\prime}(\widetilde{\tau}_{n,t})}\leq C,$
where the inequality uses these inequalities and (\ref{Pf=000020re=000020A_c1,1=000020eqn=0000203c}),
the second equality uses $\E(Y_{t-1}^{c})^{2}=O((1-\rho_{0,n}^{2})^{-1})$
by (\ref{Pf=000020re=000020A_c1,1=000020eqn=0000201}) \textbf{\ }and
$\abs{t/n-\tau}\leq h$ because $\abs{t-n\tau}\leq nh/2$ for $t\in\left[T_{1},T_{2}\right],$
and the third equality uses $(1-\rho_{0,n}^{2})^{-1/2}=O(b_{n}^{1/2})$
(by \eqref{eq:MT-ps=000020rho^2t} with $\rho_{0,n}$ in place of
$\ol{\rho}_{n}$), the fourth equality holds because $b_{n}^{-1/2}=O(1),$
and the last equality holds by $h\rightarrow0$ (by Assumption \ref{assu:MT-h})
and $nh^{5}\rightarrow0$ (by Assumption \ref{assu:MT-stationary-nh^5=000020->=0000200}).

Next, we consider $A_{c1,2}$. We have 
\begin{eqnarray}
|A_{c1,2}|\hspace{-0.08in} & \leq & \hspace{-0.08in}(1-\rho_{0,n}^{2})^{1/2}\max_{t\in[T_{1},T_{2}]}|\rho_{t}-\rho_{0,n}|(nh)^{-1/2}\sum_{t=T_{1}}^{T_{2}}|Y_{t-1}^{0}+Y_{t-1}^{c}|\cdot|Y_{t-1}^{0}-Y_{t-1}^{c}|\nonumber \\
 & = & \hspace{-0.08in}O(b_{n}^{-1/2})O(h/b_{n})(nh)^{-1/2}\sum_{t=T_{1}}^{T_{2}}|Y_{t-1}^{0}+Y_{t-1}^{c}|\cdot|Y_{t-1}^{0}-Y_{t-1}^{c}|,\label{Pf=000020re=000020A_c1,2=000020eqn=0000201}
\end{eqnarray}
where the equality uses Lemma \ref{lem:MT-Max=000020Intertemporal=000020Difference}(a).

We have 
\begin{equation}
Y_{t-1}^{0}-Y_{t-1}^{c}=\sum_{j=0}^{t-1-T_{1}}(c_{t-1,j}\sigma_{t-j-1}-\rho_{0,n}^{j}\sigma_{n\tau})U_{t-j-1}-\sum_{j=t-T_{1}}^{T_{2}-T_{1}}\rho_{0,n}^{j}\sigma_{n\tau}U_{t-j-1}.\label{Pf=000020re=000020A_c1,2=000020eqn=0000202}
\end{equation}

Combining (\ref{Pf=000020re=000020A_c1,2=000020eqn=0000201}) and
(\ref{Pf=000020re=000020A_c1,2=000020eqn=0000202}) gives 
\begin{eqnarray}
\E|A_{c1,2}|\hspace{-0.08in} & \leq & \hspace{-0.08in}(b_{n}^{-1/2})O(h/b_{n})(nh)^{-1/2}\sum_{t=T_{1}}^{T_{2}}2\max\{(\E(Y_{t-1}^{0})^{2})^{1/2},(\E(Y_{t-1}^{c})^{2})^{1/2}\}\nonumber \\
 &  & \times2\max\left\{ \left(\E\left(\sum_{j=0}^{t-1-T_{1}}(c_{t-1,j}\sigma_{t-j-1}-\rho_{0,n}^{j}\sigma_{n\tau})U_{t-j-1}\right)^{2}\right)^{1/2},\right.\nonumber \\
 &  & \left.\left(\E\left(\sum_{j=t-T_{1}}^{T_{2}-T_{1}}\rho_{0,n}^{j}\sigma_{n\tau}U_{t-j-1}\right)^{2}\right)^{1/2}\right\} \label{Pf=000020re=000020A_c1,2=000020eqn=0000203}
\end{eqnarray}
using the triangle and CS inequalities.

We have $(\E(Y_{t-1}^{c})^{2})^{1/2}=O(b_{n}^{1/2})$ uniformly over
$t\in[T_{1},T_{2}]$ by the discussion following (\ref{Pf=000020re=000020A_c1,1=000020eqn=0000204}).
In addition, 
\begin{eqnarray}
\E(Y_{t-1}^{0})^{2}\hspace{-0.08in} & = & \hspace{-0.08in}\E\left(\sum_{j=0}^{t-1-T_{1}}c_{t-1,j}\sigma_{t-j-1}U_{t-j-1}\right)^{2}=\sum_{j=0}^{t-1-T_{1}}c_{t-1,j}^{2}\sigma_{t-1-j}^{2}\nonumber \\
 & \leq & \hspace{-0.08in}C_{3,U}\sum_{j=0}^{\infty}\overline{\rho}_{n}^{2}=C_{3,U}(1-\overline{\rho}_{n}^{2})^{-1}=O(b_{n})\label{Pf=000020re=000020A_c1,2=000020eqn=0000204}
\end{eqnarray}
uniformly over $t\in{[}T_{1},T_{2}],$ where the inequality uses \eqref{eq:MT-c_tj=000020bounded-gto_1}
and the bound $C_{3,U}$ on $\sigma^{2}(\cdot)$ in part (i) of $\Lambda_{n}$
and the last equality uses \eqref{eq:MT-ps=000020rho^2t}. So, $(\E(Y_{t-1}^{0})^{2})^{1/2}=O(b_{n}^{1/2}).$

Next, we have 
\begin{equation}
|c_{t-1,j}\sigma_{t-j-1}-\rho_{0,n}^{j}\sigma_{n\tau}|\leq\sigma_{t-j-1}|c_{t-1,j}-\rho_{0,n}^{j}|+|\rho_{0,n}|^{j}|\sigma_{t-j-1}-\sigma_{n\tau}|\leq j\overline{\rho}_{n}^{j-1}O(h/b_{n})+\overline{\rho}_{n}^{j}O(h),\label{Pf=000020re=000020A_c1,2=000020eqn=0000205}
\end{equation}
where the second inequality uses \eqref{eq:MT-ctj=000020-=000020rho^j=000020bound-gto_1},
a uniform bound on $\sigma_{t-j-1}$ across $t,j,$ and $\max_{t\in[T_{1},T_{2}]}|\sigma_{t-j-1}\allowbreak-\sigma_{n\tau}|=O(h)$
by Lemma \ref{lem:MT-Max=000020Intertemporal=000020Difference}(b)
and $\sigma_{t-j-1}-\sigma_{n\tau}=(\sigma_{t-j-1}^{2}-\sigma_{n\tau}^{2})/(\sigma_{t-j-1}+\sigma_{n\tau}).$
In consequence, we obtain 
\begin{eqnarray}
 &  & \hspace{-0.08in}\E\left(\sum_{j=0}^{t-1-T_{1}}(c_{t-1,j}\sigma_{t-j-1}-\rho_{0,n}^{j}\sigma_{n\tau})U_{t-j-1}\right)^{2}\overset{}{=}\sum_{j=0}^{t-1-T_{1}}(c_{t-1,j}\sigma_{t-j-1}-\rho_{0,n}^{j}\sigma_{n\tau})^{2}\E U_{t-j-1}^{2}\nonumber \\
 & = & \hspace{-0.08in}O\left(\sum_{j=0}^{\infty}j^{2}\overline{\rho}_{n}^{2j-2}h^{2}/b_{n}^{2}\right)+O\left(\sum_{j=0}^{\infty}\overline{\rho}_{n}^{2j}h^{2}\right)=O(b_{n}^{3}h^{2}/b_{n}^{2})+O(b_{n}h^{2})=O(b_{n}h^{2}),\nonumber \\
\label{Pf=000020re=000020A_c1,2=000020eqn=0000206}
\end{eqnarray}
where the second equality uses (\ref{Pf=000020re=000020A_c1,2=000020eqn=0000205})
and $\E U_{t}^{2}=1$ for all $t,$ and the third equality uses \eqref{eq:MT-ps=000020rho^2t}
and \eqref{eq:MT-ps=000020t^2*rho^2t}.

In addition, 
\begin{eqnarray}
\E\left(\sum_{j=t-T_{1}}^{T_{2}-T_{1}}\rho_{0,n}^{j}\sigma_{n\tau}U_{t-j-1}\right)^{2}\hspace{-0.08in} & = & \hspace{-0.08in}\sum_{j=t-T_{1}}^{T_{2}-T_{1}}\rho_{0,n}^{2j}\sigma_{n\tau}^{2}\E U_{t-j-1}^{2}\leq\sigma_{n\tau}^{2}\rho_{0,n}^{2(t-T_{1})}\sum_{j=0}^{\infty}\rho_{0,n}^{2j}\nonumber \\
 & = & \hspace{-0.08in}\sigma_{n\tau}^{2}\rho_{0,n}^{2(t-T_{1})}(1-\rho_{0,n}^{2})^{-1}=\rho_{0,n}^{2(t-T_{1})}O(b_{n}).\label{Pf=000020re=000020A_c1,2=000020eqn=0000207}
\end{eqnarray}
And so, 
\begin{eqnarray}
 &  & \hspace{-0.08in}\sum_{t=T_{1}}^{T_{2}}\left(\E\left(\sum_{j=t-T_{1}}^{T_{2}-T_{1}}\rho_{0,n}^{j}\sigma_{n\tau}U_{t-j-1}\right)^{2}\right)^{1/2}\overset{}{\leq}\sum_{t=T_{1}}^{T_{2}}|\rho_{0,n}|^{t-T_{1}}O(b_{n}^{1/2})\nonumber \\
 & \leq & \hspace{-0.08in}O(b_{n}^{1/2})\sum_{t=0}^{\infty}|\rho_{0,n}|^{t}=O(b_{n}^{1/2})(1-\overline{\rho}_{n})^{-1}=O(b_{n}^{1/2})O(b_{n})=O(b_{n}^{3/2}).\label{Pf=000020re=000020A_c1,2=000020eqn=0000208}
\end{eqnarray}
Combining (\ref{Pf=000020re=000020A_c1,2=000020eqn=0000203}), (\ref{Pf=000020re=000020A_c1,2=000020eqn=0000204}),
(\ref{Pf=000020re=000020A_c1,2=000020eqn=0000206}), and (\ref{Pf=000020re=000020A_c1,2=000020eqn=0000208}),
gives 
\begin{eqnarray}
\E|A_{c1,2}|\hspace{-0.08in} & \leq & \hspace{-0.08in}O(b_{n}^{-1/2})O(h/b_{n})(nh)^{-1/2}nhO(b_{n}^{1/2})O(b_{n}^{1/2}h)\nonumber \\
 &  & +O(b_{n}^{-1/2})O(h/b_{n})(nh)^{-1/2}O(b_{n}^{1/2})O(b_{n}^{3/2})\label{Pf=000020re=000020A_c1,2=000020eqn=0000209}\\
 & = & \hspace{-0.08in}O(n^{1/2}h^{5/2}b_{n}^{-1/2})+O(n^{-1/2}h^{1/2}b_{n}^{1/2})=O((nh^{5})^{1/2})+O((b_{n}/nh)^{1/2}h)=o(1),\nonumber 
\end{eqnarray}
where the last equality uses $nh^{5}\rightarrow0$ by Assumption \ref{assu:MT-stationary-nh^5=000020->=0000200},
$h\rightarrow0$ by Assumption \ref{assu:MT-h}, and $b_{n}/nh\rightarrow0$
in the ``stationary'' case. By Markov's inequality, this gives $A_{c1,2}=o_{p}(1).$

Now, we consider $A_{c1,3}.$ Below we reuse calculations in \eqref{eq:a.11.10-Ab3}--\eqref{eq:case(iii)-1},
which bound the term 
\begin{eqnarray}
 &  & \E S_{n}^{2}\overset{}{:=}\E\left[(1-\rho_{0,n}^{2})(nh)^{-1}\sum_{t=T_{1}}^{T_{2}}(Y_{t-1}^{0})^{2}\right]^{2}\overset{}{=}(1-\rho_{0,n}^{2})^{2}(nh)^{-2}\E\left[\sum_{t=T_{1}}^{T_{2}}(Y_{t-1}^{0})^{2}\right]^{2},\text{ whereas }\nonumber \\
 &  & \E A_{c1,3}^{2}\overset{}{=}(1-\rho_{0,n}^{2})(nh)^{-1}\E\left[\sum_{t=T_{1}}^{T_{2}}(\rho_{t}-\rho_{0,n})\left((Y_{t-1}^{c})^{2}-\E(Y_{t-1}^{c})^{2}\right)\right]^{2}.\label{Pf=000020re_A_c1,3=000020eqn=0000201}
\end{eqnarray}
The terms $\E S_{n}^{2}$ and $\E A_{c1,3}^{2}$ are similar, but
differ as follows. The quantities $(1-\rho_{0,n}^{2})^{2}=O(b_{n}^{-2})$
and $(nh)^{-2}$ appear in $\E S_{n}^{2},$ whereas $(1-\rho_{0,n}^{2})=O(b_{n}^{-1})$
and $(nh)^{-1}$ appear in $\E A_{c1,3}^{2}.$ The quantity $\max_{t\in[T_{1},T_{2}]}|\rho_{t}-\rho_{0,n}|^{2}=O(h^{2}/b_{n}^{2})$
(by Lemma \ref{lem:MT-Max=000020Intertemporal=000020Difference}(a))
appears in the bound we obtain on $\E A_{c1,3}^{2},$ but does not
appear in the bound on $\E S_{n}^{2}.$ The difference between $Y_{t-1}^{0},$
which appears in $S_{n},$ and $Y_{t-1}^{c},$ which appears in $\E A_{c1,3}^{2},$
is not important because the same bounds can be employed with either
one. The term $S_{n}$ is based on summands $(Y_{t-1}^{0})^{2},$
which do not have mean 0, whereas $A_{c1,3}$ is based on mean zero
summands $(Y_{t-1}^{c})^{2}-\E(Y_{t-1}^{c})^{2},$ which is an important
difference.

Analogously to \eqref{eq:a.11.10-Ab3}, we can write 
\begin{eqnarray}
 &  & \hspace{-0.08in}\E\left[\sum_{t=T_{1}}^{T_{2}}(Y_{t-1}^{c})^{2}-\E(Y_{t-1}^{c})^{2}\right]^{2}\nonumber \\
 & = & \hspace{-0.08in}\E\sum_{t_{1},t_{2}=T_{1}-1}^{T_{2}-1}\left(\left(\sum_{i=0}^{t_{1}-T_{1}}\rho_{0,n}^{i}\sigma_{n\tau}U_{t_{1}-i}\right)^{2}-\E(Y_{t_{1}}^{c})^{2}\right)\left(\left(\sum_{j=0}^{t_{2}-T_{1}}\rho_{0,n}^{j}\sigma_{n\tau}U_{t_{2}-j}\right)^{2}-\E(Y_{t_{2}}^{c})^{2}\right)\nonumber \\
 & = & \hspace{-0.08in}\sigma_{n\tau}^{4}\E\sum_{t_{1},t_{2}=T_{1}-1}^{T_{2}-1}\left(\sum_{i_{1},i_{2}=0}^{t_{1}-T_{1}}\rho_{0,n}^{i_{1}}\rho_{0,n}^{i_{2}}(U_{t_{1}-i_{1}}U_{t_{1}-i_{2}}-\E U_{t_{1}-i_{1}}U_{t_{1}-i_{2}})\right)\nonumber \\
 &  & \times\left(\sum_{j_{1},j_{2}=0}^{t_{2}-T_{1}}\rho_{0,n}^{j_{1}}\rho_{0,n}^{j_{2}}(U_{t_{2}-j_{1}}U_{t_{2}-j_{2}}-\E U_{t_{2}-j_{1}}U_{t_{2}-j_{2}})\right)\nonumber \\
 & = & \hspace{-0.08in}\sigma_{n\tau}^{4}\sum_{t_{1},t_{2}=T_{1}-1}^{T_{2}-1}\sum_{i_{1},i_{2}=0}^{t_{1}-T_{1}}\sum_{j_{1},j_{2}=0}^{t_{2}-T_{1}}\rho_{0,n}^{i_{1}}\rho_{0,n}^{i_{2}}\rho_{0,n}^{j_{1}}\rho_{0,n}^{j_{2}}\nonumber \\
 &  & \times\E(U_{t_{1}-i_{1}}U_{t_{1}-i_{2}}-\E U_{t_{1}-i_{1}}U_{t_{1}-i_{2}})(U_{t_{2}-j_{1}}U_{t_{2}-j_{2}}-\E U_{t_{2}-j_{1}}U_{t_{2}-j_{2}})\nonumber \\
 & = & \hspace{-0.08in}\sigma_{n\tau}^{4}\sum_{t_{1},t_{2}=T_{1}-1}^{T_{2}-1}\sum_{i_{1},i_{2}=0}^{t_{1}-T_{1}}\sum_{j_{1},j_{2}=0}^{t_{2}-T_{1}}\rho_{0,n}^{i_{1}}\rho_{0,n}^{i_{2}}\rho_{0,n}^{j_{1}}\rho_{0,n}^{j_{2}}\nonumber \\
 &  & \times(\E U_{t_{1}-i_{1}}U_{t_{1}-i_{2}}U_{t_{2}-j_{1}}U_{t_{2}-j_{2}}-\E U_{t_{1}-i_{1}}U_{t_{1}-i_{2}}\cdot\E U_{t_{2}-j_{1}}U_{t_{2}-j_{2}}).\label{Pf=000020re_A_c1,3=000020eqn=0000202}
\end{eqnarray}
The term $\E A_{c1,3}^{2}$ equals the rhs of (\ref{Pf=000020re_A_c1,3=000020eqn=0000202})
multiplied by $(1-\rho_{0,n}^{2})(nh)^{-1}$ and with $(\rho_{t_{1}+1}-\rho_{0,n})(\rho_{t_{2}+1}-\rho_{0,n})$
inserted after the three summation signs.

As discussed following \eqref{eq:a.11.10-Ab3}, the expectations $\E U_{t_{1}-i_{1}}U_{t_{1}-i_{2}}U_{t_{2}-j_{1}}U_{t_{2}-j_{2}}$
in the last line of \eqref{eq:a.11.10-Ab3} are all zero except when
the indices fall in cases (i), (ii), or (iii), which are defined there.
Furthermore, case (ii) is subdivided into cases (ii1), (ii2), and
(ii3) just above \eqref{eq:case(ii)-1-A.11.13}. The difference between
the expectations on the rhs of (\ref{Pf=000020re_A_c1,3=000020eqn=0000202})
and that of \eqref{eq:a.11.10-Ab3} is that the former has $\eta_{t_{1}t_{2}i_{1}i_{2}j_{1}j_{2}}:=\E U_{t_{1}-i_{1}}U_{t_{1}-i_{2}}\cdot\E U_{t_{2}-j_{1}}U_{t_{2}-j_{2}}$
subtracted off, whereas the latter does not. The quantity $\eta_{t_{1}t_{2}i_{1}i_{2}j_{1}j_{2}}$
is non-zero iff $i_{1}=i_{2}$ and $j_{1}=j_{2}.$ The case $i_{1}=i_{2}$
and $j_{1}=j_{2}$ with $t_{1}-i_{1}=t_{2}-j_{1}$ is case (i). The
case $i_{1}=i_{2}$ and $j_{1}=j_{2}$ with $t_{1}-i_{1}\neq t_{2}-j_{1}$
is case (ii1). Hence, the expectations on the rhs of (\ref{Pf=000020re_A_c1,3=000020eqn=0000202})
are all zero except when the indices fall in cases (i), (ii), and
(iii), just as in \eqref{eq:a.11.10-Ab3}. In addition, in case (ii1),
the expectations on the rhs of (\ref{Pf=000020re_A_c1,3=000020eqn=0000202})
are zero, because $\E U_{t_{1}-i_{1}}U_{t_{1}-i_{2}}U_{t_{2}-j_{1}}U_{t_{2}-j_{2}}=\E U_{t_{1}-i_{1}}^{2}U_{t_{2}-j_{1}}^{2}=\E U_{t_{1}-i_{1}}^{2}\E U_{t_{2}-j_{1}}^{2}=\eta_{t_{1}t_{2}i_{1}i_{2}j_{1}j_{2}},$
where the second equality holds because $t_{1}-i_{1}\neq t_{2}-j_{1}$
in case (ii1). We conclude that the expectations on the rhs of (\ref{Pf=000020re_A_c1,3=000020eqn=0000202})
are non-zero only in cases (i), (ii2), (ii3), and (iii).

Now, we use the calculations in \eqref{eq:case(i)-2}--\eqref{eq:case(iii)-1}
to bound the terms in (\ref{Pf=000020re_A_c1,3=000020eqn=0000202})
when the indices fall in cases (i), (ii2), (ii3), and (iii).

For case (i), using \eqref{eq:case(i)-2} and \eqref{eq:case(i)-1},
we have: the sum over the indices in case (i) on the rhs of (\ref{Pf=000020re_A_c1,3=000020eqn=0000202})
is 
\begin{equation}
O(1)\sum_{t_{1}=T_{1}-1}^{T_{2}-1}\sum_{\ell=1}^{t_{1}-T_{1}}\overline{\rho}_{n}^{2\ell}\sum_{k=0}^{\infty}\overline{\rho}_{n}^{4k}+O(1)\sum_{t_{1}=T_{1}}^{T_{2}-1}\sum_{i=0}^{\infty}\overline{\rho}_{n}^{4i}=O(nhb_{n}^{2}),\label{Pf=000020re_A_c1,3=000020eqn=0000203}
\end{equation}
where the last equality uses \eqref{eq:MT-ps=000020rho^2t} and $\sum_{k=0}^{\infty}\overline{\rho}_{n}^{4k}=O(b_{n}).$
As noted above, $\E A_{c1,3}^{2}$ equals the rhs of (\ref{Pf=000020re_A_c1,3=000020eqn=0000202})
multiplied by $(1-\rho_{0,n}^{2})(nh)^{-1}=O(b_{n}^{-1})(nh)^{-1}$
and with $(\rho_{t_{1}+1}-\rho_{0,n})(\rho_{t_{2}+1}-\rho_{0,n})$
inserted after the three summands, where $\max_{t\in[T_{1},T_{2}]}|\rho_{t}-\rho_{0,n}|^{2}=O(h^{2}/b_{n}^{2})$
by Lemma \ref{lem:MT-Max=000020Intertemporal=000020Difference}(a).
In consequence, a bound on the sum of the terms in $\E A_{c1,3}^{2}$
that correspond to indices in case (i) is 
\begin{equation}
O(b_{n}^{-1})(nh)^{-1}O(h^{2}/b_{n}^{2})O(nhb_{n}^{2})=O(b_{n}^{-1}h^{2})=o(1).\label{Pf=000020re_A_c1,3=000020eqn=0000204}
\end{equation}

For case (ii2), using \eqref{eq:case(ii)-2} and \eqref{eq:case=000020(ii2)-1},
we have: the sum over the indices in case (ii2) on the rhs of (\ref{Pf=000020re_A_c1,3=000020eqn=0000202})
is 
\begin{equation}
O(1)\sum_{t_{1}>t_{2}=T_{1}}^{T_{2}-1}\overline{\rho}_{n}^{2(t_{1}-t_{2})}(1-\overline{\rho}^{2})^{-2}+O(1)\sum_{t=T_{1}}^{T_{2}-1}\sum_{i_{1},i_{2}=0}^{\infty}\overline{\rho}_{n}^{2(i_{1}-i_{2})}=O(nhb_{n}^{3})+O(nhb_{n}^{2})=O(nhb_{n}^{3}),\label{Pf=000020re_A_c1,3=000020eqn=0000205}
\end{equation}
where the first equality uses \eqref{eq:MT-ps=000020rho^2t}. Since
the bound in (\ref{Pf=000020re_A_c1,3=000020eqn=0000205}) is larger
than that in (\ref{Pf=000020re_A_c1,3=000020eqn=0000203}) by the
factor $b_{n},$ (\ref{Pf=000020re_A_c1,3=000020eqn=0000204}) implies
that the sum of the terms in $\E A_{c1,3}^{2}$ that correspond to
indices in case (ii2) is $O(h^{2})=o(1).$ Cases (ii2) and (ii3) are
symmetric. So, the same result holds for case (ii3).

For case (iii), using \eqref{eq:case(iii)-1-RHS} and \eqref{eq:case(iii)-1},
we have: the sum over the indices in case (iii) on the rhs of (\ref{Pf=000020re_A_c1,3=000020eqn=0000202})
is 
\begin{equation}
\sum_{t_{1}>t_{2}=T_{1}}^{T_{2}-1}\overline{\rho}_{n}^{2(t_{1}-t_{2})}O(b_{n}^{2})+O(nhb_{n}^{2})=O(nhb_{n}^{3})+O(nhb_{n}^{2})=O(nhb_{n}^{3}).\label{Pf=000020re_A_c1,3=000020eqn=0000206}
\end{equation}
Since the bounds in (\ref{Pf=000020re_A_c1,3=000020eqn=0000205})
and (\ref{Pf=000020re_A_c1,3=000020eqn=0000206}) are the same, the
sum of the terms in $\E A_{c1,3}^{2}$ that correspond to indices
in case (iii) is $O(h^{2})=o(1).$

To conclude, we have $\E A_{c1,3}^{2}=o(1).$ Hence, by Markov's inequality,
$A_{c1,3}=o_{p}(1).$ This concludes the proof that 
\begin{equation}
A_{c1}=o_{p}(1).\label{eq:Ac1_gto_0}
\end{equation}

For $A_{c2}$, we have 
\begin{align}
\E A_{c2}^{2}= & \ \E\left[\left(1-\rho_{0,n}^{2}\right)^{1/2}\left(nh\right)^{-1/2}\sum_{t=T_{1}}^{T_{2}}\left(\mu_{t}-\ol{\mu}_{nh}\right)Y_{t-1}^{0}\right]^{2}\nonumber \\
= & \left(1-\rho_{0,n}^{2}\right)\left(nh\right)^{-1}\sum_{t,s=T_{1}}^{T_{2}}\left(\mu_{t}-\ol{\mu}_{nh}\right)\left(\mu_{s}-\ol{\mu}_{nh}\right)\E\left(Y_{t-1}^{0}Y_{s-1}^{0}\right)\nonumber \\
= & \ O\left(b_{n}^{-1}\right)\left(nh\right)^{-1}\sum_{t,s=T_{1}}^{T_{2}}\left(\mu_{t}-\ol{\mu}_{nh}\right)\left(\mu_{s}-\ol{\mu}_{nh}\right)\sum_{i=0}^{t-T_{1}}\sum_{j=0}^{s-T_{1}}c_{t-1,i}c_{s-1,j}\sigma_{t-1-i}\sigma_{s-1-j}\nonumber \\
 & \ \times\E\left(U_{t-1-i}U_{s-1-j}\right)\nonumber \\
\leq & \ O\left(b_{n}^{-1}\right)\left(nh\right)^{-1}\left(\max_{t,s\in\left[T_{1},T_{2}\right]}\abs{\mu_{t}-\mu_{s}}\right)^{2}\left(2\sum_{t>s=T_{1}}^{T_{2}}\sum_{i=t-s}^{t-T_{1}}\ol{\rho}_{n}^{2i-\left(t-s\right)}+\sum_{t=T_{1}}^{T_{2}}\sum_{i=0}^{t-T_{1}}\ol{\rho}_{n}^{2i}\right)\nonumber \\
= & \ O\left(b_{n}^{-1}\right)\left(nh\right)^{-1}O\left(h^{2}/b_{n}^{2}\right)\left(O\left(nhb_{n}^{2}\right)+O\left(nhb_{n}\right)\right)=O\left(h^{2}/b_{n}\right)=o\left(1\right),\label{eq:A_C2=000020pto=0000200=000020lem=0000203.3}
\end{align}
where the inequality holds by dividing the case into $t=s$ and $t\neq s$
and \eqref{eq:MT-c_tj=000020bounded-gto_1} and the fourth equality
uses the change of coordinates $l=i-\left(t-s\right)$ and $k=t-s$,
\eqref{eq:MT-ps=000020rho^2t}, \eqref{eq:MT-ps=000020rho^(t-s)},
Lemma \ref{lem:MT-Max=000020Intertemporal=000020Difference}(d) and
the triangle inequality. Then, we have $A_{c2}\pto0$ by Markov's
inequality.

The proof of $A_{c3}\pto0$ is the same as that of $A_{c2}\pto0$
and is omitted.

For $A_{c4}$, we have 
\begin{align}
\E A_{c4}^{2}= & \ \E\left[\left(1-\rho_{0,n}^{2}\right)^{1/2}\left(nh\right)^{-1/2}\sum_{t=T_{1}}^{T_{2}}\left(\rho_{t}-\rho_{0,n}\right)c_{t-1,t-1-T_{0}}Y_{t-1}^{0}Y_{T_{0}}^{*}\right]^{2}\nonumber \\
= & \ O\left(b_{n}^{-1}\right)\left(nh\right)^{-1}\sum_{t,s=T_{1}}^{T_{2}}\left(\rho_{t}-\rho_{0,n}\right)\left(\rho_{s}-\rho_{0,n}\right)c_{t-1,t-1-T_{0}}c_{s-1,s-1-T_{0}}\E Y_{t-1}^{0}Y_{s-1}^{0}Y_{T_{0}}^{*2}\nonumber \\
\leq & \ O\left(b_{n}^{-1}\right)\left(nh\right)^{-1}\sum_{t,s=T_{1}+1}^{T_{2}}\left[\begin{array}{c}
\left(\rho_{t}-\rho_{0,n}\right)\left(\rho_{s}-\rho_{0,n}\right)c_{t-1,t-1-T_{0}}c_{s-1,s-1-T_{0}}\\
\times\sum_{i=0}^{t-1-T_{1}}\sum_{j=0}^{s-1-T_{1}}c_{t-1,i}c_{s-1,j}\E U_{t-1-i}U_{s-1-j}Y_{T_{0}}^{*2}
\end{array}\right]C_{3,U}\nonumber \\
= & \ O\left(b_{n}^{-1}\right)\left(nh\right)^{-1}2\sum_{t>s=T_{1}+1}^{T_{2}}\left[\begin{array}{c}
\left(\rho_{t}-\rho_{0,n}\right)\left(\rho_{s}-\rho_{0,n}\right)c_{t-1,t-1-T_{0}}c_{s-1,s-1-T_{0}}\\
\times\sum_{i=t-s}^{t-1-T_{1}}c_{t-1,i}c_{s-1,i-\left(t-s\right)}\E U_{t-1-i}^{2}Y_{T_{0}}^{*2}
\end{array}\right]\nonumber \\
 & +O\left(b_{n}^{-1}\right)\left(nh\right)^{-1}\sum_{t=T_{1}+1}^{T_{2}}\left[\left(\rho_{t}-\rho_{0,n}\right)^{2}c_{t-1,t-1-T_{0}}^{2}\sum_{i=0}^{t-1-T_{1}}c_{t-1,i}^{2}\E U_{t-1-i}^{2}Y_{T_{0}}^{*2}\right]\nonumber \\
\leq & \ O\left(b_{n}^{-1}\right)\left(nh\right)^{-1}O\left(h^{2}/b_{n}^{2}\right)\sum_{t>s=T_{1}+1}^{T_{2}}\left[\ol{\rho}_{n}^{t-1-T_{0}+s-1-T_{0}}\sum_{i=t-s}^{t-1-T_{1}}\ol{\rho}_{n}^{2i-\left(t-s\right)}\right]O\left(n\right)\nonumber \\
 & +O\left(b_{n}^{-1}\right)\left(nh\right)^{-1}O\left(h^{2}/b_{n}^{2}\right)\sum_{t=T_{1}+1}^{T_{2}}\left[\ol{\rho}_{n}^{2\left(t-1-T_{0}\right)}\sum_{i=0}^{t-1-T_{1}}\ol{\rho}_{n}^{2i}\right]O\left(n\right)\nonumber \\
= & \ O\left(b_{n}^{-1}\right)\left(nh\right)^{-1}O\left(h^{2}/b_{n}^{2}\right)O\left(b_{n}^{3}\right)O\left(n\right)+O\left(b_{n}^{-1}\right)\left(nh\right)^{-1}O\left(h^{2}/b_{n}^{2}\right)O\left(b_{n}^{2}\right)O\left(n\right)\nonumber \\
= & \ O\left(h\right)=o\left(1\right),\label{eq:A_c4=000020pto=0000200}
\end{align}
where the first inequality holds by $\max_{t\in\left[T_{1},T_{2}\right]}\sigma_{t}^{2}\leq C_{3,U}$,
the second inequality holds by Lemma \ref{lem:MT-Max=000020Intertemporal=000020Difference}(a),
\eqref{eq:=000020bound=000020order=000020of=000020E(Y_T0*)^2=000020O(n)=000020A.6.2},
\eqref{eq:MT-c_tj=000020bounded-gto_1}, and $\E\left[\rest{U_{t}^{2}}\mathscr{G}_{t-1}\right]=1\ \text{a.s.}$,
the third equality uses the law of iterated expectations (LIE) and
part (iv) of $\Lambda_{n}$, and the fourth equality holds by the
change of variables $l=2i-\left(t-s\right)$ and \eqref{eq:MT-ps=000020rho^2t}.
Then, by Markov's inequality, we have $A_{c4}\pto0$.

For $A_{c5}$, we have 
\begin{align}
\E A_{c5}^{2}= & \ \E\left[\left(1-\rho_{0,n}^{2}\right)^{1/2}\left(nh\right)^{-1/2}\sum_{t=T_{1}}^{T_{2}}\left(\mu_{t-1}-\ol{\mu}_{nh,-1}\right)\sigma_{t}U_{t}\right]^{2}\nonumber \\
= & \left(1-\rho_{0,n}^{2}\right)\left(nh\right)^{-1}\sum_{t=T_{1}}^{T_{2}}\left(\mu_{t-1}-\ol{\mu}_{nh,-1}\right)^{2}\sigma_{t}^{2}\E U_{t}^{2}\nonumber \\
\leq & \ O\left(b_{n}^{-1}\right)\left(nh\right)^{-1}nhO\left(h^{2}/b_{n}^{2}\right)C_{3,U}=o\left(1\right),\label{eq:A_C5=000020pto=0000200=000020lem=0000203.3}
\end{align}
where the inequality holds by 
\begin{equation}
\max_{t\in\left[T_{1},T_{2}\right]}\left(\mu_{t-1}-\ol{\mu}_{nh,-1}\right)^{2}\leq\max_{t,s\in\left[T_{0},T_{2}\right]}\left(\mu_{t}-\mu_{s}\right)^{2}=O\left(h^{2}/b_{n}^{2}\right)
\end{equation}
using Lemma \ref{lem:MT-Max=000020Intertemporal=000020Difference}(d)
and the triangle inequality. By Markov's inequality, we obtain $A_{c5}\pto0$.

The proof of 
\begin{equation}
A_{c6}:=\left(1-\rho_{0,n}^{2}\right)^{1/2}\left(nh\right)^{-1/2}\sum_{t=T_{1}}^{T_{2}}\left(\mu_{t-1}-\ol{\mu}_{nh,-1}\right)\left(\rho_{t}-\rho_{0,n}\right)Y_{t-1}^{0}\pto0\label{eq:A_c6=000020pto=0000200}
\end{equation}
is quite similar to that of $A_{c2}\gto0$ given in \eqref{eq:A_C2=000020pto=0000200=000020lem=0000203.3},
and hence, is omitted.

The proofs of 
\begin{equation}
\abs{A_{c7}}:=\abs{\left(1-\rho_{0,n}^{2}\right)^{1/2}\left(nh\right)^{-1/2}\sum_{t=T_{1}}^{T_{2}}\left(\mu_{t-1}-\ol{\mu}_{nh,-1}\right)\left(\mu_{t}-\ol{\mu}_{nh}\right)}\gto0\label{eq:A_c7-def}
\end{equation}
and 
\begin{equation}
\abs{A_{c8}}:=\abs{\left(1-\rho_{0,n}^{2}\right)^{1/2}\left(nh\right)^{-1/2}\sum_{t=T_{1}}^{T_{2}}\rho_{0,n}\left(\mu_{t-1}-\ol{\mu}_{nh,-1}\right)^{2}}\gto0\label{eq:A_c8-def}
\end{equation}
are identical, thus we only prove the result for $A_{c8}$. An application
of the triangle inequality gives 
\begin{equation}
\abs{A_{c8}}\leq\left(1-\rho_{0,n}^{2}\right)^{1/2}\left(nh\right)^{-1/2}O\left(nh\right)\max_{t,s\in\left[T_{0},T_{2}\right]}\left(\mu_{t}-\mu_{s}\right)^{2}=O\left(\left(nh^{5}/b_{n}^{5}\right)^{1/2}\right)=o\left(1\right),\label{eq:A_c8=000020pto=0000200}
\end{equation}
where the second last equality holds by Lemma \ref{lem:MT-Max=000020Intertemporal=000020Difference}(d)
and the triangle inequality and the last equality holds by Assumption
\ref{assu:MT-stationary-nh^5=000020->=0000200}.

For $A_{c9}$, we have 
\begin{align}
\abs{A_{c9}}= & \abs{\left(1-\rho_{0,n}^{2}\right)^{1/2}\left(nh\right)^{-1/2}\sum_{t=T_{1}}^{T_{2}}\left(\mu_{t-1}-\ol{\mu}_{nh,-1}\right)\left(\rho_{t}-\rho_{0,n}\right)c_{t-1,t-1-T_{0}}Y_{T_{0}}^{*}}\nonumber \\
\leq & \ O\left(b_{n}^{-1/2}\right)\left(nh\right)^{-1/2}\max_{t,s\in\left[T_{0},T_{2}\right]}\abs{\mu_{t}-\mu_{s}}\max_{t\in\left[T_{1},T_{2}\right]}\abs{\rho_{t}-\rho_{0,n}}\sum_{t=T_{1}}^{T_{2}}\ol{\rho}_{n}^{t-1-T_{0}}\abs{Y_{T_{0}}^{*}}\nonumber \\
= & \ O\left(b_{n}^{-1/2}\right)\left(nh\right)^{-1/2}O\left(h\right)O\left(h/b_{n}\right)O\left(b_{n}\right)O_{p}\left(n^{1/2}\right)=O_{p}\left(h^{3/2}b_{n}^{-1/2}\right),\label{eq:A_c9=000020pto=0000200=000020lem=0000203.3}
\end{align}
where the second last equality holds by Lemma \ref{lem:MT-Max=000020Intertemporal=000020Difference}(a)
and (d), \eqref{eq:=000020bound=000020order=000020of=000020Y_T0*=000020Op(n^1/2)},
and \eqref{eq:MT-ps=000020rho^t}.

For $A_{c10},$ we have 
\begin{eqnarray}
EA_{c10}^{2}\hspace{-0.08in} & = & \hspace{-0.08in}E\left[(1-\rho_{0,n}^{2})^{1/2}(nh)^{-1/2}\sum_{t=T_{1}}^{T_{2}}c_{t-1,t-1-T_{0}}Y_{T_{0}}^{\ast}\sigma_{t}U_{t}\right]^{2}\nonumber \\
 & = & \hspace{-0.08in}(1-\rho_{0,n}^{2})(nh)^{-1}\sum_{t=T_{1}}^{T_{2}}c_{t-1,t-1-T_{0}}^{2}\sigma_{t}^{2}EU_{t}^{2}Y_{T_{0}}^{\ast2}\nonumber \\
 & \leq & \hspace{-0.08in}(1-\rho_{0,n}^{2})(nh)^{-1}\sum_{t=T_{1}}^{T_{2}}\overline{\rho}_{n}^{2(t-1-T_{0})}\sigma_{t}^{2}EY_{T_{0}}^{\ast2}\nonumber \\
 & \leq & \hspace{-0.08in}O(b_{n}^{-1})(nh)^{-1}O(b_{n})O(b_{n})=O(b_{n}/nh)=o(1),\label{A_c10=000020eqn}
\end{eqnarray}
where the second equality uses the martingale difference properties
of $\{U_{t}\}_{t\leq n},$ the first inequality uses (\ref{eq:MT-c_tj=000020bounded-gto_1})
and $E(U_{t}^{2}|Y_{T_{0}}^{\ast2})=1$ a.s. by part (iv) of $\Lambda_{n},$
the second inequality uses Lemma \ref{lem:MT-Order=000020of=000020Y_T0}(c),
which applies because $nh/b_{n}\rightarrow r_{0}=\infty$ is an assumption
of the lemma, and (\ref{eq:MT-ps=000020rho^t}), and the last equality
holds because $b_{n}/nh\rightarrow0.$ Equation (\ref{A_c10=000020eqn})
and Markov's inequality give $A_{c10}=o_{p}(1).$

For $A_{c11},$ we have: $A_{c11}=A_{c4}=o_{p}(1)$ by (\ref{eq:A_c4=000020pto=0000200}).

For $A_{c12},$ we have 
\begin{eqnarray}
|A_{c12}|\hspace{-0.08in} & = & \hspace{-0.08in}\left\vert (1-\rho_{0,n}^{2})^{1/2}(nh)^{-1/2}\sum_{t=T_{1}}^{T_{2}}c_{t-1,t-1-T_{0}}(\mu_{t-1}-\overline{\mu}_{nh,-1})Y_{T_{0}}^{\ast}\right\vert \nonumber \\
 & \leq & \hspace{-0.08in}(1-\rho_{0,n}^{2})^{1/2}(nh)^{-1/2}\sum_{t=T_{1}}^{T_{2}}\overline{\rho}_{n}^{t-1-T_{0}}\max_{t\in[T_{1},T_{2}]}|\mu_{t-1}-\overline{\mu}_{nh,-1}|\cdot|Y_{T_{0}}^{\ast}|\nonumber \\
 & = & \hspace{-0.08in}O(b_{n}^{-1/2})(nh)^{-1/2}O(b_{n})O(h/b_{n})O_{p}(b_{n}^{1/2})=O_{p}((nh)^{-1/2}h)=o_{p}(1),\nonumber \\
\end{eqnarray}
where the inequality uses (\ref{eq:MT-c_tj=000020bounded-gto_1}),
the second equality uses (\ref{eq:MT-ps=000020rho^t}), Lemma \ref{lem:MT-Max=000020Intertemporal=000020Difference}(d),
and Lemma \ref{lem:MT-Order=000020of=000020Y_T0}(c).

We have $|A_{c13}|=|\rho_{0,n}A_{c12}|=o_{p}(1)$ (since $A_{c12}=o_{p}(1)$
and $|\rho_{0,n}|\leq1).$

For $A_{c14},$ we have 
\begin{eqnarray}
|A_{c14}|\hspace{-0.08in} & = & \hspace{-0.08in}\left\vert (1-\rho_{0,n}^{2})^{1/2}(nh)^{-1/2}\sum_{t=T_{1}}^{T_{2}}c_{t-1,t-1-T_{0}}^{2}(\rho_{t}-\rho_{0,n})Y_{T_{0}}^{\ast2}\right\vert \nonumber \\
 & \leq & \hspace{-0.08in}(1-\rho_{0,n}^{2})^{1/2}(nh)^{-1/2}\sum_{t=T_{1}}^{T_{2}}\overline{\rho}_{n}^{2(t-1-T_{0})}\max_{t\in[T_{1},T_{2}]}|\rho_{t}-\rho_{0,n}|Y_{T_{0}}^{\ast2}\nonumber \\
 & = & \hspace{-0.08in}O(b_{n}^{-1/2})(nh)^{-1/2}O(b_{n})O(h/b_{n})O_{p}(b_{n})\nonumber \\
 & = & \hspace{-0.08in}O_{p}((b_{n}/nh)^{1/2}h)=o_{p}(1),\label{A_c14=000020pf}
\end{eqnarray}
where the inequality uses (\ref{eq:MT-c_tj=000020bounded-gto_1}),
the second equality uses (\ref{eq:MT-ps=000020rho^t}), Lemma \ref{lem:MT-Max=000020Intertemporal=000020Difference}(a),
and Lemma \ref{lem:MT-Order=000020of=000020Y_T0}(c).

By \eqref{eq:Ac1_gto_0}--(\ref{A_c14=000020pf}), we obtain 
\begin{equation}
\sum_{i=1}^{14}A_{ci}\pto0.\label{eq:=000020ac1+ac19=000020pto=0000200-1}
\end{equation}

Combining \eqref{eq:decomp=000020A3=000020stationary=000020numerator-1},
\eqref{eq:=000020dominate=000020term=000020numerator-1}, and \eqref{eq:=000020ac1+ac19=000020pto=0000200-1},
we have 
\begin{align}
\left(1-\rho_{0,n}^{2}\right)^{1/2}\left(nh\right)^{-1/2}\sum_{t=T_{1}}^{T_{2}}\left(Y_{t-1}-\ol{\mu}_{nh,-1}\right) & \left[\begin{array}{c}
Y_{t}-\ol{\mu}_{nh}-\rho_{0,n}\left(Y_{t-1}-\ol{\mu}_{nh,-1}\right)\end{array}\right]\dto N\left(0,1\right),\label{eq:part(c)-stationary=000020case-1}
\end{align}
as desired. 
\end{proof}
\begin{proof}[\textbf{Proof of Lemma \ref{lem:MT-asympdist_dist_stationary-numerator=000020}$\left(\text{b}\right)$}]
To prove part (b), by Lemma \ref{lem:MT-asympdist_dist_stationary-denom}(b),
we have 
\begin{equation}
\left(1-\rho_{0,n}^{2}\right)^{1/2}\left(\ol Y_{nh,-1}-\ol{\mu}_{nh,-1}\right)=o_{p}\left(1\right).\label{eq:=000020part=000020b=000020result-1}
\end{equation}
Thus, we only need to show 
\begin{equation}
\left(nh\right)^{-1/2}\sum_{t=T_{1}}^{T_{2}}\left[Y_{t}-\ol{\mu}_{nh}-\rho_{0,n}\left(Y_{t-1}-\ol{\mu}_{nh,-1}\right)\right]=O_{p}\left(1\right).\label{eq:part(d)-=000020goal-1}
\end{equation}
This is done by examining each component of \eqref{eq:part(d)-=000020goal-1}
and showing it is $O_{p}\left(1\right)$. Specifically, by \eqref{eq:5.10.2},
we expand 
\begin{align}
 & \left(nh\right)^{-1/2}\sum_{t=T_{1}}^{T_{2}}\left[Y_{t}-\ol{\mu}_{nh}-\rho_{0,n}\left(Y_{t-1}-\ol{\mu}_{nh,-1}\right)\right]\nonumber \\
= & \left(nh\right)^{-1/2}\sum_{t=T_{1}}^{T_{2}}\left[\begin{array}{c}
\sigma_{t}U_{t}+\left(\rho_{t}-\rho_{0,n}\right)Y_{t-1}^{0}+\left(\mu_{t}-\ol{\mu}_{nh}\right)\\
-\rho_{0,n}\left(\mu_{t-1}-\ol{\mu}_{nh,-1}\right)+\left(\rho_{t}-\rho_{0,n}\right)c_{t-1,t-1-T_{0}}Y_{T_{0}}^{*}
\end{array}\right]=:\sum_{i=1}^{5}A_{di}.\label{eq:d1-5-1}
\end{align}
We observe that$A_{d1}=O_{p}\left(1\right)$ by the central limit
theorem for martingale difference sequences. The proof of $A_{d2}=o_{p}\left(1\right)$
is almost the same as that of $A_{c2}$ except that (i) $A_{d2}$
does not have the $\left(1-\rho_{0,n}^{2}\right)^{1/2}$ multiplicand,
which is $O\left(b_{n}^{-1/2}\right)$ and (ii) $A_{d2}$ has $\rho_{t}-\rho_{0,n}$
in place of $\mu_{t}-\ol{\mu}_{nh}$, both of which have the same
order of $O\left(h/b_{n}\right)$ of maximum intertemporal difference
on $\left[T_{1},T_{2}\right]$ by Lemma \ref{lem:MT-Max=000020Intertemporal=000020Difference}(a)
and (d). Thus, following an argument identical to that in \eqref{eq:A_C2=000020pto=0000200=000020lem=0000203.3},
we have 
\begin{equation}
A_{d2}=O\left(b_{n}^{1/2}\right)O_{p}\left(hb_{n}^{-1/2}\right)=O_{p}\left(h\right)=o_{p}\left(1\right).\label{eq:Ad2}
\end{equation}
By the definitions of $A_{d3}$ and $A_{d4}$, we get $A_{d3}=A_{d4}=0$.
Finally, we derive $A_{d5}=o_{p}\left(1\right)$ by observing 
\begin{align}
\abs{A_{d5}} & \leq\max_{t\in\left[T_{1},T_{2}\right]}\abs{\rho_{t}-\rho_{0,n}}\left(nh\right)^{-1/2}\sum_{t=T_{1}}^{T_{2}}\begin{array}{c}
\ol{\rho}_{n}^{t-T_{1}}\abs{Y_{T_{0}}^{*}}\end{array}\nonumber \\
 & =O\left(h/b_{n}\right)\left(nh\right)^{-1/2}O\left(b_{n}\right)O_{p}\left(n^{1/2}\right)=O_{p}\left(h^{1/2}\right)=o_{p}\left(1\right),\label{eq:Ad5=000020pto=0000200}
\end{align}
where the first equality holds by Lemma \ref{lem:MT-Max=000020Intertemporal=000020Difference}(a),
\eqref{eq:=000020bound=000020order=000020of=000020Y_T0*=000020Op(n^1/2)},
and \eqref{eq:MT-ps=000020rho^t}.

Thus, $\sum_{i=1}^{5}A_{di}=o_{p}\left(1\right)$, (\ref{eq:part(d)-=000020goal-1})
holds, and the proof is complete. 
\end{proof}

\subsection{\protect\label{subsec:Proof-of-Theorem-2-Stationary}Proof of Theorem
\ref{thm:MT-asympdist_dist_rho^hat-stationary}}

In this section, for notational simplicity in the proof, we assume
that $\sigma_{0}^{2}\left(\tau\right)=1.$ 
\begin{proof}[\textbf{Proof of Theorem \ref{thm:MT-asympdist_dist_rho^hat-stationary}}]
Recall from \eqref{eq:MT-stationary=000020-=000020rho=000020hat-1}
that 
\begin{align}
 & \left(1-\rho_{0,n}^{2}\right)^{-1/2}\left(nh\right)^{1/2}\left(\widehat{\rho}_{n\tau}-\rho_{0,n}\right)\nonumber \\
= & \dfrac{\left(1-\rho_{0,n}^{2}\right)^{1/2}\left(nh\right)^{-1/2}\sum_{t=T_{1}}^{T_{2}}\left(Y_{t-1}-\ol Y_{nh,-1}\right)\left(Y_{t}-\rho_{0,n}Y_{t-1}\right)}{\left(1-\rho_{0,n}^{2}\right)\left(nh\right)^{-1}\sum_{t=T_{1}}^{T_{2}}\left(Y_{t-1}-\ol Y_{nh,-1}\right)^{2}}.\label{eq:=000020rho_hat_stat_normalized-1}
\end{align}
We analyze the denominator and numerator of \eqref{eq:=000020rho_hat_stat_normalized-1}
separately.

First, for the denominator of \eqref{eq:=000020rho_hat_stat_normalized-1},
we have 
\begin{align}
 & \left(1-\rho_{0,n}^{2}\right)\left(nh\right)^{-1}\sum_{t=T_{1}}^{T_{2}}\left(Y_{t-1}-\ol Y_{nh,-1}\right)^{2}\nonumber \\
= & \left(1-\rho_{0,n}^{2}\right)\left(nh\right)^{-1}\sum_{t=T_{1}}^{T_{2}}\left(Y_{t-1}-\ol{\mu}_{nh,-1}\right)^{2}-\left(1-\rho_{0,n}^{2}\right)\left(\ol Y_{nh,-1}-\ol{\mu}_{nh,-1}\right)^{2}\nonumber \\
=: & \ A_{f1}+A_{f2}.\label{eq:denominator=000020decomp-1}
\end{align}
By Lemma \ref{lem:MT-asympdist_dist_stationary-denom}(a) and (b),
we have $A_{f1}\pto1$ and $A_{f2}\pto0$, respectively. Therefore,
we have 
\begin{equation}
\left(1-\rho_{0,n}^{2}\right)\left(nh\right)^{-1}\sum_{t=T_{1}}^{T_{2}}\left(Y_{t-1}-\ol Y_{nh,-1}\right)^{2}\pto1.\label{eq:=000020denominator=000020stat=000020norm-1}
\end{equation}

Next, for the numerator of \eqref{eq:=000020rho_hat_stat_normalized-1},
we have 
\begin{align}
 & \left(1-\rho_{0,n}^{2}\right)^{1/2}\left(nh\right)^{-1/2}\sum_{t=T_{1}}^{T_{2}}\left(Y_{t-1}-\ol Y_{nh,-1}\right)\left(Y_{t}-\rho_{0,n}Y_{t-1}\right)\nonumber \\
= & \left(1-\rho_{0,n}^{2}\right)^{1/2}\left(nh\right)^{-1/2}\nonumber \\
 & \times\sum_{t=T_{1}}^{T_{2}}\left[\begin{array}{c}
Y_{t-1}-\ol{\mu}_{nh,-1}-\left(\ol Y_{nh,-1}-\ol{\mu}_{nh,-1}\right)\end{array}\right]\left[\begin{array}{c}
Y_{t}-\ol{\mu}_{nh}-\rho_{0,n}\left(Y_{t-1}-\ol{\mu}_{nh,-1}\right)\end{array}\right]\nonumber \\
= & \left(1-\rho_{0,n}^{2}\right)^{1/2}\left(nh\right)^{-1/2}\sum_{t=T_{1}}^{T_{2}}\left(Y_{t-1}-\ol{\mu}_{nh,-1}\right)\left[Y_{t}-\ol{\mu}_{nh}-\rho_{0,n}\left(Y_{t-1}-\ol{\mu}_{nh,-1}\right)\right]\nonumber \\
 & +\left(1-\rho_{0,n}^{2}\right)^{1/2}\left(nh\right)^{-1/2}\sum_{t=T_{1}}^{T_{2}}\left(\ol Y_{nh,-1}-\ol{\mu}_{nh,-1}\right)\left[Y_{t}-\ol{\mu}_{nh}-\rho_{0,n}\left(Y_{t-1}-\ol{\mu}_{nh,-1}\right)\right]\nonumber \\
=: & A_{f3}+A_{f4}.
\end{align}
By Lemma \ref{lem:MT-asympdist_dist_stationary-numerator=000020}(a)
and (b), we have $A_{f3}\dto N\left(0,1\right)$ and $A_{f4}\pto0$,
respectively. Therefore, we obtain 
\begin{equation}
\left(1-\rho_{0,n}^{2}\right)^{1/2}\left(nh\right)^{-1/2}\sum_{t=T_{1}}^{T_{2}}\left(Y_{t-1}-\ol Y_{nh,-1}\right)\left(Y_{t}-\rho_{0,n}Y_{t-1}\right)\dto N\left(0,1\right).\label{eq:numerator-stationary=000020case-1}
\end{equation}

Combining \eqref{eq:=000020denominator=000020stat=000020norm-1} and
\eqref{eq:numerator-stationary=000020case-1}, we have 
\begin{equation}
\left(1-\rho_{0,n}^{2}\right)^{-1/2}\left(nh\right)^{1/2}\left(\widehat{\rho}_{n\tau}-\rho_{0,n}\right)\dto N\left(0,1\right).\label{eq:stationary-limit-distribution-1}
\end{equation}

For the t-statistic $T_{n}\left(\rho_{0,n}\right)$, by \eqref{eq:MT-sigma=000020ntau=000020def},
\eqref{eq:MT-t-stat=000020def}, \eqref{eq:stationary-limit-distribution-1},
and the CMT, we only need to show 
\begin{equation}
\left(1-\rho_{0,n}^{2}\right)^{-1}\widehat{s}_{n\tau}^{2}=\dfrac{\left(nh\right)^{-1}\sum_{t=T_{1}}^{T_{2}}\left[Y_{t}-\ol Y_{nh}-\widehat{\rho}_{n\tau}\left(Y_{t-1}-\ol Y_{nh,-1}\right)\right]^{2}}{\left(1-\rho_{0,n}^{2}\right)\left(nh\right)^{-1}\sum_{t=T_{1}}^{T_{2}}\left(Y_{t-1}-\ol Y_{nh,-1}\right)^{2}}\pto1.\label{eq:denom=000020of=000020stat=000020case}
\end{equation}
Equation \eqref{eq:=000020denominator=000020stat=000020norm-1} shows
that the denominator converges in probability to one. For the numerator
of \eqref{eq:denom=000020of=000020stat=000020case}, we have 
\begin{align}
 & \left(nh\right)^{-1}\sum_{t=T_{1}}^{T_{2}}\left[Y_{t}-\ol Y_{nh}-\widehat{\rho}_{n\tau}\left(Y_{t-1}-\ol Y_{nh,-1}\right)\right]^{2}\nonumber \\
= & \left(nh\right)^{-1}\sum_{t=T_{1}}^{T_{2}}\left[Y_{t}-\ol Y_{nh}-\rho_{0,n}\left(Y_{t-1}-\ol Y_{nh,-1}\right)\right]^{2}\nonumber \\
 & +\left(nh\right)^{-1}\sum_{t=T_{1}}^{T_{2}}\left[\left(\widehat{\rho}_{n\tau}-\rho_{0,n}\right)\left(Y_{t-1}-\ol Y_{nh,-1}\right)\right]^{2}\nonumber \\
 & +2\left(nh\right)^{-1}\sum_{t=T_{1}}^{T_{2}}\left\{ \begin{array}{c}
\left(\widehat{\rho}_{n\tau}-\rho_{0,n}\right)\left(Y_{t-1}-\ol Y_{nh,-1}\right)\\
\times\left[Y_{t}-\ol Y_{nh}-\rho_{0,n}\left(Y_{t-1}-\ol Y_{nh,-1}\right)\right]
\end{array}\right\} \nonumber \\
\eqqcolon & A_{f5}+A_{f6}+A_{f7}.\label{eq:decomp}
\end{align}
We show $A_{f5}\pto1$ and $A_{f6}\pto0$, which imply $A_{f7}\pto0$
by the CS inequality.

For $A_{f5}$, we have 
\begin{align}
 & \left(nh\right)^{-1}\sum_{t=T_{1}}^{T_{2}}\left[Y_{t}-\ol Y_{nh}-\rho_{0,n}\left(Y_{t-1}-\ol Y_{nh,-1}\right)\right]^{2}\nonumber \\
= & \left(nh\right)^{-1}\sum_{t=T_{1}}^{T_{2}}\left[Y_{t}-\ol{\mu}_{nh}-\rho_{0,n}\left(Y_{t-1}-\ol{\mu}_{nh,-1}\right)\right]^{2}\nonumber \\
 & +\left[\ol Y_{nh}-\ol{\mu}_{nh}-\rho_{0,n}\left(\ol Y_{nh,-1}-\ol{\mu}_{nh,-1}\right)\right]^{2}\nonumber \\
 & +2\left(nh\right)^{-1}\sum_{t=T_{1}}^{T_{2}}\left\{ \begin{array}{c}
\left[Y_{t}-\ol{\mu}_{nh}-\rho_{0,n}\left(Y_{t-1}-\ol{\mu}_{nh,-1}\right)\right]\\
\times\left[\ol Y_{nh}-\ol{\mu}_{nh}-\rho_{0,n}\left(Y_{t-1}-\ol{\mu}_{nh,-1}\right)\right]
\end{array}\right\} \nonumber \\
\eqqcolon & A_{f51}+A_{f52}+A_{f53}.\label{eq:A5}
\end{align}
We show $A_{f51}\pto1$ and $A_{f52}\pto0$, which imply $A_{f53}\pto0$.
Recall that by \eqref{eq:5.10.2}, we have 
\begin{align}
A_{f51}= & \left(nh\right)^{-1}\sum_{t=T_{1}}^{T_{2}}\left[\begin{array}{c}
\sigma_{t}U_{t}+\left(\rho_{t}-\rho_{0,n}\right)Y_{t-1}^{0}+\left(\mu_{t}-\ol{\mu}_{nh}\right)\\
-\rho_{0,n}\left(\mu_{t-1}-\ol{\mu}_{nh,-1}\right)+\left(\rho_{t}-\rho_{0,n}\right)c_{t-1,t-1-T_{0}}Y_{T_{0}}^{*}
\end{array}\right]^{2}\nonumber \\
= & \left(nh\right)^{-1}\sum_{t=T_{1}}^{T_{2}}\left(\sigma_{t}U_{t}\right)^{2}\nonumber \\
 & +\left(nh\right)^{-1}\sum_{t=T_{1}}^{T_{2}}\left[\begin{array}{c}
\left(\rho_{t}-\rho_{0,n}\right)Y_{t-1}^{0}+\left(\mu_{t}-\ol{\mu}_{nh}\right)\\
-\rho_{0,n}\left(\mu_{t-1}-\ol{\mu}_{nh,-1}\right)+\left(\rho_{t}-\rho_{0,n}\right)c_{t-1,t-1-T_{0}}Y_{T_{0}}^{*}
\end{array}\right]^{2}\nonumber \\
 & +2\left(nh\right)^{-1}\sum_{t=T_{1}}^{T_{2}}\sigma_{t}U_{t}\left[\begin{array}{c}
\left(\rho_{t}-\rho_{0,n}\right)Y_{t-1}^{0}+\left(\mu_{t}-\ol{\mu}_{nh}\right)\\
-\rho_{0,n}\left(\mu_{t-1}-\ol{\mu}_{nh,-1}\right)+\left(\rho_{t}-\rho_{0,n}\right)c_{t-1,t-1-T_{0}}Y_{T_{0}}^{*}
\end{array}\right]\nonumber \\
\eqqcolon & A_{f511}+A_{f512}+A_{f513}.\label{eq:A51}
\end{align}

By \eqref{eq:a511=000020result}, we have $A_{f511}\pto1$. For $A_{f512}$,
we have 
\begin{align}
 & \left(nh\right)^{-1}\sum_{t=T_{1}}^{T_{2}}\left[\left(\rho_{t}-\rho_{0,n}\right)Y_{t-1}^{0}\right]^{2}\nonumber \\
\leq & \max_{t\in\left[T_{1},T_{2}\right]}\left(\rho_{t}-\rho_{0,n}\right)^{2}\left(1-\rho_{0,n}^{2}\right)^{-1}\left(1-\rho_{0,n}^{2}\right)\left(nh\right)^{-1}\sum_{t=T_{1}}^{T_{2}}\left(Y_{t-1}^{0}\right)^{2}\nonumber \\
= & \ O\left(\left(h/b_{n}\right)^{2}\right)O\left(b_{n}\right)O_{p}\left(1\right)=o_{p}\left(1\right),\label{eq:SM-Af5121}
\end{align}
where the first equality holds by Lemmas \ref{lem:MT-Max=000020Intertemporal=000020Difference}(a)
and \ref{lem:MT-Asymptotic=000020Properties=000020of=000020the=000020Components=000020Stationary}(b),
and the fact that $\left(1-\rho_{0,n}^{2}\right)^{-1}=O\left(b_{n}\right)$.
Additionally, by Lemmas \ref{lem:MT-Max=000020Intertemporal=000020Difference}(a),
\ref{lem:MT-Max=000020Intertemporal=000020Difference}(d), and \ref{lem:MT-Order=000020of=000020Y_T0}(a),
and \eqref{eq:MT-ps=000020rho^2t}, we have 
\begin{equation}
\left(nh\right)^{-1}\sum_{t=T_{1}}^{T_{2}}\left(\mu_{t}-\ol{\mu}_{nh}\right)^{2}\pto0,\label{eq:SM-Af5122}
\end{equation}
and 
\begin{align}
 & \left(nh\right)^{-1}\sum_{t=T_{1}}^{T_{2}}\left[\left(\rho_{t}-\rho_{0,n}\right)c_{t-1,t-1-T_{0}}Y_{T_{0}}^{*}\right]^{2}\nonumber \\
\leq & \left(nh\right)^{-1}\max_{t\in\left[T_{1},T_{2}\right]}\left(\rho_{t}-\rho_{0,n}\right)^{2}\sum_{t=T_{1}}^{T_{2}}\ol{\rho}_{n}^{2\left(t-T_{1}\right)}O_{p}\left(n\right)=O_{p}\left(\left(nh\right)^{-1}\left(h/b_{n}\right)^{2}nb_{n}\right)=o_{p}\left(1\right).\label{eq:SM-Af5124}
\end{align}
Similarly to (\ref{eq:SM-Af5122}), we have 
\begin{equation}
\left(nh\right)^{-1}\sum_{t=T_{1}}^{T_{2}}\left[\rho_{0,n}\left(\mu_{t-1}-\ol{\mu}_{nh,-1}\right)\right]^{2}\pto0\label{eq:SM-Af5123}
\end{equation}
since $\abs{\rho_{0,n}}\leq1$. By (\ref{eq:SM-Af5121})--(\ref{eq:SM-Af5123})
and the CS inequality, we have $A_{f512}\pto0$. Then, by $A_{f511}\pto1$,
$A_{f512}\pto0$, and the CS inequality, we have $A_{f513}\pto0$.
Thus, $A_{f51}\pto1$.

Next, for $A_{f52}$ we have 
\begin{align}
 & \ol Y_{nh}-\ol{\mu}_{nh}-\rho_{0,n}\left(\ol Y_{nh,-1}-\ol{\mu}_{nh,-1}\right)\nonumber \\
= & \left(nh\right)^{-1}\sum_{t=T_{1}}^{T_{2}}\left[\sigma_{t}U_{t}+\left(\rho_{t}-\rho_{0,n}\right)Y_{t-1}^{0}+\left(\rho_{t}-\rho_{0,n}\right)c_{t-1,t-1-T_{0}}Y_{T_{0}}^{*}\right]\nonumber \\
\eqqcolon & A_{f521}+A_{f522}+A_{f523}.\label{eq:A52=000020expand}
\end{align}

By the weak law of large numbers, $A_{f521}\pto0$. For $A_{f522}$,
we have 
\begin{align}
A_{f522}^{2}= & \left\{ \left(nh\right)^{-1}\sum_{t=T_{1}}^{T_{2}}\left[\left(\rho_{t}-\rho_{0,n}\right)Y_{t-1}^{0}\right]\right\} ^{2}\nonumber \\
\leq & \max_{t\in\left[T_{1},T_{2}\right]}\left(\rho_{t}-\rho_{0,n}\right)^{2}\left(1-\rho_{0,n}^{2}\right)^{-1}\left(1-\rho_{0,n}^{2}\right)\left(nh\right)^{-1}\sum_{t=T_{1}}^{T_{2}}\left(Y_{t-1}^{0}\right)^{2}\nonumber \\
= & \ O\left(h^{2}/b_{n}^{2}\right)O\left(b_{n}\right)O_{p}\left(1\right)=o_{p}\left(1\right),\label{eq:A522}
\end{align}
where the inequality holds by the CS inequality and the second last
equality holds by Lemmas \ref{lem:MT-Max=000020Intertemporal=000020Difference}(a)
and \ref{lem:MT-Asymptotic=000020Properties=000020of=000020the=000020Components=000020Stationary}(b).

For $A_{f523}$, we have 
\begin{align}
\abs{A_{f523}}= & \abs{\left(nh\right)^{-1}\sum_{t=T_{1}}^{T_{2}}\left[\left(\rho_{t}-\rho_{0,n}\right)c_{t-1,t-1-T_{0}}Y_{T_{0}}^{*}\right]}\nonumber \\
\leq & \left(nh\right)^{-1}O\left(h/b_{n}\right)O\left(b_{n}\right)O_{p}\left(n^{1/2}\right)=o_{p}\left(1\right),\label{eq:A523}
\end{align}
where the inequality holds by Lemmas \ref{lem:MT-Max=000020Intertemporal=000020Difference}(a)
and \ref{lem:MT-Order=000020of=000020Y_T0}(a), \eqref{eq:MT-c_tj=000020bounded-gto_1},
and \eqref{eq:MT-ps=000020rho^t}. Combining the results, we have
$A_{f5}\pto1$.

For $A_{f6}$, we use \eqref{eq:stationary-limit-distribution-1}
and obtain 
\begin{align}
 & \left(nh\right)^{-1}\sum_{t=T_{1}}^{T_{2}}\left[\left(\widehat{\rho}_{n\tau}-\rho_{0,n}\right)\left(Y_{t-1}-\ol Y_{nh,-1}\right)\right]^{2}\nonumber \\
= & \left[\left(1-\rho_{0,n}^{2}\right)^{-1}nh\left(\widehat{\rho}_{n\tau}-\rho_{0,n}\right)^{2}\right]\left[\left(1-\rho_{0,n}^{2}\right)\left(nh\right)^{-1}\sum_{t=T_{1}}^{T_{2}}\left(Y_{t-1}-\ol Y_{nh,-1}\right)^{2}\right]/\left(nh\right)\nonumber \\
= & \ O_{p}\left(1\right)O_{p}\left(1\right)/\left(nh\right)=o_{p}\left(1\right),\label{eq:A6}
\end{align}
where the second equality holds by (\ref{eq:=000020denominator=000020stat=000020norm-1})
and \eqref{eq:stationary-limit-distribution-1}.

Therefore, we have shown that the quantity in (\ref{eq:decomp}) equals
$1+o_{p}\left(1\right)$. This and (\ref{eq:=000020denominator=000020stat=000020norm-1})
establish (\ref{eq:denom=000020of=000020stat=000020case}) and the
proof of the result for the t-statistic is complete.

The subsequence versions of Lemmas \ref{lem:MT-Asymptotic=000020Properties=000020of=000020the=000020Components=000020Stationary}--\ref{lem:MT-asympdist_dist_stationary-numerator=000020}
and Theorem \ref{thm:MT-asympdist_dist_rho^hat-stationary}, see Remark
\ref{rem:MT-subseq=000020staty=000020case}, which have $\{p_{n}\}_{n\geq1}$
in place of $\{n\}_{n\geq1},$ are proved by replacing $n$ by $p_{n}$
and $h=h_{n}$ by $h_{p_{n}}$ throughout the proofs above. 
\end{proof}

\subsection{\protect\label{subsec:Proof-First_data_dep}Proof of Lemma \ref{lem:MT-First=000020Data_Dependent=000020Bandwidth=000020Lem}}
\begin{proof}[Proof of Lemma \ref{lem:MT-First=000020Data_Dependent=000020Bandwidth=000020Lem}.]
\textbf{ }By the definition of $\widehat{h}_{opt},$ $L_{n}(\widehat{h}_{opt})/L_{n}(\widehat{h})\leq1$
$\forall n\geq1.$ So, the result of the lemma follows from $L_{n}(\widehat{h}_{opt})/L_{n}(\widehat{h})\geq1+o_{p}(1).$
We have 
\begin{eqnarray}
\frac{L_{n}(\widehat{h}_{opt})}{L_{n}(\widehat{h})}\hspace{-0.08in} & = & \hspace{-0.08in}\frac{L_{n}(\widehat{h}_{opt})-2C_{n}(\widehat{h}_{opt})}{L_{n}(\widehat{h})}+2\frac{C_{n}(\widehat{h}_{opt})}{L_{n}(\widehat{h})}\nonumber \\
 & \geq & \hspace{-0.08in}\frac{L_{n}(\widehat{h})-2C_{n}(\widehat{h})}{L_{n}(\widehat{h})}+2\frac{C_{n}(\widehat{h}_{opt})}{L_{n}(\widehat{h})}\nonumber \\
 & = & \hspace{-0.08in}1-2\frac{C_{n}(\widehat{h})}{L_{n}(\widehat{h})}+2\frac{C_{n}(\widehat{h}_{opt})}{L_{n}(\widehat{h})}\nonumber \\
 & = & \hspace{-0.08in}1+o_{p}(1),\label{Bandwidth=000020Lem=000020eqn=0000201}
\end{eqnarray}
where the inequality holds because $\widehat{h}$ minimizes $L_{n}(h)-2C_{n}(h)$
over $\mathcal{H}_{n}$ and the last equality holds using Assumption
\ref{assu:MT-Asymp_h1} and $|\frac{C_{n}(\widehat{h}_{opt})}{L_{n}(\widehat{h})}|\leq|\frac{C_{n}(\widehat{h}_{opt})}{L_{n}(\widehat{h}_{opt})}|=o_{p}(1)$
since $\widehat{h}_{opt}$ minimizes $L_{n}(h)$ and using Assumption
\ref{assu:MT-Asymp_h1} again. 
\end{proof}

\subsection{\protect\label{subsec:Proof-of-Lemma_data_depend_2}Proof of Lemma
\ref{lem:MT-Second=000020Data_Dependent=000020Bandwidth=000020Lem}}
\begin{proof}[Proof of Lemma \ref{lem:MT-Second=000020Data_Dependent=000020Bandwidth=000020Lem}.]
\textbf{ }We have: $\E C_{2n}(h)=0$ because (i) $\E(U_{t}|\mathcal{G}_{t-1})=0$
a.s. by the definition of $\Lambda_{n}$ which applies by Assumption
\ref{assu:MT-Assu_h2}(a) and (ii) $Y_{t-1}$ and $\widehat{\rho}_{t-1}(h)$
are functions of $(U_{t-1},...,U_{1},Y_{0}^{\ast})$ provided $t>nh$
and these variables are in $\mathcal{G}_{t-1}$ by the definition
of $\Lambda_{n}.$ Let $n_{\ast}:=n-nh.$ Next, we have 
\begin{eqnarray}
Var\left(C_{2n}(h)\right)\hspace{-0.08in} & = & \hspace{-0.08in}\E\left(n_{\ast}^{-1}\sum_{t=nh_{\max}+1}^{n}\sigma_{t}U_{t}(\widehat{\mu}_{t-1}(h)-\mu_{t}+Y_{t-1}(\widehat{\rho}_{t-1}(h)-\rho_{t}))\right)^{2}\nonumber \\
 & = & \hspace{-0.08in}n_{\ast}^{-2}\sum_{t=nh_{\max}+1}^{n}\sigma_{t}^{2}\E U_{t}^{2}(\widehat{\mu}_{t-1}(h)-\mu_{t}+Y_{t-1}(\widehat{\rho}_{t-1}(h)-\rho_{t}))^{2}\nonumber \\
 & = & \hspace{-0.08in}n_{\ast}^{-2}\sum_{t=nh_{\max}+1}^{n}\sigma_{t}^{2}\E(\widehat{\mu}_{t-1}(h)-\mu_{t}+Y_{t-1}^{2}(\widehat{\rho}_{t-1}(h)-\rho_{t}))^{2}\nonumber \\
 & \leq & \hspace{-0.08in}C_{3U}\cdot n_{\ast}^{-1}\E L_{2n}(h),\label{Bandwidth=000020Lem=000020eqn=0000202}
\end{eqnarray}
where the first equality holds because $\E C_{2n}(h)=0,$ the second
equality holds because, if $t>s$ (and $t>nh_{\max}),$ $\E U_{t}(\widehat{\mu}_{t-1}(h)-\mu_{t}+Y_{t-1}(\widehat{\rho}_{t-1}(h)-\rho_{t}))U_{s}(\widehat{\mu}_{s-1}(h)-\mu_{s}+Y_{s-1}(\widehat{\rho}_{s-1}(h)-\rho_{s}))=0$
by (i) since $(\widehat{\mu}_{t-1}(h),Y_{t-1},\widehat{\rho}_{t-1}(h),U_{s},\widehat{\mu}_{s-1}(h),Y_{s-1},\widehat{\rho}_{s-1}(h))$
are functions of $(U_{t-1},...,U_{1},Y_{0}^{\ast})$ and these variables
are in $\mathcal{G}_{t-1},$ and analogously if $t<s,$ the third
equality holds because $\E(U_{i}^{2}|\mathcal{G}_{t-1})=1$ a.s. by
the definition of $\Lambda_{n},$ and the inequality holds by the
bound $C_{3U}$ on the variance function $\sigma^{2}(\cdot)$ by the
definition of $\Lambda_{n},$ which implies that $\sigma_{t}^{2}\leq C_{3U}$
and the definition of $L_{2n}(h).$\footnote{One could consider $\E C_{2n}\left(h\right)^{m}$ for some even number
$m>2,$ rather than the variance of $C_{2n}(h).$ With i.i.d. observations
$\{Y_{i}\}_{i\leq n},$ this would yield a bound that decreases to
zero faster as a function of $n_{\ast}$ than $n_{\ast}^{-1},$ which
appears in (\ref{Bandwidth=000020Lem=000020eqn=0000202}) for the
case of $m=2.$ However, in the present model, a faster rate is not
obtained for $m>2$ because the summands in the $m$-fold sum are
zero only when the largest index of $U_{a}U_{b}\cdots U_{t}$ is unique,
not when any index is unique, as occurs with i.i.d. summands.}

For any positive constant $K,$ 
\begin{eqnarray}
P\left(\sup_{h\in\mathcal{H}_{n}}\left\vert \frac{C_{2n}(h)}{\xi_{n}^{1/2}Var^{1/2}(C_{2n}(h))}\right\vert >K\right)\hspace{-0.08in} & \leq & \hspace{-0.08in}\sum_{h\in\mathcal{H}_{n}}P\left(\left\vert \frac{C_{2n}(h)}{\xi_{n}^{1/2}Var^{1/2}(C_{2n}(h))}\right\vert >K\right)\nonumber \\
 & \leq & \hspace{-0.08in}\sum_{h\in\mathcal{H}_{n}}\frac{\E C_{2n}(h)^{2}}{\xi_{n}Var(C_{2n}(h))K^{2}}=\frac{1}{K^{2}}\label{Bandwidth=000020Lem=000020eqn=0000203}
\end{eqnarray}
for all $n\geq1,$ where the second inequality holds by Markov's inequality.
In consequence, 
\begin{equation}
O_{p}(1)=\sup_{h\in\mathcal{H}_{n}}\left\vert \frac{C_{2n}(h)}{\xi_{n}^{1/2}Var^{1/2}(C_{2n}(h))}\right\vert \geq\sup_{h\in\mathcal{H}_{n}}\left\vert \frac{n_{\ast}^{1/2}C_{2n}(h)}{C_{3U}^{1/2}\xi_{n}^{1/2}(\E L_{2n}(h))^{1/2}}\right\vert ,\label{Bandwidth=000020Lem=000020eqn=0000204}
\end{equation}
where the equality holds by \eqref{Bandwidth=000020Lem=000020eqn=0000203}
and the inequality holds by \eqref{Bandwidth=000020Lem=000020eqn=0000204}.We
have 
\begin{eqnarray}
\sup_{h\in\mathcal{H}_{n}}\left\vert \frac{C_{2n}(h)}{L_{2n}(h)}\right\vert \hspace{-0.08in} & = & \hspace{-0.08in}\sup_{h\in\mathcal{H}_{n}}\left\vert \frac{n_{\ast}^{1/2}C_{2n}(h)}{C_{3U}^{1/2}\xi_{n}^{1/2}(\E L_{2n}(h))^{1/2}}\frac{C_{3U}^{1/2}\xi_{n}^{1/2}(\E L_{2n}(h))^{1/2}}{n_{\ast}^{1/2}L_{2n}(h)}\right\vert \nonumber \\
 & = & \hspace{-0.08in}O_{p}(1)\cdot\sup_{h\in\mathcal{H}_{n}}\frac{\xi_{n}^{1/2}(\E L_{2n}(h))^{1/2}}{n_{\ast}^{1/2}L_{2n}(h)}.\label{Bandwidth=000020Lem=000020eqn=0000205}
\end{eqnarray}
The right-hand side expression is $o_{p}(1)$ iff 
\begin{equation}
\inf_{h\in\mathcal{H}_{n}}\frac{n_{\ast}^{1/2}L_{2n}(h)}{\xi_{n}^{1/2}(R_{2n}(h))^{1/2}}\rightarrow_{p}\infty\text{ iff }\inf_{h\in\mathcal{H}_{n}}\frac{n_{\ast}^{1/2}(R_{2n}(h))^{1/2}}{\xi_{n}^{1/2}}\rightarrow\infty\text{ iff }\frac{n\inf_{h\in\mathcal{H}_{n}}R_{2n}(h)}{\xi_{n}}\rightarrow\infty,\label{Bandwidth=000020Lem=000020eqn=0000206}
\end{equation}
where the first ``iff'' uses $R_{2n}(h)=\E L_{2n}(h),$ the second
``iff'' uses Assumption \ref{assu:MT-Assu_h2}(c), the third ``iff''
uses Assumption \ref{assu:MT-Assu_h2}(d), and the last condition
holds by Assumption \ref{assu:MT-Assu_h2}(e). In consequence, $\sup_{h\in\mathcal{H}_{n}}\left\vert \frac{C_{2n}(h)}{L_{2n}(h)}\right\vert =o_{p}(1).$
This, combined with Assumption \ref{assu:MT-Assu_h2}(b) and $L_{2n}(h)\geq0$
$\forall h,$ verifies Assumption \ref{assu:MT-Asymp_h1}. 
\end{proof}

\subsection{\protect\label{subsec:Pf=000020NEW=000020Thm=000020A.1}Proof of
Theorem \ref{thm:ThmA.1_As2star}}

To show that Theorem \ref{thm:MT-asympdist_dist_rho^hat-stationary}
holds with Assumption \ref{assu:MT-stationary-nh^5=000020->=0000200}
replaced by Assumption \ref{assu:assumption2*}, the proof of Theorem
\ref{thm:ThmA.1_As2star} uses the following extension of Lemma \ref{lem:MT-Max=000020Intertemporal=000020Difference}
that improves its bounds in the case where $\ell_{n}\rightarrow0$
as $n\rightarrow\infty.$
\begin{lem}
\noindent\label{lem:A.3_New_7.1}Under Assumptions \ref{assu:MT-h}
and \ref{assu:MT-kappa_n_sigma_n=000020mu_n=000020convergence},
\begin{enumerate}
\item $\max_{t\in\lbrack T_{1},T_{2}]}|\rho_{t}-\rho_{n\tau}|=O(\ell_{n}h/b_{n}),$
\item $\max_{t\in\lbrack T_{1},T_{2}]}|\sigma_{t}^{2}-\sigma_{n\tau}^{2}|=O(\ell_{n}h),$
\item $\max_{t\in\lbrack T_{1},T_{2}]}|c_{t,j}-\rho_{n\tau}^{j}|=O(\ell_{n}nh^{2}/b_{n}),$
and
\item $\max_{t\in\lbrack T_{1},T_{2}]}|\mu_{t}-\mu_{n\tau}|=O(\ell_{n}h/b_{n}).$
\end{enumerate}
\end{lem}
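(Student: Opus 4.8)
The plan is to re-run the proof of Lemma \ref{lem:MT-Max=000020Intertemporal=000020Difference} essentially verbatim, the only change being that at each step I would replace the \emph{uniform} Lipschitz bounds $L_4,L_3,L_5$ on $\kappa_n(\cdot)$, $\sigma_n^2(\cdot)$, $\eta_n(\cdot)$ by the common sharper constant $\ell_n$, which by its definition bounds the Lipschitz constants of all three of these functions on $I_{\tau,\varepsilon_2}$. The structural content of the original proof is unaffected; only the size of the increments improves. Note that the second-derivative part of the definition of $\ell_n$ plays no role in this lemma, as only first-order (Lipschitz) estimates enter; that part is needed only for the two-term Taylor expansion used later in the proof of the main theorem.

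For part (a), the first step is to observe that, exactly as in \eqref{eq:kappanapprox_a.5.8} but with $\ell_n$ in place of $L_4$,
\[
\max_{t\in I_{n\tau,nh/2}}|\kappa_n(t/n)-\kappa_n(\tau)|\le \ell_n\max_{t\in I_{n\tau,nh/2}}|t/n-\tau|=O(\ell_n h),
\]
valid for $n$ large enough that $h/2+1/n\le\varepsilon_2$. Dividing by $b_n$ and using $\rho_t=1-\kappa_n(t/n)/b_n$ as in \eqref{eq:rho_t-rho_ntau_A.5.9} then yields $\max_t|\rho_t-\rho_{n\tau}|=O(\ell_n h/b_n)$. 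Part (b) would follow identically with $\sigma_n^2(\cdot)$ in place of $\kappa_n(\cdot)$. For part (d), I would treat $\mu_n(\cdot)=C_{\mu1}\exp\{-\eta_n(\cdot)/b_n\}+C_{\mu2}$ via a mean-value expansion of the exponential: its derivative factor is $b_n^{-1}\exp\{-\tilde\eta/b_n\}$ with $\exp\{-\tilde\eta/b_n\}\le 1$ since $\eta_n\ge0$ and $b_n\ge\varepsilon_3>0$, so the problem reduces to the increment $|\eta_n(t/n)-\eta_n(\tau)|\le\ell_n|t/n-\tau|=O(\ell_n h)$, giving $O(\ell_n h/b_n)$.

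Part (c) would then follow directly from the computation in \eqref{eq:c_t,j=000020-=000020pho_ntauj-A.5.10}: bounding the product difference by $(j+1)\max_{1\le k\le j}|\rho_{t-k}-\rho_{n\tau}|$ (using $|\rho_{t-k}|,|\rho_{n\tau}|\le1$) and inserting the sharpened part (a) together with $j+1\le T_2-T_1=O(nh)$ produces $O(nh\cdot\ell_n h/b_n)=O(\ell_n nh^2/b_n)$.

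I do not expect any genuine obstacle beyond careful bookkeeping: the entire point is that each relevant increment is controlled by the \emph{actual} Lipschitz constant $\ell_n$ rather than its crude uniform bound, so every $O(h)$ appearing in Lemma \ref{lem:MT-Max=000020Intertemporal=000020Difference} is upgraded to $O(\ell_n h)$. The one spot warranting a moment's care is part (d), where one must check that the improvement passes cleanly through the exponential nonlinearity; this is exactly the observation that the mean-value factor $\exp\{-\tilde\eta/b_n\}$ is uniformly bounded by $1$ on the relevant range, so the $\ell_n$ gain is preserved.
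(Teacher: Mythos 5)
Your proposal is correct and follows essentially the same route as the paper, whose proof simply replaces the Lipschitz bound $L_{4}$ by $\ell_{n}$ in \eqref{eq:kappanapprox_a.5.8}--\eqref{eq:rho_t-rho_ntau_A.5.9} and notes that parts (b)--(d) inherit the extra factor $\ell_{n}$ by the same argument as in Lemma \ref{lem:MT-Max=000020Intertemporal=000020Difference}. Your explicit mean-value treatment of the exponential in part (d), with the factor $\exp\{-\tilde{\eta}/b_{n}\}\leq1$, and your remark that the second-derivative component of $\ell_{n}$ is not needed here (it enters only later, in the Taylor expansion within the proof of Theorem \ref{thm:ThmA.1_As2star}) are both accurate and merely spell out what the paper leaves implicit.
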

\begin{proof}[\textbf{Proof of Lemma} \ref{lem:A.3_New_7.1}]
\noindent\textbf{ }Part (a) holds by the proof of Lemma \ref{lem:MT-Max=000020Intertemporal=000020Difference}(a)
by replacing the Lipschitz bound $L_{4}$ in (\ref{eq:kappanapprox_a.5.8})
by $\ell_{n},$ which implies that the rhs bound in (\ref{eq:kappanapprox_a.5.8})
becomes $O(\ell_{n}h).$ In turn, the rhs bound in (\ref{eq:rho_t-rho_ntau_A.5.9})
becomes $O(\ell_{n}h/b_{n}).$ Parts (b)-(d) then hold by the same
argument as in the proof of Lemma \ref{lem:MT-Max=000020Intertemporal=000020Difference}
with the additional term $\ell_{n}$ appearing in each of the error
bounds. 
\end{proof}
\begin{proof}[\textbf{Proof of Theorem} \ref{thm:ThmA.1_As2star}]
\textbf{} First, we show that Theorem \ref{thm:MT-asym_rho_hat_local=000020to=000020unity=000020case}
holds with Assumption \ref{assu:MT-stationary-nh^5=000020->=0000200}
replaced by Assumption \ref{assu:assumption2*}. Assumption \ref{assu:MT-stationary-nh^5=000020->=0000200}
enters the proof of Theorem \ref{thm:MT-asym_rho_hat_local=000020to=000020unity=000020case}
only through its application of Lemma \ref{lem:MT-Order=000020of=000020Y_T0}(b),
which relies on Assumption \ref{assu:MT-stationary-nh^5=000020->=0000200}.
In turn, Assumption \ref{assu:MT-stationary-nh^5=000020->=0000200}
enters the proof of Lemma \ref{lem:MT-Order=000020of=000020Y_T0}(b)
only through its use in (\ref{Pf=000020Lem=0000207.2(b)=000020eqn=00002011})
to show that $h^{1/2}\ln(n)=o(1).$ Since the latter holds under Assumption
\ref{assu:assumption2*}(ii), this completes the proof for Theorem
\ref{thm:MT-asym_rho_hat_local=000020to=000020unity=000020case} under
Assumption \ref{assu:assumption2*}.

Next, we show that Theorem \ref{thm:MT-asympdist_dist_rho^hat-stationary}
holds with Assumption \ref{assu:MT-stationary-nh^5=000020->=0000200}
replaced by Assumption \ref{assu:assumption2*}. Assumption \ref{assu:MT-stationary-nh^5=000020->=0000200}
enters the proof of Theorem \ref{thm:MT-asympdist_dist_rho^hat-stationary}
only through its application of Lemmas \ref{lem:MT-asympdist_dist_stationary-denom}
and \ref{lem:MT-asympdist_dist_stationary-numerator=000020}(a), which
both use Assumption \ref{assu:MT-stationary-nh^5=000020->=0000200}.
Assumption \ref{assu:MT-stationary-nh^5=000020->=0000200} enters
the proof of Lemma \ref{lem:MT-asympdist_dist_stationary-denom} only
through its application of Lemma \ref{lem:MT-Order=000020of=000020Y_T0}(c),
which relies on Assumption \ref{assu:MT-stationary-nh^5=000020->=0000200}.
In turn, Assumption \ref{assu:MT-stationary-nh^5=000020->=0000200}
enters the proof of Lemma \ref{lem:MT-Order=000020of=000020Y_T0}(c)
only through its use in (\ref{Pf=000020Lem=0000207.2(c)=000020eqn=00002015})
to show that $h\ln(n)=o(1).$ The latter holds under Assumption \ref{assu:assumption2*}(ii).

Assumption \ref{assu:MT-stationary-nh^5=000020->=0000200} is used
in the proof of Lemma \ref{lem:MT-asympdist_dist_stationary-numerator=000020}(a)
in equations (\ref{Pf=000020re=000020A_c1,1=000020eqn=0000204}),
(\ref{Pf=000020re=000020A_c1,2=000020eqn=0000209}), and (\ref{eq:A_c8=000020pto=0000200})
and because the proof applies Lemma \ref{lem:MT-Order=000020of=000020Y_T0}(c)
(which we have just shown to hold under Assumption \ref{assu:assumption2*}(ii)).
To verify (\ref{Pf=000020re=000020A_c1,1=000020eqn=0000204}) using
Assumption \ref{assu:assumption2*} in place of Assumption \ref{assu:MT-stationary-nh^5=000020->=0000200},
we bound $|\kappa_{0,n}^{\prime\prime}(\widetilde{\tau}_{n,t})|$
by $\ell_{n}$ in the fourth last line of (\ref{Pf=000020re=000020A_c1,1=000020eqn=0000204})
(which is why $\ell_{n}$ is defined to bound the absolute value of
second derivative of $\kappa_{n}(\cdot)=\kappa_{0,n}(\cdot)$ over
the interval $I_{\tau,\varepsilon_{2}}).$ This leads to $\ell_{n}\sum_{t=T_{1}}^{T_{2}}(t/n-\tau)^{2}$
appearing in place of $\sum_{t=T_{1}}^{T_{2}}(t/n-\tau)^{2}$ in the
third last line of (\ref{Pf=000020re=000020A_c1,1=000020eqn=0000204}),
which in turn leads to $\ell_{n}nh\cdot h^{2}$ appearing in place
of $nh\cdot h^{2}$ in the second last line of (\ref{Pf=000020re=000020A_c1,1=000020eqn=0000204}).
The latter leads to $(n\ell_{n}^{2}h^{5})^{1/2}$ appearing in place
of $(nh^{5})^{1/2}$ in two places in the last line of (\ref{Pf=000020re=000020A_c1,1=000020eqn=0000204}).
Since $(n\ell_{n}^{2}h^{5})^{1/2}=o(1)$ by Assumption \ref{assu:assumption2*}(i),
(\ref{Pf=000020re=000020A_c1,1=000020eqn=0000204}) holds under Assumption
\ref{assu:assumption2*}.

To verify (\ref{Pf=000020re=000020A_c1,2=000020eqn=0000209}) under
Assumption \ref{assu:assumption2*}, we employ Lemma \ref{lem:A.3_New_7.1}(b)
and (c) to yield the rhs of (\ref{Pf=000020re=000020A_c1,2=000020eqn=0000205})
to be $j\overline{\rho}_{n}^{j-1}O(\ell_{n}h/b_{n})+\overline{\rho}_{n}^{j}O(\ell_{n}h)$
rather than $j\overline{\rho}_{n}^{j-1}O(h/b_{n})+\overline{\rho}_{n}^{j}O(h)$
(which is why $\ell_{n}$ is defined to bound the Lipschitz constants
for $\kappa_{n}(\cdot)$ and $\sigma_{n}^{2}(\cdot)$ over the interval
$I_{\tau,\varepsilon_{2}}).$ In consequence, each of the terms on
the last line of (\ref{Pf=000020re=000020A_c1,2=000020eqn=0000206})
gets multiplied by $\ell_{n}^{2},$ and so, the rhs of (\ref{Pf=000020re=000020A_c1,2=000020eqn=0000206})
becomes $O(\ell_{n}^{2}b_{n}h^{2}).$ In turn, this causes $O(\ell_{n}b_{n}^{1/2}h)$
to appear in place of $O(b_{n}^{1/2}h)$ at the end of the first line
of (\ref{Pf=000020re=000020A_c1,2=000020eqn=0000209}). And this causes
$O(n^{1/2}\ell_{n}h^{5/2}b_{n}^{-1/2})$ and $O((n\ell_{n}^{2}h^{5})^{1/2})$
to appear in place of $O(n^{1/2}h^{5/2}b_{n}^{-1/2})$ and $O((nh^{5})^{1/2}),$
respectively, in the last line of (\ref{Pf=000020re=000020A_c1,2=000020eqn=0000209}).
Since $O((n\ell_{n}^{2}h^{5})^{1/2})=o(1)$ under Assumption \ref{assu:assumption2*}(i),
(\ref{Pf=000020re=000020A_c1,2=000020eqn=0000209}) holds under Assumption
\ref{assu:assumption2*}.

To verify (\ref{eq:A_c8=000020pto=0000200}) under Assumption \ref{assu:assumption2*},
we employ Lemma \ref{lem:A.3_New_7.1}(d) to yield the bound $O((n\ell_{n}^{2}h^{5}/b_{n}^{5})^{1/2})$
rather than the bound $O((nh^{5}/b_{n}^{5})^{1/2})$ in (\ref{eq:A_c8=000020pto=0000200}).
Since $O((n\ell_{n}^{2}h^{5}/b_{n}^{5})^{1/2})=o(1)$ under Assumption
\ref{assu:assumption2*}(i) (which is why $\ell_{n}$ is defined to
bound the Lipschitz constant for $\eta_{n}(\cdot)$ over the interval
$I_{\tau,\varepsilon_{2}}),$ (\ref{eq:A_c8=000020pto=0000200}) holds
under Assumption \ref{assu:assumption2*}. This completes the proof.
\end{proof}

\section{Additional Simulation Results\protect\label{sec:Additional-Simulation-Results}}

For a discussion of the results given in Figures \ref{fig:Sim_CP_AL_MAD_SM1}--\ref{fig:Sim_CP_AL_MAD_SM4}
below, see Section \ref{sec:MT-Monte-Carlo-Simulations} of the paper. 
\begin{center}
\global\long\def\thefigure{SM.\arabic{figure}}%
 \captionsetup[subfigure]{position=top,font=scriptsize,singlelinecheck=off,justification=raggedright}
\setcounter{figure}{0} 
\begin{figure}[H]
\vspace{-2.5em}

\begin{centering}
\subfloat[sin 1.00-0.90-1.00, time-varying $\mu$ and $\sigma$]{\includegraphics[scale=0.36]{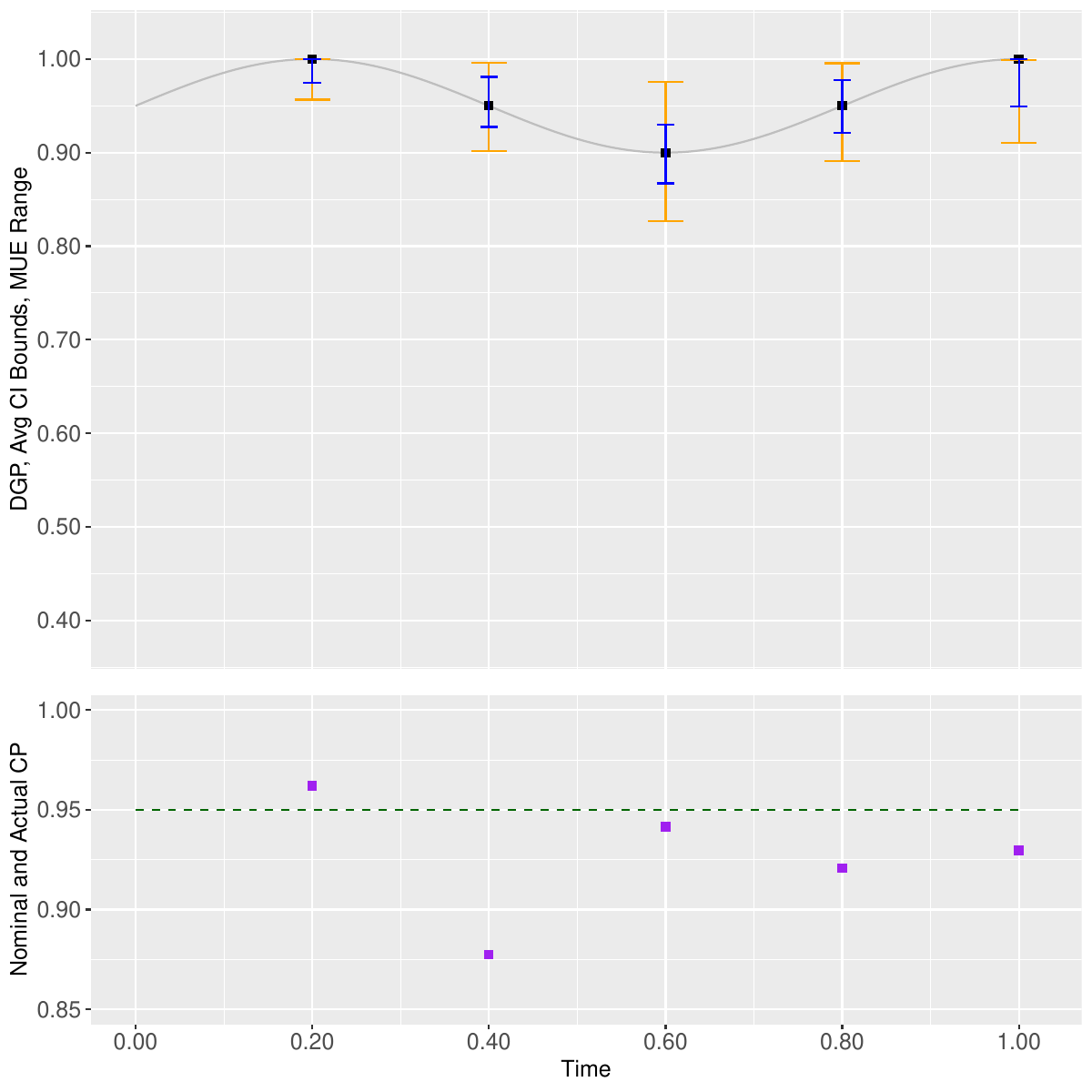}

}\quad{}\subfloat[sin 0.90-1.00-1.00, time-varying $\mu$ and $\sigma$]{\includegraphics[scale=0.36]{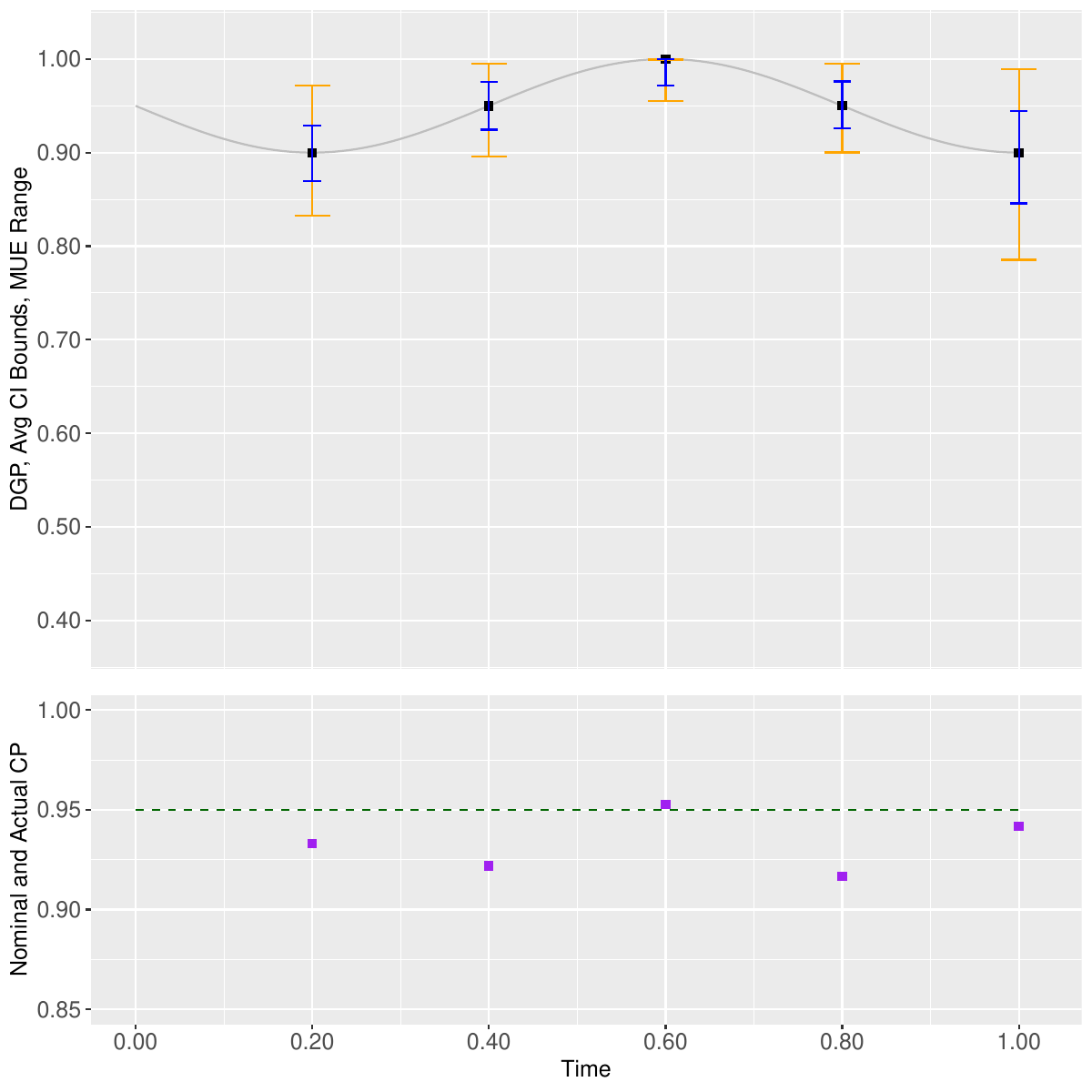}

}
\par\end{centering}
\vspace{-0.65em}

\begin{centering}
\subfloat[sin 1.00-0.80-1.00, time-varying $\mu$ and $\sigma$]{\includegraphics[scale=0.36]{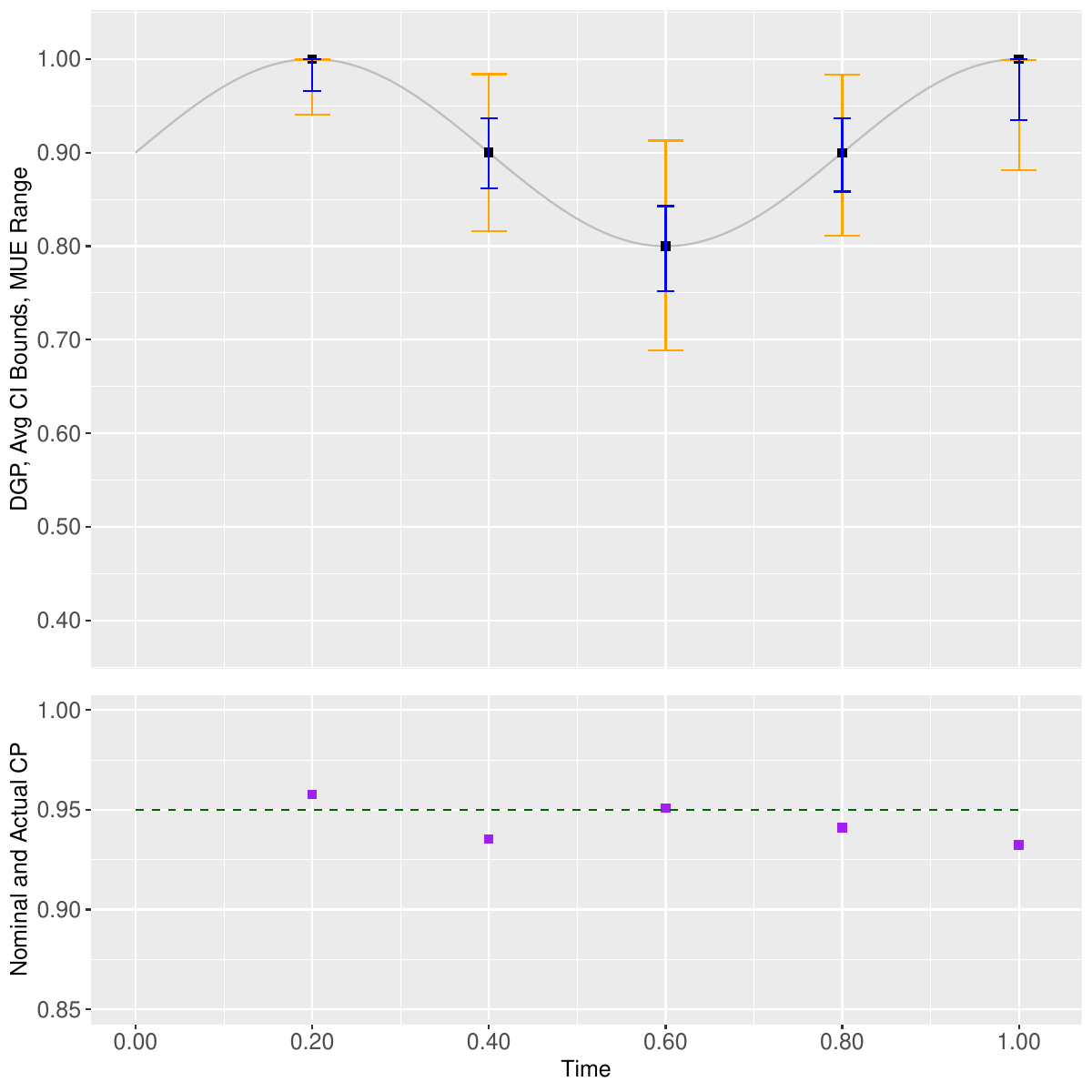}

}\quad{}\subfloat[sin 0.80-1.00-0.80, time-varying $\mu$ and $\sigma$]{\includegraphics[scale=0.36]{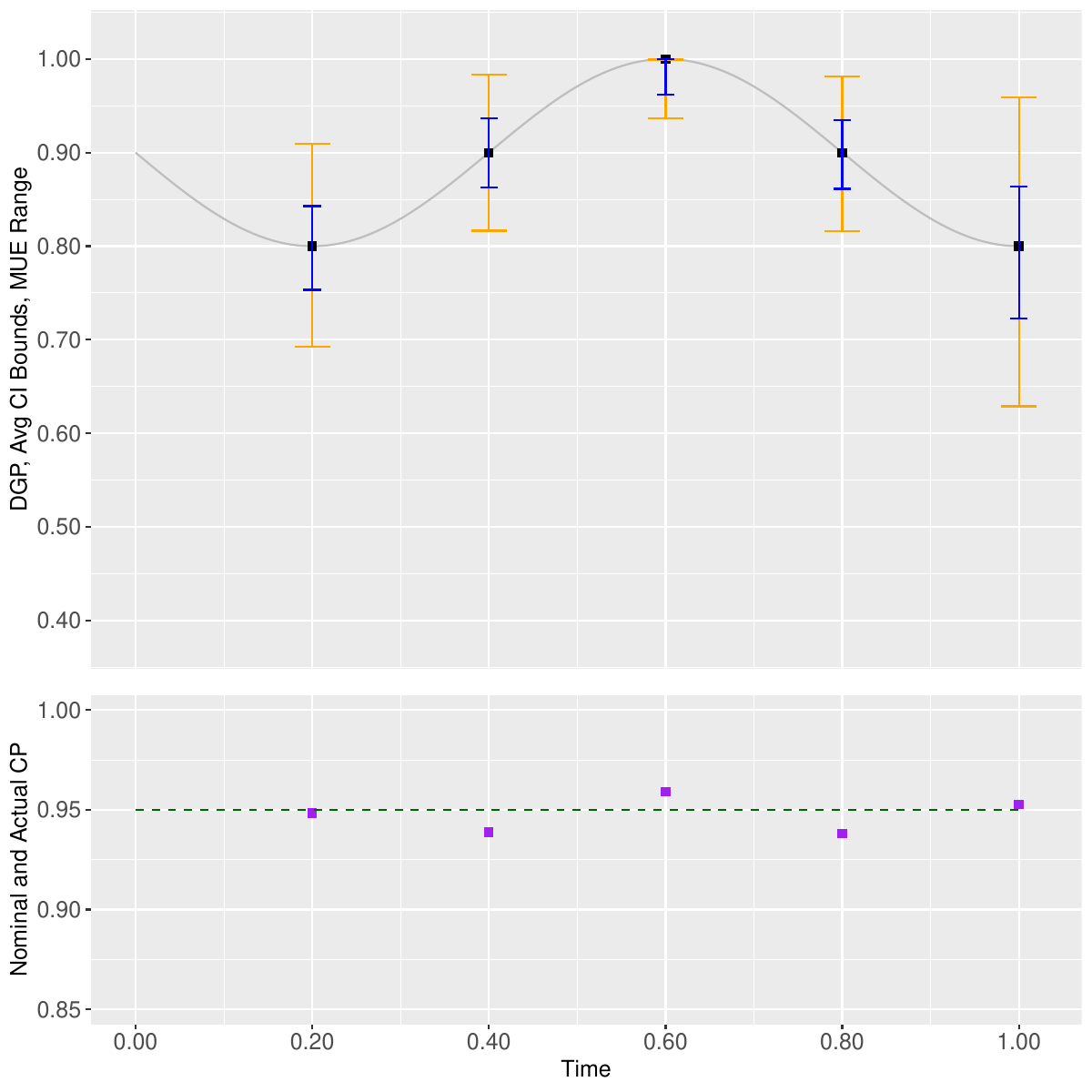}

}
\par\end{centering}
\vspace{-0.65em}

\begin{centering}
\subfloat[sin 1.00-0.60-1.00, time-varying $\mu$ and $\sigma$]{\includegraphics[scale=0.36]{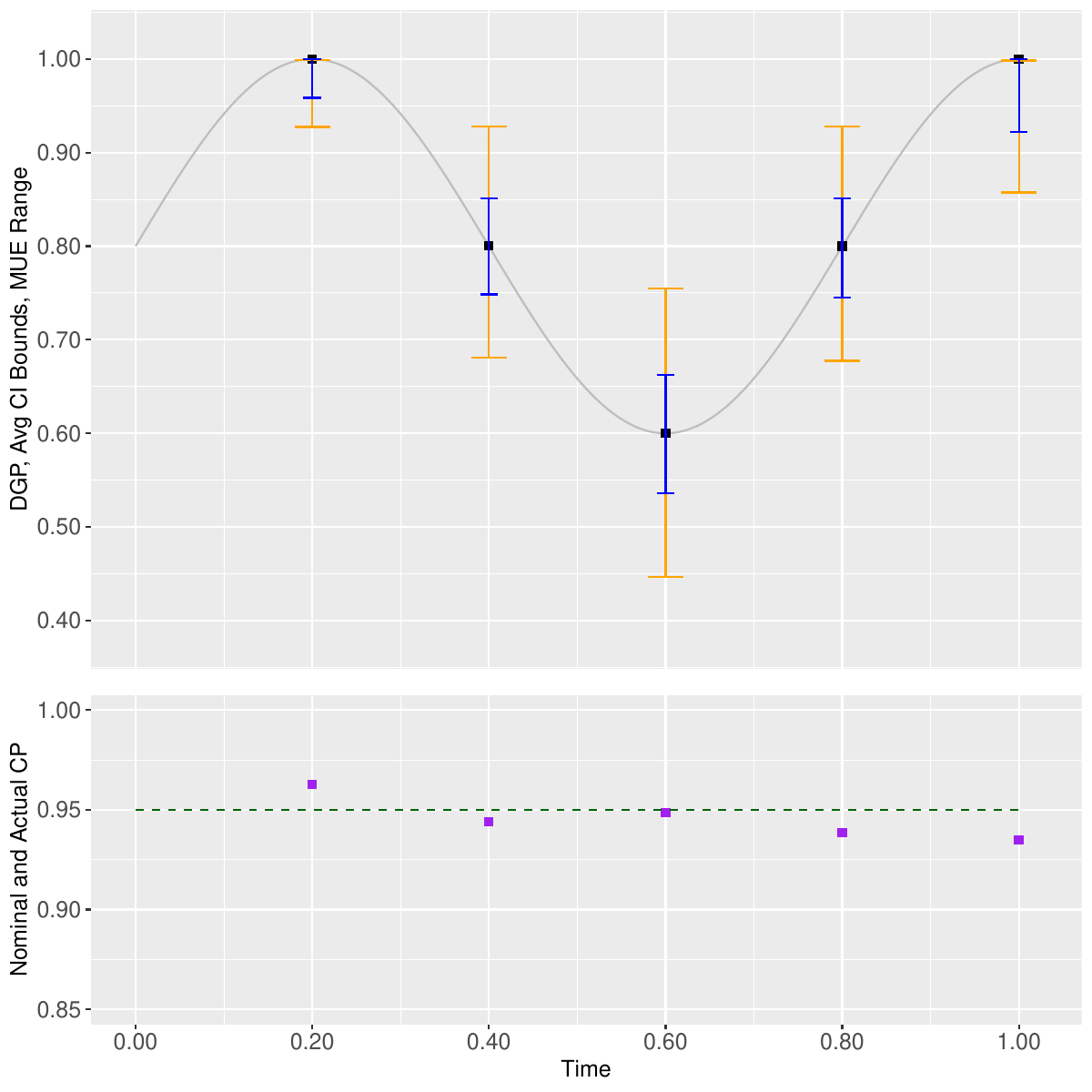}

}\quad{}\subfloat[sin 0.60-1.00-0.60, time-varying $\mu$ and $\sigma$]{\includegraphics[scale=0.36]{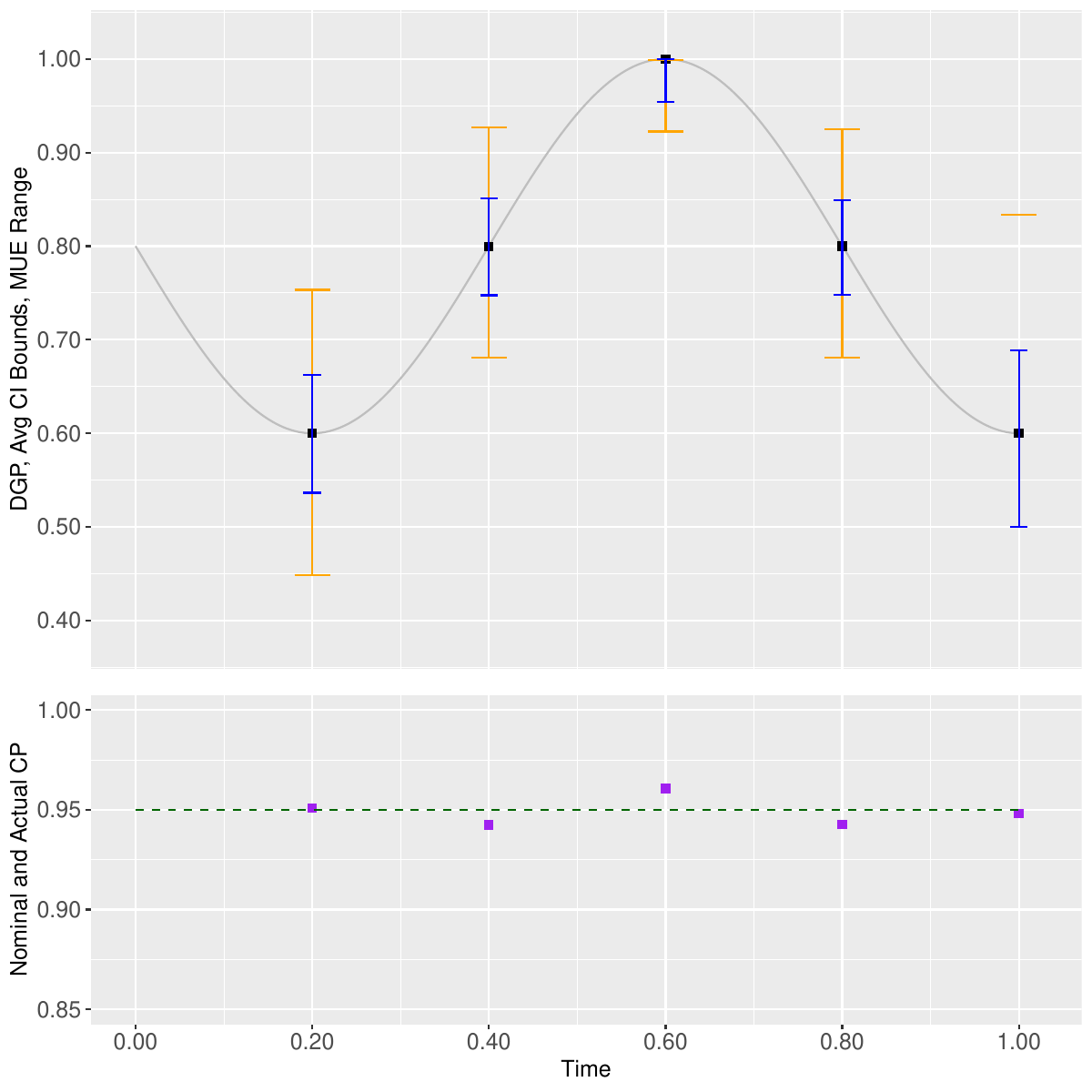}

}
\par\end{centering}
\vspace{-0.75em}

\caption{\protect\label{fig:Sim_CP_AL_MAD_SM1}CP's and AL's of CI's for $\rho\left(\tau\right)$
and MAD's of the MUE of $\rho\left(\tau\right)$}
\end{figure}
\begin{figure}[H]
\vspace{-2.5em}

\begin{centering}
\subfloat[linear 0.90-1.00, time-varying $\mu$ and $\sigma$]{\includegraphics[scale=0.36]{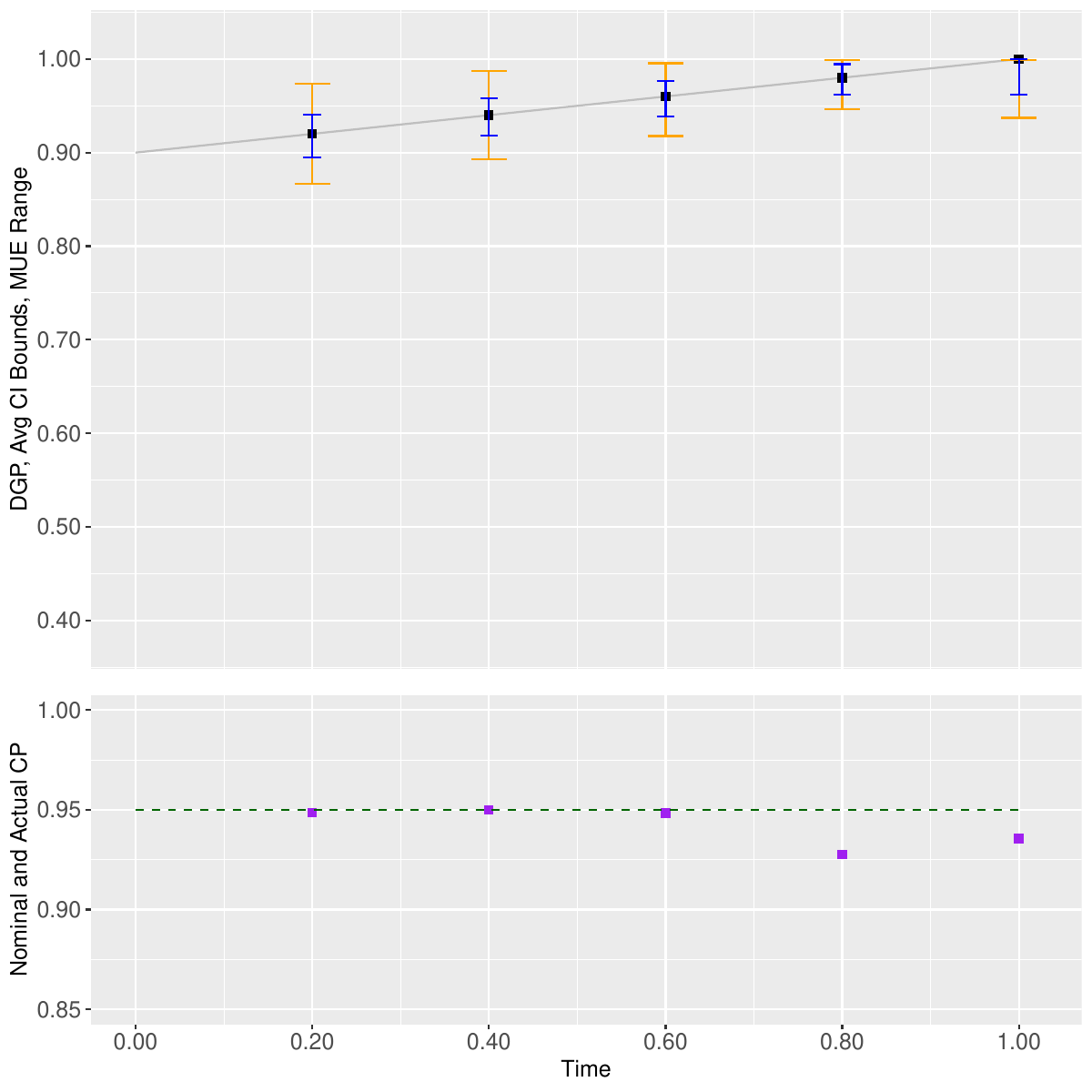}

}\quad{}\subfloat[linear 1.00-0.90, time-varying $\mu$ and $\sigma$]{\includegraphics[scale=0.36]{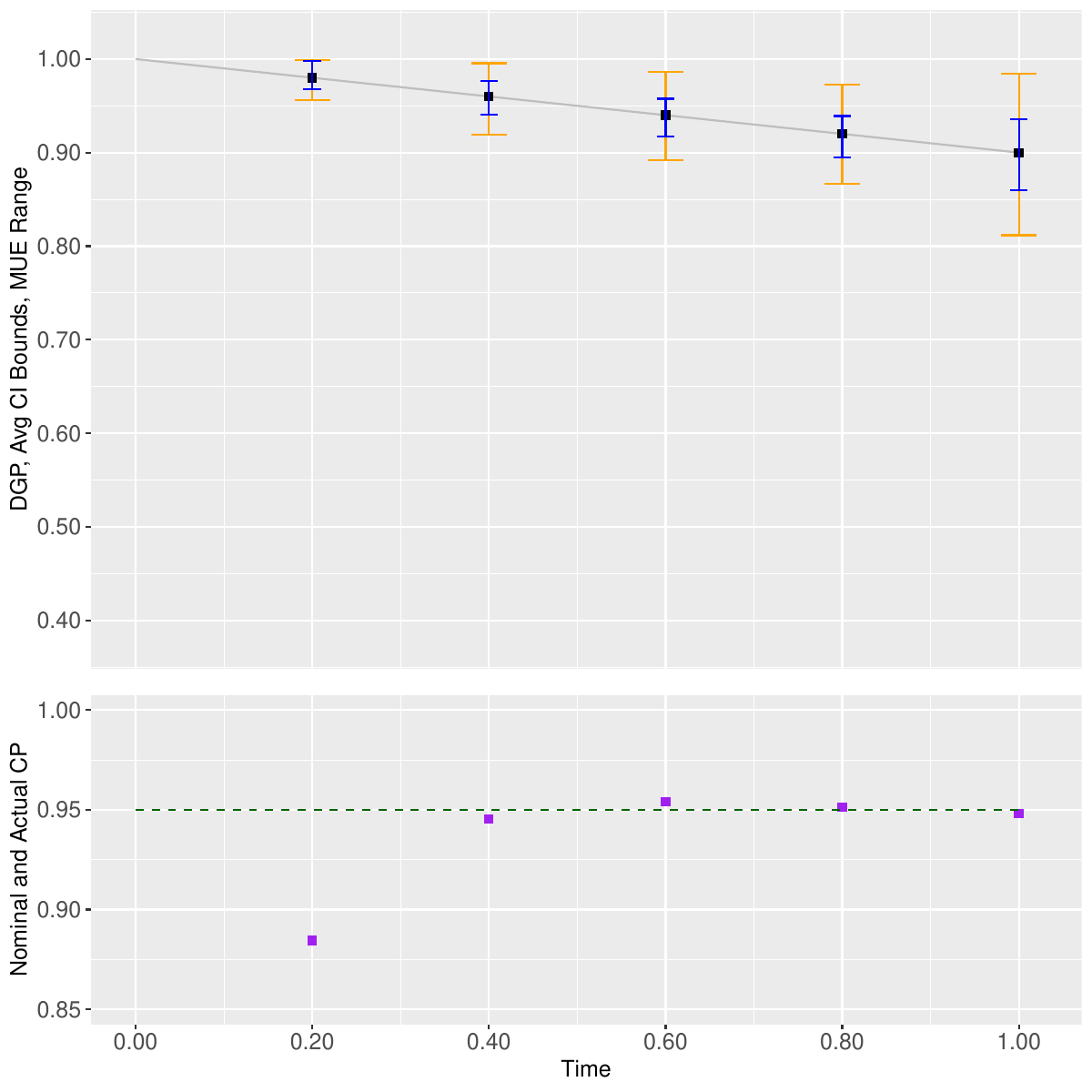}

}
\par\end{centering}
\vspace{-0.65em}

\begin{centering}
\subfloat[linear 0.60-0.90, time-varying $\mu$ and $\sigma$]{\includegraphics[scale=0.36]{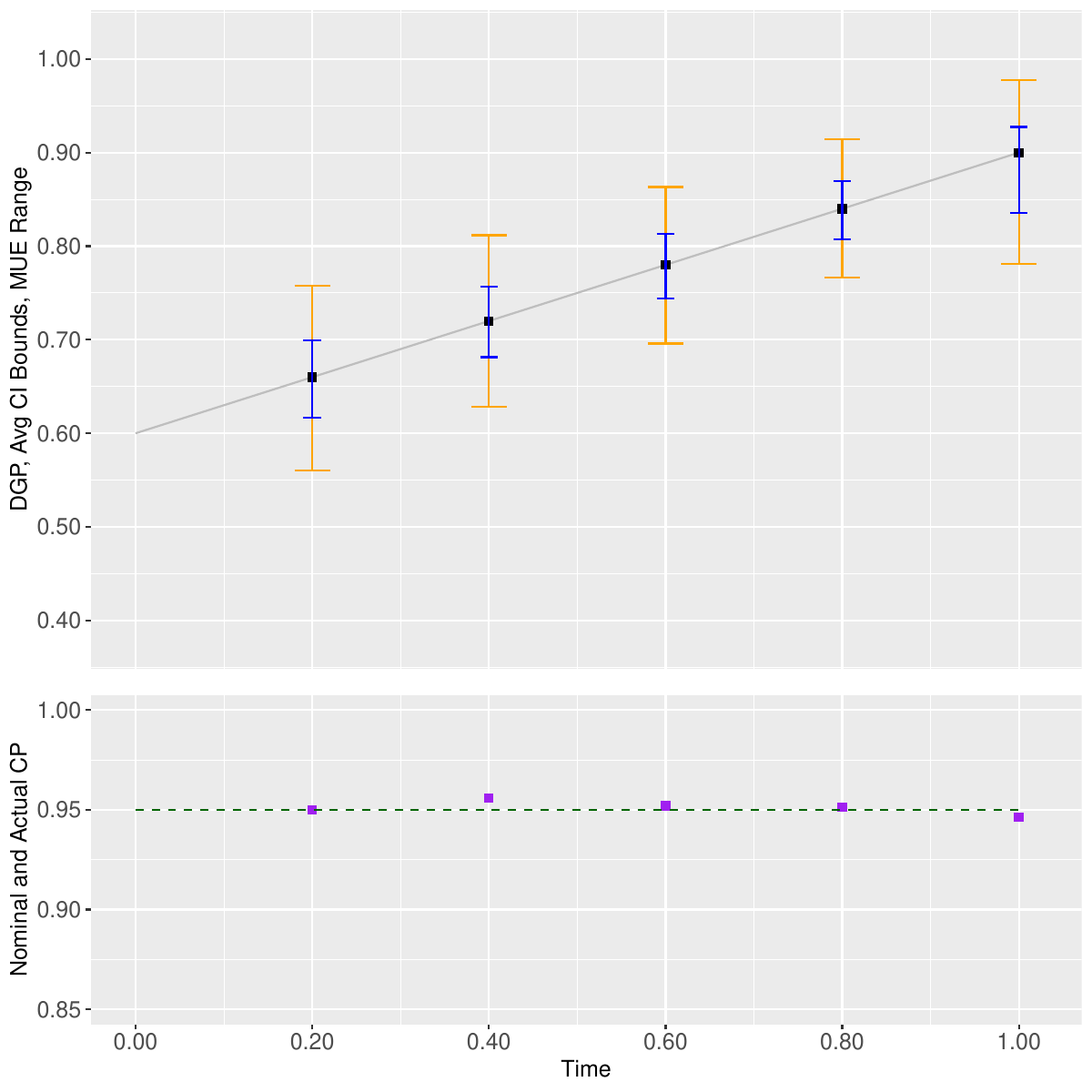}

}\quad{}\subfloat[linear 0.90-0.60, time-varying $\mu$ and $\sigma$]{\includegraphics[scale=0.36]{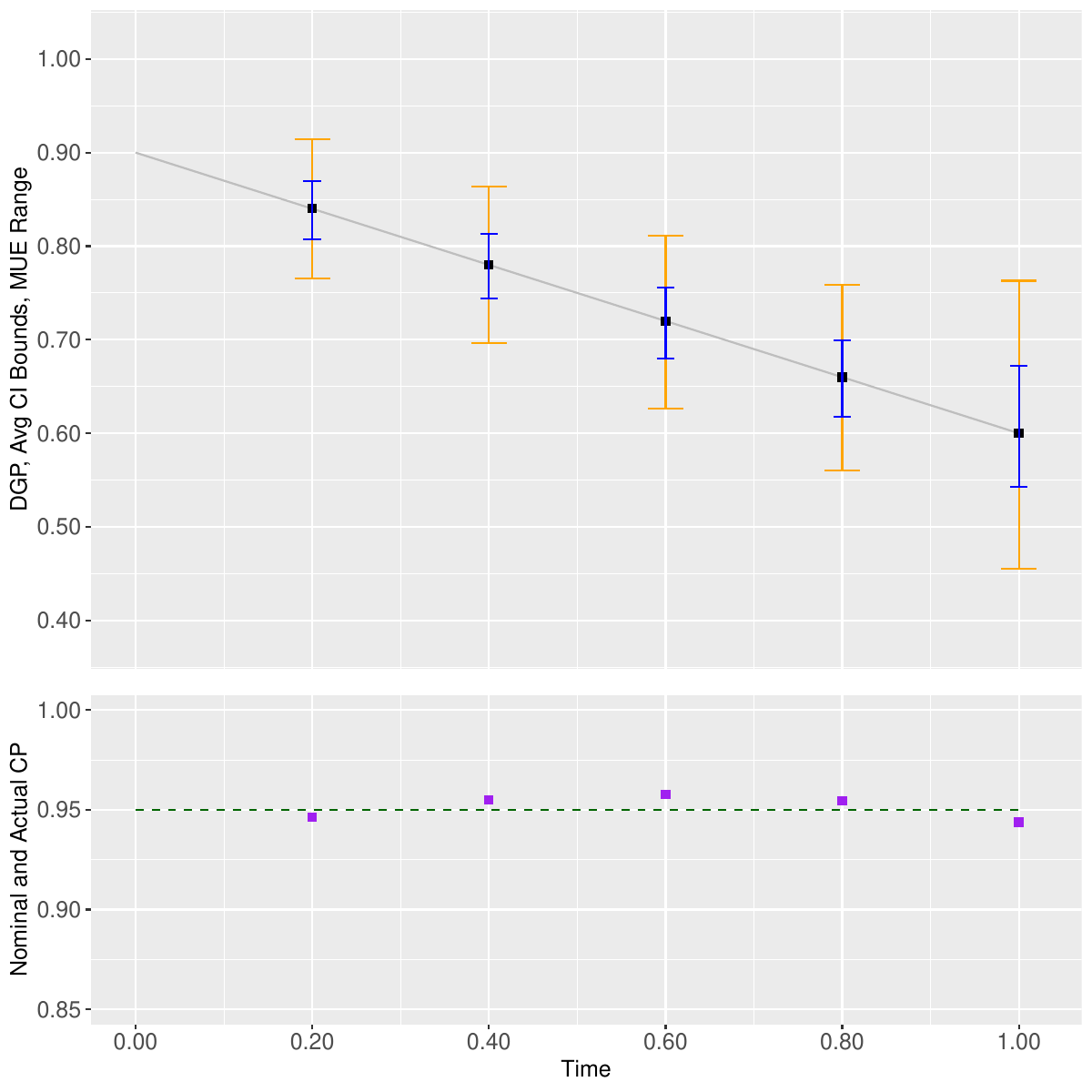}

}
\par\end{centering}
\vspace{-0.65em}

\begin{centering}
\subfloat[flat-lin 0.99-0.90, constant $\mu$ and $\sigma$]{\includegraphics[scale=0.36]{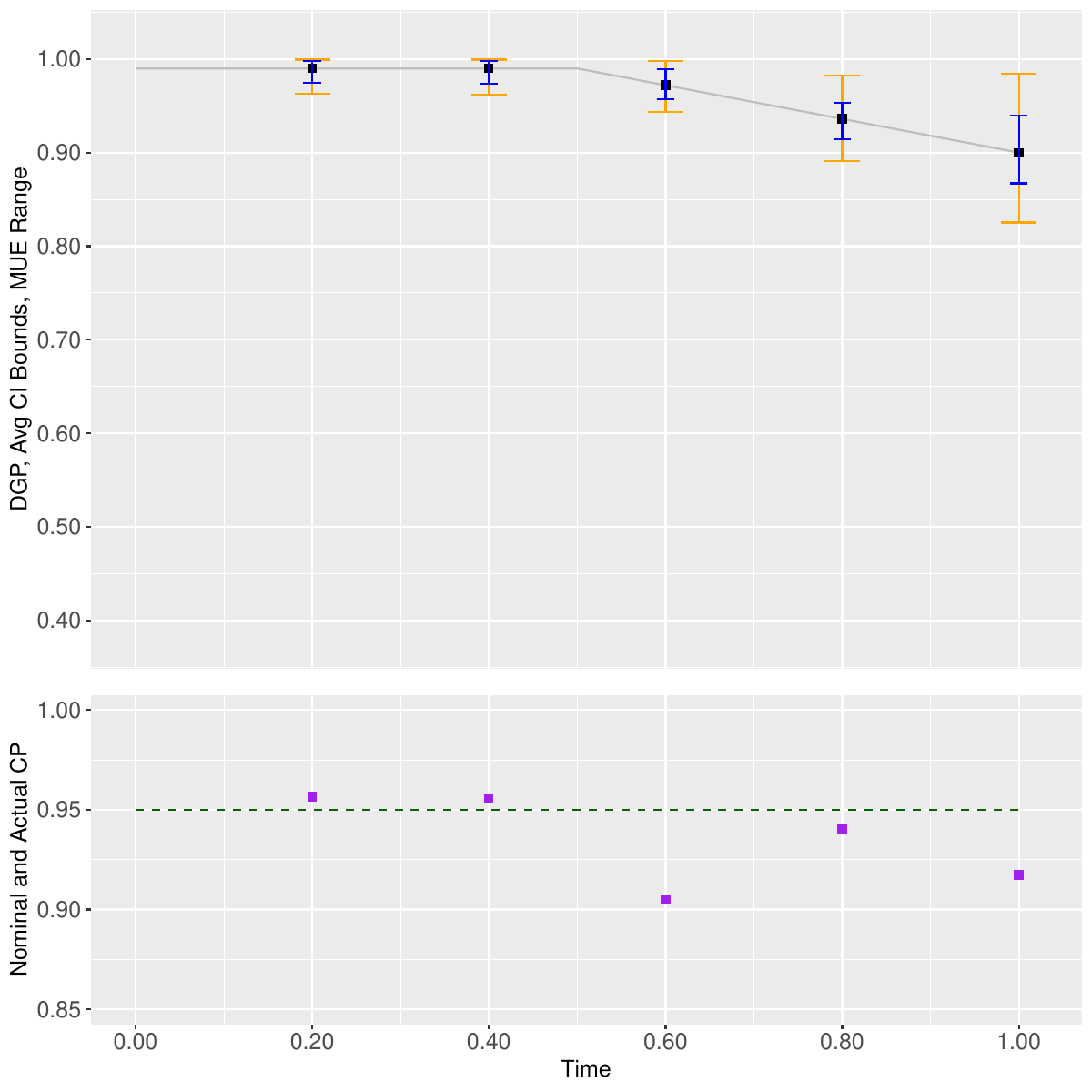}

}\quad{}\subfloat[flat-lin 0.99-0.90, time-varying $\mu$ and $\sigma$]{\includegraphics[scale=0.36]{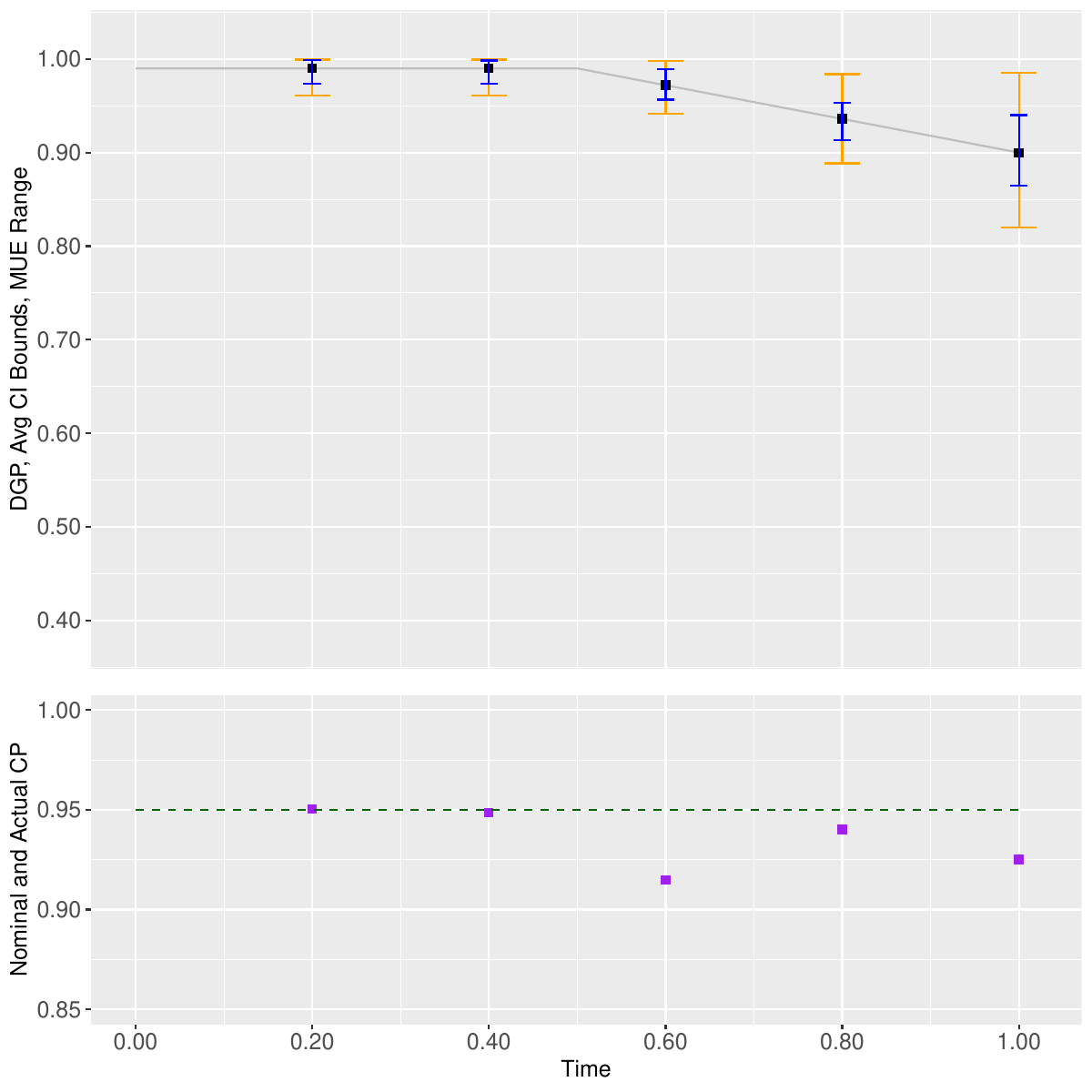}

}
\par\end{centering}
\vspace{-0.75em}

\caption{\protect\label{fig:Sim_CP_AL_MAD_SM2}CP's and AL's of CI's for $\rho\left(\tau\right)$
and MAD's of the MUE of $\rho\left(\tau\right)$}
\end{figure}
\begin{figure}[H]
\vspace{-2.5em}

\begin{centering}
\subfloat[flat-lin 0.99-0.80, constant $\mu$ and $\sigma$]{\includegraphics[scale=0.36]{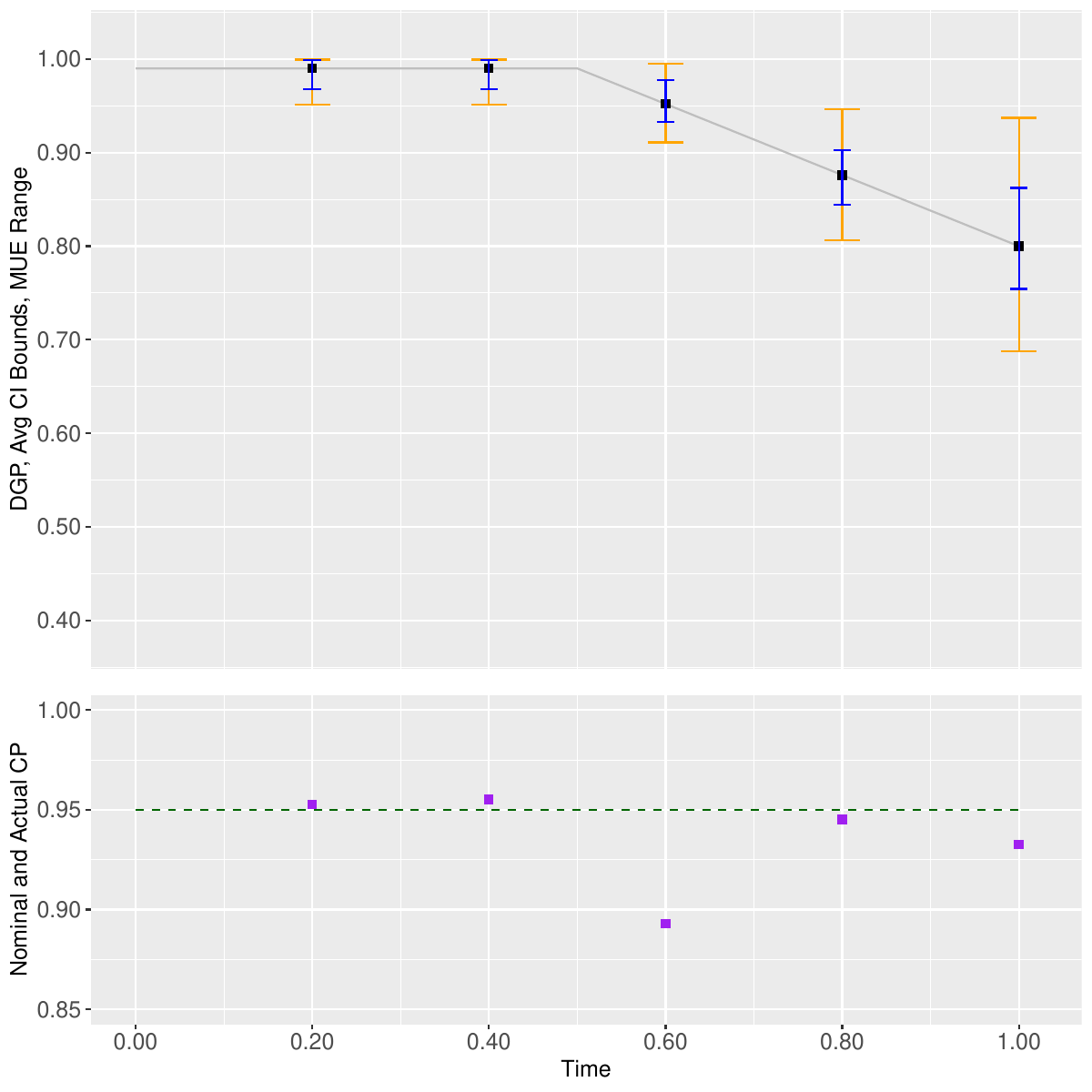}

}\quad{}\subfloat[flat-lin 0.99-0.80, time-varying $\mu$ and $\sigma$]{\includegraphics[scale=0.36]{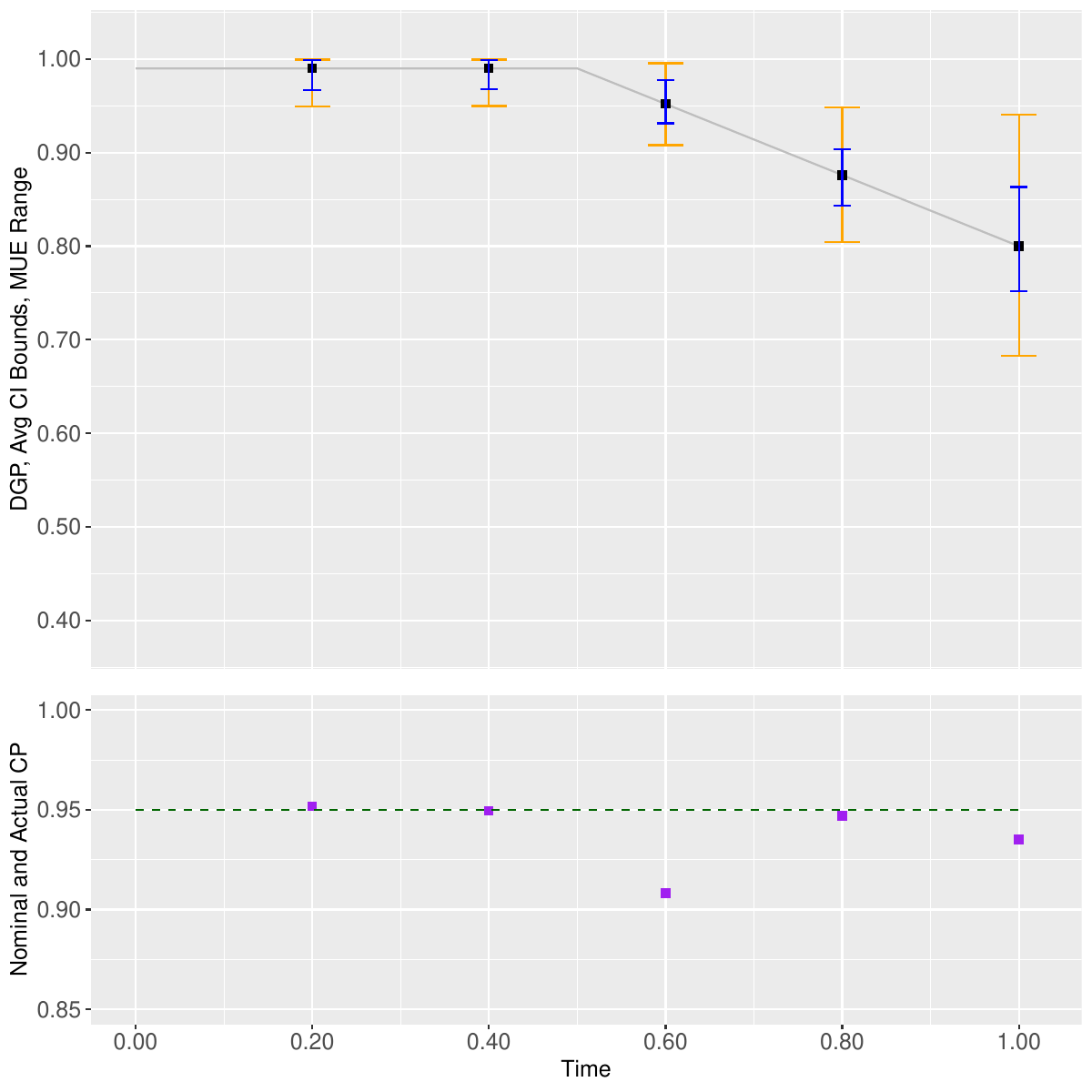}

}
\par\end{centering}
\vspace{-0.65em}

\begin{centering}
\subfloat[flat 0.75, constant $\mu$ and $\sigma$]{\includegraphics[scale=0.36]{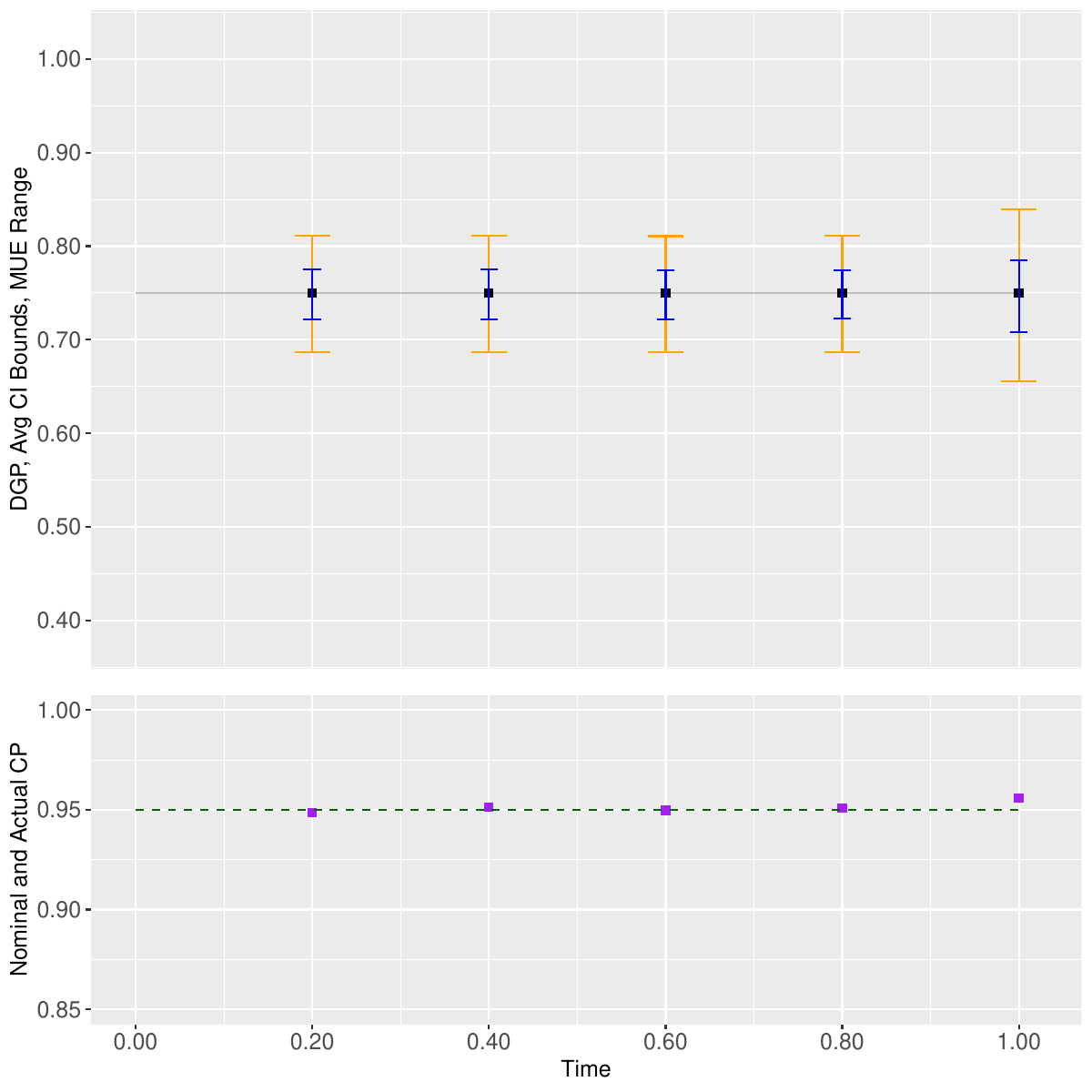}

}\quad{}\subfloat[flat 0.75, time-varying $\mu$ and $\sigma$]{\includegraphics[scale=0.36]{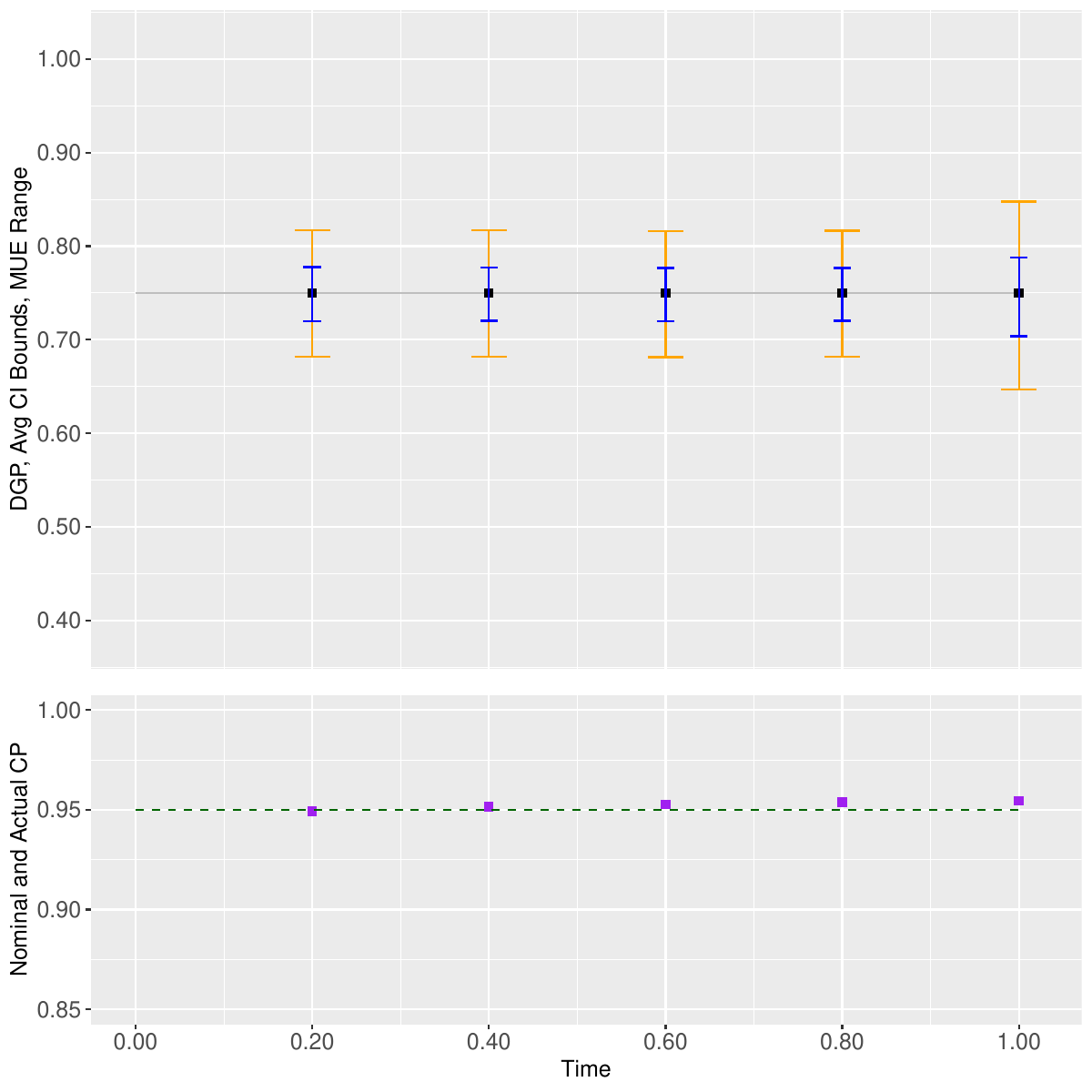}

}
\par\end{centering}
\vspace{-0.65em}

\begin{centering}
\subfloat[kinked 1.00-0.80-1.00, constant $\mu$ and $\sigma$]{\includegraphics[scale=0.36]{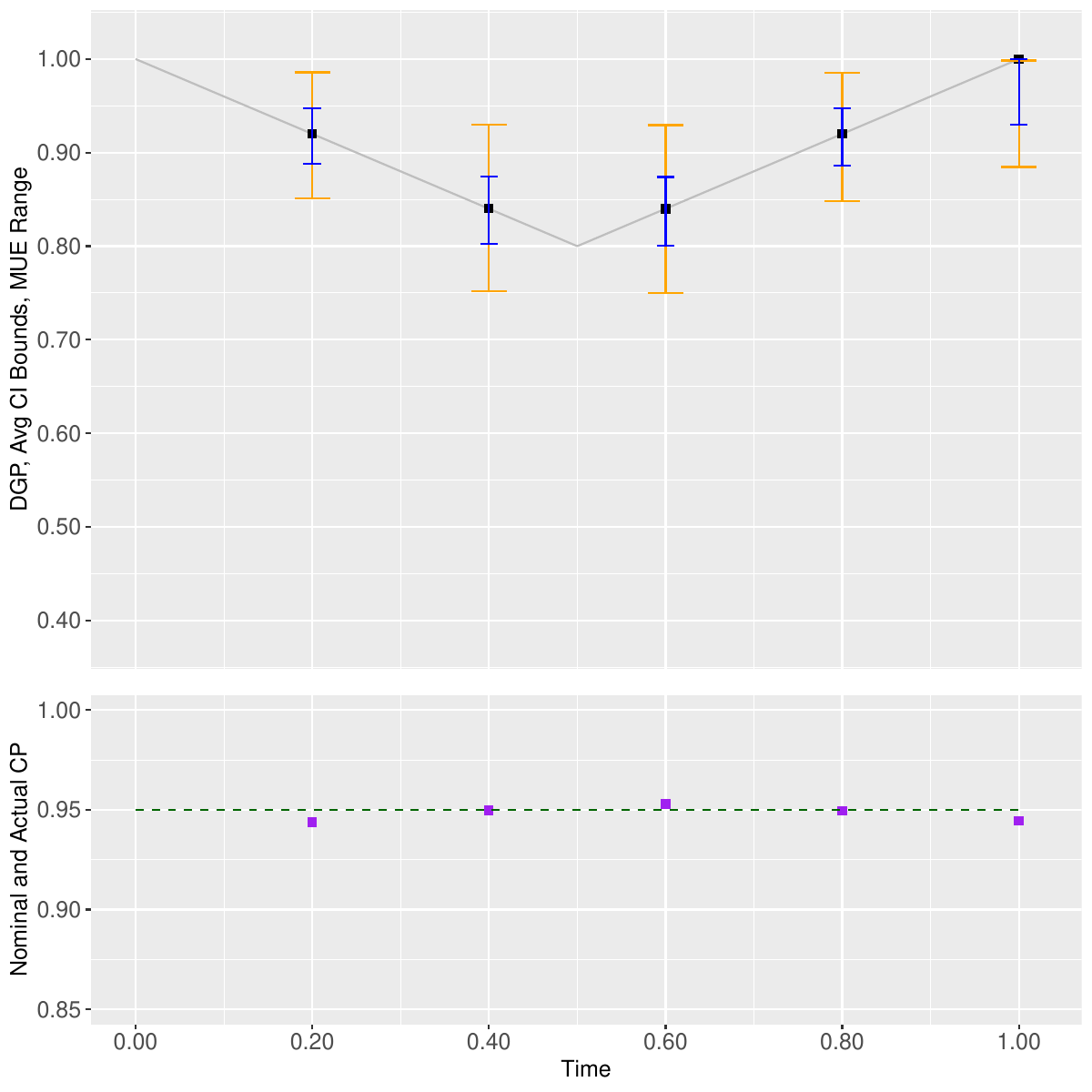}

}\quad{}\subfloat[kinked 1.00-0.80-1.00, time-varying $\mu$ and $\sigma$]{\includegraphics[scale=0.36]{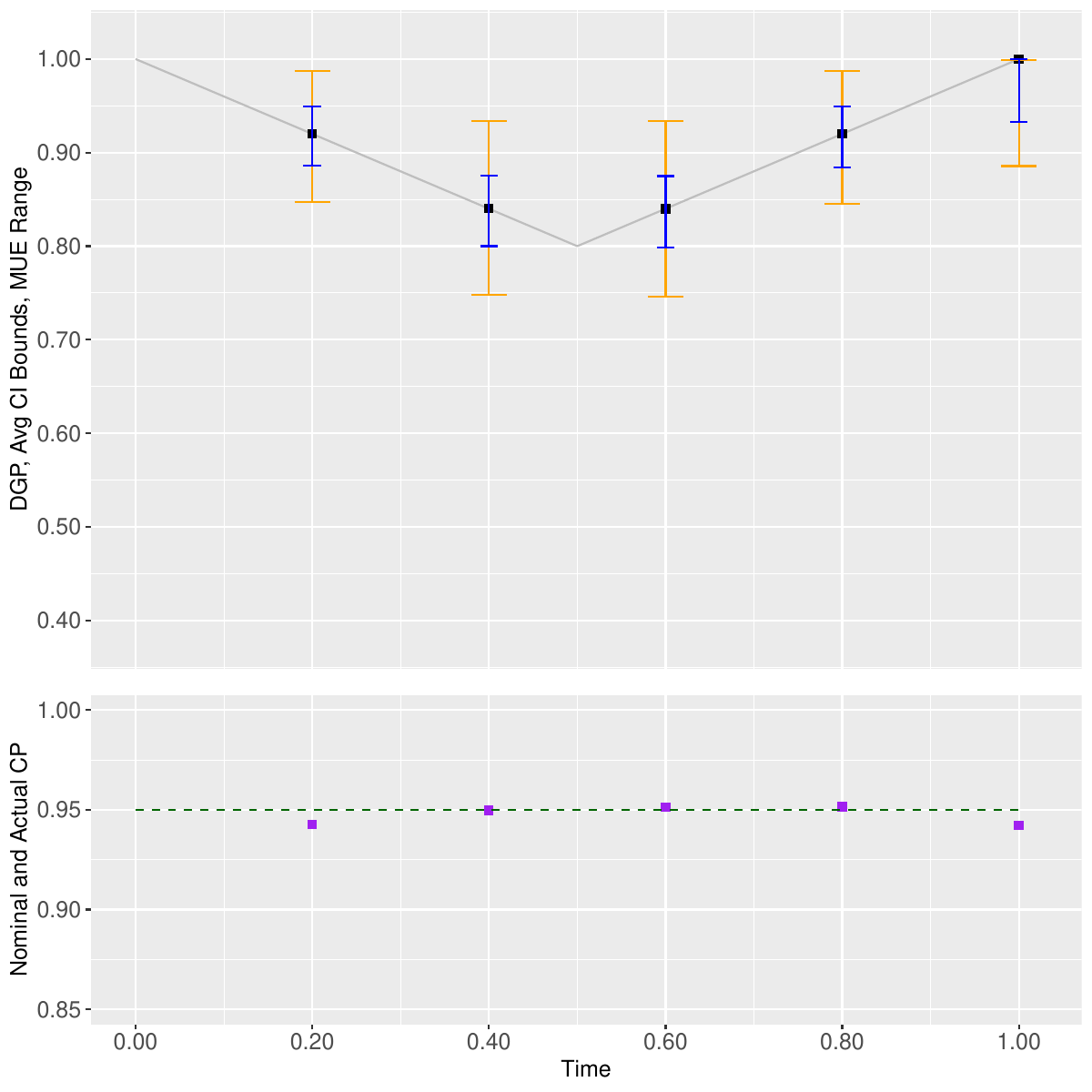}

}
\par\end{centering}
\vspace{-0.75em}

\caption{\protect\label{fig:Sim_CP_AL_MAD_SM3}CP's and AL's of CI's for $\rho\left(\tau\right)$
and MAD's of the MUE of $\rho\left(\tau\right)$}
\end{figure}
\begin{figure}[H]
\vspace{-2.5em}

\begin{centering}
\subfloat[kinked 0.80-1.00-0.80, constant $\mu$ and $\sigma$]{\includegraphics[scale=0.36]{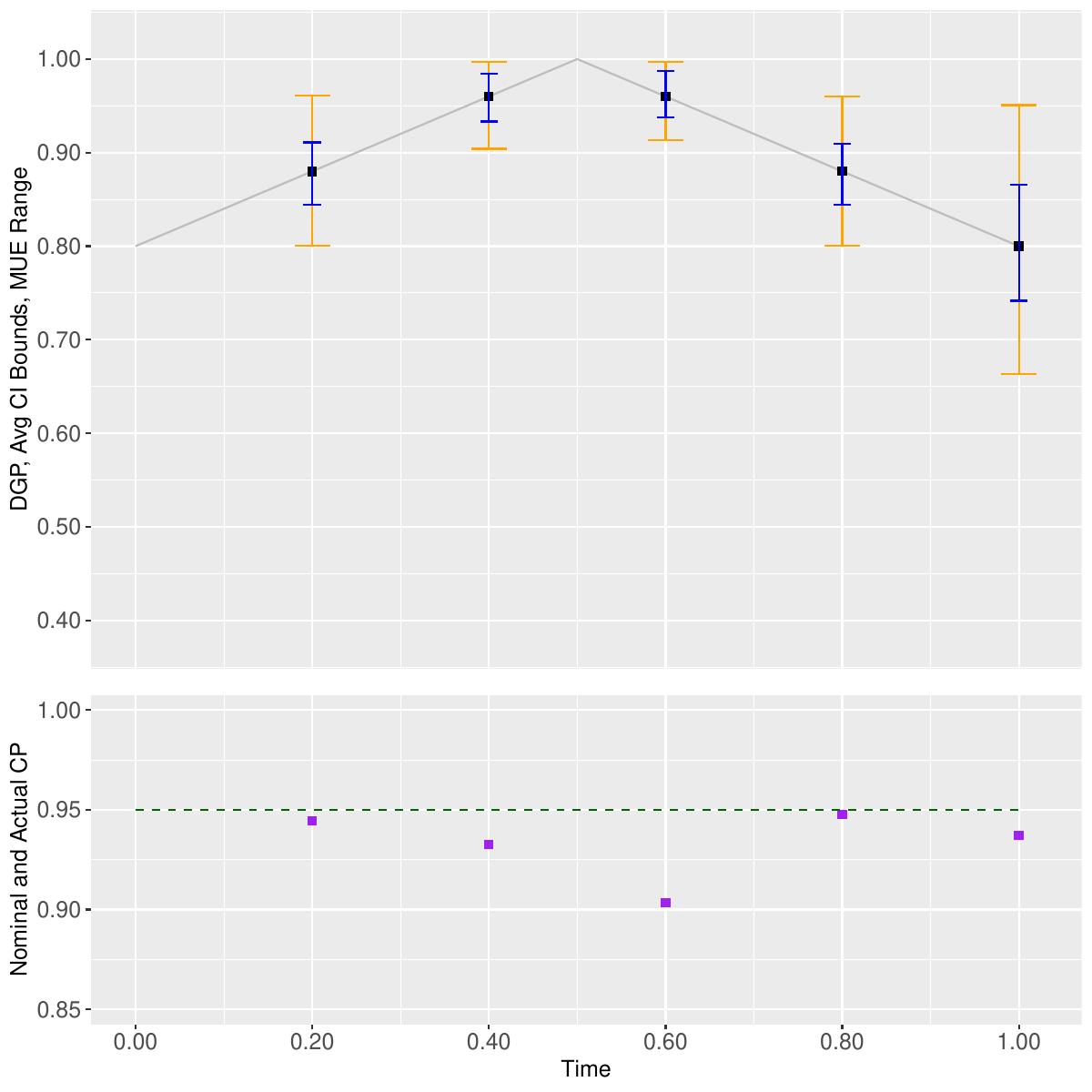}

}\quad{}\subfloat[kinked 0.80-1.00-0.80, time-varying $\mu$ and $\sigma$]{\includegraphics[scale=0.36]{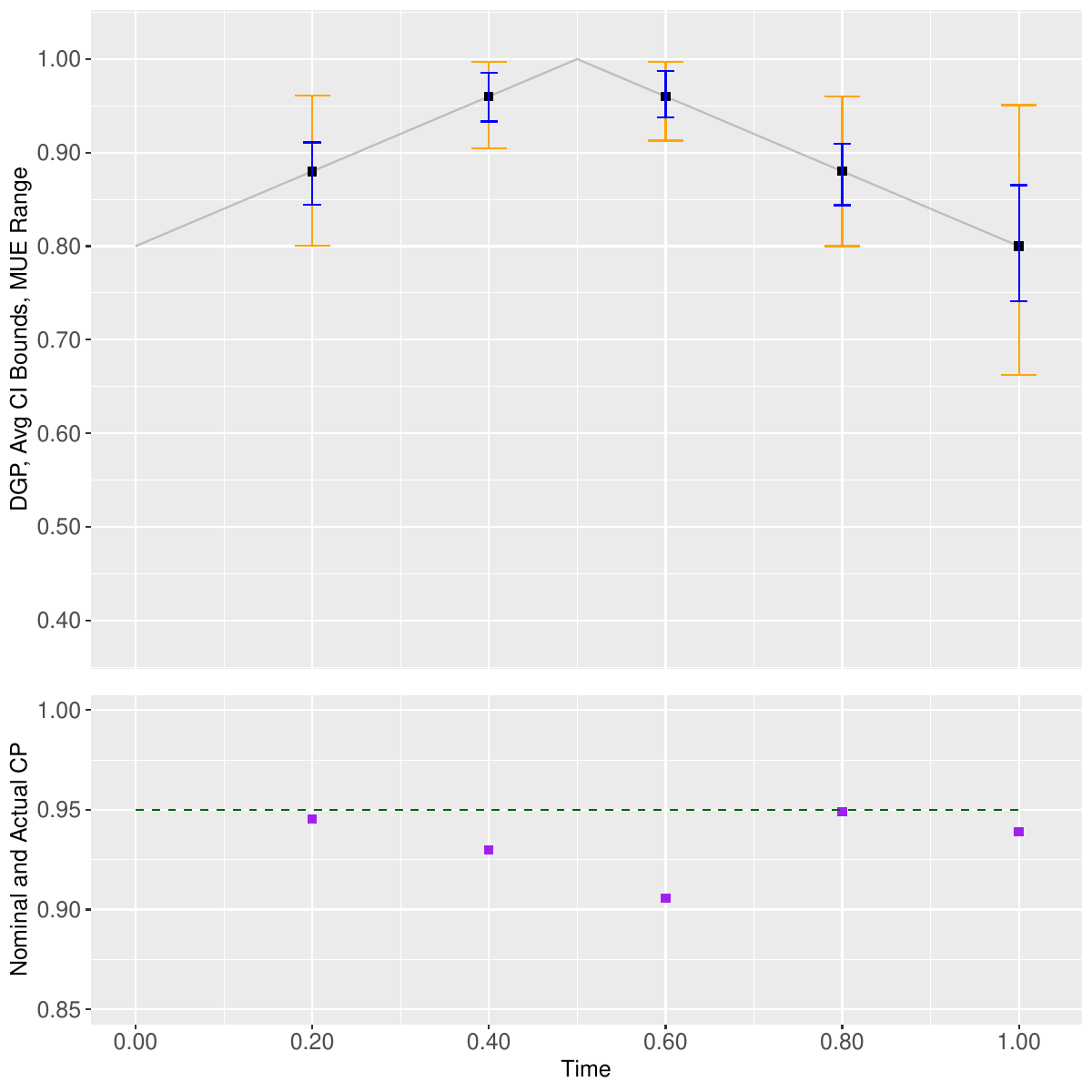}

}
\par\end{centering}
\vspace{-0.65em}

\begin{centering}
\subfloat[kinked 1.00-0.60-1.00, constant $\mu$ and $\sigma$]{\includegraphics[scale=0.36]{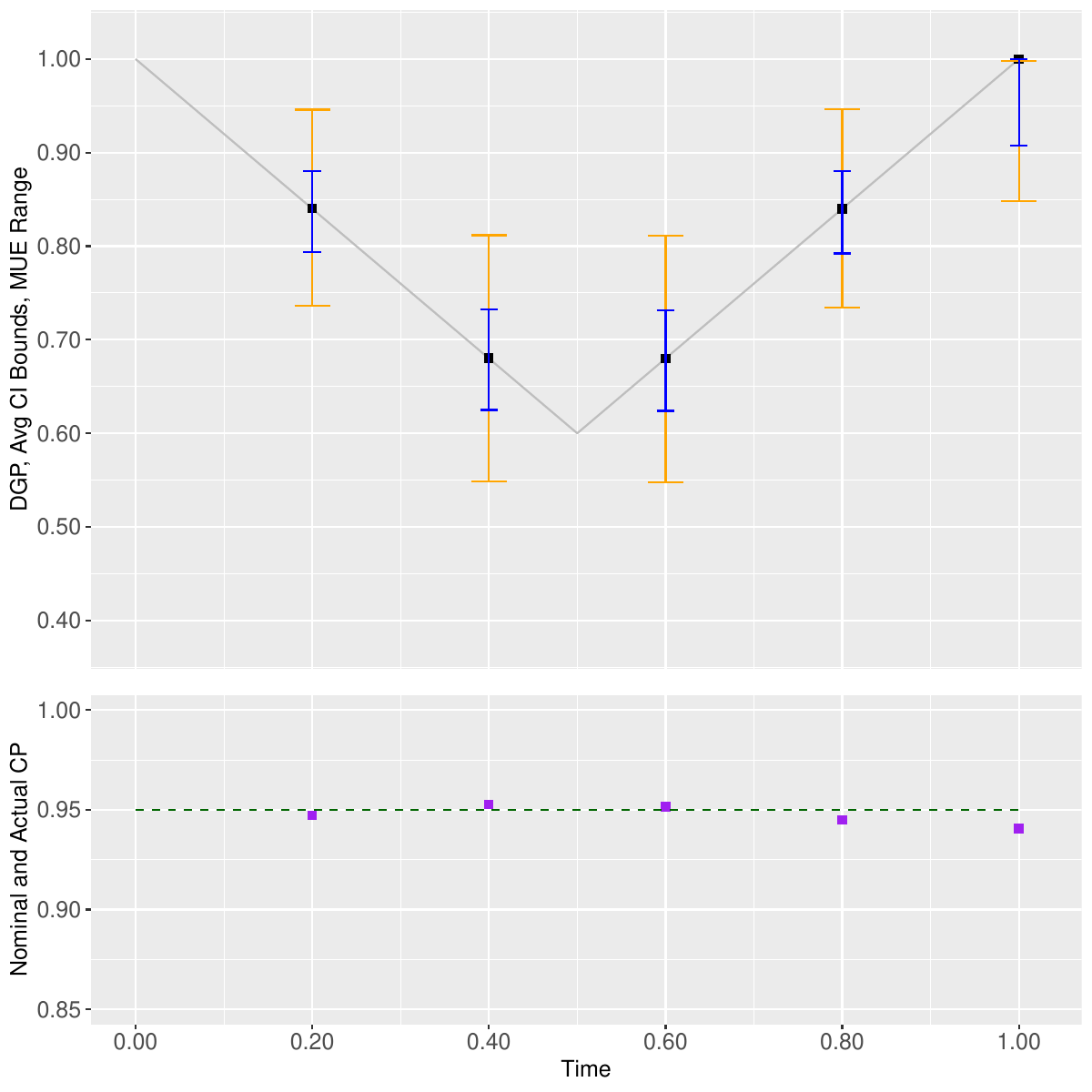}

}\quad{}\subfloat[kinked 1.00-0.60-1.00, time-varying $\mu$ and $\sigma$]{\includegraphics[scale=0.36]{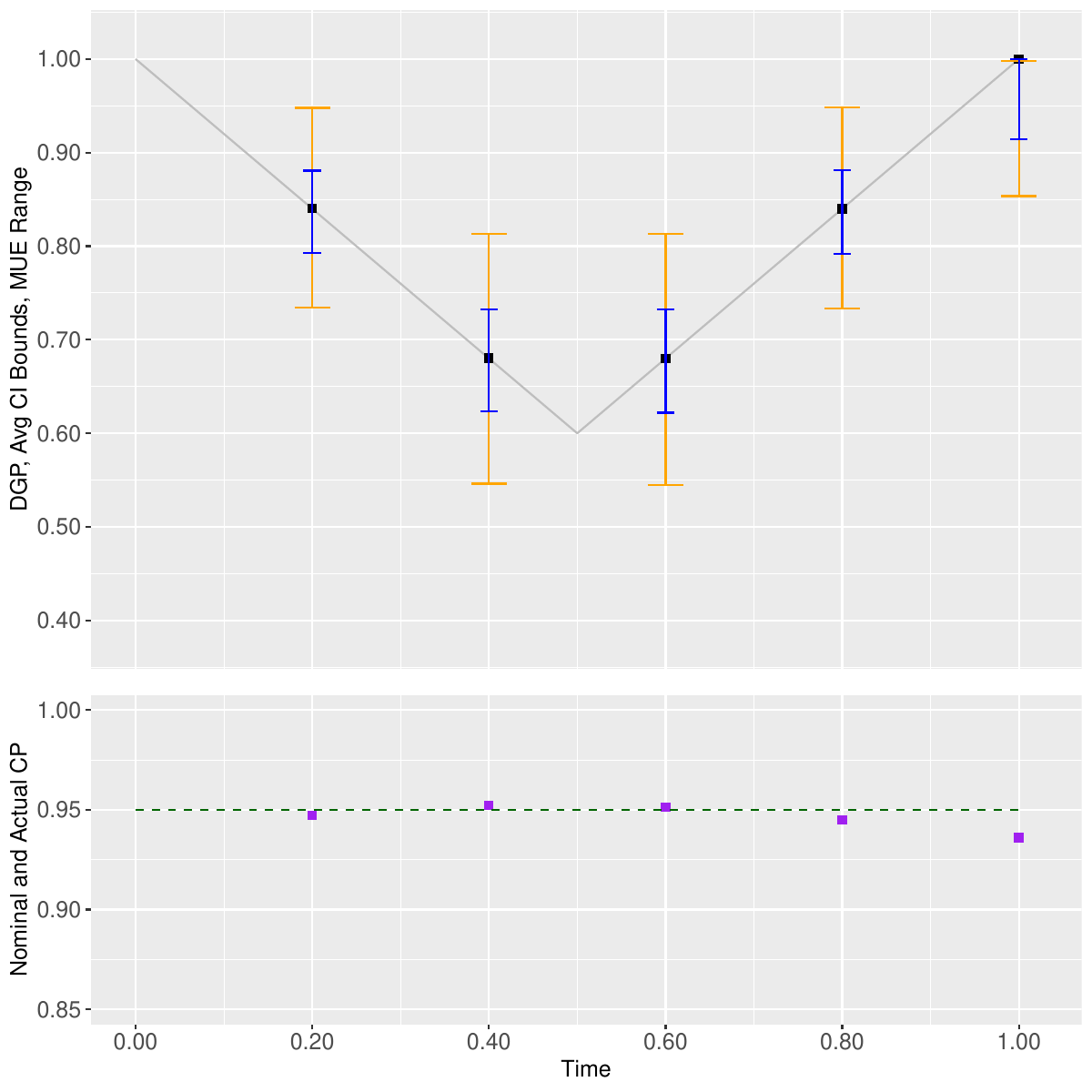}

}
\par\end{centering}
\vspace{-0.65em}

\begin{centering}
\subfloat[kinked 0.60-1.00-0.60, constant $\mu$ and $\sigma$]{\includegraphics[scale=0.36]{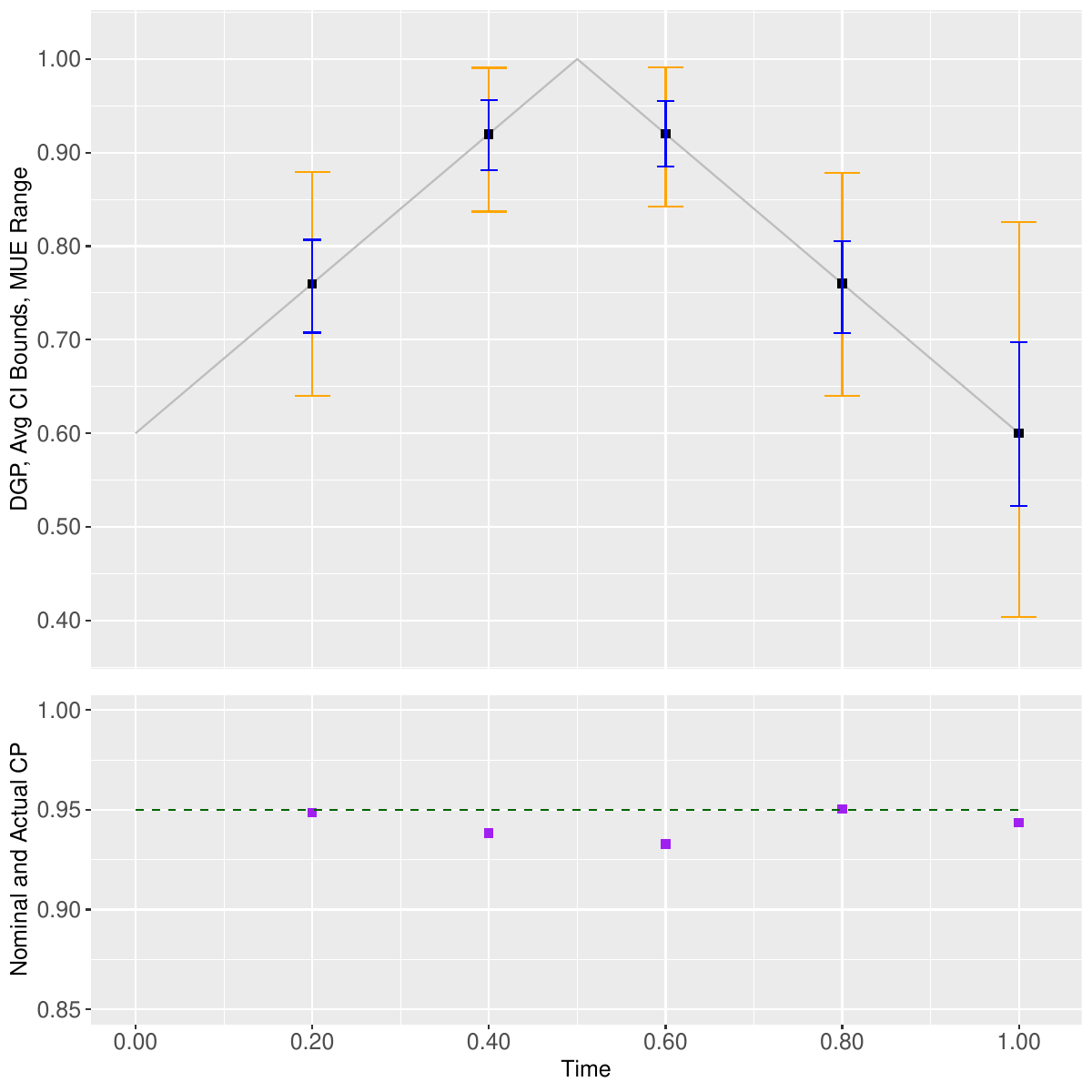}

}\quad{}\subfloat[kinked 0.60-1.00-0.60, time-varying $\mu$ and $\sigma$]{\includegraphics[scale=0.36]{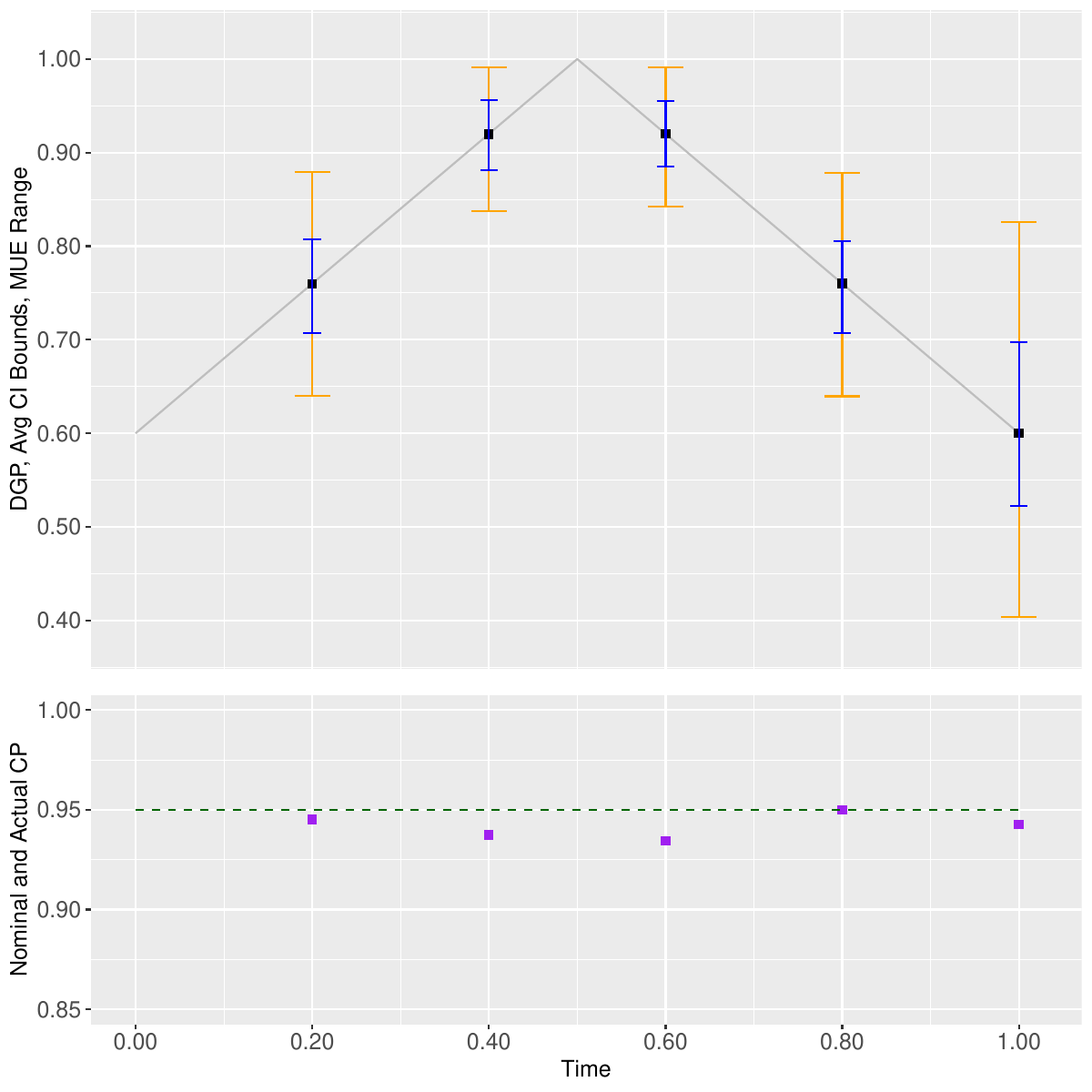}

}
\par\end{centering}
\vspace{-0.75em}

\caption{\protect\label{fig:Sim_CP_AL_MAD_SM4}CP's and AL's of CI's for $\rho\left(\tau\right)$
and MAD's of the MUE of $\rho\left(\tau\right)$}
\end{figure}
\par\end{center}

\section{Extension to TVP-AR(p) Models\protect\label{sec:Extension-to-TVP-AR(p)}}

\numberwithin{equation}{section}Here we discuss how the methods
introduced in the paper for a TVP-AR(1) model can be extended to a
TVP-AR(p) model. We combine the approach discussed above with a method
for constant parameter AR(p) models that is similar to, but somewhat
different from, methods that have been considered in the literature
to date. It is similar to Hansen's \citeyearpar{hansen1999grid} grid
bootstrap, also see \citet[Sec. 7]{mikusheva2007uniform}, but uses
asymptotic critical values, rather than bootstrap critical values,
which eases computation considerably in the time-varying case because
the tabulated quantiles for the AR(1) case can be utilized (with an
adjustment of the $\psi_{nh}$ value that is employed). Asymptotic
results for $p>1$ are beyond the scope of this paper and are not
provided here.

Consider the following TVP-AR(p) model written in augmented Dickey-Fuller
(ADF) form: 
\begin{align}
Y_{t} & =\mu_{t}+Y_{t}^{\ast}\ \text{and}\nonumber \\
Y_{t}^{\ast} & =\rho_{t}Y_{t-1}^{\ast}+\sum_{j=1}^{p-1}\beta_{jt}\Delta Y_{t-j}^{\ast}+\sigma_{t}U_{t},\ \text{for\ }t=1,...,n,
\end{align}
where $\Delta Y_{t-j}^{\ast}=Y_{t-j}^{\ast}-Y_{t-j-1}^{\ast}$ for
$j=1,...,p-1.$ Here, $\mu_{t},$ $\rho_{t},$ $\sigma_{t},$ and
$U_{t}$ are as in Section \ref{sec:MT-Model=000020Setup}. The coefficients
$\beta_{jt}$ are possibly time varying and satisfy analogous properties
to those of $\mu_{t}.$ The parameter $\rho_{t}$ is the sum of the
$p$ AR coefficients. It is the parameter of interest because it is
a suitable measure of the persistence of the time series, see \citet[Sec. 2.2]{andrewschen1994approximately}
for a discussion. As in Section \ref{sec:MT-Model=000020Setup}, $\rho_{t}:=\rho\left(t/n\right)$
and, for $\tau\in\left(0,1\right),$ we consider estimation and inference
concerning $\rho\left(\tau\right).$

To construct a CI for $\rho\left(\tau\right)$ in the AR(p) model,
we proceed as follows. First, consider the regression of $Y_{t}$
on a constant, $Y_{t-1},$ $\Delta Y_{t-1},...,\Delta Y_{t-p+1}$
for $t=T_{1},...,T_{2},$ where $\Delta Y_{s}:=Y_{s}-Y_{s-1}.$ For
arbitrary $\rho_{0}\in(-1,1],$ let $T_{n}\left(\rho_{0},p\right)$
be the t-statistic for testing the null hypothesis that the coefficient
on the regressor $Y_{t-1}$ in this regression equals $\rho_{0}.$
Second, compute $\widehat{\beta}(\rho_{0})\in R^{p-1}$ from the regression
of $Y_{t}-\rho_{0}Y_{t-1}$ on a constant, $\Delta Y_{t-1},...,\Delta Y_{t-p+1}$
for $t=T_{1},...,T_{2}.$ Third, one computes 
\begin{equation}
\psi_{nh,\rho_{0}}^{p}:=\frac{-nh\ln(\rho_{0})}{\widehat{\lambda}(\rho_{0})}\ \text{for\ }\rho_{0}>0\text{\ and \ensuremath{\psi_{nh,\rho_{0}}^{p}}:= \ensuremath{\infty\ \ \text{for\ }} \ensuremath{\rho_{0}\leq}0, where }\widehat{\lambda}(\rho_{0}):=1-\sum_{j=1}^{p-1}\widehat{\beta}_{j}(\rho_{0}).
\end{equation}
A nominal $1-\alpha$ equal-tailed two-sided CI for $\rho(\tau)$
is given by the formula in (\ref{eq:MT-1-sided=000020CIs}) with $T_{n}\left(\rho_{0},p\right)$
in place of $T_{n}\left(\rho_{0}\right)$ and with $\psi_{nh,\rho_{0}}^{p}$
in place of $\psi_{nh,\rho_{0}}$ in the critical values. A median-unbiased
interval estimator $\widetilde{\rho}_{n\tau}$ of $\rho\left(\tau\right)$
is defined as in Section \ref{subsec:MT-Median-Unbiased-Interval-Est}
with the same changes. For the motivation behind the definition of
$\psi_{nh}^{p}$ above, see \citet{hansen1995rethinking,hansen1999grid}.

Note that the only computational difference between the above CI's
for $\rho$ in the TVP-AR(1) and TVP-AR(p) models is that the latter
requires the computation of $\widehat{\beta}(\rho_{0})$ for a fine
grid of $\rho_{0}$ values and each value $\tau$ of interest. In
contrast, if one replaces the critical values $c_{\psi_{nh}^{p}}(\alpha/2)$
and $c_{\psi_{nh}^{p}}(1-\alpha/2)$ by the $\alpha/2$ and $1-\alpha/2$
quantiles of a bootstrap test statistic, e.g., as in Hansen's \citeyearpar{hansen1999grid}
grid bootstrap, then one needs to simulate these quantiles for a fine
grid of $\rho_{0}$ values for each $\tau$ of interest, which is
computationally quite expensive for reasonable choices of the number
of simulation repetitions.

Some empirical applications with $p=6$ and $12$ are reported in
Section \ref{sec:addl-Empirical-Results} below.

\section{Additional Empirical Results\protect\label{sec:addl-Empirical-Results}}

In this section, we present results for some additional time series
in the IFS dataset and some in the FRED dataset. Some of these series
require a TVP-AR(p) model for $p>1$.

As noted in the Introduction, and described in Section \ref{sec:Extension-to-TVP-AR(p)}
of the Supplemental Material, the methods introduced above for the
TVP-AR(1) model can be extended to TVP-AR(p) models with $p>1$. In
the TVP-AR(p) model, the parameter $\rho_{t}$ is the sum of the autoregressive
coefficients at time $t$, or equivalently, the coefficient at time
$t$ on the lagged $Y_{t}$ value in the augmented Dickey-Fuller representation
of the model. This coefficient is a suitable measure of the persistence
of the time series at time $t$, e.g., see \citet[Sec. 2.2]{andrewschen1994approximately}.

For each time series, we estimate a TVP-AR(p) model with $p=1,6,12$
and examine the degree of autocorrelation of the corresponding residuals
by computing Ljung-Box tests with six lags of the residuals. For each
time series, we present the results from the TVP-AR(p) model with
the smallest value of $p$ for which the null hypothesis of no autocorrelation
is not rejected at the 5\% level. When $p\in\left\{ 6,12\right\} $,
the MUE's and CI's are for the time-varying autoregressive parameter
corresponding to the lagged dependent variable in ADF form. We group
the results based on the selected $p$ in the figures.

$\ $

First, we consider additional time series from the IFS dataset, which
include real exchange rate series for Norway, Canada, and Japan, interest
rate series for Australia, Canada, and the US, and inflation series
for Switzerland. The definition of real exchange rates and inflation
are the same as described in Section \ref{subsec:MT-EMP_Inflation}
and \ref{subsec:MT-EMP_Real-Exchange-Rate}, respectively. For the
interest rate series, we use the monthly interbank interest rate,
which is a key monetary tool for central banks to achieve their policy
goals. More details about the length, time period, and frequency of
the additional IFS series can be found in Tables \ref{tab:ACF_Test_AR1}--\ref{tab:ACF_Test_AR12}. 

Figure \ref{fig:EMP_AR6_SM1} presents the MUE's and 90\% CI's of
$\rho\left(t\right)$ for the additional real exchange rate series.
We fit a TVP-AR(1) model for Norway and TVP-AR(6) model for Canada
and Japan based on the Ljung-Box tests results. Across all three countries,
the MUE's are close to one with reasonably tight 90\% CI's. The selected
$n\widehat{h}_{us}$ values are quite large, consistent with the parameter
estimates that show little time variation. The results echo the empirical
findings of high real exchange rate persistence for the developed
countries presented in Section \ref{subsec:MT-EMP_Real-Exchange-Rate}. 

Figure \ref{fig:EMP_AR6_SM2} shows that the MUE's of $\rho\left(t\right)$
for the interest rate series from a TVP-AR(6) model are close to one
with a moderate degree of variation over time. Most notable are the
estimates of $\rho\left(t\right)$ for Canada and the US during the
period around 2012 when the MUE's drop to as low as .7. The 90\% CI's
are fairly tight. In comparison, the constant parameter MUE's from
an AR(6) model are uniformly one for the three series. 

Figure \ref{fig:EMP_AR12_SM}(a)--(b) summarizes the results for
estimating a TVP-AR(12) model for the Switzerland inflation series.
The MUE's of $\rho\left(t\right)$ are quite volatile over time, ranging
between -.6 and 1 in Figure \ref{fig:EMP_AR12_SM}(a). This is different
from the constant parameter estimate which is close to .9, highlighting
the importance of allowing for possible time variation in the autoregressive
parameters in these models.

$\ $

Second, we consider the FRED series. We have a total of eight time
series for the US, including the 10 year bond yield, average wages
for the manufacturing sector, industrial production, real GDP per
capita, real GNP, real GNP per capita, S\&P 500 index, and the unemployment
rate. We provide details about the length, time period, and frequency
of the FRED series in Tables \ref{tab:ACF_Test_AR6}--\ref{tab:ACF_Test_AR12}.

Figures \ref{fig:EMP_AR6_SM3}--\ref{fig:EMP_AR6_SM4} show the results
from estimating a TVP-AR(6) model for the US FRED series for which
the null hypothesis of no autocorrelation is not rejected at the 5\%
level. In Figure \ref{fig:EMP_AR6_SM3}(e) and (f), there are some
variation in the MUE's of $\rho\left(t\right)$ for the US unemployment
rate series, however the magnitude is small. For all other series
in Figures \ref{fig:EMP_AR6_SM3}--\ref{fig:EMP_AR6_SM4}, the MUE's
of $\rho\left(t\right)$ are uniformly one or very close to one over
time and almost the same as constant parameter estimates. All of the
90\% CI's are tight with a length smaller than .02. The selected $n\widehat{h}_{us}$
values are large, in line with the parameter estimates that show little
time variation. Hence, the methods proposed in the paper deliver a
constant parameter unit root, or near unit root, model in circumstances
in which such a model is appropriate.

Figure \ref{fig:EMP_AR12_SM}(c)--(f) provides the results for fitting
a TVP-AR(12) model to the time series on the S\&P 500 index and the
US industrial production. For the S\&P 500 index series, the MUE's
of $\rho\left(t\right)$ are one and similar to the constant parameter
estimates. The results are consistent with predictions from a random
walk hypothesis for the US stock markets. For the US industrial production
series, the MUE's of $\rho\left(t\right)$ in Figure \ref{fig:EMP_AR12_SM}(e)
are close to 1 for most of the time except before 1925 and after 2010.
It may be caused by a boundary effect.

\global\long\def\thetable{SM.\arabic{table}}%
 
\begin{table}[H]
\begin{centering}
\caption{\protect\label{tab:ACF_Test_AR1}Autocorrelation Test Results for
Residuals from Estimating TVP-AR(1) Model}
\par\end{centering}
\centering{}{\footnotesize{}}%
\begin{tabular}{lccccc>{\centering}m{2.5cm}}
\toprule 
 & {\footnotesize{}From} & {\footnotesize{}To} & {\footnotesize{}Frequency} & {\footnotesize{}$n$} & {\footnotesize{}$n\widehat{h}_{us}$} & {\footnotesize{}Ljung-Box Test}{\footnotesize\par}

{\footnotesize{}p-Value}\tabularnewline
\midrule 
{\footnotesize{}US Inflation} & {\footnotesize{}1/2/1955} & {\footnotesize{}1/10/2022} & {\footnotesize{}monthly} & {\footnotesize{}813} & {\footnotesize 125} & {\footnotesize .27}\tabularnewline
{\footnotesize{}US Inflation} & {\footnotesize{}1/2/1955} & {\footnotesize{}1/10/2022} & {\footnotesize{}monthly} & {\footnotesize{}813} & {\footnotesize 188} & {\footnotesize .28}\tabularnewline
{\footnotesize{}Canada Inflation} & {\footnotesize{}1/2/1955} & {\footnotesize{}1/10/2022} & {\footnotesize{}monthly} & {\footnotesize{}813} & {\footnotesize 125} & {\footnotesize .38}\tabularnewline
{\footnotesize{}Canada Inflation} & {\footnotesize{}1/2/1955} & {\footnotesize{}1/10/2022} & {\footnotesize{}monthly} & {\footnotesize{}813} & {\footnotesize 188} & {\footnotesize .22}\tabularnewline
{\footnotesize{}Germany Inflation} & {\footnotesize{}1/2/1955} & {\footnotesize{}1/10/2022} & {\footnotesize{}monthly} & {\footnotesize{}813} & {\footnotesize 125} & {\footnotesize .76}\tabularnewline
{\footnotesize{}Germany Inflation} & {\footnotesize{}1/2/1955} & {\footnotesize{}1/10/2022} & {\footnotesize{}monthly} & {\footnotesize{}813} & {\footnotesize 188} & {\footnotesize .81}\tabularnewline
{\footnotesize{}UK Real Exchange Rate} & {\footnotesize{}1/1/1957} & {\footnotesize{}1/8/2022} & {\footnotesize{}monthly} & {\footnotesize{}788} & {\footnotesize 823} & {\footnotesize .52}\tabularnewline
{\footnotesize{}UK Real Exchange Rate} & {\footnotesize{}1/1/1957} & {\footnotesize{}1/8/2022} & {\footnotesize{}monthly} & {\footnotesize{}788} & {\footnotesize 1,234} & {\footnotesize .51}\tabularnewline
{\footnotesize{}Sweden Real Exchange Rate} & {\footnotesize{}1/1/1957} & {\footnotesize{}1/8/2022} & {\footnotesize{}monthly} & {\footnotesize{}788} & {\footnotesize 823} & {\footnotesize .28}\tabularnewline
{\footnotesize{}Sweden Real Exchange Rate} & {\footnotesize{}1/1/1957} & {\footnotesize{}1/8/2022} & {\footnotesize{}monthly} & {\footnotesize{}788} & {\footnotesize 1,234} & {\footnotesize .28}\tabularnewline
{\footnotesize{}Switzerland Real Exchange Rate} & {\footnotesize{}1/1/1957} & {\footnotesize{}1/8/2022} & {\footnotesize{}monthly} & {\footnotesize{}788} & {\footnotesize 393} & {\footnotesize .56}\tabularnewline
{\footnotesize{}Switzerland Real Exchange Rate} & {\footnotesize{}1/1/1957} & {\footnotesize{}1/8/2022} & {\footnotesize{}monthly} & {\footnotesize{}788} & {\footnotesize 590} & {\footnotesize .54}\tabularnewline
{\footnotesize{}Norway Real Exchange Rate} & {\footnotesize{}1/1/1957} & {\footnotesize{}1/8/2022} & {\footnotesize{}monthly} & {\footnotesize{}788} & {\footnotesize 823} & {\footnotesize .37}\tabularnewline
{\footnotesize{}Norway Real Exchange Rate} & {\footnotesize{}1/1/1957} & {\footnotesize{}1/8/2022} & {\footnotesize{}monthly} & {\footnotesize{}788} & {\footnotesize 1,234} & {\footnotesize .38}\tabularnewline
\bottomrule
\end{tabular}
\end{table}
\begin{table}[H]
\begin{centering}
\caption{\protect\label{tab:ACF_Test_AR6}Autocorrelation Test Results for
Residuals from Estimating TVP-AR(6) Model}
\par\end{centering}
\centering{}{\footnotesize{}}%
\begin{tabular}{lccccc>{\centering}m{2.5cm}}
\toprule 
 & {\footnotesize{}From} & {\footnotesize{}To} & {\footnotesize{}Frequency} & {\footnotesize{}$n$} & {\footnotesize{}$n\widehat{h}_{us}$} & {\footnotesize{}Ljung-Box Test}{\footnotesize\par}

{\footnotesize{}p-Value}\tabularnewline
\midrule 
{\footnotesize{}Canada Real Exchange Rate} & {\footnotesize{}1/1/1957} & {\footnotesize{}1/8/2022} & {\footnotesize{}Monthly} & {\footnotesize{}788} & {\footnotesize 823} & {\footnotesize 1.00}\tabularnewline
{\footnotesize{}Canada Real Exchange Rate} & {\footnotesize{}1/1/1957} & {\footnotesize{}1/8/2022} & {\footnotesize{}Monthly} & {\footnotesize{}788} & {\footnotesize 1,234} & {\footnotesize 1.00}\tabularnewline
{\footnotesize{}Japan Real Exchange Rate} & {\footnotesize{}1/1/1957} & {\footnotesize{}1/8/2022} & {\footnotesize{}Monthly} & {\footnotesize{}788} & {\footnotesize 823} & {\footnotesize .81}\tabularnewline
{\footnotesize{}Japan Real Exchange Rate} & {\footnotesize{}1/1/1957} & {\footnotesize{}1/8/2022} & {\footnotesize{}Monthly} & {\footnotesize{}788} & {\footnotesize 1,234} & {\footnotesize .84}\tabularnewline
{\footnotesize{}Australia Interest Rate} & {\footnotesize{}1/5/1976} & {\footnotesize{}1/4/2017} & {\footnotesize{}Monthly} & {\footnotesize{}492} & {\footnotesize 160} & {\footnotesize .97}\tabularnewline
{\footnotesize{}Australia Interest Rate} & {\footnotesize{}1/5/1976} & {\footnotesize{}1/4/2017} & {\footnotesize{}Monthly} & {\footnotesize{}492} & {\footnotesize 240} & {\footnotesize .98}\tabularnewline
{\footnotesize{}Canada Interest Rate} & {\footnotesize{}1/5/1976} & {\footnotesize{}1/4/2017} & {\footnotesize{}Monthly} & {\footnotesize{}492} & {\footnotesize 95} & {\footnotesize .95}\tabularnewline
{\footnotesize{}Canada Interest Rate} & {\footnotesize{}1/5/1976} & {\footnotesize{}1/4/2017} & {\footnotesize{}Monthly} & {\footnotesize{}492} & {\footnotesize 143} & {\footnotesize .97}\tabularnewline
{\footnotesize{}US Interest Rate} & {\footnotesize{}1/5/1976} & {\footnotesize{}1/4/2017} & {\footnotesize{}Monthly} & {\footnotesize{}492} & {\footnotesize 103} & {\footnotesize .98}\tabularnewline
{\footnotesize{}US Interest Rate} & {\footnotesize{}1/5/1976} & {\footnotesize{}1/4/2017} & {\footnotesize{}Monthly} & {\footnotesize{}492} & {\footnotesize 155} & {\footnotesize .97}\tabularnewline
{\footnotesize{}US 10yr Bond Yield} & {\footnotesize{}2/1/1962} & {\footnotesize{}20/1/2023} & {\footnotesize{}Daily} & {\footnotesize{}15248} & {\footnotesize 16,006} & {\footnotesize 1.00}\tabularnewline
{\footnotesize{}US 10yr Bond Yield} & {\footnotesize{}2/1/1962} & {\footnotesize{}20/1/2023} & {\footnotesize{}Daily} & {\footnotesize{}15248} & {\footnotesize 24,009} & {\footnotesize 1.00}\tabularnewline
{\footnotesize{}US Average Wages Manufacturing} & {\footnotesize{}1/1/1939} & {\footnotesize{}1/12/2022} & {\footnotesize{}Monthly} & {\footnotesize{}1008} & {\footnotesize 152} & {\footnotesize .12}\tabularnewline
{\footnotesize{}US Average Wages Manufacturing} & {\footnotesize{}1/1/1939} & {\footnotesize{}1/12/2022} & {\footnotesize{}Monthly} & {\footnotesize{}1008} & {\footnotesize 228} & {\footnotesize .27}\tabularnewline
{\footnotesize{}US Unemployment Rate} & {\footnotesize{}1/1/1948} & {\footnotesize{}1/12/2022} & {\footnotesize{}Monthly} & {\footnotesize{}900} & {\footnotesize 900} & {\footnotesize .98}\tabularnewline
{\footnotesize{}US Unemployment Rate} & {\footnotesize{}1/1/1948} & {\footnotesize{}1/12/2022} & {\footnotesize{}Monthly} & {\footnotesize{}900} & {\footnotesize 1,350} & {\footnotesize .99}\tabularnewline
{\footnotesize{}US Real GDP Per Capita} & {\footnotesize{}1/1/1947} & {\footnotesize{}1/7/2022} & {\footnotesize{}Quarterly} & {\footnotesize{}303} & {\footnotesize 317} & {\footnotesize .97}\tabularnewline
{\footnotesize{}US Real GDP Per Capita} & {\footnotesize{}1/1/1947} & {\footnotesize{}1/7/2022} & {\footnotesize{}Quarterly} & {\footnotesize{}303} & {\footnotesize 476} & {\footnotesize .98}\tabularnewline
{\footnotesize{}US Real GNP} & {\footnotesize{}1/1/1947} & {\footnotesize{}1/7/2022} & {\footnotesize{}Quarterly} & {\footnotesize{}303} & {\footnotesize 317} & {\footnotesize 1.00}\tabularnewline
{\footnotesize{}US Real GNP} & {\footnotesize{}1/1/1947} & {\footnotesize{}1/7/2022} & {\footnotesize{}Quarterly} & {\footnotesize{}303} & {\footnotesize 476} & {\footnotesize .48}\tabularnewline
{\footnotesize{}US Real GNP Per Capita} & {\footnotesize{}1/1/1947} & {\footnotesize{}1/7/2022} & {\footnotesize{}Quarterly} & {\footnotesize{}303} & {\footnotesize 317} & {\footnotesize .96}\tabularnewline
{\footnotesize{}US Real GNP Per Capita} & {\footnotesize{}1/1/1947} & {\footnotesize{}1/7/2022} & {\footnotesize{}Quarterly} & {\footnotesize{}303} & {\footnotesize 476} & {\footnotesize .99}\tabularnewline
\bottomrule
\end{tabular}
\end{table}
\begin{table}[H]
\begin{centering}
\caption{\protect\label{tab:ACF_Test_AR12}Autocorrelation Test Results for
Residuals from Estimating TVP-AR(12) Model}
\par\end{centering}
\centering{}{\footnotesize{}}%
\begin{tabular}{cccccc>{\centering}m{2.5cm}}
\toprule 
 & {\footnotesize{}From} & {\footnotesize{}To} & {\footnotesize{}Frequency} & {\footnotesize{}n} & {\footnotesize{}nh} & {\footnotesize{}Ljung-Box Test}{\footnotesize\par}

{\footnotesize{}p-Value}\tabularnewline
\midrule 
{\footnotesize{}Switzerland Inflation} & {\footnotesize{}1/2/1955} & {\footnotesize{}1/10/2022} & {\footnotesize{}Monthly} & {\footnotesize{}813} & {\footnotesize 125} & {\footnotesize .96}\tabularnewline
{\footnotesize{}Switzerland Inflation} & {\footnotesize{}1/2/1955} & {\footnotesize{}1/10/2022} & {\footnotesize{}Monthly} & {\footnotesize{}813} & {\footnotesize 188} & {\footnotesize .63}\tabularnewline
{\footnotesize{}S\&P 500 Index} & {\footnotesize{}24/1/2013} & {\footnotesize{}23/1/2023} & {\footnotesize{}Daily} & {\footnotesize{}2517} & {\footnotesize 2,518} & {\footnotesize 1.00}\tabularnewline
{\footnotesize{}S\&P 500 Index} & {\footnotesize{}24/1/2013} & {\footnotesize{}23/1/2023} & {\footnotesize{}Daily} & {\footnotesize{}2517} & {\footnotesize 3,777} & {\footnotesize 1.00}\tabularnewline
{\footnotesize{}US Industrial Production} & {\footnotesize{}1/1/1919} & {\footnotesize{}1/12/2022} & {\footnotesize{}Monthly} & {\footnotesize{}1248} & {\footnotesize 368} & {\footnotesize 1.00}\tabularnewline
{\footnotesize{}US Industrial Production} & {\footnotesize{}1/1/1919} & {\footnotesize{}1/12/2022} & {\footnotesize{}Monthly} & {\footnotesize{}1248} & {\footnotesize 551} & {\footnotesize .97}\tabularnewline
\bottomrule
\end{tabular}
\end{table}
\global\long\def\thefigure{SM.\arabic{figure}}%
\captionsetup[subfigure]{position=top,font=scriptsize,singlelinecheck=off,justification=raggedright}
\begin{figure}[H]
\begin{centering}
\subfloat[Norway Real Exchange Rate, $n\widehat{h}_{us}=$ 823]{\noindent\includegraphics[scale=0.36]{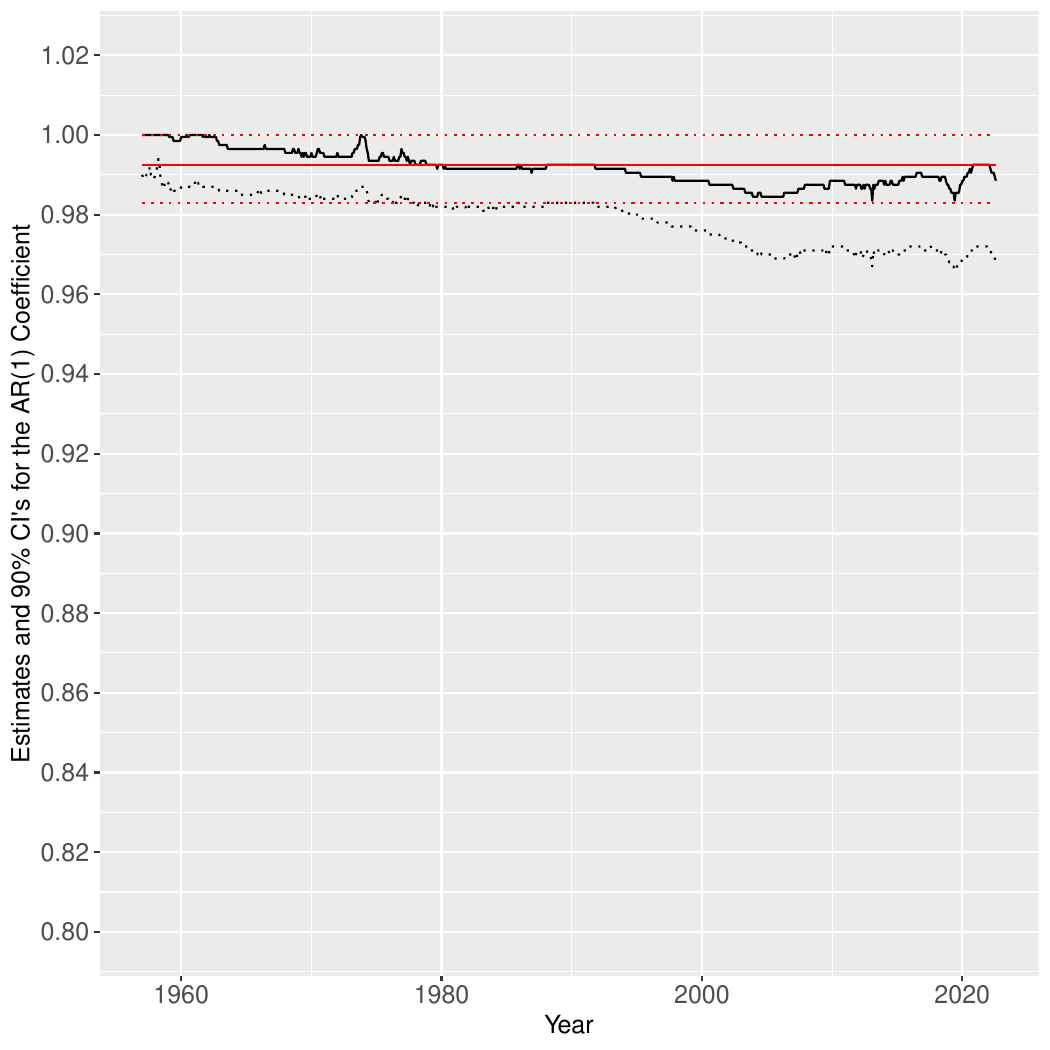}

}\quad{}\subfloat[Norway Real Exchange Rate, $1.5n\widehat{h}_{us}=$ 1,234]{\includegraphics[scale=0.36]{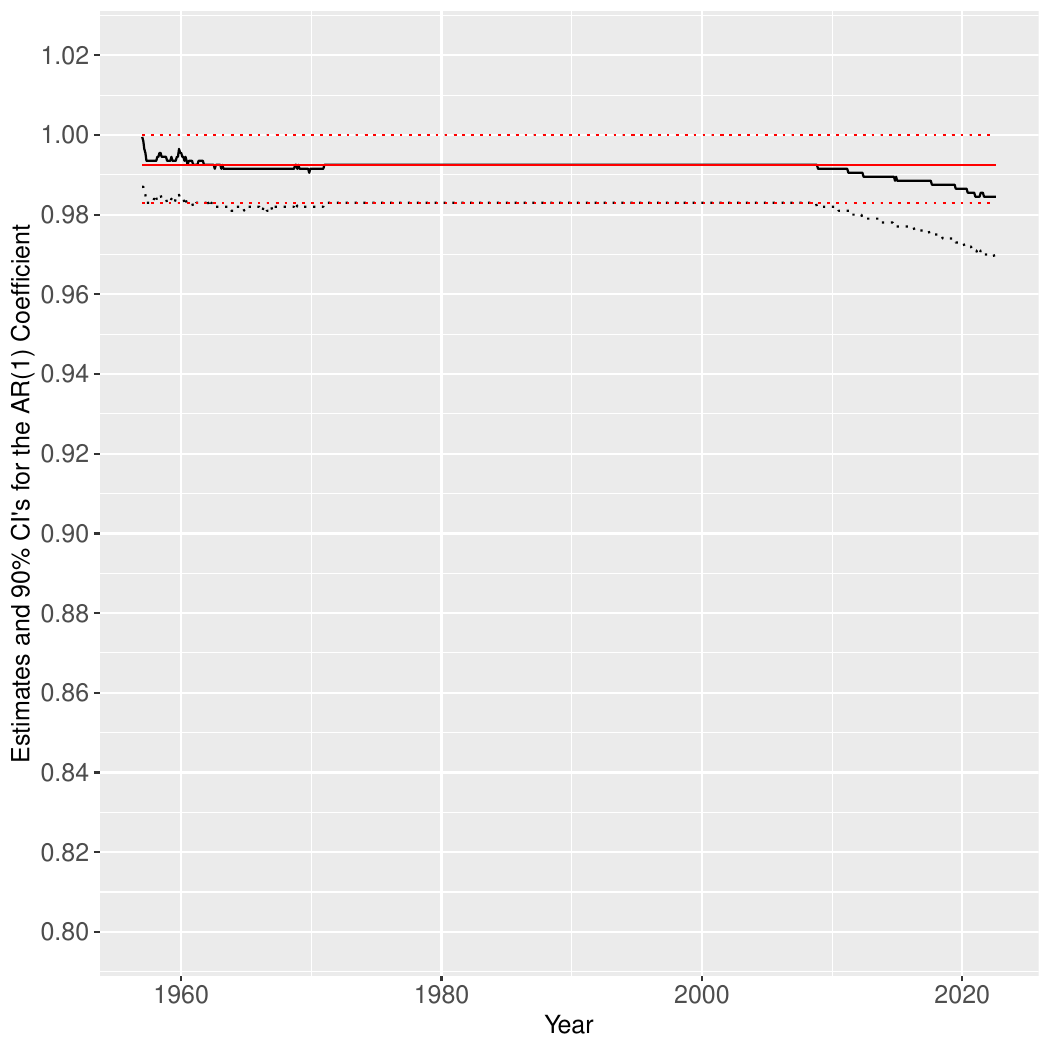}

}
\par\end{centering}
\begin{centering}
\subfloat[Canada Real Exchange Rate, $n\widehat{h}_{us}=$ 823]{\includegraphics[scale=0.36]{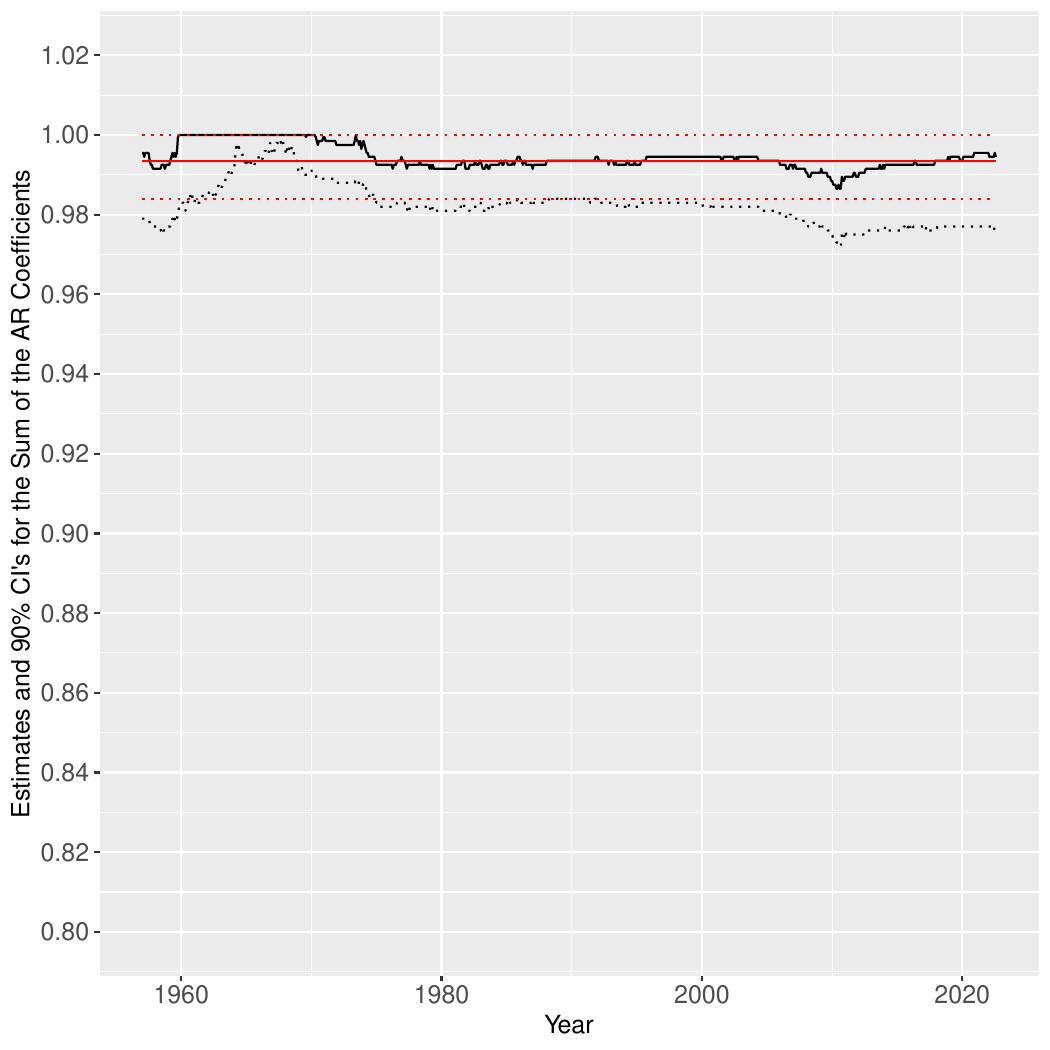}

}\quad{}\subfloat[Canada Real Exchange Rate, $1.5n\widehat{h}_{us}=$ 1,234]{\includegraphics[scale=0.36]{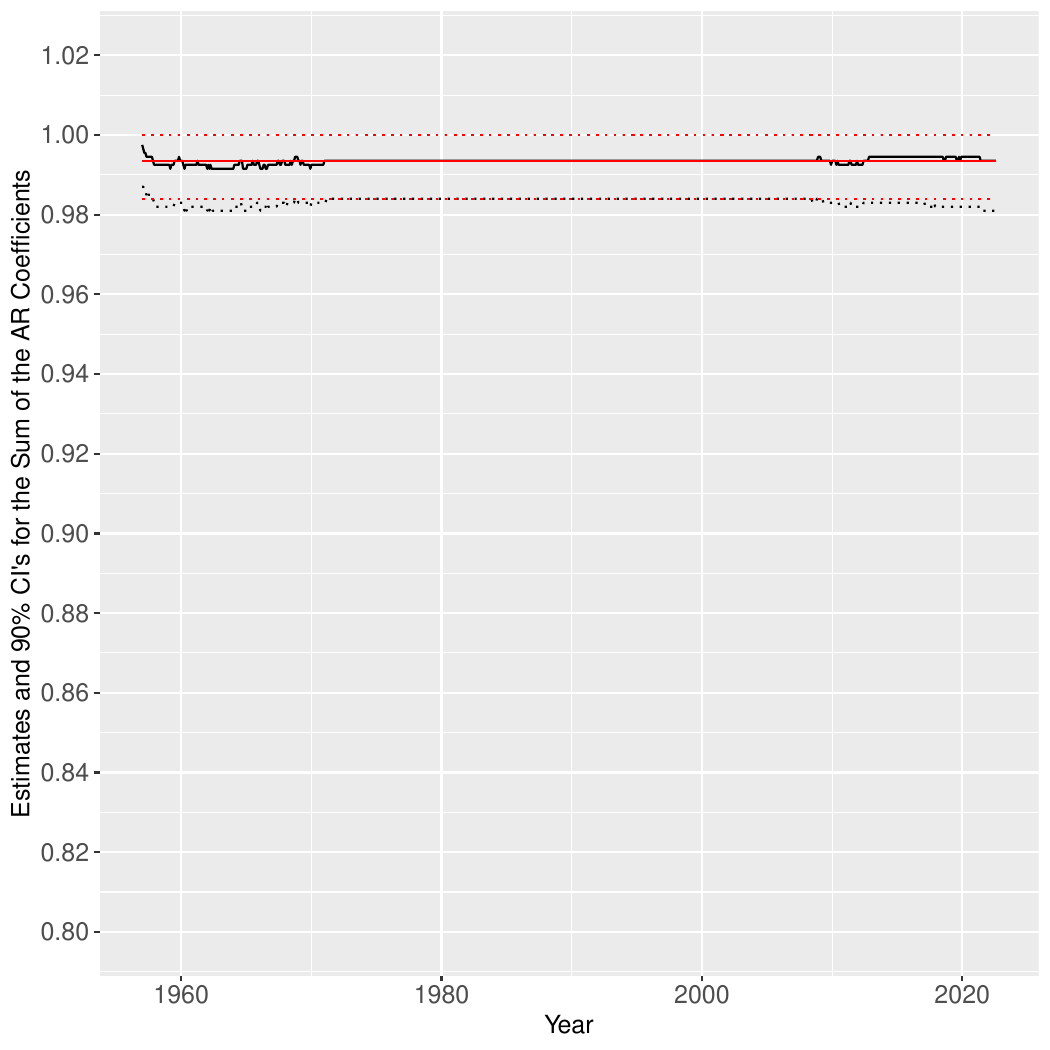}

}
\par\end{centering}
\begin{centering}
\subfloat[Japan Real Exchange Rate, $n\widehat{h}_{us}=$ 823]{\noindent\includegraphics[scale=0.36]{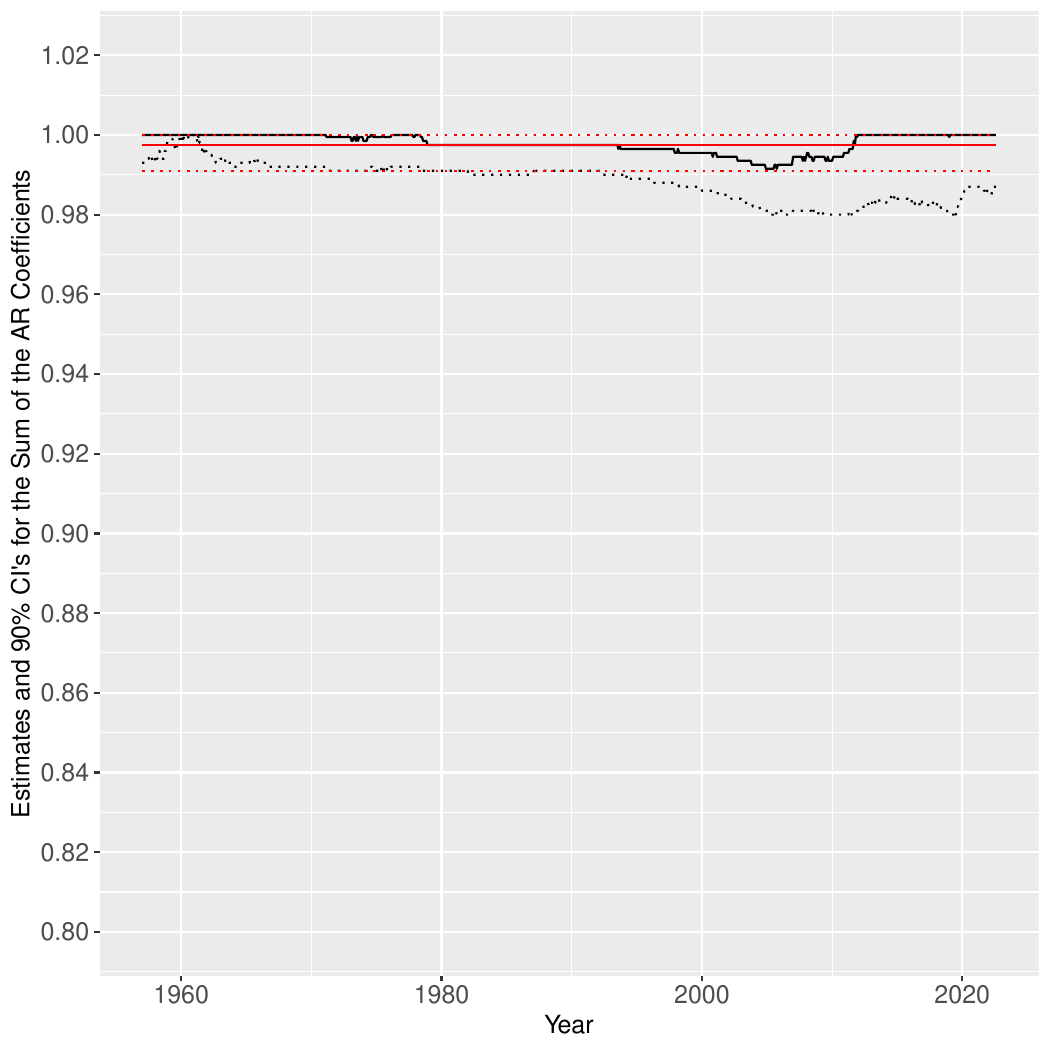}

}\quad{}\subfloat[Japan Real Exchange Rate, $1.5n\widehat{h}_{us}=$ 1,234]{\includegraphics[scale=0.36]{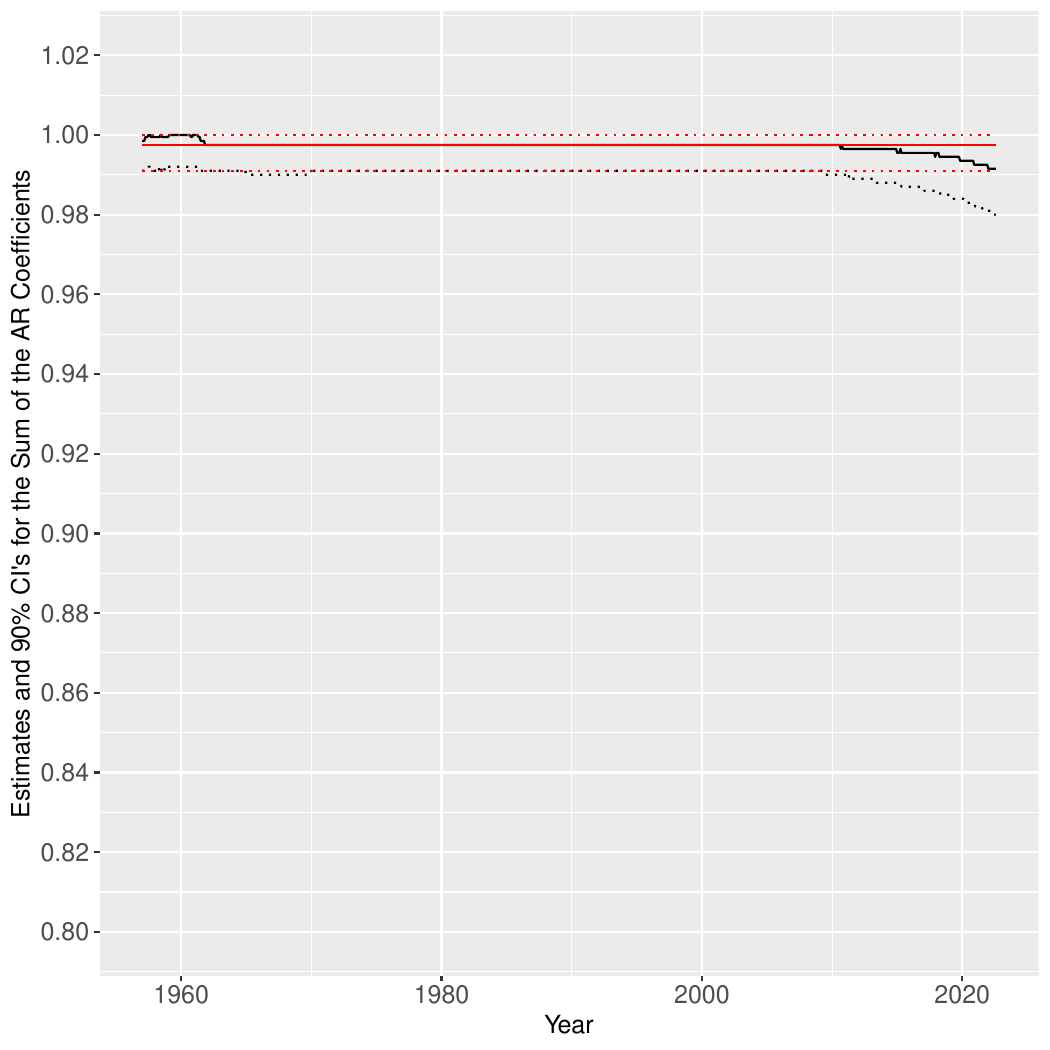}

}
\par\end{centering}
\caption{Estimates and 90\% CI's for the Sum of the AR Coefficients in TVP-AR
Models: Norway Real Exchange Rate (TVP-AR(1)), Canada and Japan Real
Exchange Rate (TVP-AR(6)) \protect\label{fig:EMP_AR6_SM1}}
\end{figure}
\begin{figure}[H]
\begin{centering}
\subfloat[Australia Interest Rate, $n\widehat{h}_{us}=$ 160]{\noindent\includegraphics[scale=0.36]{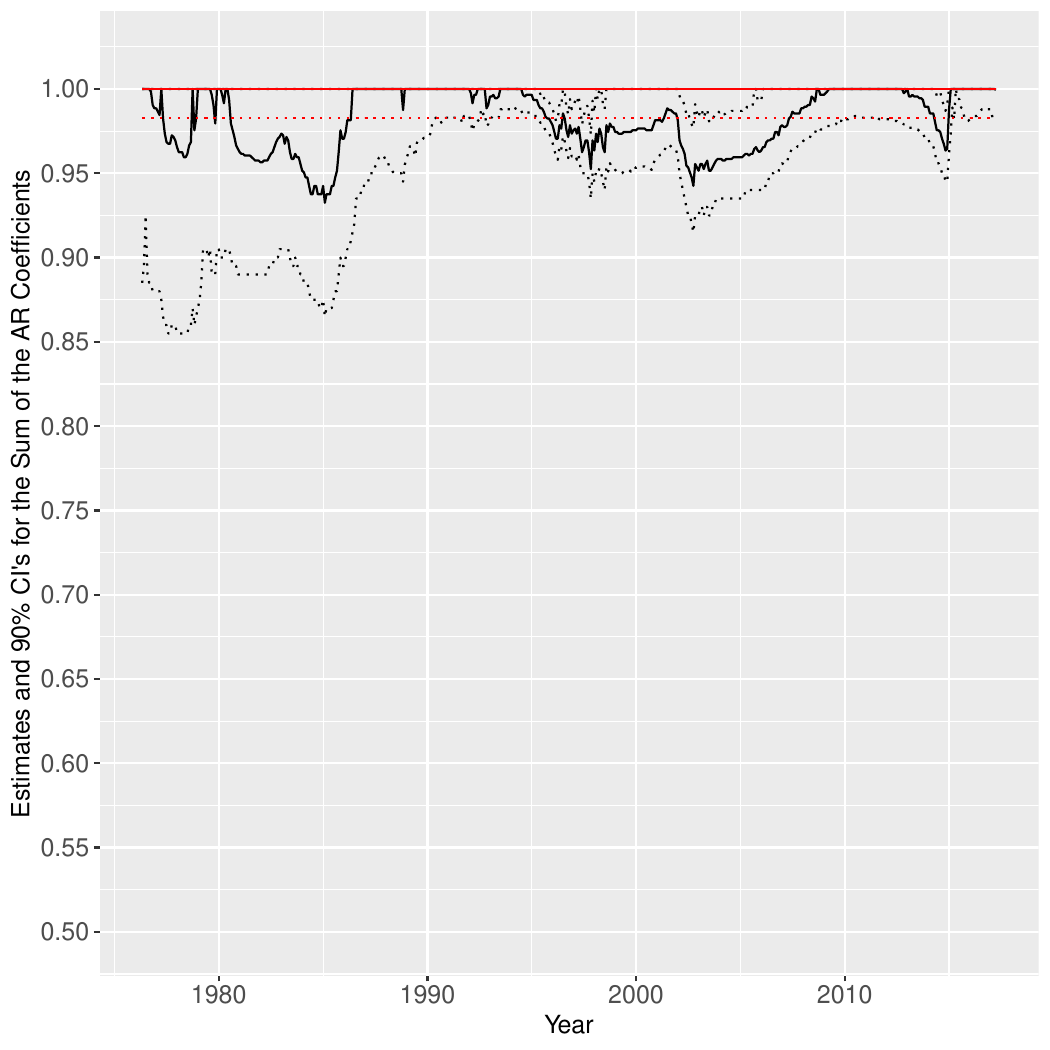}

}\quad{}\subfloat[Australia Interest Rate, $1.5n\widehat{h}_{us}=$ 240]{\includegraphics[scale=0.36]{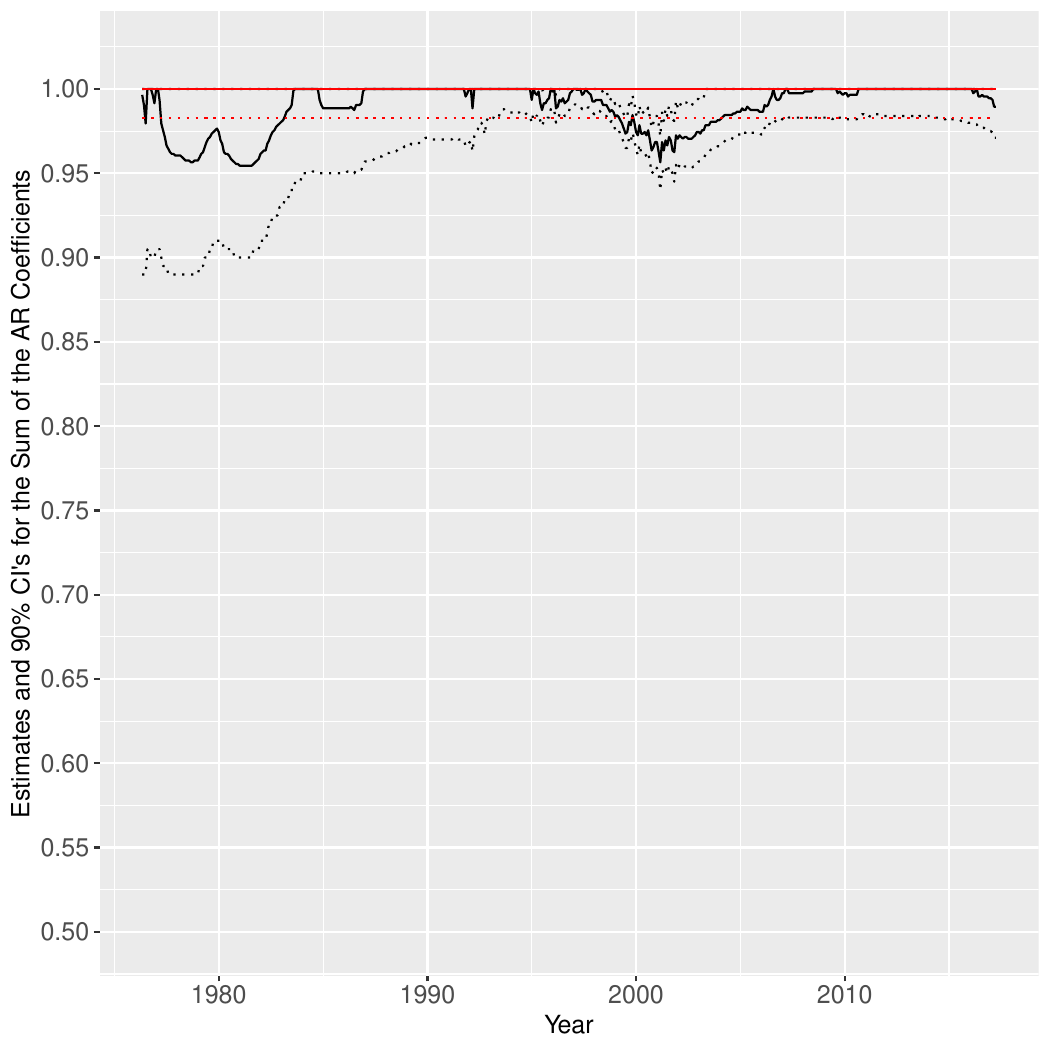}

}
\par\end{centering}
\begin{centering}
\subfloat[Canada Interest Rate, $n\widehat{h}_{us}=$ 95]{\includegraphics[scale=0.36]{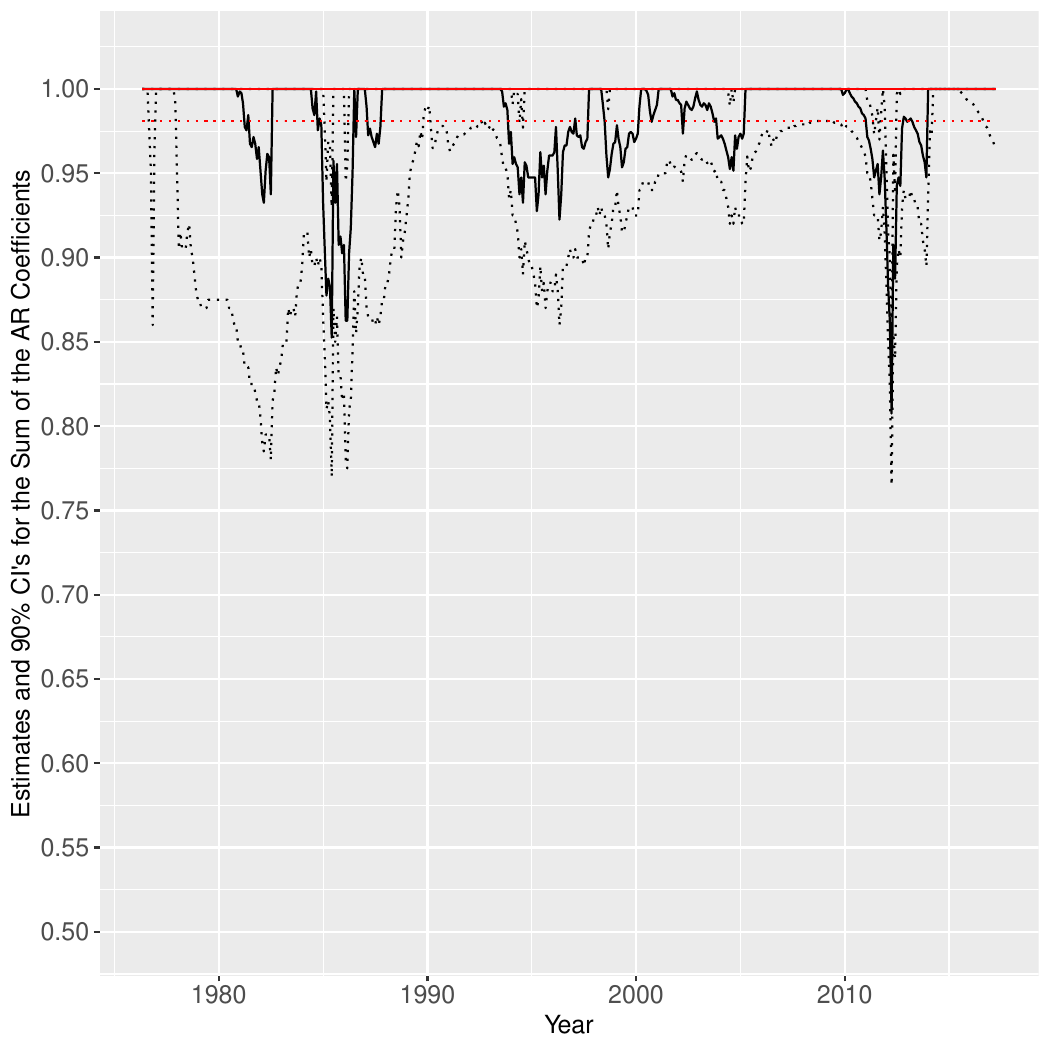}

}\quad{}\subfloat[Canada Interest Rate, $1.5n\widehat{h}_{us}=$ 143]{\includegraphics[scale=0.36]{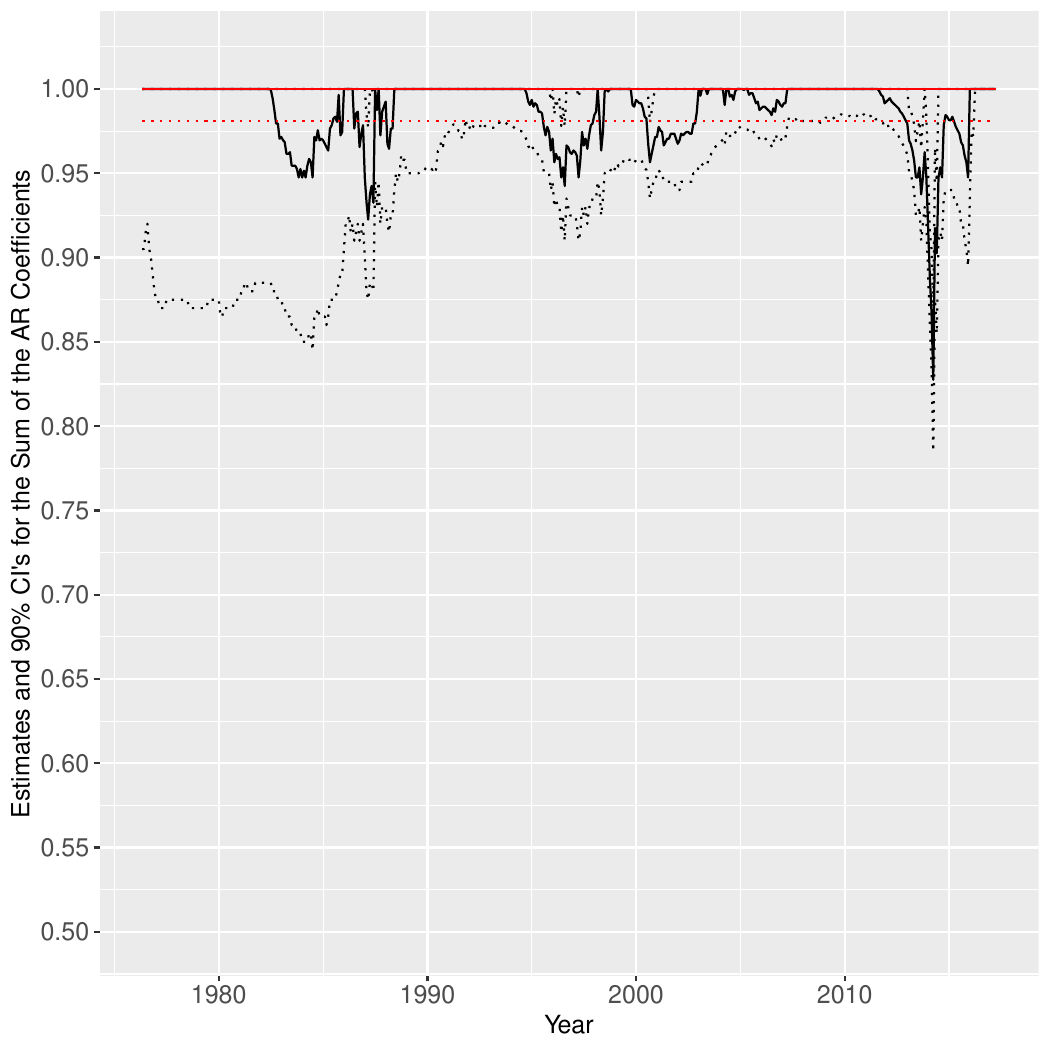}

}
\par\end{centering}
\begin{centering}
\subfloat[US Interest Rate, $n\widehat{h}_{us}=$ 103]{\includegraphics[scale=0.36]{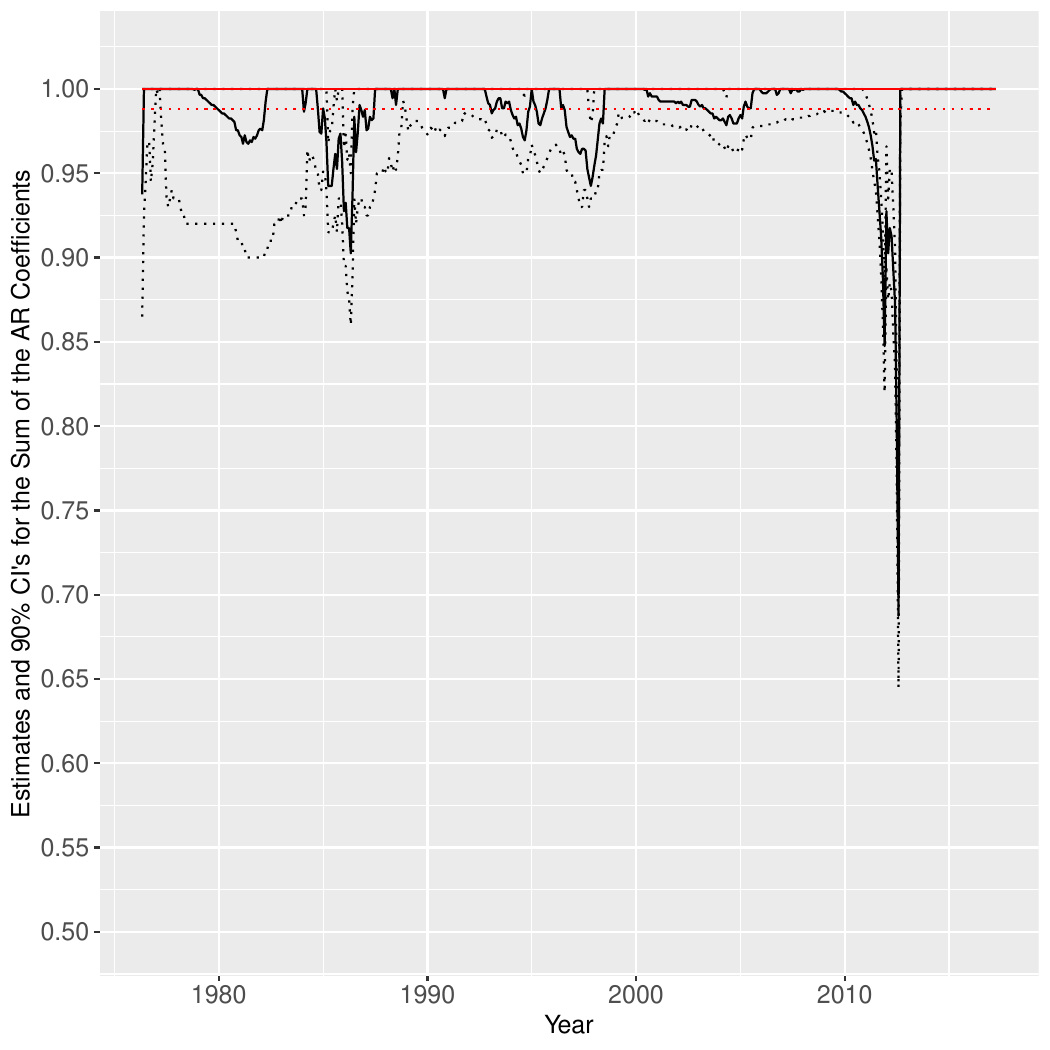}

}\quad{}\subfloat[US Interest Rate, $1.5n\widehat{h}_{us}=$ 155]{\includegraphics[scale=0.36]{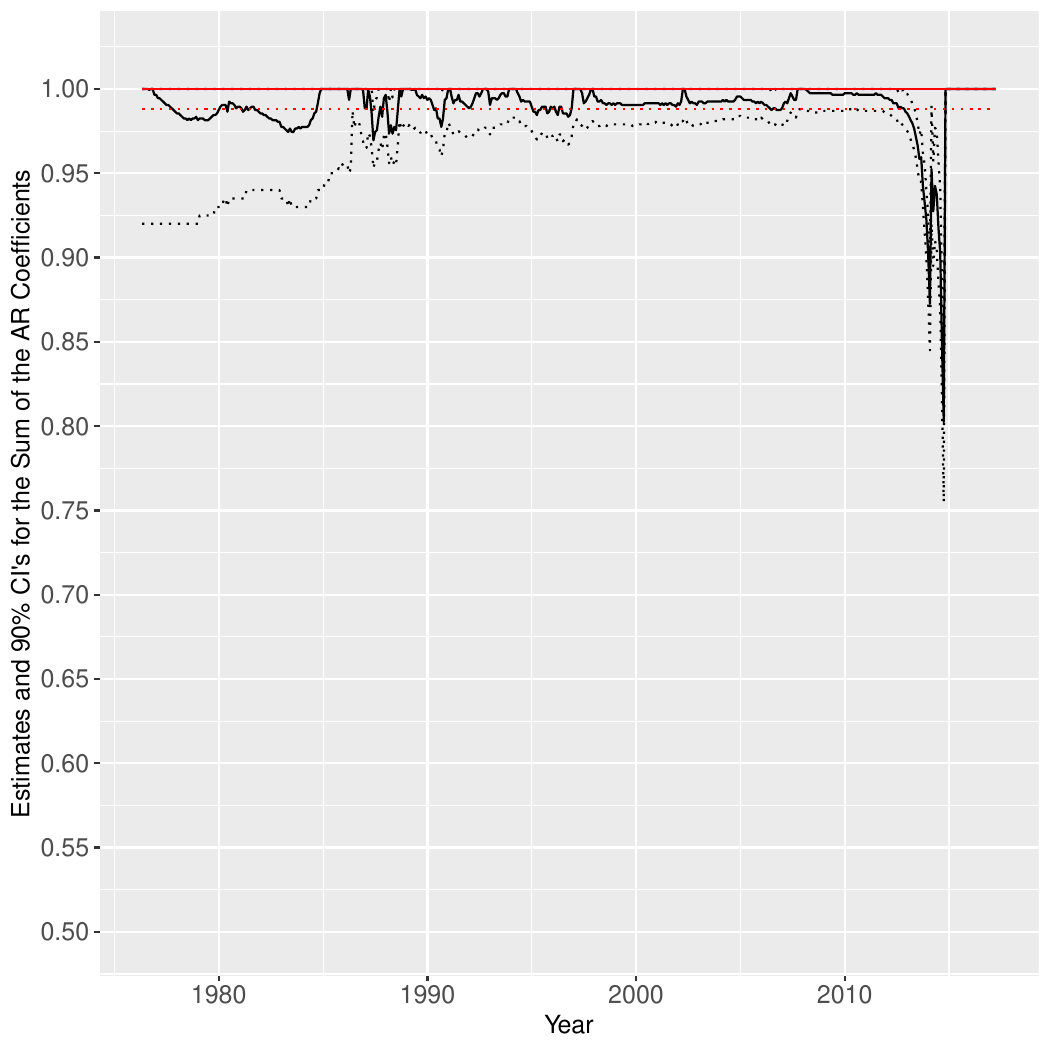}

}
\par\end{centering}
\caption{Estimates and 90\% CI's for the Sum of the AR Coefficients in TVP-AR(6)
Models: Australia, Canada, and the US Interest Rate\protect\label{fig:EMP_AR6_SM2}}
\end{figure}
\begin{figure}[H]
\begin{centering}
\subfloat[US 10yr Bond Yield, $n\widehat{h}_{us}=$ 16,006]{\includegraphics[scale=0.36]{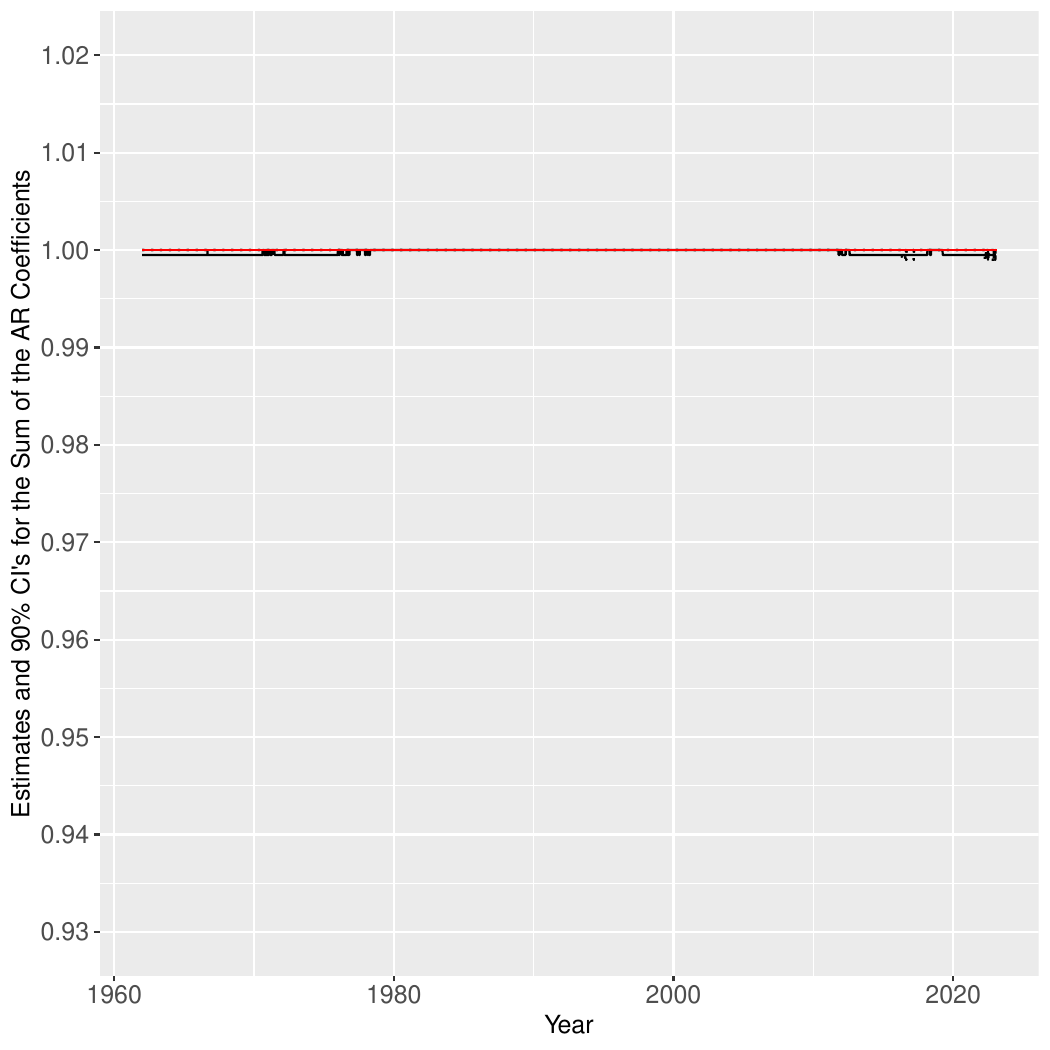}

}\quad{}\subfloat[US 10yr Bond Yield, $1.5n\widehat{h}_{us}=$ 24,009]{\includegraphics[scale=0.36]{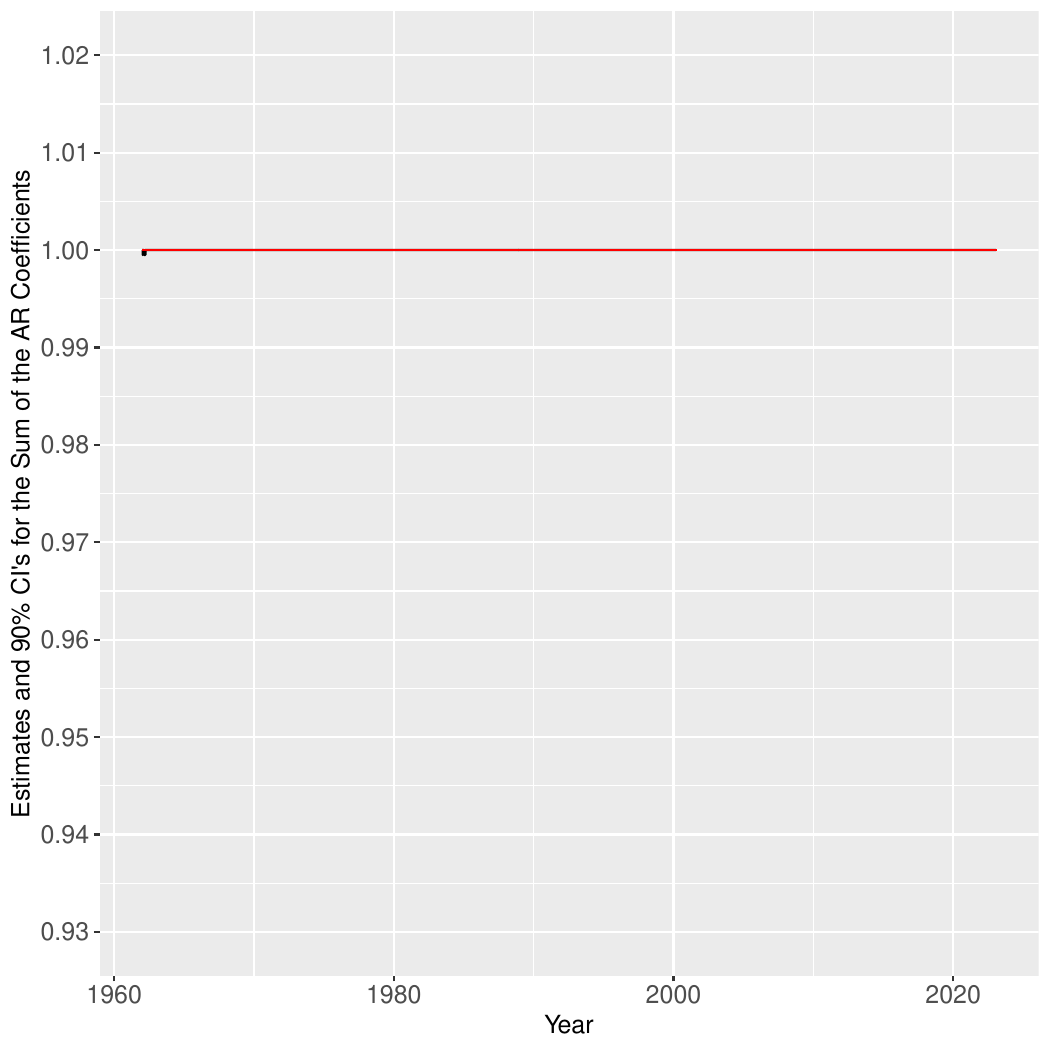}

}
\par\end{centering}
\begin{centering}
\subfloat[US Avg Wages Manufacturing, $n\widehat{h}_{us}=$ 152]{\includegraphics[scale=0.36]{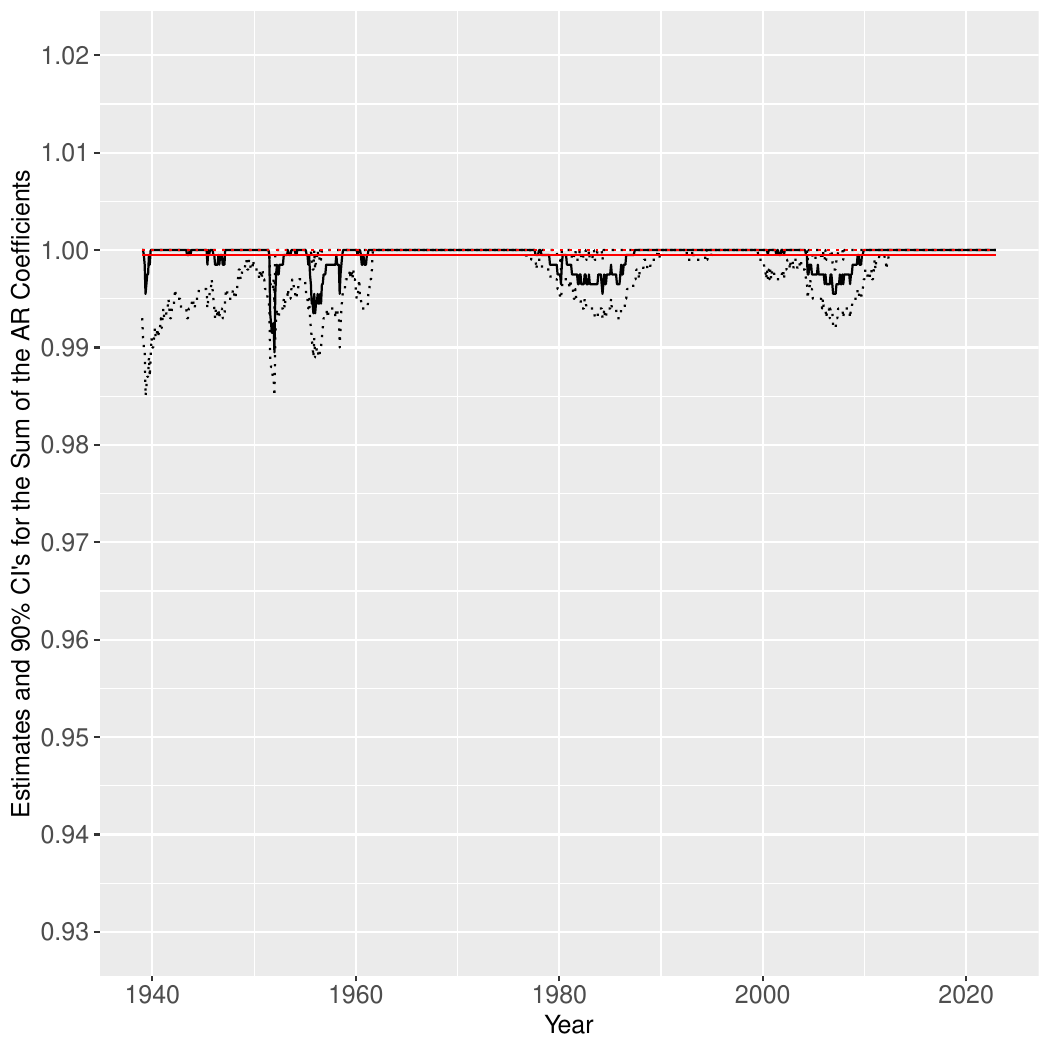}

}\quad{}\subfloat[US Avg Wages Manufacturing, $1.5n\widehat{h}_{us}=$ 228]{\includegraphics[scale=0.36]{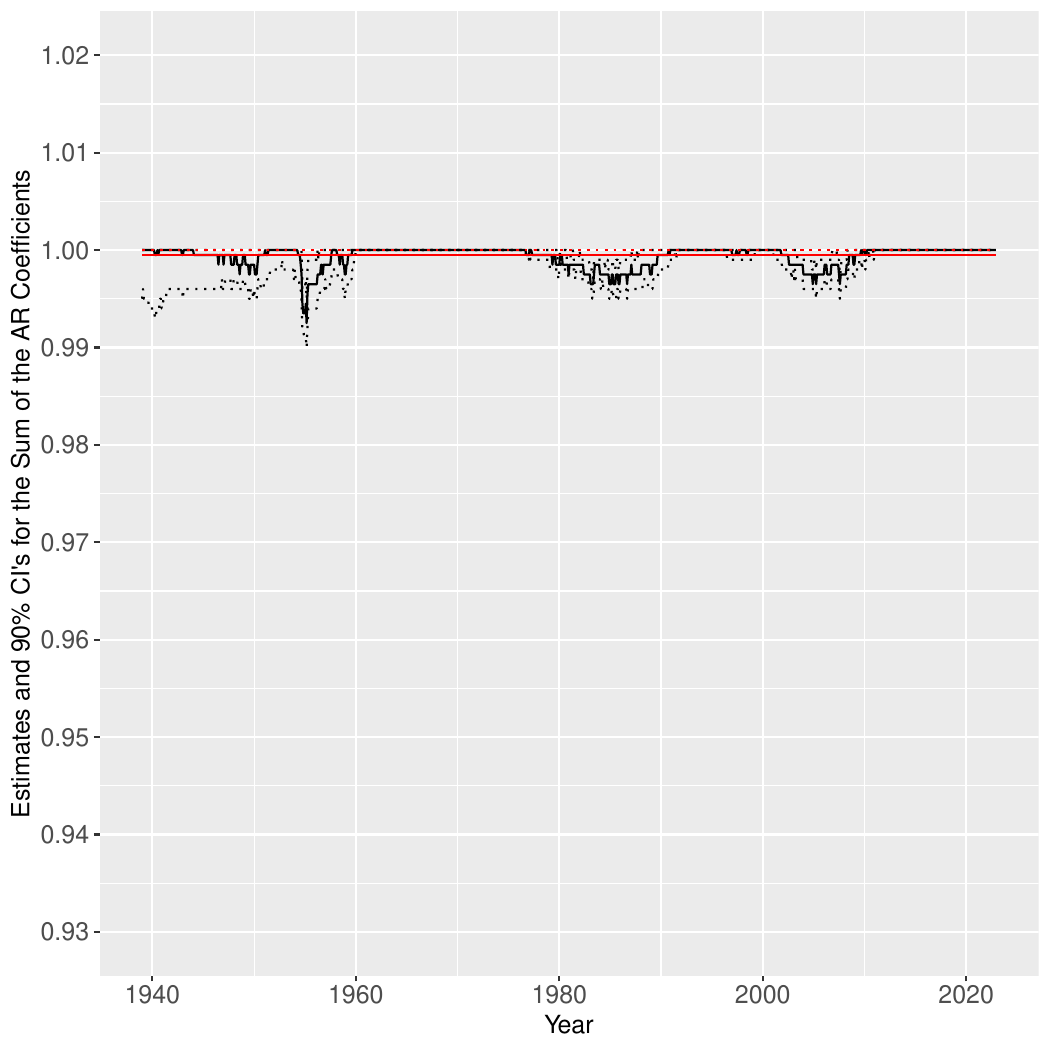}

}
\par\end{centering}
\begin{centering}
\subfloat[US Unemployment Rate, $n\widehat{h}_{us}=$ 900]{\includegraphics[scale=0.36]{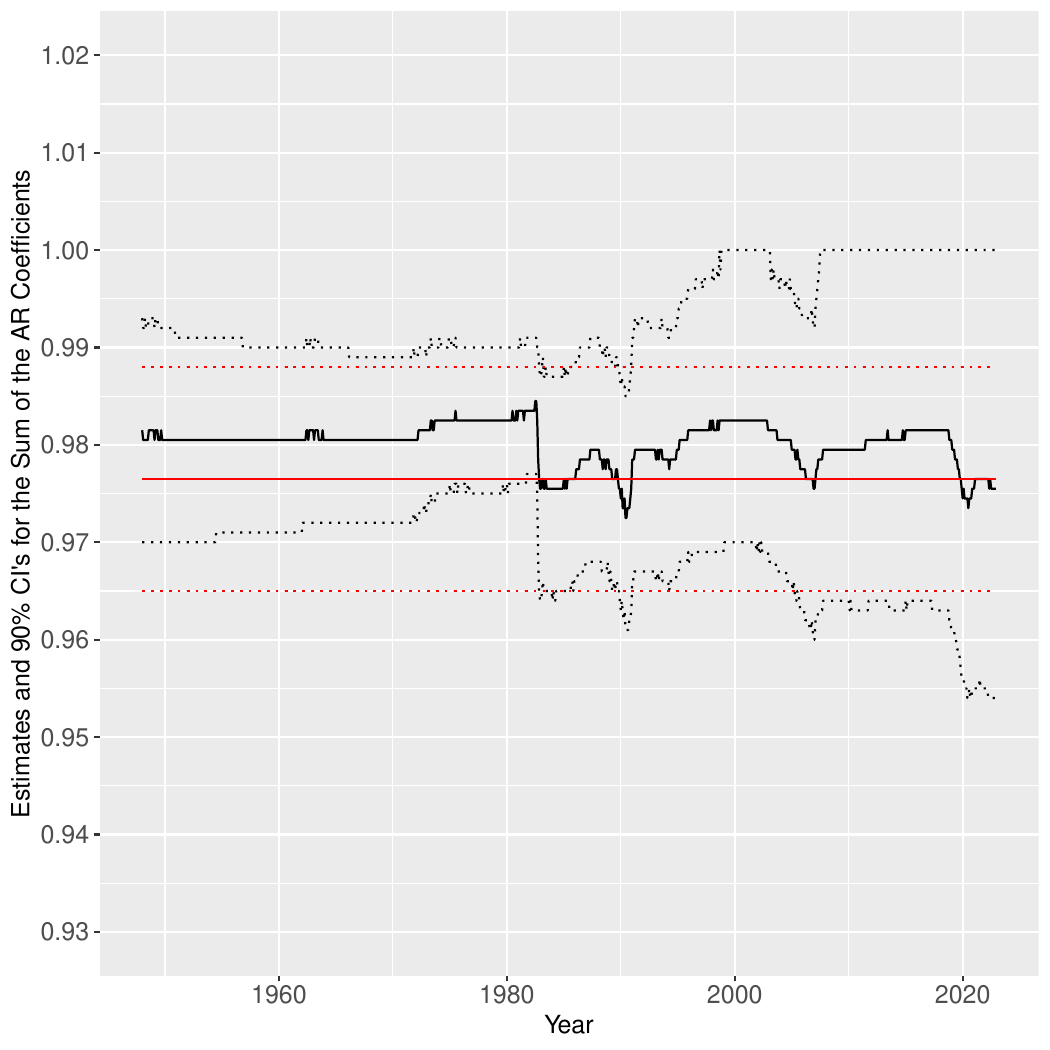}

}\quad{}\subfloat[US Unemployment Rate, $1.5n\widehat{h}_{us}=$ 1,350]{\includegraphics[scale=0.36]{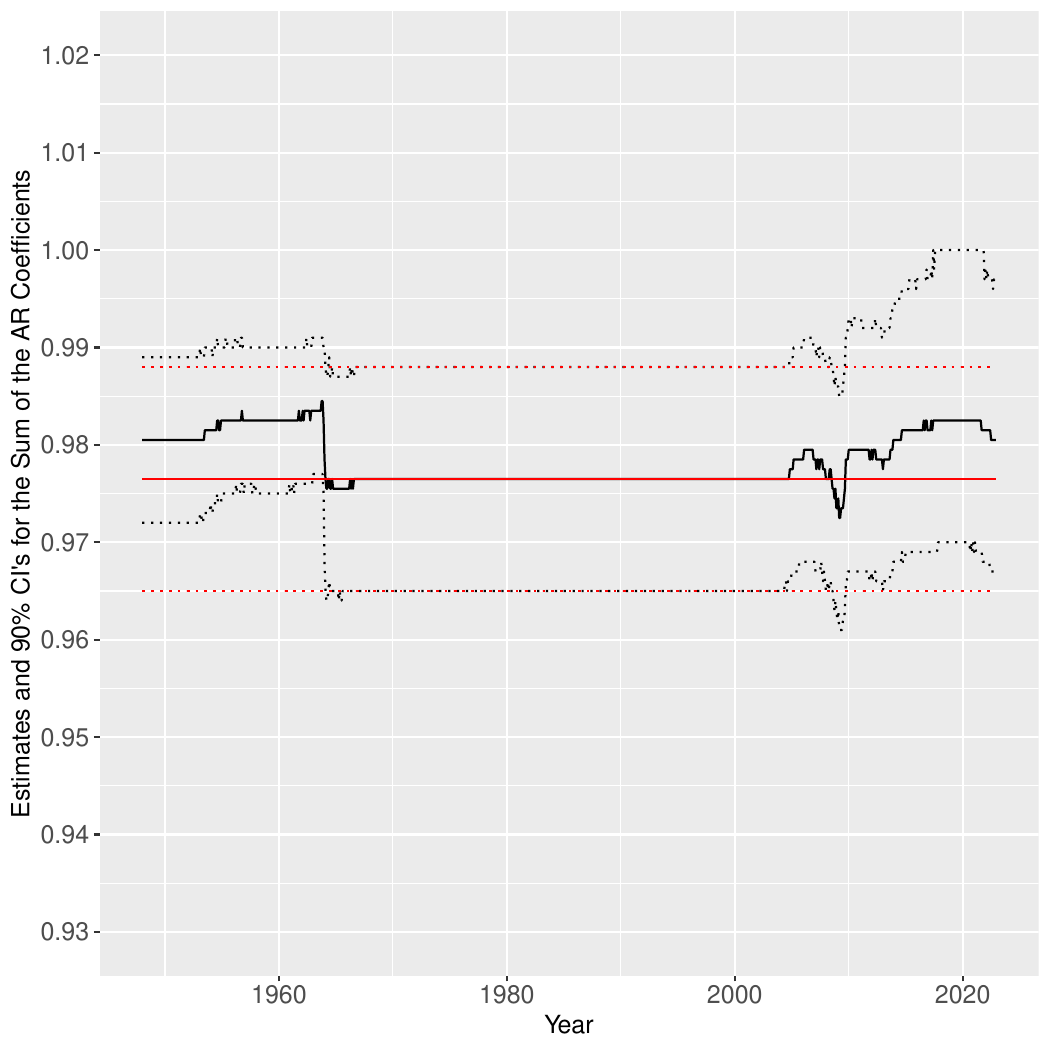}

}
\par\end{centering}
\caption{Estimates and 90\% CI's for the Sum of the AR Coefficients in TVP-AR(6)
Models: US 10yr Bond Yield, US Average Wages Manufacturing, and US
Unemployment Rate\protect\label{fig:EMP_AR6_SM3}}
\end{figure}
\begin{figure}[H]
\begin{centering}
\subfloat[US Real GDP Per Capita, $n\widehat{h}_{us}=$ 317]{\includegraphics[scale=0.36]{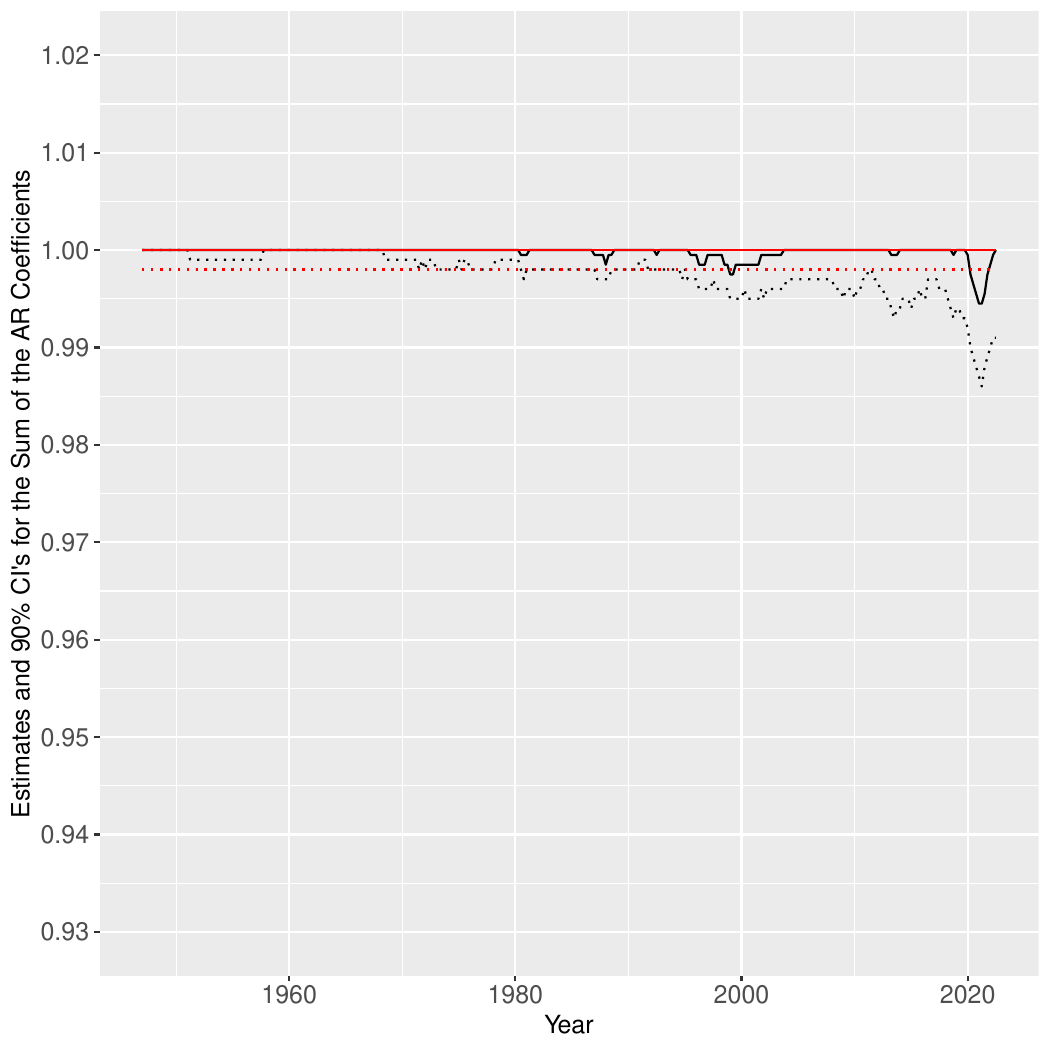}

}\quad{}\subfloat[US Real GDP Per Capita, $1.5n\widehat{h}_{us}=$ 476]{\includegraphics[scale=0.36]{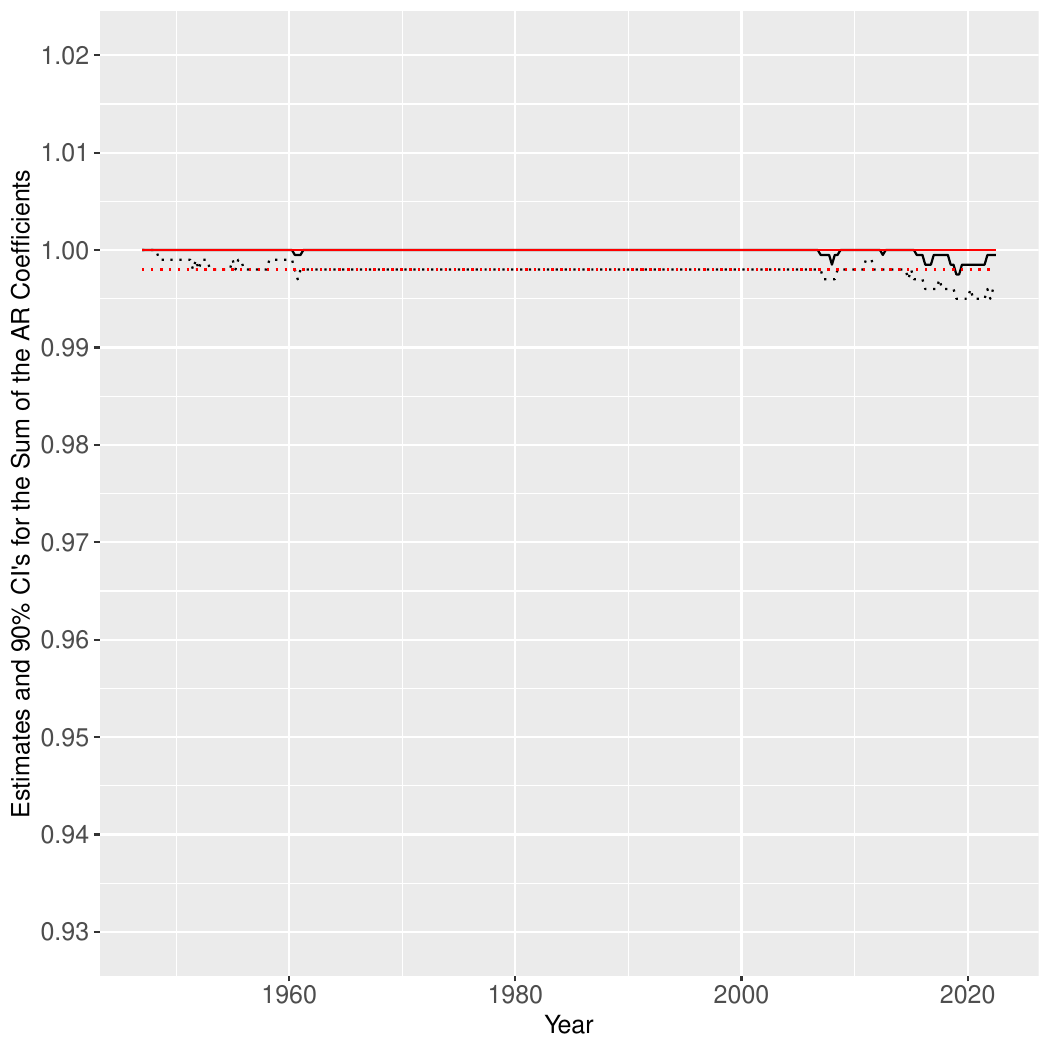}

}
\par\end{centering}
\begin{centering}
\subfloat[US Real GNP, $n\widehat{h}_{us}=$ 317]{\includegraphics[scale=0.36]{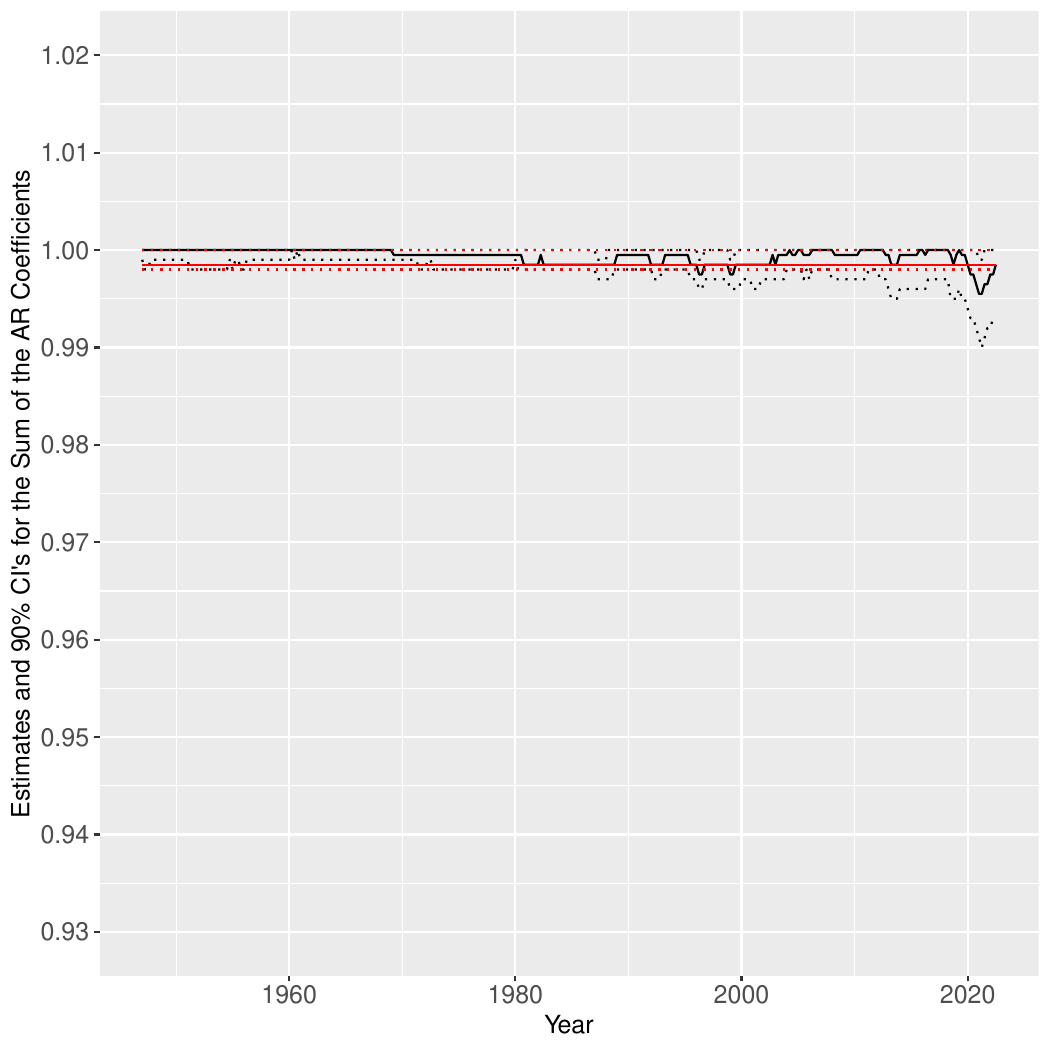}

}\quad{}\subfloat[US Real GNP, $1.5n\widehat{h}_{us}=$ 476]{\includegraphics[scale=0.36]{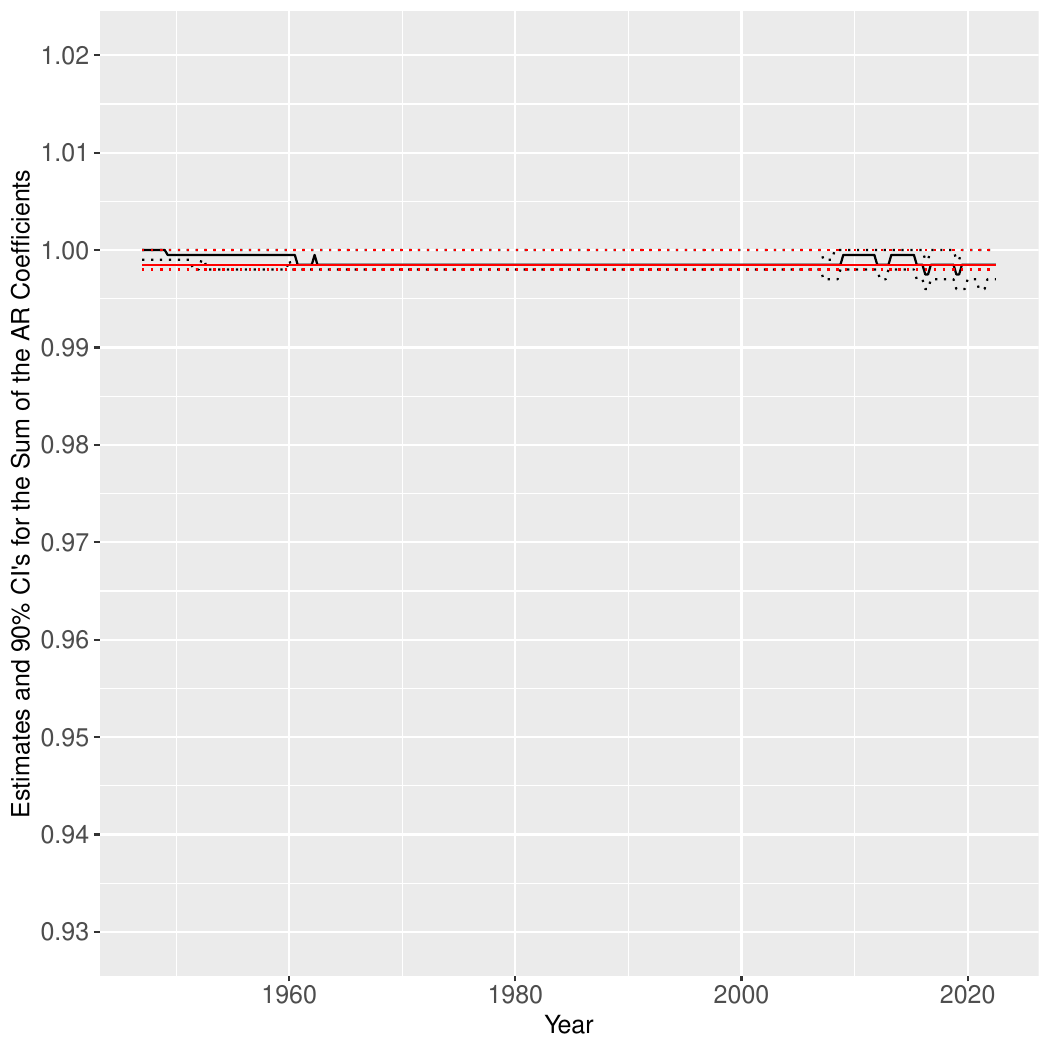}

}
\par\end{centering}
\begin{centering}
\subfloat[US Real GNP Per Capita, $n\widehat{h}_{us}=$ 317]{\includegraphics[scale=0.36]{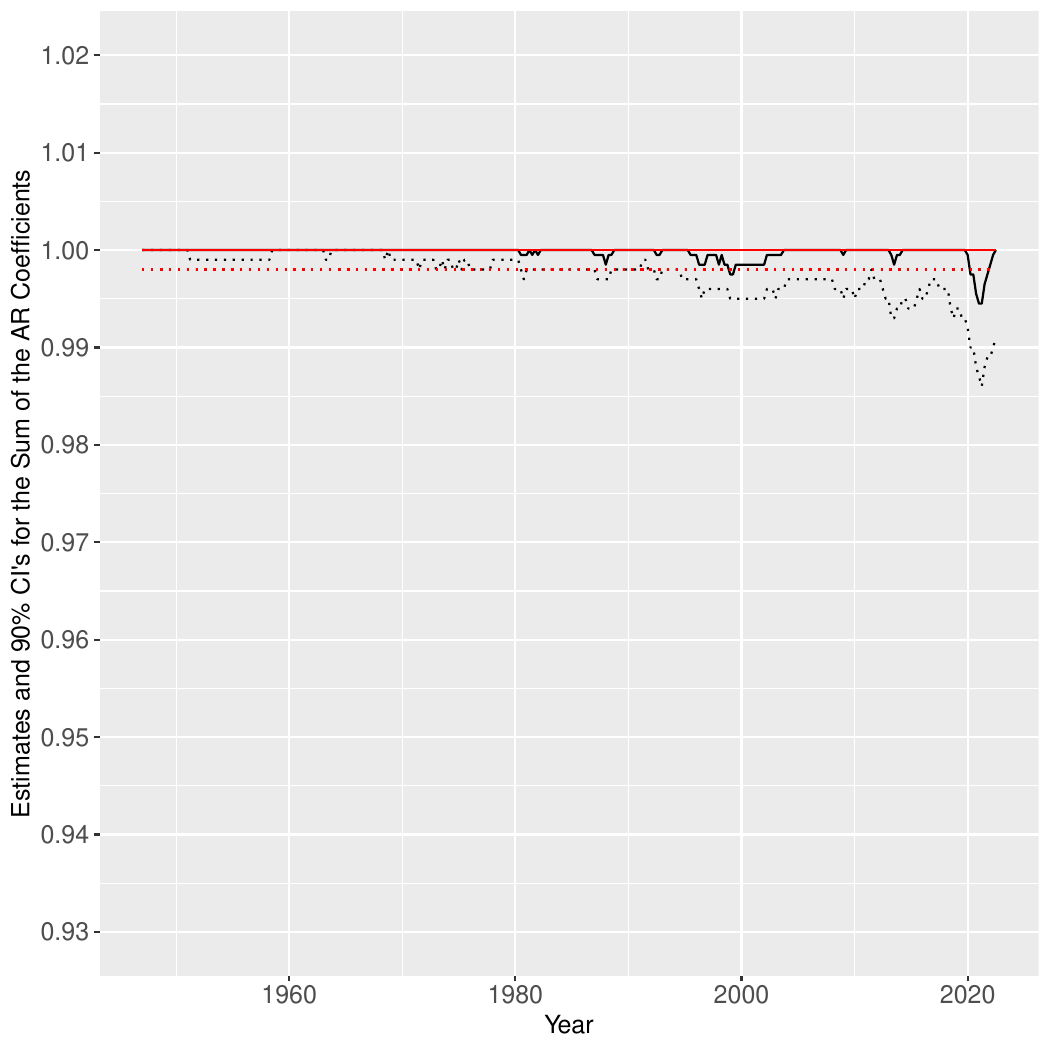}

}\quad{}\subfloat[US Real GNP Per Capita, $1.5n\widehat{h}_{us}=$ 476]{\includegraphics[scale=0.36]{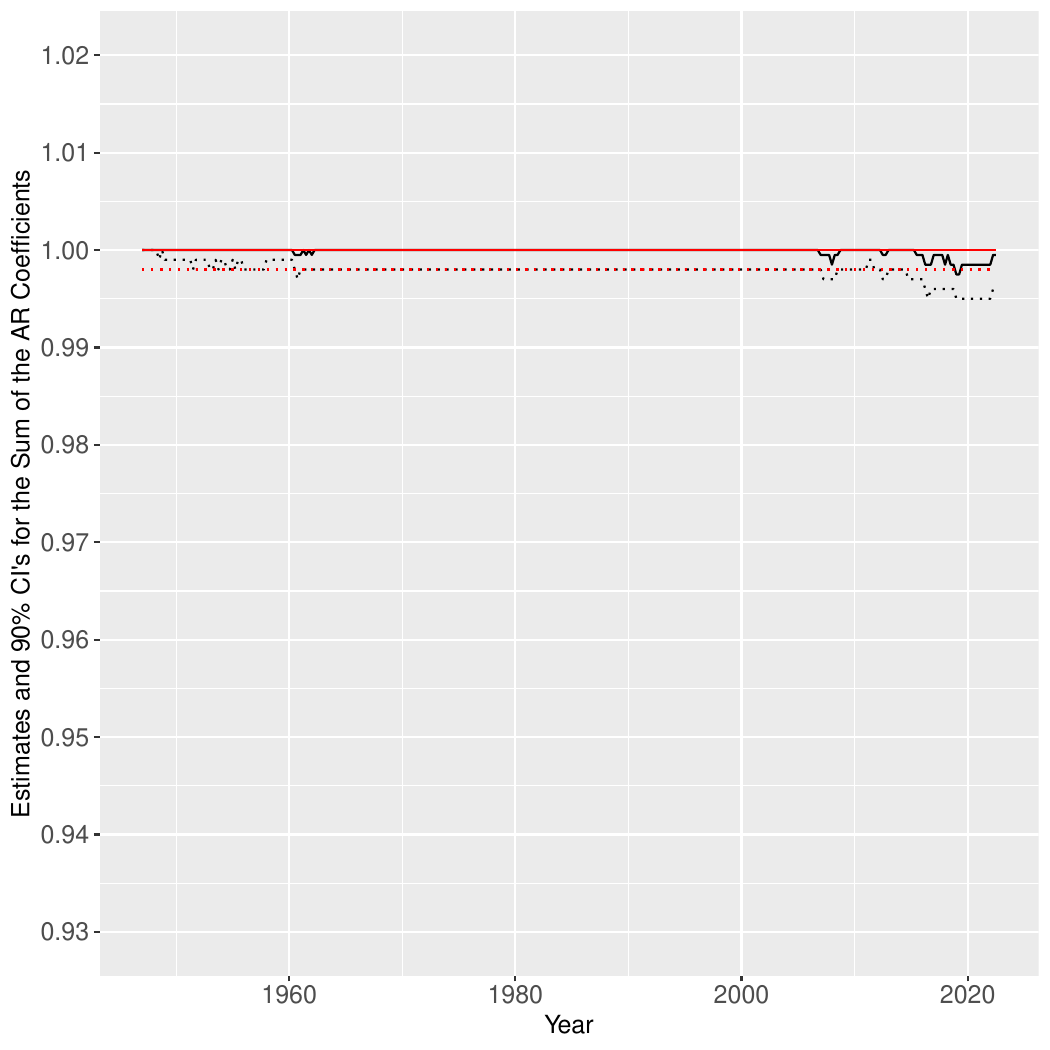}

}
\par\end{centering}
\caption{Estimates and 90\% CI's for the Sum of the AR Coefficients in TVP-AR(6)
Models: US Real GDP Per Capita, US Real GNP, and US Real GNP Per Capita
\protect\label{fig:EMP_AR6_SM4}}
\end{figure}
\begin{figure}[H]
\begin{centering}
\subfloat[Switzerland Inflation, $n\widehat{h}_{us}=$ 125]{\noindent\includegraphics[scale=0.36]{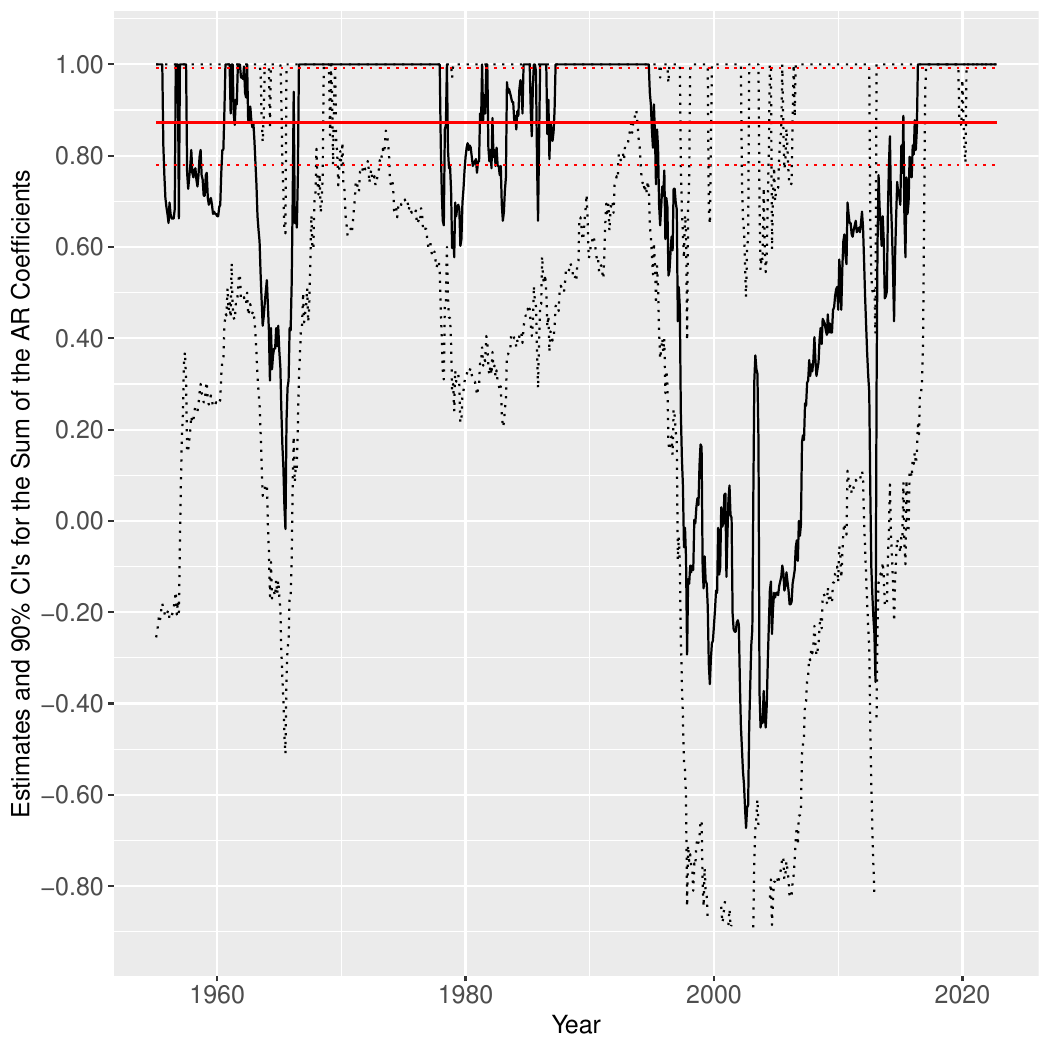}

}\quad{}\subfloat[Switzerland Inflation, $1.5n\widehat{h}_{us}=$ 188]{\includegraphics[scale=0.36]{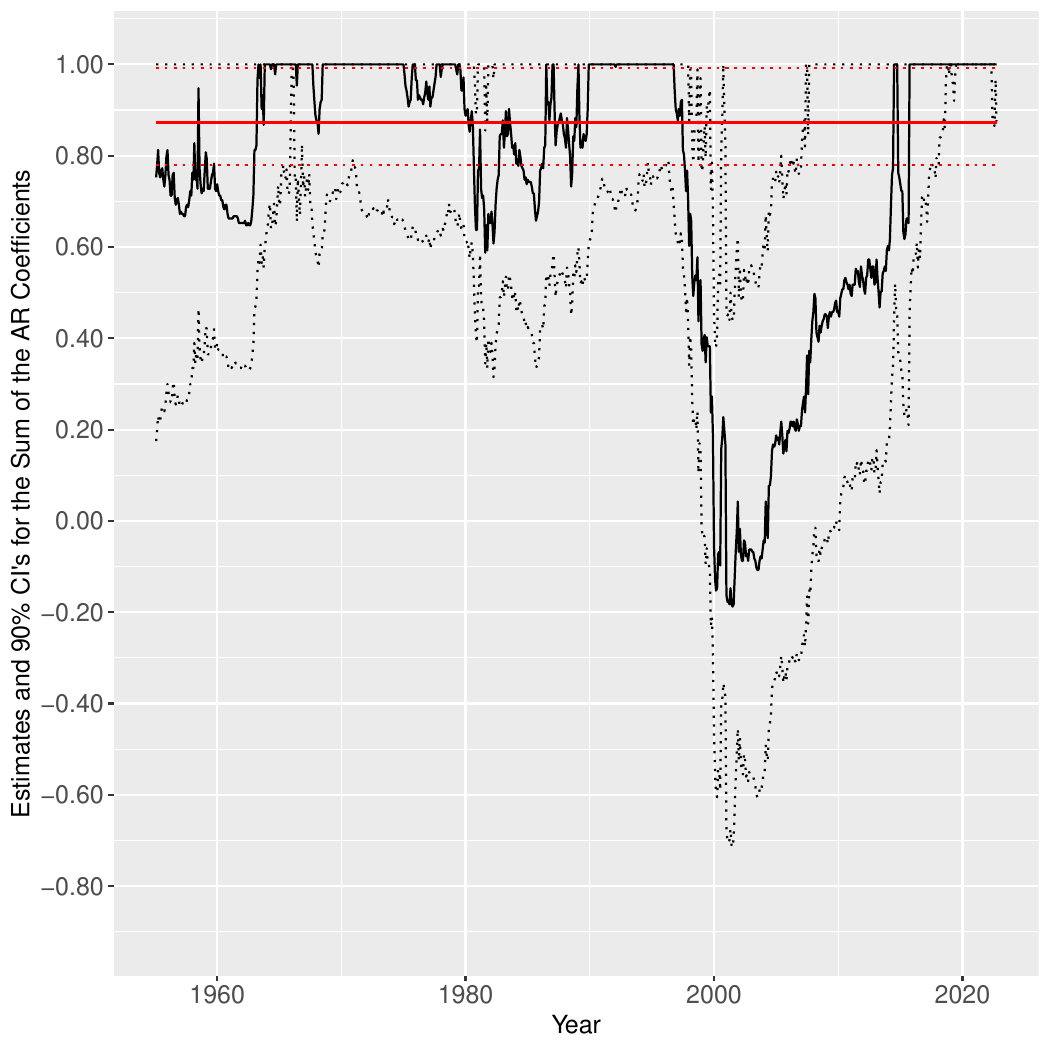}

}
\par\end{centering}
\begin{centering}
\subfloat[S\&P 500 Index, $n\widehat{h}_{us}=$ 2,518]{\includegraphics[scale=0.36]{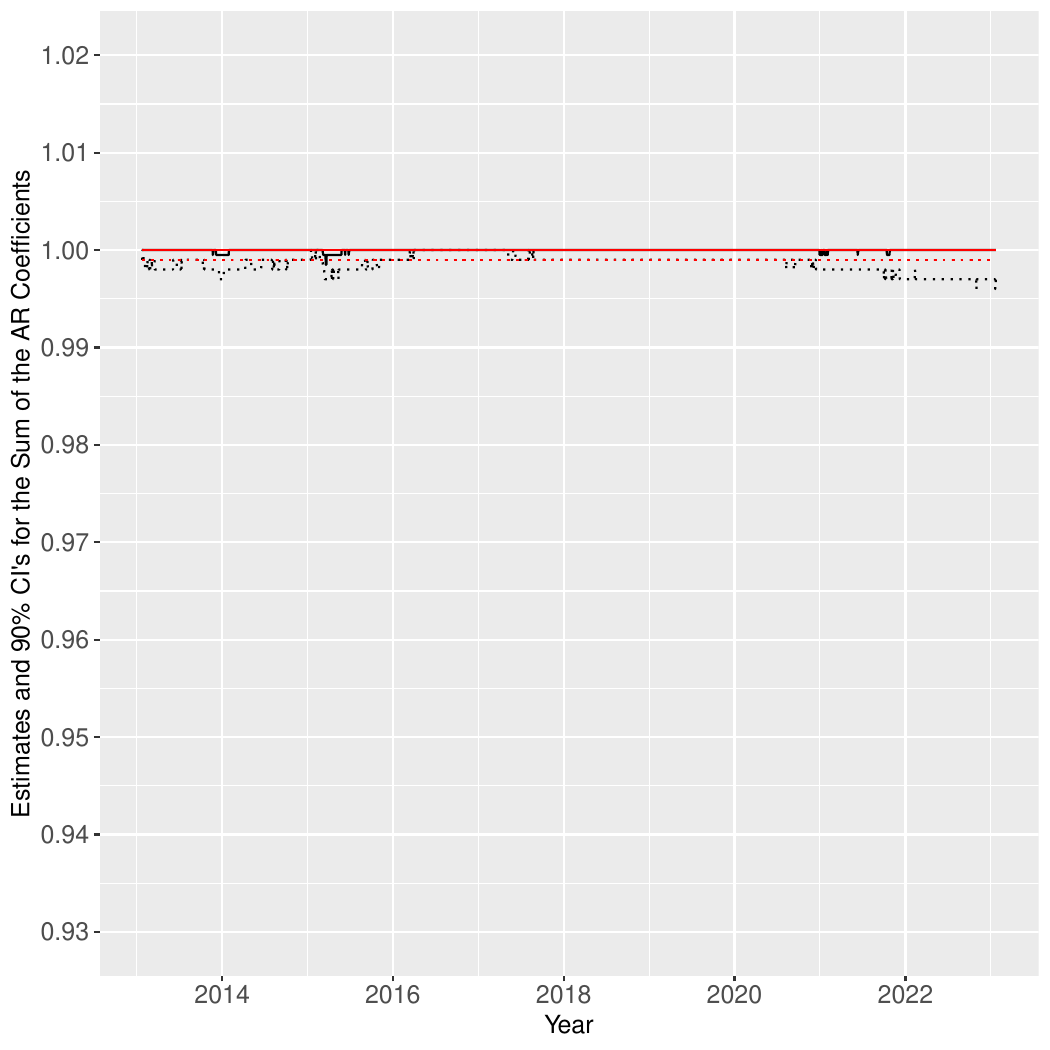}

}\quad{}\subfloat[S\&P 500 Index, $1.5n\widehat{h}_{us}=$ 3,777]{\includegraphics[scale=0.36]{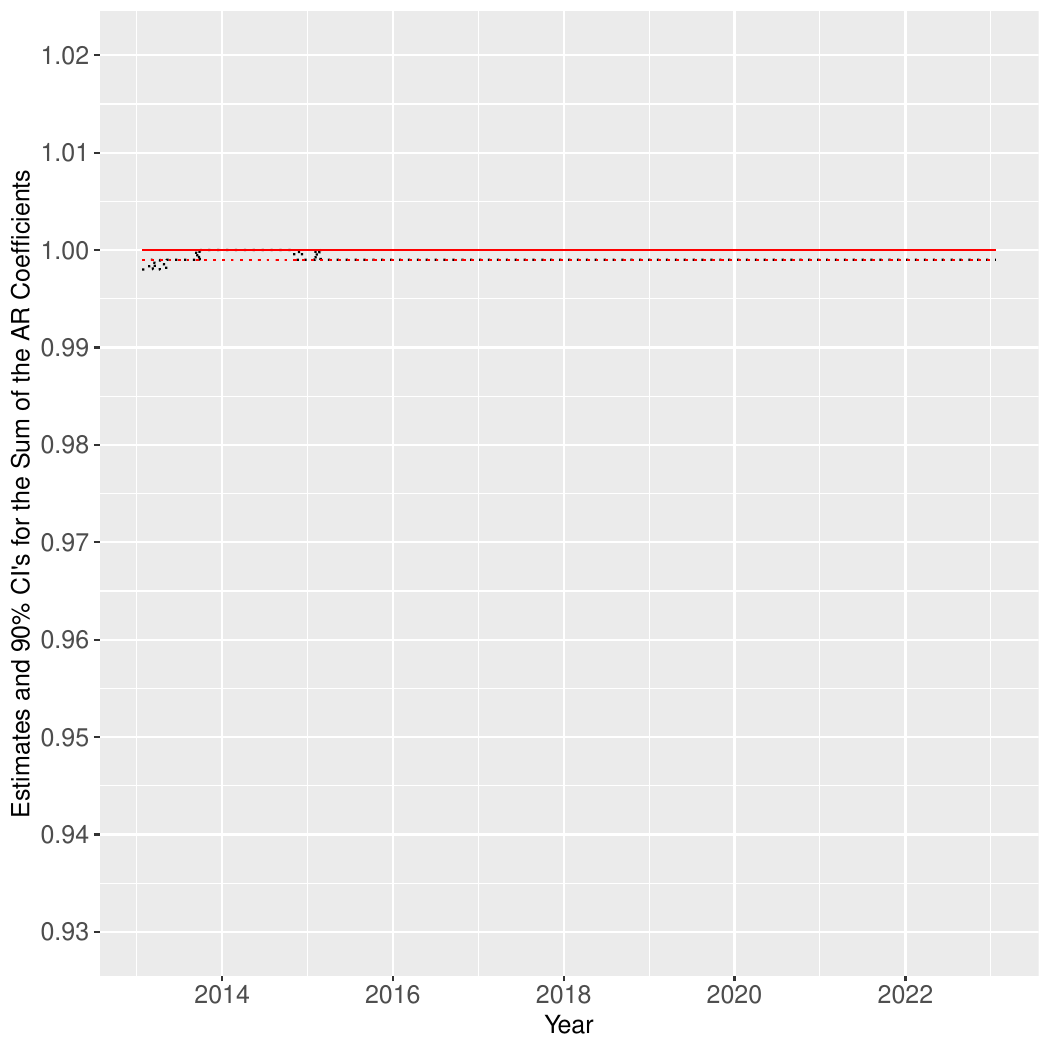}

}
\par\end{centering}
\begin{centering}
\subfloat[US Industrial Production, $n\widehat{h}_{us}=$ 368]{\includegraphics[scale=0.36]{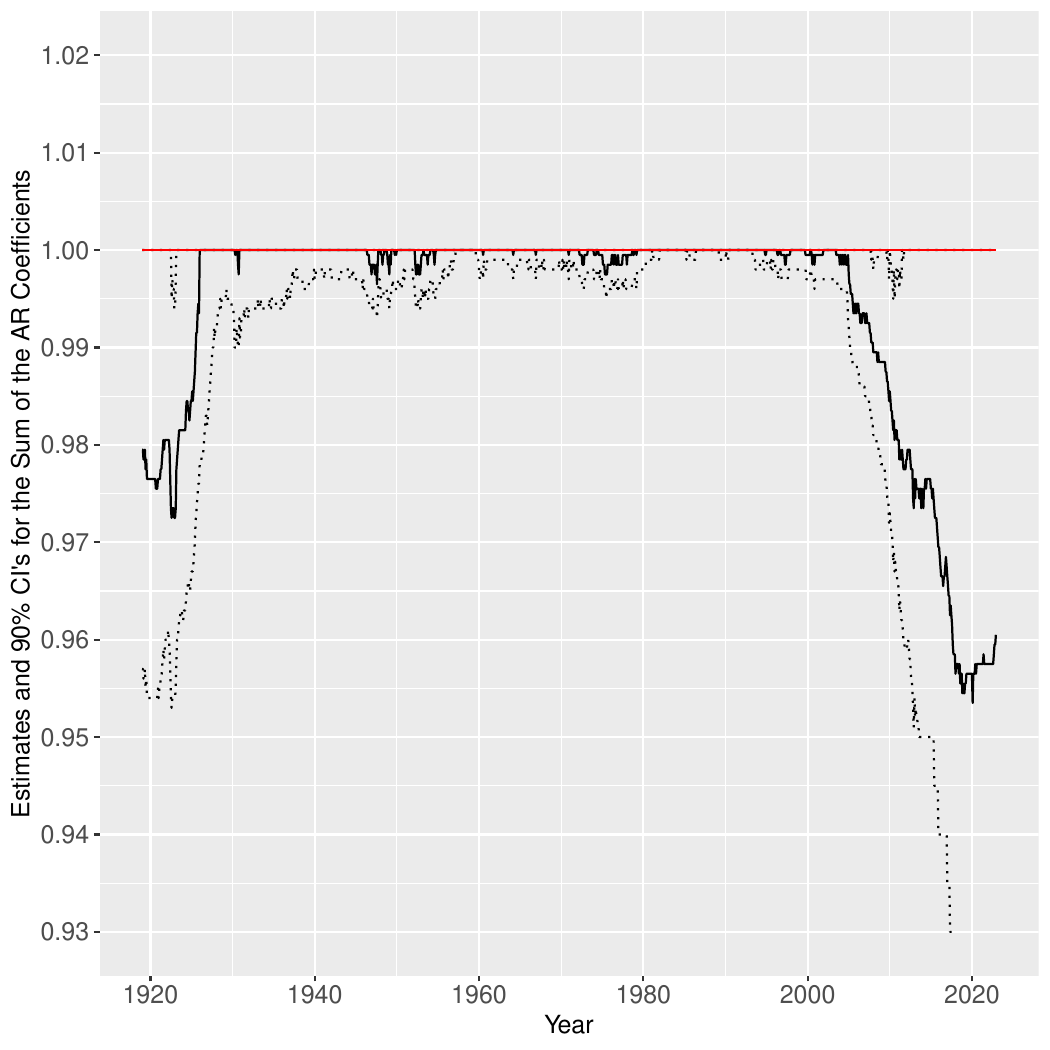}

}\quad{}\subfloat[US Industrial Production, $1.5n\widehat{h}_{us}=$ 551]{\includegraphics[scale=0.36]{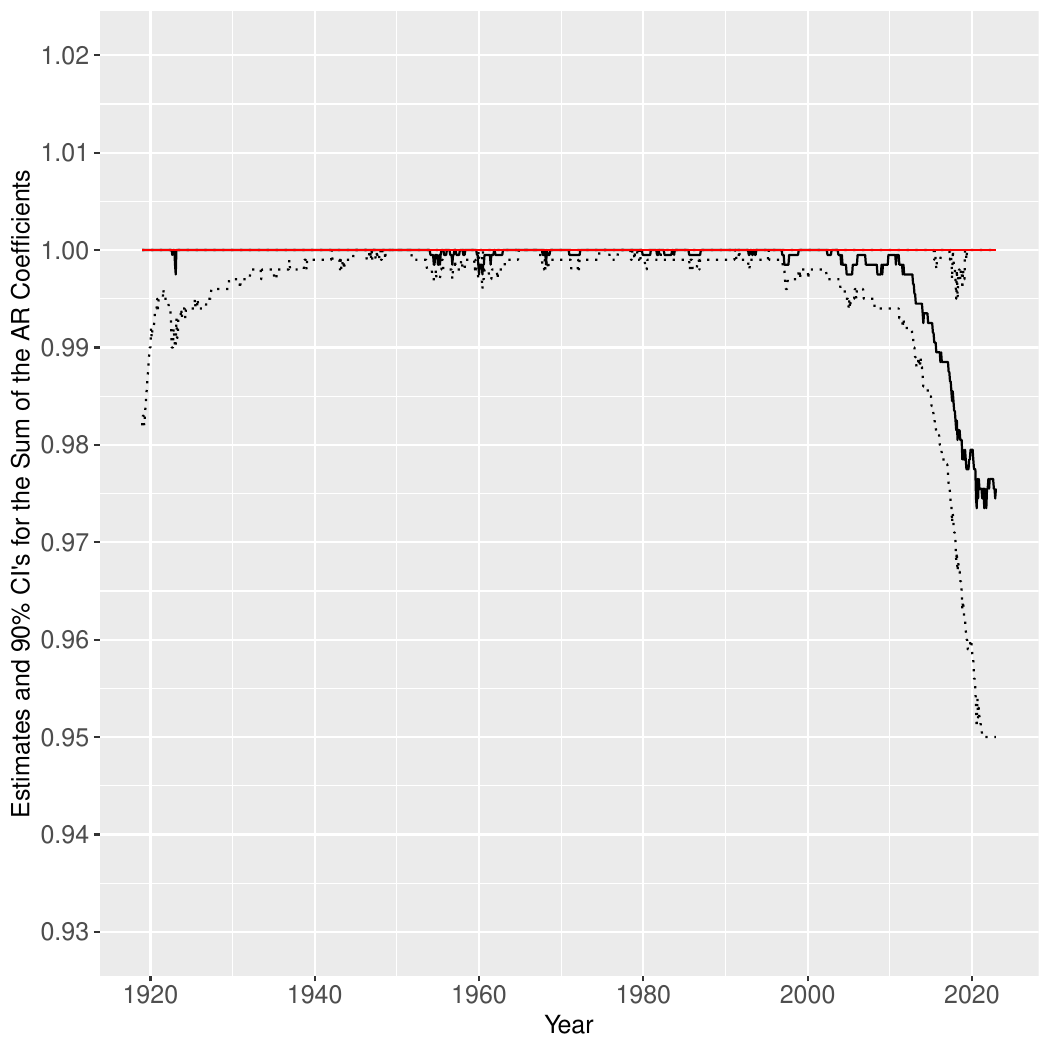}

}
\par\end{centering}
\caption{Estimates and 90\% CI's for the Sum of the AR Coefficients in TVP-AR(12)
Models: Switzerland Inflation, S\&P 500 Index, and US Industrial Production\protect\label{fig:EMP_AR12_SM}}
\end{figure}

\bibliographystyle{qelite}
\bibliography{TVP_AR_bib_Aug28_2023}

\end{document}